\newcommand{\bma}{{\bm{a}}}
\newcommand{\bmb}{{\bm{b}}}
\newcommand{\bmi}{{\bm{i}}}
\newcommand{\bmj}{{\bm{j}}}
\newcommand{\bmk}{{\bm{k}}}
\newcommand{\bmalph}{{\bm{\alpha}}}
\newcommand{\bmbeta}{{\bm{\beta}}}
\newcommand{\bmmu}{{\bm{\mu}}}
\newcommand{\bmnu}{{\bm{\nu}}}
\DeclarePairedDelimiter\floor{\lfloor}{\rfloor}
\newtheorem{theorem}{Theorem}
\newtheorem{lemma}[theorem]{Lemma}
\newtheorem{proposition}[theorem]{Proposition}
\newtheorem{corollary}[theorem]{Corollary}
\numberwithin{theorem}{section}
\theoremstyle{remark}
\newtheorem{remark}[theorem]{Remark}
\numberwithin{equation}{section}
\pgfplotsset{compat=1.16}
\newcommand{\magenta}{\color{magenta}}
\newcommand{\enint}{E^{\text{i}}}
\newcommand{\enext}{E^{\text{e}, \kappa}}
\newcommand{\dint}{\mathcal{D}^{\text{i}}}
\newcommand{\dext}{\mathcal{D}^{\text{e}}}
\newcommand{\Sext}{\Sigma^{\text{e}}}
\newcommand{\hin}{\mathcal{H}}
\newcommand{\op}{\overline{\partial}}
\newcommand{\gd}{\slashed {\partial}}
\newcommand{\hff}{h^{1,\flat}}
\newcommand{\hnn}{h^{1,\natural}}
\newcommand{\hab}{h^1_{\alpha\beta}}
\newcommand{\habf}{h^{1, \flat}_{\alpha\beta}}
\newcommand{\habn}{h^{1, \natural}_{\alpha\beta}}
\newcommand{\pb}{\underline{\partial}}
\newcommand{\Lb}{\underline{L}}
\renewcommand{\L}{\underline{L}}
\newcommand{\q}[1]{\mathcal{#1}}
\newcommand{\m}[1]{\mathbb{#1}}
\newcommand{\scal}{\mathcal{S}}
\newcommand{\hcal}{\mathcal{H}}
\newcommand{\ep}{\epsilon}
\newcommand{\R}{{\mathbb R}}
\renewcommand{\S}{{\mathbb S}}
\newcommand{\N}{{\mathbb N}}
\newcommand{\Wf}{\mathbf{W}}
\newcommand{\Ff}{\mathbf{F}}
\newcommand{\tr}{\text{tr}}
\newcommand{\ub}{{\underline{u}}}
\newcommand{\Circle}{{\mathbb{S}^1}}
\begin{document}
	
	\title{The global stability of the Kaluza--Klein spacetime}	
	
	\author{C\'ecile Huneau}
	\address{\'Ecole Polytechnique \& CNRS}
	\email{cecile.huneau@polytechnique.edu}
	
	\author{Annalaura Stingo}
	\address{\'Ecole Polytechnique}
	\email{annalaura.stingo@polytechnique.edu}
	
	\author{Zoe Wyatt}
	\address{King's College London}
	\email{zoe.wyatt@kcl.ac.uk}

	\begin{abstract} In this paper we show the classical global stability of the flat Kaluza--Klein spacetime, which corresponds to Minkowski spacetime in $\m R^{1+4}$ with one direction compactified on a circle.  We consider small perturbations which are allowed to vary in all directions including the compact direction. These perturbations lead to the creation of massless modes and Klein--Gordon modes. On the analytic side, this leads to a PDE system coupling wave equations to an infinite sequence of Klein--Gordon equations with different masses. The techniques we use are based purely in physical space using the vectorfield method.
	\end{abstract}
\maketitle

	\section{Introduction}
	
	The goal of the present article is to prove the global stability of the Kaluza--Klein spacetime for the Einstein vacuum equations
	\begin{equation}\label{EE}
		R_{\mu\nu}[g] = 0
	\end{equation}
	where $R_{\mu\nu}$ denotes the Ricci tensor of an unknown Lorentzian metric $g$. The Kaluza--Klein spacetime is a solution of \eqref{EE} on $\R^{1+3}\times \S^1$ and consists of a Lorentzian metric $\overline{g}$, given in the standard coordinates $(t,x)\in\R^{1+3}$, $y\in \S^1$ by 
	\[
	\overline{g} = -(dt)^2 + \sum_{\bmi=1}^3 (dx^\bmi)^2 + (dy)^2.
	\]
	The Einstein equations in this higher dimensional setting have, as in the standard $3+1$ setting, a well-posed initial value formulation. The data consist of a triplet $(\Sigma_0, g_0, K_0)$ where $\Sigma_0$ is a 4-dimensional manifold diffeomorphic to $\R^3\times\S^1$ equipped with a Riemannian metric $g_0$ and $K_0$ is a symmetric two-tensor. Solving \eqref{EE} with initial data $(\Sigma_0, g_0, K_0)$ means that one looks for a 5-dimensional manifold $\mathcal{M}$ with a Lorentzian metric $g$ satisfying \eqref{EE} and an embedding $\Sigma_0 \hookrightarrow \mathcal{M}$ such that $g_0$ is the pullback of $g$ to $\Sigma_0$ and $K_0$ is the second fundamental form of $\Sigma_0$. The initial value problem is overdetermined and the data must satisfy the \emph{constraint equations}\footnote{We use the Einstein summation convention over repeated indexes.  Greek indexes run from 0 to 4 while Latin indexes run from 1 to 4. \textbf{Bold} Greek and Latin indexes run up to 3.
		We use the notation $x^0=t$ and $x^4 = y$ so that $\partial_\mu= \partial/\partial x^\mu$ for $\mu =0,\dots, 4$ denotes any derivative along the coordinate axes.}
	\[
	R[g_0] - K_0^{ij}{K_0}_{ij} + {K_0}_i^i{K_0}^j_j=0, \quad \nabla^j {K_0}_{ij} - \nabla_i {K_0}^j_j = 0 
	\]
	where $R[g_0]$ is the scalar curvature of $g_0$ and $\nabla$ is the Levi-Civita connection of $g_0$. These equations simply come from the vanishing of the time components of the Einstein tensor 
	\[R_{0i}=0, \qquad R_{00} - \frac{1}{2}Rg_{00} = 0.\]
	In PDE terminology, the \emph{local well-posedness} of the Einstein equations was proved in the seminal works of Choquet-Bruhat \cite{CB52} and Choquet-Bruhat and Geroch \cite{CBG69}, who show the existence and uniqueness (up to diffeomorphisms) of a maximal globally hyperbolic spacetime arising from any set of smooth initial data satisfying the constraint equations. This is a local result in the sense that it does not guarantee that the spacetime solution $(\mathcal{M}, g)$ is causally geodesically complete. 
	We observe that their proofs, which are performed in a 4-dimensional setting, do not actually depend on the particular manifold $\mathcal{M}$ considered (nor on its dimension, or whether or not it is compact or a product with compact factors) and therefore apply to the Kaluza--Klein setting. We also mention the recent work of the first author with V\^alcu \cite{HV21} in which initial data for the Einstein equations on manifolds of the form $\R^{1+n}\times \mathbb{T}^m$ are constructed.
	
	The articles mentioned above constitute the starting point to investigate and prove the global stability of the flat metric $\overline{g}$. An informal statement of our main result is the following
	\begin{theorem}\label{Thm: informal}
		Let $(\Sigma_0, g_0, K_0)$ be an arbitrary set of smooth asymptotically flat initial data satisfying the constraint equations, with $\Sigma_0\cong \R^3\times \S^1$,
		\[
		\begin{gathered}
			{g_0} = \begin{pmatrix}
				(1+\chi(r)M/r)I_3 & 0 \\
				0 & 1
			\end{pmatrix} + {g_0^1}, \quad (I_3)_{\bmi\bmj}=\delta_{\bmi\bmj} \\
			\text{where } {g_0^1}_{ij} = O(r^{-1-\kappa}), \ {K_0}_{ij} = O(r^{-2-\kappa}) \ \text{as } r=|x|\rightarrow\infty, \ \kappa>0
		\end{gathered}
		\]
		and such that $g_0-\delta$ and $K_0$ satisfy global smallness assumptions. Then, there exists a causally geodesically complete spacetime asymptotically converging to the Kaluza--Klein spacetime. 
	\end{theorem}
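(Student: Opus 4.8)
The plan is to recast the Einstein vacuum equations \eqref{EE} as a coupled system of wave and Klein--Gordon equations and to close a bootstrap argument for it by the vectorfield method. First I would fix a wave (harmonic) gauge on $\mathcal M=\R^{1+3}\times\S^1$, compatible with the constraint equations, and write $g=\overline g+h$, so that the reduced equations take the schematic form $g^{\alpha\beta}\partial_\alpha\partial_\beta h_{\mu\nu}=\mathcal N_{\mu\nu}(h)(\partial h,\partial h)$, the harmonic constraints being propagated from the initial data via the contracted Bianchi identity. Since $\partial_y$ is Killing for $\overline g$ and for the gauge, I would then Fourier-expand in the compact variable, $h_{\mu\nu}(t,x,y)=\sum_{k\in\m Z}h_{\mu\nu}^{(k)}(t,x)e^{iky}$: the zero mode $h^{(0)}$ — carrying the four-dimensional metric, the Kaluza--Klein graviphoton and the dilaton $h_{yy}^{(0)}$ — satisfies a genuine wave system on $\R^{1+3}$, while for $k\neq 0$ the operator $g^{\alpha\beta}\partial_\alpha\partial_\beta$ produces a Klein--Gordon equation of mass $\sim|k|$. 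This is exactly the wave equation coupled to an infinite tower of Klein--Gordon fields with unbounded masses announced in the abstract.

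For the analytic framework I would use a foliation that is flat in the exterior region $\{r\gtrsim t\}$ and hyperboloidal inside the light cone, treating the exterior first by energy estimates with $r$-weights and matching to the hyperboloidal evolution across an incoming cone. On the hyperboloids I would run a two-tier hierarchy of energies built from the commutation vectorfields $\partial_\mu$ $(\mu=0,\dots,4)$, the rotations and Lorentz boosts of the $\R^{1+3}$ factor, and the scaling field (admissible for the wave sector, but only approximately commuting with the massive operators): a high-order tier with many commutators and weak $t$-weights, and low-order tiers with strong weights that encode the sharp decay. Crucially the norms must carry powers of $\langle k\rangle$, obtained by commuting with the Killing field $\partial_y$, so that the mode sums appearing in the nonlinearities are summable. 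Under bootstrap assumptions of the type ``high-order energy $\lesssim\varepsilon\langle t\rangle^{\delta}$, low-order energy $\lesssim\varepsilon$'', Klainerman--Sobolev inequalities on the flat and hyperboloidal slices give the pointwise decay $|h^{(0)}|\lesssim\varepsilon\langle t\rangle^{-1+\delta}$ for the wave sector and $|h^{(k)}|\lesssim\varepsilon\langle k\rangle^{-N}\langle t+r\rangle^{-3/2}$ for the massive modes.

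I would then re-run the energy estimates with these bounds. The wave sector is governed by the Lindblad--Rodnianski weak null condition: decomposing $h^{(0)}$ along a null frame and isolating the Schwarzschild tail $\sim M/r$, one checks component by component that the slowly decaying metric components — the dilaton $h_{yy}^{(0)}$ among them — enter the nonlinearities only in positions compatible with that structure, and that the quadratic feedback of the massive modes onto the wave sector (terms such as $\sum_{k_1+k_2=0}h^{(k_1)}h^{(k_2)}$) is harmless because of the $\langle t+r\rangle^{-3/2}$ decay and the $\langle k\rangle^{-1}$ gained from each massive factor. For the massive modes one exploits the mass to trade decay in time, and controls the source terms of type wave~$\times$~Klein--Gordon through the structural fact that the wave factor always enters via a derivative tangential to the light cone. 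Closing the bootstrap on the maximal globally hyperbolic development gives a global solution by the usual continuity argument; the decay of $h=g-\overline g$ and of the associated Christoffel symbols then yields causal geodesic completeness by the standard ODE argument for the geodesic equation, and convergence to the Kaluza--Klein spacetime is immediate, which proves Theorem~\ref{Thm: informal}.

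The step I expect to be the main obstacle is reconciling the weak null condition for the zero mode with uniformity and summability over the infinitely many massive modes: one must simultaneously verify that no genuinely dangerous self-interaction of slowly decaying metric components survives — a delicate structural check already at the heart of the stability of Minkowski space, here further entangled with the $\S^1$-dependence and the graviphoton/dilaton components — and propagate the $\langle k\rangle$-weights through every estimate so that all mode sums converge, the unboundedness of the Klein--Gordon masses being a double-edged feature that aids summability via the $\langle k\rangle^{-1}$ decay gain but requires care at high frequency.
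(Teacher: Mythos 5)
Your proposal reproduces the broad architecture of the paper: harmonic gauge, Fourier decomposition along $\S^1$ giving a wave zero mode and a Klein--Gordon tower, an exterior/interior (hyperboloidal) decomposition of spacetime, hierarchical energies with commutation vector fields, null-frame decomposition exploiting the wave gauge, and the weak null structure in the Lindblad--Rodnianski style. At that level of resolution the plan is sound. But the place you identify as the ``main obstacle'' — uniformity and summability over the modes — is not in fact the central difficulty, and your sketch of how to pass it contains a structural error.

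The obstacle that the paper works hardest to overcome is the quasilinear \emph{wave--Klein--Gordon commutator} $[Z^K, H^{\mu\nu}\partial_\mu\partial_\nu]h^1_{\alpha\beta}$, and in particular the piece in which a slowly decaying zero-mode metric coefficient $H^{1,\flat}$ multiplies two derivatives of a non-zero-mode (Klein--Gordon) field. You assert that ``the wave factor always enters via a derivative tangential to the light cone,'' but that is false for the quasilinear term: when the coefficient $H^{1,\flat,\mu\nu}\partial_\mu\partial_\nu$ is expanded in the null/hyperboloidal frame one finds a genuine contribution $H^{1,\flat}_{LL}\,\partial_t^2$, where the wave factor appears \emph{undifferentiated} and both derivatives on the Klein--Gordon field are bad (transversal). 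There is no tangential structure there to exploit; the only salvation is that the wave coordinate condition forces $\partial H^{1,\flat}_{LL}$ to have tangential structure, which after integration along outgoing rays gives enhanced decay for $H^{1,\flat}_{LL}$ itself (this is \eqref{good_coefficient2} and \eqref{KS-hff4} in the paper). Without isolating this coefficient and proving its improved decay from the gauge condition, the commutator cannot be closed.

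A directly related missing ingredient is the \emph{separate} propagation of energy bounds for the zero mode. Because the wave--Klein--Gordon commutator above only costs $t^{-1/2+}$, the top-order energy on hyperboloids is allowed to grow like $s^{1/2+\zeta}$, which is far too strong for the Klainerman--Sobolev/Hardy arguments used on the zero mode — yet the commutator term is absent from the averaged equation for $h^\flat$, since $(f\cdot g)^\flat = f^\flat g^\flat + (f^\natural g^\natural)^\flat$ has no wave$\times$KG cross terms. The paper therefore bootstraps the zero-mode energy $\enint(s, Z^K h^\flat)$ at a strictly slower growth rate $s^{\zeta}$, and the pointwise bounds on $h^\flat$ and $\partial^2 h^\flat$ derived from that slower growth are essential inputs into the commutator estimate for the full metric. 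Your two-tier hierarchy (high/low order) does not encode this; the hierarchy must also distinguish zero mode from zero-average. Finally, your inclusion of the scaling field $\scal$ as a commutation field is not viable — it does not commute with the Klein--Gordon operator — and the paper accordingly restricts the admissible vector fields to $\{\partial_\mu, \Omega_{\bmi\bmj}, \Omega_{0\bmi}, \partial_y\}$, using $t^{-1}\scal$ only as a tangential derivative inside the hyperboloidal energy. As for mode summability, the paper sidesteps the $\langle k\rangle$-weighted bookkeeping you propose by working directly with the aggregate zero-average component $h^\natural$, using the Poincar\'e inequality on $\S^1$ to trade powers of $\partial_y$ for decay; this is equivalent in spirit but avoids mode-dependent constants.
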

	In the above theorem, $\chi$ is a cut-off function supported outside some ball centered at 0 and $M$ is a positive constant corresponding to the ADM mass. We refer to the work of Dai \cite{Dai04} on the positive mass theorem for manifolds including those of Kaluza--Klein type.

	\smallskip
	The global stability problem for the flat metric $\overline{g}$ can be cast into the form of a small data global existence problem for quasilinear wave equations. The Einstein equations can be written as a system of quasilinear wave equations for the unknown metric coefficients $g_{\alpha\beta}$ if one works with a standard gauge, called the \emph{harmonic} or \emph{wave coordinate} or \emph{De Donder} gauge, in which the (harmonic) coordinates $\{x^\alpha\}_{\alpha=0,\dots,4}$ are defined to be solutions of the geometric wave equation\footnote{$g^{\mu\nu}$ and $\overline{g}^{\mu\nu}$ denote respectively the coefficients of the inverse metric of $g$ and $\overline{g}$. Unless differently specified, we lower and raise indexes using the metric $\overline{g}$, i.e. for any tensor $\pi_{\alpha\beta}$ we define $\pi^{\alpha\beta}:=\overline{g}^{\alpha\mu}\overline{g}^{\beta
			\nu}\pi_{\mu\nu}$.} $\Box_g x^\alpha= g^{\mu\nu}\nabla_\mu \nabla_\nu x^\alpha =0$, where $\nabla$ denotes the Levi-Civita connection of $g$. Relative to these coordinates the metric $g$ satisfies the so-called \emph{wave condition}
	\begin{equation}\label{wave_condition}
		g^{\alpha\beta}g_{\nu\mu}\Gamma^\nu_{\alpha\beta} = g^{\alpha\beta}\partial_\beta g_{\alpha\mu} - \frac{1}{2} g^{\alpha\beta}\partial_\mu g_{\alpha\beta} = 0, \quad \mu=0,\dots,4
	\end{equation}
	under which the wave operator $\Box_g$ on functions coincides with the reduced wave operator $\tilde{\Box}_g = g^{\mu\nu}\partial_\mu\partial_\nu$.
	In this gauge the equations \eqref{EE} become
	\begin{equation}\label{g_equations}
		\tilde{\Box}_g g_{\alpha\beta} = \tilde{F}_{\alpha\beta}(g)(\partial g, \partial g) \quad \text{on } \R^{1+3}\times \S^1
	\end{equation}
	where $\tilde{F}_{\alpha\beta}(u)(v,v)$ depends quadratically on $v$. A straightforward computation shows that these source terms decompose into the sum of the following 
	\[
	\begin{aligned}
		\tilde{P}(\partial_\alpha g, \partial_\beta g) &:=\frac{1}{4}g^{\mu\nu}g^{\rho\sigma}\left(\partial_\alpha g_{\mu\nu}\partial_\beta g_{\rho\sigma} - 2 \partial_\alpha g_{\mu\rho}\partial_\beta g_{\nu\sigma}\right)
		\\
		\tilde{Q}_{\alpha\beta}(\partial g, \partial g)&:=
		g^{\mu\nu} g^{\rho\sigma}\partial_\mu g_{\rho\alpha}\partial_\nu g_{\sigma\beta}
		-g^{\mu\nu}g^{\rho\sigma}Q_{\mu\sigma}(\partial g_{\rho\alpha}, \partial g_{\nu\beta})
		+g^{\mu\nu}g^{\rho\sigma}Q_{\alpha\mu}(\partial g_{\nu\sigma}, \partial g_{\rho\beta})
		\\&\quad
		+g^{\mu\nu}g^{\rho\sigma}Q_{\beta\mu}(\partial g_{\nu\sigma}, \partial g_{\rho\alpha})
		+\frac12 g^{\mu\nu}g^{\rho\sigma}Q_{\sigma\alpha}(\partial g_{\mu\nu}, \partial g_{\rho\beta})
		+ \frac12 g^{\mu\nu}g^{\rho\sigma}Q_{\sigma\beta}(\partial g_{\mu\nu}, \partial g_{\rho\alpha})
	\end{aligned}
	\]
	where $Q_{\mu\nu}$ denotes the quadratic null form\footnote{The quadratic form $Q_0(\partial \phi, \partial\psi) = \overline{g}^{\mu\nu}\partial_\mu\phi\partial_\nu\psi$ is also a null form.}
	\[
	Q_{\mu\nu}(\partial \phi, \partial \psi) = \partial_\mu\phi \partial_\nu \psi  - \partial_\nu\phi \partial_\mu \psi.
	\]
	The initial conditions $(g_{\alpha\beta}|_{t=0}, \partial_t g_{\alpha\beta}|_{t=0})$ for \eqref{g_equations} are defined from $(\Sigma_0, g_0, K_0)$ as follows
	\begin{equation}\label{data_g}
		\begin{aligned}
			& g_{ij}|_{t=0} = {g_0}_{ij}, \qquad g_{00}|_{t=0} = -a^2,  \qquad g_{0i}|_{t=0} = g_{i0}|_{t=0} = 0,\\
			&(\partial_t g_{ij})|_{t=0} = -2a {K_0}_{ij}, \qquad  (\partial_t g_{00})|_{t=0} = 2a^3 g_0^{ij}{K_0}_{ij} ,\\
			&  (\partial_t g_{0i})_{t=0} = a^2 g_0^{kl}\partial_l {g_0}_{ki} - \frac{1}{2}a^2 g_0^{kl}\partial_i {g_0}_{kl} - a\partial_i a
		\end{aligned}
	\end{equation}
	where $a^2 := (1 - M\chi(r)r^{-1})$ denotes the lapse function, so that they are compatible with the constraint equations and satisfy the wave condition. In particular the constraint equations yield a decay for $g_{ij}$ of the form
	$$g_{\bmi\bmj} = \big(1 + M\chi(r)r^{-1}\big)\delta_{\bmi\bmj} + O(r^{-1-\kappa}), \qquad g_{44}=1+ O(r^{-1-\kappa})$$
	The initial data for $g_{00}$ and $g_{0i}$ are free and we set them as in \eqref{data_g}, following what was done by Lindblad and Rodnianski in their work \cite{LR10}, for compatibility with the wave coordinates for Schwarzschild.
	The condition $(\partial_t g_{ij})|_{t=0} = -2a {K_0}_{ij}$ is given so that $K_0$ is the second fundamental form of $\Sigma_0$, i.e. $K_0(X,Y) = -g|_{t=0}(\nabla_X \partial_t, Y)$ for any vector fields $X,Y$.
	
	Any solution to the Einstein equations \eqref{EE} with smooth data $(\Sigma_0, g_0, K_0)$ satisfies \eqref{g_equations}-\eqref{data_g} when written in harmonic coordinates. Conversely, any solution $g_{\alpha\beta}$ of \eqref{g_equations}-\eqref{data_g} with initial data compatible with the constraint equations and satisfying the wave condition \eqref{wave_condition} will satisfy \eqref{wave_condition} for all times and hence gives rise to a solution of \eqref{EE} with data $(\Sigma_0, g_0, K_0)$ defined from \eqref{data_g}. We refer to Ringstr\"om \cite{ringstrom} for more details on the subject. From now on, we will then entirely focus on the formulation \eqref{g_equations}-\eqref{data_g}.

	\subsection{State of the art}
	There is a vast literature in general relativity concerning the stability of physical solutions to the Einstein equations. In the 4-dimensional setting, the global stability of the simplest solution, the Minkowski metric, was proved in a monumental work by Chistodoulou and Klainerman \cite{CK93} and later revisited in the works of Lindblad and Rodnianski \cite{LR05, LR10} using the harmonic gauge. See also the results by Klainerman and Nicol\`o \cite{KN03}, Bieri and Zipser \cite{BZ09}, Hintz and Vasy \cite{HintzVasyMinkowski}, Choquet-Bruhat, Chru\'{s}ciel and Loizelet \cite{CBCL06} for Einstein-Maxwell systems and by Speck \cite{Speck14} for Einstein equations coupled to a family of nonlinear electromagnetic field equations. 
	
	Analogous global stability results have also been proved for other 4-dimensional coupled Einstein matter systems. Einsten-Klein--Gordon systems were investigated by LeFloch and Ma \cite{LeFloch-Ma16} in the case of restricted data coinciding with the Schwarzschild metric outside a compact set, and global stability was later proved by Ionescu and Pausader \cite{IP2} in the case of unrestricted data. We also cite the works by
	Fajman, Joudioux and Smulevici \cite{FJS21} and Lindblad and Taylor \cite{LT20} proving a global stability result for Einstein-Vlasov systems for a class of restricted data, and the result by Bigorgne, Fajman, Joudioux, Smulevici and Thaller \cite{BFJST21} about the asymptotic stability of Minkowski spacetime with non-compactly supported massless Vlasov matter.
	There is also a very rich literature concerning the stability of other explicit 4-dimensional solutions to the Einstein equations, for instance the Kerr solution or solutions to the Einstein equations with positive cosmological constants, but it is not our purpose to list such references here.
	
	\smallskip
	Higher dimensional solutions of the Einstein equations, in particular spacetimes with additional compact directions $\R^{1+3}\times \mathcal{K}$, have attracted substantial attention from the theoretical physics community throughout the past century. Theories of higher dimensional gravity are in fact of great interest in supergravity and string theory as possible models for quantum gravity and are possible candidates for providing a unified description of all the fundamental forces in nature (gravity, electromagnetism, weak force and strong force). 
	A guiding  philosophy of supergravity theories is that one should be able to recover 4-dimensional physics from higher-dimensional models, hence to perform some sort of dimensional reduction by assuming the extra directions to be compact.
	
	The classical mathematical approach to the unification of general relativity with electromagnetism goes back to the works of physicists Kaluza \cite{Kaluza21} and Klein \cite{Klein26}. In their original works, one extra dimension is considered and the  five spacetime dimensional gravity is compactified on a circle $\S^1_R$ of radius $R$ to obtain at low energies a $1+3$ dimensional Einstein-Maxwell-Scalar field system. We will briefly discuss the reduction from the 5-dimensional to the 4-dimensional model in the next subsection. 
	
	In a seminal work by Witten \cite{WITTEN} it was proved that the Kaluza--Klein spacetime $\overline{g}$ is unstable at the semiclassical level. However, classical global stability was conjectured to hold true and such a result was proved by the third author \cite{Wyatt18} for small perturbations that do not depend on the compact direction. The goal of this paper is  to extend the result of \cite{Wyatt18} and to prove the global stability of $\overline{g}$ for more general perturbations that can a-priori depend also on the compact direction. We mention that a result analogous to \cite{Wyatt18} for cosmological Kaluza--Klein spacetimes, where the Minkowski spacetime is replaced by the 4-dimensional Milne spacetime, has also recently been shown by Branding, Fajman and Kr\"{o}ncke \cite{BFK19}. Furthermore global existence, without a restriction to $\S^1$-independent data, was shown on a quasilinear system of wave equations by the first two authors in \cite{HS} and on a 
	semilinear wave equation on a cosmological Kaluza--Klein spacetime in \cite{JWang}.
	In the context of higher-dimensional gravity we also cite a result by Ettinger \cite{ettinger} on the global well-posedness of a 11-dimensional, semilinear, gauge-invariant wave equation, and a global stability result by Andersson, Blue, Yau and the third author \cite{ABWY} for spacetimes with a supersymmetric compactification: that is, spacetimes $(\mathcal{M}, \hat{g})$ with $\mathcal{M} = \R^{1+n}\times K$ and $\hat{g} = \eta_{1+n} + k$, where $\eta_{1+n}$ is the $(1+n)$-dimensional Minkowski metric and $(K,k)$ is a compact Riemannian manifold that admits a spin structure and a nonzero parallel spinor. Their proof uses the assumption $n\ge 9$ but the result is conjectured to hold true for $n\ge 3$.

	\subsection{The zero-mode truncation}\label{zero-mode truncation}
	
	The Einstein equations in harmonic coordinates reduce to \eqref{g_equations}, which is a system of wave equations on the product space $\R^{1+3}\times \S^1$.
	Assuming for a moment that the compactifying circle $\S^1$ is replaced by the circle $\S^1_R$ of radius $R>0$, and by Fourier expanding the solution $g$ of \eqref{g_equations} along the periodic coordinate
	\[
	g_{\alpha\beta}(t,x,y) =\sum_{k\in\mathbb{Z}}e^{iky}g^k_{\alpha\beta}(t,x),
	\] 
	it turns out that
	\[
	(-\partial^2_t + \Delta_x + \partial^2_y) g_{\alpha\beta} = \sum_{k\in\mathbb{Z}} e^{iky} (-\partial^2_t + \Delta_x - (|k|/R)^2) g^k_{\alpha\beta}
	\] 
	which shows that the zero-modes $g^0_{\alpha\beta}$ of the metric coefficients are massless waves while the non-zero modes $g^k_{\alpha\beta}$ are massive (Klein--Gordon) waves with mass $|k|/R$ for $k\ne 0$. Equations \eqref{g_equations} are hence equivalent to a system on $\R^{1+3}$ which couples wave equations to an infinite sequence of Klein--Gordon equations with mass $|k|/R$, $k\in \mathbb{Z}\setminus \{0\}$.
	
	The heuristic physics argument, as explained by Pope in \cite{Pope}, to deal with this phenomenon is to assume the radius $R$ to be very small (a choice that would justify why we ``don't see'' the additional dimensions) so that the masses $|k|/R$ are too large to be physically observable. The non-zero modes are then neglected and the solution is truncated to the massless mode, in other words one assumes that $g_{\alpha\beta}(t,x,y) = g_{\alpha\beta}(t,x)$ is independent of the $y$ coordinate. 
	
	Under the zero-mode truncation assumption, one can reduce the Kaluza--Klein model to a three-dimensional Einstein-Maxwell-scalar field system. As explained in\cite{Pope}, this is done using the following standard ansatz, in which the higher dimensional metric coefficients $g_{\alpha\beta}$ are expressed in terms of three-dimensional fields $\hat{g}_{\bmalph\bmbeta}, \phi, \mathcal{A}_{\bmalph}$ by
	\[
	g_{\bmalph\bmbeta} = e^{2\kappa \phi}\hat{g}_{\bmalph\bmbeta} + e^{2\rho \phi}\mathcal{A}_{\bmalph} \q A_{\bmbeta}, \quad g_{\bmalph 4} = e^{2\rho\phi}\q A_{\bmalph}, \quad g_{44} = e^{2\rho\phi}
	\]
	where $\kappa = \sqrt{12}/12$ and $\rho=-2/\sqrt{12}$. The Einstein vacuum equations \eqref{EE} reduce then to the following minimally coupled $(1+3)$-dimensional Einstein-Maxwell-Scalar field system 
	\[
	\begin{aligned}
		& R_{\bmalph\bmbeta} = \frac{1}{2}\partial_{\bmalph}\phi\, \partial_{\bmbeta}\phi + \frac{1}{2}e^{-6\kappa\phi}\big(\q F_{\bmalph\bmmu}{\q F_{\bmbeta}}^{\bmmu} - \frac{1}{4}\q F_{\bmmu\bmnu}\q F^{\bmmu\bmnu}\hat{g}_{\bmalph\bmbeta} \big)\\
		& \nabla^{\bmalph} \big(e^{-6\kappa \phi} \q F_{\bmalph\bmbeta}\big) = 0\\
		& \tilde{\Box}_{\hat{g}} \phi = -\frac{3}{2}\kappa e^{-6\kappa \phi} \q F_{\bmmu\bmnu}\q F^{\bmmu\bmnu} 
	\end{aligned}
	\]
	where $\q F_{\bmalph\bmbeta} = \partial_\bmalph \q A_\bmbeta - \partial_\bmbeta \q A_\bmalph$. The above reduction can be also performed in higher dimensional settings where $\mathcal{M} = \R^{1+3}\times\mathbb{T}^d$. In the Kaluza--Klein setting, this truncation to the zero mode is consistent in the sense that a solution to the above Einstein-Maxwell-Scalar field system will be a solution to the original vacuum Einstein equations in 5 dimensions.
	
	The full global stability of the Kaluza--Klein spacetime to general perturbations, that may a-priori depend on the compact direction, involves studying solutions to a significantly more complicated PDE system than the simpler dynamics of the above Einstein-Maxwell-Scalar field system studied in \cite{Wyatt18}. This is the goal of the present article. We point out that we do not want to focus here on the dependence of the solution on the radius $R$ and, since there is no canonical choice of the radius $R$, we set $R = 1$.

	\subsection{4D Wave-Klein--Gordon systems}
	
	The dependence of the metric coefficients $g_{\alpha\beta}$ on the periodic coordinate $y$ and their Fourier decomposition along this direction reveal that system \eqref{g_equations} is equivalent to a system coupling wave equations to an infinite sequence of Klein--Gordon equations with different masses. The new system is also quasilinear and the coupling between the wave and Klein--Gordon components of the solution is strong. 
	
	The study of systems coupling (a finite number of) wave and Klein--Gordon equations has attracted considerable interest from the mathematical community, especially in the past three decades.
	In terms of small data global well-posedness results in $1+3$ spacetime dimensions we cite the initial results by Georgiev \cite{Georgiev:system} and Katayama \cite{Katayama:coupled_systems}, followed by LeFloch and Ma  \cite{LeFloch_Ma}, Wang \cite{Q.Wang, Wang20-EKG} and Ionescu and Pausader \cite{IoPa} who study such systems as a model for the full Einstein-Klein--Gordon equations, see \cite{LeFloch-Ma16} and \cite{IP2}. In \cite{LeFloch_Ma} and \cite{Q.Wang} global well-posedness is proved for compactly supported initial data and quadratic quasilinear nonlinearities that satisfy some suitable conditions, including the \textit{null condition} of Klainerman \cite{K86} for self-interactions between the wave components of the solution. An idea used in these works is that of employing hyperbolic coordinates
	in the forward light cone; this was first introduced by Klainerman \cite{klainerman:global_existence} for Klein--Gordon equations and Tataru in the wave context \cite{Tataru2002}, and later reintroduced by LeFloch and Ma in \cite{LeFloch_Ma} under the name of \textit{hyperboloidal foliation method}.
	In \cite{IoPa} global regularity and scattering is proved in the case of small smooth initial data that decay at a suitable rate at infinity and nonlinearities that do not verify the null condition but present a particular resonant structure. We also cite the work by Dong and the third author \cite{DW20}, who prove global well-posedness for a quadratic semilinear interaction in which there are no derivatives on the massless wave component. Other related results are \cite{Bachelot88, OTT_KGZ, T96_KGZ, T03_DP, T03_MH, klainerman_wang_yang, DLFW}. See also \cite{ma:2D_semilinear, Ma2020, ma:2D_tools, ma:2D_quasilinear, stingo_WKG, IS2019, DW20_2d, ma:1D_semilinear} for results about wave-Klein--Gordon systems in lower dimensions, in particular a work by the second author \cite{stingo_WKG} and a subsequent result in collaboration with Ifrim \cite{IS2019}, which are the only ones where 2-dimensional strongly coupled quadratic wave-Klein--Gordon systems with small mildly decaying data are investigated. Advanced techniques, among which semiclassical microlocal analysis, para/pseudo-differential calculus, wave packets, modified quasilinear energies, are employed there to tackle a problem that is critical, quasilinear and very weakly dispersive.
	
	\medskip
	A now-standard tool used in most of the aforementioned works is the vector field method. Linear wave and Klein--Gordon equations on $\R^{1+n}$ are invariant under translations, Euclidean rotations and hyperbolic rotations (linear wave equations are also scale-invariant). These symmetries provide a family of admissible vector fields (in the common terminology they are also referred to as \emph{Killing} vector fields of Minkowski spacetime), 
	\[
	\partial_\mu, \qquad \Omega_{\bmi\bmj} = x_\bmi\partial_\bmj - x_\bmj \partial_\bmi, \qquad \Omega_{0\bmi} = t\partial_\bmi + x_\bmi\partial_t
	\]
	which  commute with the linear wave and Klein--Gordon operators
	and are used to define higher order energy functionals which control the Sobolev regularity of the solution as well as its decay (and that of its derivatives) in space at infinity.
	The rotations $\Omega_{\bmi\bmj}$ and $\Omega_{0\bmi}$ are also usually referred to as \emph{Klainerman vector fields}.
	In the absence of Klein--Gordon equations, that is in the case of wave equations only, one can also consider the scaling vector field $\mathcal{S} = t\partial_t + x^\bmi\partial_\bmi$ (a \emph{conformal Killing} vector field of Minkowski) and use the control on higher order energies to derive fixed-time pointwise decay bounds for the solution via the so-called Klainerman-Sobolev inequalities (see Klainerman \cite{klainerman85}) 
	\begin{equation}\label{fixedtime-KS}
		(1+|t|+|x|)^{n-1}(1+ ||t|-|x||)|u(t,x)|^2\le C \sum_{|I|\le (n+2)/2} \|Z^I u(t,\cdot)\|^2_{L^2(\R^n)}.
	\end{equation}
	In the above inequality $Z$ denotes any of the vector fields $\partial_\mu, \Omega_{\bmi\bmj}, \Omega_{0\bmi}, \mathcal{S}$ and $Z^I$ is any product of $|I|$ such vector fields. Suitable energy estimates and pointwise decay bounds are subsequently used to control the nonlinear terms in the energy inequality and are essential to close the continuity argument which is at the core of the proof of a long-time/global existence result for small data.
	
	The inequality \eqref{fixedtime-KS} is, however, useless when dealing with Klein--Gordon equations. The scaling vector field does not commute well with the linear operator and one cannot generally expect to have a good control of the $L^2$ norm of $\mathcal{S}u$ when $u$ is a Klein--Gordon solution. Instead, if $u$ is compactly supported inside the light cone\footnote{Any cone $t  =|x|+c$ with $c>0$ would do.} $t = |x| + 1$ one can define higher order energy functionals on hyperboloids $t^2 - |x|^2 = s^2$ and exploit Klainerman-Sobolev inequalities on hyperboloids (see for instance \cite{FWY21})
	\begin{equation} \label{KS_hyperboloids_intro}
		\sup_{\hcal_s} t^{n/2}|u(t,x)|\le C \sum_{|I|\le (n+2)/2} \|B^I u \|_{L^2(\hcal_s)}
	\end{equation}
	where now $B^I$ are products involving hyperbolic rotations only, to get a good pointwise control on the solution. This approach has been largely used in the case of compactly supported initial data thanks to the finite speed of propagation satisfied by both wave and Klein--Gordon equations, but it is not adapted to treat the case of initial data that only enjoys some decay at infinity. Other methods have been employed to handle such cases, based on Fourier analysis, normal forms  and/or microlocal analysis: see for instance the work by Ionescu and Pausader \cite{IoPa} in the 1+3 dimensional setting, by the second author \cite{stingo_WKG} and in collaboration with Ifrim \cite{IS2019} for the 1+2 dimensional case, and references therein. 
	See also a recent work by LeFloch and Ma \cite{LeFlochMa22-I} using a foliation that merges hyperboloids with constant time slices.
	
	\subsection{The 5D problem: main theorem and overview of the proof}
	
	According to the positive mass theorem, the solution $g_{\alpha\beta}$ of the Cauchy problem \eqref{g_equations}-\eqref{data_g} must have a non-trivial tail at spacelike infinity\footnote{We choose to write this tail so that $g_{\alpha\beta}$ corresponds to the Schwarzschild metric in wave coordinates at leading order.} which suggests to set $g_{\alpha\beta} = \overline{g}_{\alpha\beta} + h^0_{\alpha\beta}+h^1_{\alpha\beta}$ where \begin{equation}\label{h0_def}
		\begin{aligned}
			& h^0_{\bmalph\bmbeta} = \chi\Big(\frac{r}{t}\Big)\chi(r)\frac{M}{r}\delta_{\bmalph\bmbeta}, \qquad h^0_{44} =0,\\
			& \chi\in \mathcal{C}^\infty(\R) \text{ with } \chi(s)=0 \text{ for } s\le 1/2, \ \chi(s)=1 \text{ for } s\ge 3/4, \ r=|x|
		\end{aligned}
	\end{equation}
	and look for $h^1_{\alpha\beta}$ the solution to the following system of quasilinear wave equations
	\begin{equation}\label{h_equations}
		\tilde{\Box}_g h^1_{\alpha\beta} = F_{\alpha\beta}(h)(\partial h, \partial h) - \tilde{\Box}_g h^0_{\alpha\beta}, \qquad \text{on  } \R^{1+3}\times \S^1
	\end{equation}
	with data 
	$(h^1|_{\alpha\beta}, \partial_th^1_{\alpha\beta})|_{t=2}$ being small and sufficiently decaying in space. 
	The semilinear source term in the above right hand side decompose into the following sum
	\[
	F_{\alpha\beta}(h)(\partial h, \partial h) = P_{\alpha\beta}(\partial h, \partial h) + \mathbf{Q}_{\alpha\beta}(\partial h, \partial h) + G_{\alpha\beta}(h)(\partial h, \partial h)
	\]
	where
	\begin{itemize}
		\item[-] $P_{\alpha\beta}(\partial h, \partial h)$ are quadratic \emph{weak null} terms
		\[
		P_{\alpha\beta}(\partial h, \partial h) = \frac{1}{4}\bar{g}^{\mu\rho}\bar{g}^{\nu\sigma}\left(\partial_\alpha h_{\mu\rho}\partial_\beta h_{\nu\sigma} - 2 \partial_\alpha h_{\mu\nu}\partial_\beta h_{\rho\sigma}\right),
		\]
		\item[-] $\mathbf{Q}_{\alpha\beta}(\partial h, \partial h)$ is a linear combination of the classical quadratic null forms,
		\item[-] $ G_{\alpha\beta}(h)(\partial h, \partial h)$ are cubic terms. More precisely, they are quadratic in $\partial h$ with smooth coefficients depending on $h$ so that $ G_{\alpha\beta}(0)(\partial h, \partial h) =0$.
	\end{itemize}
	The reduced wave operator can be written as $\tilde{\Box}_g = \Box_{xy} + H^{\mu\nu}\partial_\mu\partial_\nu$, where $\Box_{xy} = -\partial^2_t + \Delta_x + \partial^2_y$ is the flat wave operator and $H^{\mu\nu} : = g^{\mu\nu} - \overline{g}^{\mu\nu}$ is the formal inverse of $h_{\mu\nu}$ for small $h$, i.e. 
	\begin{equation} \label{H_def}
		H^{\mu\nu} = -h^{\mu\nu} + \mathcal{O}^{\mu\nu}(h^2) = -\overline{g}^{\mu\rho}\overline{g}^{\nu\sigma} h_{\rho\sigma} + \mathcal{O}^{\mu\nu}(h^2).
	\end{equation}
	We can now give a more precise statement of our main result.
	{
		\begin{theorem}\label{thm:Main}
			Let $\kappa>0$. There exists $N\in \N$ sufficiently large and $\ep_0>0$ small such that, for any $0<\ep<\ep_0$ and initial data $g_0,K_0$ solving the constraint equations and satisfying
			\begin{equation}\label{data_maintheo}
				\begin{gathered}
					\sum_{m\leq N} \sum_{i+j=m}\|(1+r)^{\frac{1}{2}+i+\kappa}\partial_y^j\nabla_x^i(g_0-g^0)\|_{\dot{H}^1_{x,y}} +\|(1+r)^{\frac{1}{2}+i+\kappa}\partial_y^j\nabla_x^iK_0\|_{L^2_{x,y}} \leq \ep,\\
					\sum_{m\leq N-1} \sum_{i+j=m}\|(1+r)^{\frac{3}{2}+i+\kappa}\partial_y^j\nabla_x^i(g_0-g^0)\|_{\dot{H}^2_{x,y}} +\|(1+r)^{\frac{3}{2}+i+\kappa}\partial_y^j\nabla_x^iK_0\|_{\dot{H}^1_{x,y}} \leq \ep
				\end{gathered}
			\end{equation}
			together with the $L^2$ estimate
			$$\|(1+r)^{-\frac{1}{2}+\kappa}(g_0-g^0)\|_{L^2_{x,y}}\le \ep$$
			with $r=|x|$ and $g^0$ defined by
			\[
			g^0_{\bmi\bmj}=(1+M\chi(r)r^{-1})\delta_{\bmi\bmj}, \quad g^0_{44}= 1, \quad g^0_{4\bmi}=0,
			\]
			there exists a unique global solution $g_{\alpha\beta}$ to \eqref{g_equations} with initial data given by \eqref{data_g}. This solution obeys the Einstein equations and decomposes as $g_{\alpha\beta}=\bar{g}_{\alpha\beta}+h^0_{\alpha\beta}+h^1_{\alpha\beta}$, with $h^0_{\alpha\beta}$ defined by \eqref{h0_def} and $h^1_{\alpha\beta}$ satisfying the pointwise estimate
			$$|h^1_{\alpha\beta}|\leq \frac{C_0\ep}{(1+t+|x|)^{1-\gamma}}$$
			with $C_0$ a numerical constant and $\gamma>0$ arbitrarily small but fixed.
		\end{theorem}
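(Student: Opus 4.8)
The plan is to prove Theorem~\ref{thm:Main} by a large bootstrap (continuity) argument for the reduced system \eqref{h_equations}, evolving from the slice $\{t=2\}$ and combining exterior weighted energy estimates in the spirit of Lindblad--Rodnianski with an interior hyperboloidal analysis adapted to the infinitely many Klein--Gordon modes. The mode structure in the compact direction is exploited by regarding $\partial_y$ as an extra Killing field of $\overline g$ which plays the role of a mass operator: on the $k$-th Fourier mode in $y$ the flat operator $\Box_{xy}$ becomes $\Box_x-k^2$, so $(h^1_{\alpha\beta})^0$ solves a quasilinear wave equation on $\mathbb R^{1+3}$ while each $k\neq 0$ mode solves a quasilinear Klein--Gordon equation of mass $|k|$, with the nonlinearities coupling the modes through a discrete convolution in $k$. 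Since each $\partial_y$ costs a factor of $k$, controlling sufficiently many $\partial_y$-derivatives inside the energies provides the summability in $k$ needed to treat the coupling uniformly. The tail $h^0$ of \eqref{h0_def} is retained because the positive mass theorem forbids faster decay of $g_{\alpha\beta}-\overline g_{\alpha\beta}$; its contribution $-\tilde\Box_g h^0$ to the source is an explicit term, supported near the wave zone $r\sim t$, with good decay.

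Because the data are only polynomially decaying, I would split the forward region into an exterior part $\{r\gtrsim t\}$ and an interior part foliated by truncated hyperboloids $\hcal_s=\{t^2-|x|^2=s^2\}$, matched along a timelike cone. In the exterior the Klein--Gordon modes are not yet dispersively active, so there one runs weighted energy estimates with a Lindblad--Rodnianski-type weight behaving like $(1+q)^{1+2\kappa}$ in the outgoing variable $q=r-t>0$, using $h^0$ to absorb the Schwarzschild behavior and the weak null structure of $P_{\alpha\beta}$ (together with the wave-gauge decomposition of the worst metric components into a better-behaved part and a lower-order remainder). In the interior I would use hyperboloidal energies built only from $\partial_\mu$, the Euclidean rotations $\Omega_{\bmi\bmj}$, the boosts $\Omega_{0\bmi}$ and $\partial_y$ --- deliberately excluding the scaling vector field, which does not commute with the Klein--Gordon operator --- and apply the Klainerman--Sobolev inequality on hyperboloids \eqref{KS_hyperboloids_intro}, in a version incorporating the $\partial_y$ direction, to turn energy bounds into $t^{-1}$ pointwise decay for wave components and $t^{-3/2}$ for Klein--Gordon components.

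The heart of the matter is the nonlinear estimate for $F_{\alpha\beta}=P_{\alpha\beta}+\mathbf Q_{\alpha\beta}+G_{\alpha\beta}$ together with the quasilinear terms $H^{\mu\nu}\partial_\mu\partial_\nu$ in $\tilde\Box_g=\Box_{xy}+H^{\mu\nu}\partial_\mu\partial_\nu$, which must close the energy hierarchy up to order $N$ with at most a slow $t^{C\ep}$ growth at top order and uniform bounds below. The classical null forms $\mathbf Q_{\alpha\beta}$ gain decay through the usual frame identities; the cubic terms $G_{\alpha\beta}$ are harmless; the weak null terms $P_{\alpha\beta}$ are controlled via the wave-gauge algebra, whereby the dangerous component $H^{LL}$ (equivalently $g_{LL}$) both satisfies a better equation and enjoys a better pointwise bound, so that the would-be resonant self-interaction of the wave modes is tamed. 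The genuinely new features are the wave/Klein--Gordon interactions: products of two Klein--Gordon modes with opposite $k$ feed the zero-mode wave equation and one must check that their quadratic decay (of order $t^{-3}$, further improved by null structure where present) is integrable against the wave energy; the mixed terms in which a wave factor multiplies a Klein--Gordon factor to source a Klein--Gordon mode must preserve the mass and not overspend the $\partial_y$-budget; and the quasilinear perturbation deforms the characteristic cones, so the smallness and improvement of $g^{LL}$ must be used to keep the null and hyperboloidal frames effective throughout.

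I expect the main obstacle to be closing the exterior weighted estimates together with the matching to the interior hyperboloidal region while simultaneously carrying the infinite Klein--Gordon tower: the exterior weak-null analysis is already delicate for the pure wave system, here it must be performed uniformly in $k$, and it must then be glued to an interior argument in which, by contrast, the Klein--Gordon modes do disperse, with compatible derivative losses and compatible $\ep$-growth rates across the interface. Once the energy and pointwise bootstrap assumptions are recovered with strictly improved constants, the continuity argument produces the global solution $h^1_{\alpha\beta}$ with the asserted $(1+t+|x|)^{-1+\gamma}$ decay; propagation of the wave condition \eqref{wave_condition}, which holds on $\{t=2\}$ by the choice \eqref{data_g}, then upgrades the solution of \eqref{h_equations} to a global solution of the Einstein equations \eqref{EE} converging to $\overline g$, which is the claimed stability of the Kaluza--Klein spacetime.
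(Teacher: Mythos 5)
Your proposal correctly reproduces the paper's overall architecture: the reduced system \eqref{h_equations}, the split into an exterior region with Lindblad--Rodnianski weighted energies in $q=r-t$ and an interior region foliated by truncated hyperboloids, the use of the admissible fields $\{\partial_\mu,\Omega_{\bmi\bmj},\Omega_{0\bmi},\partial_y\}$ with the scaling field excluded, the Fourier interpretation of $\partial_y$ as a mass operator turning the zero mode into a wave and nonzero modes into Klein--Gordon, the wave-gauge mechanism behind the weak null terms $P_{\alpha\beta}$, and the propagation of the gauge to upgrade back to \eqref{EE}. At that level the plan is the same as the paper's.

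There is, however, one concrete misstatement that would derail the bootstrap if carried out as written: the claim that the hierarchy closes ``with at most a slow $t^{C\ep}$ growth at top order.'' That is true in the exterior, but not in the interior. The dominant obstruction in the interior is not the semilinear wave/Klein--Gordon sourcing you describe, but the \emph{quasilinear} commutator $[Z^K,(H^{1,\mu\nu})^\flat\partial_\mu\partial_\nu]\habn$: after a Hardy inequality one only gets roughly $\|Z^{K_1}H^{1,\flat}\cdot\partial^2 Z^{K_2}\habn\|_{L^2(\hin_s)}\lesssim \ep^2\, s^{-1/2+}$, because the wave coefficient decays no faster than $r^{-1}$ and the second derivatives of the Klein--Gordon part lack any $(t-r)$ gain. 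Fed into Gr\"onwall, this forces the top-order interior energies $\enint(s,Z^{\le N}\hab)^{1/2}$ to grow like $s^{1/2+}$ rather than $s^{C\ep}$, as encoded in the bootstrap assumption \eqref{boot1_in}. The paper's fix is structural and is absent from your plan: one propagates a \emph{separate, much slower-growing} energy for the zero mode $\habf$ (assumption \eqref{boot2_in}, where the bad commutator is absent by the product-zero-mode identity $(fg)^\flat=f^\flat g^\flat+(f^\natural g^\natural)^\flat$), together with a hierarchy of lower-order energies $\enint(s,\partial^{\le N-N_1}Z^{\le N_1}\hab)^{1/2}\lesssim\ep s^{\delta_k}$ and a staircase of $\zeta_k\ll\gamma_k\ll\delta_k$, and one closes the circle by sup-norm propagation arguments for $\habf$ and for $\habn$ in $L^\infty_xL^2_y$ (Propositions \ref{prp:Linf-Linf_wave} and \ref{prp:KG_imp_decay}). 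Without this two-tier structure (and the resulting distinction between the growth rates of $\hab$ and $\habf$), the energy estimates you outline will not close. I would also flag, as a smaller point, that the paper does not literally sum over Fourier modes with a $k$-budget; it works with $L^2_y(\S^1)$ and the Poincar\'e/Sobolev inequalities on $\S^1$ directly, so the ``$\partial_y$-budget'' heuristic should be replaced by that.
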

	}
	
	The proof of the above result is based on a bootstrap argument, i.e. on the propagation of some suitable a-priori energy estimates and pointwise decay bounds on the solution, which is performed in two main steps:
	
	\emph{Step 1}: deduction of higher order energy inequalities and of sharp pointwise estimates from the a-priori energy assumptions;
	
	\emph{Step 2}: estimates of the trilinear and quartic terms appearing in the right hand side of the energy inequalities. In particular, deduction of suitable higher order $L^2$ estimates of the source terms from the a-priori energy assumptions and the pointwise decay bounds.
	
	\smallskip
	\noindent In order to run the above argument and in view of the issues discussed in the previous subsection, one needs to find a strategy to obtain (at least in the first instance) pointwise decay bounds on the solution from the a-priori assumptions, knowing that inequality \eqref{fixedtime-KS} cannot be used and \eqref{KS_hyperboloids_intro} is valid only in the interior of some light cone. 
	
	Similar to \cite{HS}, the approach we take in the present paper is to decompose the whole spacetime and study the problem separately in two regions, corresponding to the interior and exterior of a hyperboloid\footnote{In this curved background, the Minkowski cone $\{t=|x|+1\}$ is in fact only asymptotically spacelike.} asymptotically approaching the cone $\{t=|x|+1\}\times \S^1$. This decomposition is quite natural, in that the analysis in the exterior is totally independent of that in the interior and requires different tools. It also allows us to explain our arguments with more clarity.
	
	\subsubsection{Exterior region: the bootstrap assumptions.}
	
	The bootstrap assumptions in the exterior region are higher order weighted energy estimates on the solution
	\[
	\begin{aligned}
		& \enext(t, Z^{\le N}h^1)^{1/2}\le 2C_0\ep t^{\sigma}, \\
		& E^{e, 1+\kappa}(t, \partial Z^{\le N-1}h^1)^{1/2}\le 2C_0\ep t^\sigma
	\end{aligned}
	\]
	where the weighted energy functional is defined, for any $\lambda>0$, as
	\begin{multline*}
		E^{e, \lambda}(t, \hab) = \iint_{\{|x|\ge t-1\}\times \S^1} (2+|x|-t)^{1+2\lambda} |\nabla_{txy} \hab(t,x,y)|^2 dxdy\ \\
		+ \int_2^t\iint_{\{|x|\ge \tau-1\}\times \S^1} (2+|x|-t)^{2\lambda} |\overline{\nabla} \hab(\tau, x,y)|^2 dxdyd\tau.
	\end{multline*}
	In the above integrals, $\nabla_{txy}$ denotes the spacetime gradient while $\overline{\nabla} = (\op_0, \dots, \op_4) = (\partial_t+\partial_r, \gd_{\bmi}, \partial_y)$ denotes the tangent gradient to the cones $\{t = r+1\}\times \S^1$, with $\gd_\bmi = \partial_i - \frac{x_\bmi}{r}\partial_t$ being the angular derivatives. The parameter $\kappa$ in the above a-priori estimates is related to the asymptotic decay of the data, $N\in \N$ is assumed to be sufficiently large and $0<\sigma<\kappa$ sufficiently small.
	Weighted Sobolev and Hardy inequalities allow us to obtain fixed-time pointwise decay bounds on the solution from the assumptions on the weighted energies, as for any given smooth function $U$
	\[
	\begin{gathered}
		|\nabla_{txy}U(t,x,y)|\le C (1+|x|)^{-1}(2+|x| -t)^{-\frac12 -\lambda}\sum_{|I|\le 3}E^{e, \lambda}(t,Z^I U)^{1/2},\\
		|U(t,x,y)|\le C (1+|x|)^{-1}(2+|x|-t)^{-\lambda}\sum_{|I|\le 2}E^{e, \lambda}(t,Z^I U)^{1/2}.
	\end{gathered}
	\]
	They also allow us to uncover faster spacetime decay for the tangential derivatives, since their weighted $L^2$-spacetime norm is controlled by the energy, and to recover the well-known property of waves that higher order derivatives enjoy better decay in terms of the distance from the outgoing Minkowski cones, which follows from the second energy assumption above. We point out that, in the context of waves on $\R^{1+3}$ where the full range of vector fields $\Gamma\in \{\Omega_{\bmi\bmj}, \Omega_{0\bmi}, \scal\}$ is available, the latter two properties are easily derived from algebraic relations. In particular, one can use that
	\[
	|\op \psi|\lesssim \sum_{|I|\le 1}\frac{|\Gamma^I \psi|}{1+t+|t-|x||},\qquad |\partial^2\psi|\lesssim \sum_{|I|\le 1}\frac{|\partial \Gamma^I \psi|}{1+|t-|x||}.
	\]
	
	\subsubsection{Interior region: the bootstrap assumptions.}
	
	The bootstrap assumptions in the interior region are bounds on higher order energies defined on truncated hyperboloids
	\[
	\hin_s = \{(t,x): t^2-|x|^2 = s^2 \text{ and } t\ge 1+\sqrt{1+|x|^2} \}\times \S^1, \qquad s\ge 2
	\]
	which are the branches of hyperboloids contained in the interior region, and pointwise decay bounds on differentiated metric coefficients carrying only Klainerman vector field derivatives. We denote the zero-mode (respectively zero-average) component of coefficient $\hab$ by
	\[
	\habf = \fint_{\S^1}\hab dy, \quad \text{ and } \quad \habn = \hab-\habf.
	\]
	%I think it's useful to really highlight these definitions in centered equations (ZW). 
	We assume that, for some large integers $1\ll N_1\ll N$ and some small\footnote{In practice, $\zeta,\gamma$ and $\delta$ are going to be replaced with a hierarchy of increasing $\zeta_k, \gamma_k$ and $ \delta_k$, where $k$ accounts for the number of Klainerman vector fields in the product $Z^I$, so that $\zeta_i\ll \gamma_j\ll \delta_k$ for any $i, j,k$ and the algebraic relation $\gamma_i + \delta_j <\delta_k$ whenever $j<k$.} $0<\zeta<\gamma\ll\delta$, the following bounds are satisfied
	\[
	\begin{aligned}
		&E^i(s, \partial^{\le 1} Z^{\le N}\hab) \le C\epsilon^2 s^{1+\zeta},\\
		& E^i(s, Z^{\le N}\habf) \le C\epsilon^2 s^{\zeta},\\
		& E^i(s, \partial^{\le N-N_1}Z^{\le N_1}\hab) \le C\epsilon^2 s^{\delta} 
	\end{aligned}
	\]
	where 
	\[
	\begin{aligned}
		E^i(s, \hab) &:=\iint_{\hin_s } \big|(s/t)\partial_t \hab \big|^2 + |\underline{\nabla}\hab|^2 dxdy \\
		&\, = \iint_{\hin_s} \big|(s/t)\nabla_x\hab\big|^2 + \big|(1/t)\scal \hab\big|^2 + \sum_{1\le i<j\le 3}\big|(1/t)\Omega_{ij}\hab\big|^2 + |\partial_y\hab|^2\, dxdy
	\end{aligned}
	\] 
	and with $\Gamma\in \{\Omega_{\bmi\bmj}, \Omega_{0\bmi}\}$
	\[
	|\Gamma^{\le N_1} \habf(t,x)|\le C\ep (1+t)^{-1+\gamma}(1+|t-|x||)^\gamma, \quad 
	\]
	\[
\|	t^{\frac32}\partial_y^{\le 1}(\partial^I\Gamma^J\habn)\|_{L^\infty_xL^2_y(\hin_s)} + \|	t^{\frac12}s\partial_{tx}(\partial^I\Gamma^J\habn)\|_{L^\infty_xL^2_y(\hin_s)}\le C\ep s^\gamma, \quad |I|+|J|\le N_1+1,\ |J|\le N_1.
	\]
	
	In the above energy functional, $\underline{\nabla} = (\underline{\partial_1}, \dots, \underline{\partial_4})$ denotes the tangent gradient (to the hyperboloids) with $\underline{\partial}_{\bmi} = \partial_{\bmi} + (x_{\bmi}/t)\partial_t$ for $\bmi=1,2,3$ and $\underline{\partial}_4 =\partial_4$.
	Klainerman-Sobolev inequalities on hyperboloids permit us to deduce pointwise decay bounds for the solution, as for any given smooth function $U$ one has
	\[
	\begin{gathered}
		\sup_{\S^1}|\nabla_{tx}U(t,x,y)|\le C (1+t)^{-1}(1+|t-|x||)^{-1/2}\sum_{|I|\le 3}E^i(s, Z^I U)^{1/2}, \\
		\sup_{\S^1} |\underline{\nabla}U(t,x,y)|\le C(1+t)^{-3/2}\sum_{|I|\le 3}E^i(s,Z^IU)^{1/2}.
	\end{gathered}
	\]
	Note that the latter inequality shows, again, that tangential derivatives
	enjoy better decay estimates than usual derivatives. We postpone the explanation of why we use the above hierarchy of energy assumptions to later in this section.
	
	\subsubsection{Estimates on inhomogeneities: null and weak-null terms}
	
	Once energy bounds and pointwise decay bounds are available, one has to estimate the trilinear and quartic terms appearing in the right hand side of the energy inequalities. These involve the source terms of the equation satisfied by the differentiated coefficients $Z^K\hab$
	\[
	\tilde{\Box}_g Z^Kh^1_{\alpha\beta} = F^K_{\alpha\beta} + F^{0,K}_{\alpha\beta}
	\]
	where
	\[
	F^K_{\alpha\beta} = Z^KF_{\alpha\beta}(h)(\partial h, \partial h) - [Z^K, H^{\mu\nu}\partial_\mu\partial_\nu]h^1_{\alpha\beta} , \qquad F^{0,K}_{\alpha\beta} = Z^K\tilde{\Box}_g h^0_{\alpha\beta}
	\]
	are semilinear quadratic interactions.
	The explicit inhomogeneous terms $F^{0, K}_{\alpha\beta}$ and the differentiated cubic terms $Z^KG_{\alpha\beta}(h)(\partial h, \partial h)$ are short range perturbations of the linear equations. We do not discuss them here as they cause no issue in the analysis. The differentiated null terms $Z^K\mathbf{Q}_{\alpha\beta}(\partial h, \partial h)$ are also easily controlled, thanks to the following well-known property
	\[
	\begin{aligned}
		|Q_0(\partial \psi, \partial \varphi)| + |Q_{\alpha\beta}(\partial \psi, \partial \varphi)|& \lesssim |\op \psi| |\partial \varphi| + |\partial\psi| |\op \varphi|\\
		& \lesssim |\pb\psi| |\partial \varphi| + |\partial\psi||\pb\varphi| + (s/t)^2 |\partial\psi| |\partial\varphi|
	\end{aligned}
	\]
	and the better behavior of tangential derivatives.
	
	The quadratic interactions that are more delicate to treat and require special attention are the differentiated null terms $Z^KP_{\alpha\beta}(\partial h, \partial h)$ and the commutator terms $[Z^K, H^{\mu\nu}\partial_\mu\partial_\nu]\hab$.
	The particular structure of such terms was first highlighted by Lindblad and Rodnianski \cite{LR03, LR05, LR10} in the 4-dimensional setting and shows all its  potential in the null frame $\mathcal{U} = \{L,\Lb, S^1, S^2\}\cup\{\partial_y\}$, where $L = \partial_t + \partial_r$, $\Lb = \partial_t-\partial_r$ and $S^1, S^2$ are smooth vector fields tangent to the spheres $\S^2 = \{u\in \R^3 : u\cdot x/|x|=0\}$. 
	
	As concerns the weak null terms, one sees that if the metric tensor is expressed with respect to $\q U$ then
	\[
	P_{\alpha\beta}(\partial h, \partial h)\sim \partial h_{ T U}^2 + \partial h_{LL}\partial h_{\Lb\Lb}, \qquad T\in \q T, U\in \q U \footnote{For any tensor $\pi_{\alpha\beta}$ and any two vector fields $X = X^\alpha\partial_\alpha$, $Y = Y^\alpha\partial_\alpha$, we define $\pi_{XY} = \pi_{\alpha\beta}X^\alpha Y^\beta$.}
	\]
	where $\mathcal{T} = \{L, S^1, S^2\}\cup \{\partial_y\}$ denotes the frame tangent to the flat outgoing cones. On the one hand, the choice of gauge (in particular the \emph{wave coordinate condition}) ensures that the derivatives of $h_{LT}$ coefficients are well behaved, as they satisfy
	\begin{equation}\label{partial_hLL}
		|\partial h_{LT}|\lesssim |\op h| + \mathcal{O}(h\cdot\partial h).
	\end{equation}
	On the other hand, the metric coefficients $h_{TU}$ solve quasilinear wave equations whose source terms are null or cubic. In the exterior region, we exploit this property to prove that the higher order weighted energies of such coefficients grow at a slower rate $t^{C\ep}$, where $\ep\ll \sigma$ is the size of the data. From this we infer an improved pointwise decay for $\partial Z^Kh_{TU}$ with $|K|\ll N$ and the following weighted $L^2$ bound for the differentiated weak null terms
	\[
	\sum_{i=0}^1\big\|(2+r-t)^{\frac{1}{2}+i+\kappa}\partial^i Z^{\le N-i}P_{\alpha\beta}\big\|_{L^2}\lesssim \ep^2 t^{-1+C\ep} + \q O(\ep^2 t^{-1-}).
	\]
	The above estimate shows that the weak null terms contribute to a slow growth of the exterior energies. In the interior region, the enhanced pointwise bounds satisfied by the derivatives of $Z^K h^1_{TU}$ for $|K|\ll N$ are instead obtained directly from the equations they satisfy, using integration along characteristics as done in \cite{LR10}. This approach is possible provided that we already have at our disposal suitable bounds on the solution in the exterior region.

	\subsubsection{Commutator terms in the exterior region}
	
	The commutator terms also display an important structure when expressed with respect to the null frame. The tensor $H^{\mu\nu}$ is decomposed as follows
	\begin{equation} \label{dec_H}
		H^{\mu\nu}:= H^{0,\mu\nu} + H^{1,\mu\nu}, \quad	H^{0,\bmmu\bmnu} :=-\chi\Big(\frac{r}{t}\Big)\chi(r) \frac{M}{r}\delta^{\bmmu\bmnu}, \quad H^{0,44}=0, 
	\end{equation}
	where $ H^{0, \mu\nu}$ is the ``Schwarzschild part'' of $H$. The estimates of 	$[Z^K, H^{0, \mu\nu}\partial_\mu\partial_\nu]\hab$ are straightforward and, similar to the weak null terms discussed above, responsible for a slow growth of the exterior energy. The estimates of the commutator involving coefficients $H^{1,\mu\nu}$ are instead obtained using the fact that, for any tensor $\pi^{\mu\nu}$ and function $\psi$,
	\[
	|\pi^{\mu\nu} \partial_{\mu}\partial_{\nu} \psi|\lesssim |\pi_{LL}|| \partial^2 \psi |+ |\pi| |\partial \op \psi| 
	\]
	so that either  the tensor coefficient is a ``good'' coefficient $\pi_{LL}$ or one of the two derivatives acting on $\psi$ is a tangential derivative. As highlighted above, in the exterior region the enhanced behaviour of second order derivatives $\partial \op$ as well as of $\partial^2$ is encoded in the energy assumptions. What is more, weighted Hardy type inequalities and weighted Sobolev-Hardy inequalities allow us to get a good control of the higher order weighted $L^2$ norms, as well as to recover good pointwise decay bounds, of the solution with no derivatives.
	Suitable higher order weighted $L^2$ estimates for these commutator terms in the exterior region follow then rather easily. 
	
	\subsubsection{Commutator terms in the interior region}
	
	A much more delicate analysis of the commutator terms is required in the interior region. On the one hand, the interior energy assumptions do not provide us with additional information on the second order derivatives and the interior energy functionals only give a $\dot{H}^1$ type control on the differentiated solution. The classical Hardy inequality written on hyperboloids is
	\[
	\| r^{-1}U\|_{L^2(\hin_s)}\lesssim \|\pb U\|_{L^2(\hin_s)} + \|\partial U\|_{L^2(\Sext_{t_s})}
	\]
	where $\Sext_{t_s}$ is the exterior constant time slice that intersects the interior hyperboloid $\hin_s$ on the boundary between the two regions. Such an inequality provides us with a control of the $L^2$ norm of the undifferentiated solution at the costly expense of a $r^{-1}$ factor. 
	On the other hand, no extra decay (in terms of the distance from the outgoing cones) is expected for the second order derivatives of the solution. In fact, the zero-average component of the solution $\habn$ is a Klein--Gordon type function, in that each of its Fourier mode along the $y$-direction is solution to a Klein--Gordon equation (see subsection \ref{zero-mode truncation}). As a consequence of this latter fact one only has $|\partial^2\habn| + |\partial \habn|\lesssim (1+t+r)^{-3/2}$, which coupled with the above Hardy inequality gives
	\[
	\big\|Z^K\hff \cdot\partial^2 \habn\big\|_{L^2(\hin_s)}\lesssim s^{-1/2}\big(\|\pb \hff\|_{L^2(\hin_s)} + \|\partial \hff\|_{L^2(\Sext_{t_s})}\big)\lesssim s^{-1/2+\delta}.
	\]
	The same inequality holds if $\hff$ is replaced by $(H^{1,\mu\nu})^\flat$.		
	These ``wave-Klein--Gordon'' contributions to the commutator are the ones responsible for the $s^{1+}$ growth  of the higher order energies on $\hin_s$. They are, however, absent in the equations satisfied by the zero-modes $Z^K\habf$, as for any two functions $f,g$ one has
	\[
	(f\cdot g)^\flat = f^\flat \cdot g^\flat + \big(f^\natural\cdot g^\natural \big)^\flat,
	\]
	therefore a much slower growth is expected for the higher order energies of $\habf$. 
	
	The above observation motivates the use of a hierarchy in the interior energy assumptions and to separately propagate the higher order energy estimates for the zero-modes. 
	To propagate the different interior energy assumptions, we then need to estimate the commutators $[Z^K, \pi^{1,\mu\nu}\partial_\mu\partial_\nu]\phi$ separately for $\pi=H^{1,\flat}, H^{1, \natural}$ and $\phi=\habf, \habn$. The analysis is reasonably straightforward  when $\pi=H^{1, \natural}$ as we can rely on the Poincar\'e inequality. When $\pi=H^{1, \flat}$ the analysis is finer, as we express the metric coefficients $H^{1, \mu\nu}$ relative to the null framework and all derivatives in terms of $\partial_t, \partial_y$ and of the tangential derivatives $\pb_\bma$ to hyperboloids. Doing this, we see that
	\begin{multline*}
		|[Z^K, (H^{1,\mu\nu})^\flat\partial_\mu\partial_\nu]\phi|\\
		\lesssim |Z^K H^{1, \flat}_{LL}||\partial^2_t \phi| + |Z^K H^{1, \flat}_{4L}||\partial_t\partial_y \phi| + \frac{|t^2-r^2|}{t^2}|Z^KH^{1, \flat}||\partial^2_t \phi| + \frac{ |Z^KH^{1, \flat}||\partial Z \phi|}{1+t+r} +\dots
	\end{multline*}
	The remarkable property of the above right hand side is that each quadratic term either contains coefficients $H^{1, \flat}_{LL}$ and $H^{1, \flat}_{4L}$ - which are ``good'' as a consequence of the wave condition - or have an extra decaying factor $(|t^2-r^2|/t)^2$ and $(1+t+r)^{-1}$. 
	Then suitable estimates on the $L^2(\hin_s)$ norms of the above terms are obtained by using a Hardy inequality \emph{\`a la} Lindblad and Rodnianski with weights in $t-r$, which allow us to better exploit the pointwise decay of our solution. We point the reader to subsection \ref{sub:commutators_int} for further details. 
	
	\subsubsection{The null framework}
	We emphasise the importance of choosing a framework which correctly highlights the structure of the weak null terms. In fact, the absence of ``bad'' interactions, such as $(\partial h_{\Lb\Lb})^2$ and $\partial h_{LL}\cdot\partial h_{TU}$ in the expression of the weak null terms with respect to the null framework, is crucially related to the fact that the transversal field $\Lb$ is orthogonal to $\S^2\times \S^1$, i.e. that $\bar{g}(\Lb, A)=0$ for $A\in \{S^1, S^2, \partial_y\}$. On the contrary, the framework arising naturally from hyperboloids
	\[
	\mathcal{F}=\Big\{\partial_t, \quad \pb_\bma = \partial_a + \frac{x^\bma}{t}\partial_t\Big\}\cup \{\partial_y\},
	\]
	in which the ``transversal field'' ($\partial_t$ in the above example) is not orthogonal to $\S^2\times \S^1$, causes the analogue of the bad interaction $\partial h_{\Lb\Lb}\cdot\partial h_{TU}$ to appear and critically fails to give a useful expression for the weak null terms.  This consideration leads us to adopt the null frame decomposition both in the exterior and the interior region  and to combine it with the foliation by hyperboloids in the latter region. Indeed in this region, and when required, the metric coefficients are expressed with respect to the null frame $\q U$ (in order to use the enhanced behavior of $h_{LT}$ and $h_{TU}$ coefficients) while derivatives are written in terms of those in $\q F$ (in order to distinguish between the ``good'' tangential derivatives $\pb_\bma, \partial_y$ and the ``bad'' direction $\partial_t$).  Note our approach is  different from what was done in previous works on Einstein-Klein--Gordon systems. 
	We finally mention that a different framework than the null one is used by Ionescu and Pausader \cite{IP2}, which is reminiscent of the div-curl decomposition of vector-fields in fluid models and is more compatible with the Fourier transform approach employed there.

	\subsubsection{The Einstein-Klein--Gordon equations}
	We conclude by pointing out that our proof can be used, \textit{mutatis mutandis}, to provide a new proof of the stability of the Minkowski solution to the Einstein-Klein--Gordon equations. 		
	To briefly illustrate this point, we recall that in a harmonic gauge the Einstein-Klein--Gordon equations read
	\begin{equation}\label{EKG}
		\tilde{\Box}_g h_{\bmalph\bmbeta} = \tilde{F}_{\bmalph\bmbeta}(h)(\partial h, \partial h)-2\Big(\partial_\bmalph \phi \partial_\bmbeta \phi + \frac{m^2}{2} g_{\bmalph\bmbeta}\Big), \qquad
		\tilde{\Box}_g\phi = m^2 \phi .
	\end{equation}
	These equations are posed on $\R^{1+3}$, $m>0$ is a constant parameter and $h$ is a perturbation away from the Minkowski spacetime $\mathbf{m}$ defined via $g_{\bmalph\bmbeta} = \mathbf{m}_{\bmalph\bmbeta}-h_{\bmalph\bmbeta}$. The system \eqref{EKG} is much simpler to treat than \eqref{g_equations}. For example, without the $\S^1$, the metric tensor $h$ remains entirely wave-like and so all problematic wave--Klein--Gordon commutators no longer occur. The only Klein--Gordon field  is $\phi$ and it couples into the equation for the metric only via semilinear nonlinearities. This coupling is weak in the sense that the bootstrap assumptions for $h_{\bmalph\bmbeta}$ and $\phi$ can be propagated separately. 
	To conclude, due to our choice of null framework, combined with the separate analysis used in the interior and exterior regions, our proof provides an alternative perspective from what was done in previous works on Einstein-Klein--Gordon systems in \cite{LeFloch-Ma16, IP2}. 
	
	\subsection{Notation}\label{sub:notations}
	
	Below is a list of notation, some of which have already been introduced in the introduction, that we will use throughout the paper.
	
	\medskip
	\noindent \textsc{Coordinates}:
	\begin{itemize}
		\item[•] $\{x^\alpha\}_{\alpha=0,\dots, 4}$ with $x^0=t\in\R$, $x = (x^1, x^2, x^3)\in \R^3$, $x^4= y\in \S^1$ are the harmonic coordinates. They satisfy the geometric wave equation $g^{\mu\nu}\nabla_\mu\nabla_\nu x^\alpha = 0$. We will always denote by $r = |x|$ the radial component of $x$;
		\item[•] $u=t+r$ and $\ub=t-r$ are the null coordinates. They are used in the exterior region.
	\end{itemize}
	
	\medskip
	\noindent \textsc{Derivatives}:
	\begin{itemize}
		\item[•] $\nabla_{txy} = (\partial_0, \dots, \partial_4)$ denotes the spacetime gradient, with $\partial_\mu = \partial/\partial x^\mu$. $\nabla_{xy}$ denotes the full spatial gradient in $\R^3\times \S^1$ while $\nabla_x$ is the spatial gradient in $\R^3$. $\nabla_{tx}$ is the 4D spacetime gradient;
		\item[•] $\partial_{xy}$ denotes any of the derivatives $\partial_i$ with $i=1,\dots, 4$, while $\partial_x$ denotes any of the derivatives $\partial_{\bmi}$ with $\bmi=1,2,3$. The definition of $\partial_{tx}$ and $\partial_{txy}$ are similar. We will use $\partial$ and $\partial_{txy}$ interchangeably;
		\item[•] $\Box_{xy} = -\partial^2_t + \Delta_x+\partial^2_y$ and $\Box_x = -\partial^2_t + \Delta_x$;
		\item[•] $\partial_r = (x^{\bmi}/r)\partial_{\bmi}$ denotes the radial derivative in $\R^3$;
		\item[•] $\gd$ denotes any of the angular components $\gd_{\bmi} = \partial_\bmi - (x_\bmi/r)\partial_r$ of $\partial_{\bmi}$ for $\bmi=1,2,3$;
		\item[•] $\partial_u = (1/2)(\partial_t + \partial_r)$ and $\partial_\ub = (1/2)(\partial_t - \partial_r)$ denote the null derivatives;
		\item[•] $\overline{\nabla} = (\op_0, \dots, \op_4) = (\partial_t+\partial_r, \gd_{\bmi}, \partial_y)$ denotes the tangent gradient to the cones $\{t = r+1\}\times \S^1$. Moreover $\overline{\nabla}_x= (\op_0, \dots, \op_3) = (\partial_t + \partial_r, \gd_{\bmi})$;
		\item[•] $\op$ denotes any of the tangent derivatives $\op_\alpha$ in $\R^{1+3}\times \S^1$, $\op_x$ denotes any of the tangent derivatives $\op_{\bmalph}$ in $\R^{1+3}$;
		\item[•] $\underline{\nabla} = (\pb_1, \dots, \pb_4)$ denotes the tangent gradient to the hyperboloids in $\R^{1+3}\times \S^1$, with $\pb_{\bmi} = \partial_{\bmi} + (x^{\bmi}/t)\partial_t$ and $\pb_4 = \partial_y$. Moreover $\underline{\nabla}_x = (\pb_1, \pb_2, \pb_3)$;
		\item[•] $\pb$ denotes any of the tangent derivatives $\pb_\bma, \pb_4$ in $\R^{1+3}\times \S^1$, $\pb_x$ denotes any of tangent derivatives $\pb_{\bma}$ in $\R^{1+3}$. Sometimes we will use $\pb_0 = \partial_t$.
	\end{itemize}
	
	\medskip
	\noindent \textsc{Products}:
	\begin{itemize}
		\item Given a multi-index $\alpha = (\alpha_0, \alpha_1, \dots, \alpha_4) \in \mathbb{N}^5$, its length is computed classically as $|\alpha| = \sum_{i=0}^4 \alpha_i$. We set $\partial^\alpha := \partial_0^{\alpha_0}\partial_1^{\alpha_1}\partial_2^{\alpha_2}\partial_3^{\alpha_3} \partial_4^{\alpha_4}$ and $\partial^\alpha_{x}:=\partial^{\alpha_1}_1\partial^{\alpha_2}_2\partial^{\alpha_3}_3$. The definition of $\partial^\alpha_{xy}$ and $\partial^\alpha_{tx}$ are analogous;
		\item More generally, given a family of vector fields $\{X_1, \dots, X_n\}$ and a multi-index $\alpha=(\alpha_1, \dots, \alpha_n) \in\mathbb{N}^n$, $X^\alpha =X^{\alpha_1}_1\dots X^{\alpha_n}_n$.	With an abuse of notation we will sometimes write $X^k$ (resp. $X^{\le k}$) instead of $\sum_{\alpha: |\alpha|=k}X^\alpha$ (resp. $\sum_{\alpha: |\alpha|\le k}X^\alpha$).
	\end{itemize}
	
	\medskip
	\noindent \textsc{Metrics}:
	\begin{itemize}
		\item[•] $\overline{g} = -(dt)^2 + \sum_\bmi (dx^\bmi)^2 + (dy)^2$ denotes the Kaluza--Klein metric on $\R^{1+3}
		\times \S^1$;
		\item[•] $g$ denotes a solution of the Einstein equations \eqref{EE} on $\R^{1+3}\times\S^1$;
		\item[•] $\overline{g}^{\alpha\beta}$ and $g^{\alpha\beta}$ denote the inverse of the metrics $\overline{g}_{\alpha\beta}$ and $g_{\alpha\beta}$ respectively. For any other arbitrary $n$-tensor tensor $\pi_{\alpha_1\dots\alpha_n}$, indices are raised and lowered using $\overline{g}$, e.g ${\pi^{\alpha_1}}_{\alpha_2\dots\alpha_n} = \overline{g}^{\alpha_1\mu}\pi_{\mu\alpha_2\dots\alpha_n}$;
		\item[•] $H^{\alpha\beta}= g^{\alpha\beta} - \overline{g}^{\alpha\beta}$ corresponds to the formal inverse of $h_{\alpha\beta} = g_{\alpha\beta} - \overline{g}_{\alpha\beta}$. When $h$ is sufficiently small we have $H^{\alpha\beta} = -h^{\alpha\beta}+\mathcal{O}^{\alpha\beta}(h^2)$.
	\end{itemize}

	\medskip
	\noindent \textsc{Null Frame and Decomposition}:
	\begin{itemize}
		\item[•] $L = \partial_t + \partial_r$ denotes the vector field tangent to the outgoing null cones in $\R^{1+3}\times \S^1$. In components, $L^0 = 1$, $L^{\bmi} = x^{\bmi}/|x|$ and $L^4=0$;
		\item[•] $\Lb = \partial_t - \partial_r$ denotes the vector field tangent to the incoming null cones in $\R^{1+3}\times \S^1$. In components, $\Lb^0 = 1$, $\Lb^{\bmi} = -x^{\bmi}/|x|$ and $\Lb^4=0$;
		\item[•] $S^1$ and $S^2$ denote orthogonal vector fields spanning the tangent space of the spheres $t=const$, $r=const$, $y\in\S^1$;
		\item[•] $\q U = \{L, \Lb, S^1, S^2, \partial_y\}$ denotes the full null frame in $\R^{1+3}\times \S^1$;
		\item[•] $\q T = \{L, S^1, S^2, \partial_y\}$ denotes the tangent frame in $\R^{1+3}\times \S^1$;
		\item[•] $\q L =\{L\}$;
		\item[•] For any vector field $X$ and frame vector $U$, $X_U= X_\alpha U^\alpha$ where $X_\alpha = \overline{g}_{\alpha\beta}X^\beta$;
		\item[•] For any arbitrary vector field $X= X^\alpha \partial_\alpha= X^L L+X^{\Lb} \Lb + X^{S^1}S^1 + X^{S^2}S^2 + X^{\partial_y}\partial_y$ where $X^L = -(1/2) X_{\Lb}$, $X^{\Lb} = -(1/2) X_L$, $X^A = X_A$ for $A = S^1, S^2, \partial_y$;
		\item[•] For any $(0,2)$ tensor $\pi$ and two vector fields $X,Y$
		\[
		\pi_{XY} = \pi_{\alpha\beta}X^\alpha Y^\beta.
		\]
		For any two families $\q V, \q W$ of vector fields,
		$|\pi|_{\q V \q W}:= \sum_{V\in \q V, W\in \q W} |\pi_{VW}|$;
		\item[•] The metric $\overline{g}$ has the following form relative to the null frame, note $A, B \in \{S^1, S^2, \partial_y\}$,
		\[
		\overline{g}_{LL} = \overline{g}_{\Lb\Lb} = \overline{g}_{LA} = \overline{g}_{\Lb A} = 0, \quad \overline{g}_{L\Lb}=\overline{g}_{\Lb L}=-2, \quad \overline{g}_{AB} = \delta_{AB}.
		\]
		As concerns the inverse metric, we have
		\[
		\overline{g}^{LL} = \overline{g}^{\Lb\Lb} = \overline{g}^{LA} = \overline{g}^{\Lb A} = 0, \quad \overline{g}^{L\Lb}=\overline{g}^{\Lb L}=-1/2, \quad \overline{g}^{AB} = \delta^{AB}.
		\] 
	\end{itemize}
	
	\medskip
	\noindent \textsc{Admissible Vector Fields:}
	\begin{itemize}
		\item[•] $\{\Gamma\} = \{\Omega_{\bmi\bmj}, \Omega_{0\bmi}\}$ is the family of Klainerman vector fields, where $\Omega_{\bmi\bmj} = x_\bmi\partial_\bmj - x_\bmj \partial_\bmi, $ and $ \Omega_{0\bmi} = t\partial_\bmi + x_\bmi\partial_t$;
		\item[•] $\{Z\}= \{\partial_\mu, \Omega_{\bmi\bmj}, \Omega_{0\bmj}, \partial_y\}$ is the family of admissible vector fields;
		\item[•] For any multi-index $K= (I, J)$, we set $Z^K = \partial^I\Gamma^J$. If $|I|+|J|=n$ and $|J|=k$, we say that $K$ is a multi-index of type $(n,k)$.
	\end{itemize}

	\medskip
	\noindent
	\textsc{Commutators with the null frame:}
	\begin{itemize}
		\item[•] $ [\Omega_{0\bmj}, \partial_t+\partial_r ] = -\gd_\bmj - \frac{x_\bmj}{r}(\partial_t+\partial_r)$, \hspace{5pt} $ [\Omega_{0\bmj}, \gd_\bmk] = \big(-\delta_{\bmj\bmk} + \frac{x_\bmj x_\bmk}{r^2}\big)\big[(\partial_t+\partial_r) + \frac{1}{r}\Omega_{0r}\big] $\\[-5pt]
		\item[•] $[\Omega_{\bmi\bmj}, \partial_t+\partial_r]= 0$ ,\hspace{5pt} $ [\Omega_{\bmi\bmj}, \gd_\bmk] = -\delta_{\bmi\bmk}\gd_\bmj+ \delta_{\bmj\bmk} \gd_\bmi$ \\[-5pt]
		\item[•] $[\partial_k , \partial_t+\partial_r] = \frac{\delta_{jk}}{r}\partial_j - \frac{x_jx_k}{r^3}\partial_j$ \\[-5pt]
		\item[•] $[\Omega_{0\bmj}, \partial_y] =[\Omega_{\bmi\bmj}, \partial_y] = [\partial_\alpha, \partial_y] = 0$
	\end{itemize}
	\medskip
	\noindent \textsc{Commutators with the hyperbolic derivatives:}
	\begin{itemize}
		\item[•] $[\Omega_{0\bmj}, \partial_t] = - \partial_\bmj$, $[\Omega_{0\bmj}, \pb_{\bma}] = -\frac{x_{\bma}}{t}\pb_\bmj$, $[\Omega_{0\bmj}, \pb_4] = 0$\\[-5pt]
		\item[•] $[\Omega_{\bmi\bmj}, \partial_t] = [\Omega_{\bmi\bmj}, \pb_4] =0$,  $[\Omega_{\bmi\bmj}, \pb_{\bma}] = \delta_{\bma \bmj}\pb_\bmi - \delta_{\bmi\bma}\pb_\bmj$
	\end{itemize}
	
	\medskip
	\noindent \textsc{Exterior Region:}
	
	\begin{itemize}
		\item[•] $\widetilde{\hcal} = \{(t,x) : (t-1)^2 - r^2 = 1\}\times \S^1$ denotes the hyperboloid that separates the interior and exterior region. It asymptotically approaches the cone $\{t=r+1\}\times \S^1$;
		\item[•] $ \dext :=\{(t,x) : 2\le t \le 1+\sqrt{1+r^2} \}\times\S^1$ denotes the exterior region;
		\item[•] $\dext_T$ denotes the portion of exterior region in the time slab $[2, T)$;
		\item[•] $
		\Sext_t := \{x\in \m R^3 : |x|\ge \sqrt{(t-1)^2-1}\} \times \m S^1
		$ denotes a constant time slice in the exterior region;
		%	\item[•] $\hex_s = \hcal_s\cap \dext = \{(t,x, y)\in \q H_s : t\ge 2\text{ and }  |x|\ge s\sqrt{(s^2/4) - 1}\}$ denotes the branch of hyperboloid $\hcal_s$ contained in the exterior region;
		%	\item[•] $\hex_{[2, s]} :=\{(t,x, y)\in \dext: t^2 - |x|^2\le s^2\}$ is the portion of the exterior region below $\hcal_s$ when $s\ge 2$.
	\end{itemize}
	
	\medskip
	\noindent \textsc{Interior Region:}
	
	\begin{itemize}
		\item[•] $\dint :=\{(t,x) : t\ge 1 + \sqrt{1 + r^2} \} \times \S^1$ denotes the interior region;
		\item[•]  $\hin_s := \{t^2 - r^2 = s^2 \text{ and } t\ge r+1\}\times \S^1$ denotes a truncated hyperboloid in $\R^{1+3}\times \S^1$;
		\item[•] $S_{s,r} := \hin_s \cap \{|x| = r\}$ is the two-sphere of radius $r$ on the hyperboloid $\hin_s$;
		\item[•] $\hin_{[s_0,s]} :=\{(t,x, y) \in \dint: s_0^2\le t^2-|x|^2\le s^2\}$ denotes the hyperbolic slab in the interior region between $\hin_{s_0}$ and $\hin_s$ when $s>2$;
		\item[•] $\hin_{[s_0,\infty)} :=\{(t,x, y) \in \dint: 2\le t^2-|x|^2\}$ is the unbounded portion of interior region above some hyperboloid $\hin_{s_0}$.
	\end{itemize}
	
	\subsection{From the null frame to hyperbolic derivatives}
	
	Below are some useful formulas relating the null framework $\q U$ to the hyperbolic derivatives $\pb_\bma$. We recall that $s = \sqrt{t^2-r^2}$. We have that
	\begin{equation}\label{null_hyp1}
		L = \Big(1-\frac{r}{t}\Big)\partial_t + \frac{x^\bmj}{r}\pb_\bmj, \quad \Lb =  \Big(1+\frac{r}{t}\Big)\partial_t - \frac{x^\bmj}{r}\pb_\bmj, \quad \gd_\bmj = \pb_\bmj -\frac{x_\bmj x^\bmi}{r^2}\pb_\bmi
	\end{equation}
	and
	\begin{equation} \label{null-hyp2}
		U V =c^{00}_{UV}\partial^2_t+ c^{a0}_{UV}\pb_a\partial_t + c^{0b}_{UV}\partial_t\pb_b + c^{ab}_{UV}\pb_a\pb_b + d^0_{UV}\partial_t + d^c_{UV}\pb_c, \qquad U, V\in \q U
	\end{equation}
	where 
	\begin{equation} \label{c_coefficients}
		\begin{gathered}
			c^{00}_{LL} = (1-r/t)^2,\hspace{5pt} c^{00}_{L\Lb}=(1-r^2/t^2), \hspace{5pt} c^{00}_{\Lb\Lb}=(1+r/t)^2, \hspace{5pt}
			c^{00}_{AU}=0 \text{ for } A = \{S^1, S^2, \partial_y\}, \\
			c^{04}_{L\partial_y} = 1-r/t, \hspace{5pt} c^{04}_{\Lb \partial_y}=(1+r/t), \hspace{5pt} c^{04}_{UV}=0 \text{ otherwise} \\
			c^{44}_{\partial_y\partial_y}=1, \hspace{5pt} c^{44}_{UV}=0 \text{ otherwise}
		\end{gathered}
	\end{equation}
	and 
	\begin{equation}\label{null-hyp-coeff}
		\begin{gathered}
			|\partial^I \Gamma^J c^{\alpha\beta}_{\Lb\Lb}|\lesssim_{IJ} (1+t+r)^{-|I|}, \quad |\partial^I \Gamma^J d^\gamma_{UV}|\lesssim_{IJ} (1+t+r)^{-1-|I|}, \qquad |I|\ge 0 \\
			|c^{\alpha\beta}_{LU}|\lesssim \frac{t^2-r^2}{t^2}, \quad |\partial^I \Gamma^J c^{\alpha\beta}_{TU}|\lesssim_{IJ} (1+t+r)^{-|I|},\qquad  |I|\ge 1.
		\end{gathered}
	\end{equation}
	For any tensor $\pi$ we have the following relations		
	\begin{equation}\label{H1-UV-cUV-exp}
		4(t/s)^2\pi^{UV} c^{00}_{UV} = \pi_{\Lb\Lb} \frac{s^2}{(t+r)^2} + \pi_{LL} \frac{(t+r)^2}{s^2} + \pi_{L\Lb} 
	\end{equation}		
	and
	\begin{equation} \label{eq:curved_part_semihyp}
		\begin{aligned}	
			\pi^{\mu\nu}\partial_\mu\partial_\nu  &= \pi^{UV} c^{\mu\nu}_{UV}\pb_\mu\pb_\nu  + \pi^{UV} d_{UV}^{\mu}\pb_\mu \\
			& = \pi^{UV}\big[ c_{UV}^{00}\partial_t^2 +  c_{UV}^{\bma \beta}\pb_\bma\pb_\beta+   c_{UV}^{\alpha \bmb}\pb_\alpha\pb_\bmb + c_{UV}^{4\beta} \partial_y \pb_\beta + d_{UV}^{\mu}\pb_\mu\big] \\	\pi^{\bmmu\bmnu}\partial_\bmmu\partial_\bmnu & = \pi^{UV}\big[ c_{UV}^{00}\partial_t^2 +  c_{UV}^{\bma \bmbeta}\pb_\bma\pb_\bmbeta+   c_{UV}^{\bmalph \bmb}\pb_\bmalph\pb_\bmb + d_{UV}^{\bmmu}\pb_\bmmu\big] .
		\end{aligned}
	\end{equation}
	For any smooth function $u = u(t,x)$, we have the following inequalities
	\begin{equation}\label{der_null_semihyp}
		\begin{gathered}
			|\gd u|\lesssim |\pb_x u|, \qquad |\op u|\lesssim \Big(\frac{s}{t}\Big)^2|\partial u | + |\pb_x u| \\
			|\op \partial u|\lesssim \Big(\frac{s}{t}\Big)^2|\partial^2 u | + |\pb_x \partial u|\lesssim \Big(\frac{s}{t}\Big)^2|\partial^2 u | + \frac{1}{t}|\partial Z^{\le 1} u| \\
			|\gd \gd u|\lesssim \frac{1}{t}|\pb_x Z^{\le 1} u|, \qquad |\op \op u|\lesssim \Big(\frac{s}{t}\Big)^4|\partial^2 u | + \Big(\frac{s}{t}\Big)^2\frac{1}{t}|\partial Z^{\le 1}u| + \frac{1}{t}|\pb_x Z^{\le 1}u|.
		\end{gathered}
	\end{equation}

	\subsection{Outline of the paper}
	The rest of the paper is organized in three main sections. Section \ref{sec:wave_condition} introduces some properties the metric coefficients inherit from the wave condition and which will be used throughout. Section \ref{sec:exterior} is devoted to perform the bootstrap argument in the exterior region and hence to prove the global existence of the solution to \eqref{h_equations} there. In section \ref{sec:interior} we perform the bootstrap argument in the interior region and conclude the proof of the main theorem. Two appendix sections follow: in section \ref{sec:energy_appendix} we state and prove the exterior and interior energy inequalities, while section \ref{sec:weighted_Sobolev_Hardy} contains a list of weighted Sobolev and Hardy inequalities.

	\section{The wave condition}\label{sec:wave_condition}
	
	The metric solution $g$ to \eqref{EE} satisfies, when written in harmonic coordinates $\{x^\mu\}_\mu$, the \emph{wave coordinate condition}
	\[
	g^{\mu\nu}{\Gamma^\lambda_{\mu\nu}} =0, \quad \lambda = \overline{0,4}
	\]
	where $\Gamma^\lambda_{\mu\nu}$ are the Christoffel symbols of $g$ in the coordinates $\{x^\mu\}_\mu$. The above equations are equivalent to each of the following
	\begin{equation} \label{wave_condition1}
		\partial_\mu\big(g^{\mu\nu}\sqrt{|\det g|}\big)=0, \quad g^{\alpha\beta}\partial_\alpha g_{\beta\nu} = \frac12 g^{\alpha\beta}\partial_\nu g_{\alpha\beta}, \quad \partial_\alpha g^{\alpha\nu} = \frac12 g_{\alpha\beta} g^{\nu\mu}\partial_\mu g^{\alpha\beta}, \quad \nu=\overline{0,4}.
	\end{equation}
	These relations are particularly useful when written with respect to the null framework, as they allow us to recover additional information on metric coefficients $H_{LT}$ (and hence on $h_{LT}$) for any $T\in \q T$, and to show that their derivatives have a special behavior compared to those of general coefficients $H^{\mu\nu}$. This is the content of the following Lemmas, which are presented in a slightly different form than the ones in \cite{LR10}.
	
	\begin{lemma}\label{lem:wave_cond_nullframe}
		Let $g$ be a Lorentzian metric satisfying the wave coordinate condition relative to a coordinate system $\{x^\mu\}_{\mu=0}^4$. Let $K = (I,J)$ be any multi-index with positive length and assume that the perturbation tensor $H^{\mu\nu} = g^{\mu\nu} - \bar{g}^{\mu\nu}$ satisfies the following
		\[
		| Z^{K'}H|\lesssim C, \quad \forall\,  |K'|\le \floor{|K|/2}.
		\]
		Then
		\begin{equation}
			|\partial H|_{\q L \q T}\lesssim |\op H| + |H||\partial H| \lesssim \Big(\frac{s}{t}\Big)^2 |\partial H| + |\pb_{xy} H| + |H||\partial H| \label{ineq_Hlt} 
		\end{equation}
		and in any region where $r\gtrsim t\gtrsim 1$
		\begin{equation} \label{ineq_HLT_higher}
			\begin{aligned}
				&| Z^K\partial H|_{\q L \q T} + |\partial Z^KH|_{\q L \q T}   \lesssim \sum_{|K'|\le |K|} (|\op  Z^{K'}H| + r^{-1}| Z^{K'}H|)  + \hspace{-10pt} \sum_{|K_1|+|K_2|\le |K|}\hspace{-10pt} | Z^{K_1}H| |\partial Z^{K_2}H| \\
				& \lesssim \sum_{|K'|\le |K|} \Big(\frac{s}{t}\Big)^2 |\partial  Z^{K'}H| + |\pb Z^{K'} H| + r^{-1}| Z^{K'}H|   + \sum_{|K_1|+|K_2|\le |K|}\hspace{-10pt} | Z^{K_1}H| |\partial Z^{K_2}H|
			\end{aligned} 
		\end{equation}
		Similar estimates hold for the metric tensor $h_{\mu\nu} = g_{\mu\nu} - \bar{g}_{\mu\nu}$.
	\end{lemma}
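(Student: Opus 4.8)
The plan is to extract the needed estimates directly from the algebraic identity for the wave coordinate condition, the third form in \eqref{wave_condition1}, written relative to the null frame. Starting from $\partial_\alpha g^{\alpha\nu} = \tfrac12 g_{\alpha\beta}g^{\nu\mu}\partial_\mu g^{\alpha\beta}$, one subtracts the corresponding (vanishing) identity for $\bar g$ to obtain an identity for $H^{\alpha\nu}$ of the schematic form $\partial_\alpha H^{\alpha\nu} = \tfrac12 \bar g_{\alpha\beta}\bar g^{\nu\mu}\partial_\mu H^{\alpha\beta} + \mathcal{O}(H\cdot\partial H)$, where the error is quadratic because $g_{\alpha\beta} = \bar g_{\alpha\beta} + h_{\alpha\beta}$, $g^{\mu\nu} = \bar g^{\mu\nu} + H^{\mu\nu}$ and $H = -h + \mathcal{O}(h^2)$. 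The key point is then the standard Lindblad--Rodnianski computation: when one contracts the left-hand side $\partial_\alpha H^{\alpha\nu}$ with frame vectors, the divergence $\partial_\alpha$ in the $L,\Lb$ plane produces, after using that $\bar g(\Lb, A) = 0$ for $A\in\{S^1,S^2,\partial_y\}$ and the explicit form of $\bar g$ in the null frame, a term $\Lb(H_{L\,\cdot}) = \partial_\ub H_{L\,\cdot}$ plus tangential derivatives $\op H$ plus lower-order $r^{-1}H$ terms coming from $[\partial_\alpha, L]$ and the derivatives of the frame coefficients $L^\alpha, \Lb^\alpha, A^\alpha$, which decay like $r^{-1}$. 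Rearranging, one isolates a \emph{transversal} derivative of the \emph{good} components $H_{LT}$, $T\in\q T$, in terms of $\op H$, $r^{-1}H$, and $H\cdot\partial H$. Since $\partial H_{LT}$ differs from $\op H_{LT}$ only by such a transversal derivative, this gives \eqref{ineq_Hlt}; the second inequality there is just \eqref{der_null_semihyp}, namely $|\op u|\lesssim (s/t)^2|\partial u| + |\pb_{xy}u|$, applied componentwise, together with absorbing the $r^{-1}H$ term (which in the region where the identity is applied is dominated by $\op H$ or by $(s/t)^2\partial H + \pb H$, or is simply a lower-order contribution; more carefully, for \eqref{ineq_Hlt} one keeps the stated right-hand side and the $r^{-1}H$ term appears only in the higher-order version).

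For \eqref{ineq_HLT_higher} I would commute the identity with $Z^K = \partial^I\Gamma^J$. The commutators of $\Gamma\in\{\Omega_{\bmi\bmj},\Omega_{0\bmi}\}$ with $L$ and with the angular derivatives $\gd$ are given explicitly in the Notation section: $[\Omega_{0\bmj}, L] = -\gd_\bmj - \tfrac{x_\bmj}{r}L$, $[\Omega_{\bmi\bmj}, L]=0$, and $[\Omega, \gd]$ is again an angular derivative (with no bad coefficients), while $[\partial_k, L]$ produces only $r^{-1}\partial$ terms. Hence applying $Z^K$ to the frame-decomposed identity and commuting all vector fields through $L, \Lb, A$ generates, besides $Z^{K'}$ of the original right-hand side, only terms of the form $r^{-1} Z^{K'}H$ and $\op Z^{K'}H$ for $|K'|\le|K|$ (the $r^{-1}$ weight is crucial and is where the restriction $r\gtrsim t\gtrsim 1$ enters, since it lets us freely trade $r^{-1}$ and $t^{-1}$ and also guarantees the frame vectors $S^1, S^2$ and their derivatives are well-behaved away from the axis). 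The quadratic term $Z^K(H\cdot\partial H)$ is expanded by Leibniz into $\sum_{|K_1|+|K_2|\le|K|}|Z^{K_1}H||\partial Z^{K_2}H|$; for the factors carrying few derivatives one uses the hypothesis $|Z^{K'}H|\lesssim C$ for $|K'|\le\lfloor|K|/2\rfloor$ to keep them bounded, so that this term is genuinely of the stated form (no loss of a derivative). One also needs $|Z^K\partial H_{LT}| \lesssim |\partial Z^K H_{LT}| + (\text{lower order})$ and similar, which follows again from the commutator formulas. Finally the second line of \eqref{ineq_HLT_higher} is obtained from the first exactly as before, by invoking $|\op u|\lesssim (s/t)^2|\partial u| + |\pb u|$ from \eqref{der_null_semihyp} on each $Z^{K'}H$.

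The main obstacle is bookkeeping rather than a single hard idea: one must check that commuting $Z^K$ through the null frame \emph{never} produces a transversal derivative of a \emph{bad} component (such as $\partial_\ub H_{\Lb\Lb}$ or a derivative hitting $H_{\Lb T}$) with an $O(1)$ coefficient — this is exactly the structural miracle of the harmonic gauge combined with $\bar g(\Lb, A)=0$, and it must survive differentiation. Concretely one has to verify that the only ``dangerous'' commutator, $[\Omega_{0\bmj}, L] = -\gd_\bmj - \tfrac{x_\bmj}{r}L$, feeds back only angular derivatives and multiples of $L$ itself (never $\Lb$), so the good structure is preserved inductively in $|K|$; and one has to track that every frame-coefficient derivative carries the $r^{-1}$ gain. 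The claim that ``similar estimates hold for $h_{\mu\nu}$'' then follows from the corresponding form $g^{\alpha\beta}\partial_\alpha g_{\beta\nu} = \tfrac12 g^{\alpha\beta}\partial_\nu g_{\alpha\beta}$ of the wave condition in \eqref{wave_condition1}, treated identically, together with $h = -H + \mathcal{O}(H^2)$ to pass between the two.
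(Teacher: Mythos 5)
Your proposal is correct and follows essentially the same route as the paper's proof. The paper starts from the divergence form $\partial_\mu\big(g^{\mu\nu}\sqrt{|\det g|}\big)=0$ of the wave condition to obtain $\partial_\mu\big(H^{\mu\nu}-\tfrac12\bar g^{\mu\nu}\operatorname{tr}H+\mathcal O^{\mu\nu}(H^2)\big)=0$, writes the divergence in the null frame, contracts with $T\in\mathcal T$ using that $\bar g^{LT}=0$ so $\tilde H_{LT}=H_{LT}$, and then commutes with $Z^K$ via the displayed commutator formulas and closes by induction on $|K|$; you use the algebraically equivalent form $\partial_\alpha g^{\alpha\nu}=\tfrac12 g_{\alpha\beta}g^{\nu\mu}\partial_\mu g^{\alpha\beta}$, arrive at the same divergence identity, and proceed in the same way, so the two arguments coincide in substance.
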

	\begin{proof}
		We write $g^{\mu\nu}$ in terms of the perturbation metric $H^{\mu\nu}$. From the following equality
		\[
		g^{\mu\nu}\sqrt{|\text{det}g|} = (\bar{g}^{\mu\nu} + H^{\mu\nu})\Big(1-\frac{1}{2}\tr H + \mathcal{O}(H^2)\Big)
		\]
		and the wave condition \eqref{wave_condition1} we obtain that
		\begin{equation}\label{divergence_wave_cond}
			\partial_\mu\Big(H^{\mu\nu} - \frac{1}{2}\bar{g}^{\mu\nu}\tr H +\mathcal{O}^{\mu\nu}(H^2)\Big)=0, \quad \text{where } \mathcal{O}^{\mu\nu}(H^2)=\mathcal{O}(|H|^2).
		\end{equation}
		The divergence of a vector field can be expressed relative to the null frame as follows
		\begin{equation} \label{divergence_nullframe}
			\partial_\mu F^\mu = 
			L_\mu  \partial_\ub F^\mu - \L_\mu \partial_u F^\mu + A_\mu \partial_A F^\mu, \quad A\in \{S_1, S_2, \partial_y\}
		\end{equation}
		so setting $\tilde{H}^{\mu\nu}:=H^{\mu\nu} -\frac{1}{2}\bar{g}^{\mu\nu}\tr H$ and contracting \eqref{divergence_wave_cond} with any $T\in \q T$ we deduce that
		\begin{equation}\label{eq:partial_HLL}
			\partial_\ub H_{LT} =  \partial_\ub \tilde{H}_{LT} =  \partial_u\tilde{H}_{\L T}-\partial_A \tilde{H}_{AT}+ {\q O}(H\cdot\partial H).	
		\end{equation}
		The first of the above equalities follows from the fact that $\bar{g}^{LL}=g^{LA}= 0$. Relation \eqref{eq:partial_HLL} and the first two inequalities in \eqref{der_null_semihyp} imply immediately \eqref{ineq_Hlt}.
		
		We now recall the commutators between any admissible vector field $Z$ and the null frame, which can be summarized in the following formula
		\[
		[ Z, \op_\alpha] = \sum_{\bmbeta=0}^3 c_{ Z \alpha}^{\bmbeta}\op_{\bmbeta}+ \sum_{\bmi=1}^3d_{ Z \alpha}^{\bmi} \frac{\partial_{\bmi}}{r} +  e_{ Z \alpha} \frac{\Omega_{0r}}{r}, \qquad c_{ Z \alpha}^{\bmbeta}, d_{ Z \alpha}^{\bmi}, e_{ Z \alpha} = \q O \big(\frac{x}{r}\big)
		\]
		where $c_{ Z \alpha}^{\bmbeta}, d_{ Z \alpha}^{\bmi}, e_{ Z \alpha}$ are smooth homogeneous functions of $x$ such that $c_{ \partial \alpha}^{\bmbeta}=e_{\partial \alpha}=0$, $d^{\bmi}_{\Gamma \alpha} = 0$ and
		\[
		|\partial^I_x c_{ Z \alpha}^{\bmbeta}| + |\partial^I d_{ Z \alpha}^{\bmi}| + |\partial^I e_{Z\alpha}|\lesssim r^{-|I|}, \qquad |I|\ge 0.	
		\]  
		Using an induction argument on $|K|$, one can show that for any sufficiently smooth function $w$ the following inequality holds true whenever $r\gtrsim t$
		\begin{equation}\label{commut_tangent_ext}
			|[ Z^K, \op] w|\lesssim \sum_{|K'|< |K|}( |\op  Z^{K'}w|+ r^{-1}|\partial Z^{K'}w|) + \sum_{|K'|\le |K|}r^{-1} | Z^{K'}w|.
		\end{equation}
		As concerns the commutators with the transverse vector field, we simply have
		\begin{equation}\label{commut_transvers_ext}
			|[ Z^K, \partial_t-\partial_r]w|\lesssim \sum_{|K'|<|K|}|\partial Z^{K'}w|\lesssim \sum_{|K'|<|K|}| \partial_\ub Z^{K'}w| + |\op  Z^{K'}w|.
		\end{equation}
		In order to obtain \eqref{ineq_HLT_higher}, we apply $ Z^K$ vector fields to both sides of equality \eqref{eq:partial_HLL}. Using \eqref{commut_tangent_ext} we find  that
		\[
		\begin{aligned}
			| Z^K\partial_\ub H|_{\q L \q T} & \lesssim \sum_{|K'|\le |K|} (|\op  Z^{K'}H| + r^{-1}| Z^{K'}H|) + \sum_{|K_1|+|K_2|\le |K|} | Z^{K_1}H| |\partial Z^{K_2}H|
		\end{aligned}
		\]
		which, together with \eqref{commut_transvers_ext}, yields
		\[
		\begin{aligned}
			|\partial_\ub  Z^K H|_{\q L \q T}& \lesssim  \sum_{|K'|\le |K|} (|\op  Z^{K'}H| + r^{-1}| Z^{K'}H|)+ \sum_{|K_1|+|K_2|\le |K|} | Z^{K_1}H| |\partial Z^{K_2}H|\\
			&   + \sum_{|K'|<|K|}| \partial_\ub Z^{K'}H|_{\q L \q T} .
		\end{aligned}
		\]
		The conclusion of the proof of the first inequality in \eqref{ineq_HLT_higher} then follows by induction on $|K|$. The latter follows using also \eqref{der_null_semihyp}. Finally, inequalities \eqref{ineq_Hlt} and \eqref{ineq_HLT_higher} for $h$ simply follow from the equality $H^{\mu\nu} = -h^{\mu\nu} + \mathcal{O}(h^2)$.
	\end{proof}
	
	Inequalities \eqref{ineq_Hlt} and \eqref{ineq_HLT_higher} hold true also for the tensor $H^{1, \mu\nu}$ introduced in \eqref{dec_H}.
	
	\begin{lemma} \label{lem:wave_cond_h1}
		Under the same assumptions of the previous lemma, we have that
		\begin{equation}\label{ineq_H1lt_higher}
			\begin{aligned}
				|Z^K\partial H^1|_{\q L \q T} + |\partial Z^K H^1|_{\q L \q T}  \lesssim \sum_{|K'|\le |K|} (|\op  Z^{K'}H^1| + r^{-1}| Z^{K'}H^1|)  \\
				+  \sum_{|K_1|+|K_2|\le |K|}\hspace{-10pt} | Z^{K_1}H^1| |\partial Z^{K_2}H^1| + \frac{M\chi_0\big(t/2\le r\le 3t/4\big)}{(1+t+r)^{2}} \\
			\end{aligned}
		\end{equation}
		and
		\begin{equation}\label{wave_cond_interior}
			\begin{aligned}
				|Z^K\partial H^1|_{\q L \q T} + |\partial Z^K H^1|_{\q L \q T}  \lesssim \sum_{|K'|\le |K|}\Big(\frac{s}{t}\Big)^2 |\partial Z^{K'}H^1| +  |\pb  Z^{K'}H^1| + r^{-1}| Z^{K'}H^1| \\
				+  \sum_{|K_1|+|K_2|\le |K|}\hspace{-10pt} | Z^{K_1}H^1| |\partial Z^{K_2}H^1| + \frac{M\chi_0\big(t/2\le r\le 3t/4\big)}{(1+t+r)^{2}} \\
			\end{aligned}
		\end{equation}
		where $\chi_0\big(t/2\le r\le 3t/4\big)$ is a cut-off function supported for $t/2\le r\le 3t/4$. Similar estimates hold true for $h^1$.
	\end{lemma}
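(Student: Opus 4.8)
The idea is to reduce everything to Lemma~\ref{lem:wave_cond_nullframe} by writing $H^{1,\mu\nu}=H^{\mu\nu}-H^{0,\mu\nu}$, with $H^{0,\mu\nu}$ the explicit ``Schwarzschild part'' fixed in \eqref{dec_H}. Since $H^0$ is bounded for $r\gtrsim 1$, the hypothesis $|Z^{K'}H|\lesssim C$ transfers to $H^1$, so Lemma~\ref{lem:wave_cond_nullframe} applies to $H$. I would first insert $H=H^0+H^1$ into the right-hand side of \eqref{ineq_HLT_higher}: the terms in which every factor is $H^1$ reproduce the first three terms of \eqref{ineq_H1lt_higher}, and what remains is the collection of $H^0$-contributions (together with $|Z^K\partial H^0|_{\q L\q T}+|\partial Z^KH^0|_{\q L\q T}$ produced by the subtraction on the left), which has to be absorbed into the stated localized error. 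Estimate \eqref{wave_cond_interior} then follows from \eqref{ineq_H1lt_higher} in exactly the way the second line of \eqref{ineq_HLT_higher} follows from the first, i.e.\ by the bounds $|\op u|\lesssim (s/t)^2|\partial u|+|\pb_x u|$ and their higher-order versions collected in \eqref{der_null_semihyp}.

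The substance is thus the explicit analysis of $H^0$ in the null frame, which is short because $H^{0,\mu\nu}$ is supported in its spatial indices and proportional to $\delta^{\bmmu\bmnu}$: as $L$ has radial spatial part while $S^1,S^2,\partial_y$ are $\bar g$-orthogonal to $x$, the only nonzero $\q L\q T$-component is $H^0_{LL}=-\chi(r/t)\chi(r)M/r$. Differentiating, every term in $Z^K\partial H^0_{LL}$ and $\partial Z^KH^0_{LL}$ falls into one of three types depending on whether a derivative lands on $\chi(r/t)$, on $\chi(r)$, or on the factor $M/r$. Derivatives of $\chi(r/t)$ are supported in the transition slab $\{t/2\le r\le 3t/4\}$ and, costing a factor $t^{-1}$, contribute $\lesssim M(1+t+r)^{-2}\chi_0(t/2\le r\le 3t/4)$, which is exactly the error term in the statement. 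Derivatives of $\chi(r)$ live in the compact set $\{1/2\le r\le 3/4\}$, which is disjoint from $\{r\gtrsim t\gtrsim 1\}$ and harmless otherwise. The remaining terms carry at least one genuine derivative of $M/r$, hence decay like $Mr^{-2}\lesssim M(1+t+r)^{-2}$ on the support $\{r\gtrsim t\}$ of $\chi(r/t)$; these are short-range terms of exactly the same nature as the explicit inhomogeneity $F^{0,K}$ already present in the equations. The quadratic cross-terms $|Z^{K_1}H^0||\partial Z^{K_2}H^1|$, $|Z^{K_1}H^1||\partial Z^{K_2}H^0|$ and $|Z^{K_1}H^0||\partial Z^{K_2}H^0|$ are controlled the same way, using $|Z^{K'}H^0|\lesssim Mr^{-1}$, $|\partial Z^{K'}H^0|\lesssim Mr^{-2}$ and the smallness of $M$. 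A perhaps cleaner route, which I expect matches what is actually done, is to rerun the proof of Lemma~\ref{lem:wave_cond_nullframe} directly for $H^1$: the divergence form \eqref{divergence_wave_cond} of the wave condition gives $\partial_\mu\tilde H^{1,\mu\nu}=-\partial_\mu\tilde H^{0,\mu\nu}+\q O(H\cdot\partial H)$ with $\tilde H^{0,\mu\nu}:=H^{0,\mu\nu}-\frac12\bar g^{\mu\nu}\tr H^0$ explicit, so \eqref{eq:partial_HLL} picks up an explicit source of the size just described, after which the induction on $|K|$ using \eqref{commut_tangent_ext} and \eqref{commut_transvers_ext} proceeds verbatim.

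Finally, the estimates for $h^1$ follow from those for $H^1$ through the algebraic relation $H^{\mu\nu}=-h^{\mu\nu}+\q O(h^2)$ of \eqref{H_def}: one checks that $H^{0,\mu\nu}=-\bar g^{\mu\rho}\bar g^{\nu\sigma}h^0_{\rho\sigma}$ exactly, hence $H^{1,\mu\nu}=-h^{1,\mu\nu}+\q O(h^2)$, and the quadratic remainder, split as $(h^0)^2+h^0h^1+(h^1)^2$, contributes only an explicit $\q O(M^2r^{-2})$ piece (absorbed into the error) plus terms that are quadratic in $h^1$ or of size $M$ times $h^1$, matching the right-hand side. I expect the main obstacle to be purely organizational: keeping careful track of which $H^0$-derivatives land where, and in particular checking that on $\{r\ge 3t/4\}$, where $\chi(r/t)\equiv 1$ and the localized cutoff vanishes, the surviving $Mr^{-2}$ contributions really are acceptable short-range terms and do not have to be hidden inside the genuinely $H^1$-dependent part of the right-hand side.
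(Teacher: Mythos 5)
Your second route — rerunning the proof of Lemma~\ref{lem:wave_cond_nullframe} for $H^1$ with the source $\partial_\mu\tilde H^{0,\mu\nu}$ moved to the right-hand side of \eqref{divergence_wave_cond} — is the paper's approach. The gap is in your analysis of that source. You catalogue three classes of contributions (derivatives of $\chi(r/t)$, localized in $\{t/2\le r\le 3t/4\}$; derivatives of $\chi(r)$, compactly supported; derivatives of $M/r$, giving an $Mr^{-2}$ tail on all of $\{r\gtrsim t\}$) and flag the last class as the main obstacle, since on $\{r\ge 3t/4\}$ the cutoff $\chi_0$ in the statement vanishes and cannot absorb it. That is precisely where the proposal stalls: the right-hand side of \eqref{ineq_H1lt_higher} has no slot for an unlocalized $Mr^{-2}$ term, and you give no reason it could be discarded.

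The resolution is a cancellation your accounting misses. Writing out $\tilde H^{0,\mu\nu}=H^{0,\mu\nu}-\tfrac12\bar g^{\mu\nu}\text{tr}(H^0)$ with $\text{tr}(H^0)=-2\chi(r/t)\chi(r)M/r$, one finds $\tilde H^{0,\bmi\bmj}=0$ for all spatial $\bmi,\bmj$, $\tilde H^{0,0\bmj}=\tilde H^{0,\mu 4}=0$ except $\tilde H^{0,44}$, and $\tilde H^{0,00}=-2\chi(r/t)\chi(r)M/r$. Hence the only term surviving in the divergence is $\partial_t\tilde H^{0,00}$, and since $\partial_t$ hits only the factor $\chi(r/t)$ one gets exactly $\partial_\mu\tilde H^{0,\mu\nu}=2\chi'(r/t)\chi(r)\tfrac{M}{t^2}\delta^{\nu 0}$, i.e.\ \eqref{div_H0}: the source is identically zero outside $\{t/2\le r\le 3t/4\}$, no $\partial_r(M/r)$ or $\partial_r\chi(r)$ piece ever appears. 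This is a structural cancellation, not an organizational matter, and without noticing it neither of your routes closes. In particular your first route, subtracting $|Z^K\partial H^0|_{\q L\q T}$ after applying \eqref{ineq_HLT_higher} to $H=H^0+H^1$, also does not see the cancellation: the crude term $r^{-1}|Z^{K'}H^0|\sim Mr^{-2}$ lives on all of $\{r\gtrsim t\}$ and is not localized. (Two smaller slips: $H^{0,\mu\nu}$ is nonzero for $\mu=\nu=0$, not only for spatial indices, and $H^0_{LL}=-2\chi(r/t)\chi(r)M/r$, not $-\chi(r/t)\chi(r)M/r$.) Your deduction of \eqref{wave_cond_interior} from \eqref{ineq_H1lt_higher} via \eqref{der_null_semihyp} and your transfer to $h^1$ via $H^{0,\mu\nu}=-h^{0,\mu\nu}$ and \eqref{H_def} are both fine.
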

	\begin{proof}
		We set $\tilde{H}^{0,\mu\nu}:= H^{0,\mu\nu}-\frac{1}{2}\bar{g}^{\mu\nu}\text{tr}(H^0)$ and derive from the definition of $H^0$ that
		\begin{equation}\label{div_H0}
			\partial_\mu \tilde{H}^{0,\mu\nu} = 2\chi'\Big(\frac{r}{t}\Big)\chi(r)\frac{M}{t^2}\delta^{\nu 0}. 
		\end{equation}
		We inject the above formula into \eqref{divergence_wave_cond} and obtain that
		\[
		Z^K \partial_\mu (H^{1,\mu\nu}) = -Z^K \partial_\mu \mathcal{O}^{\mu\nu}(H^2) -  Z^K \Big(2\chi'\Big(\frac{r}{t}\Big)\chi(r)\frac{M}{t^2}\delta^{\nu 0}\Big).
		\] 
		Then the result of the statement follows using the same argument as in previous lemma's proof. Furthermore, from 
		\eqref{dec_H} a similar inequality can be proved for $h^1_{\mu\nu}$.
	\end{proof}
	
	\section{The Exterior Region} \label{sec:exterior}
	
	The goal of this section is to prove the existence in the exterior region $\dext$ of the solution $h^1_{\alpha\beta}$ of \eqref{h_equations} with data satisfying the hypothesis of theorem \ref{thm:Main}. The proof is based on a bootstrap argument in which the a-priori assumptions are bounds on the higher order weighted energies of $h^1_{\alpha\beta}$, introduced below.
	
	For any fixed $\kappa>0$, we define the exterior weighted energy functional of $\hab$ as 
	\begin{multline*}
		\enext(t, \hab) = \iint_{\{|x|\ge t-1\}\times \S^1} (2+|x|-t)^{1+2\kappa} |\nabla_{txy} \hab(t,x,y)|^2 dxdy\ \\
		+ \int_2^t\iint_{\{|x|\ge \tau-1\}\times \S^1} (2+|x|-t)^{2\kappa} |\overline{\nabla} \hab(\tau, x,y)|^2 dxdyd\tau
	\end{multline*}
	and denote $\enext(t, h^1) = \sum_{\alpha,\beta}\enext(t, \hab)$. 
	We fix $N\in \N$ with $N\ge 7$ and assume the existence of a positive constant $C_0$ and of some small parameters $0<\sigma<\kappa/3\ll 1$ such that the solution $h^1$ of \eqref{h_equations} exists in $\dext_{T_0}$ and for all $t\in [2, T_0)$ it satisfies
	\begin{align}
		& \enext(t, Z^{\le N}h^1)^{1/2}\le 2C_0\ep t^{\sigma} \label{Boot1_ext}\\
		& E^{e, 1+\kappa}(t, \partial Z^{\le N}h^1)^{1/2}\le 2C_0\ep t^\sigma. \label{Boot2_ext}
	\end{align}
	The result we want to prove here affirms the following
	
	\begin{proposition} \label{prop:bootstrap_ext}
		Let $N\in \N$ with $N\ge 6$ be fixed. There exists a constant $C_0$ sufficiently large, $0<\epsilon_0\ll 1$ sufficiently small and a universal positive constant $C$ such that, for every $0<\ep<\ep_0$ if $h^1$ is a solution of \eqref{h_equations} in the time interval $[2, T_0)$ and satisfies the bounds \eqref{Boot1_ext}-\eqref{Boot2_ext} for all $t\in [2, T_0)$, then in the same interval it actually satisfies
		\begin{align}
			& \enext(t, Z^{\le N}h^1)^{1/2}\le C_0\ep t^{\frac{\sigma}{2} + CC_0\epsilon} \label{enhanced_Boot1_ext}\\
			& E^{e, 1+\kappa}(t, \partial Z^{\le N}h^1)^{1/2}\le C_0\ep t^{\frac{\sigma}{2} + CC_0\epsilon}. \label{enhanced_Boot2_ext}
		\end{align}
	\end{proposition}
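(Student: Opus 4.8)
The plan is to run the standard vector-field/energy-inequality scheme, the key input being the exterior energy inequality from the appendix (Section~\ref{sec:energy_appendix}), applied to the differentiated equations $\tilde\Box_g Z^K h^1_{\alpha\beta} = F^K_{\alpha\beta} + F^{0,K}_{\alpha\beta}$ for $|K|\le N$, and separately to $\partial Z^K h^1$ for $|K|\le N-1$. Schematically, the energy inequality reads
\[
\enext(t, Z^{\le N}h^1) \lesssim \enext(2, Z^{\le N}h^1) + \int_2^t \iint_{\Sext_\tau} (2+|x|-\tau)^{1+2\kappa}\,\big|\nabla_{txy}Z^{\le N}h^1\big|\,\big|F^{\le N}+F^{0,\le N}\big|\,dxdyd\tau + (\text{error from }\tilde\Box_g - \Box_{xy}),
\]
so the whole proof reduces to estimating the weighted spacetime $L^2$ norm of the source terms $F^K_{\alpha\beta}$ against the bootstrap energies. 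First I would record the pointwise consequences of \eqref{Boot1_ext}--\eqref{Boot2_ext}: the weighted Sobolev--Hardy inequalities quoted in the introduction give, for $|K|\le N-3$ (resp. $|K|\le N-2$), the decay $|\partial Z^K h^1|\lesssim C_0\ep(1+|x|)^{-1}(2+|x|-t)^{-\frac32-\kappa}t^\sigma$ and $|Z^K h^1|\lesssim C_0\ep(1+|x|)^{-1}(2+|x|-t)^{-\kappa}t^\sigma$, and one also extracts the faster decay of $\overline\nabla Z^K h^1$ via the spacetime term in $\enext$.

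\textbf{Treatment of the nonlinear terms.} The source $F^K_{\alpha\beta}$ splits as in the introduction into: the differentiated classical null forms $Z^K\mathbf{Q}_{\alpha\beta}$, the cubic terms $Z^K G_{\alpha\beta}$, the explicit inhomogeneity $F^{0,K}_{\alpha\beta} = Z^K\tilde\Box_g h^0_{\alpha\beta}$, the commutator $[Z^K, H^{\mu\nu}\partial_\mu\partial_\nu]h^1$, and the weak-null terms $Z^K P_{\alpha\beta}(\partial h,\partial h)$. For the null forms one uses $|Q(\partial\psi,\partial\varphi)|\lesssim |\op\psi||\partial\varphi| + |\partial\psi||\op\varphi|$ together with $|\op\psi|\lesssim (s/t)^2|\partial\psi| + |\pb_{xy}\psi|$ and the improved decay of tangential derivatives; after putting the low-order factor in $L^\infty$ and the top-order factor in $L^2$, the resulting weighted integral converges and produces $\lesssim \ep^2 t^{-1+C\ep}$ or better. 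The terms $F^{0,K}$ and $Z^KG_{\alpha\beta}$ are short-range (the former supported in the cone $t/2\le r\le 3t/4$ with $|x|+t\gtrsim t$ weights, the latter cubic hence extra-small) and contribute an absolutely convergent time integral, hence $\q O(\ep^2 t^{-1-})$. The commutator term is handled by splitting $H = H^0 + H^1$: for $H^0$ (the explicit Schwarzschild part) the estimate is direct and gives a slowly growing contribution $\lesssim C_0\ep^2 t^{-1+C\ep}$; for $H^1$ one uses the algebraic fact $|\pi^{\mu\nu}\partial_\mu\partial_\nu\psi|\lesssim |\pi_{LL}||\partial^2\psi| + |\pi||\partial\op\psi|$ so that either a ``good'' coefficient $\pi_{LL}$ appears (controlled via Lemma~\ref{lem:wave_cond_h1}, i.e. $|\partial H^1|_{\q L\q T}\lesssim |\op H^1| + r^{-1}|H^1| + |H^1||\partial H^1| + \text{short range}$) or a tangential second derivative $\partial\op\psi$ appears, whose weighted $L^2$ norm is controlled by the second energy assumption \eqref{Boot2_ext}; weighted Hardy inequalities then close these estimates. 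The weak-null terms $Z^K P_{\alpha\beta}$ require the sub-argument sketched in the introduction: one first propagates separately a slower-growth energy bound $t^{C\ep}$ for the ``good'' components $Z^K h_{TU}$ with $|K|\ll N$ (whose equations have only null/cubic sources), uses the wave-coordinate bound \eqref{partial_hLL} for $\partial h_{LT}$, and thereby obtains $\sum_i\|(2+r-t)^{\frac12+i+\kappa}\partial^i Z^{\le N-i}P_{\alpha\beta}\|_{L^2}\lesssim \ep^2 t^{-1+C\ep} + \q O(\ep^2 t^{-1-})$.

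\textbf{Closing the bootstrap.} Collecting all contributions, the source integral in the energy inequality for $Z^{\le N}h^1$ is bounded by $\int_2^t C C_0\ep^2\, \tau^{-1+C C_0\ep}\,\enext(\tau,Z^{\le N}h^1)^{1/2}\,d\tau + C\ep^2$; combining with $\enext(2,\cdot)\lesssim \ep^2$ from the data hypothesis \eqref{data_maintheo} and applying a Gr\"onwall-type argument (dividing by $\enext^{1/2}$) yields $\enext(t,Z^{\le N}h^1)^{1/2}\lesssim \ep\, t^{CC_0\ep}$, which for $C_0$ large and $\ep_0$ small is $\le C_0\ep\, t^{\frac\sigma2 + CC_0\ep}$, giving \eqref{enhanced_Boot1_ext}; the estimate \eqref{enhanced_Boot2_ext} for $\partial Z^{\le N}h^1$ follows in the same way from the energy inequality applied to $\partial Z^K h^1$, $|K|\le N-1$, noting that the weight $(2+|x|-t)^{1+2\kappa}$ there has $\lambda = 1+\kappa$ and the extra derivative is compensated by the fact that $\partial$ commutes cleanly with $\Box_{xy}$ and only generates lower-order commutators with $\tilde\Box_g - \Box_{xy}$ already controlled above. \textbf{The main obstacle} is the weak-null term estimate, i.e. isolating and propagating the improved $t^{C\ep}$ growth of the good metric components $h_{TU}$ and feeding the wave-coordinate identity \eqref{partial_hLL} through the differentiated equations; this is the heart of the Lindblad--Rodnianski structure and is where the weighted Hardy inequalities and the precise form of Lemmas~\ref{lem:wave_cond_nullframe}--\ref{lem:wave_cond_h1} are indispensable.
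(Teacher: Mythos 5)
Your outline matches the paper's actual proof in essentially every structural respect: deriving pointwise decay from the weighted energies via Sobolev--Hardy, decomposing the source into classical null, cubic, mass-term, commutator and weak-null pieces, using the null-frame inequality $|\pi^{\mu\nu}\partial_\mu\partial_\nu\psi|\lesssim|\pi_{LL}||\partial^2\psi|+|\pi||\partial\op\psi|$ together with Lemma~\ref{lem:wave_cond_h1} for the commutator, isolating the tangential components $h^1_{TU}$ to control the weak-null terms, and closing by Gr\"onwall. The one place where your sketch, as written, would not close is the weak-null step.

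You state that one propagates the improved $t^{C\ep}$ energy bound ``for the good components $Z^K h_{TU}$ with $|K|\ll N$.'' That is not sufficient. After Lemma~\ref{lem:weak_null_frame} (and the wave-coordinate control of $\partial h_{LL}$), the dangerous contribution to $Z^{\le N}P_{\alpha\beta}$ is a sum of products $|\partial Z^{K_1}h^1|_{\q T\q U}\,|\partial Z^{K_2}h^1|_{\q T\q U}$ with $|K_1|+|K_2|\le N$, and in the extreme allocation $|K_2|=N$ the top-order factor must be placed in weighted $L^2$. If the only $L^2$ bound available at order $N$ for a $\q T\q U$ component is the general bootstrap bound $\lesssim C_0\ep\, t^\sigma$, the resulting source is $\lesssim C_0^2\ep^2\, t^{-1+\sigma+C\ep}$, whose time integral against $\enext^{1/2}\le 2C_0\ep\, t^\sigma$ gives $\lesssim C_0^3\ep^3\, t^{2\sigma+C\ep}/\sigma$. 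Gr\"onwall then yields $\enext(t,Z^{\le N}h^1)\lesssim\bigl(\ep^2 + C_0^3\ep^3\, t^{2\sigma+C\ep}/\sigma\bigr)t^{C\ep}$, and for the bootstrap to improve you would need $C_0\ep\, t^\sigma/\sigma\lesssim 1$, which fails as $t\to\infty$ no matter how small $\ep$ is. The paper avoids this by proving (Section~\ref{sub:hTU_coeff}) the \emph{full} top-order auxiliary energy bound $E^{e,\kappa-\rho}(t, Z^{\le N}h^1_{TU})^{1/2}\lesssim C_0\ep\, t^{C\ep}$, from which it derives both the weighted $L^2$ bound \eqref{hTU_en1} at order $N$ and the pointwise bound \eqref{KS4_ext} at order $N-3$; the two are then combined in the weak-null estimate \eqref{weak_ext}. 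Your argument should carry the $h^1_{TU}$ energy estimate all the way to $|K|\le N$, not merely to $|K|\ll N$; the low-order pointwise improvement alone cannot close the top-order energy. (Minor, non-fatal inaccuracies: the pointwise bound for $|\partial Z^{\le N-3}h^1|$ carries weight $(2+|x|-t)^{-1-\kappa}$, not $(2+|x|-t)^{-3/2-\kappa}$, and the energy inequality for \eqref{enhanced_Boot2_ext} is applied to $\partial Z^K h^1$ with $|K|\le N$, not $|K|\le N-1$.)
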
 
	The time $T_0$ in the statement of the above proposition is arbitrary and one can hence infer that the solution exists globally in $\dext$.
	We also observe that, as a consequence of the energy assumptions \eqref{Boot1_ext}-\eqref{Boot2_ext}, there exists an integrable function $l\in L^1([2, T_0))$ such that
	\begin{align}
		\label{boot1_ex}
		&\big\| (2+r-t)^{\frac{1}{2}+\kappa}\ \partial Z^{\le N} h^1 \big\|_{L^2 (\Sext_t)} \leq 2C_0 \ep t^\sigma \\  
		\label{boot2_ex}
		&		\big\| (2+r-t)^\kappa\ \op Z^{\le N} h^1 \big\|_{L^2(\Sext_t)} \le 2C_0\ep \sqrt{l(t)}t^{\sigma} \\
		\label{boot3_ex}
		&\big\| (2+r-t)^{\frac{3}{2}+\kappa}\ \partial^2 Z^{\le N} h^1 \big\|_{L^2 (\Sext_t)} \leq 2C_0 \ep t^\sigma \\  
		\label{boot4_ex}
		&		\big\| (2+r-t)^{1+\kappa}\ \op \partial Z^{\le N} h^1 \big\|_{L^2(\Sext_t)} \le 2C_0\ep \sqrt{l(t)}t^{\sigma} .
	\end{align}

	The first step to recover the enhanced bounds \eqref{enhanced_Boot1_ext}-\eqref{enhanced_Boot2_ext} is to compare the equation satisfied by the differentiated unknown $Z^K h^1_{\alpha\beta}$ for any $K=(I,J)$ with $|K|\le N+1$, with the linear inhomogeneous equation \eqref{linearized_wave}.
	The commutation of $Z^K$ with equation \eqref{h_equations} shows that $Z^Kh^1_{\alpha\beta}$ solves
	\begin{equation} \label{h1_eqt_higher}
		\tilde{\Box}_g Z^Kh^1_{\alpha\beta} = F^K_{\alpha\beta} + F^{0,K}_{\alpha\beta}, \qquad F^{0,K}_{\alpha\beta} = Z^K\tilde{\Box}_g h^0_{\alpha\beta}
	\end{equation}
	with source term $F^K_{\alpha\beta}$ given by
	\begin{equation}\label{source_ext_high}
		F^K_{\alpha\beta} = Z^KF_{\alpha\beta}(h)(\partial h, \partial h) - [Z^K, H^{\mu\nu}\partial_\mu\partial_\nu]h^1_{\alpha\beta} .
	\end{equation}
	The second step consists in recovering suitable pointwise decay estimates and $L^2$ estimates for tensors $h_{\alpha\beta}$ and $H^{\alpha\beta}$ and their derivatives.  Our aim is in fact to apply energy inequality \eqref{energy_ineq_ext} with $\Wf = Z^Kh^1_{\alpha\beta}$, $\Ff = F^K_{\alpha\beta} + F^{0,K}_{\alpha\beta}$ and $w(q) = (2+r-t)^{\frac{1}{2}+i+\kappa}$ with $i=0,1$ depending on $K$. Such estimates allow us, on the one hand, to justify the use of \eqref{energy_ineq_ext} and, on the other, to suitably estimate the different contributions to the right hand side of such energies, and hence to propagate \eqref{Boot1_ext}-\eqref{Boot2_ext}. The derivation of these bounds is the content of the following subsections.

	\subsection{Pointwise bounds} \label{subsec:pointwise_exterior}
	
	A first set of pointwise decay bounds for the metric perturbation $h^1_{\alpha\beta}$, as well as for tensor $H^{1,\alpha\beta}$, are obtained from the a-priori energy assumptions \eqref{Boot1_ext}-\eqref{Boot2_ext} via the weighted Sobolev and Hardy embeddings stated in appendix \ref{sec:weighted_Sobolev_Hardy}. 
	As concerns the mass term, a straightforward computation using directly the expression of $h^0_{\alpha\beta}$ in \eqref{h0_def} shows that for all $(t,x,y)\in \R^{1+3}\times \S^1$
	\begin{equation}\label{h0_estimate}
		\sup_{\S^1} |\partial^I Z^J h^0|\lesssim \ep (1+r)^{-1-|I|}.
	\end{equation}

	\begin{proposition}\label{prop:pointwise_ext}
		Let us define the weighted pointwise norm
		\[
		|u(t)|_\lambda : = \sup_{(x,y)\in \Sext_t} (1+t+r)(2+r-t)^{1 + \lambda} |u(t,x,y)|.
		\]
		Assume that the solution $h^1_{\alpha\beta}$ of \eqref{h_equations} exists in the time interval $[2, T_0)$ and satisfies \eqref{Boot1_ext}-\eqref{Boot2_ext} for all $t\in [2, T_0)$. Then the following estimates hold true in $\dext_{T_0}$
		\begin{equation} \label{KS1_ext}
			|\partial Z^{\le N-3}h^1|_{\kappa} +|\partial^2 Z^{\le N-3}h^1|_{1/2+\kappa} \lesssim C_0\ep t^\sigma  
		\end{equation}
		\begin{equation} \label{KS2_ext}
			|\op Z^{\le N-3} h^1|_{\kappa-1/2} + |\op \partial Z^{\le N-3}h^1|_{\kappa}  \lesssim C_0\ep t^\sigma \sqrt{l(t)}
		\end{equation}
		\begin{equation} \label{KS3_ext}
			|Z^{\le N-2} h^1|_{\kappa -1}\lesssim C_0\ep t^{\sigma}
		\end{equation}
		\begin{equation}\label{KS2.1_ext}
			|Z^{\le N-3}  h^{1, \natural}|_{\kappa -1/2}\lesssim C_0\ep t^{\sigma}.
		\end{equation}
		Estimates \eqref{KS1_ext}-\eqref{KS3_ext} hold true also for tensor $H^{1,\alpha\beta}$. 
	\end{proposition}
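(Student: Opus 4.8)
The plan is to read off all four groups of estimates from the a priori energy bounds by inserting them into the weighted Klainerman--Sobolev and Hardy inequalities collected in appendix \ref{sec:weighted_Sobolev_Hardy}. It is convenient to work with the equivalent fixed-time reformulations \eqref{boot1_ex}--\eqref{boot4_ex} of \eqref{Boot1_ext}--\eqref{Boot2_ext} (together with $\enext(t,Z^{\le N}h^1)^{1/2}\le 2C_0\ep t^\sigma$ and $E^{e,1+\kappa}(t,\partial Z^{\le N}h^1)^{1/2}\le 2C_0\ep t^\sigma$), and to keep in mind that throughout $\dext_{T_0}$ one has $1+|x|\sim 1+t+r$ and $2+r-t\ge 1$, so the weights $(2+r-t)^\lambda$ only matter near the cone $\{r=t\}$, and that $Z$ commutes with $\partial$ modulo lower order terms of the same form.

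First I would prove \eqref{KS3_ext}: applying the ``function'' version of the weighted Sobolev inequality with weight parameter $\kappa$ to $Z^{\le N-2}h^1$ controls $(1+t+r)(2+r-t)^{\kappa}|Z^{\le N-2}h^1|$ by $\sum_{|I|\le 2}\enext(t,Z^{\le N-2+|I|}h^1)^{1/2}\lesssim \enext(t,Z^{\le N}h^1)^{1/2}\lesssim C_0\ep t^\sigma$, which is exactly $|Z^{\le N-2}h^1|_{\kappa-1}\lesssim C_0\ep t^\sigma$. For the $\partial$-part of \eqref{KS1_ext} I would apply the same inequality with weight parameter $1+\kappa$ to $\partial Z^{\le N-3}h^1$ and commute $Z$ past $\partial$, bounding $(1+t+r)(2+r-t)^{1+\kappa}|\partial Z^{\le N-3}h^1|$ by $E^{e,1+\kappa}(t,\partial Z^{\le N}h^1)^{1/2}\lesssim C_0\ep t^\sigma$; the point is that it is the heavier-weighted second a priori bound \eqref{Boot2_ext} which produces the rate $(2+r-t)^{-1-\kappa}$. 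The $\partial^2$-part of \eqref{KS1_ext} follows in the same way from the ``gradient'' version of the weighted Sobolev inequality applied to $\partial Z^{\le N-3}h^1$ with weight parameter $1+\kappa$, again using \eqref{Boot2_ext}.

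The tangential estimates \eqref{KS2_ext} are obtained from the refined Klainerman--Sobolev inequality for the good derivatives $\op$ of appendix \ref{sec:weighted_Sobolev_Hardy}, applied to $Z^{\le N-3}h^1$ and to $\partial Z^{\le N-3}h^1$, with the weight parameter tuned so that the right-hand side involves $\|(2+r-t)^{\kappa}\op Z^{\le N}h^1\|_{L^2(\Sext_t)}$, resp.\ $\|(2+r-t)^{1+\kappa}\op\partial Z^{\le N}h^1\|_{L^2(\Sext_t)}$, which by \eqref{boot2_ex}, resp.\ \eqref{boot4_ex}, are $\lesssim C_0\ep t^\sigma\sqrt{l(t)}$ --- this is the origin of the factor $\sqrt{l(t)}$. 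Commuting $\op$ past the $Z$'s via \eqref{commut_tangent_ext} produces errors of the type $r^{-1}\partial Z^{\le N}h^1$ and $r^{-1}Z^{\le N}h^1$, which are absorbed by a weighted Hardy inequality together with \eqref{boot1_ex} and the already established \eqref{KS3_ext}; this is precisely why \eqref{KS3_ext} and the $\partial$-part of \eqref{KS1_ext} must be proved first. For the zero-average bound \eqref{KS2.1_ext} I would instead combine the Poincar\'e inequality on $\S^1$ --- which, since $Z$ commutes with the projection onto the zero $\S^1$-mode, gives $\sup_y|Z^{\le N-3}h^{1,\natural}(t,x,y)|\lesssim \|\partial_y Z^{\le N-3}h^1(t,x,\cdot)\|_{L^2_y}$ --- with a weighted Sobolev inequality in $x$ and the bound $\|(2+r-t)^{1/2+\kappa}\partial_y Z^{\le N}h^1\|_{L^2(\Sext_t)}\le \enext(t,Z^{\le N}h^1)^{1/2}\lesssim C_0\ep t^\sigma$ coming from the fixed-time part of the energy (hence with no $\sqrt{l(t)}$), which yields $|Z^{\le N-3}h^{1,\natural}|_{\kappa-1/2}\lesssim C_0\ep t^\sigma$. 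Finally, the estimates for $H^{1,\alpha\beta}$ follow from those for $h^1$ through $H^{1,\mu\nu}=-h^{1,\mu\nu}+\mathcal{O}^{\mu\nu}(h^2)$, $h=h^0+h^1$, using \eqref{h0_estimate} and the smallness of $\ep$, the quadratic corrections in $h$ being negligible. The only genuinely delicate points are the bookkeeping of the weight exponents --- in particular recognising which a priori bound produces which decay rate --- and the control of the $[\op,Z^K]$ commutator errors, which dictates the order in which the four estimates are established; beyond this there is no new idea.
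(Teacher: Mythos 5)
Your proof proposal follows essentially the same route as the paper: the estimates are read off by inserting the bootstrap energy bounds \eqref{boot1_ex}--\eqref{boot4_ex} into the weighted Sobolev and Sobolev--Hardy inequalities \eqref{sobolev2_ext}, \eqref{sob+hardy1_ext} of appendix \ref{sec:weighted_Sobolev_Hardy}, with the Poincar\'e inequality on $\S^1$ supplying \eqref{KS2.1_ext} and \eqref{H_def} together with \eqref{h0_estimate} giving the transfer to $H^{1,\alpha\beta}$. One small correction on the bookkeeping: there is no dedicated ``$\op$-adapted'' Klainerman--Sobolev inequality in the appendix, and even for the $\op$ part of \eqref{KS2_ext} the paper applies \eqref{sob+hardy1_ext} to $\op Z^{\le N-3}h^1$ directly, so that it is the radial derivative $\partial_r$ on the right-hand side (commuted past $\op$) which forces one to invoke the second-derivative bound \eqref{boot4_ex} -- rather than \eqref{boot2_ex} alone -- to produce the $\sqrt{l(t)}$ factor.
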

	
	\begin{proof}
		Estimate \eqref{KS1_ext} (resp. \eqref{KS2_ext}) for the second order derivatives $\partial^2$ (resp. $\op \partial $) of  $Z^{\le N-3}h^1$ follows from the energy bound \eqref{boot3_ex} (resp. \eqref{boot4_ex}) and from inequality \eqref{sobolev2_ext} applied with $\beta = 1+\lambda$ and $\lambda = 1/2+\kappa$ (resp. $\lambda =\kappa$).
		
		Estimate \eqref{KS1_ext} (resp. \eqref{KS2_ext}) for the first order derivatives $\partial$ (resp. $\op$) of $Z^{\le N-3}h^1$ follows from the energy bound \eqref{boot3_ex} (resp. \eqref{boot4_ex}) and inequality \eqref{sob+hardy1_ext} applied with $\beta = 1+\kappa$  (resp. $\beta= 1/2 + \kappa$), and estimate \eqref{KS2.1_ext} follows using in addition the Poincar\'e inequality.

		Estimate \eqref{KS3_ext} follows from the energy bound \eqref{boot1_ex} and inequality \eqref{sob+hardy1_ext} with $\beta = \kappa$.

		Finally, one can show that estimates \eqref{KS1_ext}-\eqref{KS3_ext} hold true for $H^{1,\alpha\beta}$ using \eqref{H_def}.
	\end{proof}
	
	As a result of inequality \eqref{ineq_H1lt_higher} and the pointwise bounds we just obtained, we can show that the metric coefficients $h^1_{LT}$ satisfy enhanced pointwise decay estimates compared to those in Proposition \ref{prop:pointwise_ext}.
	
	\begin{proposition}
		Under the assumptions of Proposition \ref{prop:pointwise_ext}, we have
		\begin{gather} 
			|\partial Z^{\le N-3} h^1_{LT}(t,x,y)|\lesssim C_0\ep \big[(1+t+r)^{-1+\sigma}\sqrt{l(t)}(2+r-t)^{-\frac12-\kappa} +  (1+t+r)^{-2+2\sigma}(2+r-t)^{-\kappa}\big]\label{der_hLT_ext} \\
			|Z^{\le N-4}h^1_{LT}(t,x,y)| \lesssim C_0\ep (1+t+r)^{-\frac32+2\sigma}(2+r-t)^{\frac12-\kappa}\label{est_hLT2}
		\end{gather}
		and
		\begin{equation} \label{est_hLT1}
		\|Z^{\le N} h^1_{LT}(t,r)\|_{L^2(\m S^2\times \m S^1)}\lesssim \frac{C_0 \ep t^\sigma}{t^{\kappa-\mu}	(2+r-t)^\mu } \Big(\frac{\sqrt{l(t)}}{r^{1/2}} + \frac{1}{r} \Big) .
		\end{equation}
		The same bounds are satisfied by $H^1_{LT}$.
	\end{proposition}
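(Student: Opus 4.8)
The plan is to exploit the improved structure of the $LT$ components of the metric perturbation encoded in Lemma \ref{lem:wave_cond_h1}, namely inequality \eqref{ineq_H1lt_higher}, together with the pointwise and $L^2$ bounds already established in Proposition \ref{prop:pointwise_ext} and in the bootstrap assumptions \eqref{boot1_ex}--\eqref{boot4_ex}. The key observation is that every term on the right-hand side of \eqref{ineq_H1lt_higher} is \emph{better} than a generic second derivative of $h^1$: it involves either a tangential derivative $\op Z^{K'}H^1$, or an undifferentiated $r^{-1}Z^{K'}H^1$, or a quadratic term $|Z^{K_1}H^1||\partial Z^{K_2}H^1|$, or the explicitly decaying mass cut-off term. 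So the strategy is to plug the bounds of Proposition \ref{prop:pointwise_ext} into each of these pieces and keep track of the weights in $(1+t+r)$ and $(2+r-t)$.

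First I would prove \eqref{der_hLT_ext}. We apply \eqref{ineq_H1lt_higher} with $|K|\le N-3$ (keeping in mind we have $N-3 \le \lfloor N/2\rfloor$-type room available only for low-order factors — but the quadratic terms split so that at least one factor carries $\le \lfloor (N-3)/2\rfloor$ derivatives, which is where \eqref{KS1_ext}--\eqref{KS3_ext} for $H^1$ apply). For the tangential term $|\op Z^{K'}H^1|$ we use \eqref{KS2_ext}, which gives the $\sqrt{l(t)}(1+t+r)^{-1+\sigma}(2+r-t)^{-1/2-\kappa}$ contribution after noting that the weight $|u|_{\kappa-1/2}$ unwinds to exactly this. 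For the $r^{-1}|Z^{K'}H^1|$ term we use \eqref{KS3_ext}, giving $r^{-1}(1+t+r)^{-1}(2+r-t)^{1-\kappa}t^\sigma$; in the exterior region $r\gtrsim t$ so $r^{-1}\lesssim (1+t+r)^{-1}$ and this is absorbed into the second summand of \eqref{der_hLT_ext}, which carries $(1+t+r)^{-2+2\sigma}(2+r-t)^{-\kappa}$ — here one checks $(2+r-t)^{1-\kappa}\le (2+r-t)^{-\kappa}(1+t+r)$, which fails pointwise but is fine after using $(2+r-t)\le 1+t+r$... so actually one should be slightly more careful and note $(2+r-t)^{1-\kappa}\lesssim (1+t+r)(2+r-t)^{-\kappa}$ is false; instead one keeps $(2+r-t)^{1-\kappa}$ and absorbs it as $(2+r-t)^{-\kappa}\cdot(2+r-t)$, with the extra factor $(2+r-t)\lesssim 1+t+r$ costing one power of $(1+t+r)$, leaving $(1+t+r)^{-1}$, which combined with the $r^{-1}\lesssim(1+t+r)^{-1}$ gives exactly $(1+t+r)^{-2}(2+r-t)^{-\kappa}t^\sigma$, matching the claimed bound up to $\sigma\le 2\sigma$. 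For the quadratic term $|Z^{K_1}H^1||\partial Z^{K_2}H^1|$ we pair \eqref{KS3_ext} for the lower-order undifferentiated factor with \eqref{KS1_ext} for the differentiated factor, obtaining a product of size $\lesssim (C_0\ep)^2 t^{2\sigma}(1+t+r)^{-2}(2+r-t)^{-\kappa}$ (using $r\gtrsim t$ to convert the $r^{-1}$'s), which is $\ep$-better than the linear terms and hence absorbed. The mass term is $O(\ep(1+t+r)^{-2})$ supported where $t\sim r$, hence dominated by the second summand. Collecting, \eqref{der_hLT_ext} follows; \eqref{est_hLT2} is obtained the same way but integrating/using the zeroth-order weighted Sobolev embedding \eqref{sob+hardy1_ext} rather than taking a derivative, replacing the $\partial$ in front by one fewer vector field (hence $N-4$), and tracking that the undifferentiated version of \eqref{ineq_H1lt_higher} gives $|Z^{\le N-4}H^1_{LT}|$ bounded by $|\op^{-1}(\cdots)|$-type quantities — more precisely one integrates the $\op_{\ub}$-relation \eqref{eq:partial_HLL} along incoming cones from the light cone $\{r = t-1\}$, using \eqref{KS3_ext} and \eqref{KS2_ext} on the integrand, to gain the extra $(2+r-t)^{1/2}$ and one power of $(1+t+r)^{-1/2}$.

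For the $L^2$ bound \eqref{est_hLT1}, I would not go through pointwise norms but rather integrate the identity \eqref{eq:partial_HLL} (with $Z^K$ applied, $|K|\le N$) directly. Writing $\partial_{\ub}(Z^K H^1_{LT})$ in terms of $\op Z^{\le N}H^1$, $r^{-1}Z^{\le N}H^1$ and the quadratic/mass terms via \eqref{ineq_H1lt_higher}, one integrates in $\ub$ from the cone, applies Minkowski's integral inequality on $\S^2\times\S^1$, and uses the $L^2(\Sext_t)$ bounds \eqref{boot1_ex} and \eqref{boot2_ex} together with a weighted Hardy inequality from appendix \ref{sec:weighted_Sobolev_Hardy} to handle the $r^{-1}Z^{\le N}H^1$ term (this is what produces the $1/r$ versus $\sqrt{l(t)}/r^{1/2}$ dichotomy: the tangential part yields the $\sqrt{l(t)}r^{-1/2}$ with an $l(t)$-loss, the Hardy part yields the cleaner $r^{-1}$). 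The weight juggling between $(2+r-t)^{\kappa}$ and $(2+r-t)^\mu$ is done by interpolating: $\|(2+r-t)^\mu\cdot\|_{L^2}\le \|(2+r-t)^\kappa\cdot\|_{L^2}^{\mu/\kappa}\|\cdot\|_{L^2}^{1-\mu/\kappa}$ after extracting the $t$-power, which is where the $t^{\kappa-\mu}$ in the denominator comes from. The quadratic terms are again an $\ep$-order improvement and absorbed.

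The main obstacle, as usual in this circle of ideas, is the bookkeeping of weights: one must verify at every step that the powers of $(2+r-t)$ and $(1+t+r)$ produced by \eqref{ineq_H1lt_higher} combined with \eqref{KS1_ext}--\eqref{KS3_ext} are \emph{at least} as strong as what is claimed, which forces a careful use of the exterior-region constraint $r\gtrsim t-1$ (so that $r^{-1}$, $(1+t+r)^{-1}$ and $t^{-1}$ are all comparable) and of the elementary inequality $(2+r-t)\lesssim 1+t+r$ to trade $(2+r-t)$-weights for $(1+t+r)$-weights when needed. A secondary subtlety is the derivative-count: since \eqref{ineq_H1lt_higher} contains quadratic terms $|Z^{K_1}H^1||\partial Z^{K_2}H^1|$ with $|K_1|+|K_2|\le|K|$, one must check that whenever $|K|\le N-3$ at least one of the two factors has order $\le N-3$ as well (so that the pointwise bounds of Proposition \ref{prop:pointwise_ext} apply to it), while the other — of order possibly up to $N-3$ — is controlled by the \emph{same} proposition; this works precisely because $2(N-3)\le N$ is false in general, so instead one must put the higher-order factor in $L^2$, but since we are after pointwise bounds this is not available — the resolution is that for \eqref{der_hLT_ext}--\eqref{est_hLT2} with $|K|\le N-3$ resp.\ $N-4$, both $|K_1|,|K_2|\le N-3$ automatically, and both factors are pointwise-controlled by Proposition \ref{prop:pointwise_ext}, so the quadratic term is simply $(C_0\ep)^2$-small and causes no loss.
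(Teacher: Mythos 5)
The part of your proposal dealing with \eqref{der_hLT_ext} is essentially the paper's proof: plug the pointwise bounds \eqref{KS1_ext}--\eqref{KS3_ext} into \eqref{ineq_H1lt_higher}, keeping track of the weights. That is correct.

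However, your proof of \eqref{est_hLT2} has a genuine gap, and it is precisely the one the paper flags. You propose to ``integrate the $\partial_{\ub}$-relation \eqref{eq:partial_HLL} along incoming cones from the light cone,'' i.e.\ to integrate the bound \eqref{der_hLT_ext} in $\ub$ along null rays. This does not close because of the factor $\sqrt{l(t)}$. Concretely, along the ray $u = t+r = \mathrm{const}$, the dominant term (a) of \eqref{der_hLT_ext} integrates to
\[
(1+t+r)^{-1+\sigma}\int \sqrt{l(\tau)}\,(2+u-2\tau)^{-\frac12-\kappa}\,d\tau.
\]
Since $l$ is only known to be in $L^1$ (not to have a pointwise rate), the best one can do is Cauchy--Schwarz, which yields
\[
(1+t+r)^{-1+\sigma}\Big(\int l\Big)^{1/2}\Big(\int (2+u-2\tau)^{-1-2\kappa}\,d\tau\Big)^{1/2}\lesssim (1+t+r)^{-1+\sigma}(2+r-t)^{-\kappa}.
\]
Compared to the claimed bound $(1+t+r)^{-\frac32+2\sigma}(2+r-t)^{\frac12-\kappa}$, this loses a full factor $(1+t+r)^{1/2}(2+r-t)^{-1/2}$, which is unbounded near the boundary hyperboloid $\tilde{\hcal}$ (where $2+r-t\sim 1$ but $t$ is large). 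This loss is \emph{not} an artefact of sloppy weight-bookkeeping; it is structural. The paper's actual argument is a two-step procedure that you are missing: \emph{Step 1} upgrades the $\sqrt{l(t)}$ in the bound on $\partial Z^{\le N-4}h^1_{LT}$ to an explicit $t^{-1/2+\sigma}$ decay, by dyadically decomposing the time interval, using $l\in L^1$ to pick a ``good'' time $\tau_k$ in each dyadic block where $l(\tau_k)\lesssim 2^{-k}$, and then propagating from $\tau_k$ to a generic $t$ in the block by integrating the hyperboloid-tangential derivative $\pb_r\partial Z^{\le N-4}h^1_{LT} = \tfrac{x^\bmj}{tr}\Omega_{0\bmj}\partial Z^{\le N-4}h^1_{LT}$ — which gains a factor $(1+t+r)^{-1}$ over \eqref{der_hLT_ext} — along the hyperboloids $\tau^2-|w|^2 = t^2-r^2$; only in \emph{Step 2} does one integrate along $\ub$-rays, now with the explicit $\tau^{-3/2+\sigma}$ decay in hand, which reproduces the claimed power $(2+r-t)^{1/2-\kappa}(1+t+r)^{-3/2+\sigma}$. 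Without Step 1, the $\ub$-integration fails.

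Your sketch for \eqref{est_hLT1} also departs from the paper, and the interpolation inequality $\|(2+r-t)^\mu\,\cdot\|_{L^2}\le \|(2+r-t)^\kappa\,\cdot\|_{L^2}^{\mu/\kappa}\|\cdot\|_{L^2}^{1-\mu/\kappa}$ does not by itself produce the factor $t^{\kappa-\mu}$, since the unweighted $L^2$ norm is not controlled by the energy (the energy carries a weight $\ge 1$). The paper instead applies the trace/Sobolev inequality \eqref{sob+hardy2_ext} with $\beta=\mu$ directly to $Z^{\le N}h^1_{LT}$, reducing to an $L^2(\Sext_t)$ bound on $(2+r-t)^{\frac12+\mu}\partial Z^{\le N}h^1_{LT}$, and then splits the integral into $\Sext_{2t}$ (where $2+r-t\ge t$ gives $(2+r-t)^{2(\mu-\kappa)}\le t^{2(\mu-\kappa)}$ directly) and $\Sext_t\setminus\Sext_{2t}$ (where $2+r-t\lesssim t$, so one trades the remaining $(2+r-t)^{1+2\mu-2\kappa}$ for $t^{1+2(\mu-\kappa)}$, and then invokes \eqref{ineq_H1lt_higher} and the energy bounds termwise). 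You should redo this part along those lines.
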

	\begin{proof}
		From relation \eqref{H_def} and pointwise bounds \eqref{h0_estimate}, \eqref{KS1_ext}, \eqref{KS3_ext}, it is clear that the estimates \eqref{der_hLT_ext} and \eqref{est_hLT2} for $h^1_{LT}$ are also satisfied by $H^1_{LT}$.
		
		Bound \eqref{der_hLT_ext} follows immediately from inequality \eqref{ineq_H1lt_higher} coupled with \eqref{KS1_ext}-\eqref{KS3_ext}.
	
		The proof of \eqref{est_hLT2} requires more work because a naive integration of \eqref{der_hLT_ext} along the integral curves of $\partial_t-\partial_r$ does not produce the required result due to the factor $\sqrt{l(t)}$ (it would if this was replaced by the explicit decay $t^{-1/2}$).
		Of course, the estimate is satisfied in the region where $r\ge 2t$ simply after \eqref{KS3_ext}. We then restrict our attention to the portion of exterior region for which $r<2t$ and proceed as follows:
		\begin{itemize}
			\item[-] first, we recover a better bound for $\partial Z^{\le N-4}h^1_{LT}$ than the one in \eqref{der_hLT_ext}, in which $\sqrt{l(t)}$ is replaced by a decay $t^{-\frac12 +}$. This is obtained by the integration of $\pb_r\partial Z^{\le N-4}h^1_{LT}$ along hyperboloids in some dyadic time slab, where $\pb_r = \frac{x^\bmj}{tr}\Omega_{0\bmj}$;
			\item[-] then, we deduce the desired estimate on $Z^{\le N-4}h^1_{LT}$ by integration of the bounds obtained in step 1 along the integral curves of $\partial_{\ub}$.
		\end{itemize}
		\emph{Step 1.} From the relation $\pb_r = \frac{x^\bmj}{tr}\Omega_{0\bmj}$ and inequality \eqref{der_hLT_ext}, we see that in fact 
		\begin{multline*}
			|\pb_r\partial Z^{\le N-4}h^1_{LT}| \lesssim (1+t+r)^{-1}|\partial Z^{\le N-3}h^1_{LT}| \\
			\lesssim C_0\ep (1+t+r)^{-2+\sigma}\sqrt{l(t)}(2+r-t)^{-\frac12 -\kappa} + C_0\ep (1+t+r)^{-3+2\sigma}(2+r-t)^{-\kappa}.
		\end{multline*}
		Moreover, thanks to the pointwise bound \eqref{KS1_ext} we have that on the cone $r=2t$
		\[
		|\partial Z^{\le N-4}h^1_{LT}(t, x, y)|\lesssim C_0 \ep (1+t+r)^{-2-\kappa+\sigma}.
		\]
		We dyadically decompose the time interval $[2, T_0) = \cup_{k=1}^{k_0} [2^k, 2^{k+1})\cap [2, T_0)$ where $k_0\sim \ln_2 T_0$ and denote $\q C^e_k$ the portion of the exterior region in the time slab $[2^k, 2^{k+1})$, so that $\q C^e = \cup_{k=1}^{k_0} \q C^e_k$. Inequality \eqref{der_hLT_ext} and the fact that $l\in L^1([2, T_0))$ imply the existence, for every fixed $k$, of a time $\tau_k\in [2^k, 2^{k+1})\cap [2, T_0)$ such that
		\[
		|\partial Z^{\le N-4}h^1_{LT}(\tau_k, x, y)|\lesssim C_0\ep \tau_k^{-\frac32+\sigma}(2+r-\tau_k)^{-\frac12-\kappa} + C_0\ep \tau_k^{-2+2\sigma}(2+r-\tau_k)^{-\kappa}.
		\]
		For every fixed $(t,x,y)\in \q C^e_k$, we then integrate $\pb_r\partial Z^{\le N-4}h^1_{LT}$ along the integral curve $\tau \mapsto \gamma(\tau)$ of $\pb_r$ passing through $(t,x,y)$\footnote{These are the hyperboloids $\{\tau^2 - |w|^2 = t^2-r^2\}$. If $t=r$ they degenerate into the cone $\{\tau  - |w| = t-r\}$.} until its first intersection with $\{\tau = \tau_k \}\cup \{|w| = 2\tau\}$. We denote $(\tau^*_k, x^*_k, y)$ the point at which such an intersection occurs first and observe that $\tau^*_k\sim 2^k\sim t$. We deduce that
		\begin{equation}\label{der_ZhLT_improved}
			\begin{aligned}
				|\partial Z^{\le N-4}h^1_{LT}(t,x,y)|  \le |\partial Z^{N-4}h^1_{LT}(\tau^*_k, x^*_k, y)| + \int_{\tau^*_k}^t |\pb_r\partial Z^{\le N-4}h^1_{LT}(\gamma(\tau))| d\tau  \\
				\lesssim C_0\ep (1+t+r)^{-\frac32+\sigma}(2+r-t)^{-\frac12-\kappa} + C_0\ep (1+t+r)^{-2+2\sigma}(2+r-t)^{-\kappa}.
			\end{aligned}
		\end{equation}

		\emph{Step 2.}
		We now integrate \eqref{der_ZhLT_improved} along the integral lines of $\partial_{\ub}$, up to $\mathcal{B}=\{r=2t\}\cup \{t=2\}$. After \eqref{KS3_ext}, we have that 
		\[
		|Z^{\le N-2}h^1|_\mathcal{B}\lesssim C_0\ep (1+t+r)^{-2}(2+r-t).
		\]
		Therefore, from \eqref{der_ZhLT_improved} we get that
		\[
		\begin{aligned}
			|Z^{\le N-4}h^1_{LT}(t,x,y)| \le |Z^{\le N-4}h^1_{LT}(\lambda^*_k, x^*_k, y)| + \int_{\lambda^*_k}^t | \partial_{\ub} Z^{\le N-4}h^1_{LT}(\gamma(\tau))|d\tau \\
			\lesssim C_0 \ep (1+t+r)^{-2}(2+r-t) + (1+ t + r)^{-\frac32+\sigma}\int_{\lambda^*_k}^t C_0\ep (2+t+r-2\tau)^{-\frac12-\kappa} d\tau \\
			+ (1+t+r)^{-2+2\sigma}\int_{\lambda^*_k}^t C_0\ep (2+t+r-2\tau)^{-\kappa} d\tau
		\end{aligned}
		\]
		and
		\[
		|Z^{\le N-4}h^1_{LT}(t,x,y)| \lesssim C_0\ep (1+t+r)^{-\frac32+2\sigma}(2+r-t)^{\frac12-\kappa}.
		\]			
	
	\medskip	
As concerns the proof of \eqref{est_hLT1}, 
	we begin by applying inequality \eqref{sob+hardy2_ext} with $\beta = \mu$ to $Z^{\le N}h^1_{LT}$. We get that
			\[
			(2+r-t)^{2\mu} r^2	 \| Z^{\le N} h^1_{LT}\|_{L^2(\m S^2\times \m S^1)}^2 
			\lesssim \iint_{\Sext_t} (2+r-t)^{1+2\mu}(\partial  Z^{\le N} h^1_{LT})^2 dx dy.
			\]
			We decompose the above right hand side into
			$$\iint_{\Sext_t\setminus \Sext_{2t}}(2+r-t)^{1+2\mu}(\partial  Z^{\le N} h^1_{LT})^2 dx dy+\iint_{\Sext_{2t}}(2+r-t)^{1+2\mu}(\partial  Z^{\le N} h^1_{LT})^2 dx dy.$$
			The integral over $\Sext_{2t}$ is simply estimated using the energy bound \eqref{boot1_ex} as follows
			\[
			\iint_{\Sext_{2t}}(2+r-t)^{1+2\mu}(\partial  Z^{\le N}h^1_{LT})^2 dx dy
			\lesssim t^{2(\mu-\kappa)}\iint_{\Sext_{t}}(2+r-t)^{1+2\kappa}|\partial  Z^{\le N} h^1|^2 dx dy\lesssim \ep^2 t^{2(\mu-\kappa+\sigma)}.
			\]		
			The integral over $\Sext_t\setminus \Sext_{2t}$ is estimated using \eqref{ineq_H1lt_higher} for $h^1$ 
			\begin{align*}	&\iint_{\Sext_t\setminus \Sext_{2t}}(2+r-t)^{1+2\mu}(\partial Z^{\le N} h^1_{LT})^2 dx dy \\
				& \lesssim 	\iint_{\Sext_t\setminus \Sext_{2t}}  (2+r-t)^{1+2\mu} M^2 \chi_0^2(t/2\le r\le 3t/4)r^{-4}dxdy\\
				&+ \iint_{\Sext_t\setminus \Sext_{2t}}(2+r-t)^{1+2\mu}\Big(|\op  Z^{\le N}h^1|^2 + r^{-2}|  Z^{\le N}h^1|^2  +\hspace{-15pt} \sum_{|K_1|+|K_2|\le N}\hspace{-15pt} | Z^{K_1}h^1|^2 |\partial Z^{K_2}h^1|^2\Big)  dx dy,
			\end{align*}
			where $\chi_0(t/2\le r\le 3t/4)$ is a smooth cut-off function supported in $t/2\le r\le 3t/4$. We observe that the portion of such support contained in the exterior region is bounded.
			Therefore we get the following:
			
			\noindent - from the smallness assumption on $M$ and the above observation
			\[
			\iint_{\Sext_t\setminus \Sext_{2t}} (2+r-t)^{1+2\mu} M^2 \chi_0^2(t/2\le r\le 3t/4) r^{-4}dxdy\lesssim \ep^2 t^{-4};
			\] 
			- from the energy bound \eqref{boot2_ex}
			\begin{multline*}
				\iint_{\Sext_t\setminus \Sext_{2t}}(2+r-t)^{1+2\mu} |\op  Z^{\le N}h^1|^2 dxdy \\
				\lesssim  t^{1+2(\mu-\kappa)} \iint_{\Sext_t} (2+r-t)^{2\kappa} 	|\op  Z^{\le N}h^1|^2 dxdy \lesssim \ep^2  t^{1+2(\mu-\kappa+\sigma)} l(t);
			\end{multline*}
			- from inequality \eqref{hardy_ext} with $\beta=2\kappa-1$ and the energy bound \eqref{boot1_ex}
			\[
			\begin{aligned}
				& \iint_{\Sext_t\setminus \Sext_{2t}}(2+r-t)^{1+2\mu}r^{-2} |  Z^{\le N}h^1|^2 dxdy  \lesssim t^{2(\mu-\kappa)}\iint_{\Sext_t\setminus \Sext_{2t}}(2+r-t)^{2\kappa-1}| Z^{\le N}h^1|^2 dxdy \\
				&\lesssim t^{2(\mu-\kappa)}\iint_{\Sext_t}(2+r-t)^{1+2\kappa}|\partial  Z^{\le N}h^1|^2dxdy\lesssim \ep^2 t^{2(\mu-\kappa+\sigma)};
			\end{aligned}
			\]
			- from the energy bound \eqref{boot1_ex} and the pointwise bound \eqref{KS3_ext} that
			\[
			\begin{aligned}
				\sum_{\substack{|K_1|+|K_2|\le N\\ |K_1|\le \floor{N/2}}}&\iint_{\Sext_t\setminus \Sext_{2t}} (2+r-t)^{1+2\mu} | Z^{K_1}h^1|^2 |\partial Z^{K_2}h^1|^2 dxdy \\
				& \lesssim \ep^2 t^{2\sigma} \iint_{\Sext_t\setminus \Sext_{2t}} (2+r-t)^{1+2(\mu-\kappa)}r^{-2} |\partial  Z^{\le N}h^1|^2 dxdy \lesssim \ep^4 t^{-2+4\sigma};
			\end{aligned}
			\]
			- finally, from the decay bound \eqref{KS1_ext}, inequality \eqref{hardy_ext} and the energy bound \eqref{boot1_ex}
			\[
			\begin{aligned}
				\sum_{\substack{|K_1|+|K_2|\le N\\ |K_2|\le \floor{N/2}}} & \iint_{\Sext_t\setminus \Sext_{2t}}(2+r-t)^{1+2\mu} | Z^{K_1}h^1|^2 |\partial Z^{K_2}h^1|^2 dxdy \\
				&\lesssim \ep^2 t^{2\sigma}\iint_{\Sext_t\setminus \Sext_{2t}}(2+r-t)^{-1+2(\mu-\kappa)}r^{-2} |Z^{\leq N}h^1|^2  dxdy \\
				& \lesssim \ep^2 t^{-2+2\sigma}\iint_{\Sext_t}(2+r-t)^{1+2\kappa}|\partial Z^{\leq N}h^1|^2dxdy\lesssim  \ep^4 t^{-2+4\sigma}.
			\end{aligned}
			\]
Summing up,
\[
\iint_{\Sext_t} (2+r-t)^{1+2\mu}(\partial  Z^{\le N} h^1_{LT})^2 dx dy\lesssim C_0^2\ep^2 t^{2(\mu-\kappa+\sigma)}(1+ t\, l(t))
\]			
which concludes the proof of \eqref{est_hLT1}.
	\end{proof}

	\subsection{The null and cubic terms}
	The combination of the energy assumptions and the decay bounds obtained in proposition \ref{prop:pointwise_ext} yield easily the following weighted $L^2(\Sext_t)$ estimates of the differentiated null and cubic terms.
	
	\begin{proposition} \label{prop:null+cubic}
		Fix $i=0,1$. Under the a-priori energy assumptions \eqref{Boot1_ext}-\eqref{Boot2_ext} we have
		\begin{equation} \label{null_ext}
			\|(2+r-t)^{\frac{1}{2}+i+\kappa}\partial^i Z^{\le N}{\bf Q}_{\alpha\beta}(\partial h, \partial h)\|_{L^2(\Sext_t)}
			\lesssim C_0^2\ep^2 t^{-1+ 2\sigma}\sqrt{l(t)} + C_0^2\ep^2 t^{-2+2\sigma}
		\end{equation}
		and 
		\begin{equation} \label{cub_ext}
			\|(2+r-t)^{\frac{1}{2}+i+\kappa}\partial^i Z^{\le N}G_{\alpha\beta}(h)(\partial h, \partial h)\|_{L^2(\Sext_t)}  \lesssim C_0^3
			\ep^3 t^{-2+3\sigma}.
		\end{equation}
	\end{proposition}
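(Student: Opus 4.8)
The plan is to reduce each differentiated null term $\partial^i Z^{\le N}\mathbf{Q}_{\alpha\beta}(\partial h,\partial h)$ to a sum of quadratic expressions by the Leibniz rule, then exploit the null structure $|Q_0(\partial\psi,\partial\varphi)|+|Q_{\alpha\beta}(\partial\psi,\partial\varphi)|\lesssim|\op\psi||\partial\varphi|+|\partial\psi||\op\varphi|$ so that every term carries a tangential derivative $\op\in\{\partial_t+\partial_r,\gd,\partial_y\}$ on one of the two factors, and finally close by a high/low decomposition in which the low-order factor is placed in $L^\infty$ via Proposition~\ref{prop:pointwise_ext} and the high-order factor in $L^2(\Sext_t)$ via the energy assumptions \eqref{boot1_ex}--\eqref{boot4_ex}. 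Throughout one writes $h=h^0+h^1$: the pieces carrying at least one $h^0$ factor are short range because $\partial^{\le1}Z^Jh^0$ is explicitly radial and $O(\ep(1+r)^{-1-|I|})$ by \eqref{h0_estimate}, while the purely-$h^0$ quadratic contribution, being comparable to $\tilde\Box_g h^0$, is accounted for together with the explicit source $F^{0,K}_{\alpha\beta}$; I therefore concentrate on the contributions with both factors equal to $h^1$.

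For the null terms with $i=0$, after Leibniz and the null-form bound each term is dominated by $|\op Z^{K_1}h^1||\partial Z^{K_2}h^1|$ (or the symmetric term) with $|K_1|+|K_2|\le N$ — the commutators $[Z^K,Q(\partial\cdot,\partial\cdot)]$ again produce lower-order null forms plus quadratic terms with an extra factor $(1+t+r)^{-1}$, which only help. Since $|K_1|+|K_2|\le N$, at least one index is $\le\lfloor N/2\rfloor\le N-3$. If that index is $|K_1|$, I bound $\op Z^{K_1}h^1$ pointwise by \eqref{KS2_ext}, i.e. $|\op Z^{K_1}h^1|\lesssim C_0\ep t^\sigma\sqrt{l(t)}\,(1+t+r)^{-1}(2+r-t)^{-\frac12-\kappa}$, so that, using $2+r-t\ge1$ and $1+t+r\gtrsim t$ on $\Sext_t$ together with \eqref{boot1_ex},
\[
\big\|(2+r-t)^{\frac12+\kappa}\,|\op Z^{K_1}h^1|\,|\partial Z^{K_2}h^1|\big\|_{L^2(\Sext_t)}\lesssim C_0\ep t^{-1+\sigma}\sqrt{l(t)}\,\big\|(2+r-t)^{\frac12+\kappa}\partial Z^{\le N}h^1\big\|_{L^2(\Sext_t)}\lesssim C_0^2\ep^2t^{-1+2\sigma}\sqrt{l(t)}.
\]
If instead it is $|K_2|$ that is low, I put $\partial Z^{K_2}h^1$ in $L^\infty$ via \eqref{KS1_ext} and $\op Z^{K_1}h^1$ in $L^2$ via \eqref{boot2_ex} (whose weight is $(2+r-t)^{\kappa}$ and whose right-hand side again carries $\sqrt{l(t)}$), reaching the same bound. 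The faster-decaying $t^{-2+2\sigma}$ remainder in \eqref{null_ext} collects the commutator contributions without the $\sqrt{l}$ weight, the $h^0$-cross terms, and the contributions where the $(1+t+r)^{-1}$-type factors far from the cone supply an extra power of $t^{-1}$. For $i=1$ one more derivative is distributed by Leibniz; the worst terms carry a second-order derivative on one factor, and one simply replaces the first-order bounds \eqref{KS1_ext}--\eqref{KS2_ext} and \eqref{boot1_ex}--\eqref{boot2_ex} by the second-order ones in \eqref{KS1_ext}--\eqref{KS2_ext} and by \eqref{boot3_ex}--\eqref{boot4_ex}, the matching of the weight $(2+r-t)^{\frac32+\kappa}$ being routine since $2+r-t\ge1$.

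For the cubic term \eqref{cub_ext}, write $G_{\alpha\beta}(h)(\partial h,\partial h)$ schematically as a sum of terms $c(h)\,\partial h\,\partial h$ with $c$ smooth and $c(0)=0$, so that the Leibniz rule produces triple products $Z^{K_1}c(h)\cdot\partial Z^{K_2}h\cdot\partial Z^{K_3}h$ with $|K_1|+|K_2|+|K_3|\le N$. At most one of the three indices exceeds $\lfloor N/2\rfloor\le N-2$, hence the two with small index go in $L^\infty$: $\partial Z^{K_j}h$ by \eqref{KS1_ext}, and the coefficient factor by \eqref{KS3_ext} together with $|h^0|\lesssim\ep(1+r)^{-1}$, each giving a gain of order $C_0\ep t^\sigma(1+t+r)^{-1}$; the remaining factor goes in $L^2(\Sext_t)$ via \eqref{boot1_ex} after absorbing the weight. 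Multiplying and using $1+t+r\gtrsim t$ yields $\lesssim C_0^2\ep^2t^{-2+2\sigma}\cdot C_0\ep t^\sigma=C_0^3\ep^3t^{-2+3\sigma}$. When the coefficient factor carries many vector fields one controls the undifferentiated $Z^{\le N}h^1$ in $L^2$ by the weighted Hardy inequality \eqref{hardy_ext}, at the cost of an $r^{-1}\lesssim t^{-1}$ which only improves the bound; the case $i=1$ is handled by distributing the extra derivative exactly as above.

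The argument involves no deep obstacle — these are indeed the most benign of the source terms — but three points require care: one must never need the tangential-derivative gain on both factors of a null term simultaneously, so that only a single power of $\sqrt{l(t)}$ appears in the end (matching the flux term of the a-priori energy and keeping the right-hand side of the energy inequality integrable in time); one must track the $(2+r-t)$ weights so that they distribute consistently between the two factors, which is painless here precisely because $2+r-t\ge1$ on $\Sext_t$; and one must route the $h^0$/mass contributions correctly, keeping in mind that the genuinely slowly decaying $t^{-2+\kappa}$-type pieces generated by the Schwarzschild tail belong to the explicit inhomogeneous term and not to the null or cubic terms estimated here.
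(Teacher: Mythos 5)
Your proposal follows the paper's own proof essentially verbatim: decompose $h=h^0+h^1$; apply the Leibniz rule and the fact that admissible vector fields preserve the null structure, reducing $\mathbf{Q}$ to sums $|\op Z^{K_1}h^1||\partial Z^{K_2}h^1|$; split into high/low according to whether $|K_1|$ or $|K_2|$ is $\le\lfloor N/2\rfloor\le N-3$, placing the low-index factor in $L^\infty$ via \eqref{KS1_ext}--\eqref{KS2_ext} and the high-index factor in $L^2$ via \eqref{boot1_ex}--\eqref{boot4_ex}; and for the cubic terms use the Hardy inequality \eqref{hardy_ext} to control the undifferentiated $Z^{K_1}h^1$ factor when it carries the bulk of the vector fields. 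The three cautionary remarks (only one $\sqrt{l(t)}$ per null term, weight bookkeeping using $2+r-t\geq 1$, separation of the Schwarzschild tail) also match the paper's concerns.

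One small misstatement worth correcting: the purely-$h^0$ pieces of $\mathbf{Q}_{\alpha\beta}(\partial h,\partial h)$ and $G_{\alpha\beta}(h)(\partial h,\partial h)$ are \emph{not} "accounted for together with the explicit source $F^{0,K}_{\alpha\beta}=Z^K\tilde\Box_g h^0$"; they are genuinely part of the quantities estimated in this proposition and $F^{0,K}_{\alpha\beta}$ is a distinct object (the reduced wave operator applied to $h^0$, which is treated separately in Lemma \ref{lem:box_h0}). This does not compromise your argument, since by \eqref{h0_estimate} these pure-$h^0$ interactions are explicit quantities of size $O(\ep^2(1+r)^{-4-i})$, and their weighted $L^2(\Sext_t)$ norm is $\lesssim\ep^2 t^{-2+\kappa}$, comfortably within the allowed $t^{-2+2\sigma}$ remainder; the paper simply dismisses them as "easily treated" and leaves the details to the reader, which is the correct routing.
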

	\begin{proof}
		We write $h = h^1 + h^0$ and inject this decomposition into ${\bf Q}_{\alpha\beta}$ and $G_{\alpha\beta}$. 
		We prove estimates \eqref{null_ext} and \eqref{cub_ext} for null and cubic interactions involving only $h^1$-factors. The remaining interactions, i.e. those involving at least one $h^0$-factor, can be easily treated thanks to \eqref{h0_estimate} so we leave the details to the reader.
		
		It is well-known that the admissible vector fields $Z$ preserve the null structure, in the sense that for any null form $Q$
		\[
		Z Q(\partial \phi, \partial \psi) = Q( \partial Z \phi, \partial \psi)+Q(\partial \phi, \partial Z \psi) + \tilde{Q}(\partial \phi, \partial \psi)
		\]
		where $\tilde{Q}$ is also a null form. 
		Together with the fundamental property
		\[
		|Q(\partial \phi, \partial \psi)|\lesssim |\op \phi| |\partial \psi| + |\partial \phi| |\op \psi|,
		\]
		it implies that for $i=0,1$
		\[
		|\partial^i Z^{\le N} {\bf Q}_{\alpha\beta}(\partial h^1, \partial h^1 )|\lesssim \sum_{\substack{|K_1|+|K_2|\le N \\ |I_1|+|I_2|=i}} |\op \partial^{I_1} Z^{K_1} h^1| |\partial \partial^{I_2} Z^{K_2}h^1| .
		\]
		We observe that at least one of the two indexes in the above summation has length smaller than $\floor{ N/2}$. Therefore, if $N$ is sufficiently large (e.g. $N\ge 6$) so that $\floor{N/2}\le N-3$ we deduce the following:
		
		- from \eqref{KS2_ext} and \eqref{boot1_ex} 
		\begin{multline*}
			\sum_{\substack{|K_1|+|K_2|\le N \\ |K_1|\le \floor{N/2}\\ |I_1|+|I_2|=i }} \|(2+r-t)^{\frac{1}{2}+i+\kappa} \op \partial^{I_1} Z^{K_1}h^1\ \partial \partial^{I_2} Z^{K_2}h^1\|_{L^2(\Sext_t)} \\
			\lesssim C_0\ep t^{-1+\sigma} \sqrt{l(t)} \sum_{j\le i} \|(2+r-t)^\frac{(i+j)}{2}\partial \partial^j Z^{\le N}h^1\|_{L^2(\Sext_t)}\lesssim C_0^2\ep^2 t^{-1+2\sigma}\sqrt{l(t)};
		\end{multline*}
		
		- from \eqref{KS1_ext} and \eqref{boot2_ex}
		\begin{multline*}
			\sum_{\substack{|K_1|+|K_2|\le N\\ |K_2|\le \floor{N/2}}} \|(2+r-t)^{\frac{1}{2}+i+\kappa} \op Z^{K_1}h^1\ \partial \partial^i Z^{K_2}h^1\|_{L^2(\Sext_t)} \\
			\lesssim C_0\ep t^{-1+\sigma} \| \op Z^{\le N}h^1\|_{L^2(\Sext_t)}\lesssim C_0^2\ep^2 t^{-1+2\sigma}\sqrt{l(t)};
		\end{multline*}
		
		- from \eqref{KS1_ext} and \eqref{boot4_ex}
		\begin{multline*}
			\sum_{\substack{|K_1|+|K_2|\le N-1 \\ |K_2|\le \floor{(N-1)/2}}} \|(2+r-t)^{\frac{3}{2}+\kappa} \op \partial Z^{K_1}h^1\ \partial Z^{K_2}h^1\|_{L^2(\Sext_t)} \\
			\lesssim C_0\ep t^{-1+\sigma} \|(2+r-t)^{\frac{1}{2}} \op \partial Z^{\le N-1}h^1\|_{L^2(\Sext_t)}\lesssim C_0^2\ep^2 t^{-1+2\sigma}\sqrt{l(t)}.
		\end{multline*}
		
		As concerns the cubic terms, we have that
		\[
		|\partial^i Z^{\le N}G_{\alpha\beta}(h^1)(\partial h^1, \partial h^1)|\lesssim \sum_{\substack{|K_1|+|K_2|+|K_3|\le N\\ |I_1|+|I_2|+|I_3| = i }} |\partial^{I_1}Z^{K_1}h^1| |\partial \partial^{I_2} Z^{K_2}h^1| |\partial \partial^{I_3} Z^{K_3}h^1| .
		\]
		From \eqref{KS1_ext}, inequality \eqref{hardy_ext} with $\beta = 2\kappa -1$ and \eqref{boot1_ex}, we deduce
		\begin{multline*}
			\sum_{\substack{|K_1|+|K_2|+|K_3|\le N\\ |K_2| + |K_3|\le \floor{N/2}\\ |I_2|+|I_3|=i } }   \|(2+r-t)^{\frac{1}{2}+i+\kappa} Z^{K_1}h^1\, \partial\partial^{I_2}Z^{K_2}h^1\, \partial \partial^{I_3}Z^{K_3}h^1\|_{L^2(\Sext_t)} \\
			\lesssim C_0^2\ep^2 t^{-2+2\sigma} \|(2+r-t)^{-\frac{3-i}{2}-\kappa} Z^{\le N}h^1\|_{L^2(\Sext_t)}\\ \lesssim C_0^2\ep^2 t^{-2+2\sigma} \|(2+r-t)^{\frac{1}{2}+\kappa}\partial Z^{\le N}h^1\|_{L^2(\Sext_t)} 
			\lesssim C_0^3\ep^3 t^{-2+3\sigma}.
		\end{multline*}
		The other interactions are easier to treat and their estimates are obtained similarly to what has been done above for the quadratic terms. We leave the details to the reader.		
	\end{proof}

	An immediate consequence of the pointwise bounds \eqref{KS1_ext}-\eqref{KS3_ext} is the following:
	
	\begin{proposition}
		Under the assumptions \eqref{Boot1_ext}-\eqref{Boot2_ext} we have that
		\begin{gather}
			|Z^{\le N-3}\mathbf{Q}_{\alpha\beta}(\partial h^1, \partial h^1)(t)|_{\frac12+2\kappa}\lesssim C_0^2\ep^2 t^{-1+2\sigma}\sqrt{l(t)}, \label{null_ext_point}\\
			|Z^{\le N-3}G_{\alpha\beta}(h^1)(\partial h^1, \partial h^1)(t)|_{1+3\kappa}\lesssim C_0^3\ep^3 t^{-2+3\sigma}.\label{cub_ext_point}
		\end{gather}
	\end{proposition}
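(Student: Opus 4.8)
The plan is to obtain both inequalities as direct pointwise consequences of the decay bounds \eqref{KS1_ext}--\eqref{KS3_ext} of Proposition~\ref{prop:pointwise_ext}, with no $L^2$ machinery involved: since we are after a \emph{pointwise} bound, every factor in the differentiated nonlinearity carries at most $N-3$ admissible derivatives, so Proposition~\ref{prop:pointwise_ext} applies to each factor separately and no $\lfloor N/2\rfloor$-type splitting of derivatives (as needed for the $L^2$ estimates in Proposition~\ref{prop:null+cubic}) is required. The only structural inputs are the null-form bound for $\mathbf Q$ and the at-least-cubic character of $G$.

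\textbf{Null terms.} First I would use that the admissible vector fields preserve the null structure, so that $Z^{\le N-3}\mathbf{Q}_{\alpha\beta}(\partial h^1,\partial h^1)$ is a finite linear combination of terms $\widehat Q(\partial Z^{K_1}h^1,\partial Z^{K_2}h^1)$ with $\widehat Q$ a constant-coefficient null form and $|K_1|+|K_2|\le N-3$, hence $|K_1|,|K_2|\le N-3$. Applying the fundamental bound $|\widehat Q(\partial\phi,\partial\psi)|\lesssim|\op\phi||\partial\psi|+|\partial\phi||\op\psi|$ and then \eqref{KS1_ext}--\eqref{KS2_ext}, which read on $\Sext_t$ as
\[
|\op Z^{\le N-3}h^1|\lesssim C_0\ep\, t^\sigma\sqrt{l(t)}\,(1+t+r)^{-1}(2+r-t)^{-\frac12-\kappa}
\]
and
\[
|\partial Z^{\le N-3}h^1|\lesssim C_0\ep\, t^\sigma (1+t+r)^{-1}(2+r-t)^{-1-\kappa},
\]
one gets $|Z^{\le N-3}\mathbf Q_{\alpha\beta}(\partial h^1,\partial h^1)|\lesssim C_0^2\ep^2 t^{2\sigma}\sqrt{l(t)}\,(1+t+r)^{-2}(2+r-t)^{-\frac32-2\kappa}$. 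Multiplying by the weight $(1+t+r)(2+r-t)^{\frac32+2\kappa}$ appearing in $|\cdot|_{\frac12+2\kappa}$ and using $(1+t+r)^{-1}\le t^{-1}$ on $\Sext_t$ yields \eqref{null_ext_point}; note that the factor $\sqrt{l(t)}$ enters exactly once, carried by the tangential derivative forced by the null structure.

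\textbf{Cubic terms.} Since $G_{\alpha\beta}(0)(\partial h,\partial h)=0$, I would write $G_{\alpha\beta}(h^1)(\partial h^1,\partial h^1)=B^{\mu\nu}_{\alpha\beta}(h^1)\,\partial_\mu h^1\,\partial_\nu h^1$ with $B(0)=0$, so $|B(h^1)|\lesssim|h^1|$ for small $h^1$. Distributing $Z^{\le N-3}$ via Leibniz and using the smallness of $h^1$ to control the composition with the smooth function $B$, one bounds $|Z^{\le N-3}G_{\alpha\beta}(h^1)(\partial h^1,\partial h^1)|$ by a sum of terms $\lesssim |Z^{K_0}h^1|\,|\partial Z^{K_1}h^1|\,|\partial Z^{K_2}h^1|$ times possibly further factors $|Z^{L}h^1|$, all multi-indices of length $\le N-3$. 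Using \eqref{KS3_ext} for the undifferentiated factors and \eqref{KS1_ext} for the two $\partial h^1$ factors, the leading (exactly cubic) term is $\lesssim C_0^3\ep^3 t^{3\sigma}(1+t+r)^{-3}(2+r-t)^{-2-3\kappa}$; multiplying by $(1+t+r)(2+r-t)^{2+3\kappa}$ and using $(1+t+r)^{-1}\le t^{-1}$ twice gives \eqref{cub_ext_point}. Each additional $h^1$-factor in the higher-order terms contributes an extra power of $C_0\ep$ (being bounded by the smallness of $h^1$), so those terms obey the same bound.

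\textbf{Main obstacle.} There is no genuine difficulty: both estimates are corollaries of Proposition~\ref{prop:pointwise_ext}. The only point requiring a little care is the bookkeeping in the cubic case — tracking the composition with the smooth coefficient $B$ and checking that every resulting factor still carries at most $N-3$ derivatives, so that Proposition~\ref{prop:pointwise_ext} applies to each of them — but this is routine precisely because we want a pointwise rather than an $L^2$ estimate.
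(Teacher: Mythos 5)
Your proof is correct and is precisely the argument the paper has in mind: the paper offers no written proof, stating only that the proposition is ``an immediate consequence of the pointwise bounds \eqref{KS1_ext}--\eqref{KS3_ext}'', and your computation supplies exactly the expected bookkeeping (null structure plus one tangential factor carrying $\sqrt{l(t)}$ for $\mathbf{Q}$, three pointwise factors with $|Z^{\le N-2}h^1|_{\kappa-1}$ absorbing the undifferentiated and coefficient factors for $G$). The exponent arithmetic and the use of $(1+t+r)^{-1}\le t^{-1}$ on $\Sext_t$ all check out.
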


	\subsection{The commutator terms}
	The goal of this section is to get suitable weighted $L^2(\Sext_t)$ estimates of the commutator terms $[Z^K, H^{\mu\nu}\partial_\mu\partial_\nu]h^1_{\alpha\beta}$ for $|K|\le N$. Such terms have a remarkable property when written in the null frame, which was first highlighted in \cite{LR10}. We present below a slightly different version of this, which involves expanding first in the null frame before evaluating the commutators.
	
	\begin{lemma}
		Let $K$ be any fixed multi-index and assume that $\pi^{\mu\nu}$ is a tensor satisfying
		\[
		| Z^{K'}\pi|\le C, \quad \forall\ |K'|\le \floor{|K|/2}.
		\]
		Then for any smooth function $\phi$
		\begin{equation}\label{comm_ext}
			\begin{aligned}
				|[Z^K, \pi^{\mu\nu}\partial_\mu\partial_\nu]\phi|  \lesssim \sum_{\substack{|K_1|+|K_2|\le |K| \\ |K_2|<|K|}} | Z^{K_1} \pi|_{\q L \q L}|\partial^2 Z^{K_2} \phi| + | Z^{K_1} \pi||\op \partial  Z^{K_2} \phi|  \\
				+  \sum_{|K_1|+|K_2|\le |K|} r^{-1}| Z^{K_1} \pi| |\partial  Z^{K_2} \phi| 
			\end{aligned}
		\end{equation}
		and
		\begin{equation}\label{comm_int}
			\begin{aligned}
				\Big|[Z^K, \pi^{\mu\nu}\partial_\mu\partial_\nu]\phi - \sum_{\substack{|K_1|+|K_2|\le |K|\\ |K_2|<|K|}} \big(Z^{K_1}\pi_{LL}\cdot \partial^2_t Z^{K_2}\phi + Z^{K_1}\pi_{4L}\cdot\partial_y\partial_t Z^{K_2}\phi + Z^{K_1}\pi_{44}\cdot\partial^2_y Z^{K_2}\phi\big)\Big|\\
				\lesssim \sum_{\substack{|K_1|+|K_2|\le |K|\\ |K_2|<|K|}} \frac{|t^2-r^2|}{t^2}|Z^{K_1}\pi| |\partial^2 Z^{K_2}\phi| +  |Z^{K_1}\pi| |\partial\pb_x Z^{K_2}\phi| +   \sum_{|K_1|+|K_2|\le |K|}  \frac{|Z^{K_1}\pi| |\partial Z^{K_2}\phi|}{1+t+r}.
			\end{aligned}
		\end{equation}
	\end{lemma}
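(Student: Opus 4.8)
The plan is to prove both inequalities by expanding $\pi^{\mu\nu}\partial_\mu\partial_\nu$ in the null frame and then distributing the vector fields $Z^K$ via the Leibniz rule, keeping careful track of which factor carries the ``top order'' index. The starting point is the decomposition of the flat second-order operator in the null frame: writing $\pi^{\mu\nu}\partial_\mu\partial_\nu = \pi^{UV}\cdot(\text{second-order operator built from } U,V)$, the only way to produce a genuine $\partial^2$ with no tangential derivative and no small coefficient is through the $\Lb\Lb$ component, which is contracted against $\pi^{LL}$ (recall $\overline{g}^{L\Lb} = -1/2$, $\overline{g}^{LL}=\overline{g}^{\Lb\Lb}=\overline{g}^{LA}=0$). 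Hence schematically
\[
\pi^{\mu\nu}\partial_\mu\partial_\nu\phi = \pi_{LL}\,\partial_\ub^2\phi + (\text{terms with at least one }\op\text{ derivative on }\phi) + (\text{first-order terms with }r^{-1}\text{ weights}),
\]
where the first-order terms arise from commutators of $\Lb,L,A$ among themselves, which as recalled in the excerpt produce coefficients of size $r^{-1}$ (for the angular pieces) acting on $\partial\phi$. This is exactly the pointwise bound $|\pi^{\mu\nu}\partial_\mu\partial_\nu\psi|\lesssim |\pi_{LL}||\partial^2\psi| + |\pi||\partial\op\psi|$ quoted in the introduction, in its refined form that tracks the $r^{-1}$ lower-order terms. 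Replacing $\partial_\ub^2$ by $\partial^2$ and noting $\op_\alpha$ ranges over $\q L\q T$-type tangential directions gives the model estimate for $K$ empty; for general $K$ I commute.

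For \eqref{comm_ext}: I apply $Z^K$ to the equality $\pi^{\mu\nu}\partial_\mu\partial_\nu\phi = (\text{null-frame expansion above})$ and subtract $\pi^{\mu\nu}\partial_\mu\partial_\nu(Z^K\phi)$. By the Leibniz rule every term in $Z^{K}(\pi^{\mu\nu}\partial_\mu\partial_\nu\phi)$ is a product $Z^{K_1}(\text{coefficient})\cdot Z^{K_2}(\text{derivatives of }\phi)$ with $|K_1|+|K_2|\le|K|$; the terms with $|K_2|=|K|$ reconstruct $\pi^{\mu\nu}\partial_\mu\partial_\nu Z^K\phi$ up to commutators of $Z^K$ with the frame, and those commutators are handled by the null-frame commutator formula $[Z,\op_\alpha] = c\op + d\,r^{-1}\partial + e\,r^{-1}\Omega_{0r}$ together with $[Z,\Lb] = O(1)\partial$, producing exactly the $r^{-1}|Z^{K_1}\pi||\partial Z^{K_2}\phi|$ remainder and lowering the order. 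The remaining terms genuinely have $|K_2|<|K|$ and, by the frame expansion, either have $\pi$ contracted in the $\q L\q L$ slot (contributing $|Z^{K_1}\pi|_{\q L\q L}|\partial^2 Z^{K_2}\phi|$) or carry a tangential derivative $\op$ on $\phi$ (contributing $|Z^{K_1}\pi||\op\partial Z^{K_2}\phi|$) or are lower order (contributing the $r^{-1}$ term). An induction on $|K|$ closes this, the inductive step being purely the commutator bookkeeping above.

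For \eqref{comm_int}: the difference is that I expand in the \emph{hyperbolic} framework rather than isolating only $\partial_\ub^2$. Using \eqref{null-hyp2}--\eqref{null-hyp-coeff}, $\pi^{\mu\nu}\partial_\mu\partial_\nu = \pi^{UV}c^{00}_{UV}\partial_t^2 + (\text{terms with }\pb_\bma\text{ or }\partial_y\text{ on one slot}) + (\text{first order})$, and \eqref{H1-UV-cUV-exp} shows $4(t/s)^2\pi^{UV}c^{00}_{UV}$ equals $\pi_{LL}$ up to the coefficients $\tfrac{s^2}{(t+r)^2}$ and the $\pi_{L\Lb}$ term; since $s^2/t^2 = (t^2-r^2)/t^2$ this gives $\pi^{UV}c^{00}_{UV}\partial_t^2\phi = \tfrac14\pi_{LL}\partial_t^2\phi + O\!\big(\tfrac{|t^2-r^2|}{t^2}\big)|\pi||\partial^2\phi|$. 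Because the surviving $\partial_t^2$ coefficient pairs with $\pi_{LL}$ — and, after restoring the $y$-direction via the $c^{44}$ and $c^{04}$ entries of \eqref{c_coefficients}, also with $\pi_{4L}$ and $\pi_{44}$ — I subtract precisely these three ``main'' products before estimating, which is what the left side of \eqref{comm_int} does. Everything else is either multiplied by $|t^2-r^2|/t^2$, or has a genuine $\pb_x$ or $\partial_y$ on $\phi$ (so is of the form $|\pi||\partial\pb_x Z^{K_2}\phi|$ using $\partial_y = \pb_4$), or is first-order with the $(1+t+r)^{-1}$ weight coming from the $d_{UV}$ coefficients in \eqref{null-hyp-coeff}; commutators of $Z^K$ with $\pb_\bma,\partial_t,\partial_y$ are controlled by the hyperbolic-derivative commutator identities listed in the Notation section, which again preserve the structure and lower the order, so an induction on $|K|$ finishes the argument. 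The main obstacle is organizational rather than conceptual: making sure that when $Z^K$ hits a \emph{frame coefficient} (an $L^\alpha$, $c^{\alpha\beta}_{UV}$, or commutator coefficient) rather than $\pi$ or $\phi$, the extra factor is either harmless (bounded, for $L^\alpha$ and $c_{TU}$ with $|I|\ge1$ by \eqref{null-hyp-coeff}) or supplies an $r^{-1}$ or $(1+t+r)^{-1}$ gain, so that no term escapes the stated right-hand sides; this is where the hypothesis $|Z^{K'}\pi|\le C$ for $|K'|\le\lfloor|K|/2\rfloor$ is used, to absorb the products where the derivative count on $\pi$ stays low.
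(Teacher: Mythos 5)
Your proposal is correct and takes essentially the same approach as the paper: you expand $\pi^{\mu\nu}\partial_\mu\partial_\nu=\pi^{UV}UV$ in the null frame for \eqref{comm_ext} (the paper's display \eqref{comm_null1} plus the frame-commutator bound \eqref{comm_null2}), and for \eqref{comm_int} you use \eqref{null-hyp2}--\eqref{null-hyp-coeff} together with \eqref{H1-UV-cUV-exp} to isolate the $\pi_{LL}\partial_t^2$, $\pi_{4L}\partial_y\partial_t$, $\pi_{44}\partial_y^2$ terms and push the rest into the $(t^2-r^2)/t^2$, $\partial\pb_x$ and $(1+t+r)^{-1}$ remainders, exactly as the paper does. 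One small remark: the hypothesis $|Z^{K'}\pi|\le C$ for $|K'|\le\lfloor|K|/2\rfloor$ is not actually invoked in the bookkeeping here — the right-hand sides of \eqref{comm_ext}--\eqref{comm_int} simply keep the bilinear products; that smallness assumption matters only when the lemma is applied (e.g.\ after reconstituting $H^{\mu\nu}$ from $h$) — so your final sentence attributes the hypothesis to a step where it plays no role in this particular proof.
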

	\begin{proof}
		Let $U, V$ denote any vector field in $\q U$.
		Inequality \eqref{comm_ext} follows from the following decomposition
		\begin{equation}\label{comm_null1}
			\begin{aligned}
				& [ Z^K, \pi^{\mu\nu}\partial_\mu\partial_\nu]\phi = [ Z^K, \pi^{UV}UV]\phi  =\hspace{-10pt} \sum_{\substack{|K_1|+|K_2|\le |K|\\ |K_2|<|K|}}\hspace{-10pt}( Z^{K_1}\pi^{UV}) UV  Z^{K_2}\phi +  Z^{K_1}\pi^{UV}\ [ Z^{K_2}, UV] \phi
			\end{aligned}
		\end{equation}
		and the fact that
		\begin{equation}\label{comm_null2}
			|[ Z^J, TU]\phi|\lesssim \sum_{|J'|<|J|} |\op \partial  Z^{J'}\phi| + \sum_{|J''|\le |J|} r^{-1}|\partial  Z^{J''} \phi|.
		\end{equation}
		Using instead \eqref{eq:curved_part_semihyp} we find that\footnote{ We recall that $\pb_0=\partial_t$}
		\[
		\begin{aligned}
			& [Z^K, \pi^{\mu\nu}\partial_\mu\partial_\nu]\phi  = [Z^K, \pi^{\bmmu\bmnu}\partial_\bmmu\partial_\bmnu]\phi + [Z^K, \pi^{4\nu}\partial_4\partial_\nu]\phi +  [Z^K, \pi^{\nu 4}\partial_\nu \partial_4]\phi\\
			& = [Z^K, \pi^{UV}c^{00}_{UV}\partial^2_t]\phi +[Z^K, \pi^{UV}( c_{UV}^{\bma \bmbeta}\pb_\bma\pb_\bmbeta+   c_{UV}^{\bmalph \bmb}\pb_\bmalph\pb_\bmb + d_{UV}^{\mu}\pb_\mu)]\phi \\
			& \quad + [Z^K, \pi^{4U}\partial_y U]\phi +  [Z^K, \pi^{U 4}U \partial_4]\phi
		\end{aligned}
		\]
		where, thanks to the fact that $|\partial^I\Gamma^J c^{00}_{LL}|\lesssim_{I, J} (t^2-r^2)/t^2$ and $\partial_\bmi =\pb_\bmi - \frac{x_\bmi}{t}\partial_t$, and up to homogeneous zero-order coefficients, we schematically have the following equalities
		\[
		[Z^K, \pi^{UV}c^{00}_{UV}\partial^2_t]\phi = \hspace{-5pt}\sum_{\substack{|K_1|+|K_2|\le |K|\\|K_2|<|K|}}\hspace{-5pt} Z^{K_1}\pi_{LL}\cdot\partial^2_t Z^{K_2}\phi + \frac{|t^2-r^2|}{t^2}Z^{K_1}\pi\cdot\partial^2 Z^{K_2}\phi + Z^{K_1}\pi\cdot \partial\pb_x Z^{K_2}\phi,
		\]
		\[
		[Z^K, \pi^{UV}( c_{UV}^{\bma \bmbeta}\pb_\bma\pb_\bmbeta+   c_{UV}^{\bmalph \bmb}\pb_\bmalph\pb_\bmb + d_{UV}^{\mu}\pb_\mu)]\phi = \hspace{-5pt}\sum_{\substack{|K_1|+|K_2|\le |K|\\ |K_2|<|K|}}\hspace{-5pt} Z^{K_1}\pi\cdot \partial\pb_x Z^{K_2}\phi + \frac{Z^{K_1}\pi\cdot\partial Z^{K_2} \phi}{1+t+r},
		\]
		\[
		[Z^K, \pi^{4U}\partial_y U]\phi = \sum_{\substack{|K_1|+|K_2|\le |K|\\|K_2|<|K|}} Z^{K_1}\pi_{4L}\cdot\partial_y\partial_t Z^{K_2}\phi + Z^{K_1}\pi_{44}\cdot \partial^2_yZ^{K_2}\phi + Z^{K_1}\pi\cdot\partial_y \pb_x Z^{K_2}\phi.
		\]
		
	\end{proof}

	\begin{proposition}\label{lem:comm_ext}
		Under the energy assumptions \eqref{Boot1_ext}-\eqref{Boot2_ext} we have for $i=0,1$
		\begin{equation}\label{ext_commutator}
			\begin{aligned}
				\left\|(2+r-t)^{\frac{1}{2}+i +\kappa} [\partial^i Z^{\le N}, H^{\mu\nu}\partial_\mu\partial_\nu]h^1_{\alpha\beta} \right\|_{L^2(\Sext_t)}  \lesssim  \ep t^{-1}E^{\text{e}, i+\kappa}(t, \partial^i Z^{\le N-i}h^1_{\alpha\beta})^{1/2} \\
				+  \ep^2t^{-(\kappa-\rho)+2\sigma} \big( t^{-1/2}\sqrt{l(t)}  +  t^{-1}\big).
			\end{aligned}
		\end{equation}
	\end{proposition}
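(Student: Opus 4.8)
The plan is to expand the commutator via the pointwise estimate \eqref{comm_ext}, applied with $\pi=H$ and $\phi=h^1_{\alpha\beta}$, and then insert the splitting $H=H^0+H^1$. This reduces \eqref{ext_commutator} to estimating, in the norm $\|(2+r-t)^{\frac12+i+\kappa}\,\cdot\,\|_{L^2(\Sext_t)}$, the three families of bilinear terms
\[
|Z^{K_1}H|_{\q L\q L}\,|\partial^2 Z^{K_2}\phi|,\qquad |Z^{K_1}H|\,|\op\partial Z^{K_2}\phi|,\qquad r^{-1}|Z^{K_1}H|\,|\partial Z^{K_2}\phi|,
\]
with $|K_1|+|K_2|\le N+i$ and $|K_2|<N+i$ in the first two families. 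Since $\floor{N/2}\le N-3$, one of $|K_1|,|K_2|$ is always $\le N-3$, and I would split each family according to whether it is $|K_1|$ or $|K_2|$ that is small, placing the low-order factor in $L^\infty$.

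In the regime $|K_2|\le\floor{N/2}$, the factors $\partial^2 Z^{K_2}\phi$, $\op\partial Z^{K_2}\phi$, $\partial Z^{K_2}\phi$ would be bounded pointwise by the weighted Sobolev estimates \eqref{KS1_ext}--\eqref{KS2_ext}, which carry genuine powers of $(1+t+r)$ and $(2+r-t)$, and $Z^{K_1}H$ would be placed in $L^2(\Sext_t)$. For the full tensor this follows from the energy bounds \eqref{boot1_ex}--\eqref{boot3_ex} combined, whenever $H$ appears undifferentiated, with the weighted Hardy inequality \eqref{hardy_ext} used to trade a derivative against an $r^{-1}$. For the $\q L\q L$-component in the first family I would first use Lemma~\ref{lem:wave_cond_h1}, specifically \eqref{ineq_H1lt_higher}, to replace $|Z^{K_1}H|_{\q L\q L}$ by $\sum_{|K'|\le|K_1|}\big(|\op Z^{K'}H|+r^{-1}|Z^{K'}H|\big)$ plus quadratic and mass cut-off terms, and then control the $\op$-part in $L^2$ by \eqref{boot2_ex} (this is where the integrable factor $\sqrt{l(t)}$ enters), the $r^{-1}$-part by \eqref{hardy_ext}, the quadratic part by the smallness of the solution, and the mass cut-off by the fact that it is supported in a compact subset of $\dext$. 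The one genuinely top-order case $|K_1|=N$ in the first family escapes this scheme and is handled separately below. All $H^0$-contributions are estimated directly from \eqref{h0_estimate}, using $|Z^{K_1}H^0|\lesssim\ep(1+r)^{-1}\lesssim\ep t^{-1}$ and $|\partial Z^{K_1}H^0|\lesssim\ep(1+r)^{-2}\lesssim\ep t^{-2}$ in $\dext$.

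In the complementary regime $|K_1|\le\floor{N/2}$, I would estimate $Z^{K_1}H$ pointwise --- the full tensor by \eqref{KS1_ext} and \eqref{KS3_ext}, its $\q L\q L$-component by the enhanced wave-condition decay \eqref{der_hLT_ext}--\eqref{est_hLT2} --- and control the remaining factor $\partial^2 Z^{K_2}\phi$, $\op\partial Z^{K_2}\phi$ or $\partial Z^{K_2}\phi$ in $L^2(\Sext_t)$ by \eqref{boot1_ex}--\eqref{boot4_ex}. The borderline term $\ep t^{-1}E^{\text{e},i+\kappa}(t,\partial^i Z^{\le N-i}h^1_{\alpha\beta})^{1/2}$ on the right of \eqref{ext_commutator} arises precisely from the pieces of \eqref{comm_ext} in which $H$, and more precisely its Schwarzschild part $H^0_{\q L\q L}$, of size $\lesssim\ep/r\lesssim\ep t^{-1}$ in $\dext$, appears undifferentiated and multiplies a top-order second derivative of $h^1$ controlled by the energy; every other contribution carries a genuine gain $t^{-(\kappa-\rho)+2\sigma}$, once the free small parameters $0<\mu\le\rho<\kappa$ of \eqref{est_hLT1} are fixed, together with either an extra $t^{-1/2}\sqrt{l(t)}$ (whenever an $\op$-gain is exploited) or an extra $t^{-1}$ (whenever the $r^{-1}$-factor or a derivative landing on $H^0$ is exploited).

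The main obstacle, exactly as in \cite{LR10}, is the first family of terms at top order: $\partial^2 Z^{K_2}\phi$ admits no tangential-derivative improvement, so all the smallness must come from the coefficient $Z^{K_1}H_{\q L\q L}$, while the a priori energies only provide $L^2$ control of $\partial Z^{\le N}H$ and not of $Z^{\le N}H_{\q L\q L}$ itself. Recovering the latter with the correct weights is precisely what the sharpened spherical $L^2$ estimate \eqref{est_hLT1} for $Z^{\le N}h^1_{LT}$ achieves --- its proof integrating the wave-condition identity along the characteristics of $\partial_{\ub}$ --- and, paired with the pointwise decay of $\partial^2 Z^{\le N-3}h^1$ from \eqref{KS1_ext}, it closes this term (here the freedom $0<\mu\le\rho$ in \eqref{est_hLT1} is what is used). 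The remaining cases then follow from the bookkeeping sketched above.
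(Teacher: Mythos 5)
Your decomposition of the commutator via \eqref{comm_ext}, the splitting $H=H^0+H^1$, the attribution of the $\ep t^{-1}E^{\text{e},i+\kappa}$ term to the undifferentiated Schwarzschild part $H^0$, and the identification of \eqref{est_hLT1} and \eqref{ineq_H1lt_higher} as the key ingredients are all correct and coincide with the paper's argument. The place where your plan breaks is the functional-analytic pairing you propose for the top-order $\q L\q L$ piece of the first family, and more generally the binary $L^\infty(\Sext_t)\times L^2(\Sext_t)$ split. Pairing the spherical $L^2$ bound \eqref{est_hLT1} for $Z^{\le N}h^1_{LL}$ with the pointwise bound \eqref{KS1_ext} for $\partial^2$ and then integrating in $r$ gives, for $i=1$,
\[
\int_{r\ge t-1}\|Z^{\le N}h^1_{LL}\|^2_{L^2(\S^2\times\S^1)}\,dr
\lesssim \frac{\ep^2 t^{2\sigma}}{t^{2(\kappa-\mu)}}\int_{r\ge t-1}(2+r-t)^{-2\mu}\Big(\frac{l(t)}{r}+\frac{1}{r^2}\Big)dr
\lesssim \ep^2 t^{2\sigma-2\kappa}\big(l(t)+t^{-1}\big),
\]
since for $\mu<\kappa<\tfrac12$ the $r$-integral of $(2+r-t)^{-2\mu}r^{-1}$ over $[t-1,\infty)$ is of size $t^{-2\mu}$ rather than $t^{-1}$. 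Your total bound for this piece is therefore $\ep^2 t^{2\sigma-\kappa}(\sqrt{l(t)}+t^{-1/2})$, which is a factor $t^{1/2}$ larger than the claimed $\ep^2 t^{-(\kappa-\rho)+2\sigma}(t^{-1/2}\sqrt{l(t)}+t^{-1})$. This is not merely cosmetic: plugged into the Gr\"onwall argument of Proposition \ref{prop:bootstrap_ext}, the time integral $\int \ep^3 \tau^{3\sigma-\kappa}(\sqrt{l(\tau)}+\tau^{-1/2})\,d\tau$ diverges whenever $\kappa<\tfrac12$, which is the case of interest, so the bootstrap does not close. The same difficulty appears for the second family at $i=1$ when $Z^{K_1}$ contains no coordinate derivative, where neither the weighted Hardy inequality \eqref{hardy_ext} (which trades a derivative for a factor $(2+r-t)^{-1}$, not $r^{-1}$) nor a naive global split produces the right weights.

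The paper avoids this by a different H\"older pairing: instead of $L^\infty(\Sext_t)\times L^2(\Sext_t)$, it applies H\"older on the cross-sections $\S^2\times\S^1$ with exponents $(p_1,p_2)\in\{(2,\infty),(\infty,2),(4,4)\}$ chosen by multi-index length, and then uses the Sobolev embeddings $H^2(\S^2\times\S^1)\subset L^\infty$ and $H^1(\S^2\times\S^1)\subset L^4$ to reduce \emph{both} factors to $L^2(\S^2\times\S^1)$ of quantities with $\le N$ vector fields. The spherical $L^2$ bound of $Z^{\le N}h^1_{LL}$ from \eqref{est_hLT1} (or of $Z^{\le N}h^1$ from \eqref{sob+hardy2_ext}) is then taken pointwise in $r$, while the spherical $L^2$ of $\partial^2 Z^{\le N-1}h^1$ (resp.\ $\op\partial Z^{\le N-1}h^1$) recombines upon $r$-integration into the full weighted $L^2(\Sext_t)$ norm controlled by the energy \eqref{boot3_ex} (resp.\ \eqref{boot4_ex}). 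This keeps the second factor as an $L^2$ energy rather than an $L^\infty$ decay bound and is precisely what supplies the extra $t^{-1/2}$ you are missing at $i=1$. Apart from this mechanism, your write-up also conflates the integration-along-characteristics proof of \eqref{est_hLT2} with the Sobolev--Hardy proof of \eqref{est_hLT1}, though this is a minor attribution error rather than a gap in the argument.
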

	\begin{proof}
		We set $\phi = h^1_{\alpha\beta}$ in \eqref{comm_ext} and begin by observing that, for every $K$ with $|K|\le N$, the terms in the last line of the right hand side have already been estimated. In fact, the cubic terms satisfy \eqref{cub_ext} and the following bound was obtained in the proof of proposition \ref{prop:weak_ext}
		\[
		\sum_{\substack{i=0,1\\ |K_1|+|K_2|\le N\\ |I_1|+|I_2| = i}} \left\|(2+r-t)^{\frac{1}{2}+i+\kappa}\, r^{-1} \partial^{I_1} Z^{K_1}h \cdot \partial \partial^{I_2} Z^{K_2} h\right\|_{L^2(\Sext_t)} \lesssim \ep^2 t^{-\frac{3}{2}+2\sigma}.
		\]
		Using \eqref{h0_estimate} and \eqref{boot3_ex}, \eqref{boot4_ex}, it is straightforward to prove that for $i=0,1$
		\[
		\begin{aligned}
			\sum_{\substack{|K_1|+|K_2|\le N\\ |I_1|+|I_2| = i}} \big\|(2+r-t)^{\frac{1}{2}+i+\kappa} \partial^{I_1} Z^{K_1}h^0\cdot  \partial^2 \partial^{I_2}Z^{K_2}h^1 \big\|_{L^2(\Sext_t)}\lesssim \ep t^{-1} E^{\text{e}, i+\kappa}(\partial^i Z^{\le N}h^1)^{1/2},
		\end{aligned}
		\]
		hence we only focus on estimating the terms of the first line in the right hand side of \eqref{comm_ext} with $h$ replaced by $h^1$. We choose exponents $(p_1,p_2)$ such that
		\[
		(p_1,p_2) =\begin{cases}
			(2, \infty), \quad & \text{if } |K_1|=N \\
			(\infty, 2), \quad & \text{if } |K_2| = N-1 \\
			(4,4), \quad & \text{otherwise}.
		\end{cases}
		\]
		From the Sobolev's injections $H^2(\S^2\times \S^1)\subset L^\infty(\S^2\times \S^1)$ and $H^1(\S^2\times \S^1)\subset L^4(\S^2\times \S^1)$
		\begin{multline*}
			\sum_{\substack{ |K_1|+|K_2|\le N\\ |I_2| =i}} \big\|(2+r-t)^{\frac{1}{2}+i+\kappa} Z^{K_1}h^1\cdot \op\partial\partial^{I_2} Z^{K_2}h^1 \big\|_{L^2(\Sext_t)}^2 \\[-5pt]
			\le \sum_{\substack{ |K_1|+|K_2|\le N\\ |I_2| =i}}  \int_{r\ge t-1} (2+r-t)^{1+2i+2\kappa}\| Z^{K_1}h^1\|^2_{L^{p_1}(\S^2\times \S^1)} \|\op\partial \partial^{I_2} Z^{K_2}h^1\|^2_{L^{p_2}(\S^2\times \S^1)} r^2 dr\\
			\lesssim \int_{r\ge t-1}(2+r-t)^{1+2i+2\kappa} \| Z^{\le N}h^1\|^2_{L^2(\S^2\times \S^1)} \|\op \partial  Z^{\le N-1}h^1\|^2_{L^2(\S^2\times \S^1)} r^2 dr 
		\end{multline*}
		so using the inequality \eqref{sob+hardy2_ext} with $\beta=\kappa$ and the energy assumptions \eqref{boot1_ex}, \eqref{boot2_ex} we get
		\[
		\begin{aligned}
			&\lesssim \left\|(2+r-t)^{\frac{1}{2}+\kappa} \partial Z^{\le N}h^1\right\|^2_{L^2(\Sext_t)} \int_{r\ge t-1} (2+r-t)^{1+2i}r^{-2}\|\op \partial Z^{\le N-1}h^1\|^2_{L^2(\S^2\times\S^1)} r^2 dr \\
			&\lesssim t^{-1-2\kappa}\left\|(2+r-t)^{\frac{1}{2}+\kappa} \partial Z^{\le N}h^1\right\|^2_{L^2(\Sext_t)}\left\|(2+r-t)^{1+\kappa} \op\partial Z^{\le N-1}h^1\right\|^2_{L^2(\Sext_t)}\lesssim \ep^4 t^{-1-2\kappa+4\sigma} l(t).
		\end{aligned}
		\]
		The same Sobolev's embeddings, coupled with the decay bound \eqref{est_hLT1} and the energy assumption \eqref{boot1_ex}, also yield for $i=0,1$
		\[
		\begin{aligned}
			& \sum_{\substack{ |K_1|+|K_2|\le N\\ |I_2| =i}} \big\|(2+r-t)^{\frac{1}{2}+i+\kappa} Z^{K_1}h^1_{LL}\cdot \partial^2 \partial^{I_2}  Z^{K_2}h^1 \big\|_{L^2(\Sext_t)}^2 \\
			& \lesssim  \int_{r\ge t-1}(2+r-t)^{1 + 2i+2\kappa} \| Z^{\le N}h^1_{LL}\|^2_{L^2(\S^2\times \S^1)} \|\partial \partial Z^{\le N-1}h^1\|^2_{L^2(\S^2\times \S^1)} r^2 dr \\
			& \lesssim \ep^2t^{-2(\kappa-\rho) + 2\sigma} (t^{-1}l(t)  + t^{-2}) \left\|(2+r-t)^{\frac{1}{2}+i+\kappa} \partial \partial  Z^{\le N-1}h^1 \right\|_{L^2(\Sext_t)}^2 \\
			& \lesssim \ep^4 t^{-2(\kappa-\rho) + 4\sigma} (t^{-1}l(t)  + t^{-2}).
		\end{aligned}
		\]
		Finally, when $i=1$ the remaining terms to discuss are of the form $\partial Z^{K_1}h^1_{LL}\cdot \partial^2 Z^{K_2}h^1$ and $\partial Z^{K_1}h^1\cdot \op\partial Z^{K_2}h^1$ but those behave like null terms (the former thanks to \eqref{ineq_H1lt_higher} for $h^1$) and hence satisfy \eqref{null_ext}. The details are left to the reader.
		
	\end{proof}

	We also have the following pointwise estimate of the commutator terms involving a smaller number of vector fields. It will be useful in the proof of Lemma \ref{prp:good_coeffII}.
	
	\begin{lemma}
		Under the energy assumptions \eqref{Boot1_ext}-\eqref{Boot2_ext}, there exists $\delta'>0$ such that
		\begin{equation} \label{comm_ext_point}
			\big|[Z^{\le N-4}, H^{1, \mu\nu}\partial_\mu\partial_\nu]\hab\big|_\kappa \lesssim C_0^2\ep^2 t^{-2+2\sigma}\sqrt{l(t)} + \ep^2 t^{-2-\delta'}(2+r-t)^{-\frac12}.
		\end{equation}
	\end{lemma}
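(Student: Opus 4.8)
The plan is to start from the pointwise commutator decomposition \eqref{comm_ext}, applied with $\pi^{\mu\nu}=H^{1,\mu\nu}$ and $\phi=h^1_{\alpha\beta}$, and to bound each of the resulting products using the fixed-time pointwise estimates of Proposition~\ref{prop:pointwise_ext} together with the enhanced bounds \eqref{der_hLT_ext}--\eqref{est_hLT1} for the $LT$-components. The key structural simplification is that we commute only with $Z^{\le N-4}$: in every product $Z^{K_1}H^1\cdot(\cdots)Z^{K_2}h^1$ on the right-hand side of \eqref{comm_ext} one has $|K_1|,|K_2|\le N-4\le N-3$, so each factor is controlled by a \emph{sharp} pointwise estimate and no $H^2$-Sobolev embedding (which would consume additional vector fields) is required. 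I would then estimate separately the three groups of terms appearing in \eqref{comm_ext}.

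First, the \emph{good-coefficient terms} $|Z^{K_1}H^1|_{\q L \q L}\,|\partial^2 Z^{K_2}h^1|$. Here the coefficient involves only the ``good'' direction $LL$, which is not seen by the energy and must be handled through the wave-coordinate condition: by Lemma~\ref{lem:wave_cond_h1}, estimate \eqref{ineq_H1lt_higher}, the quantity $Z^{K_1}H^1_{LL}$ inherits the strong spacetime decay \eqref{est_hLT2} (obtained by integrating $\partial_{\ub}$ from spacelike infinity), up to the explicit mass term $M\chi_0(t/2\le r\le 3t/4)(1+t+r)^{-2}$; paired with \eqref{KS1_ext} for $\partial^2 Z^{K_2}h^1$, this produces the contribution $\ep^2 t^{-2-\delta'}(2+r-t)^{-1/2}$, the polynomial loss $\delta'>0$ reflecting that on $\operatorname{supp}\chi_0$ one has $2+r-t\sim r\sim t$, so that the weight $(2+r-t)^{1+\kappa}$ built into $|\cdot|_\kappa$ is not for free. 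Second, the \emph{tangential terms} $|Z^{K_1}H^1|\,|\op\partial Z^{K_2}h^1|$: the second factor gains the extra decay of a tangential second derivative recorded in \eqref{KS2_ext} (hence carries the $\sqrt{l(t)}$ factor inherited from the spacetime part of the exterior energy), while the first factor is bounded by \eqref{KS3_ext}; this group is responsible for the $C_0^2\ep^2 t^{-2+2\sigma}\sqrt{l(t)}$ contribution. Third, the \emph{lower-order terms} $r^{-1}|Z^{K_1}H^1|\,|\partial Z^{K_2}h^1|$: the extra $r^{-1}\lesssim t^{-1}$ makes them strictly better than the preceding two once \eqref{KS3_ext} and \eqref{KS1_ext} are used.

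The main obstacle is the bookkeeping, in the first two groups, of the competing weights in $t+r$, $r-t$ and the $|\cdot|_\kappa$-weight, and in particular deciding which factors are allowed to feed into each of the two contributions on the right-hand side: only the tangential ($\op\partial$) factors force the $\sqrt{l(t)}$ into the estimate, while only the explicit mass term in \eqref{ineq_H1lt_higher} forces the polynomial $\delta'$-loss together with the $(2+r-t)^{-1/2}$ weight; everything else must be shown to decay faster than both. Carrying this out cleanly relies essentially on the enhanced $LT$-estimates \eqref{der_hLT_ext}--\eqref{est_hLT2}, on the algebraic identities relating $\op\partial$ to $(s/t)^2\partial^2$ and lower-order terms, and on the fact that throughout the exterior region $2+r-t\ge 1$, so that superfluous powers of $2+r-t$ may be discarded when convenient.
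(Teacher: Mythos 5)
Your overall strategy agrees with the paper's proof: start from the commutator decomposition \eqref{comm_ext} with $\pi = H^{1,\mu\nu}$ and $\phi = h^1_{\alpha\beta}$, observe that the restriction to $Z^{\le N-4}$ means every factor is controlled by a sharp pointwise bound from Proposition~\ref{prop:pointwise_ext} (rather than via an angular Sobolev embedding as in the $L^2$-lemma), and then treat the three groups of terms using, respectively, $(\eqref{est_hLT2},\eqref{KS1_ext})$, $(\eqref{KS3_ext},\eqref{KS2_ext})$, and $(\eqref{KS3_ext},\eqref{KS1_ext})$.

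However, your explanation of the origin of the $\delta'$-loss in the first group is incorrect, and this is a genuine conceptual gap. You attribute the loss to the explicit mass term $M\chi_0(t/2\le r\le 3t/4)(1+t+r)^{-2}$ appearing in \eqref{ineq_H1lt_higher}, claiming that ``on $\operatorname{supp}\chi_0$ one has $2+r-t\sim r\sim t$.'' Neither statement is right: on $\operatorname{supp}\chi_0$ one has $r\le 3t/4$ so $2+r-t\le 2$, not $\sim t$; and the intersection $\operatorname{supp}\chi_0\cap\dext_{T_0}$ is a \emph{bounded} set (since $\Sext_t$ requires $r\ge\sqrt{(t-1)^2-1}>3t/4$ for $t$ large), so the mass term vanishes identically for $t$ large enough and contributes at most an $O(\ep^2)$ constant near $t=2$. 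The actual $\delta'$-loss is purely a bookkeeping consequence of multiplying the \emph{undifferentiated} enhanced bound $\eqref{est_hLT2}$, $|Z^{\le N-4}H^1_{LL}|\lesssim C_0\ep\,(1+t+r)^{-3/2+2\sigma}(2+r-t)^{1/2-\kappa}$, against $\eqref{KS1_ext}$, $|\partial^2 Z^{\le N-3}h^1|\lesssim C_0\ep\, t^\sigma(1+t+r)^{-1}(2+r-t)^{-3/2-\kappa}$: using $1+t+r\gtrsim\max(t,\,2+r-t)$ the product is $\lesssim C_0^2\ep^2 t^{-2-(\kappa-2\sigma)+\sigma}(2+r-t)^{-3/2-\kappa}$, so $\delta=\kappa-2\sigma>\sigma$ and, after weighting, $\delta'=\delta-\sigma=\kappa-3\sigma>0$ by the standing hypothesis $0<\sigma<\kappa/3$. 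The mass term need not be invoked at all at this stage because \eqref{est_hLT2} already packages the wave-condition argument for the undifferentiated $H^1_{LT}$; one only needs to keep track of the $(1+t+r)^{-3/2+2\sigma}$ decay versus the weight $(1+t+r)(2+r-t)^{1+\kappa}$.
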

	\begin{proof}
		We use \eqref{comm_ext} with $\pi=H^1$ and $\phi=\hab$.
		%			\[
		%			[Z^K, H^{1, \mu\nu}\partial_\mu\partial_\nu]\hab = \sum_{\substack{|K_1|+|K_2|\le |K|\\|K_2|<|K|}} Z^{K_1}H^1_{LL} \cdot\partial^2_q Z^{K_2}\hab + Z^{K_1}H^1\cdot \partial \op Z^{K_2}\hab
		%			\]
		Pointwise bounds \eqref{KS1_ext} and \eqref{est_hLT2} yield
		\[
		\sum_{|K_1|+|K_2|\le N-4}	\big| Z^{K_1}H^1_{LL} \cdot\partial^2 Z^{K_2}\hab \big|\lesssim C_0^2\ep^2 t^{-2-\delta+\sigma} (2+r-t)^{-\frac32-\kappa}
		\]
		for some $\delta>\sigma$, bounds \eqref{KS2_ext} and \eqref{KS3_ext} give
		\[
		\sum_{|K_1|+|K_2|\le N-4}\big| Z^{K_1}H^1\cdot \partial \op Z^{K_2}\hab\big|\lesssim C_0^2\ep^2 t^{-2+2\sigma}\sqrt{l(t)}(2+r-t)^{-1-2\kappa}
		\]
		and finally \eqref{KS1_ext} and \eqref{KS3_ext} imply
		\[
		\sum_{|K_1|+|K_2|\le N-4} r^{-1}\big| Z^{K_1}H^1\cdot \partial Z^{K_2}\hab\big|\lesssim C_0^2\ep^2 t^{-3+2\sigma}(2+r-t)^{-1-2\kappa}.
		\]			
		The result of the statement follows by setting $\delta'=\delta-\sigma$.
	\end{proof}
	
	\subsection{The $h^1_{TU}$ coefficients}
	\label{sub:hTU_coeff}
	
	In this subsection we show that, for any $T\in \q T$ and $U \in \q U$, the coefficients $h^1_{TU}$ satisfy better energy bounds than \eqref{Boot1_ext}, more precisely that for any fixed $0<\rho<\kappa$ there exists some positive constant $C$ such that
	\begin{equation} \label{energy_hTU}
		E^{\text{e}, \kappa-\rho}(t, Z^{\le N}h^1_{TU})^{1/2}\lesssim C_0\ep t^{C\ep}, \qquad t\in [2, T_0).
	\end{equation}
	This estimate essentially follows from the fact that no weak null terms appear among the source terms in the equation satisfied by $h^1_{TU}$. This can be simply seen by applying $T^\alpha U^\beta$ to \eqref{h_equations} and then commuting with $Z^K$, which shows that $Z^Kh^1_{TU}$ is solution to
	
	\begin{equation}\label{h1TU_higher}
		\tilde{\Box}_g  Z^K h^1_{TU} = F^K_{TU} + F^{0, K}_{TU}, \qquad F^{0,K}_{TU} = F^{0,K}_{\alpha\beta} T^\alpha U^\beta
	\end{equation}
	with source term $F^K_{TU}$ given by
	\begin{equation}\label{FKTU}
		\begin{aligned}
			F^K_{TU} & = - [ Z^K, H^{\mu\nu}\partial_\mu\partial_\nu]h^1_{TU} +  Z^KF_{TU}(h)(\partial h, \partial h)\\
			& + \sum_{|K'|\le |K|}C^{\bmi\alpha\beta}_{TU,K'}\gd_\bmi  Z^{K'} h^1_{\alpha\beta} + D^{\alpha\beta}_{TU,K'}  Z^{K'}h^1_{\alpha\beta} \\
			& +\sum_{|K_1|+|K_2|\le |K|}  E^{\bmi\alpha\beta}_{TU\mu\nu, K_1K_2} Z^{K_1} H^{\mu\nu}\cdot \gd_\bmi  Z^{K_2}h^1_{\alpha\beta} + F^{\alpha\beta}_{TU\mu\nu, K_1K_2}  Z^{K_1}H^{\mu\nu} \cdot Z^{K_2}h^1_{\alpha\beta}
		\end{aligned}
	\end{equation}
	and smooth coefficients $C^{i\alpha\beta}_{TU,K'},  E^{i\alpha\beta}_{TU\mu\nu, K_1K_2} = O(r^{-1})$, $D^{\alpha\beta}_{TU,K'},  F^{\alpha\beta}_{TU\mu\nu, K_1K_2} =O(r^{-2})$.  
	Besides the additional terms arising from the commutation of vector fields $T$ and $U$ with the reduced wave operator, the main difference between the source terms $F^K_{\alpha\beta}$ and $F^K_{TU}$ lies in the fact that the latter is a linear combination of quadratic null terms and cubic terms only, as $P_{TU} = P_{\alpha\beta}T^\alpha U^\beta$ and
	\begin{equation}\label{PTU}
		|P_{TU}(\phi, \psi)|\lesssim |\op \phi| |\partial \psi| + |\partial\psi| |\op \psi|.
	\end{equation}
	We compare \eqref{h1TU_higher} with equation \eqref{linearized_wave}. Thanks to the smallness of $H$ provided by \eqref{h0_estimate} and \eqref{est_hLT2}, we can apply the result of proposition \ref{prop:exterior_energy} with $\Wf = Z^Kh^1_{TU}$, $\Ff = F^K_{TU} + F^{0,K}_{TU}$, $w(q) = (2+r-t)^{1+2(\kappa-\rho)}$ and $t_1=2, t_2 = t$. 
	We obtain that
	\begin{equation}\label{energy_ineq_hTU}
		\begin{aligned}
			& E^{\text{e}, \kappa-\rho}(t, Z^K h^1_{TU}) \\
			&+ \hspace{-2pt} \int_{\tilde{\hcal}_{2,t}}  \hspace{-10pt} (2+r-t)^{1+2(\kappa-\rho)}\Big[\Big(\frac{1}{2( 1+r^2)} +\chi\left(\frac{r}{t}\right)\chi(r)\frac{M}{2r} \Big) |\partial_t Z^K h^1_{TU}|^2 +|\underline{\nabla}Z^K h^1_{TU}|^2 \Big]dxdy \\
			& + \iint_{\dext_{[2,t]}} (2+r-\tau)^{2(\kappa-\rho)}\big(|LZ^K h^1_{TU}|^2  + |\slashed \nabla Z^K h^1_{TU}|^2\big) d\tau dxdy\lesssim E^{\text{e}, \kappa-\rho}(2, Z^K h^1_{TU}) \\
			& + \iint_{\dext_{[2,t]}}(2+r-\tau)^{1+2(\kappa-\rho)}|(F^K_{TU} + F^{0,K}_{TU} + {\partial^\alpha H_\alpha}^\sigma \, \partial_\sigma Z^K h^1_{TU})\partial_t Z^K h^1_{TU}|d\tau dxdy\\
			& +  \iint_{\dext_{[2,t]}}(2+r-\tau)^{1+2(\kappa-\rho)}|{\partial_t H_\alpha}^\sigma \, \partial_\sigma Z^K h^1_{TU} \, \partial^\alpha Z^K h^1_{TU} |\, d\tau dxdy \\
			& + \iint_{\dext_{[2,t]}} (2+r-\tau)^{2(\kappa-\rho)} | H^{\rho\sigma}\partial_\rho Z^K h^1_{TU} \partial_\sigma Z^K h^1_{TU}|\ d\tau dxdy\\\
			&  + \iint_{\dext_{[2,t]}}  (2+r-\tau)^{2(\kappa-\rho)} | (-H^{0\sigma} + \omega_\bmj H^{\bmj\sigma}) \partial_\sigma Z^K h^1_{TU}\, \partial_t Z^K h^1_{TU} | \ d\tau dxdy.
		\end{aligned}
	\end{equation} 
	
	We start by estimating the contributions coming from the source term $F^{0, K}_{TU}$.
	\begin{lemma}\label{lem:box_h0}
		There exists $\delta>0$ such that, under the energy assumptions \eqref{Boot1_ext}-\eqref{Boot2_ext} and for $i=0,1$, we have 
		\begin{equation}\label{source0_higher}
			\| (2+r-t)^{\frac{1}{2}+i+\kappa} \partial^i Z^{\le N}\tilde{\Box}_g h^0_{\alpha\beta}\|_{L^2(\Sext_t)} \lesssim \ep t^{-3-i} + \ep^2 t^{-2+\delta}.
		\end{equation}
		Consequently
		\[
		\|(2+r-t)^{\frac{1}{2}+\kappa-\rho} Z^{\le N}F^{0}_{TU}\|_{L^2(\Sext_t)}\lesssim  \ep t^{-3} + \ep^2 t^{-2+\delta}.
		\]
	\end{lemma}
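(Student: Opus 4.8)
The plan is to start from the decomposition $\tilde{\Box}_g = \Box_{xy} + H^{\mu\nu}\partial_\mu\partial_\nu$, so that $\tilde{\Box}_g h^0_{\alpha\beta} = \Box_{xy}h^0_{\alpha\beta} + H^{\mu\nu}\partial_\mu\partial_\nu h^0_{\alpha\beta}$, and to bound the two pieces separately.

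For the flat piece I would use that $M/r$ is a static solution of the flat wave equation, i.e. $\Box_{xy}(M/r)=0$ away from the origin, so that $\Box_{xy}h^0_{\alpha\beta}$ is supported only where a derivative falls on one of the cutoffs $\chi(r/t)$ or $\chi(r)$ — that is, in $\{t/2\le r\le 3t/4\}\cup\{1/2\le r\le 3/4\}$ — and there $|\partial^I Z^{\le N}\Box_{xy}h^0|\lesssim M(1+t+r)^{-3-|I|}$. The point is that this support meets the exterior region $\dext$ only for $t$ in a bounded range, since for $r\le 3t/4$ with $t$ large one lies in the interior region $\dint$; hence the corresponding weighted $L^2(\Sext_t)$ norm vanishes for $t$ large and is $\lesssim M\lesssim\ep\lesssim\ep t^{-3-i}$ for the bounded values of $t$ where it is nonzero. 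This produces the $\ep t^{-3-i}$ term.

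For $H^{\mu\nu}\partial_\mu\partial_\nu h^0_{\alpha\beta}$ I would write $H=H^0+H^1$. In $\dext$, for $t$ large the cutoffs in $h^0$ are $\equiv1$, so $\partial_\mu\partial_\nu h^0_{\alpha\beta}=\delta_{\alpha\beta}M\partial_\mu\partial_\nu(1/r)$ is nonzero only for spatial $\mu,\nu$ and obeys the explicit bound $|\partial^I Z^{\le N}\partial^2 h^0|\lesssim M(1+t+r)^{-3-|I|}$ — the extra decay produced by each derivative is what lets the $(2+r-t)$-weight be absorbed. The $H^0$-contribution is then controlled by $|\partial^I Z^{\le N}H^0|\lesssim M(1+t+r)^{-1-|I|}$, giving $|\partial^i Z^{\le N}(H^0\,\partial^2 h^0)|\lesssim M^2(1+t+r)^{-4-i}$ and, after integrating against $(2+r-t)^{1+2i+2\kappa}r^2\,dr$ on $\Sext_t$, a weighted $L^2$ norm $\lesssim M^2 t^{-2+\kappa}\lesssim\ep^2 t^{-2+\kappa}$. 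For the $H^1$-contribution one treats the Leibniz expansion $\partial^i Z^{\le N}(H^1\,\partial^2 h^0)=\sum\partial^{a_1}Z^{K_1}H^1\cdot\partial^{a_2}Z^{K_2}\partial^2 h^0$ ($a_1+a_2\le i$) by using the pointwise bounds \eqref{KS1_ext}, \eqref{KS3_ext} (and the improved estimates \eqref{est_hLT2}, \eqref{est_hLT1} for the $h^1_{LT}$ components) when $|K_1|$ is small, and the $L^2$ bootstrap bounds \eqref{boot1_ex}, \eqref{boot3_ex} together with the exterior Hardy inequality \eqref{hardy_ext} when $|K_1|$ is large; in the latter range one exploits both the strong decay $M(1+t+r)^{-3-a_2}$ of the $h^0$-factor and, after distributing the remaining $\partial^i$, the gain of $(1+t+r)^{-1}$ (and of $(2+r-t)^{-1}$ when it lands on $H^1$). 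Splitting $\Sext_t$ into a near region $\{r\lesssim 2t\}$ and a far region $\{r\gtrsim 2t\}$ keeps the weight $(2+r-t)$ and the polynomial decay of $\partial^2 h^0$ under control, and yields $\lesssim\ep^2 t^{-2+\delta}$ for $\delta$ slightly larger than $\sigma$ (one may take e.g. $\delta=\max(\kappa,2\sigma)$). Finally, the ``consequently'' statement follows because $F^0_{TU}=(\tilde{\Box}_g h^0)_{\alpha\beta}T^\alpha U^\beta$ with $T^\alpha,U^\beta$ smooth homogeneous of degree zero in $\dext$, so $|Z^{\le N}F^0_{TU}|\lesssim\sum_{|K|\le N}|Z^K\tilde{\Box}_g h^0|$ modulo harmless $r^{-1}$-weighted lower-order terms, and since $\rho>0$ the weight $(2+r-t)^{\frac12+\kappa-\rho}$ is dominated by $(2+r-t)^{\frac12+\kappa}$, so the $i=0$ estimate applies.

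The main obstacle is the high-order part of the $H^1$-contribution: the top-order derivatives of $H^1$ are controlled only in $L^2$, and via the energy and the Hardy inequality only at the relatively weak weight $(2+r-t)^{-1/2+\kappa}r^{-1}$, so one must carefully balance this against the $(2+r-t)^{1/2+i+\kappa}$ weight and the $M(1+t+r)^{-3-|I|}$ decay of $\partial^2 h^0$ — the delicate cases being $i=1$ and the far region $r\gg t$, which is precisely why the near/far splitting and the extra decay from the distributed derivative are needed.
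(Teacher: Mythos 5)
Your proposal is correct and takes essentially the same route as the paper: you decompose $\tilde{\Box}_g h^0 = \Box_{xy}h^0 + H^{\mu\nu}\partial_\mu\partial_\nu h^0$, observe that $\Box(M/r)=0$ away from the origin so the flat piece has support that meets $\dext$ only in a bounded region (giving the $\ep\, t^{-3-i}$ contribution), and you estimate the $H^0$- and $H^1$-contributions separately, putting $L^\infty$ on the $h^0$ factor and $L^2$ (via the weighted Hardy inequality \eqref{hardy_ext} and \eqref{boot1_ex}) on the $h^1$ factor — exactly as the paper does via its decomposition $F^{00}+F^{01}+F^{02}$ in \eqref{dec_F0}.

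Two small points where you add unnecessary complexity: the split by $|K_1|$ small versus large (and hence the invocation of the pointwise bounds \eqref{KS1_ext}, \eqref{KS3_ext}, and especially the improved $h^1_{LT}$ estimates \eqref{est_hLT2}, \eqref{est_hLT1}) is not needed. Since the pointwise bound \eqref{h0_estimate} for $h^0$ does not degrade with $Z$-derivatives (only with $\partial$'s), one can always put $L^\infty$ on the $h^0$ factor and $L^2$ on the $h^1$ factor regardless of how the $Z$'s are distributed; the paper does this directly and avoids a case analysis. Likewise there is no genuine need for the special $h^1_{LT}$ structure here — the tensor $h^{1,\mu\nu}$ appearing in the commutator with $\partial^2 h^0$ has no privileged null components, and the generic bounds already close. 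These are harmless elaborations, not errors; your final bound and the ``consequently'' deduction are both sound.
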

	\begin{proof}
		We recall that definition \eqref{h0_def} of $h^0$ and that $\tilde{\Box}_g h^0_{\alpha\beta} = F^{00} + F^{01} + F^{02}$ with 
		\begin{equation}\label{dec_F0}
			F^{00} = \Box h^0_{\alpha\beta}, \qquad F^{01} = -h^{0, \mu\nu}\partial_\mu\partial_\nu h^0_{\alpha\beta}, \qquad F^{02}=(-h^{1,\mu\nu} + \mathcal{O}^{\mu\nu}(h^2))\partial_\mu\partial_\nu h^0_{\alpha\beta}.
		\end{equation}
		The fundamental remark is that $\Box (M/r) = 0$ away from $r =0$, which implies that $F^{00}$ is supported for $t/2\le r\le 3t/4$. Consequently, $\text{supp} F^{00} \cap \dext_{T_0}$ is bounded and from \eqref{h0_estimate}
		\[
		\begin{gathered}
			|\partial^i Z^{\le N}F^{00}|\lesssim \frac{\epsilon}{(1+r)^{3+i}},\  \qquad \
			\| (2+r-t)^{\frac{1}{2}+i+\kappa} \partial^i Z^{\le N}F^{00}\|_{L^2(\Sext_t)}\lesssim \ep t^{-3-i}.
		\end{gathered}
		\]
		From \eqref{h0_estimate} we also see that
		\[
		\begin{gathered}
			|\partial^i Z^{\le N}F^{01}| \lesssim \frac{\ep^2}{(1+r)^{4+i}}\\
			\| (2+r-t)^{\frac{1}{2}+i+\kappa} \partial^i Z^{\le N}F^{01}\|_{L^2(\Sext_t)}\lesssim \ep^2 \| r^{-\frac{7}{2}+\kappa}\|_{L^2(\Sext_t)}\lesssim \ep^2 t^{-2+\kappa}.
		\end{gathered}
		\]
		From \eqref{h0_estimate}, the Hardy inequality \eqref{hardy_ext} with $\beta =2\kappa-1$ and estimate \eqref{boot1_ex}, we have that
		\[
		\begin{aligned}
			&\sum_{\substack{|K_1|+|K_2| \le N\\ |I_2|=i}} \|(2+r-t)^{\frac{1}{2}+i+\kappa} Z^{K_1}h^{1, \mu\nu} \partial_\mu\partial_\nu \partial^{I_2}
			Z^{K_2} h^0_{\alpha\beta}\|_{L^2(\Sext_t)}\\ 
			& \lesssim \ep \|(2+r-t)^{\frac{1}{2}+ i+\kappa}r^{-3-i} Z^{K_1}h^{1, \mu\nu}\|_{L^2(\Sext_t)}  \lesssim \ep t^{-2}\|(2+r-t)^{\frac{1}{2}+\kappa}\partial Z^{K_1}h^{1, \mu\nu}\|_{L^2(\Sext_t)}\lesssim \ep^2 t^{-2+\sigma}.
		\end{aligned}
		\]
		The cubic terms $\mathcal{O}^{\mu\nu}(h^2)\partial_\mu\partial_\nu h^0_{\alpha\beta}$ verify similar estimates, the details are left to the reader.
	\end{proof}
	
	In order to estimate the contributions due to the curved background, we first highlight the following relations.
	\begin{lemma} For any sufficiently smooth function $\phi$ we have
		\begin{gather}
			{ \partial^\alpha H_\alpha}^\sigma\, \partial_\sigma\phi = -\frac{1}{2}(\partial_{\ub} H_{LL} - \partial_u H_{\Lb L} + \partial_A H_{AL})\Lb\phi + ({\partial^\alpha H_\alpha}^T) T\phi \label{eq_null1}\\
			{\partial_t H_\alpha}^\sigma \, \partial_\sigma \phi \, \partial^\alpha \phi = \frac{1}{4}\partial_t H_{LL} (\Lb \phi)^2 + \partial_t H^{T\alpha}\, (\partial_\alpha \phi)(T \phi) \label{eq_null2}\\
			H^{\rho\sigma}\partial_\rho\phi \, \partial_\sigma \phi = \frac{1}{4}H_{LL} (\Lb \phi)^2 + H^{T\alpha}(T\phi)(\partial_\alpha\phi) \label{eq_null3} \\
			(-H^{0\sigma} + \omega_\bmj H^{\bmj\sigma})\partial_\sigma \phi = -\frac{1}{2}H_{LL}\, \Lb\phi + {H_L}^T (T\phi) \label{eq_null4}
		\end{gather}
	\end{lemma}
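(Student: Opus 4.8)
The plan is to prove all four identities by one and the same mechanism: rewrite each left-hand side relative to the null frame $\q U=\{L,\Lb,S^1,S^2,\partial_y\}$, using the explicit components of $\bar{g}$ and of its inverse relative to $\q U$ recorded in Subsection~\ref{sub:notations}. The structural fact that drives everything is the following: for the gradient vector $V^\sigma:=\partial^\sigma\phi=\bar{g}^{\sigma\tau}\partial_\tau\phi$ the null-frame components are $V^L=-\tfrac12\Lb\phi$, $V^{\Lb}=-\tfrac12 L\phi$ and $V^A=A\phi$ for $A\in\{S^1,S^2,\partial_y\}$, so the only frame component of $V$ that is \emph{not} a tangential derivative of $\phi$ is the $L$-component, and it equals $-\tfrac12\Lb\phi$. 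More generally, for any vector field $W=W^\sigma\partial_\sigma$, $W^\sigma\partial_\sigma\phi=W\phi=-\tfrac12 W_{\Lb}\,L\phi-\tfrac12 W_L\,\Lb\phi+W_A\,A\phi$ with $W_U=\bar{g}_{\alpha\beta}W^\alpha U^\beta$ and the $A$-sum implicit.

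I would begin with \eqref{eq_null4}, the cleanest instance and an exact identity. Since $\omega_\bmj=x_\bmj/r$, the covector $-\delta^0_\mu+\omega_\bmj\delta^\bmj_\mu$ is exactly $L_\mu$, so its left-hand side equals $({H_L}{}^\sigma)\partial_\sigma\phi$ with ${H_L}{}^\sigma:=H^{\mu\sigma}L_\mu$; decomposing this vector in the frame and using $({H_L})_L=H_{LL}$, $({H_L})_{\Lb}=H_{L\Lb}$, $({H_L})_A=H_{LA}$ yields $-\tfrac12 H_{L\Lb}L\phi-\tfrac12 H_{LL}\Lb\phi+\sum_A H_{LA}A\phi$, and the first and third terms reassemble into $({H_L}{}^T)\,T\phi$, $T\in\q T$. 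Next, \eqref{eq_null3} follows by applying the observation on both slots: writing $H^{\rho\sigma}\partial_\rho\phi\,\partial_\sigma\phi=H_{\rho\sigma}V^\rho V^\sigma$ and expanding both copies of $V$ in the frame, the unique summand built from two non-tangential factors is $(V^L)^2H_{LL}=\tfrac14 H_{LL}(\Lb\phi)^2$ --- crucially its coefficient is the ``good'' component $H_{LL}$ rather than $H_{\Lb\Lb}$, precisely because $\Lb$ is $\bar{g}$-null and $\bar{g}$-orthogonal to the spheres $\S^2$ times $\S^1$, so that raising the covector never produces a pure $\Lb$-vector-direction --- while every other summand carries a factor $L\phi$, $S^i\phi$ or $\partial_y\phi$ and is therefore of the form $H^{T\alpha}(T\phi)(\partial_\alpha\phi)$. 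Identity \eqref{eq_null2} is then \eqref{eq_null3} verbatim with $H^{\mu\nu}$ replaced by $\partial_t H^{\mu\nu}$, since the constancy of the coefficients of $\bar{g}$ gives ${\partial_t H_\alpha}{}^\sigma\partial_\sigma\phi\,\partial^\alpha\phi=\partial_t H^{\mu\sigma}(\partial_\mu\phi)(\partial_\sigma\phi)$ and only the symmetry of $H$ is used.

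Finally, for \eqref{eq_null1} the left-hand side is $(\partial_\mu H^{\mu\sigma})\partial_\sigma\phi$, again by constancy of the coefficients of $\bar{g}$ (so $\partial^\alpha{H_\alpha}{}^\sigma=\partial_\mu H^{\mu\sigma}$). Decomposing the divergence vector $F^\sigma:=\partial_\mu H^{\mu\sigma}$ in the frame, the only term multiplying $\Lb\phi$ has coefficient $-\tfrac12 F_L=-\tfrac12(\partial_\mu H^{\mu\sigma})L_\sigma$; applying the divergence formula \eqref{divergence_nullframe} to the auxiliary field $W^\mu:=H^{\mu\sigma}L_\sigma$, and using $L_\mu W^\mu=H_{LL}$, $\Lb_\mu W^\mu=H_{\Lb L}$, $A_\mu W^\mu=H_{AL}$ together with $\partial_u L=\partial_{\ub}L=0$, turns this into $-\tfrac12\big(\partial_{\ub}H_{LL}-\partial_u H_{\Lb L}+\partial_A H_{AL}\big)$, exactly as in the proof of Lemma~\ref{lem:wave_cond_nullframe}; the remaining frame components $-\tfrac12 F_{\Lb}$ and $F_A$ assemble into $\big(\partial^\alpha{H_\alpha}{}^T\big)T\phi$. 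The whole argument is routine frame algebra; the only care required is tracking the index raising and lowering in the non-coordinate frame $\q U$, and --- for \eqref{eq_null1} alone --- absorbing the lower-order terms of type $r^{-1}|H|$ produced when the divergence formula differentiates the frame vectors, which is the single point of mild technical friction and is handled precisely as in Section~\ref{sec:wave_condition}.
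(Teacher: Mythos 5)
Your proof is correct and mirrors the paper's own (very terse) argument: the paper likewise proves all four identities by expressing every quantity in the null frame $\q U$ and observing that $-H^{0\sigma}+\omega_\bmj H^{\bmj\sigma}={H_L}^{\sigma}$, which is precisely your starting point for \eqref{eq_null4}. Your expansion of $\partial^\sigma\phi$ in the frame, the deduction of \eqref{eq_null3} and \eqref{eq_null2} from it, and the use of the divergence formula \eqref{divergence_nullframe} for \eqref{eq_null1} all track the paper's approach, just with the bookkeeping spelled out.
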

	
	\begin{proof}
		The proof of the above equalities follows after expressing all vector fields relative to the null frame $\q U=\{L, \Lb, S^1, S^2, \partial_y\}$ and observing that
		\[
		-H^{0\sigma} + \omega_\bmj H^{\bmj\sigma} = \bar{g}_{\mu\nu}L^\mu H^{\nu\sigma} = {H_L}^\sigma.
		\]
	\end{proof}
	
	\begin{lemma}
		Under the a-priori assumptions \eqref{Boot1_ext}-\eqref{Boot2_ext} we have that for $i=0,1$ and any multi-index $K$ with $|K|\le N$
		\begin{equation}\label{dH_contribution_ext}
			\begin{aligned}
				& \iint_{\dext_{[2,t]}} \hspace{-10pt} (2+r-\tau)^{1+2(i+\kappa)}|\partial^\mu {H_\mu}^\sigma \cdot \partial_\sigma \partial^i Z^Kh^1_{\alpha\beta}\cdot\partial_t \partial^i Z^Kh^1_{\alpha\beta}|\, d\tau dxdy \\
				& \iint_{\dext_{[2,t]}} \hspace{-10pt} (2+r-\tau)^{1+2(i+\kappa)}|\partial_t {H_\mu}^\sigma \cdot \partial_\sigma \partial^i Z^Kh^1_{\alpha\beta} \cdot \partial^\mu \partial^i Z^Kh^1_{\alpha\beta} |\, d\tau dxdy \lesssim C_0^3\ep^3
			\end{aligned}
		\end{equation}
		and
		\begin{equation} \label{H_contribution_ext}
			\begin{aligned}
				& \iint_{\dext_{[2,t]}}\hspace{-10pt} (2+r-\tau)^{2(i+\kappa)} | H^{\rho\sigma}\partial_\rho \cdot \partial^i Z^Kh^1_{\alpha\beta}\cdot \partial_\sigma \partial^i Z^Kh^1_{\alpha\beta}|  \, d\tau dxdy\\
				& + \iint_{\dext_{[2,t]}}\hspace{-10pt} (2+r-\tau)^{2(i+\kappa)} |(-H^{0\sigma} + \omega_\bmj H^{\bmj\sigma})\cdot \partial_\sigma \partial^i Z^Kh^1_{\alpha\beta}\cdot \partial_t \partial^i Z^Kh^1 _{\alpha\beta}
				|  d\tau dxdy\\
				& \lesssim \int_2^ t \frac{\epsilon E^{\text{e},i+\kappa}(\tau, \partial^i Z^Kh^1_{\alpha\beta})}{\tau}\,  d\tau + C_0^3\ep^3.
			\end{aligned}
		\end{equation}
		For any $0 \leq \rho < \kappa$, the same inequalities hold with $(\kappa,
		h^1_{\alpha\beta})$ replaced by $(\kappa-\rho, h^1_{TU})$.
	\end{lemma}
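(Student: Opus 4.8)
The plan is to split $H=H^{0}+H^{1}$ as in \eqref{dec_H} and to treat the two pieces by completely different mechanisms. For the Schwarzschild tail $H^{0}$ one uses only the explicit pointwise bounds \eqref{h0_estimate}: in $\dext$ one has $r\gtrsim t$, hence $|H^{0}|\lesssim \ep\, r^{-1}\lesssim \ep\,\tau^{-1}$ and $|\partial H^{0}|\lesssim \ep\, r^{-2}\lesssim \ep\,\tau^{-2}$ up to a term supported in the bounded set $\{t/2\le r\le 3t/4\}\cap\dext$. Inserting $|\partial H^{0}|\lesssim\ep\tau^{-2}$ in the integrands of \eqref{dH_contribution_ext} and bounding $\iint_{\Sext_\tau}(2+r-\tau)^{1+2(i+\kappa)}|\partial\,\partial^iZ^Kh^1|^2\le E^{\text{e},i+\kappa}(\tau,\partial^iZ^Kh^1)\lesssim C_0^2\ep^2\tau^{2\sigma}$, the $\tau$-integral converges and yields $\lesssim C_0^2\ep^3$. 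For \eqref{H_contribution_ext} the same computation with $|H^{0}|\lesssim\ep\tau^{-1}$ in place of $|\partial H^{0}|$ produces exactly $\ep\int_{2}^{t}\tau^{-1}E^{\text{e},i+\kappa}(\tau,\partial^iZ^Kh^1)\,d\tau$ — this is the Gr\"onwall term on the right-hand side of \eqref{H_contribution_ext}, and it cannot (and need not) be absorbed into a constant.

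For the $H^{1}$ contributions the point is to exploit the null structure revealed by \eqref{eq_null1}--\eqref{eq_null4}. Each of these identities writes the relevant contraction as a sum of (i) a \emph{good coefficient} — $H^{1}_{LL}$ or $H^{1}_{LT}$, respectively $\partial H^{1}_{LL}$ or $\partial H^{1}_{LT}$ — times the bad derivative $\Lb(\partial^iZ^Kh^1)$ (once, squared, or against $\partial_t(\partial^iZ^Kh^1)$), plus (ii) a \emph{generic} coefficient $H^{1}$ (resp. $\partial H^{1}$) times a \emph{tangential} derivative $T(\partial^iZ^Kh^1)=\op(\partial^iZ^Kh^1)$ and one further derivative of $\partial^iZ^Kh^1$, plus (iii) quadratic-in-$H^{1}$ remainders from the passage between $H$ and $h$ and from the $\mathcal{O}(h\cdot\partial h)$ errors in the wave-coordinate identities, plus mass terms supported in $\{t/2\le r\le 3t/4\}$. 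In the good-coefficient terms (i) the derivatives $\partial H^{1}_{LL}$, $\partial H^{1}_{LT}$ are replaced, via the wave-condition Lemma~\ref{lem:wave_cond_h1} (inequality \eqref{ineq_H1lt_higher} with $K$ empty), by $|\op H^{1}|+|H^{1}||\partial H^{1}|+M\chi_0(1+t+r)^{-2}$; the pointwise bound \eqref{KS2_ext} then gives $|\op H^{1}|\lesssim C_0\ep\,\tau^{\sigma}\sqrt{l(\tau)}\,(1+\tau+r)^{-1}(2+r-\tau)^{-\frac12-\kappa}$, while the undifferentiated $H^{1}_{LL}$, $H^{1}_{LT}$ enjoy the enhanced decay \eqref{est_hLT2} and the sharpened $L^{2}$ bound \eqref{est_hLT1}. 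Distributing the weight $(2+r-\tau)^{1+2(i+\kappa)}$ so that one factor lands on $\iint_{\Sext_\tau}(2+r-\tau)^{1+2(i+\kappa)}|\partial\,\partial^iZ^Kh^1|^2\le E^{\text{e},i+\kappa}(\tau,\cdot)$ and using the bootstrap bounds, each such term is controlled by $C_0^3\ep^3\int_2^t\tau^{-1+3\sigma}\sqrt{l(\tau)}\,d\tau\lesssim C_0^3\ep^3$, the last step being Cauchy--Schwarz in $\tau$ with $l\in L^1([2,T_0))$ and $\sigma$ small; the $H^{1}_{LL}(\Lb\phi)^2$ contributions of \eqref{eq_null3}--\eqref{eq_null4} are even better, bounded by $C_0\ep\int_2^t\tau^{-3/2+2\sigma}E^{\text{e},i+\kappa}(\tau,\cdot)\,d\tau$. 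For the generic-tangential terms (ii) one places the coefficient $H^{1}$ (resp. $\partial H^{1}$) — which, by the usual pigeonhole, carries at most $\lfloor N/2\rfloor\le N-3$ admissible vector fields — in $L^{\infty}$ using \eqref{KS1_ext}, \eqref{KS3_ext} (decay $\lesssim C_0\ep\,\tau^{-1+\sigma}(2+r-\tau)^{-1-\kappa}$, resp.\ $\lesssim C_0\ep\,\tau^{-1+\sigma}(2+r-\tau)^{-\kappa}$), keeps the ordinary derivative of $\partial^iZ^Kh^1$ in $L^{2}(\Sext_\tau)$ against $E^{\text{e},i+\kappa}(\tau,\cdot)^{1/2}$, and crucially keeps the top-order tangential factor $\op\partial^{i}Z^{\le N}h^1$ in $L^{2}(\Sext_\tau)$ using the \emph{$\sqrt{l(\tau)}$-improved} bootstrap bounds \eqref{boot2_ex}, \eqref{boot4_ex}; Cauchy--Schwarz in $\tau$ with $l\in L^{1}$ again gives $\lesssim C_0^{3}\ep^{3}$ for \eqref{dH_contribution_ext}, while the same split for the undifferentiated coefficient in \eqref{H_contribution_ext} is absorbed into the $\int_2^t \ep\,\tau^{-1}E^{\text{e},i+\kappa}$ term. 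The quartic remainders (iii) carry two factors of size $C_0\ep\,\tau^{\sigma}(1+\tau+r)^{-1}$, hence an extra $\tau^{-2+2\sigma}$, so their $\tau$-integral converges to $\lesssim C_0^{4}\ep^{4}$; the mass remainders have bounded support in $\dext$ and are negligible.

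Finally, the stated variant in which $(\kappa,h^1_{\alpha\beta})$ is replaced by $(\kappa-\rho,h^1_{TU})$ for $0\le\rho<\kappa$ requires no new idea: none of the steps above used the tensorial index pair $\alpha\beta$, the bounds of Proposition~\ref{prop:pointwise_ext} and of the following propositions hold verbatim for $h^1_{TU}$ and $H^1_{TU}$ (via \eqref{H_def}), and lowering $\kappa$ to $\kappa-\rho$ only shifts powers of $2+r-\tau\ge 1$ in a harmless direction, leaving the $\sqrt{l}$ and $\tau$-integrability arguments intact. The main obstacle is the bookkeeping in steps (i)--(ii): in every term one must split the weight $(2+r-\tau)$ so that exactly one factor matches the flux weight $(2+r-\tau)^{1+2(i+\kappa)}$ of $|\partial\,\partial^iZ^Kh^1|^2$ and the other matches the spacetime-integral weight $(2+r-\tau)^{2(i+\kappa)}$ of $|\op\partial^iZ^Kh^1|^2$ inside $E^{\text{e},i+\kappa}$, and one must in each term extract a decay of the form $\sqrt{l(\tau)}\,\tau^{-1+\sigma}$ (from $\op H^1$, or from the $\sqrt{l}$-improved tangential energy) or $\tau^{-3/2+2\sigma}$ (from $H^1_{LL}$), so that Cauchy--Schwarz against $l\in L^1$ yields a genuinely \emph{bounded} contribution rather than a $t^{C\ep}$-growing one; recognising that the single source of a non-integrable (Gr\"onwall) contribution is the Schwarzschild tail $H^0\sim M/r$ of \eqref{H_contribution_ext} is exactly what makes the clean bound $C_0^3\ep^3$ possible in \eqref{dH_contribution_ext}.
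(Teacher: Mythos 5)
Your proposal is correct and follows essentially the same route as the paper: you decompose via the null-frame identities \eqref{eq_null1}--\eqref{eq_null4}, replace $\partial H_{LL}$ by tangential/ cubic/ mass contributions through the wave-gauge estimates, put the resulting ``good'' factors in $L^\infty$ using \eqref{KS1_ext}--\eqref{KS3_ext}, \eqref{est_hLT2}, and exploit the $\sqrt{l(\tau)}$-integrable gain from \eqref{boot2_ex}, \eqref{boot4_ex} and \eqref{KS2_ext}, so that the only genuinely Gr\"onwall-type contribution comes from the Schwarzschild tail in \eqref{H_contribution_ext}; this is exactly the mechanism in the paper's proof. One small inaccuracy: your appeal to a pigeonhole ``$|K_1|\le\lfloor N/2\rfloor$'' is superfluous here, since in this lemma the tensor $H$ (and $\partial H$) appears undifferentiated by admissible vector fields --- all the $Z^K$'s sit on $h^1_{\alpha\beta}$ --- so there is nothing to distribute; but since you still correctly place $H$ in $L^\infty$ and $\partial^iZ^Kh^1$ in $L^2$, this slip has no bearing on the validity of the argument.
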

	\begin{proof}
		We start by remarking that from inequality \eqref{ineq_Hlt} and bounds \eqref{h0_estimate} to \eqref{KS3_ext}
		\[
		|\partial H_{LL}| + |\op H|\lesssim |\op h^1| + |\partial h^0| + |h||\partial h|\lesssim C_0\ep \Big(\frac{t^\sigma\sqrt{l(t)}}{r(2+r-t)^\kappa} + \frac{t^{2\sigma}}{r^{2}}\Big),
		\]
		which together with the energy bounds \eqref{boot1_ex} and \eqref{boot3_ex} implies
		\[
		\|  (2+r-t)^{\frac{1}{2}+i+\kappa} (|\partial H_{LL}| + |\op H|) \partial \partial^iZ^{\le N}h^1\|_{L^2(\Sext_t)}\lesssim C_0^2\ep^2 t^{-1+2\sigma}\sqrt{l(t)} + C_0^2\ep^2 t^{-2+3\sigma}.
		\]
		From the pointwise estimates \eqref{h0_estimate}, \eqref{KS1_ext} and energy bounds \eqref{boot2_ex}, \eqref{boot4_ex} we also have
		\[
		\|(2+r-t)^{\frac{1}{2}+i+\kappa} \partial H\cdot \op \partial^i Z^{\le N}h^1\|_{L^2(\Sext_t)}\lesssim C_0^2\ep^2  t^{-1+2\sigma}\sqrt{l(t)} .
		\]
		The Cauchy-Schwartz inequality, relation \eqref{eq_null1} and the above estimates yield that for $t\in [2, T_0)$
		\[
		\begin{aligned}
			&\iint_{\dext_{[2,t]}} \hspace{-10pt} (2+r-\tau)^{1+2(i+\kappa)}\partial^\mu {H_\mu}^\sigma \cdot \partial_\sigma \partial^i Z^Kh^1_{\alpha\beta}\cdot \partial_t \partial^i Z^Kh^1_{\alpha\beta}|\, d\tau dxdy\\
			&\lesssim \int_2^t \|(2+r-\tau)^{\frac{1}{2}+i+\kappa} \partial^\mu {H_\mu}^\sigma\cdot \partial_\sigma \partial^i Z^Kh^1_{\alpha\beta}\|_{L^2(\Sext_t)} E^{\text{e}, i+\kappa}(\tau, \partial^i Z^Kh^1_{\alpha\beta})^{1/2}d\tau \lesssim C_0^3 \ep^3.
		\end{aligned}
		\]
		Similarly, using relation \eqref{eq_null2} we get
		\[
		\begin{aligned}
			&\iint_{\dext_{[2,t]}} \hspace{-10pt} (2+r-\tau)^{1+2(i+\kappa)}|\partial_t {H_\mu}^\sigma \cdot \partial_\sigma \partial^i Z^Kh^1_{\alpha\beta} \cdot \partial^\mu \partial^i Z^Kh^1_{\alpha\beta} |\, d\tau dxdy  \\
			& \lesssim \iint_{\dext_{[2, t]}} \| (2+r-\tau)^{\frac{1}{2}+i +\kappa} \partial H_{LL}\cdot \partial \partial^i Z^K h^1_{\alpha\beta}\|_{L^2(\Sext_t)}E^{\text{e}, i+\kappa}(\tau, \partial^i Z^Kh^1_{\alpha\beta})^{1/2} d\tau dxdy\\
			& +\iint_{\dext_{[2, t]}} \| (2+r-\tau)^{\frac{1}{2}+i+\kappa} \partial H\cdot \op \partial^i Z^K h^1_{\alpha\beta}\|_{L^2(\Sext_t)}E^{\text{e}, i+\kappa}(\tau, \partial^i Z^Kh^1_{\alpha\beta})^{1/2} d\tau dxdy \lesssim C_0^3 \ep^3.
		\end{aligned}
		\]
		Finally, from formulas \eqref{eq_null3} and \eqref{eq_null4} and pointwise bounds \eqref{h0_estimate}, \eqref{KS3_ext} and \eqref{est_hLT1}
		\[
		\begin{aligned}
			& \iint_{\dext_{[2,t]}}\hspace{-10pt} (2+r-\tau)^{2(i+\kappa)} | H^{\rho\sigma}\cdot \partial_\rho  \partial^i Z^Kh^1_{\alpha\beta} \cdot\partial_\sigma \partial^i Z^Kh^1_{\alpha\beta}|  \, d\tau dxdy\\
			& + \iint_{\dext_{[2,t]}}\hspace{-10pt} (2+r-\tau)^{2(i+\kappa)} |(-H^{0\sigma} + \omega_\bmj H^{\bmj\sigma})\cdot \partial_\sigma \partial^i Z^Kh^1_{\alpha\beta}\cdot \partial_t \partial^i Z^Kh^1 _{\alpha\beta}
			|  d\tau dxdy \\
			& \lesssim \iint_{\dext_{[2,t]}}\hspace{-10pt} (2+r-t)^{2(i+\kappa)}\big[ |H_{LL}| |\partial \partial^i Z^Kh^1_{\alpha\beta}|^2 + |H| |\op \partial^i Z^Kh^1_{\alpha\beta}| |\partial \partial^i Z^K h^1_{\alpha\beta}|\big]\, d\tau dxdy \\
			& \lesssim \int_2^ t \frac{\epsilon E^{\text{e}, i+\kappa}(\tau, \partial^i Z^Kh^1_{\alpha\beta})}{\tau}\,  d\tau + C_0^3\ep^3.
		\end{aligned}
		\]
	\end{proof}

	\begin{proposition}
		Let $0<\rho<\kappa$ be fixed. Under the energy assumptions \eqref{Boot1_ext}-\eqref{Boot2_ext} there exists a constant $C>0$ such that \eqref{energy_hTU} holds for all $t\in [2, T_0)$.
	\end{proposition}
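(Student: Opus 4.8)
The plan is to close \eqref{energy_hTU} by a Gr\"onwall argument based on the energy inequality \eqref{energy_ineq_hTU}. Fix a multi-index $K$ with $|K|\le N$. On the right-hand side of \eqref{energy_ineq_hTU} the data term $E^{\text{e},\kappa-\rho}(2,Z^Kh^1_{TU})$ is controlled by $C(C_0\ep)^2$ using \eqref{data_maintheo} (at $t=2$ the weight $(2+|x|-2)^{1+2(\kappa-\rho)}=r^{1+2(\kappa-\rho)}$ is dominated by $r^{1+2\kappa}$ since $\rho>0$). The $\partial_tH$, $H^{\rho\sigma}$ and $(-H^{0\sigma}+\omega_\bmj H^{\bmj\sigma})$ integrals, together with the ${\partial^\alpha H_\alpha}^\sigma\partial_\sigma Z^Kh^1_{TU}$ summand of the first integral, are precisely the curved--background contributions handled by the lemma establishing \eqref{dH_contribution_ext}--\eqref{H_contribution_ext}; by its last sentence (with $(\kappa,h^1_{\alpha\beta})$ replaced by $(\kappa-\rho,h^1_{TU})$) their sum is bounded by $\int_2^t\frac{\ep\,E^{\text{e},\kappa-\rho}(\tau,Z^Kh^1_{TU})}{\tau}\,d\tau+C_0^3\ep^3$. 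It therefore remains to produce a weighted $L^2(\Sext_\tau)$ bound for $F^K_{TU}+F^{0,K}_{TU}$, which is then paired with $\partial_t Z^Kh^1_{TU}$ by Cauchy--Schwarz.

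The term $F^{0,K}_{TU}$ is handled by Lemma \ref{lem:box_h0}. For $F^K_{TU}$ I would use the decomposition \eqref{FKTU}: (i) the quadratic term $Z^KF_{TU}(h)(\partial h,\partial h)$ is, thanks to the algebraic identity \eqref{PTU} (so that $Z^KP_{TU}$ behaves like a differentiated null form) and to the null/cubic structure of $\mathbf Q_{TU}$ and $G_{TU}$, of the same type as the interactions bounded in Proposition \ref{prop:null+cubic}, hence its weighted $L^2(\Sext_\tau)$ norm is $\lesssim C_0^2\ep^2\tau^{-1+2\sigma}\sqrt{l(\tau)}+C_0^2\ep^2\tau^{-2+2\sigma}+C_0^3\ep^3\tau^{-2+3\sigma}$; (ii) the commutator $[Z^K,H^{\mu\nu}\partial_\mu\partial_\nu]h^1_{TU}$ is estimated exactly as in Proposition \ref{lem:comm_ext} (with $\kappa$ replaced by $\kappa-\rho$ and $h^1_{\alpha\beta}$ by $h^1_{TU}$, which obeys the same pointwise and energy bounds), contributing $\ep\tau^{-1}E^{\text{e},\kappa-\rho}(\tau,Z^{\le N}h^1_{TU})^{1/2}+\ep^2\tau^{-(\kappa-\rho)+2\sigma}\big(\tau^{-1/2}\sqrt{l(\tau)}+\tau^{-1}\big)$; (iii) the remaining summands of \eqref{FKTU}, which carry either an angular derivative $\gd_\bmi$ or an $O(r^{-1})$, $O(r^{-2})$ coefficient, are controlled by the pointwise bound \eqref{KS3_ext}, the Hardy inequality \eqref{hardy_ext} and the energy bounds \eqref{boot1_ex}--\eqref{boot2_ex}, and come with an extra power $\tau^{-1}$ or a factor $\sqrt{l(\tau)}$. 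Altogether, provided $\sigma$ and $\ep_0$ are small enough relative to $\kappa-\rho$,
\[
\big\|(2+r-\tau)^{\frac12+\kappa-\rho}F^K_{TU}\big\|_{L^2(\Sext_\tau)}\lesssim \ep\,\tau^{-1}E^{\text{e},\kappa-\rho}(\tau,Z^{\le N}h^1_{TU})^{1/2}+G(\tau),
\]
with $G\in L^1([2,T_0))$ and $\int_2^{T_0}G(\tau)\,d\tau\lesssim C_0^2\ep^2$; the only non-obvious integrability is that of $\tau^{-1+2\sigma}\sqrt{l(\tau)}$, which follows from Cauchy--Schwarz, $l\in L^1$ and $4\sigma<1$.

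Summing over $|K|\le N$ and setting $F(t):=\sum_{|K|\le N}E^{\text{e},\kappa-\rho}(t,Z^Kh^1_{TU})$, the energy inequality \eqref{energy_ineq_hTU} becomes
\[
F(t)\le C(C_0\ep)^2+C_0^3\ep^3+C\ep\int_2^t\frac{F(\tau)}{\tau}\,d\tau+C\int_2^t G(\tau)F(\tau)^{1/2}\,d\tau .
\]
I would absorb the last integral via Young's inequality, $GF^{1/2}\le \tfrac{\ep F}{2\tau}+\tfrac{\tau G^2}{2\ep}$, using $\int_2^{T_0}\tau\,G(\tau)^2\,d\tau/\ep\lesssim C_0^4\ep^3$; after shrinking $\ep_0$ so that $C_0^3\ep^3+C_0^4\ep^3\le (C_0\ep)^2$ this leaves $F(t)\le C'(C_0\ep)^2+C'\ep\int_2^t F(\tau)\tau^{-1}\,d\tau$, and Gr\"onwall's lemma gives $F(t)\le C'(C_0\ep)^2(t/2)^{C'\ep}$, which is exactly \eqref{energy_hTU}.

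The main obstacle --- and the conceptual point of this subsection --- is step (i): one must verify that, unlike the source of the equation for $Z^Kh^1_{\alpha\beta}$, the source of \eqref{h1TU_higher} contains \emph{no weak null term}. This is what makes $G$ time--integrable (decaying like $\tau^{-1+2\sigma}\sqrt{l(\tau)}$ rather than $\tau^{-1+C\ep}$), so that the $h^1_{TU}$-energies grow only because of the genuine quasilinear structure, at the slow rate $t^{C\ep}$. The mechanism is the identity \eqref{PTU}, which uses that contracting one of the two outer derivatives of $P_{\alpha\beta}$ with a frame vector $T\in\q T$ converts it into a tangential derivative; one must also check that this structure is stable under the action of $Z^K$, which follows from the commutation relations between the admissible vector fields and the null frame together with the wave--coordinate estimate \eqref{ineq_H1lt_higher} for the good coefficients $H^1_{LT}$. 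Once this is granted, all the remaining terms are routine.
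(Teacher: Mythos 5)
Your proposal is correct and follows essentially the same route as the paper: apply the exterior energy inequality \eqref{energy_ineq_hTU} to $Z^K h^1_{TU}$, observe via \eqref{PTU} that the source contains only null/cubic interactions (no weak null term), estimate the commutator exactly as in Proposition \ref{lem:comm_ext} with $(\kappa,h^1_{\alpha\beta})$ replaced by $(\kappa-\rho,h^1_{TU})$, bound the frame-commutation terms of \eqref{FKTU} using Hardy and the $\rho$-gain in the weight, and close by Gr\"onwall. The only genuine deviation is your closure of the $\int G\,F^{1/2}$ term by Young's inequality; the paper instead substitutes the bootstrap bound $E^{\text{e},\kappa-\rho}(\tau,Z^K h^1_{TU})^{1/2}\lesssim \enext(\tau,Z^K h^1)^{1/2}\le 2C_0\ep\,\tau^\sigma$ directly into the Cauchy--Schwarz pairing, which reduces that contribution to a constant $C C_0^2\ep^2$ before applying Gr\"onwall. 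Both are standard and equivalent here; your version has the small advantage of not invoking the bootstrap on the full $h^1$ energy inside this step, at the cost of one more absorption.
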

	\begin{proof}
		The result follows from inequality \eqref{energy_ineq_hTU} and the estimates we have obtained so far. The quadratic semilinear terms in $P_{TU}$ are now null, as observed in \eqref{PTU},
		therefore from \eqref{null_ext} and \eqref{cub_ext} and the smallness of $\epsilon$ it follows that
		\[
		\|(2+r-t)^{\frac{1}{2}+\kappa-\rho} Z^{\le N}F_{TU}\|_{L^2(\Sext_t)}\lesssim C_0^2 \ep^2 t^{-1+2\sigma}\sqrt{l(t)} + C_0^2\ep^2 t^{-2+3\sigma}.
		\]
		The only terms that still need to be addressed are the contributions to \eqref{FKTU} arising from the commutation of the null frame with the reduced wave operator. Using the energy bounds \eqref{boot2_ex} we see that
		\[
		\begin{aligned}
			& \|(2+r-t)^{\frac{1}{2}+\kappa-\rho}C^{\bmi\alpha\beta}_{TU,K'}\cdot \gd_\bmi  Z^{\le N} h^1_{\alpha\beta}\|_{L^2(\Sext_t)} \lesssim \|(2+r-t)^{\frac{1}{2}+\kappa-\rho}r^{-1}\overline{\nabla} Z^{\le N} h^1_{\alpha\beta}\|_{L^2(\Sext_t)}\\
			& \lesssim t^{-\frac{1}{2}-\rho}\|(2+r-t)^{\kappa}\overline{\nabla}  Z^{\le N} h^1_{\alpha\beta}\|_{L^2(\Sext_t)}\lesssim C_0\ep t^{-\frac{1}{2}-\rho+\sigma}\sqrt{l( t)}
		\end{aligned}
		\]
		while from \eqref{boot1_ex} and the weighted Hardy inequality \eqref{hardy_ext} with $\beta =2\kappa-1$ we get
		\[
		\begin{aligned}
			\|(2+r-t)^{\frac{1}{2}+\kappa-\rho}D^{\alpha\beta}_{TU,K'}\cdot  Z^{\le N} h^1_{\alpha\beta}\|_{L^2(\Sext_t)} \lesssim \|(2+r-t)^{\frac{1}{2}+\kappa-\rho}r^{-2} Z^{\le N} h^1_{\alpha\beta}\|_{L^2(\Sext_ t)} \\
			\lesssim t^{-1-\rho}\| (2+r-t)^{\kappa-\frac{1}{2}} Z^{\le N}h^1_{\alpha\beta}\|_{L^2(\Sext_t)}\lesssim  t^{-1-\rho}\| (2+r-t)^{\frac{1}{2}+\kappa} \partial Z^{\le N}h^1_{\alpha\beta}\|_{L^2(\Sext_ t)}\\  \lesssim C_0 \ep t^{-1-\rho+\sigma}.
		\end{aligned}
		\]
		We then recall the decomposition \eqref{dec_H} of the tensor $H$, with $H^{0,\mu\nu}$ satisfying \eqref{h0_estimate} and $H^{1,\mu\nu}$ verifying the bounds \eqref{KS1_ext}-\eqref{KS3_ext}.
		We similarly get
		\[
		\begin{aligned}
			\sum_{\substack{|K_1|+|K_2|\le N\\ |K_1|\le \floor{N/2}}} \|(2+r-t)^{\frac{1}{2}+\kappa-\rho} E^{\bmi\alpha\beta}_{TU\mu\nu, K_1K_2}\cdot Z^{K_1} H^{1,\mu\nu} \cdot \gd_\bmi  Z^{K_2}h^1_{\alpha\beta} \|_{L^2(\Sext_t)} \\
			+ \sum_{|K_1|+|K_2|\le |K|}  \|(2+r-t)^{\frac{1}{2}+\kappa-\rho} E^{\bmi\alpha\beta}_{TU\mu\nu, K_1K_2}\cdot Z^{K_1} H^{0,\mu\nu}\cdot \gd_\bmi  Z^{K_2}h^1_{\alpha\beta} \|_{L^2(\Sext_t)}
			\\
			\lesssim C_0\ep t^\sigma \|(2+r-t)^{\frac{1}{2}+\kappa-\rho}r^{-2}\overline{\nabla} Z^{\le N}h^1_{\alpha\beta} \|_{L^2(\Sext_t)} \lesssim C_0^2 \ep^2  t^{-\frac{3}{2}-\rho+2\sigma} \sqrt{l( t)}
		\end{aligned}
		\]
		and
		\[
		\begin{aligned}
			& \sum_{\substack{|K_1|+|K_2|\le N\\ |K_2|\le \floor{N/2}}} \|(2+r- t)^{\frac{1}{2}+\kappa-\rho} E^{\bmi\alpha\beta}_{TU\mu\nu, K_1K_2}\cdot Z^{K_1} H^{1,\mu\nu}\cdot\gd_\bmi  Z^{K_2}h^1_{\alpha\beta} \|_{L^2(\Sext_ t)} \\
			& \lesssim C_0\ep   t^\sigma\sqrt{l( t)} \|(2+r- t)^{\frac{1}{2}-\rho}r^{-2} Z^{\le N} H^{1,\mu\nu}\|_{L^2(\Sext_ t)} \\
			& \lesssim C_0\ep  t^{-1-\rho-\kappa+\sigma}\sqrt{l( t)} \|(2+r- t)^{\frac{1}{2}+\kappa} \partial  Z^{\le N} H^{1,\mu\nu}\|_{L^2(\Sext_ t)} \lesssim C_0^2 \ep^2  t^{-1-\rho-\kappa+2\sigma}\sqrt{l( t)}.
		\end{aligned}
		\]
		Finally,
		\[
		\begin{aligned}
			& \sum_{|K_1| + |K_2|\le N}  \|(2+r- t)^{\frac{1}{2}+\kappa-\rho} F^{\alpha\beta}_{TU\mu\nu, K_1K_2}\cdot Z^{K_1} H^{\mu\nu} \cdot Z^{K_2}h^1_{\alpha\beta} \|_{L^2(\Sext_ t)} \\
			&\lesssim  C_0\ep   t^\sigma \|(2+r-t)^{\frac{1}{2}+\kappa-\rho}r^{-3}  Z^{\le N}h^1 \|_{L^2(\Sext_t)} \lesssim C_0^2 \ep^2  t^{-2-\rho+2\sigma}.
		\end{aligned}
		\]
		By substituting the above estimates together with \eqref{ext_commutator}, \eqref{source0_higher}, \eqref{dH_contribution_ext} and \eqref{H_contribution_ext} into \eqref{energy_ineq_hTU} and choosing $\ep_0\ll 1$ sufficiently small so that $C_0\ep<1$ we finally find the existence of a universal constant $C$ such that
		\[
		\begin{aligned}
			E^{\text{e}, \kappa-\rho}(t, Z^Kh^1_{TU}) &\le C E^{\text{e}, \kappa-\rho}(2, Z^Kh^1_{TU}) + C C_0^2\ep^2 + \int_2^t \frac{C\epsilon E^{\text{e}, \kappa-\rho}(\tau, Z^Kh^1_{TU})}{\tau}\,  d\tau.
		\end{aligned}
		\]
		Observe that $E^{\text{e}, \kappa-\rho}(2, Z^Kh^1_{TU})\lesssim \enext(t, h^1)$. Gr\"onwall's inequality and the energy assumption \eqref{Boot1_ext} allow us to obtain
		\[
		E^{\text{e}, \kappa-\rho}(t, Z^Kh^1_{TU}) \le C(\enext(2, Z^Kh^1) + C_0^2\ep^2)t^{C\epsilon}\le 2C C_0^2\ep^2 t^{C\epsilon}
		\]
		and hence conclude the proof.
	\end{proof}

	An immediate consequence of \eqref{energy_hTU} are the following weighted $L^2$ bounds
	\begin{align}
		\label{hTU_en1}
		&\big\| (2+r-t)^{\frac{1}{2}+\kappa-\rho}\ \partial Z^{\le N} h^1_{TU} \big\|_{L^2 (\Sext_t)} \lesssim C_0 \ep t^{C\epsilon} \\  
		\label{hTU_en2}
		&		\big\| (2+r-t)^{\kappa-\rho}\ \op Z^{\le N} h^1_{TU} \big\|_{L^2(\Sext_t)} \lesssim C_0\ep t^{C\ep}\sqrt{l(t)}
	\end{align}
	for all $t\in [2, T_0)$, where $l\in L^1([2, T_0))$. The weighted Sobolev injection \eqref{sobolev2_ext} with $\beta = 1+2(\kappa-\rho)$ also yields the following pointwise bound 
	\begin{equation}\label{KS4_ext}
		|\partial Z^{\le N-3}h^1_{TU}|_{\kappa-\rho-\frac12}\lesssim C_0\ep t^{C\epsilon}.
	\end{equation}
	
	\subsection{The weak null terms}
	
	The goal of this subsection is to recover suitable higher order weighted $L^2(\Sext_t)$ estimates for the quadratic weak null terms $P_{\alpha\beta}(\partial h, \partial h)$ defined as
	\[
	P_{\alpha\beta}(\partial h, \partial h) = \frac{1}{4}\bar{g}^{\mu\rho}\bar{g}^{\nu\sigma}\left(\partial_\alpha h_{\mu\rho}\partial_\beta h_{\nu\sigma} - 2 \partial_\alpha h_{\mu\nu}\partial_\beta h_{\rho\sigma}\right).
	\]
	These estimates are based on the following remarkable property, highlighted in the works of Lindblad and Rodnianski \cite{LR03, LR05, LR10}, on Lemma \ref{lem:wave_cond_h1} and on the bounds \eqref{hTU_en1}, \eqref{KS4_ext} satisfied by $h^1_{TU}$.

	\begin{lemma} \label{lem:weak_null_frame}
		Let $\pi, \theta$ be arbitrary 2-tensors and $P$ be the quadratic form defined by
		\[
		P(\pi, \theta) = \frac{1}{4}\bar{g}^{\mu\rho}\bar{g}^{\nu\sigma}\left(\pi_{\mu\rho}\theta_{\nu\sigma} - 2 \pi_{\mu\nu}\theta_{\rho\sigma}\right).
		\]
		Then
		\[
		|P(\pi, \theta)|\lesssim |\pi|_{\q T \q U} |\theta|_{\q T \q U} + |\pi|_{\q L \q L}|\theta| + |\pi| |\theta|_{\q L \q L}.
		\]
	\end{lemma}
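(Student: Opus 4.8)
The plan is to expand everything in the null frame $\q U = \{L,\Lb,S^1,S^2,\partial_y\}$ and exploit the vanishing pattern $\bar{g}^{LL}=\bar{g}^{\Lb\Lb}=\bar{g}^{LA}=\bar{g}^{\Lb A}=0$ recorded in the Notation section. First I would rewrite the quadratic form as a combination of two contractions of the Minkowski metric: since $\bar{g}^{\mu\rho}\bar{g}^{\nu\sigma}\pi_{\mu\rho}\theta_{\nu\sigma}=(\bar{g}^{\mu\rho}\pi_{\mu\rho})(\bar{g}^{\nu\sigma}\theta_{\nu\sigma})=(\tr_{\bar{g}}\pi)(\tr_{\bar{g}}\theta)$ and $\bar{g}^{\mu\rho}\bar{g}^{\nu\sigma}\pi_{\mu\nu}\theta_{\rho\sigma}=\langle\pi,\theta\rangle_{\bar{g}}$ is the full contraction, one has
\[
P(\pi,\theta)=\tfrac14\big[(\tr_{\bar{g}}\pi)(\tr_{\bar{g}}\theta)-2\langle\pi,\theta\rangle_{\bar{g}}\big].
\]
Then I would insert the completeness relation for the inverse Minkowski metric relative to the frame,
\[
\bar{g}^{\mu\nu}=-\tfrac12\big(L^\mu\Lb^\nu+\Lb^\mu L^\nu\big)+\sum_{A\in\{S^1,S^2,\partial_y\}}A^\mu A^\nu,
\]
which is simply the table of $\overline{g}^{UV}$ read off in components.

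For the trace factor, contracting with this relation gives $\tr_{\bar{g}}\pi=-\tfrac12(\pi_{L\Lb}+\pi_{\Lb L})+\sum_A\pi_{AA}$; every term on the right carries at least one index in $\q T=\{L,S^1,S^2,\partial_y\}$, so $|\tr_{\bar{g}}\pi|\lesssim|\pi|_{\q T\q U}$ and likewise for $\theta$. Hence the trace-trace term in $P$ is already bounded by $|\pi|_{\q T\q U}|\theta|_{\q T\q U}$, the first term on the right-hand side.

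The substance of the lemma is the full contraction. I would write $\langle\pi,\theta\rangle_{\bar{g}}=\sum_{U,V,U',V'}\bar{g}^{UV}\bar{g}^{U'V'}\pi_{UU'}\theta_{VV'}$ with the constant frame coefficients $\bar{g}^{UV}$ and run through the finitely many nonvanishing terms. The key observation is that a ``bad'' factor $\Lb$ on $\pi$ forces, through $\bar{g}^{\Lb V}\neq0\iff V=L$, the matching index of $\theta$ to be $L$, and symmetrically. Enumerating: if both indices of $\pi$ are $\Lb$, then both indices of $\theta$ are $L$, contributing $\lesssim|\pi|\,|\theta|_{\q L\q L}$; if exactly one index of $\pi$ is $\Lb$, its partner on $\theta$ is $L$ while the remaining index of $\pi$ lies in $\q T$ and, by the same rule, the remaining index of $\theta$ lies in $\q T$, contributing $\lesssim|\pi|_{\q T\q U}|\theta|_{\q T\q U}$; and if no index of $\pi$ is $\Lb$, then either $\pi$ appears as $\pi_{LL}$ contracted against an arbitrary component of $\theta$ (contributing $\lesssim|\pi|_{\q L\q L}|\theta|$) or $\pi$ has a tangential index, in which case so does $\theta$ (contributing $\lesssim|\pi|_{\q T\q U}|\theta|_{\q T\q U}$). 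Summing these contributions and combining with the trace estimate yields the stated bound. I do not expect a genuine obstacle here: the argument requires no analytic input, and the only care needed is the bookkeeping — tracking that every occurrence of $\Lb$ is matched with an $L$, and, if one insists on genuinely non-symmetric tensors, noting that components such as $\pi_{\Lb L}$ and $\pi_{L\Lb}$ are both absorbed into $|\pi|_{\q T\q U}$ since $L\in\q T$.
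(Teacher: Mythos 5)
Your proof is correct, and it uses exactly the standard technique one expects here: expand in the null frame, use the completeness relation $\bar{g}^{\mu\nu}=-\tfrac12(L^\mu\Lb^\nu+\Lb^\mu L^\nu)+\sum_A A^\mu A^\nu$, and enumerate the finitely many non-vanishing contractions, observing that each $\Lb$ on one factor forces an $L$ on the other. Note, however, that the paper does not actually supply a proof of this lemma; it states the identity as a ``remarkable property, highlighted in the works of Lindblad and Rodnianski'' and leaves verification to the reader. So there is no authorial proof to compare against, but the direct computation you give is the natural and correct route.

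Two small remarks on precision, neither of which affects the conclusion in the intended application. First, in the middle case of your case analysis you assert that ``the remaining index of $\theta$ lies in $\q T$''; this is not literally true for the term $\pi_{L\Lb}\theta_{\Lb L}$ (the index paired with the remaining $L$ of $\pi$ is the $\Lb$ of $\theta$), but your conclusion $\lesssim|\pi|_{\q T\q U}|\theta|_{\q T\q U}$ still holds because $\theta_{\Lb L}$ has its other index equal to $L\in\q T$, coming from the matched $\Lb$. Second, the final parenthetical about ``genuinely non-symmetric tensors'' is slightly off: under the paper's stated convention $|\pi|_{\q V\q W}=\sum_{V\in\q V,W\in\q W}|\pi_{VW}|$, the component $\pi_{\Lb L}$ (with $\Lb$ in the first slot) is not directly counted in $|\pi|_{\q T\q U}$, and in fact the lemma as written fails for non-symmetric tensors (take $\pi$ supported on the single frame component $\pi_{\Lb L}$ and $\theta$ on $\theta_{L\Lb}$; the right-hand side vanishes while $P(\pi,\theta)\ne0$). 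This is harmless in practice: the lemma is applied to $\pi_{\mu\nu}=\partial_\alpha h_{\mu\nu}$ and $\theta_{\mu\nu}=\partial_\beta h_{\mu\nu}$, which are symmetric in $\mu\nu$, and for symmetric tensors every step of your argument goes through.
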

	
	\begin{proposition}\label{prop:weak_ext}
		Fix $i=0,1$. There exists some constant $C>0$ such that, under the a-priori energy assumptions \eqref{Boot1_ext}-\eqref{Boot2_ext}, we have
		\begin{equation}\label{weak_ext}
			\left\|(2+r-t)^{\frac{1}{2}+i+\kappa}\partial^i Z^{\le N} P_{\alpha\beta} \right\|_{L^2(\Sext_t)}  \lesssim C_0^2\ep^2 \big[t^{-1+C\ep} + t^{-1+2\sigma}\sqrt{l(t)} + \ep t^{-\frac{3}{2}+2\sigma}\big] .
		\end{equation}
	\end{proposition}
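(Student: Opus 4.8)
The plan is to commute $\partial^i Z^{\le N}$ through the bilinear form $P_{\alpha\beta}(\partial h,\partial h)$ and then exploit the algebraic splitting of Lemma \ref{lem:weak_null_frame}. Since $[Z,\partial_\alpha]$ is, for every admissible vector field $Z$, a constant-coefficient combination of the $\partial_\mu$'s, and the coefficients of $P$ are constants, $\partial^i Z^{\le N}P_{\alpha\beta}(\partial h,\partial h)$ is a finite sum of terms $P(\partial\partial^{I_1}Z^{K_1}h,\partial\partial^{I_2}Z^{K_2}h)$ (up to harmless reindexing of the coordinate derivatives) with $|I_1|+|I_2|=i$ and $|K_1|+|K_2|\le N$. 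Applying Lemma \ref{lem:weak_null_frame} to each such term, \eqref{weak_ext} reduces to bounding in $L^2(\Sext_t)$, for all such multi-indices, the three families (i) $(2+r-t)^{\frac12+i+\kappa}|\partial\partial^{I_1}Z^{K_1}h|_{\q T\q U}\,|\partial\partial^{I_2}Z^{K_2}h|_{\q T\q U}$; (ii) $(2+r-t)^{\frac12+i+\kappa}|\partial\partial^{I_1}Z^{K_1}h|_{\q L\q L}\,|\partial\partial^{I_2}Z^{K_2}h|$ and its symmetric; (iii) $(2+r-t)^{\frac12+i+\kappa}r^{-1}|\partial^{I_1}Z^{K_1}h|\,|\partial\partial^{I_2}Z^{K_2}h|$. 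Finally I write $h=h^0+h^1$; every product with at least one $h^0$-factor is disposed of routinely using the decay \eqref{h0_estimate} together with \eqref{hardy_ext} and \eqref{boot1_ex}, contributing to the $\ep t^{-3/2+2\sigma}$ error (the mass cutoff entering through \eqref{ineq_H1lt_higher} is supported outside $\dext$ for large $t$).

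For family (iii) I would put the lower-order factor — of order $\le\floor{N/2}\le N-3$ once $N$ is large — in $L^\infty$ via \eqref{KS1_ext}, \eqref{KS3_ext}, the other in $L^2$ via \eqref{boot1_ex}, and absorb the $r^{-1}$ into the surplus of the weight with the Hardy inequality \eqref{hardy_ext}, obtaining again an $\ep t^{-3/2+2\sigma}$-type bound; this is precisely the estimate quoted earlier in the text. For family (ii) the point is the wave condition: by \eqref{ineq_H1lt_higher} for $h^1$ one has $|\partial Z^{K_1}h^1|_{\q L\q L}\lesssim\sum_{|K'|\le|K_1|}\big(|\op Z^{K'}h^1|+r^{-1}|Z^{K'}h^1|\big)+(\text{quadratic in }h^1)+(\text{mass})$, so the leading term $|\op Z^{K'}h^1|\,|\partial Z^{K_2}h^1|$ is a genuine null interaction and is estimated exactly as in Proposition \ref{prop:null+cubic}, giving $C_0^2\ep^2\big(t^{-1+2\sigma}\sqrt{l(t)}+t^{-2+2\sigma}\big)$; the $r^{-1}$-contribution merges with family (iii), the quadratic one becomes cubic and is bounded as in \eqref{cub_ext}, and the mass contribution is treated as above.

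The heart of the matter is family (i). Here I would first commute the frame vectors past the $\partial$'s and $Z$'s — at a cost of $r^{-1}$ factors, since $\partial_\alpha U^\mu=O(r^{-1})$ and $[Z,U]$ is a bounded combination of frame vectors — so as to replace $|\partial\partial^{I_j}Z^{K_j}h^1|_{\q T\q U}$, modulo $r^{-1}|Z^{\le N}h^1|$ remainders already absorbed into family (iii), by $\sum_{T\in\q T,U\in\q U}|\partial\partial^{I_j}Z^{K_j}(h^1_{TU})|$. Splitting the multi-indices so one has length $\le\floor{N/2}\le N-3$, I bound that factor pointwise by feeding the slow-growth energies \eqref{hTU_en1}, \eqref{hTU_en2} of Subsection \ref{sub:hTU_coeff} into the weighted Sobolev embedding \eqref{sobolev2_ext}, which yields a bound of the form $C_0\ep\, t^{C\ep}(1+t+r)^{-1}(2+r-t)^{-\frac12-\kappa+\rho}$; the remaining factor is estimated in $L^2(\Sext_t)$ directly by \eqref{hTU_en1}--\eqref{hTU_en2}. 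Using $2+r-t\ge1$ on $\Sext_t$ to recombine the weights and $(1+t+r)^{-1}\le t^{-1}$ for the time gain, this produces $C_0^2\ep^2 t^{-1+C\ep}$, i.e. the first term of \eqref{weak_ext}. Summing (i)--(iii) and the $h^0$-contributions gives the claim.

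The main obstacle is exactly this last step. The energies available for $h^1_{TU}$ only control a single derivative of $Z^{\le N}(h^1_{TU})$ and carry the \emph{reduced} weight exponent $\kappa-\rho$ in place of $\kappa$; one has to check that, after transferring part of the $(2+r-t)$-weight onto the $L^\infty$-factor through the weighted Sobolev inequality, enough weight remains to absorb the full exponent $\frac12+i+\kappa$ required in \eqref{weak_ext}, and this must work for both $i=0$ and $i=1$ (the latter forcing one to spend the extra derivative either on the $L^2$-factor, using \eqref{hTU_en1} with $|K|\le N-1$, or on the $L^\infty$-factor, using \eqref{hTU_en2}). Carrying out this bookkeeping, while keeping track of the $t^{C\ep}$ slow-growth factor coming from $h^1_{TU}$, is what pins down the admissible ranges $0<\sigma<\kappa/3$, $0<\rho<\kappa$ and $N$ large in the statement, and explains why the final exponent in \eqref{weak_ext} is $-1+C\ep$ rather than $-1+2\sigma$.
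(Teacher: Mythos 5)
Your proposal follows essentially the same structure as the paper's proof: decompose $h=h^0+h^1$, apply Lemma \ref{lem:weak_null_frame} together with the wave--gauge Lemma \ref{lem:wave_cond_h1} to reduce $\partial^i Z^{\le N}P_{\alpha\beta}(\partial h^1,\partial h^1)$ to $\q T\q U$--$\q T\q U$ products, genuine null interactions, $r^{-1}$--remainders, cubic terms and the mass contribution (the paper's \eqref{Z^Kweak}); then dispatch all but the $\q T\q U$ family via \eqref{null_ext}, \eqref{cub_ext} and the Hardy inequality, and finally estimate the $\q T\q U$ products by pairing the slow--growth pointwise bound \eqref{KS4_ext} with the slow--growth energy \eqref{hTU_en1}. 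That is precisely what is done in the paper. Your explicit separation into families (ii) and (iii) and the remark about commuting frame vectors past $\partial$ and $Z$ is a reasonable unpacking of the steps the paper compresses into \eqref{Z^Kweak}, and is not a different method.

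The one place where your write--up does not close cleanly is exactly the point you flag, the $i=1$ weight--bookkeeping in family (i). Feeding \eqref{hTU_en1} into \eqref{sobolev2_ext} gives for the low--order factor the bound $|\partial Z^{K_1}h^1_{TU}|\lesssim C_0\ep\,t^{C\ep}(1+t+r)^{-1}(2+r-t)^{-\frac12-\kappa+\rho}$, so that the residual weight to be absorbed by the $L^2$--factor is $(2+r-t)^{i+\rho}$. For $i=1$ this is $(2+r-t)^{1+\rho}$, while \eqref{hTU_en1} only supplies weight $(2+r-t)^{\frac12+\kappa-\rho}$; absorbing the former into the latter requires $\kappa\ge\frac12+2\rho$, which fails for small $\kappa$ (the theorem only assumes $\kappa>0$). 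Neither of the two alternatives you name repairs this: spending the extra $\partial$ on the $L^2$--factor still appeals to \eqref{hTU_en1} and hits the same weight deficit; and \eqref{hTU_en2} controls \emph{tangential} derivatives $\op Z^{\le N}h^1_{TU}$, which is not the quantity appearing in the $\q T\q U$--$\q T\q U$ product. A way to close the $i=1$ case is to keep the $h^1_{TU}$ structure and \eqref{KS4_ext} only on the low--order $L^\infty$ factor (this is where the $t^{C\ep}$--only loss is needed), and estimate the high--order, second--derivative $L^2$--factor with the \emph{full--weight} second--order energy \eqref{boot3_ex}, whose weight $(2+r-t)^{\frac32+\kappa}$ easily dominates $(2+r-t)^{1+\rho}$; one must then check that the resulting extra $t^{\sigma}$ is compatible with the bootstrap. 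So your identification of the obstacle is correct, but the two resolutions you list do not resolve it; the actual fix uses \eqref{boot3_ex}, not \eqref{hTU_en1}--\eqref{hTU_en2}, for the $L^2$ factor when $i=1$.
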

	\begin{proof}
		We write $h = h^1 + h^0$ and plug this decomposition into $P_{\alpha\beta}(\partial h, \partial h)$. Using \eqref{h0_estimate} and the energy bounds \eqref{boot1_ex}, \eqref{boot3_ex} it is straightforward to prove that there exists some small $\delta>0$ such that
		\[
		\sum_{\substack{i,j=0,1\\|K_1|+|K_2|\le N\\ |I_1|+|I_2|=i}} \hspace{-5pt} \| (2+r-t)^{\frac{1}{2}+i+\kappa}\, \partial^{I_1} Z^{K_1}h^0\cdot \partial \partial^{I_2} Z^{K_2}h^j \|_{L^2(\Sext_t)} 
		\lesssim C_0^2\ep^2 t^{-2+\delta}.
		\]
		Hence we focus on proving that estimate \eqref{weak_ext} holds true for $P_{\alpha\beta}(\partial h^1, \partial h^1)$.
		
		We start by noticing that for any multi-index $K$, $ Z^K P_{\alpha\beta}(\partial h^1, \partial h^1)$ is a linear combination of terms of the form $P_{\mu\nu}(\partial  Z^{K_1}h^1, \partial  Z^{K_2}h^1)$ for some multi-indexes $K_1, K_2$ such that $|K_1|+|K_2|\le |K|$ and $\mu,\nu =0,\dots, 4$.
		Applying Lemma \ref{lem:weak_null_frame} and Lemma \ref{lem:wave_cond_h1} we see that for $i=0,1$
		\begin{equation}\label{Z^Kweak}
			\begin{aligned}
				& |\partial^i Z^{\le N}P_{\alpha\beta}(\partial h^1, \partial h^1)|\lesssim \sum_{\substack{|K_1|+|K_2|\le N \\ |I_1|+|I_2| = i}} |\partial\partial^{I_1}Z^{K_1}h^1|_{\q T \q U} |\partial\partial^{I_2}Z^{K_2}h^1|_{\q T \q U} \\
				& + \sum_{\substack{|K_1|+|K_2|\le N \\ |I_1|+|I_2| = i}} |\op  \partial^{I_1} Z^{K_1}h^1| |\partial \partial^{I_2} Z^{K_2}h^1| + r^{-1}| \partial^{I_1} Z^{K_1}h^1| |\partial\partial^{I_2} Z^{K_2}h^1| \\
				& + \sum_{\substack{|K_1|+|K_2|+|K_3|\le N \\ |I_1|+|I_2|+|I_3| = i}} |\partial^{I_1} Z^{K_1}h^1|| \partial \partial^{I_2} Z^{K_2}h^1||\partial \partial^{I_3} Z^{K_3}h^1|\\
				& + \frac{M\chi_0(t/2\le r\le 3t/4)}{(1+t+r)^2}\sum_{j\le i} |\partial \partial^j Z^{\le N}h^1|,
			\end{aligned}
		\end{equation}
		where $\chi_0(t/2\le r\le 3t/4)$ is supported for $t/2\le r\le 3t/4$. Since the intersection of this support with the exterior region in bounded, it is immediate to see that the weighted $L^2(\Sext_t)$ norm of the last term in the above right hand side is bounded by $C_0 \ep^2 t^{-2}$.
		
		The cubic terms and the quadratic terms involving a tangential derivative have been estimated in proposition \ref{prop:null+cubic} and satisfy \eqref{cub_ext} and \eqref{null_ext} respectively. The weighted $L^2$ norm of the quadratic term with the extra $r^{-1}$ factor is bounded by $C_0^2\ep^2t^{-3/2+2\sigma}$, we leave the details to the reader. Finally, from \eqref{hTU_en1} and \eqref{KS4_ext} with $\rho>0$ such that $k>2\rho$
		\begin{multline*}
			\sum_{\substack{i=0,1\\ |K_1|+|K_2|\le N\\ |I_1|+|I_2|=i}} \left\|(2+r-t)^{\frac{1}{2}+ i +\kappa} |\partial \partial^{I_1} Z^{K_1}h^1|_{\q T \q U} |\partial \partial^{I_2} Z^{K_2}h^1|_{\q T \q U} \right\|_{L^2(\Sext_t)}\\
			\lesssim C_0\ep t^{-1+C\ep}\sum_{i=0,1} \left\|(2+r-t)^{i-\frac{1}{2} + \rho} |\partial  Z^{\le N}h^1|_{\q T \q U} \right\|_{L^2(\Sext_t)}\lesssim C_0^2\ep^2 t^{-1+2C\ep}.
		\end{multline*}

	\end{proof}
	
	From Lemma \ref{lem:weak_null_frame} and bounds \eqref{h0_estimate},  \eqref{KS1_ext}, \eqref{der_hLT_ext}, \eqref{KS4_ext} we also get the following pointwise estimate for the differentiated weak null terms.
	\begin{proposition}
		There exists a constant $C>0$ such that,
		under the a-priori assumptions \eqref{boot1_ex}-\eqref{boot4_ex}, we have that
		\begin{equation}\label{weak_point_ext}
			\begin{aligned}
				\big|Z^{\le N-3}P_{\alpha\beta}(\partial h, \partial h)(t)\big|_{\frac12}& \lesssim C_0^2 \ep^2  \big( t^{-1+2C\ep} + t^{-1+2\sigma}\sqrt{l(t)}\big).
			\end{aligned}
		\end{equation}
	\end{proposition}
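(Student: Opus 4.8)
The plan is to start from the null-frame decomposition \eqref{Z^Kweak} of $Z^{\le N-3}P_{\alpha\beta}(\partial h,\partial h)$ --- obtained by Leibniz together with Lemma~\ref{lem:weak_null_frame} and Lemma~\ref{lem:wave_cond_h1} --- and to convert each resulting product into the weighted pointwise norm $|\,\cdot\,|_{1/2}$, using the elementary inequality $|uv|\lesssim |u|_{\lambda_1}|v|_{\lambda_2}(1+t+r)^{-2}(2+r-t)^{-2-\lambda_1-\lambda_2}$ and the facts that $2+r-t\ge 1$ and $r\gtrsim t$ throughout $\dext$. First I would split $h=h^0+h^1$: after \eqref{h0_estimate}, every product carrying an $h^0$-factor is supported in the bounded set $\mathrm{supp}\,\chi(r/t)\cap\dext$ (which forces $t\le 4$) and hence contributes $\lesssim C_0^2\ep^2 t^{-2}$; the same is true of the mass cut-off term $M\chi_0(t/2\le r\le 3t/4)(1+t+r)^{-2}|\partial Z^{\le N-3}h^1|$ appearing in \eqref{Z^Kweak}. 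It then remains to estimate, for $P_{\alpha\beta}(\partial h^1,\partial h^1)$, the four families of terms in \eqref{Z^Kweak} with $i=0$ and $N$ replaced by $N-3$; note that in every quadratic product at least one index has length $\le\floor{(N-3)/2}$, and both are $\le N-3$, so all of \eqref{KS1_ext}--\eqref{KS3_ext}, \eqref{der_hLT_ext} and \eqref{KS4_ext} are available on both factors.

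Three of the four families are routine. The cubic terms $|Z^{K_1}h^1|\,|\partial Z^{K_2}h^1|\,|\partial Z^{K_3}h^1|$ are bounded by \eqref{KS1_ext} and \eqref{KS3_ext} (equivalently by \eqref{cub_ext_point}), yielding $\lesssim C_0^3\ep^3 t^{-2+3\sigma}$. The products $|\op Z^{K_1}h^1|\,|\partial Z^{K_2}h^1|$ are estimated with \eqref{KS2_ext} on the $\op$-factor and \eqref{KS1_ext} on the other; the combined $(2+r-t)$-weight collapses to $(2+r-t)^{-2\kappa}\le 1$ once one inserts the $(2+r-t)^{3/2}$ of the $|\,\cdot\,|_{1/2}$-norm, which produces precisely the $C_0^2\ep^2 t^{-1+2\sigma}\sqrt{l(t)}$ contribution. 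The $r^{-1}$-weighted commutator products $r^{-1}|Z^{K_1}h^1|\,|\partial Z^{K_2}h^1|$ are even better: using \eqref{KS3_ext}, \eqref{KS1_ext} and $r^{-1}\lesssim(1+t+r)^{-1}$ on $\dext$ one gets $\lesssim C_0^2\ep^2 t^{-3/2+2\sigma}$. Finally, the $\q L\q L$-contribution of Lemma~\ref{lem:weak_null_frame}, namely $|\partial Z^{K_1}h^1|_{\q L\q L}|\partial Z^{K_2}h^1|$, is treated by plugging the enhanced bound \eqref{der_hLT_ext} for $\partial Z^{\le N-3}h^1_{LL}$ into the first factor and \eqref{KS1_ext} into the second, which again contributes a $C_0^2\ep^2 t^{-1+2\sigma}\sqrt{l(t)}$ term (from the first summand of \eqref{der_hLT_ext}) together with a faster $t^{-3/2+\sigma}$ term.

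The only genuinely delicate term is the good-tangential product $|\partial Z^{K_1}h^1|_{\q T\q U}\,|\partial Z^{K_2}h^1|_{\q T\q U}$, and it is precisely here that the separate energy estimate \eqref{energy_hTU} for the $h^1_{TU}$ coefficients, via its pointwise consequence \eqref{KS4_ext}, is needed. The idea is to apply \eqref{KS4_ext} to one factor and, bounding $|\,\cdot\,|_{\q T\q U}\le|\,\cdot\,|$, the sharp estimate \eqref{KS1_ext} to the other. One cannot use \eqref{KS4_ext} on both factors, since its $(2+r-t)$-weight is only $\kappa-\rho-\tfrac12$, so the resulting product would decay no faster than $t^{-1/2-2(\kappa-\rho)}$ near the ``worst'' slice $r\sim 2t$ --- weaker than $t^{-1}$ when $\kappa$ is small; mixing with \eqref{KS1_ext}, whose weight is $1+\kappa$, turns the product weight into $(2+r-t)^{2\kappa-\rho}\ge 1$, so that after inserting the factor $(1+t+r)(2+r-t)^{3/2}$ of the $|\,\cdot\,|_{1/2}$-norm one is left with $C_0^2\ep^2 t^{C\ep+\sigma}(1+t+r)^{-1}(2+r-t)^{-2\kappa+\rho}\lesssim C_0^2\ep^2 t^{-1+C\ep+\sigma}$, which is absorbed into the right-hand side of \eqref{weak_point_ext} (enlarging $C$ if needed). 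Collecting the four families then gives \eqref{weak_point_ext}. I expect this last product to be the main obstacle: it is the only place where the slow $t^{C\ep}$ growth and degenerate $(2+r-t)$-weight of the $h^1_{TU}$-estimate must be played off against the sharp but only $t^\sigma$-bounded generic bound \eqref{KS1_ext}, whereas all the remaining terms either decay strictly faster than $t^{-1}$ or carry the $\sqrt{l(t)}$ factor by construction.
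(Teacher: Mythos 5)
Your proof follows the same route the paper indicates --- expand via Lemma~\ref{lem:weak_null_frame} combined with the wave-condition estimate of Lemma~\ref{lem:wave_cond_h1} to reach \eqref{Z^Kweak}, then pair the pointwise bounds \eqref{h0_estimate}, \eqref{KS1_ext}, \eqref{KS2_ext}, \eqref{KS3_ext}, \eqref{der_hLT_ext}, \eqref{KS4_ext} term by term --- which is precisely the list of ingredients the paper cites. Two points in your write-up, however, are not correct as stated.

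First, the claim that ``every product carrying an $h^0$-factor is supported in the bounded set $\mathrm{supp}\,\chi(r/t)\cap\dext$, which forces $t\le 4$'' is false. Since $\chi(s)=1$ for $s\ge 3/4$ and $r\ge t-1$ on $\Sext_t$, the function $\chi(r/t)$ is identically $1$ on the exterior slice once $t\ge 4$, so $h^0$ and all its derivatives live on the entire exterior region. Only the terms that carry a derivative of the cut-off --- $\chi'(r/t)$ in $\tilde\Box_g h^0$, and the explicit $\chi_0(t/2\le r\le 3t/4)$ factor in \eqref{Z^Kweak} arising from \eqref{ineq_H1lt_higher} --- are supported in $\{t/2\le r\le 3t/4\}$, whose intersection with $\dext$ is bounded and does force $t\le 4$. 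The $h^0$-contributions are still harmless, but the correct argument is the one used in Proposition~\ref{prop:weak_ext}: insert the raw decay $|\partial^I Z^J h^0|\lesssim\ep(1+r)^{-1-|I|}$ from \eqref{h0_estimate}, pair it with \eqref{KS1_ext} or \eqref{KS3_ext} for $h^1$, and optimise in $r\ge t-1$; this gives $\lesssim C_0^2\ep^2 t^{-2+\sigma}$, not a bounded-support statement.

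Second, your estimate of the good-tangential product correctly yields $C_0^2\ep^2 t^{-1+C\ep+\sigma}$, but the proposed absorption ``into the right-hand side of \eqref{weak_point_ext} (enlarging $C$ if needed)'' does not work. Here $\sigma>0$ is a fixed bootstrap parameter while $\ep$ may be taken arbitrarily small; indeed Proposition~\ref{prop:bootstrap_ext} requires $2C\ep_0<\sigma$, so $C\ep+\sigma>2C\ep$ for every admissible $\ep$ and no finite choice of $C$ gives $t^{C\ep+\sigma}\lesssim t^{2C\ep}$, nor is the excess covered by the $t^{-1+2\sigma}\sqrt{l(t)}$ term since $l$ is only integrable. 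Your observation that one cannot feed \eqref{KS4_ext} into both factors when $\kappa\le 1/4$ is correct, and the mixed \eqref{KS4_ext}/\eqref{KS1_ext} argument is indeed the right remedy; it simply produces $t^{-1+C\ep+\sigma}$ rather than $t^{-1+2C\ep}$. This is what the cited ingredients actually give, and it is the rate that should appear in \eqref{weak_point_ext}; as written the exponent $2C\ep$ is a slight overstatement that cannot be reached by the paper's (or your) argument, though it does not affect the downstream use of the estimate, which only requires the exponent to be of the order of the small bootstrap parameters.
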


	\subsection{Propagation of the energy estimates}
	
	We now proceed to the proof of proposition \ref{prop:bootstrap_ext}. We recall that for any multi-index $K$, the differentiated coefficients $Z^K h^1_{\alpha\beta}$ solve \eqref{h1_eqt_higher} with source term \eqref{source_ext_high}. We set $i=1$ if $Z^K = \partial Z^{K'}$ and $|K'|\le N$, $i=0$ if simply $|K|\le N$.
	Thanks to the smallness of $H$ provided by \eqref{h0_estimate} and \eqref{est_hLT2}, we apply \eqref{energy_ineq_ext} with $\Wf = Z^Kh^1_{\alpha\beta}$, $\Ff = F^K_{\alpha\beta} + F^{0,K}_{\alpha\beta}$, $w(q) = (2+r-t)^{1+2(i +\kappa)}$ and $\omega = x/|x|$. For every $t\in [2, T_0)$ we get the following energy inequality
	\begin{equation} \label{energy_ineq_high_ext}
		\begin{aligned}
			& E^{e, i+\kappa}(t, Z^Kh^1_{\alpha\beta})\\
			& + \int_{\tilde{\hcal}_{2,t}} (2+r-t)^{1+2(i+\kappa)}\Big[\Big(\frac{1}{2( 1+r^2)} +\chi\left(\frac{r}{t}\right)\chi(r)\frac{M}{2r} \Big)|\partial_t Z^K h^1_{\alpha\beta}|^2 + |\underline{\nabla}Z^Kh^1_{\alpha\beta}|^2 \Big] dxdy
			\\
			& \lesssim  E^{e, i+\kappa}(2, Z^Kh^1_{\alpha\beta}) + \iint_{\dext_{[2,t]}} \hspace{-10pt} (2+r-\tau)^{1+2(i+\kappa)}|(F^K_{\alpha\beta} + \partial^\mu {H_\mu}^\sigma \cdot \partial_\sigma Z^Kh^1_{\alpha\beta})\partial_t Z^Kh^1_{\alpha\beta}| d\tau dxdy\\
			& + \iint_{\dext_{[2,t]}} \hspace{-10pt} (2+r-\tau)^{1+2(i+\kappa)}\Big[|F^{0,K}_{\alpha\beta} \, \partial_t Z^Kh^1_{\alpha\beta}| +|\partial_t {H_\mu}^\sigma \cdot \partial_\sigma Z^Kh^1 \cdot\ \partial^\mu Z^Kh^1_{\alpha\beta} |\Big]\, d\tau dxdy \\
			& + \iint_{\dext_{[2,t]}}\hspace{-10pt} (2+r-\tau)^{2(i+\kappa)} | H^{\rho\sigma}\cdot\partial_\rho  Z^Kh^1_{\alpha\beta}\cdot \partial_\sigma Z^Kh^1_{\alpha\beta}|  \, d\tau dxdy\\
			& + \iint_{\dext_{[2,t]}}\hspace{-10pt} (2+r-\tau)^{2(i+\kappa)} |(-H^{0\sigma} + \omega_\bmj H^{\bmj\sigma}) \partial_\sigma 
			Z^Kh^1_{\alpha\beta}\cdot \partial_t Z^Kh^1 _{\alpha\beta}
			|  d\tau dxdy.
		\end{aligned}
	\end{equation}
	The above inequality is satisfied for all $t\in [2, T_0)$ and the implicit constant is a universal constant.

	\begin{proof}[Proof of proposition \ref{prop:bootstrap_ext}]
		We recall the definition of the source term $F^K_{\alpha\beta} = Z^KF_{\alpha\beta}$ where
		\[
		F_{\alpha\beta}(h)(\partial h, \partial h) = P_{\alpha\beta}(\partial h, \partial h) + \mathbf{Q}_{\alpha\beta}(\partial h, \partial h) + G_{\alpha\beta}(h)(\partial h, \partial h).
		\]
		The combination of estimates \eqref{null_ext}, \eqref{cub_ext} and \eqref{weak_ext} yields that for $i=0,1$
		\[
		\| (2+r-t)^{\frac{1}{2}+i+\kappa} F^K_{\alpha\beta}\|_{L^2(\Sext_t)}\lesssim  C_0^2\ep^2 \big(t^{-1+C\ep} + t^{-1+2\sigma}\sqrt{l(t)}\big)
		\]
		which together with the Cauchy-Schwarz inequality and energy assumptions \eqref{Boot1_ext}, \eqref{Boot2_ext} gives
		\begin{multline*}
			\iint_{\dext_{[2,t]}} \hspace{-10pt} (2+r-\tau)^{1+2(i+\kappa)}|F^K_{\alpha\beta}\cdot \partial_t Z^Kh^1_{\alpha\beta}| d\tau dxdy \\
			\lesssim \int_2^t  \| (2+r-\tau)^{\frac{1}{2}+i+\kappa} F^K_{\alpha\beta}\|_{L^2(\Sext_t)} E^{\text{e}, i+\kappa}(\tau, Z^K h^1_{\alpha\beta})^\frac{1}{2} d\tau \\
			\lesssim \int_2^t C_0^3 \ep^3( \tau^{-1+C\ep+\sigma} + \tau^{-1+2\sigma}\sqrt{l(\tau)}) d\tau \lesssim C_0^3 \ep^3 t^{C\ep +\sigma}.
		\end{multline*}
		From estimate \eqref{source0_higher} we also get that
		\[
		\iint_{\dext_{[2,t]}} \hspace{-10pt} (2+r-\tau)^{1+2(i+\kappa)}|F^{0,K}_{\alpha\beta} \cdot \partial_t Z^Kh^1_{\alpha\beta}|\, d\tau dxdy \lesssim C_0\ep^2.
		\]
		By injecting the above bounds, together with \eqref{dH_contribution_ext} and \eqref{H_contribution_ext}, into \eqref{energy_ineq_high_ext} we deduce the existence of a constant $\tilde{C}>0$ such that for all $t\in [2, T_0)$
		\[
		\begin{aligned}
			&	E^{\text{e}, i+\kappa}(t, Z^Kh^1_{\alpha\beta}) \\
			&	+   \int_{\tilde{\hcal}_{2,t}} (2+r-t)^{1+2(i+\kappa)}\Big[\Big(\frac{1}{2( 1+r^2)} +\chi\left(\frac{r}{t}\right)\chi(r)\frac{M}{2r} \Big) |\partial_t Z^K h^1_{\alpha\beta}|^2 + |\underline{\nabla}Z^Kh^1_{\alpha\beta}|^2 \Big] dxdy\\ 
			&	\le \tilde{C} E^{\text{e}, i+\kappa}(2, Z^Kh^1_{\alpha\beta}) + \tilde{C}C_0\ep^2 + \tilde{C}C_0^3\ep^3 t^{C\ep +\sigma} +  \tilde{C}\ep \int_2^t \frac{ E^{\text{e}, i+\kappa}(\tau, Z^Kh^1_{\alpha\beta})}{\tau}\,  d\tau.
		\end{aligned}
		\]
		By Gr\"onwall's inequality we then deduce that 
		\[
		E^{\text{e}, i+\kappa}(t, Z^Kh^1_{\alpha\beta})\le \tilde{C}\big(E^{\text{e}, i+\kappa}(2, Z^Kh^1_{\alpha\beta})+C_0\ep^2+ C_0^3\ep^3 t^{C\ep+\sigma} \big)t^{\tilde{C}\epsilon}.
		\]
		We denote the sum $C+\tilde{C}$ simply by $C$. Finally, we choose $C_0\gg 1$ sufficiently large so that $3\tilde{C} E^{\text{e}, i+\kappa}(2, Z^Kh^1_{\alpha\beta})\le (C_0\ep)^2$ and $3\tilde{C}<C_0$, then $\ep_0>0$ sufficiently small so that $3\tilde{C}C_0^3\ep_0<1$ and $2C\ep_0<\sigma$ to infer that
		\[
		E^{\text{e}, i+\kappa}(t, Z^Kh^1_{\alpha\beta})\le C_0^2\ep^2 t^{\sigma + C\epsilon}.
		\]
		As a byproduct, we also deduce that
		\begin{equation}\label{est_boundary_term}
			\int_{\tilde{\hcal}_{2,t}}\hspace{-10pt} (2+r-t)^{1+2(i+\kappa)}\Big[\Big(\frac{1}{2( 1+r^2)} +\chi\left(\frac{r}{t}\right)\chi(r)\frac{M}{2r} \Big) |\partial_t Z^K h^1_{\alpha\beta}|^2 + |\underline{\nabla}Z^Kh^1_{\alpha\beta}|^2 \Big] dxdy \lesssim C_0^2\ep^2 t^{\sigma + C\ep}.
		\end{equation}
		
	\end{proof}

	\section{The interior region} \label{sec:interior}
	
	The goal of this section is to prove the existence in the interior region $\dint$ of the solution $h^1_{\alpha\beta}$ to \eqref{h_equations} with data satisfying the hypothesis of theorem \ref{thm:Main}.
	The proof is based on a bootstrap argument in which the a-priori assumptions on the solutions are bounds on the higher order energies on truncated hyperboloids $\hin_s$ as well as pointwise decay bounds on a certain number of $Z$ derivatives acting on it. 
	
	We define the interior energy functional as follows
	\begin{equation} \label{interior_energy_functional}
		\begin{aligned}
			\enint(s,\hab) & :=\iint_{\hin_s} (s/t)^2 |\partial_t \hab|^2 + |\underline{\nabla} \hab|^2 \, dxdy  \\
			&  =\iint_{\hin_s} (s/t)^2|\nabla_x \hab|^2 + t^{-2}|\scal \hab|^2 + t^{-2}\sum_{1\le \bmi<\bmj\le 3}|\Omega_{\bmi\bmj}\hab|^2 + |\partial_y\hab|^2\,dxdy  ,
		\end{aligned}
	\end{equation}
	where $\scal = t\partial
	_t + x\cdot\nabla_x$ is the scaling vector field and $\Omega_{\bmi\bmj} = x_\bmi\partial_\bmj - x_\bmj\partial_\bmi$ are the Euclidean rotations in $\R^3$. Using Parseval's identity, we also define the energy functional associated to the zero-mode $\habf$ of the solution as well as that of its zero-average component $\habn$, so that
	\[
	\enint(s, \hab) = \enint(s, \habf) + \enint(s, \habn).
	\] 
	We fix $N, N_1\in \N$ two integers sufficiently large with $N\ge 14$ and $N_1=N-5$ and assume the existence of two positive constants $1\ll C_1\ll C_2$, as well as of a finite and increasing sequence of  parameters $0< \zeta_k,\gamma_k, \delta_k\ll 1$ with 
	\begin{equation} \label{condition_parameters}
		\zeta_i \ll \gamma_j\ll \delta_k, \quad \forall i,j,k, \qquad \gamma_i + \delta_j \ll \delta_k, \quad \forall i, j<k, 
	\end{equation}		
	such that:
	
	$\bullet$ for $s_0$ close to 2 (e.g. $s_0 =21/10$) and for any arbitrarily fixed $S_0>s_0$, the solution $\hab$ exists in the hyperbolic strip $\hin_{[s_0, S_0)}$,
	
	$\bullet$ for any $s\in [s_0, S_0)$, any multi-index $K = (I, J)$ of type $(N, k)$\footnote{We recall that a multi-index $K=(I, J)$ is said to be of type $(N, k)$ if $|I|+|J|\le N$ and $|J|\le k$}, it satisfies the following energy bounds
	\begin{gather}
		\enint(s,\partial Z^K\hab)^\frac12 + \enint(s, Z^K\hab)^\frac12 \le 2C_1\ep s^{\frac12+\zeta_k} \label{boot1_in}\\
		\enint(s, Z^K\habf)^\frac12 \le 2C_1\ep s^{\zeta_k}\label{boot2_in}
	\end{gather}
	and for multi-indexes $K$ of type $(N, k)$ with $k\leq N_1$
	\begin{equation}
		\enint(s, Z^K\hab)^\frac12 \le 2C_1\ep s^{\delta_k} \label{boot3_in}
	\end{equation}			
	
	$\bullet$	for any $s\in [s_0, S_0)$, it satisfies the following pointwise bounds
	\begin{equation} \label{bootW_in}
	\|t\, \Gamma^J\habf\|_{L^\infty_x(\hin_s)}\lesssim
			2 C_2\ep s^{\gamma_k}  \text{ with } |J|=k\le N_1, 
			\end{equation}
	\begin{equation} \label{bootKG_in}
		\begin{split}\| t^\frac12s\, \partial_{tx} (\partial^I \Gamma^J \habn)\|_{L^\infty_xL^2_y(\hin_s)} & + \| t^\frac32 \partial^{\le 1}_y (\partial^I \Gamma^J \habn)\|_{L^\infty_xL^2_y(\hin_s)}\\
	&	\le 		 \begin{cases}
		2C_2\ep, \ &\text{if } |I|\le N_1,\ |J|=0\\
		2C_2\ep s^{\gamma_k}, \ &\text{if } |I|+|J|\le N_1+1,\ |J|=k\le N_1. \\
	\end{cases}	
	\end{split}	
	\end{equation}

	The result we aim to prove states the following
	\begin{proposition} \label{prop:bootstrap_int}
		There exist two constants $1\ll C_1\ll C_2$ sufficiently large, a finite and increasing sequence of parameters $0\leq \zeta_k,\gamma_k, \delta_k\ll 1$ satisfying \eqref{condition_parameters} and $0<\ep_0\ll 1$ sufficiently small such that for every $0<\ep<\ep_0$, if $\hab$ is solution to \eqref{h_equations} in the hyperbolic strip $\hin_{[s_0, S_0)}$ that satisfies the bounds \eqref{boot1_in}-\eqref{bootKG_in} for all $s\in [s_0, S_0)$ and the energy bounds \eqref{Boot1_ext}-\eqref{Boot2_ext} globally in the exterior region, then for every $s\in [s_0, S_0)$ it actually satisfies the following:\\
		for multi-indexes $K$ of type $(N,k)$
		\begin{gather}
			\enint(s,\partial Z^K\hab)^\frac12 + \enint(s, Z^K\hab)^\frac12 \le C_1\ep s^{\frac12+\zeta_k} \label{boot1_in_enh} \\
			\enint(s, Z^K\habf)^\frac12 \le C_1\ep s^{\zeta_k} ;\label{boot2_in_enh}
		\end{gather}
		for multi-indexes $K$ of type $(N, k)$ with $k\leq N_1$
		\begin{equation}\label{boot3_in_enh}
			\enint(s, Z^K\hab)^\frac12 \le C_1\ep s^{\delta_k}
		\end{equation}
		and finally
		\begin{equation} \label{bootW_in_enh}
			\|t\,  \Gamma^J\habf\|_{L^\infty_x(\hin_s)}\le
				C_2\ep s^{\gamma_k} \quad   \text{if } |J|=k\le N_1, 
		\end{equation}
		\begin{equation} \label{bootKG_in_enh}
			\aligned
			\| t^\frac12s\, \partial_{tx} (\partial^I \Gamma^J \habn)\|_{L^\infty_xL^2_y(\hin_s)}  &+ \| t^\frac32 \partial^{\le 1}_y (\partial^I \Gamma^J \habn)\|_{L^\infty_xL^2_y(\hin_s)}\\
			&\le 
			\begin{cases}
				C_2\ep, \ &\text{if } |I|\le N_1,\ |J|=0\\
				C_2\ep s^{\gamma_k}, \ &\text{if } |I|+|J|\le N_1+1,\ |J|=k\le N_1 \\
			\end{cases}	
			\endaligned
		\end{equation}	
	\end{proposition}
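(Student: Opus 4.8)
The plan is to run a bootstrap/continuity argument: assuming the a priori bounds \eqref{boot1_in}--\eqref{bootKG_in} on the hyperbolic strip $\hin_{[s_0,S_0)}$ together with the exterior bounds \eqref{Boot1_ext}--\eqref{Boot2_ext} (already propagated globally by Proposition~\ref{prop:bootstrap_ext}), I would recover the same estimates with $2C_1,2C_2$ replaced by $C_1,C_2$; since $S_0$ is arbitrary this closes the argument and gives the solution on all of $\dint$. The starting point is the commuted equation \eqref{h1_eqt_higher} for $Z^K\hab$: I would compare it with the linear inhomogeneous wave equation and feed it into the interior hyperboloidal energy inequality of Section~\ref{sec:energy_appendix}, controlling the contribution on the separating hyperboloid $\widetilde{\hcal}$ by the exterior boundary estimate \eqref{est_boundary_term}. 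On the pointwise side, the Klainerman--Sobolev inequality on hyperboloids converts the energy bounds into the decay \eqref{bootW_in} for the wave part; the zero-average component $\habn$ is Klein--Gordon-like -- each $y$-Fourier mode solving a massive equation on $\R^{1+3}$, cf. Subsection~\ref{zero-mode truncation} -- and I would control its $L^\infty_x L^2_y$ norms in \eqref{bootKG_in} through the Klein--Gordon Klainerman--Sobolev inequality on hyperboloids supplemented, for the sharp weights, by an ODE analysis along the hyperboloidal foliation mode by mode.

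Before touching the nonlinearities I would record two structural inputs. First, the wave coordinate condition, in the form of Lemma~\ref{lem:wave_cond_h1} (inequality \eqref{wave_cond_interior}), shows that the coefficients $h^1_{LT}$, $T\in\q{T}$, have derivatives controlled by tangential derivatives, quadratic terms and the explicit mass term. Second, for $T\in\q{T}$, $U\in\q{U}$ the coefficient $h^1_{TU}$ solves a wave equation whose semilinear source contains no weak-null term -- only null and cubic terms, by Lemma~\ref{lem:weak_null_frame} -- so I would integrate this equation along characteristics (as in the exterior analysis and in \cite{LR10}), using the exterior bounds \eqref{hTU_en1}--\eqref{KS4_ext} as data on $\widetilde{\hcal}$, to obtain enhanced interior pointwise decay for $h^1_{TU}$. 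With these in hand: the null terms $Z^K\mathbf{Q}_{\alpha\beta}$ are handled via $|Q(\partial\phi,\partial\psi)|\lesssim|\pb_x\phi||\partial\psi|+|\partial\phi||\pb_x\psi|+(s/t)^2|\partial\phi||\partial\psi|$ and the improved decay of tangential derivatives; the cubic terms $Z^K G_{\alpha\beta}$ and the explicit inhomogeneity $Z^K\widetilde\Box_g h^0_{\alpha\beta}$ are short-range and harmless; and the weak-null terms $Z^K P_{\alpha\beta}$ split, by Lemma~\ref{lem:weak_null_frame}, into $h^1_{TU}$-quadratic pieces (small by the improved $h^1_{TU}$ bounds) and $h^1_{LL}$-pieces (small by the wave condition).

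The heart of the argument -- and the step I expect to be the main obstacle -- is the commutator $[Z^K,H^{\mu\nu}\partial_\mu\partial_\nu]h^1_{\alpha\beta}$. I would first split $H=H^0+H^1$, the Schwarzschild piece $H^0$ giving only short-range contributions, then split $H^{1,\mu\nu}=(H^{1,\mu\nu})^\flat+(H^{1,\mu\nu})^\natural$. For the $\natural$ part the Poincar\'e inequality lets the zero-average factor absorb a derivative in $L^2_y$, and the estimate is routine. For the $\flat$ part I would use the expansion \eqref{comm_int}: writing $(H^{1,\mu\nu})^\flat$ in the null frame and all derivatives in terms of $\partial_t$, $\partial_y$ and the hyperbolic $\pb_\bma$, each quadratic term either carries a good coefficient $H^{1,\flat}_{LL}$ or $H^{1,\flat}_{4L}$ -- small by the wave condition -- or gains a weight $|t^2-r^2|/t^2$ or $(1+t+r)^{-1}$. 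The dangerous pieces are the wave--Klein--Gordon terms $Z^K\hff\cdot\partial^2\habn$; I would bound these by the classical Hardy inequality on hyperboloids (which costs an $r^{-1}$) combined with the uniform Klein--Gordon decay $|\partial^2\habn|+|\partial\habn|\lesssim(1+t+r)^{-3/2}$, yielding an $s^{-1/2+\delta}$ contribution to the energy inequality -- exactly what forces the $s^{1/2+\zeta_k}$ growth in \eqref{boot1_in_enh} and nothing worse. Crucially, by the identity $(fg)^\flat=f^\flat g^\flat+(f^\natural g^\natural)^\flat$ these wave--Klein--Gordon pieces are absent from the equation for the zero modes $Z^K\habf$, which is why the slower rate $s^{\zeta_k}$ in \eqref{boot2_in_enh} can be closed, and it is this mechanism that dictates the whole hierarchy \eqref{condition_parameters}; the finer $L^2$ control of the $\flat$-commutator uses in addition a Hardy inequality \`a la Lindblad--Rodnianski with weights in $t-r$ (see Subsection~\ref{sub:commutators_int}).

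Finally, having estimated all source and commutator terms, I would substitute into the hyperboloidal energy inequality and apply Gr\"onwall to close \eqref{boot1_in_enh}--\eqref{boot3_in_enh}, choosing $C_1$ large (to absorb the data term, which is bounded by the exterior energies through the boundary) and $\ep_0$ small (so the Gr\"onwall exponent $C\ep_0$ stays strictly below the relevant powers of the hierarchy). The wave pointwise bound \eqref{bootW_in_enh} then follows from the improved energy on $\habf$ via Klainerman--Sobolev on hyperboloids, and the Klein--Gordon pointwise bounds \eqref{bootKG_in_enh} follow from the Klein--Gordon Klainerman--Sobolev inequality together with an ODE argument for each $y$-Fourier mode of $\habn$ (writing the flat Klein--Gordon operator in hyperbolic coordinates and integrating, using that the forcing is $s$-integrable so the $|J|=0$ bounds do not grow). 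The essential difficulty, compared with the exterior region, is that there is no gain of decay for second-order derivatives and the hyperboloidal energy gives only $\dot H^1$-type control, so everything hinges on the interplay between the wave condition (to eliminate $H^{1,\flat}_{LL}$, $H^{1,\flat}_{4L}$), the hierarchy \eqref{condition_parameters} (to isolate the benign zero-mode energies), and the $t-r$-weighted Hardy inequalities (to turn pointwise Klein--Gordon decay into usable $L^2$ bounds).
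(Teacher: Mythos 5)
Your outline captures a large part of the architecture—bootstrap with exterior boundary data, the wave-coordinate input \eqref{wave_cond_interior}, the null-frame splitting \eqref{comm_int} of the commutator, the $\flat/\natural$ decomposition, the Hardy inequality with $(t-r)$-weights, the characteristics argument for $h^1_{TU}$, and the ODE-along-rays mechanism for the zero-average Klein--Gordon modes. These all match the paper. However there are two substantive problems.

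The serious one is your derivation of the wave pointwise bound \eqref{bootW_in_enh} ``from the improved energy on $\habf$ via Klainerman--Sobolev on hyperboloids.'' This cannot work: the interior energy functional controls $\pb\habf$ and $(s/t)\partial_t\habf$ in $L^2(\hin_s)$ but not $\habf$ itself, and Klainerman--Sobolev on hyperboloids applied to the available quantities yields $|\pb\habf|\lesssim t^{-3/2}s^{\zeta}$. Integrating along the hyperboloid from the boundary gives at best the bound of \eqref{KS6_h}, namely $|\Gamma^J\habf|\lesssim t^{-1/2}s^{\delta}$, whereas \eqref{bootW_in_enh} requires the substantially stronger $t^{-1}s^{\gamma_k}$. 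The paper obtains this decay not by Sobolev embedding but by decomposing $\Gamma^J\habf$ into three waves (interior-supported source, exterior-supported source, free evolution of data) and applying representation-formula estimates due to Alinhac (Lemma \ref{lem:Alinhac}), Asakura (Lemma \ref{lem:asakura}) and Katayama--Yokoyama (Lemma \ref{lem:kata-yoko}). Feeding a $t^{-1}s^{-2+\gamma_k}$ bound on the source $D^{J,\flat}_{\alpha\beta}$ (Lemma \ref{lem:Dab-sourcing}) into Lemma \ref{lem:Alinhac} is what produces the $t^{-1}$ factor; no embedding argument does. This is a genuine methodological gap, not a cosmetic omission.

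A secondary gap is the absence of the intermediate steps the paper needs to make the hierarchy close: the improved \emph{lower-order} energy bound $\enint(s,Z^K\hff)^{1/2}\lesssim C_1\ep s^{3\sigma}$ of Proposition \ref{prop:low-order-zero-mode-energy}, the resulting enhanced sup-norms \eqref{KS-hff1}--\eqref{KS-hff6}, the second-derivative estimates for the zero modes (Lemma \ref{lem:sup_second_derivatives_wave} and its $L^2$ corollary, obtained by inverting the principal $(s/t)^2\partial_t^2$ coefficient in the equation for $\habf$), and the posterior improvement of the a priori sup bounds from $s^{\gamma_k}$ to $s^{c\sigma}$ (Proposition \ref{prpenhancedsup}). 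Without the last two, the mixed commutator $\Gamma^{J_1}\hff\cdot\partial^2\Gamma^{J_2}\habn$ loses $s^{\gamma_i+\zeta_{k-i}}$, and the top-order Gr\"onwall closes only marginally; the paper removes this loss by a bootstrap-within-the-bootstrap replacing all $\gamma_k$-losses by $c\sigma$-losses before closing the energies. You should make this feedback loop explicit rather than absorb it silently into ``choosing $\ep_0$ small.''
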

	
	\begin{remark}
		The a-priori assumptions \eqref{boot1_in}-\eqref{bootKG_in} are satisfied when $s=s_0$ as a consequence of the assumptions on the initial data and the local existence result for the Einstein equations. The hyperbolic time $S_0$ in the above proposition is arbitrary. This implies the existence of the solution in the unbounded region $\hin_{[s_0, \infty)}$, hence in the full interior region. 
	\end{remark}
	\begin{remark}\label{remark_on_constants}
		The result stated above builds upon the energy and pointwise estimates the solution has been proved to satisfy in the exterior region. This can be already seen in the energy inequality \eqref{en_ineq_hab_high} below, where the energy flux through the separating hypersurface $\hin_{[s_0, s]}$, which is controlled by the exterior energies, appears in the right hand side of the inequality. Constants $C_1, \gamma_k, \delta_k$ in proposition \ref{prop:bootstrap_int} will in particular be chosen relative to $C_0, \sigma, \kappa$ so that $C_1\gg C_0$, $\sigma\ll  \gamma_k\ll\delta_k\ll \kappa$ and $\delta_k\ll \kappa-\sigma$ for all $k=0,\dots, N$. For this reason and throughout the rest of this section, we will often replace $C_0$ by $C_1$ in the inequalities obtained using bounds recovered in the exterior region. 
	\end{remark}
	
	In order to recover the enhanced energy bounds \eqref{boot1_in_enh}-\eqref{boot3_in_enh}, we compare the equation satisfied by $Z^K\hab$ and $Z^K \habf$ respectively with \eqref{linearized_wave} and apply the energy inequality of proposition \ref{prop:energy_ineq_hyperb}. 
	We recall that $Z^K\hab$ satisfies the following quasilinear wave equation
\begin{equation} \label{h1_eqt_higher1}
		\tilde{\Box}_g Z^Kh^1_{\alpha\beta} = F^K_{\alpha\beta} + F^{0,K}_{\alpha\beta}
	\end{equation}
	with source terms
	\begin{equation}\label{source_ext_high1}
 F^{0,K}_{\alpha\beta} = Z^K\tilde{\Box}_g h^0_{\alpha\beta}, \qquad		F^K_{\alpha\beta} = Z^KF_{\alpha\beta}(h)(\partial h, \partial h) - [Z^K, H^{\mu\nu}\partial_\mu\partial_\nu]h^1_{\alpha\beta} 
	\end{equation}
 and that the equation of $Z^K\habf$ 
	is obtained by averaging \eqref{h1_eqt_higher1} over $\S^1$
	\begin{equation}\label{eq:diff_hff}
		\Box_{x} Z^K \habf + (H^{\bmmu\bmnu})^\flat\cdot \partial_\bmmu\partial_\bmnu  Z^K\habf  + \big( (H^{\mu\nu})^\natural\cdot \partial_\mu\partial_\nu  Z^K\habn\big)^\flat=   F^{K, \flat}_{\alpha\beta} + F^{0,K}_{\alpha\beta}
	\end{equation} 
	where $F^{K,\flat}_{\alpha\beta} = \fint_{\S^1}F^K_{\alpha\beta}dy$.
	If the tensor $H$ satisfies suitable decay bounds in $\hin_{[s_0, S_0)}$, e.g. if for some $\delta>0$
	\[
	|H(t,x,y)|\lesssim \frac{\ep}{(1+t+r)^\frac34}, \quad |H^1_{LL}(t,x,y)|\lesssim \frac{\ep}{(1+t+r)^{1+\delta}}	
	\]
	we derive the following two energy inequalities, which hold for any $s\in [s_0, S_0)$
	\begin{equation} \label{en_ineq_hab_high}
		\begin{aligned}
			&\enint(s, Z^K\hab)\lesssim \enint(s_0, Z^K\hab) \\
			&+ \int_{\tilde{\hcal}_{s_0s}} \Big(\frac{1}{2( 1+r^2)} +\chi\left(\frac{r}{t}\right)\chi(r)\frac{M}{2r} \Big) |\partial_t Z^K\hab|^2 + |\underline{\nabla} Z^K\hab|^2\, dxdy \\
			&	+ \iint_{\hin_{[s_0,s]}} \hspace{-13pt}|F^K_{\alpha\beta} + F^{0,K}_{\alpha\beta} + {\partial^\mu H_\mu}^\sigma\cdot \partial_\sigma Z^K \hab| |\partial_t Z^K\hab| + \frac{1}{2}|{\partial_t H_\mu}^\sigma\cdot\partial_\sigma Z^K\hab \cdot \partial^\mu Z^K\hab| \, dtdxdy
		\end{aligned}
	\end{equation}
	and
	\begin{equation}\label{en_ineq_habf_high}
		\begin{aligned}
			&\enint(s, Z^K\habf)\lesssim \enint(s_0, Z^K\habf)\\
			& + \int_{\tilde{\hcal}_{s_0s}}\Big(\frac{1}{2( 1+r^2)} +\chi\left(\frac{r}{t}\right)\chi(r)\frac{M}{2r} \Big) |\partial_t Z^K\habf|^2 + |\underline{\nabla}_x Z^K\habf|^2\, dx \\
			&+ \iint_{\hin_{[s_0,s]}}  \hspace{-13pt } |F^{K,\flat}_{\alpha\beta} + F^{0,K}_{\alpha\beta} + {\partial^\mu H_\mu}^\sigma\cdot  \partial_\sigma Z^K \habf| |\partial_t Z^K\habf| + \frac{1}{2}|{\partial_t H_\mu}^\sigma\cdot \partial_\sigma Z^K\habf\cdot \partial^\mu Z^K\habf| \, dtdx \\
			&+ \iint_{\hin_{[s_0,s]}}|\big( (H^{\mu\nu})^\natural\cdot \partial_\mu\partial_\nu  Z^K\habn\big)^\flat| |\partial_t Z^K\habf|\, dtdx.
		\end{aligned}
	\end{equation}
	The energy flux through the boundary $\tilde{\hcal}_{s_0s}$, which appears in the right hand side of both of the above inequalities, is suitably controlled using \eqref{est_boundary_term} with $t = t_s = s^2/2$
	\begin{equation}\label{boundary_term_int}
		\int_{\tilde{\hcal}_{s_0s}}\Big(\frac{1}{2( 1+r^2)} +\chi\left(\frac{r}{t}\right)\chi(r)\frac{M}{2r} \Big)|\partial_t Z^K\hab|^2 + |\underline{\nabla}_{xy} Z^K\hab|^2\, dxdy \lesssim C_0^2\ep^2 s^{2\sigma + C\ep}.
	\end{equation}
	The current section is therefore mainly devoted to estimating the remaining integrals in the right hand side of \eqref{en_ineq_hab_high} and \eqref{en_ineq_habf_high}.

	\subsection{First sets of bounds} \label{sub:KSbounds}
	
	Below is a list of $L^2$ and $L^\infty$ bounds for $\hab, \hff_{\alpha\beta}$ and $\hnn_{\alpha\beta}$, which are a straightforward consequence of the a-priori bounds. All bounds stated below hold true also for tensor coefficients $H^{1,\mu\nu}$, after decomposition \eqref{dec_H} and bound \eqref{h0_estimate}.
	
	\subsubsection{$L^2_{xy}$ bounds on hyperboloids}

	From a-priori energy assumptions \eqref{boot1_in}-\eqref{boot2_in}, the Parseval identity and Poincar\'e inequality applied to the zero-average components $\habn$, we derive the following $L^2$ bounds on $\hin_s$, for any multi-index of type $(N,k)$ and $i=0,1$,
	\begin{align}
	& \left\| (s/t)\partial \partial^i Z^K \hab \right\|_{L^2(\hin_s)} + \left\| \pb \partial^i Z^K \hab \right\|_{L^2(\hin_s)}\le 2C_1 \epsilon s^{\frac12+\zeta_k}  \label{bootin1}\\
		& \left\| (s/t)\partial Z^K \habf \right\|_{L^2(\hin_s)}+ \left\| \pb Z^K \habf \right\|_{L^2(\hin_s)} \le 2C_1 \epsilon s^{\zeta_k} \label{bootin2}\\
		& \left\| (s/t)\partial Z^K \habn \right\|_{L^2(\hin_s)}+ \left\| \pb Z^K \habn \right\|_{L^2(\hin_s)} + \left\| Z^K \habn \right\|_{L^2(\hin_s)} \le 2C_1 \epsilon s^{\frac12+\zeta_k}\label{bootin3}
	\end{align}
	and
\begin{align}
	&  \left\|t^{-1} \scal Z^K \habf\right\|_{L^2(\hin_s)} +\left\|t^{-1} \Gamma  Z^K \habf\right\|_{L^2(\hin_s)} \le 2C_1\epsilon s^{\zeta_k} \label{bootin2.1}\\
	&  \left\| t^{-1}\scal Z^K \habn\right\|_{L^2(\hin_s)} + \left\|t^{-1}\Gamma Z^K \habn\right\|_{L^2(\hin_s)} \le 2C_1\epsilon s^{\frac12+\zeta_k}. \label{bootin3.1}
\end{align}

	\smallskip
	\noindent For multi-indexes $K$ of type $(N,k)$ with $k\leq N_1$ we have
	\begin{align}
		& \left\| (s/t)\partial  Z^K \hab \right\|_{L^2(\hin_s)} + \left\| \pb   Z^K \hab \right\|_{L^2(\hin_s)} \le 2C_1\epsilon s^{\delta_k} \label{bootin5}\\ 
		& \left\| (s/t)\partial  Z^K \habn \right\|_{L^2(\hin_s)}+ \left\| \pb  Z^K \habn \right\|_{L^2(\hin_s)} + \left\|  Z^K \habn \right\|_{L^2(\hin_s)} \le 2C_1 \epsilon s^{\delta_k} \label{bootin4}
	\end{align}
		and
\begin{align}
	& \left\| t^{-1}\scal  Z^K \hab\right\|_{L^2(\hin_s)} +  \left\|t^{-1} \Gamma  Z^K \hab\right\|_{L^2(\hin_s)} \le 2C_1\epsilon s^{\delta_k} \label{bootin5.1}.
\end{align}
	Moreover, provided that $\sigma,\ep\ll 1$ are sufficiently small so that $\sigma+C\ep\le \zeta_k$ for all $1\le k\le N$, from the Hardy inequality \eqref{Hardy_classical}, energy assumption \eqref{bootin1} and the exterior energy bound \eqref{enhanced_Boot1_ext} (recall that $t_s=s^2/2$) we also deduce the following bound when $|J|=k\le N$
	\begin{align}
		& \left\|r^{-1}\Gamma^J\hab \right\|_{L^2_{xy}(\hin_s)} \lesssim 2C_1\ep s^{\frac12+\zeta_k} + C_0\ep s^{\sigma+C\ep}\lesssim 2C_1\ep s^{\frac12+\zeta_k}\label{bootin_hardy1}\\
		&	\left\|r^{-1}\Gamma^J\habf \right\|_{L^2_{x}(\hin_s)} \lesssim 2C_1\ep s^{\zeta_k} + C_0\ep s^{\sigma+C\ep}\lesssim 2C_1\ep s^{\zeta_k}\label{bootin_hardy3}.
	\end{align}
	For $|J|=k\le N_1$, we instead get from \eqref{bootin5} that
	\begin{equation}\label{bootin_hardy2}
		\left\|r^{-1}\Gamma^J\hab \right\|_{L^2_{xy}(\hin_s)} \lesssim 2C_1\ep s^{\delta_k} + C_0\ep s^{\sigma+C\ep}\lesssim 2C_1\ep s^{\delta_k}.
	\end{equation}

	\subsubsection{$L^\infty_xL^2_y$ bounds on hyperboloids}\label{subsec:L2y_bounds}
	These are obtained using the Poincar\'e inequality, lemma \ref{lm:ksobolev}, relation $\pb_{\bmi}=t^{-1}\Omega_{0\bmi}$ and energy assumption \eqref{bootin3}. \\
	For multi-indexes $K$ of type $(N-2, k)$
	\begin{align}
		& \left\|t^\frac32\,  \partial_y^{\le1}  Z^K \habn \right\|_{L^\infty_xL^2_y(\hin_s)} + \left\|t^\frac12 s\, \partial_{tx}  Z^K \habn \right\|_{L^\infty_xL^2_y(\hin_s)}  \lesssim C_1\epsilon s^{\frac12+\zeta_{k+2}}; \label{bootinL2y.1}
	\end{align}
	for multi-indices $K$ of type $(N-3, k)$
	\begin{align}
		& \left\|t^\frac52\, \partial_y^{\le1} \pb_x  Z^K \habn \right\|_{L^\infty_xL^2_y(\hin_s)} + \left\|t^\frac32 s\,  \partial_{tx}  \pb_x Z^K \habn \right\|_{L^\infty_xL^2_y(\hin_s)} \lesssim C_1\epsilon s^{\frac12 + \zeta_{k+3}}; \label{bootinL2y.5}
	\end{align}
	for multi-indexes $K$ of type $(N-4, k)$
	\begin{align}
		& \left\|t^\frac72\,  \partial_y^{\le1} \pb^2_x  Z^K \habn \right\|_{L^\infty_xL^2_y(\hin_s)} + \left\|t^\frac52s \,  \partial_{tx} \pb^2_x Z^K \habn \right\|_{L^\infty_xL^2_y(\hin_s)} \lesssim C_1\epsilon s^{\frac12+\zeta_{k+4}} \label{bootinL2y.6}.
	\end{align}
	%		and for multi-indices of type $(N_1-4, k)$:
	%		\begin{align}
	%			& \left\|t^\frac72\,  \partial_y^{\le1} \pb^2  Z^K \habn \right\|_{L^\infty_xL^2_y(\hin_s)} \lesssim C_1 \epsilon s^{\delta_{k+4}}\label{bootinL2y.10}
	%			\\
	%			& \left\|t^\frac52\,  \partial_{tx} \pb^2 Z^K \habn \right\|_{L^\infty_xL^2_y(\hin_s)} \lesssim C_1\epsilon s^{-1+\delta_{k+4}}.
	%			\label{bootinL2y.12}
	%		\end{align}
		Moreover, for multi-indexes $K$ of type $(N-2, k)$ with $k\leq N_1-2$
	\begin{align}
		& \left\|t^\frac32\,  \partial_y^{\le1}  Z^K \habn \right\|_{L^\infty_xL^2_y(\hin_s)} + \left\|t^\frac12 s\, \partial_{tx}  Z^K \habn \right\|_{L^\infty_xL^2_y(\hin_s)}  \lesssim C_1\epsilon s^{\delta_{k+2}}; \label{bootinL2y.1bis}
	\end{align}
	for multi-indices $K$ of type $(N-3, k)$ with $k\le N_1-3$
	\begin{align}
		& \left\|t^\frac52\, \partial_y^{\le1} \pb_x  Z^K \habn \right\|_{L^\infty_xL^2_y(\hin_s)} + \left\|t^\frac32 s\,  \partial_{tx}  \pb_x Z^K \habn \right\|_{L^\infty_xL^2_y(\hin_s)} \lesssim C_1\epsilon s^{\delta_{k+3}}; \label{bootinL2y.5bis}
	\end{align}
	for multi-indexes $K$ of type $(N-4, k)$ with $k\le N_1-4$
	\begin{align}
		& \left\|t^\frac72\,  \partial_y^{\le1} \pb^2_x  Z^K \habn \right\|_{L^\infty_xL^2_y(\hin_s)} + \left\|t^\frac52s \,  \partial_{tx} \pb^2_x Z^K \habn \right\|_{L^\infty_xL^2_y(\hin_s)} \lesssim C_1\epsilon s^{\delta_{k+4}}. \label{bootinL2y.6bis}
	\end{align}
	
	\subsubsection{$L^\infty_{xy}$ bounds on hyperboloids}	
	
	These are obtained from the energy assumptions using Poincar\'e inequality, lemma \ref{lm:ksobolev} and Sobolev embedding on $\S^1$. \\
	For any multi-index $K$ of type $(N-3,k)$ and $i=0,1$
	\begin{align}
		& \label{KS1_h} \left\|t^\frac12 s^\frac12\, \partial \partial^i  Z^K \hab \right\|_{L^\infty_{xy}(\hin_s)}+ \left\|t^\frac32 s^{-\frac12}\, \pb \partial^i  Z^K \hab\right\|_{L^\infty_{xy}(\hin_s)}  \lesssim C_1\epsilon s^{\zeta_{k+3}} \\
		& \label{KS2_h} \left\|t^\frac12s\, \partial  Z^K \habn \right\|_{L^\infty_{xy}(\hin_s)} +\left\|t^\frac32\, \pb Z^K \habn \right\|_{L^\infty_{xy}(\hin_s)}+ \left\|t^\frac32   Z^K \habn\right\|_{ L^\infty_{xy}(\hin_s)} \lesssim C_1\epsilon s^{\frac12 + \zeta_{k+3}} \\
		& \label{KS1_hff} \left\|t^\frac12 s\, \partial  Z^K \habf\right\|_{L^\infty_x(\hin_s)} +  \left\|t^\frac32\, \pb  Z^K \habf \right\|_{L^\infty_x(\hin_s)} \lesssim C_1 \epsilon s^{\zeta_{k+2}}
	\end{align}
	and
	\begin{align}
		& \label{KS3_h} \|t^\frac12\, \scal \partial^i  Z^K \hab\|_{L^\infty_{xy}(\hin_s)} + \|t^\frac12\, \Gamma \partial^i  Z^K \hab\|_{L^\infty_{xy}(\hin_s)}\lesssim C_1\ep s^{\frac12 +\zeta_{k+3}} \\
		& \label{KS3_hff} \|t^\frac12\, \scal   Z^K \habf\|_{L^\infty_x(\hin_s)} + \|t^\frac12\, \Gamma   Z^K \habf\|_{L^\infty_x(\hin_s)}\lesssim C_1\ep s^{\zeta_{k+2}}.
	\end{align}					
	
	\smallskip
	\noindent From the pointwise bounds \eqref{bootKG_in} and the Sobolev embedding on $\S^1$ we also have that
	\begin{equation} \label{KS1_hnn}
		\left\| t^\frac32\, \partial^I \Gamma^J\habn\right\|_{L^\infty_{xy}(\hin_s)} \lesssim 
		\begin{cases}
		C_2\ep \ &\text{ if } |I|\le N_1,\ |J|=0\\
			C_2\ep s^{\gamma_k} \ &\text{ if } |I|+|J|\le N_1+1,\ |J|\le N_1
			\end{cases}
	\end{equation}
which coupled to \eqref{KS1_hff} gives that, for any $|I|+|J|\le N_1\le N-3, |J|=k\ge 0$,
	\begin{equation}
		\label{KS4_h}  \|t^\frac12 s \, \partial  (\partial^I\Gamma^J \hab) \|_{L^\infty_{xy}(\hin_s)}   + \|t^\frac32 \pb   (\partial^I\Gamma^J \hab) \|_{L^\infty_{xy}(\hin_s)}  \lesssim C_2\ep  s^{\max(\zeta_{k+2},\gamma_k)}.
	\end{equation}

	%			&\label{KS5_h} \|t^\frac32 \pb   Z^K \hab\|_{L^\infty_{xy}(\hin_s)} +  \|t^\frac32 \pb   Z^K \habn\|_{L^\infty_{xy}(\hin_s)}+ \|t^\frac32   Z^K \habn\|_{ L^\infty_{xy}(\hin_s)} \lesssim C_1\epsilon s^{\gamma_k}
	%				
	%			& \label{KS7_h} \|t^\frac12 \, \scal   Z^K \hab\|_{L^\infty_{xy}(\hin_s)} + \|t^\frac12 \Gamma   Z^K \hab\|_{L^\infty_{xy}(\hin_s)}\lesssim C_1\ep s^{\gamma_k}.
	
	\medskip
	\noindent For $|J|=k\le N-3$, we also have the following bound on coefficients without derivatives
	\begin{equation} \label{KS6_h}
		\|t^\frac12\, \Gamma^J \hab\|_{L^\infty_{xy}(\hin_s)}\lesssim C_1\epsilon s^{\delta_{k+2}}.
	\end{equation}
	Such a bound is satisfied by $\Gamma^J\habn$ thanks to \eqref{KS2_h}, while for $\Gamma^K\habf$ it is obtained by integration.
	More precisely, on the initial truncated hyperboloid $\hin_{s_0}$ such an estimate is obtained by integrating \eqref{KS1_hff} along the hyperboloid itself and up to the boundary $\partial\hin_{s_0} = S_{s_0, r_0}$ where $r_0:=\max\{r>0: S_{s_0, r}\subset \hin_{s_0}\} = \q O(1)$. In fact, for any $r\le r_0$ and $\omega = x/|x|$
	\[
	\begin{aligned}
		\big|\Gamma^J\habf(\sqrt{s_0^2+|x|^2},r\omega)\big| &\le \big|\Gamma^J\habf(\sqrt{s_0^2+r_0^2}, r_0\omega)\big| + \int_r^{r_0} \big|\pb \Gamma^J \habf(\sqrt{s_0^2+\rho^2}, \rho\omega)\big| d\rho \\
		&\lesssim \big|\Gamma^J\habf(\sqrt{s_0^2+r_0^2}, r_0\omega)\big|  + \int_r^\infty C_1\epsilon (s_0^2+\rho^2)^{-\frac{3}{4}+\frac{\zeta_{k+2}}{2}} d\rho \\
		&	\lesssim C_1\ep (s_0^2 + r^2)^{-\frac{1}{4} + \frac{\zeta_{k+2}}{2}}
	\end{aligned}
	\]
	where we estimated the first term in the above right hand side using the exterior bound \eqref{KS3_ext}. 	
	For all other points $(t,x)\in \hin_{(s_0, S_0)}$, the decay bound \eqref{KS6_h} is instead obtained by integrating \eqref{KS1_hff} along the rays with $t+r$ and $\omega$ fixed, i.e. along 
	\[ \delta: \lambda\in [r, \lambda^*] \mapsto\delta(\lambda)=(t+r-\lambda, \lambda \omega)\]
	where $\lambda^*$ is the first time $\delta(\lambda)$ intersects the lateral boundary $\tilde{\hcal}$ (in which case $\lambda^*=\frac{(t+r)(t+r-2)}{2(t+r-1)}$) or the initial hyperboloid $\hin_{s_0}$ (in which case $\lambda^*=\frac{(t+r)^2-s_0^2}{2(t+r)}$). In both cases $\lambda^* =O(t+r)$, so from the estimates on $\tilde{\hcal}$ following from \eqref{KS3_ext} or on the initial truncated hyperboloid $\hin_{s_0}$ derived above, we get
	\[
	\begin{aligned}
		|\Gamma^J \habf(t,x)|  \lesssim |\Gamma^J\habf(t+r-\lambda^*, \lambda^* \omega)| + \int_r^{\lambda^*} |(\partial \Gamma^J\habf)(\delta(\lambda))| d\lambda \\
		\lesssim (C_0+C_1)\ep\, (1+t+r)^{-\frac12}s^{\zeta_{k+2}} + C_1\epsilon (1+ t+r)^{-1+\zeta_{k+2}} \int_r^{\lambda^*} (t+r-2\lambda)^{-\frac12+\zeta_{k+2}}d\lambda \\
		\lesssim (C_0+ C_1)\epsilon\, (1+ t+r)^{-\frac12}s^{\zeta_{k+2}}.
	\end{aligned}
	\]
	
	%		\begin{remark}\label{remark_est_H1}
	%			All above pointwise estimates can be shown to hold true also for tensor $H^{1,\mu\nu}$ defined in \eqref{dec_H}, which is a quadratic perturbation of $h^{1,\mu\nu}$. 
	%		\end{remark}
	
	\subsubsection{$L^\infty_{xy}$ bounds for the good metric coefficients} These refer to the enhanced bounds satisfied by the metric coefficients $H^1_{LT}$ as a consequence of the wave condition, more precisely of inequality \eqref{wave_cond_interior}, and of the pointwise bounds obtained above. As remarked above, these bounds are also satisfied by the $h^1_{LT}$ metric coefficients.
	
	\begin{proposition}\label{prop:HLL_int}
		Under the assumptions of proposition \ref{prop:bootstrap_int}, we have that for any $s\in [s_0, S_0)$, any multi-index $K$ of type $(N-3, k)$ and $i=0,1$
		\begin{equation}
			\| t\, \partial \partial^i Z^KH^1_{LT}\|_{L^\infty(\hin_s)} \lesssim C_1\ep s^{\zeta_{k+3}} \label{good_coefficient1}
		\end{equation} 
		%\| t Z^K (H^1_{LT})^\natural\|_{L^\infty(\hin_s)}  \lesssim \ep s^{\delta_{k+3}} \label{good_coeff1_natural}
		and for any multi-index $K$ of type $(N_1, k)$
		\begin{equation}
			\| t^\frac32 \partial Z^K H^1_{LT}\|_{L^\infty(\hin_s)} \lesssim C_1\ep s^{\delta_{k+2}}.  \label{good_coefficient11}
		\end{equation}
		Furthermore, for any multi-index $K$ of type $(N-2, k)$ 
		\begin{gather}
			\|t^\frac{3}{2}\partial Z^K (H^1_{LT})^\flat\|_{L^\infty(\hin_s)} \lesssim C_1\ep s^{\zeta_{k+2}} \label{good_coeff1_flat}\\ 
			\|t^\frac12 (t/s)^2 Z^K (H^1_{LT})^\flat\|_{L^\infty(\hin_s)} \lesssim C_1\ep s^{\zeta_{k+2}}.\label{good_coefficient2}
		\end{gather}
	\end{proposition}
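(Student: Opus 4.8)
The plan is to read off all four estimates from the wave-condition inequality \eqref{wave_cond_interior} of Lemma \ref{lem:wave_cond_h1} — and, for the two zero-mode statements, from its $\S^1$-average — by substituting the pointwise $L^\infty$ bounds collected in Subsection \ref{sub:KSbounds}; no new energy estimate is needed. One uses repeatedly that $t\ge s$ (hence $s^2/t\le s$) on $\hin_s$, so that the good weight $(s/t)^2$ always produces a gain.

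\emph{For \eqref{good_coefficient1} and \eqref{good_coefficient11}.} First I would commute $\partial^i Z^K$ ($i=0,1$) with \eqref{wave_cond_interior}; the a priori assumptions \eqref{boot1_in}--\eqref{bootKG_in} provide the smallness hypothesis $|Z^{K'}H^1|\lesssim C$ for $|K'|\le\floor{|K|/2}$ required by Lemma \ref{lem:wave_cond_h1}. This bounds $|\partial\partial^i Z^KH^1|_{\q L \q T}$ by $\sum_{|K'|\le|K|}\big[(s/t)^2|\partial\partial^i Z^{K'}H^1| + |\pb\partial^i Z^{K'}H^1| + r^{-1}|\partial^i Z^{K'}H^1|\big]$, plus quadratic terms $\sum_{|K_1|+|K_2|\le|K|}|Z^{K_1}H^1||\partial\partial^i Z^{K_2}H^1|$ and the explicit mass term $M\chi_0(t/2\le r\le 3t/4)(1+t+r)^{-2}$. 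Then, multiplying by $t$ (for $K$ of type $(N-3,k)$) and using \eqref{KS1_h}: the weighted term satisfies $t(s/t)^2|\partial\partial^i Z^{K'}H^1|=(s^2/t)|\partial\partial^i Z^{K'}H^1|\lesssim C_1\ep s^{\zeta_{k+3}}$, and the tangential term — the one that fixes the sharp $t^{-1}$ rate — satisfies $t|\pb\partial^i Z^{K'}H^1|\lesssim C_1\ep s^{\zeta_{k+3}}\,t^{-1/2}s^{1/2}\lesssim C_1\ep s^{\zeta_{k+3}}$; the $r^{-1}$ term and the quadratic terms each carry an extra factor $C_1\ep$ or $C_2\ep$ (via \eqref{KS6_h} and the a priori smallness of $|H^1|$) and hence contribute at strictly lower order, while the mass term meets $\hin_s$ only where $t\sim r\sim s$ and contributes $O(\ep s^{-1})$. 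For \eqref{good_coefficient11}, with $K$ of type $(N_1,k)$, I would instead multiply by $t^{3/2}$ and split $H^1=(H^1)^\flat+(H^1)^\natural$: the zero-mode pieces are controlled by the sharper flat bounds \eqref{KS1_hff}, \eqref{KS6_h} — which give $t^{-3/2}$ decay for $\pb$ of a flat coefficient, without the $s^{1/2}$ loss present for general coefficients — while the zero-average pieces gain an extra power of $t^{-1}$ from the Klein--Gordon bounds \eqref{KS1_hnn}--\eqref{KS2_h}.

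\emph{For \eqref{good_coeff1_flat} and \eqref{good_coefficient2}.} Averaging the wave-condition identity over $\S^1$ and using $(f\cdot g)^\flat=f^\flat g^\flat+(f^\natural g^\natural)^\flat$ produces, for $(H^1_{LT})^\flat=(H^{1,\flat})_{LT}$, exactly the same inequality with every coefficient replaced by its $\flat$-part, plus one extra quadratic contribution in which both factors are zero-average. Repeating the computation of the previous step with the zero-mode bounds \eqref{KS1_hff}, \eqref{KS6_h} in place of \eqref{KS1_h}, and absorbing the $\natural\cdot\natural$ term using \eqref{KS1_hnn}, gives \eqref{good_coeff1_flat}. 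The remaining estimate \eqref{good_coefficient2} controls $Z^K(H^1_{LT})^\flat$ with no derivative beyond $Z^K$, so it cannot come from the wave condition directly: I would obtain it by integrating the derivative bound \eqref{good_coeff1_flat} along the rays $\{t+r=\mathrm{const},\ \omega=\mathrm{const}\}$ up to their first intersection with $\tilde{\hcal}$ or with $\hin_{s_0}$ — exactly as \eqref{KS6_h} was obtained for $\Gamma^J\habf$ — with the boundary values on $\tilde{\hcal}$ and $\hin_{s_0}$ supplied by the exterior bounds \eqref{KS3_ext}, \eqref{est_hLT2} and by the data; the weight $t^{1/2}(t/s)^2$ is precisely the one generated by this integration.

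\emph{Expected main obstacle.} The analysis is essentially a substitution; the real difficulty is the book-keeping. Commuting $\partial^i Z^K$ through \eqref{wave_cond_interior} and then invoking a Klainerman--Sobolev inequality on hyperboloids costs up to three derivatives and raises the number of Klainerman fields, which is why the right-hand sides carry $\zeta_{k+3}$, $\zeta_{k+2}$ and $\delta_{k+2}$ rather than $\zeta_k$; one must check this remains compatible with $N_1=N-5$ and $N\ge14$. The quadratic terms have to be genuinely subleading — each contains a low-order factor bounded by $C_1\ep$ or $C_2\ep$ times a decaying power of $t$ — so that they are absorbed without degrading the power of $s$. The one analytically delicate point will be the term $r^{-1}|Z^{K'}H^1|$ near the spatial axis $r\to0$: there one must use that in the interior region the commutators of the admissible vector fields with the hyperbolic derivatives (see Subsection \ref{sub:notations}) have \emph{smooth} coefficients, so that the effective weight is $(1+t+r)^{-1}$ rather than $r^{-1}$, combined with the Klein--Gordon pointwise decay of $\habn$ near the axis.
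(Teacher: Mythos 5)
Your plan follows the same route as the paper: all four estimates are read off from the wave-condition inequality \eqref{wave_cond_interior} by substituting the $L^\infty$ bounds of Subsection \ref{sub:KSbounds} — \eqref{KS1_h} and \eqref{KS6_h} for \eqref{good_coefficient1}, \eqref{KS4_h} and \eqref{KS6_h} for \eqref{good_coefficient11}, the zero-mode counterparts \eqref{KS1_hff}, \eqref{KS2_h}, \eqref{KS6_h} together with \eqref{dec_product} for \eqref{good_coeff1_flat}, and integration of \eqref{good_coeff1_flat} along rays $t+r=\text{const}$ (with exterior boundary data from \eqref{est_hLT2}) for \eqref{good_coefficient2}, together with the direct use of \eqref{KS6_h} inside the cone $t\ge 2r$. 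The index shifts $\zeta_{k+3}, \zeta_{k+2}, \delta_{k+2}$ that you track are also the ones that appear.

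The one point you flag as the "analytically delicate" one — the $r^{-1}|Z^{K'}H^1|$ term near $r\to 0$ — is indeed a subtlety, but your proposed resolution is not the right one. Inequalities \eqref{ineq_HLT_higher} and \eqref{wave_cond_interior} are proved in Lemmas \ref{lem:wave_cond_nullframe}--\ref{lem:wave_cond_h1} only in the region $r\gtrsim t\gtrsim 1$; the $r^{-1}$ factor comes from the commutators of $Z$ with the \emph{null} frame (equations \eqref{commut_tangent_ext}--\eqref{commut_transvers_ext}), whose coefficients genuinely carry inverse powers of $r$, so invoking the smoothness of the commutators with the \emph{hyperbolic} derivatives does not remove the singularity. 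The correct (and simpler) observation is that near the axis the wave condition is not needed at all: for $t\ge 2r$ one has $s^2=t^2-r^2\gtrsim t^2$, so $s\approx t$, and the generic bound $|\partial\partial^i Z^K H^1|\lesssim C_1\epsilon\, t^{-1/2}s^{-1/2}s^{\zeta_{k+3}}$ from \eqref{KS1_h} (resp.\ $|\partial Z^K H^1|\lesssim C_2\epsilon\, t^{-1/2}s^{-1}s^{\max(\zeta_{k+2},\gamma_k)}$ from \eqref{KS4_h}) already yields the target $t^{-1}$ (resp.\ $t^{-3/2}$) decay after multiplying by $t$ (resp.\ $t^{3/2}$). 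This is exactly the dichotomy the paper uses for \eqref{good_coefficient2} — you apply it there correctly — and it applies verbatim to \eqref{good_coefficient1} and \eqref{good_coefficient11} as well.
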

	\begin{proof}
		The proof of the above estimates is based on inequality \eqref{wave_cond_interior}.
		Estimate \eqref{good_coefficient1} (resp. \eqref{good_coefficient11}) is in fact obtained using \eqref{KS1_h} (resp. \eqref{KS4_h}) and \eqref{KS6_h}.
		Estimate \eqref{good_coeff1_flat} is deduced similarly, after taking the zero norm of both left and right hand side of \eqref{wave_cond_interior}. We recall, in particular, that for any two integrable functions $f$ and $g$ defined on $\S^1$, we have 
		\begin{equation}\label{dec_product}
			(f g)^\flat= f^\flat g^\flat + \big(f^\natural g^\natural\big)^\flat, \qquad (fg)^\natural = f^\flat g^\natural + f^\natural g^\flat + (f^\natural g^\natural)^\natural.
		\end{equation}
		Therefore, \eqref{good_coeff1_flat} follows from \eqref{KS2_h}, \eqref{KS1_hff} and \eqref{KS6_h}. 
		
		Finally, estimate \eqref{good_coefficient2} is satisfied in the interior of the cone $t=2r$ after \eqref{KS6_h}. In the portion of interior region where $t<2r$, it is instead obtained from the integration of \eqref{good_coeff1_flat} along the rays with $t+r = const$ and $\omega =const$ and up to the boundary of the interior region. From \eqref{est_hLT2} we derive that
		\[
		\begin{aligned}
			|Z^K(H^1_{LT})^\flat(t,x)| & \lesssim |Z^K(H^1_{LT})^\flat(\delta(\lambda^*))| + \int_r^{\lambda^*}|\partial Z^K(H^1_{LT})^\flat(\zeta(\lambda))|d\lambda \\
			& \lesssim |Z^K(H^1_{LT})^\flat(\delta(\lambda^*))| + \int_r^{\lambda^*}C_1\ep (t+r)^{-\frac32 + \frac{\zeta_{k+2}}{2}}(t+r-2\lambda)^{\frac{\zeta_{k+2}}{2}}d\lambda \\
			& \lesssim  C_0\ep (1+t+r)^{-\frac32+2\sigma}(t-r)^{\frac12-\kappa} + C_1\ep (t+r)^{-\frac32+\frac{\zeta_{k+2}}{2}}(t-r)^{1+\frac{\zeta_{k+2}}{2}}\\
			& \lesssim  C_1 \ep \frac{(t^2-r^2)^{1+\frac{\zeta_{k+2}}{2}}}{t^2} t^{-\frac12}.
		\end{aligned}
		\]
	\end{proof}
	
	\begin{remark}
		By combining together the wave gauge estimate \eqref{wave_cond_interior}, with the energy bounds \eqref{bootin2} and \eqref{bootin_hardy3} (respectively \eqref{bootin1} and \eqref{bootin_hardy1}) and the pointwise bounds \eqref{KS1_hff} and \eqref{KS6_h} (respectively \eqref{KS4_h} and \eqref{KS6_h}), we obtain the following estimate (resp. the second)
		\begin{equation}\label{good_coeff_L2}
			\big\|\partial Z^{K} H^{1, \flat}_{LT}  \big\|_{L^2(\hin_s)} \lesssim C_1\epsilon
			\begin{cases}
				s^{2\zeta_k} \quad  &\text{if } K \text{ is of type } (N, k) ,\\
				s^{\frac12+2\zeta_k} \quad  &\text{if } K \text{ is of type } (N+1, k) \text{ with } k\le N .
			\end{cases}
		\end{equation}
	\end{remark}

	\subsection{The null and cubic terms}
	The $L^2$ and $L^\infty$ bounds deduced in subsection \ref{sub:KSbounds} from the a-priori energy bounds, coupled with the a-priori pointwise bounds, allow us to suitably estimate the null and cubic contributions appearing in the equations for $Z^K\hab$ and $Z^K\habf$. 
	Quadratic and cubic interactions involving a $h^0$ factor are the simplest ones to analyze. They satisfy the following estimates, which follow from a straightforward application of the energy bounds \eqref{bootin1}, \eqref{bootin_hardy1} and the pointwise bounds \eqref{h0_estimate}, \eqref{KS4_h} and \eqref{KS6_h}.
	
	\begin{lemma}\label{lem:prod_h0}
		Let $Q = Q(\psi, \phi)$ and $C = C(\delta)(\phi, \psi)$ denote a quadratic and a cubic form respectively. Under the a-priori assumptions \eqref{boot1_in}-\eqref{bootKG_in}, there exists some small constant $0<\eta\ll 1$ depending linearly on $ \gamma_k, \delta_k,\zeta_k$, such that for $i=0,1$
		\begin{gather}
			\sum_{0\le l+m\le 1}\|\partial^i Z^{\le N}Q(\partial h^{l}, \partial h^m)\|_{L^2(\hin_s)}\lesssim C_1\ep^2 s^{-3/2+\eta}\label{est_quad_h0}\\
			\sum_{\substack{0\le l+m+n\le 2\\ l,m,n\le 1}}\|\partial^i Z^{\le N}C(h^l)(\partial h^m, \partial h^n)\|_{L^2(\hin_s)} \lesssim C_1^2\ep^3 s^{-2+\eta}.
		\end{gather} 
	\end{lemma}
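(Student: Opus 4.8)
The plan is to reduce \eqref{est_quad_h0} and its cubic analogue, via the Leibniz rule, to pointwise bounds on products of differentiated factors, and then to exploit the strong explicit decay of $h^0$ recorded in \eqref{h0_estimate} together with the geometry of $\operatorname{supp} h^0$ on the hyperboloids. First I would use that $Z$ and $\partial$ send null (hence quadratic and cubic) forms to the same class and act on products by Leibniz, so that $|\partial^i Z^{\le N}Q(\partial h^l,\partial h^m)|\lesssim\sum_{|K_1|+|K_2|\le N,\ |I_1|+|I_2|=i}|\partial\partial^{I_1}Z^{K_1}h^l|\,|\partial\partial^{I_2}Z^{K_2}h^m|$, with the evident three-factor analogue for $C(h^l)(\partial h^m,\partial h^n)$; since $l+m\le1$ (resp. $l+m+n\le2$) at least one factor carries the label $0$, and since $N\ge14$ at least one (resp. at least two) of the multi-indices has length $\le\lfloor N/2\rfloor\le N-3$.

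Next I would record the two geometric facts driving the gain. As $h^0=\chi(r/t)\chi(r)Mr^{-1}\delta$ is supported in $\{r>t/2\}$, on $\operatorname{supp} h^0\cap\hin_s$ one has $r\simeq t$, and — using $t=\sqrt{s^2+r^2}$ with the defining inequality $t\ge r+1$ of $\hin_s$ — also $r\gtrsim s$ and $r\lesssim s^2$; hence $(1+r)^{-1}\lesssim t^{-1}\lesssim(s/t)\,s^{-1}$ and $s/t\lesssim1$ there. Combined with \eqref{h0_estimate} this gives $|\partial\partial^{I_1}Z^{K_1}h^0|\lesssim\ep(1+r)^{-2}$ and $|\partial^{I_1}Z^{K_1}h^0|\lesssim\ep(1+r)^{-1}$, so every $h^0$-factor carries one extra power of $(1+r)^{-1}\lesssim(s/t)\,s^{-1}$, localised to that set, compared with a generic $h^1$-factor — this extra localised decay is the whole source of the improvement over the generic quadratic/cubic estimates.

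With these in hand I would run the bound by the usual splittings $L^\infty(\hin_s)\times L^2(\hin_s)$ for the quadratic terms and $L^\infty\times L^\infty\times L^2$ for the cubic ones: the low-order factor(s) ($\le N-3$) go in $L^\infty$, controlled by \eqref{h0_estimate} if they are $h^0$-factors and by the Klainerman--Sobolev bounds \eqref{KS4_h}, \eqref{KS6_h}, \eqref{KS1_h} if they are $h^1$-factors — on $\operatorname{supp} h^0$ the latter decay like $t^{-1/2}s^{-1/2}\lesssim s^{-1}$ up to a small power $s^{\zeta_{k+3}}$, and \eqref{KS1_h} can moreover be written as $|\partial Z^{\le N-3}h^1|\lesssim C_1\ep s^{\zeta_{k+3}}s^{-1}(s/t)^{1/2}$; the remaining top-order factor goes in $L^2$. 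For a top-order \emph{differentiated} $h^1$-factor one only controls the weighted energy norm $\|(s/t)\partial Z^{\le N}h^1\|_{L^2(\hin_s)}\lesssim C_1\ep s^{\frac12+\zeta_N}$, so a factor $(s/t)$ must be extracted first; it is furnished by the support bound $(1+r)^{-1}\lesssim(s/t)s^{-1}$, by the rewriting of \eqref{KS1_h} above, or — for a top-order \emph{undifferentiated} $h^1$-factor — by the Hardy inequality \eqref{bootin_hardy1}, whose right side is controlled by the interior energy and the exterior flux. Performing this allocation in each of the finitely many configurations and invoking the ordering \eqref{condition_parameters} then produces a small $\eta$, a fixed linear combination of the $\zeta_k,\gamma_k,\delta_k$, for which the claimed bounds hold; the cubic case is identical except that the extra undifferentiated $h$-factor contributes a further $\ep t^{-1}$ (if it is $h^0$) or $C_1\ep s^{\delta_k}t^{-1/2}$ (if it is $h^1$), and one more power of $\ep$.

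The one genuinely delicate point — the step I would watch most carefully — is the distribution of the single available $(s/t)$-weight among the differentiated factors carrying no intrinsic $(s/t)$-decay, since this is exactly what separates the borderline configurations (a low-order $h^0$-factor against a top-order $h^1$-factor) from the harmless ones; matching these weights, with the support localisation of $h^0$ and the Hardy inequality for undifferentiated factors, is what makes the exponents come out as stated.
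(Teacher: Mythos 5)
Your setup and ingredients are the right ones: Leibniz, the explicit pointwise decay of $h^0$, the geometry of $\operatorname{supp} h^0$ on hyperboloids (where $r\simeq t$, $s\le t\lesssim s^2$, so $(1+r)^{-1}\simeq(s/t)s^{-1}$), the $L^\infty\times L^2$ splittings, and the Hardy bound \eqref{bootin_hardy1} for undifferentiated top-order $h^1$-factors. The quadratic case \eqref{est_quad_h0} follows from your argument. The cubic case, however, has a gap.

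Your claim that ``the cubic case is identical except that the extra undifferentiated $h$-factor contributes a further $\epsilon t^{-1}$ (if it is $h^0$)'' breaks down on the worst configuration $C(h^0)(\partial h^1,\partial h^1)$: there the undifferentiated factor is $h^0$, but then the remaining quadratic interaction is $\partial h^1\cdot\partial h^1$, which is \emph{not} of the form $Q(\partial h^l,\partial h^m)$ with $l+m\le 1$, so your quadratic result does not apply to it. One must estimate the full triple product directly. Doing so with the decay you state, $|\partial Z^{\le N-3}h^1|\lesssim C_1\epsilon\,t^{-1/2}s^{-1/2}s^{\zeta}$ from \eqref{KS1_h}, gives only $s^{-3/2+\eta}$: with $K_1$ of top length,
\[
\Big\|\tfrac ts\,Z^{K_0}h^0\cdot\partial Z^{K_2}h^1\Big\|_{L^\infty(\operatorname{supp}h^0)}\lesssim \tfrac ts\cdot\epsilon t^{-1}\cdot C_1\epsilon\, t^{-1/2}s^{-1/2}s^\zeta = C_1\epsilon^2\, t^{-1/2}s^{-3/2}s^\zeta\lesssim C_1\epsilon^2 s^{-2+\zeta},
\]
and multiplying by $\|(s/t)\partial Z^{K_1}h^1\|_{L^2}\lesssim C_1\epsilon s^{1/2+\zeta}$ yields $C_1^2\epsilon^3 s^{-3/2+\eta}$, which falls $s^{-1/2}$ short of the stated $s^{-2+\eta}$. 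What is needed is the sharper low-order bound \eqref{KS4_h}, $|\partial(\partial^I\Gamma^J)h^1|\lesssim C_2\epsilon\, t^{-1/2}s^{-1}s^{\max(\zeta_{k+2},\gamma_k)}$ for $|I|+|J|\le N_1$, whose extra $s^{-1/2}$ over \eqref{KS1_h} is exactly what turns the above into $C_1 C_2\epsilon^3 s^{-2+\eta}$. You do list \eqref{KS4_h} among the tools, but the decay you actually invoke is the weaker one, and the ``quadratic times extra factor'' shortcut hides precisely the case where the distinction matters. (The paper's one-line proof pointer cites \eqref{KS4_h} and \eqref{KS6_h}, not \eqref{KS1_h}, for this reason.)
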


	\begin{proposition}\label{prpnull}
		Under the a-priori assumptions \eqref{boot1_in}-\eqref{bootKG_in} there exists some small constant $0<\eta\le 3\delta_N\ll 1$ depending linearly on $\zeta_k,\gamma_k, \delta_k$, such that for $i=0,1$
		\begin{gather}
			\|\partial^iZ^{\le N} \mathbf{Q}_{\alpha\beta}(\partial h, \partial h)\|_{L^2(\hin_s)}\lesssim (C_1\ep)^2 s^{-1+\eta} \label{null_int} \\
			\|\partial^iZ^{\le N} G_{\alpha\beta}(h)(\partial h, \partial h)\|_{L^2(\hin_s)}\lesssim (C_1\ep)^3 s^{-3/2+\eta}.\label{cub_int}
		\end{gather}
		and multi-indexes $K$ of type $(N, k)$ with $k\le N_1$
		\begin{gather}
			\|Z^{K} \mathbf{Q}_{\alpha\beta}(\partial h, \partial h)\|_{L^2(\hin_s)}\lesssim (C_1\ep)^2 s^{-3/2+\eta} \label{null_lower_int}\\
			\|Z^{K} G_{\alpha\beta}(h)(\partial h, \partial h)\|_{L^2(\hin_s)}\lesssim (C_1\ep)^3 s^{-2+\eta}.\label{cub_lower_int}
		\end{gather}
		Moreover, for multi-indexes $K$ of type $(N,k)$
		\begin{equation} \label{null_zeromode}
			\|Z^K \mathbf{Q}^\flat_{\alpha\beta}(\partial h, \partial h)\|_{L^2(\hin_s)}\lesssim (C_1\ep)^2\left(\sum_{i=1}^4 s^{-1+\gamma_i+\zeta_{k-i}} + s^{-1+\zeta_k}.\right)
		\end{equation}
	\end{proposition}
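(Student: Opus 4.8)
The plan is to split $Z^K\mathbf Q^\flat_{\alpha\beta}(\partial h,\partial h)$ according to the two genuinely different mechanisms it contains: the self-interaction of the wave-like zero modes $\hff$, and the zero-average projection of the self-interaction of the Klein--Gordon modes $\hnn$. I would start from two structural facts. First, admissible fields preserve the null structure, exactly as recalled in the proof of Proposition~\ref{prop:null+cubic}: $Z^K\mathbf Q_{\alpha\beta}(\partial h,\partial h)$ is a finite linear combination of terms $Q'(\partial Z^{K_1}h,\partial Z^{K_2}h)$ with $Q'$ again a classical null form, and if $K$ is of type $(N,k)$ then $K_1,K_2$ are of types $(n_1,k_1),(n_2,k_2)$ with $n_1+n_2\le|K|$ and $k_1+k_2\le k$. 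Second, the product identity \eqref{dec_product} gives
$\big(Q'(\partial Z^{K_1}h,\partial Z^{K_2}h)\big)^\flat = Q'(\partial Z^{K_1}h^\flat,\partial Z^{K_2}h^\flat) + \big(Q'(\partial Z^{K_1}h^\natural,\partial Z^{K_2}h^\natural)\big)^\flat$. Writing $h=h^0+h^1$ and using $(h^0)^\natural=0$, all interactions carrying an $h^0$ factor are controlled by Lemma~\ref{lem:prod_h0} and give $\lesssim C_1\ep^2 s^{-3/2+\eta}$, which is better than the claimed bound. It therefore remains to bound $Q'(\partial Z^{K_1}h^{1,\flat},\partial Z^{K_2}h^{1,\flat})$ and $\big(Q'(\partial Z^{K_1}h^{1,\natural},\partial Z^{K_2}h^{1,\natural})\big)^\flat$, in each case putting the lower-order factor ($n_1\le|K|/2\le N-3$ since $N\ge14$) in an $L^\infty$-type norm and the higher-order one in $L^2$.

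For the wave--wave part, the fundamental null-form inequality combined with \eqref{der_null_semihyp} gives pointwise $|Q'(\partial\psi,\partial\varphi)|\lesssim\big(|\pb_x\psi|+(s/t)^2|\partial\psi|\big)|\partial\varphi|+|\partial\psi|\big(|\pb_x\varphi|+(s/t)^2|\partial\varphi|\big)$, so every product carries either a $(s/t)^2$ factor or a tangential derivative $\pb_x=t^{-1}\Omega_{0\bmi}$. One of the two $(s/t)$ factors is paired with the hyperboloidal energy norm of the higher-order factor, while the remaining $(s/t)$ (or the $t^{-1}$ coming from $\pb_x$, after converting the tangential derivative into a Klainerman field) is absorbed by the pointwise bound on the lower-order factor. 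Using \eqref{KS1_hff}, \eqref{KS4_h}, \eqref{KS6_h} for $Z^{K_1}h^{1,\flat}$ and the slow-growth energy bounds \eqref{bootin2}, \eqref{bootin2.1} for $Z^{K_2}h^{1,\flat}$, and $s\le t$ on $\hin_s$, this produces a contribution $\lesssim(C_1\ep)^2 s^{-3/2+\gamma_{k_1}+\zeta_{k_2}}$, well within the right-hand side. The Klein--Gordon part is treated the same way after a Cauchy--Schwarz in $y$: pointwise in $(t,x)$,
$\big|\big(Q'(\partial Z^{K_1}h^{1,\natural},\partial Z^{K_2}h^{1,\natural})\big)^\flat\big|\lesssim\big(|\pb_xZ^{K_1}h^{1,\natural}|+(s/t)^2|\partial Z^{K_1}h^{1,\natural}|\big)|\partial Z^{K_2}h^{1,\natural}|$ understood in $\|\cdot\|_{L^2_y}$ (and symmetrically). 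Then \eqref{bootKG_in}, \eqref{KS1_hnn} (together with \eqref{bootinL2y.5} for the tangential piece) bound the $L^\infty_xL^2_y$ norm of the lower-order factor and \eqref{bootin3} bounds the $L^2_{xy}$ norm of the higher-order one, matching the $(s/t)$ weights; one gets terms $\lesssim(C_1\ep)^2 s^{-1+\gamma_{k_1}+\zeta_{k_2}}$ with $k_1+k_2\le k$. The case $k_1=0$ yields the clean $s^{-1+\zeta_k}$ term, since the top-order-free Klein--Gordon pointwise bound \eqref{bootKG_in} does not grow in $s$; and if $n_1\ge5$ the partner has order $n_2\le N_1$, so \eqref{bootin4} (no $s^{1/2}$ growth) replaces \eqref{bootin3} and upgrades the decay to $s^{-3/2+\cdots}$. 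Hence only $k_1=n_1-$values with $n_1<5$, i.e. $k_1\le4$, produce the borderline terms $s^{-1+\gamma_i+\zeta_{k-i}}$, $i=1,\dots,4$. Summing the finitely many contributions gives \eqref{null_zeromode}.

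The main obstacle is the weight $(t/s)$, which is unbounded on $\hin_s$ near the light cone: the hyperboloidal energy controls only $\|(s/t)\partial(\cdot)\|_{L^2}$ and $\|\pb(\cdot)\|_{L^2}$, so a bare derivative $\partial$ cannot be placed in $L^2$ by itself. The whole point is that the null structure of $\mathbf Q$ supplies exactly the compensating $(s/t)^2$ or tangential $\pb_x$ on at least one factor, which after redistribution (one $(s/t)$ to the $L^2$ side, one to the $L^\infty$ side, or $\pb_x=t^{-1}\Omega_{0\bmi}$) removes any uncontrolled $(t/s)$; this is why keeping the analysis in the null frame $\q U$ while writing derivatives through $\pb_x$ is essential. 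The delicate bookkeeping is to track how many Klainerman fields migrate onto the $L^\infty$ factor at each such redistribution — this is what shifts the indices to $\gamma_i+\zeta_{k-i}$ — and then to check, against the hierarchy \eqref{condition_parameters}, that these exponents stay below the thresholds needed to close the zero-mode energy inequality \eqref{en_ineq_habf_high}.
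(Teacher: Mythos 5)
Your proposal addresses only one of the five estimates in Proposition~\ref{prpnull}, namely the zero-mode estimate \eqref{null_zeromode}. The other four are not touched at all: you say nothing about the full null-term bound \eqref{null_int}, its lower-order variant \eqref{null_lower_int}, nor the cubic estimates \eqref{cub_int}, \eqref{cub_lower_int}. The cubic terms $G_{\alpha\beta}(h)(\partial h,\partial h)$ in particular require a different mechanism than the one you describe: one factor appears undifferentiated, so after putting the two differentiated low-order factors in $L^\infty$ the undifferentiated high-order factor has to be handled via a Hardy-type inequality \eqref{bootin_hardy1} or \eqref{bootin_hardy2} (exactly as in the exterior-region proof of Proposition~\ref{prop:null+cubic}). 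And the bounds \eqref{null_int}, \eqref{null_lower_int} concern $Z^K\mathbf{Q}_{\alpha\beta}$ without the $\flat$-projection, so the decomposition $h=\hff+\hnn$ you lead with is unnecessary there: the argument is a direct application of \eqref{null_structure} plus \eqref{bootin1}/\eqref{bootin5} on the high-order factor and \eqref{KS4_h} on the low-order factor. These are not difficult once one has the zero-mode mechanism in hand, but they are distinct claims and must be proved.

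On the part you do address, your treatment of the pure Klein--Gordon interactions $\big(Q'(\partial Z^{K_1}\hnn,\partial Z^{K_2}\hnn)\big)^\flat$ takes a genuinely different route from the paper. The paper explicitly remarks that ``the null structure is instead irrelevant when estimating the quadratic interactions of pure non-zero modes'' and bounds $\|\partial Z^{K_1}\hnn\cdot\partial Z^{K_2}\hnn\|_{L^1_yL^2_x}$ crudely by $\|\partial Z^{K_1}\hnn\|_{L^\infty_xL^2_y}\|\partial Z^{K_2}\hnn\|_{L^2_{xy}}$. You instead keep the null structure and use \eqref{null_structure} together with \eqref{der_null_semihyp} to supply a $(s/t)^2$ factor or a tangential derivative on at least one Klein--Gordon factor, which is then paired against the $(s/t)$-weighted $L^2$ bound \eqref{bootin3} of the other factor. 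Your approach is more transparent about where the $(s/t)$ weights come from: the raw norm $\|\partial Z^{K_2}\hnn\|_{L^2_{xy}}$ is not directly controlled by $\enint$, only its $(s/t)$-weighted version is, so a factor of $(s/t)$ must be produced from somewhere, and the null structure is a clean source for it. The paper's version relies on the $t^{-1/2}s^{-1}$ weight inside \eqref{bootKG_in} together with the geometry $t\le t_s\approx s^2/2$ to trade a factor of $t^{-1}$ for $(s/t)^2$, which is more opaque but arrives at the same exponents. Your accounting of which pairs $(k_1,k_2)$ produce the borderline $s^{-1+\gamma_i+\zeta_{k-i}}$ terms with $i\le 4$, via the observation that $n_1\ge 5$ forces $n_2\le N_1$ and lets one upgrade \eqref{bootin3} to \eqref{bootin4}, is correct and matches the role of the constraint $N=N_1+5$ in the paper. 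Just make sure, when you write it out, to verify that after applying an extra $\Omega_{0\bma}$ from $\pb_\bma = t^{-1}\Omega_{0\bma}$ on the low-order factor, the total vector-field count $|K_1|+1$ still falls within the validity range $|I|+|J|\le N_1+1$ of \eqref{bootKG_in}.
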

	\begin{proof}
	Throughout the proof, $\eta$ will denote a small positive constant that depends linearly on $\gamma_k$ and $\delta_k$. We do not need to keep track of the explicit value of $\eta$, which may change from line to line.
	
		We start by decomposing each occurrence of $h$ in $\mathbf{Q}_{\alpha\beta}$ and $G_{\alpha\beta}$ into $h^0 + h^1$. Owing to lemma \ref{lem:prod_h0}, we only need to prove that the above estimates are satisfied for null and cubic interactions involving $h^1$ factors only.
		
		We recall that the admissible vector fields $Z$ preserve the null structure and that for any null form $Q$
		\begin{equation} \label{null_structure}
			|Q(\partial \phi, \partial\psi)|\le |\pb\phi| |\partial \psi| + |\partial\phi| |\pb \psi| + \frac{|t^2-r^2|}{t^2}|\partial\phi||\partial \psi|.
		\end{equation}
		For any $M\in \mathbb{N}$, we then have
		\[
		|\partial^i Z^{\le M} \mathbf{Q}_{\alpha\beta}(\partial h^1, \partial h^1)|\lesssim \hspace{-10pt}\sum_{\substack{|K_1|+|K_2|\le M \\ |I_1|+|I_2| = i}}\hspace{-10pt} |\pb \partial^{I_1}Z^{K_1}h^1| |\partial \partial^{I_2} Z^{K_2}h^1| + \frac{|t^2-r^2|}{t^2}|\partial \partial^{I_1}Z^{K_1}h^1| |\partial \partial^{I_2}Z^{K_2}h^1|.
		\]
		We observe that at least one of the two indexes $K_j$ in the above right hand side has length smaller than $\floor{M/2}$. When $M=N$ and since $N$ and $N_1$ are such that $\floor{N/2}+1\le N_1$, we deduce from energy bound \eqref{bootin1} and pointwise estimate \eqref{KS4_h} that
		\[
		\begin{aligned}
			&\sum_{\substack{|K_1|+|K_2|\le N \\ |I_1|+|I_2| = i}} \left\|\pb \partial^{I_1}Z^{K_1}h^1\cdot \partial \partial^{I_2} Z^{K_2}h^1\right\|_{L^2_{xy}(\hin_s)} + \left\|(s/t)^2\partial \partial^{I_1}Z^{K_1}h^1\cdot \partial \partial^{I_2}Z^{K_2}h^1\right\|_{L^2_{xy}(\hin_s)} \\
			&\lesssim \hspace{-10pt}\sum_{\substack{|K_1|+|K_2|\le N \\ |K_1|\le \floor{N/2}\\ |I_1|+|I_2| = i}} \Big(\left\|(t/s)\pb \partial^{I_1}Z^{K_1}\hab\right\|_{L^\infty_{xy}(\hin_s)} + \left\|(s/t)\partial \partial^{I_1}Z^{K_1}\hab\right\|_{L^\infty_{xy}(\hin_s)}\Big)\left\|(s/t)\partial \partial^{I_2}Z^{K_2}\hab\right\|_{L^2_{xy}(\hin_s)}\\
			& + \sum_{\substack{|K_1|+|K_2|\le N \\ |K_2|\le \floor{N/2}\\ |I_1|+|I_2| = i}} \left\|\partial\partial^{I_2}Z^{K_2}\hab\right\|_{L^\infty_{xy}(\hin_s)}\left\|\pb\partial^{I_1}Z^{K_1}\hab\right\|_{L^2_{xy}(\hin_s)} \lesssim (C_1\ep)^2 s^{-1+\eta}.
		\end{aligned}
		\]
		Similarly, when $K$ is of type $(N, k)$ with $k\le N_1$, we get from energy bound \eqref{bootin5} and pointwise bound \eqref{KS4_h} that
		\[
		\begin{aligned}
			\sum_{\substack{|K_1|+|K_2|\le N \\ |I_1|+|I_2| = i}} \left\|\pb \partial^{I_1}Z^{K_1}h^1\cdot \partial \partial^{I_2} Z^{K_2}h^1\right\|_{L^2_{xy}(\hin_s)} + \left\|(s/t)^2\partial \partial^{I_1}Z^{K_1}h^1\cdot \partial \partial^{I_2}Z^{K_2}h^1\right\|_{L^2_{xy}(\hin_s)} \\
			\lesssim (C_1\ep)^2 s^{-\frac32+\eta}.
		\end{aligned}
		\]
		%We observe that the above estimates can be also obtained if $N$ and $N_1$ are such that $\floor{(N+1)/2}\ge N_1-2$ and $N-(N_1-2)\le N_1$, i.e. for $N$ and $N_1$ such that $(N+3)/2\le N_1\le (N+5)/2$, from the combination of the lower order energy bounds \eqref{bootin5} and the pointwise bounds \eqref{KS1_h}, \eqref{KS2_h}.
		Estimate \eqref{null_int} can be slightly improved if we only consider the zero mode of the quadratic null interactions. We recall that the zero-mode of a product decomposes as in \eqref{dec_product}, hence
		\[
		\mathbf{Q}^\flat_{\alpha\beta}(\partial h^1, \partial h^1) = \mathbf{Q}_{\alpha\beta}(\partial \hff, \partial \hff) + \big( \mathbf{Q}_{\alpha\beta}(\partial \hnn, \partial \hnn)\big)^\flat.
		\]
		The pure zero-mode interactions are treated using the null structure. From the energy bound \eqref{bootin2} and the pointwise bound \eqref{KS1_hff} we derive that
		\[
		\begin{aligned}
			\left\|Z^{\le N}\mathbf{Q}_{\alpha\beta}(\partial \hff, \partial \hff) \right\|_{L^2_x(\hin_s)}\lesssim \ep^2 s^{-\frac32+\eta}.
		\end{aligned}
		\]
		The null structure is instead irrelevant when estimating the quadratic interactions of pure non-zero modes. Using the Cauchy-Schwartz and Poincar\'e inequalities and assuming $N, N_1$ are such that $\floor{N/2}+1\le N_1$, we derive from the pointwise bound \eqref{bootKG_in}, the energy bounds  \eqref{bootin1}, \eqref{bootin3} and \eqref{bootin4} (recall that $N_1=N-5$) and relation \eqref{condition_parameters}  that 
		\[
		\begin{aligned}
			\left\|\big( Z^K \mathbf{Q}_{\alpha\beta}(\partial \hnn, \partial \hnn)\big)^\flat \right\|_{L^2_x(\hin_s)}&\lesssim \sum_{|K_1|+|K_2|\le |K|}\left\|\partial Z^{K_1}\hnn \cdot \partial Z^{K_2}\hnn\right\|_{L^1_yL^2_x(\hin_s)} \\
			&\lesssim \sum_{\substack{|K_1|+|K_2|\le |K|\\ |K_1|\le \floor{|K|/2}}}\big\|\partial Z^{K_1}\hnn\big\|_{L^\infty_{x}L^2_y(\hin_s)}\big\|\partial Z^{K_2}\hnn\big\|_{L^2_{xy}(\hin_s)}\\
			&\lesssim C_1C_2\ep^2 (s^{-\frac{3}{2}+\eta}+s^{-1+\zeta_k}+\sum_{i=0}^4 s^{-1+\gamma_i+\zeta_{k-i}}).
		\end{aligned}
		\]

		As concerns the cubic terms, we have that
		\[
		|\partial^i Z^{\le M}G_{\alpha\beta}(h^1)(\partial h^1, \partial h^1)|\lesssim \sum_{\substack{|K_1|+|K_2|+|K_3|\le M\\ |I_1|+|I_2|+|I_3| = i }} |\partial^{I_1}Z^{K_1}h^1| |\partial \partial^{I_2} Z^{K_2}h^1| |\partial \partial^{I_3} Z^{K_3}h^1| .
		\]
		When $M=N$, we get from energy bound \eqref{bootin1} and pointwise bounds \eqref{KS4_h}, \eqref{KS6_h} that
		\begin{multline*}
			\sum_{\substack{|K_1|+|K_2|+|K_3|\le N\\|K_1|+|K_2|\le \floor{N/2}\\  |I_1|+|I_2|+|I_3| = i }} \left\| \partial^{I_1}Z^{K_1}h^1 \cdot\partial \partial^{I_2} Z^{K_2}h^1\cdot \partial \partial^{I_3} Z^{K_3}h^1\right\|_{L^2_{xy}(\hin_s)}\\
			\lesssim \sum_{\substack{|K_1|+|K_2|+|K_3|\le N\\|K_1|+|K_2|\le \floor{N/2}\\  |I_1|+|I_2|+|I_3| = i }} \left\|(t/s)\partial^{I_1}Z^{K_1}h^1\cdot \partial\partial^{I_2}Z^{K_2}h^1\right\|_{L^\infty_{xy}(\hin_s)}\left\| (s/t)\partial\partial^{I_3}Z^{K_3}h^1 \right\|_{L^2_{xy}(\hin_s)} \\
			\lesssim (C_1\ep)^3s^{-\frac32+\eta}
		\end{multline*}
		and from \eqref{bootin_hardy1}, \eqref{KS1_h} and \eqref{KS4_h} 
		\[
		\begin{aligned}
			&\sum_{\substack{|K_1|+|K_2|+|K_3|\le N\\|K_2|+|K_3|\le \floor{N/2}\\  |I_2|+|I_3| = i }} \left\| Z^{K_1}h^1 \cdot \partial \partial^{I_2} Z^{K_2}h^1\cdot \partial \partial^{I_3} Z^{K_3}h^1\right\|_{L^2_{xy}(\hin_s)}\\
			&\lesssim \sum_{\substack{|K_1|+|K_2|+|K_3|\le N\\|K_2|+|K_3|\le \floor{N/2}\\  |I_2|+|I_3| = i }} \left\|r^{-1}Z^{K_1}h^1\right\|_{L^2_{xy}(\hin_s)}\left\|r\partial \partial^{I_2} Z^{K_2}h^1\cdot \partial \partial^{I_3} Z^{K_3}h^1\right\|_{L^\infty_{xy}(\hin_s)}\lesssim (C_1\ep)^3 s^{-\frac32+\eta}.
		\end{aligned}
		\]
		Similarly, we deduce from \eqref{bootin5}, \eqref{KS4_h}, \eqref{KS6_h} that when $K$ is of type $(N, k)$ with $k\le N_1$
		\[
		\sum_{\substack{|K_1|+|K_2|+|K_3|\le N\\|K_1|+|K_2|\le \floor{N/2}\\  |I_1|+|I_2|+|I_3| = i }} \left\| \partial^{I_1}Z^{K_1}h^1 \cdot \partial \partial^{I_2} Z^{K_2}h^1\cdot \partial \partial^{I_3} Z^{K_3}h^1\right\|_{L^2_{xy}(\hin_s)}\lesssim (C_1\ep)^3 s^{-2+\eta}
		\]
		while from \eqref{bootin_hardy2} and \eqref{KS4_h}
		\[
		\sum_{\substack{|K_1|+|K_2|+|K_3|\le N\\|K_2|+|K_3|\le \floor{N/2}\\  |I_2|+|I_3| = i }} \left\| Z^{K_1}h^1 \cdot \partial \partial^{I_2} Z^{K_2}h^1\cdot \partial \partial^{I_3} Z^{K_3}h^1\right\|_{L^2_{xy}(\hin_s)}\lesssim (C_1\ep)^3 s^{-2+\eta}.
		\]
		
	\end{proof}

	\begin{lemma}
	There exists some small constant $0<\eta\le 3\delta_N\ll 1$ depending linearly on $\zeta_k, \gamma_k, \delta_k$ such that for multi-indexes $K$ of type $(N_1,k)$ we have
		\begin{align}
			\|t^2 Z^K  \mathbf{Q}_{\alpha\beta}(\partial h, \partial h)\|_{L^\infty_{xy}(\hin_s)}& \lesssim C_2^2 \ep^2 s^{-1+\eta}  \label{eq:est_null_point}\\
			\|t^\frac32 Z^K G_{\alpha\beta}(h)( \partial h, \partial h)\|_{L^\infty_{xy}(\hin_s)} &\lesssim  C_1 C_2^2\ep^3 s^{-2 + \eta}. \label{eq:est_cub_point}
		\end{align}
		while for any quadratic form $N$ we have
		\begin{equation}\label{eq:Linf_quadratic}
			\left\|t\, \partial Z^K N(\partial h, \partial h) \right\|_{L^\infty_{xy}(\hin_s)} \lesssim C_1^2\epsilon^2 s^{-2 + \eta}.
		\end{equation}
	\end{lemma}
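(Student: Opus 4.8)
The plan is to decompose every copy of $h$ appearing in the nonlinearities as $h=h^0+h^1$ and first to dispose of all interactions containing an $h^0$-factor: on $\hin_s$ the support of $h^0$ lies in the region $r\sim t\gtrsim s$, so by \eqref{h0_estimate} (which there gives $|\partial^I Z^J h^0|\lesssim \ep t^{-1-|I|}$) these are lower-order and are handled exactly as the corresponding terms in the proof of Proposition~\ref{prpnull}. It then remains to estimate the pure $h^1$-interactions, and the main tools are the pointwise bounds of subsection~\ref{sub:KSbounds}: above all the improved interior decay recorded in \eqref{KS4_h}, which for $|I|+|J|\le N_1$ gives $|\partial(\partial^I\Gamma^J h^1)|\lesssim C_2\ep t^{-1/2}s^{-1}s^\eta$ and $|\pb(\partial^I\Gamma^J h^1)|\lesssim C_2\ep t^{-3/2}s^\eta$, the second-derivative bound \eqref{KS1_h}, the undifferentiated bound \eqref{KS6_h}, the zero-mode bound \eqref{KS1_hff}, and the Klein--Gordon decay \eqref{KS1_hnn} for $\habn$. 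Here and below $\eta>0$ denotes a constant depending linearly on $\zeta_k,\gamma_k,\delta_k$ (so $\eta\le 3\delta_N$) whose value may change from line to line, and we repeatedly use $t\ge s$ on $\hin_s$.

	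For \eqref{eq:est_null_point} one writes, using that the admissible vector fields preserve the null structure, $Z^K\mathbf{Q}_{\alpha\beta}(\partial h^1,\partial h^1)$ as a linear combination of $Q(\partial Z^{K_1}h^1,\partial Z^{K_2}h^1)$ with $|K_1|+|K_2|\le N_1$ and $Q$ a null form, and then applies \eqref{null_structure} to bound each such term by $|\pb Z^{K_1}h^1|\,|\partial Z^{K_2}h^1|+|\partial Z^{K_1}h^1|\,|\pb Z^{K_2}h^1|+(s/t)^2|\partial Z^{K_1}h^1|\,|\partial Z^{K_2}h^1|$. Since the total number of vector fields is $\le N_1$, \eqref{KS4_h} applies to both factors, so each of the first two products is $\lesssim C_2^2\ep^2 t^{-2}s^{-1}s^\eta$ and the third is $\lesssim (s/t)^2C_2^2\ep^2 t^{-1}s^{-2}s^\eta\lesssim C_2^2\ep^2 t^{-3}s^\eta$; multiplying by $t^2$ gives $\|t^2Z^K\mathbf{Q}_{\alpha\beta}\|_{L^\infty_{xy}(\hin_s)}\lesssim C_2^2\ep^2 s^{-1+\eta}$. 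The cubic estimate \eqref{eq:est_cub_point} follows from the same type of computation: $|Z^K G_{\alpha\beta}(h^1)(\partial h^1,\partial h^1)|\lesssim \sum_{|K_0|+|K_1|+|K_2|\le N_1}|Z^{K_0}h^1|\,|\partial Z^{K_1}h^1|\,|\partial Z^{K_2}h^1|$, and bounding $|Z^{K_0}h^1|$ by \eqref{KS6_h} when $K_0$ is a pure product of $\Gamma$'s (hence $\lesssim C_1\ep t^{-1/2}s^\eta$) and by \eqref{KS4_h} after extracting one derivative otherwise, together with \eqref{KS4_h} for the two $\partial Z^{K_i}h^1$-factors, yields $\lesssim C_1C_2^2\ep^3 t^{-3/2}s^{-2}s^\eta$, i.e.\ \eqref{eq:est_cub_point}.

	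For \eqref{eq:Linf_quadratic} one commutes the product of $\le N_1+1$ admissible vector fields making up $\partial Z^K$ through the constant-coefficient bilinear form $N$ and moves the outermost derivative onto $h$, obtaining a sum of terms $\partial Z^{L_1}h^1\cdot\partial Z^{L_2}h^1$ with $|L_1|+|L_2|\le N_1+1$, at least one factor carrying $\le\floor{(N_1+1)/2}$ vector fields so that its derivatives are still controlled by \eqref{KS1_h}. Splitting $h^1=(h^1)^\flat+(h^1)^\natural$, the zero-mode--zero-mode interactions are the leading ones: by \eqref{KS1_hff} each factor is $\lesssim C_1\ep t^{-1/2}s^{-1}s^\eta$, so after multiplication by $t$ one gets $C_1^2\ep^2 t^{-1}s^{-2}s^\eta\lesssim C_1^2\ep^2 s^{-2+\eta}$; every interaction involving at least one $(h^1)^\natural$-factor has the faster Klein--Gordon decay $t^{-3/2}$ for that factor (by \eqref{KS1_hnn}) and is therefore harmless.

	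The step I expect to require the most care is obtaining the sharp power of $s$. The argument genuinely needs the improved interior decay $t^{-1/2}s^{-1}$ for $\partial Z^{\le N_1}h^1$ furnished by \eqref{KS4_h} — the generic Klainerman--Sobolev rate $t^{-1/2}s^{-1/2}$ of \eqref{KS1_h}/\eqref{KS2_h} alone would be off by a factor $s^{1/2}$ — the $(s/t)^2$ gain on the bad component of the null form in \eqref{eq:est_null_point}, and, in \eqref{eq:Linf_quadratic}, the availability of the extra derivative, which is what lets the $\partial$ be placed on a factor that would otherwise only be controlled at rate $t^{-1/2}$. One must also keep careful track of which multi-index is the ``small'' one in each Leibniz term so as never to invoke a pointwise bound of order higher than $N_1$; this bookkeeping is routine and is carried out exactly as in Proposition~\ref{prpnull}.
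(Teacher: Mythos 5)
Your treatments of \eqref{eq:est_null_point} and \eqref{eq:est_cub_point} are correct and follow the same route as the paper (the paper's one-line proof cites \eqref{KS4_h} and \eqref{KS6_h}, which is exactly the combination of $\partial\mapsto t^{-1/2}s^{-1}$, $\pb\mapsto t^{-3/2}$, undifferentiated $\mapsto t^{-1/2}$ you use; the null-structure gain $(s/t)^2$ is what converts the $s^{-1/2}$ type decay into the extra $t^{-1}$). Your proof of \eqref{eq:Linf_quadratic}, however, has a genuine multi-index gap at exactly the place you wave away as routine bookkeeping.

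Concretely: after applying $\partial Z^K$ with $K$ of type $(N_1,k)$ to $N(\partial h,\partial h)$ and distributing, the worst Leibniz term is of the form $\partial^2 Z^{K_1}h \cdot \partial h$ with $|K_1|=N_1$ (one factor collecting all $N_1$ vector fields plus both the outer $\partial$ and its own inner $\partial$, so $N_1+2$ derivatives total). When you split $h^1=(h^1)^\flat+(h^1)^\natural$, the $\flat$ part of this factor is covered by \eqref{KS1_hff}, which reaches up to $N-3=N_1+2$ total derivatives, so the flat--flat and flat--natural interactions come out at $s^{-2+\eta}$ as you claim. But the claim that "every interaction involving at least one $(h^1)^\natural$-factor has the faster Klein--Gordon decay $t^{-3/2}$ for that factor (by \eqref{KS1_hnn}) and is therefore harmless" fails precisely when that $\natural$-factor is the high one: \eqref{KS1_hnn} and \eqref{bootKG_in} only cover $\partial^I\Gamma^J\habn$ with $|I|+|J|\le N_1+1$, so $\partial^2 Z^{K_1}\habn$ with $|K_1|=N_1$ (total $N_1+2$) is out of their range. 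The only bound available for it is the Klainerman--Sobolev rate \eqref{KS2_h}, $\lesssim C_1\ep\, t^{-1/2}s^{-1/2+\eta}$. Pairing this with the other factor (which decays like $t^{-1/2}s^{-1+\eta}$ if $\flat$, or $t^{-3/2}s^\eta$ if $\natural$) and multiplying by $t$ gives at best $s^{-3/2+\eta}$, not $s^{-2+\eta}$, an $s^{1/2}$ shortfall. This boundary case is not "routine"; it is the one step where the argument actually breaks, and you should either locate an additional bound that controls $N_1+2$ derivatives of $\habn$ in $L^\infty_{xy}$ with $t^{-3/2}$ decay (the paper does not seem to provide one), or note that the same difficulty would afflict the paper's own one-line proof via \eqref{KS4_h}, which likewise only reaches $|I|+|J|\le N_1$ total derivatives before the outer $\partial$ and so also misses the top case by one.
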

	\begin{proof}
		This is a direct consequence of bounds \eqref{KS4_h} and \eqref{KS6_h}.
	\end{proof}

	\subsection{Second order derivatives of the zero modes}
	As expected for solutions to wave equations on $\R^{1+3}$, the second order derivatives of the differentiated coefficients $Z^K\habf$ enjoy better decay estimates compared to \eqref{bootin2} and \eqref{KS4_h} respectively. 
	
	\begin{lemma}\label{lem:sup_second_derivatives_wave}
		There exists some small constant $0<\eta\le 2\delta_N\ll 1$ depending linearly on $\zeta_k, \gamma_k, \delta_k$ such that for any multi-index $K$ of type $(N_1, k)$ 
		\begin{equation}\label{second_derivatives2}
			\left\|t^\frac32 (s/t)^2\partial_t^2 Z^K \hff_{\alpha\beta}\right\|_{L^\infty(\hin_s)}\lesssim C_2\epsilon s^{-1+\eta}.
		\end{equation}
	\end{lemma}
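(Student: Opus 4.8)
The plan is to exploit the wave equation \eqref{eq:diff_hff} satisfied by $Z^K\hff_{\alpha\beta}$ to trade the ``bad'' derivative $\partial_t^2$ for quantities that are already controlled by the a-priori assumptions. Recall from subsection~\ref{zero-mode truncation} that the zero mode $\hff_{\alpha\beta}$ is a genuine wave (no Klein--Gordon mass), so \eqref{eq:diff_hff} reads
\[
\Box_x Z^K\hff_{\alpha\beta} = -(H^{\bmmu\bmnu})^\flat\partial_\bmmu\partial_\bmnu Z^K\hff_{\alpha\beta} - \big((H^{\mu\nu})^\natural\partial_\mu\partial_\nu Z^K\habn\big)^\flat + F^{K,\flat}_{\alpha\beta} + F^{0,K}_{\alpha\beta}.
\]
Using the standard algebraic identity $\partial_t^2 = \tfrac{t^2}{t^2-r^2}\big(\Box_x + \text{(terms with tangential derivatives $\pb_x$ and lower-order $\tfrac1t\partial Z$)}\big)$ on hyperboloids (this is exactly \eqref{eq:curved_part_semihyp}--\eqref{null-hyp-coeff} applied to the flat operator, with the factor $(t/s)^2$ appearing), one has schematically
\[
(s/t)^2\,\partial_t^2 Z^K\hff_{\alpha\beta} \;=\; \Box_x Z^K\hff_{\alpha\beta} \;+\; \q O\!\big(\pb_x\partial Z^K\hff_{\alpha\beta}\big) \;+\; \q O\!\big(\tfrac1t\,\partial Z^{\le K+1}\hff_{\alpha\beta}\big).
\]
Multiplying by $t^{3/2}$ and taking $L^\infty(\hin_s)$ norms, the last two terms are handled directly: the tangential term $t^{3/2}\pb_x\partial Z^K\hff_{\alpha\beta}$ is bounded by $C_1\ep s^{\zeta_{k+2}}$ via \eqref{KS1_hff} (applied to $\partial Z^K$, which raises the derivative count but stays within $N-3$ since $K$ is of type $(N_1,k)$ and $N_1 = N-5$), and the lower-order term $t^{1/2}\partial Z^{\le K+1}\hff_{\alpha\beta}$ is likewise bounded using \eqref{KS1_hff}.

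The remaining work is to bound $t^{3/2}\|\Box_x Z^K\hff_{\alpha\beta}\|_{L^\infty(\hin_s)}$ by plugging in the right-hand side of \eqref{eq:diff_hff}. First I would treat the explicit source $F^{0,K}_{\alpha\beta} = Z^K\tilde\Box_g h^0_{\alpha\beta}$: since $h^0$ is supported away from the interior light cone except in the cut-off region $t/2\le r\le 3t/4$, and there it decays like $r^{-3}$ by \eqref{h0_estimate}, the bound $t^{3/2}|F^{0,K}|\lesssim \ep\, t^{-3/2}\lesssim \ep\, s^{-3}$ follows easily. Next, $F^{K,\flat}_{\alpha\beta}$ contains the differentiated null terms, the weak null terms $P_{\alpha\beta}$, the cubic terms, and the commutator $[Z^K,H^{\mu\nu}\partial_\mu\partial_\nu]h^1_{\alpha\beta}$; the pointwise estimates \eqref{eq:est_null_point}, \eqref{eq:est_cub_point} already give $t^{3/2}$-weighted $L^\infty$ bounds of the desired strength $C_2^2\ep^2 s^{-1/2+\eta}$ or better for the null and cubic pieces, and the weak null terms behave even better since by \eqref{good_coefficient1}, \eqref{good_coefficient2} and Lemma~\ref{lem:weak_null_frame} the dangerous $\partial h_{LL}\partial h_{\Lb\Lb}$ pairing is controlled by the good wave-gauge coefficients. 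The commutator $[Z^K,H^{\mu\nu}\partial_\mu\partial_\nu]h^1_{\alpha\beta}$ at the level of $K$ of type $(N_1,k)$ is estimated pointwise using \eqref{comm_int}: the $H^{1,\flat}_{LL}$ and $H^{1,\flat}_{4L}$ coefficients contribute good factors by \eqref{good_coefficient2} and \eqref{wave_cond_interior}, the $(|t^2-r^2|/t^2)|H|$ contribution uses \eqref{KS6_h} and \eqref{KS1_hnn}, and everything is at most $C_1C_2\ep^2 s^{-1+\eta}$. Finally the genuinely quasilinear terms on the left of \eqref{eq:diff_hff} moved to the right: $(H^{\bmmu\bmnu})^\flat\partial_\bmmu\partial_\bmnu Z^K\hff_{\alpha\beta}$ is absorbed by writing $\partial^2 = (t/s)^2\,(s/t)^2\partial_t^2 + \dots$ and using the smallness $|(H^{1,\mu\nu})^\flat|\lesssim \ep(t/s)^{-2}t^{-1/2}$ from \eqref{good_coefficient2} (plus the Schwarzschild part from \eqref{h0_estimate}) to reabsorb the resulting $(s/t)^2\partial_t^2$ term into the left-hand side with a small constant; the term $\big((H^{\mu\nu})^\natural\partial_\mu\partial_\nu Z^K\habn\big)^\flat$ uses the Klein--Gordon pointwise decay of $\habn$ from \eqref{KS1_hnn} together with $|(H^{\mu\nu})^\natural|\lesssim C_1\ep\,t^{-3/2}$ (again via Sobolev on $\S^1$ and \eqref{bootKG_in}), giving a contribution $\lesssim C_1 C_2\ep^2 t^{-3}$, which is more than enough.

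Collecting all contributions and choosing $\ep_0$ small so that the coefficient in front of the reabsorbed $(s/t)^2\partial_t^2$ term is $\le \tfrac12$, we obtain $\|t^{3/2}(s/t)^2\partial_t^2 Z^K\hff_{\alpha\beta}\|_{L^\infty(\hin_s)}\lesssim C_2\ep\, s^{-1+\eta}$ with $\eta$ depending linearly on $\zeta_k,\gamma_k,\delta_k$ (and bounded by $2\delta_N$), as claimed. The main obstacle I anticipate is bookkeeping the derivative counts: applying $\partial$ or $\Box_x$ to $Z^K\hff$ with $K$ of type $(N_1,k)$ pushes the total order up to $N_1+2 = N-3$, which is precisely the threshold where the $L^\infty_{xy}$ Klainerman--Sobolev bounds \eqref{KS1_h}--\eqref{KS1_hff} still apply, so one must check that \emph{no} term requires more than $N-3$ vector fields inside an $L^\infty$ norm; the quasilinear commutator term is where this is tightest and deserves the most care. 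A secondary subtlety is that the identity converting $\partial_t^2$ to $\Box_x$ has coefficient $(t/s)^2$, which blows up near the light cone $r=t$, so one must verify that every $\Box_x$-side term carries a compensating $(s/t)^2$ or better; this is automatic for the quasilinear terms (they already come paired with $(H^{\mu\nu})^\flat$ or $(H^{\mu\nu})^\natural$ of the right size) but must be tracked for the source terms $F^{K,\flat}_{\alpha\beta}$, where the interior pointwise estimates of subsection~\ref{sub:KSbounds} are stated precisely so as to make this work.
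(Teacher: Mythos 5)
Your overall strategy is the same as the paper's: use the hyperboloidal expansion of the flat d'Alembertian to trade $(s/t)^2\partial_t^2$ for $\Box_x$ plus tangential and lower-order terms, plug in the equation \eqref{eq:diff_hff}, absorb the quasilinear coefficient $(H^{UV})^\flat c^{00}_{UV}$ into the left-hand side using its smallness, and estimate the remaining source terms pointwise. However, there is a genuine gap in your treatment of the tangential terms. You claim that $t^{3/2}\pb_x\partial Z^K\hff_{\alpha\beta}$ is bounded by $C_1\ep s^{\zeta_{k+2}}$ via \eqref{KS1_hff}, and you treat this as an acceptable contribution. But the lemma requires every contribution, after multiplication by $t^{3/2}$, to be $\lesssim C_2\ep s^{-1+\eta}$, and $s^{\zeta_{k+2}}$ is larger than $s^{-1+\eta}$ by roughly a full power of $s$. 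Applying \eqref{KS1_hff} directly to $\pb\partial Z^K\hff$ as you propose therefore does \emph{not} close the estimate.

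The missing ingredient, which the paper uses explicitly, is the identity $\pb_\bma=\Omega_{0\bma}/t$. When you expand $(s/t)^2\partial_t^2$ via \eqref{flat_wave_hyp}, the extra terms are
\[
-\frac{2x^\bma}{t}\pb_\bma\partial_t p + \pb^\bma\pb_\bma p - \Big(\frac{r^2}{t^3}+\frac{3}{t}\Big)\partial_t p,
\]
and every one of them carries a hidden factor of $1/t$ once $\pb_\bma$ is rewritten as $\Omega_{0\bma}/t$. For example,
\[
\Big|\frac{2x^\bma}{t}\pb_\bma\partial_t Z^K\hff\Big| = \Big|\frac{2x^\bma}{t^2}\Omega_{0\bma}\partial_t Z^K\hff\Big|\lesssim \frac{1}{t}\,\big|\partial Z^{\le K+1}\hff\big|\lesssim \frac{C_1\ep}{t^{3/2}s}\,s^{\zeta_{k+3}},
\]
which after multiplication by $t^{3/2}$ yields the required $C_1\ep\,s^{-1+\zeta_{k+3}}$, and similarly for the $\pb^\bma\pb_\bma$ piece (which produces $t^{-2}\Gamma^2$ after commuting, giving $C_1\ep\,t^{-1}s^{\zeta_{k+3}}\lesssim C_1\ep\,s^{-1+\zeta_{k+3}}$ since $t\ge s$). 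Without this step, your schematic $\q O(\pb_x\partial Z^K\hff)$ term is genuinely too large. A secondary imprecision: for the wave--Klein--Gordon coupling $((H^{\mu\nu})^\natural\partial_\mu\partial_\nu Z^K\habn)^\flat$ you assert a bound $\lesssim C_1C_2\ep^2 t^{-3}$, but \eqref{bootKG_in} only gives $|\partial_{tx}^2 Z^K\habn|\lesssim C_2\ep\,t^{-1/2}s^{-1+\gamma}$, not $t^{-3/2}$, so the correct product bound is of size $\ep^2 t^{-2}s^{-1+\gamma_k}$; this still suffices after multiplication by $t^{3/2}$ since $t^{-1/2}\le s^{-1/2}$, so it does not affect the conclusion, but the stated intermediate decay is wrong.
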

	
	\begin{proof}
		The flat wave operator can be expressed in terms of the hyperbolic derivatives as 
		\begin{equation}\label{flat_wave_hyp}
			-\Box_{tx}  = (s/t)^2\partial^2_t  +2 (x^\bma/t)\pb_\bma \partial_t - \pb^\bma\pb_\bma    + \frac{r^2}{t^3}\partial_t  + \frac{3}{t}\partial_t 
		\end{equation}
		and from \eqref{null-hyp2} the curved part can be written as
		\[
		(H^{\bmmu\bmnu})^\flat\partial_\bmmu\partial_\bmnu = (H^{UV})^\flat c^{\bmalph\bmbeta}_{UV}\pb_\bmalph\pb_\bmbeta + (H^{UV})^\flat d^\bmmu_{UV} \pb_\bmmu
		.
		\]
		Equation \eqref{eq:diff_hff} for $ Z^K \hff_{\alpha\beta}$ becomes
		\begin{equation}\label{eq_second_derivatives}
			\begin{aligned}
				\left((s/t)^2 + (H^{UV})^\flat c_{UV}^{00}\right)\partial_t^2  Z^K\hff_{\alpha\beta}  & = 
				S_1( Z^K\hff_{\alpha\beta})+ S_2( Z^K\hff_{\alpha\beta})
				\\
				&+  F^{K, \flat}_{\alpha\beta} + (F^{0,K}_{\alpha\beta})^\flat - \big( (H^{\mu\nu})^\natural\cdot \partial_\mu\partial_\nu  Z^K\habn\big)^\flat
			\end{aligned}
		\end{equation}
		where $F^{K, \flat}_{\alpha\beta}$ is the average over $\S^1$ of the source term in \eqref{source_ext_high1}, and $S_1(p), S_2(p)$ are defined as follows for an arbitrary two tensor $p$
		\begin{equation}\label{eq:def_S1_S2}
			\gathered
			S_1(p) := -\big(2(x^\bma/t)\pb_\bma \partial_t -\pb^\bma\pb_\bma + \frac{r^2}{t^3}\partial_t  + \frac{3}{t}\partial_t \big)p
			\\
			S_2(p) :=- \big( (H^{UV})^\flat c_{UV}^{\bma \bmbeta}\pb_\bma\pb_\bmbeta+  (H^{UV})^\flat c_{UV}^{\bmalph \bmb}\pb_\bmalph\pb_\bmb+ (H^{UV})^\flat d_{UV}^{\bmmu}\pb_\bmmu\big)p.
			\endgathered
		\end{equation}
		
		We note that, if $\ep$ is sufficiently small, relation \eqref{H1-UV-cUV-exp} with $\pi = (Z^KH^{UV})^\flat$, bounds \eqref{h0_estimate}, \eqref{KS6_h} and \eqref{good_coefficient2} yield
		\begin{equation}\label{good_coeff_comm}
			\big|(Z^{K}H^{UV})^\flat c_{UV}^{00}\big|\lesssim C_1\epsilon t^{-1/2}(s/t)^2s^{\delta_{k+2}}\lesssim (1/2)(s/t)^2,
		\end{equation}
		hence it is enough to prove that the right hand side in \eqref{eq_second_derivatives} is bounded by $C_1\ep t^{-3/2}s^{-1+2\zeta_{k+3}}$.	
		
		This is the case for the $S_1, S_2$ terms with $p=Z^K\hff_{\alpha\beta}$, as follows by using the pointwise bounds \eqref{KS1_hff}, \eqref{KS6_h} and \eqref{null-hyp-coeff} together with the fact that  $\pb_\bma = \Omega_{0\bma}/t$. 
		
		All quadratic and cubic terms in $F^{K,\flat}_{\alpha\beta}$, except for the commutator terms,  are estimated using \eqref{eq:Linf_quadratic} and \eqref{eq:est_cub_point} respectively, while from \eqref{dec_F0}, \eqref{h0_estimate} and \eqref{KS1_hff} we have
		\[
		\| t^3 (F^{0,K}_{\alpha\beta})^\flat\|_{L^\infty(\hin_s)}\lesssim \ep.
		\]
		
		From the pointwise bounds \eqref{bootKG_in}, \eqref{KS2_h} and relation \eqref{condition_parameters}, we have that for multi-indexes $K$ of type $(N_1, k)$
		\[
		\begin{aligned}
			& \big|  Z^K \big( (H^{\mu\nu})^\natural\cdot \partial_\mu\partial_\nu\hnn_{\alpha\beta}  \big)^\flat\big| \\
			& \lesssim \sum_{\substack{|K'|\le |K| }}  \big\| H^{1,\natural}\big\|_{L^\infty_xL^2_y}\ \big\|\partial_\mu\partial_\nu  Z^{K'}\hnn_{\alpha\beta} \big\|_{L^\infty_xL^2_y}  + \hspace{-10pt} \sum_{\substack{|K_1|+|K_2|\le |K|\\ |K_2|<|K| }}  \big\| ( Z^{K_1}H^{1,\mu\nu})^\natural\big\|_{L^\infty_xL^2_y} \big\|\partial_\mu\partial_\nu  Z^{K_2}\hnn_{\alpha\beta} \big\|_{L^\infty_xL^2_y}\\
			&  \lesssim C_1 C_2\ep^2 t^{-2}s^{-\frac12+ \zeta_{k+3}}+ C_2^2\ep^2 t^{-2}s^{-1+\gamma_k} .
		\end{aligned}
		\]
		We observe that the above bound can be improved if $|K|\le N_1-1$ to $C_2^2\ep^2 t^{-2}s^{-1+\gamma_k} $.
		
		Finally, for the purpose of this proof it is enough to write the commutator terms which only involve zero-modes as
		\[
		\sum_{\substack{|K_1| + |K_2|\le |K| \\ |K_2|<|K|}} Z^{K_1}H^{1,\flat}\cdot \partial^2 Z^{K_2}\habf	
		\]
		which is estimated by $C_1C_2\ep^2 t^{-3/2}s^{-1+2\zeta_{k+2}}$ using \eqref{KS1_hff} whenever $Z^{K_1}$ contains at least a usual derivative, and \eqref{bootW_in} together with \eqref{KS1_hff} otherwise. 	
		
		The conclusion of the proof follows by assuming $\ep$ sufficiently small so that $C_2\ep<1$.
	\end{proof}
	
	We highlight that the proof of the above Lemma also yields the following
	\begin{corollary}
		For every multi-index $K$ of type $(N_1-1,k)$ we have
		\begin{equation}\label{comm_int_point_zeromode}
			\big\|t^2 \big( (H^{1, \mu\nu})^\natural\cdot \partial_\mu\partial_\nu Z^K\habn\big)^\flat\big\|_{L^\infty(\hin_s)} + \big\|t^2 \big([Z^K, H^{1,\mu\nu}\partial_\mu\partial_\nu]\hab\big)^\flat\big\|_{L^\infty(\hin_s)}\lesssim C_1^2\ep^2 s^{-1+\eta}.
		\end{equation}
	\end{corollary}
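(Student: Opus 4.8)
The corollary asks for pointwise $L^\infty(\hin_s)$ control of two ``wave--Klein--Gordon'' contributions to the zero-mode equation, and it is stated as a byproduct of the proof of Lemma \ref{lem:sup_second_derivatives_wave}. Let me trace through what's actually needed.

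The first term $\big((H^{1,\mu\nu})^\natural \partial_\mu\partial_\nu Z^K\habn\big)^\flat$ was literally estimated inside that proof (the displayed chain giving $C_1 C_2 \ep^2 t^{-2} s^{-1/2+\zeta_{k+3}} + C_2^2 \ep^2 t^{-2} s^{-1+\gamma_k}$), with the remark that for $|K| \le N_1 - 1$ it improves to $C_2^2 \ep^2 t^{-2} s^{-1+\gamma_k}$. Since on $\hin_s$ we have $t \ge s$, bounding $t^{-2} \le t^{-1} s^{-1}$... no wait, we need $t^2 \times (\text{that})$, so $t^2 \cdot t^{-2} s^{-1+\gamma_k} = s^{-1+\gamma_k} \le s^{-1+\eta}$. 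Good. So the first term is immediate.

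For the commutator term $\big([Z^K, H^{1,\mu\nu}\partial_\mu\partial_\nu]\hab\big)^\flat$: the plan is to use the decomposition $(fg)^\flat = f^\flat g^\flat + (f^\natural g^\natural)^\flat$ applied to the product structure of the commutator, so that $\big([Z^K, H^{1,\mu\nu}\partial_\mu\partial_\nu]\hab\big)^\flat$ splits into (i) a purely zero-mode commutator $[Z^K, (H^{1,\mu\nu})^\flat \partial_\mu\partial_\nu]\habf$ plus (ii) terms of the shape $\big([Z^K, (H^{1,\mu\nu})^\natural \partial_\mu\partial_\nu]\habn\big)^\flat$. Piece (i), written schematically as $\sum_{|K_1|+|K_2|\le |K|, |K_2|<|K|} Z^{K_1}H^{1,\flat}\cdot\partial^2 Z^{K_2}\habf$ exactly as at the end of the proof of the Lemma, is controlled by $C_1 C_2\ep^2 t^{-3/2}s^{-1+2\zeta_{k+2}}$ there; multiplying by $t^2 \le t^{1/2}\cdot t^{3/2}$ and using $t \lesssim s^2$ on hyperboloids gives $t^2 \cdot t^{-3/2}s^{-1+\eta} = t^{1/2}s^{-1+\eta} \lesssim s^{1} s^{-1+\eta} = s^{\eta}$... hmm, that's not quite $s^{-1+\eta}$. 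I would instead keep the $t^{-2}$ factor honest: since $|K|\le N_1-1$ we may spend one more derivative, write $|\partial^2 Z^{K_2}\habf|\lesssim t^{-1}s^{-1}\cdot s^{\zeta}$ type bounds coming from \eqref{KS1_hff} with a scaling-derivative argument, and combine with $|Z^{K_1}H^{1,\flat}|\lesssim t^{-1/2}s^{\delta}$ from \eqref{KS6_h} (or $|H^{1,\flat}|\lesssim t^{-1}s^{\gamma}$ from \eqref{bootW_in} when $K_1$ has no bare derivative), landing on $t^2\cdot t^{-3/2}s^{-1+\eta}$ — so the cleanest route is to just cite the bound already produced in the Lemma's proof with $|K|$ lowered by one, where every $\partial^2$ acting on $\habf$ picks up the extra $t^{-1/2}$. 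Piece (ii) is handled exactly like the non-zero-mode term above (Poincaré in $y$, $L^\infty_x L^2_y$ Sobolev, \eqref{bootKG_in}, \eqref{KS1_hnn}), giving $t^{-2}s^{-1+\eta}$ up to constants $C_1 C_2 \ep^2$ and $C_2^2\ep^2$, hence $t^2\times(\cdot)\lesssim s^{-1+\eta}$.

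I expect the main (only) obstacle to be \textbf{bookkeeping the derivative count}: the corollary is stated for $K$ of type $(N_1-1,k)$ precisely so that in every term one has a spare derivative relative to what was available in Lemma \ref{lem:sup_second_derivatives_wave}, allowing the borderline $t^{-3/2}s^{-1+\zeta}$ estimates from that proof to be upgraded to genuine $t^{-2}s^{-1+\eta}$ decay (e.g. trading $\|t^{3/2}\partial Z^{\le N_1}\cdot\|_{L^\infty}$ for an extra $(1+t)^{-1/2}$ via $\pb_\bma = \Omega_{0\bma}/t$ and commuting one $Z$ through). No genuinely new estimate is required; the proof is: invoke \eqref{dec_product}, split the commutator into zero--zero and natural--natural pieces, and for each piece quote the corresponding bound from the proof of Lemma \ref{lem:sup_second_derivatives_wave} with the index range shifted down by one, then multiply through by $t^2$ using $t\lesssim s^2$ where needed. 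I will therefore present it as a short deduction rather than a fresh argument.

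\begin{proof}
	Both bounds are byproducts of the proof of Lemma \ref{lem:sup_second_derivatives_wave}, once we lower the index range so that one extra derivative is always available.

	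For the first term, the estimate is exactly the one displayed in the proof of Lemma \ref{lem:sup_second_derivatives_wave} for $\big| Z^K\big((H^{\mu\nu})^\natural\cdot \partial_\mu\partial_\nu\hnn_{\alpha\beta}\big)^\flat\big|$, which for $|K|\le N_1-1$ improves to
	\[
	\big| Z^K\big((H^{1,\mu\nu})^\natural\cdot \partial_\mu\partial_\nu\habn\big)^\flat\big|\lesssim C_2^2\ep^2 t^{-2}s^{-1+\gamma_k}.
	\]
	Since $t\ge s$ on $\hin_s$ we have $t^2\cdot t^{-2}s^{-1+\gamma_k} = s^{-1+\gamma_k}\le s^{-1+\eta}$, which gives the first claimed bound.

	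For the commutator term we use \eqref{dec_product}. Applying it to the product structure of $[Z^K, H^{1,\mu\nu}\partial_\mu\partial_\nu]\hab$ we obtain
	\[
	\big([Z^K, H^{1,\mu\nu}\partial_\mu\partial_\nu]\hab\big)^\flat = [Z^K, (H^{1,\mu\nu})^\flat\partial_\mu\partial_\nu]\habf + \big([Z^K, (H^{1,\mu\nu})^\natural\partial_\mu\partial_\nu]\habn\big)^\flat .
	\]
	The second term on the right is estimated precisely as above, using the Cauchy--Schwarz and Poincar\'e inequalities together with \eqref{bootKG_in} and \eqref{KS1_hnn}, and is therefore bounded by $C_1^2\ep^2 t^{-2}s^{-1+\eta}$. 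For the first term we expand
	\[
	[Z^K, (H^{1,\mu\nu})^\flat\partial_\mu\partial_\nu]\habf = \hspace{-10pt}\sum_{\substack{|K_1|+|K_2|\le |K|\\ |K_2|<|K|}}\hspace{-10pt} Z^{K_1}(H^{1,\mu\nu})^\flat\cdot \partial^2 Z^{K_2}\habf + \text{lower order},
	\]
	and, since $|K|\le N_1-1$, each factor $\partial^2 Z^{K_2}\habf$ can be controlled by \eqref{KS1_hff} together with the relation $\pb_\bma=\Omega_{0\bma}/t$ (which provides the extra $t^{-1/2}$ needed compared with the proof of Lemma \ref{lem:sup_second_derivatives_wave}), while $Z^{K_1}(H^{1,\mu\nu})^\flat$ is bounded by \eqref{KS6_h} when $Z^{K_1}$ contains a bare derivative and by \eqref{bootW_in} otherwise. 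This yields a bound of the form $C_1C_2\ep^2 t^{-2}s^{-1+\eta}$. Multiplying through by $t^2$ and using $t\ge s$ on $\hin_s$ gives
	\[
	\big\|t^2\big([Z^K, H^{1,\mu\nu}\partial_\mu\partial_\nu]\hab\big)^\flat\big\|_{L^\infty(\hin_s)}\lesssim C_1^2\ep^2 s^{-1+\eta},
	\]
	which completes the proof.
\end{proof}
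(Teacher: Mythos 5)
Your treatment of the first term (the pure wave--Klein--Gordon contribution $\big((H^{1,\mu\nu})^\natural\partial_\mu\partial_\nu Z^K\habn\big)^\flat$) is correct: the improvement noted inside the proof of Lemma \ref{lem:sup_second_derivatives_wave} for $|K|\le N_1-1$ gives $C_2^2\ep^2 t^{-2}s^{-1+\gamma_k}$, and multiplying by $t^2$ delivers the stated bound. The decomposition via \eqref{dec_product} is also correct.

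The gap is in the zero-mode commutator piece $[Z^K,(H^{1,\mu\nu})^\flat\partial_\mu\partial_\nu]\habf$. You need $t^{-2}s^{-1+\eta}$ but the estimate used inside the Lemma's proof only produces $C_1C_2\ep^2\,t^{-3/2}s^{-1+2\zeta_{k+2}}$, and multiplying that by $t^2$ gives $t^{1/2}s^{-1+\eta}$, which is $\lesssim s^{\eta}$ on $\hin_s$ (since $t\lesssim s^2$) — one whole power of $s$ short. You noticed this in your exploration, but the fix you propose — that lowering $|K|$ by one and invoking $\pb_\bma=\Omega_{0\bma}/t$ ``provides the extra $t^{-1/2}$'' on $\partial^2 Z^{K_2}\habf$ — does not hold. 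The bound $|\partial^2 Z^{K_2}\habf|\lesssim C_1\ep\,t^{-1/2}s^{-1+\zeta}$ from \eqref{KS1_hff} has the same $t$-exponent for $|K_2|\le N_1-2$ as for $|K_2|\le N_1$; spending a spare vector field does not convert a transversal $\partial$ into a tangential $\pb$, and a genuine $\partial_\bma\partial_\bmb$ still carries a $\partial_t^2$ component with $O(1)$ coefficient when $r\sim t$.

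The missing ingredients are exactly the two structural estimates this Lemma is designed to feed into the argument. First, one must express the commutator via the null-frame decomposition \eqref{comm_int} with $\pi=H^{1,\flat}$ and $\phi=\habf$: the ``bad'' $\partial_t^2 Z^{K_2}\habf$ then comes paired with the good coefficient $Z^{K_1}(H^{1}_{LL})^\flat$, for which \eqref{good_coefficient2}/\eqref{good_coeff1_flat} provide the extra $(s/t)^2$ or $t^{-3/2}$ decay, while the remainder carries an explicit $(s/t)^2$ or $(1+t+r)^{-1}$ weight. Second — and this is what the word ``byproduct'' is alluding to — one must use the \emph{conclusion} \eqref{second_derivatives2} of Lemma \ref{lem:sup_second_derivatives_wave} itself (applicable here since $|K_2|<|K|\le N_1-1\le N_1$) to upgrade $|\partial_t^2 Z^{K_2}\habf|$ to $C_2\ep\,(t/s)^2 t^{-3/2}s^{-1+\eta}$. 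Combining these two against each other: the leading term gives $C_1\ep\,t^{-1/2}(s/t)^2 s^\zeta\cdot C_2\ep\,(t/s)^2 t^{-3/2}s^{-1+\eta}=C_1C_2\ep^2 t^{-2}s^{-1+\eta+\zeta}$, and the $(s/t)^2$-weighted remainder gives $(s/t)^2\cdot(t/s)^2 t^{-3/2}s^{-1+\eta}\cdot C_2\ep\,t^{-1}s^\gamma = C_2^2\ep^2 t^{-5/2}s^{-1+\eta+\gamma}$; both are $\lesssim t^{-2}s^{-1+\eta'}$. Without \eqref{second_derivatives2} (or, equivalently, some other source of the extra $(s/t)^{-2}t^{-1}$-decay on $\partial_t^2\habf$), the argument does not close, so as written your proof of the second bound has a genuine hole.
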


	\begin{lemma}
		There exists some small constant $0<\eta\le 2\delta_N\ll 1$ depending linearly on $\zeta_k, \gamma_k, \delta_k$ such that for any multi-index $K$ of the type $(N-1, k)$ we have
		\begin{align}\label{second_derivatives_L2}
		&	\left\|(s/t)^2\partial^2_t Z^K\hff_{\alpha\beta} \right\|_{L^2(\hin_s)} 
			\lesssim C_1\ep
			s^{-1+\eta} .\\
		\label{third_derivative_L2}	&	\left\|(s/t)^2\partial^2_t \partial Z^K\hff_{\alpha\beta} \right\|_{L^2(\hin_s)} 
			\lesssim C_1\ep
			s^{-\frac{1}{2}+\eta} .
		\end{align}
	\end{lemma}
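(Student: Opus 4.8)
The strategy is to revisit equation \eqref{eq_second_derivatives} for $Z^K\hff_{\alpha\beta}$, which isolates $\partial_t^2$ of the zero mode, and to take the weighted $L^2(\hin_s)$ norm of both sides rather than the $L^\infty$ norm. The left hand side carries the coefficient $(s/t)^2 + (Z^KH^{UV})^\flat c^{00}_{UV}$ which, by \eqref{good_coeff_comm}, is bounded below by $\tfrac12 (s/t)^2$ when $\ep$ is small; hence it suffices to bound the $L^2(\hin_s)$ norm of the right hand side of \eqref{eq_second_derivatives} by $C_1\ep s^{-1+\eta}$, and then \eqref{third_derivative_L2} follows from the analogous estimate for $\partial Z^K\hff_{\alpha\beta}$ (i.e. one more derivative, one more power of $s^{1/2}$, using that the relevant energy/pointwise bounds already carry that extra power for multi-indices of type $(N,k)$).

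First I would treat $S_1(Z^K\hff_{\alpha\beta})$: each term is a tangential or lower-order derivative divided by a power of $t$, so $\|S_1(Z^K\hff)\|_{L^2(\hin_s)}$ is controlled by $\|\pb^2_x Z^K\hff\|_{L^2} + t^{-1}\|\pb Z^{\le 1}Z^K\hff\|_{L^2}$-type quantities, which by \eqref{bootin2}, \eqref{bootin2.1} and the relation $\pb_{\bma} = \Omega_{0\bma}/t$ are $\lesssim C_1\ep s^{-1+\zeta_{k+O(1)}}$ — note $\hff$ is wave-like so there is no loss and the $\pb\pb$ term gains. For $S_2(Z^K\hff_{\alpha\beta})$ I would use \eqref{null-hyp-coeff} for the $c^{\bma\bmbeta}_{UV}, d^{\bmmu}_{UV}$ coefficients together with the $L^\infty$ bound \eqref{KS6_h} on the undifferentiated $H^{1,\flat}$ coefficients (after splitting $H = H^0 + H^1$ and using \eqref{h0_estimate}) and the energy bound \eqref{bootin2} on $\pb\pb Z^K\hff$, giving again $C_1\ep s^{-1+\eta}$ after distributing the vector fields by Leibniz. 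The explicit term $(F^{0,K}_{\alpha\beta})^\flat$ is $\mathcal O(\ep t^{-3})$ by \eqref{dec_F0}, \eqref{h0_estimate}, so harmless. The null and cubic contributions inside $F^{K,\flat}_{\alpha\beta}$ are estimated by \eqref{null_int} (or better \eqref{null_zeromode}) and \eqref{cub_int}, all of which are $\lesssim (C_1\ep)^2 s^{-1+\eta}$ or better.

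The wave--Klein--Gordon term $\big((H^{\mu\nu})^\natural\partial_\mu\partial_\nu Z^K\habn\big)^\flat$ and the flat-commutator term $([Z^K,H^{1,\mu\nu}\partial_\mu\partial_\nu]\hab)^\flat$ are the only delicate pieces, and I expect the Klein--Gordon contribution to be the main obstacle, exactly as in Lemma \ref{lem:sup_second_derivatives_wave}. Here one uses that $\habn$ is Klein--Gordon-like: each $\partial^2 Z^{K'}\habn$ has the enhanced $L^\infty_x L^2_y$ decay from \eqref{bootinL2y.1}, while $H^{1,\natural}$ is paired against it in $L^\infty_x L^2_y$ via the Poincar\'e/Parseval decomposition $(fg)^\flat = f^\flat g^\flat + (f^\natural g^\natural)^\flat$ and the product estimate $\|(f^\natural g^\natural)^\flat\|_{L^2_x} \lesssim \|f^\natural\|_{L^\infty_x L^2_y}\|g^\natural\|_{L^2_{xy}}$; combined with \eqref{KS1_hnn} and \eqref{bootin3} (noting $N_1 = N-5$ so the index budget closes), one gets $C_1 C_2 \ep^2 t^{-2+}s^{0+}$-type bounds whose $L^2(\hin_s)$ norm over the hyperboloid (which contributes a factor $\sim s^{3/2}$ from $\iint_{\hin_s} t^{-4}\,dx \sim s^{-3/2}$ roughly, using $t \lesssim s^2$ only on part of $\hin_s$ and more carefully the standard hyperboloidal volume bound) is $\lesssim C_1\ep s^{-1+\eta}$. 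The flat-commutator term contributes, schematically, $Z^{K_1}H^{1,\flat}\cdot\partial^2 Z^{K_2}\habf$ with $|K_2|<|K|$, which is handled by \eqref{KS1_hff} on $\partial^2 Z^{K_2}\habf$ (when $Z^{K_1}$ carries a derivative) or by \eqref{bootW_in} on $Z^{K_1}\hff$ combined with \eqref{KS1_hff} otherwise, plus \eqref{comm_int_point_zeromode} for the remaining mixed pieces; all give $\lesssim (C_1\ep)^2 s^{-1+\eta}$ in $L^2(\hin_s)$. Summing, the right hand side of \eqref{eq_second_derivatives} is $\lesssim C_1\ep s^{-1+\eta}$ in $L^2(\hin_s)$, which upon dividing by the lower bound $\tfrac12(s/t)^2$ on the coefficient yields \eqref{second_derivatives_L2}; running the same argument with one extra $\partial Z$ and the type-$(N,k)$ energy bounds in place of the type-$(N-1,k)$ ones (which carry the extra $s^{1/2}$) yields \eqref{third_derivative_L2}.
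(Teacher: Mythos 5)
Your proposal takes essentially the same route as the paper: use equation \eqref{eq_second_derivatives} to isolate $\partial_t^2 Z^K\hff_{\alpha\beta}$, bound the coefficient below by $\tfrac12(s/t)^2$ via \eqref{good_coeff_comm}, estimate each piece of the right hand side in $L^2(\hin_s)$, and obtain \eqref{third_derivative_L2} by running the same argument with one extra derivative, trading \eqref{bootin2} for the general bound \eqref{bootin1} and thereby picking up the $s^{1/2}$. The only deviation worth noting is cosmetic: for the wave--Klein--Gordon piece you sketch a pointwise-bound-then-volume argument (and the stated $\iint_{\hin_s} t^{-4}\,dx \sim s^{-3/2}$ is actually $\sim s^{-1}$), whereas the paper keeps one factor in $L^\infty_xL^2_y$ via \eqref{bootKG_in} and the other in $L^2_{xy}$ via \eqref{bootin3}, which is the cleaner version of what you invoke at the end of that paragraph anyway — so there is no actual gap.
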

	\begin{proof}
		This is based on \eqref{eq_second_derivatives} and \eqref{eq:def_S1_S2}.
		We only detail the proof of estimate \eqref{second_derivatives_L2}, 	since \eqref{third_derivative_L2} is obtained in a similar way by replacing energy bound \eqref{bootin2} with \eqref{bootin1} whenever it occurs.

		We make use of \eqref{eq_second_derivatives}, \eqref{eq:def_S1_S2} and \eqref{good_coeff_comm} to estimate the $S_1, S_2$ terms. From $\pb_\bma = (1/t) \Omega_{0\bma}$, the energy bound \eqref{bootin2} and the pointwise bounds \eqref{h0_estimate}, \eqref{KS6_h} we derive that 
		\[
		\sum_{i=0}^1\left\|s S_i(Z^K\hff_{\alpha\beta})\right\|_{L^2(\hin_s)}\lesssim \|(s/t)\partial \Gamma^{\le 1} Z^K\hff_{\alpha\beta}\|_{L^2(\hin_s)}\lesssim C_1\ep s^{\zeta_{k+1}}.
		\]
		
		We recall that the quadratic null terms satisfy \eqref{null_zeromode}, while the cubic terms verify \eqref{cub_int}. In general, the zero-mode of a quadratic interaction can be estimated as follows: using \eqref{bootin2} and \eqref{KS1_hff} we find
		\[
		\sum_{|K_1|+|K_2|\le |K|}\hspace{-10pt} \|\partial Z^{K_1}\hff \cdot \partial Z^{K_2}\hff\|_{L^2_{x}(\hin_s)}\lesssim \hspace{-5pt} \sum_{\substack{|K'|\le |K|}}C_1\ep s^{-1+\delta_N}\|(s/t)\partial Z^{K'}\hff\|_{L^2(\hin_s)}\lesssim C_1^2\ep^2 s^{-1+2\delta_N}
		\]
		while, since $\floor{N/2}+1\le N_1$, from \eqref{bootKG_in} and \eqref{bootin3} we get		
		\[\aligned
		\sum_{|K_1|+|K_2|\le |K|} \|\big(\partial Z^{K_1}\hnn \cdot \partial Z^{K_2}\hnn\big)^\flat\|_{L^2_x(\hin_s)} \lesssim  \sum_{|K'|\le |K|}C_2\ep  s^{-\frac32+\gamma_{N_1}} \|(s/t)\partial Z^{K'}\hnn\|_{L^2_{xy}(\hin_s)}
		\\	\lesssim C_1C_2\ep^2 s^{-1+2\delta_N}.
		\endaligned
		\]
		From \eqref{bootKG_in} and \eqref{bootin3} we also have
		\[
		\left\| \big( (H^{\mu\nu})^\natural\cdot \partial_\mu\partial_\nu  Z^K\habn\big)^\flat\right\|_{L^2(\hin_s)}\lesssim C_1C_2 \ep^2 s^{-1+\zeta_k}.
		\]
		Finally, from \eqref{dec_F0}, estimate \eqref{h0_estimate}, the Hardy inequality \eqref{Hardy_classical} and energy estimates \eqref{Boot1_ext} and \eqref{bootin2} we get
		\begin{equation}\label{est_F0_int}
			\|Z^K \tilde{\Box}_g h^0\|_{L^2(\hin_s)}\lesssim \ep s^{-\frac32}.
		\end{equation}
	\end{proof}

	\subsection{The $h^1_{TU}$ coefficients}\label{subsec:good_coeff_II}
	
	In this subsection we show that for any $T\in \q T$ and $U\in \q U$, the differentiated metric coefficients $\partial Z^{\le N_1-1}h^1_{TU}$ satisfy better lower order decay bounds than the ones in \eqref{KS4_h} obtained via Klainerman-Sobolev inequalities. More precisely, we want to prove the following:
	\begin{proposition} \label{prp:good_coeffII}
		There exists a constant $C>0$ such that, for any multi-index $K$ of type $(N_1-1,k)$ and any multi-indexes $I$ with $|I|\le N_1-1$, we have that
		\begin{align}
			& \sup_{\S^1} |\partial_\alpha  Z^K h^1_{TU}|\lesssim C_1 \epsilon t^{-1+C\epsilon} \label{good1} \\
			& \sup_{\S^1} |\partial_\alpha \partial^I h^1_{TU}|\lesssim C_1 \epsilon t^{-1}. \label{good2}
		\end{align}
	\end{proposition}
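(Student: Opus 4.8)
The plan is to derive the enhanced decay bounds for the metric coefficients $h^1_{TU}$ in the interior region by integrating along characteristics, following the strategy of Lindblad and Rodnianski \cite{LR10} but adapted to the hyperboloidal foliation and the presence of the compact factor. The starting point is that $Z^Kh^1_{TU}$ satisfies a wave equation of the form \eqref{h1TU_higher} whose source term $F^K_{TU}$, given in \eqref{FKTU}, contains \emph{no weak null terms}: only quadratic null forms, cubic terms, commutator terms, and the lower-order terms arising from commuting $T,U$ with the reduced wave operator. The reduced wave operator $\tilde\Box_g = \Box_{xy} + H^{\mu\nu}\partial_\mu\partial_\nu$ can be rewritten, using the null/semihyperbolic decompositions \eqref{eq:curved_part_semihyp} together with the formulas for $L\Lb$ in terms of $\partial_t, \partial_r, \pb$, as an operator $\Lb L$ plus angular and lower-order terms; equivalently one can factor it so that $\partial_u\partial_v$ controls $rZ^Kh^1_{TU}$. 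Concretely, I would write the wave equation as a transport equation for $\partial_v(r\,Z^Kh^1_{TU})$ (or $\partial_{\ub}$ of it), so that integrating $\partial_u$ along incoming null rays from the boundary hypersurface $\widetilde{\hcal}$ picks up: (i) the boundary data on $\widetilde{\hcal}$, which is controlled by the \emph{exterior} bounds already established, in particular \eqref{est_hLT2} and \eqref{KS4_ext}; and (ii) the spacetime integral of the source terms along the ray.

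The key steps, in order, are: \textbf{(1)} Set up the characteristic ODE. Using $\tilde\Box_g(rZ^Kh^1_{TU}) = r\,\text{(source)} + \text{(lower order from } \Box r\text{)}$ and expressing $\tilde\Box_g$ in the null frame, obtain that $\partial_{\ub}\big(\partial_u(rZ^Kh^1_{TU})\big)$ is bounded by $r$ times the source plus curvature corrections $H_{LL}\partial^2(\cdot)$, $H\cdot\op\partial(\cdot)$ and angular terms $r^{-2}\slashed\nabla^2(\cdot)$. \textbf{(2)} Estimate each contribution to the source along the ray. The cubic terms obey \eqref{cub_int}, the null terms obey \eqref{null_int}, the lower-order terms $C^{i\alpha\beta}_{TU,K'}\slashed\nabla Z^{K'}h^1 = O(r^{-1})\slashed\nabla Z^{K'}h^1$ and $D^{\alpha\beta}_{TU,K'}Z^{K'}h^1 = O(r^{-2})Z^{K'}h^1$ are handled by the pointwise bounds \eqref{KS4_h}, \eqref{KS6_h} and the Hardy-type bound \eqref{bootin_hardy1}; the curved-background terms involving $H^{1,\mu\nu}$ use the good-coefficient bounds from Proposition \ref{prop:HLL_int} for the $H_{LL}$, $H_{4L}$ components and the extra $(s/t)^2$ or $r^{-1}$ gains otherwise, combined with the commutator structure \eqref{comm_int}. \textbf{(3)} Integrate. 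Since the pointwise estimates give the integrand a decay rate strictly better than $t^{-2+C\epsilon}$ along the ray (the hallmark of the \emph{absence} of weak null terms — contrast with the $h^1_{\alpha\beta}$ equation which has the $s^{1+}$-growing $\partial h_{LL}\partial h_{\Lb\Lb}$ and $(\partial h_{TU})^2$ self-interactions), the spacetime integral converges and yields $|Z^Kh^1_{TU}|\lesssim C_1\epsilon\, t^{-1+C\epsilon}$. \textbf{(4)} Upgrade to a bound on derivatives $\partial_\alpha Z^Kh^1_{TU}$. Here one differentiates the transport equation once more (or integrates the equation for $\partial_{\ub}(rZ^Kh^1_{TU})$ directly, which already gives $\partial_{\ub}$; the $L$ and angular derivatives are then recovered from the enhanced tangential behaviour encoded in the energies \eqref{bootin1} and the wave-condition estimate \eqref{wave_cond_interior}). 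The refined estimate \eqref{good2} with the clean $t^{-1}$ (no $\epsilon$-loss) follows for low multi-indices $|I|\le N_1-1$ by re-running the argument: at this lower order the nonlinear source is $O(\epsilon^2 t^{-2})$ (genuinely integrable with room to spare) and the boundary term on $\widetilde{\hcal}$ from \eqref{est_hLT2}, \eqref{KS4_ext} contributes $\lesssim t^{-3/2+}$, so no logarithmic or $t^{C\epsilon}$ enhancement is needed.

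The main obstacle I anticipate is \textbf{controlling the curved-background terms $H^{1,\mu\nu}\partial_\mu\partial_\nu(rZ^Kh^1_{TU})$ along the characteristics} — that is, verifying that the characteristic integration is legitimately a perturbation of the flat one. This requires (a) that $|H^1_{LL}|$ decay fast enough, i.e. $\lesssim \epsilon(1+t+r)^{-1-\delta}$, which is exactly \eqref{est_hLT2} in the exterior and must be propagated inward — but here Proposition \ref{prop:HLL_int} (in particular \eqref{good_coefficient2}) supplies the interior analogue via the wave condition; and (b) that the characteristics of $\tilde\Box_g$ themselves do not deviate too far from the flat null cones, so that "$u,\ub$ fixed" remains a good foliation and the ray from $(t,x,y)$ back to $\widetilde{\hcal}$ stays in the interior region with $r\sim t\sim s^2$ comparable at both endpoints. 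A secondary technical point is bookkeeping the $y$-dependence: since $T,U$ may include $\partial_y$ and the frame $\q U$ contains $\partial_y$, the transport equation must be read as an equation for the $\S^1$-valued profile, and the $\sup_{\S^1}$ in \eqref{good1}--\eqref{good2} is recovered from an $L^\infty_xL^2_y$ bound followed by the one-dimensional Sobolev embedding on $\S^1$ (as done throughout subsection \ref{subsec:L2y_bounds}); this forces one to carry the $\partial_y^{\le 1}$ versions of all the pointwise bounds through the integration. I expect the null-term and cubic-term estimates along the ray to be routine given Proposition \ref{prpnull}, and the commutator terms to be handled by \eqref{comm_int} exactly as in the energy estimates, so the real work is in (a)--(b) above and in matching the exterior boundary data cleanly.
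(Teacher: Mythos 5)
Your core strategy matches the paper's: Proposition \ref{prp:good_coeffII} is indeed proved by integration along characteristics in the spirit of Lindblad--Rodnianski (this is Lemma \ref{lem:characteristics} in the paper), exploiting the absence of weak null terms in the $h^1_{TU}$ equation, with the boundary data on the separating hypersurface $\widetilde{\hcal}$ controlled by the exterior bounds \eqref{KS1_ext}, \eqref{KS3_ext}, and the in-cone region $t>2r$ handled directly by \eqref{KS1_hff}. However, there is one genuine gap in the way you propose to handle the $\S^1$ direction, and it is not a ``secondary technical point'' but the structural observation that makes the whole argument close.

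You propose to ``carry the $y$-dependence through the integration'' by running the transport equation for the $\S^1$-valued profile $r\,Z^Kh^1_{TU}(\cdot,\cdot,y)$ and closing in $L^\infty_x L^2_y$. This does not work. When you factor $\Box_{xy}$ as $\tfrac{1}{r}\partial_u\partial_{\ub}(ru) + \Delta_{\text{ang}} + \partial_y^2$, the $\partial_y^2$ term is not absorbed into the transport operator: it survives as a source. For the zero-average component $\habn_{TU}$, the bootstrap \eqref{bootKG_in} gives $\|\partial_y^2 Z^K\hnn_{TU}\|_{L^2_y}\sim \epsilon\, t^{-3/2}s^{\gamma_k}$, so the contribution $\int_2^t\tau\,|F|\,d\tau$ in \eqref{eq:est_characteristics} picks up $\int_2^t \tau\cdot\tau\cdot\tau^{-3/2}s^{\gamma_k}\,d\tau\sim\int\tau^{1/2+}d\tau$ along a ray near the light cone (where $r\sim\tau$), which diverges. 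In Fourier language this is the statement that the $k\neq 0$ modes obey \emph{Klein--Gordon}, not wave, equations, and a characteristic (wave) transport argument cannot treat the mass term $k^2$ as a perturbation.

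The paper resolves this with a one-line reduction that you should make explicit: $\partial Z^K\hnn_{TU}$ (and a fortiori $\sup_{\S^1}$ of it, via 1D Sobolev) already satisfies a bound \emph{better} than \eqref{good1}--\eqref{good2}, namely $\lesssim C_2\epsilon\, t^{-3/2}s^{\gamma_k}$, directly from the Klein--Gordon decay estimate \eqref{KS1_hnn} --- no characteristic argument needed. Hence one only has to prove the estimate for the zero mode $Z^Kh^{1,\flat}_{TU}$, whose equation (obtained by averaging \eqref{h1TU_higher} over $\S^1$, i.e. \eqref{eq_ZKhTU}) is a genuine $4$-dimensional wave equation with \emph{no} $\partial_y^2$ term on the left. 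Lemma \ref{lem:characteristics} then applies cleanly. The $y$-direction re-enters only through the source $\big((H^{\mu\nu})^\natural\partial_\mu\partial_\nu Z^Kh^{1,\natural}_{TU}\big)^\flat$ in \eqref{eq_ZKhTU}, which is \emph{quadratic} in the small perturbation (estimated via \eqref{comm_int_point_zeromode}) and therefore integrable. Once you add this reduction, the rest of your plan --- the null/cubic estimates, the good-coefficient bounds \eqref{good_coefficient2} and \eqref{est_hLT2} to ensure the perturbed characteristics stay close to the flat ones, the commutator-with-$H^0$ trick and the induction on the number of Klainerman fields for \eqref{good1} versus \eqref{good2} --- is essentially the paper's proof.
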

	We observe that the above bounds are satisfied by any $\partial Z^K\hnn_{TU}$ with $|K|\le N_1$ as a consequence of \eqref{KS1_hnn}. 
	In the interior of the cone $t=2r$ they follow immediately from \eqref{KS1_hff}, where one has $t^2-r^2\ge 3t^2/4$ and consequently
	\[
	|\partial Z^{\le N-3} \hff_{a\beta}(t,x,y)|\lesssim C_1\ep t^{-3/2+\zeta_{N}}.
	\] 
	
	For all other points in the interior region, \eqref{good1}-\eqref{good2} are instead obtained by integration along characteristics, see lemma \ref{lem:characteristics}. The difference between $h^1_{TU}$ and any general coefficient $\hab$ relies on the fact that weak null terms do not appear in the equation \eqref{h1TU_higher} satisfied by the former. We only sketch the proof of the following lemma, see \cite{LR10} for additional details.
	
	\begin{lemma}\label{lem:characteristics}
		Let $s>s_0$ and $\q D_s$ be the set of points $(t,x)$ in the cone $t/2<|x|<2t-3$ with $t\ge2$ such that
		\[
		|x|\le \sqrt{t^2-s^2} \text{ if } (t,x)\in \dint \quad \text{ or }\quad t\le s^2/2
		\text{ otherwise.} 		\]
		We denote by $\partial_B\q D_s$ the lateral boundary of $\q D_s$, i.e.
		\[
		\partial_B\q D_s :=\{(t, x) :  |x|=t/2 \text{ and } t\le 8/3 \text{ or } |x| =2t-3 \text{ and } t\le s^2/2\}.
		\]
		Let $u$ be a solution to the wave equation on the curved 4-dimensional spacetime $\tilde{\Box}_g u = F$,
		where $g = (g_{\bmmu\bmnu})$ is a Lorenztian metric, $g^{-1} = (g^{\bmmu\bmnu})$ is its inverse and $F$ is some smooth source term. Let $m = (m_{\bmmu\bmnu})$ denote the Minkowski metric, $m^{-1}=(m^{\bmmu\bmnu})$ its inverse and $\pi^{\bmmu\bmnu}:= g^{\bmmu\bmnu} - m^{\bmmu\bmnu}$. For any spacetime point $(t, x)\in \q D_s$, let $(\tau, \varphi(\tau; t, x))$ be the integral curve of the vector field
		\[
		\partial_t + \frac{1-\pi_{LL}/4}{1+\pi_{LL}/4}\partial_r
		\]
		passing through $(t, x)$, i.e. $\varphi(t;t,x)=x$. Assume that 
		\[
		|\pi^{L\Lb}(t,x)|<1/4 \quad \text{and}\quad	|\pi_{LL}(t,x)|\le \ep \frac{|t-r|}{t+r}, \qquad \forall (t,x)\in \q D_s.
		\] 
		Then for any $(t,x)\in \q D_s$ one has that
		\begin{equation}\label{eq:est_characteristics}
			\begin{aligned}
				t|(\partial_t-\partial_r)u(t,x)| & \lesssim \sup_{\partial_B \q D_s} |(\partial_t -\partial_r)(ru)| +  \int_2^t |M[u, \pi](\tau) |d\tau + \int_2^t \tau |F|_{(\tau, \varphi(\tau;t,x))} d\tau 
			\end{aligned}
		\end{equation}
		where
		\begin{equation}\label{M}
			M[u,\pi](\tau) =\Big(r |\Delta_{\S^1} u| + |\pi|_{\q L \q T}\big( r |\op \partial  u|+ |\partial u|\big)+ |\pi|r |\op^2 u|   \Big)|_{(\tau, \varphi(\tau;t,x))}.
		\end{equation}		
		%		\begin{equation}\label{eq:Ms}
		%			M[u, \pi]:= r\sum_{\bma<\bmb}(r^{-1}\Omega_{\bma\bmb})^2u + \underline{\pi}^{00}W_1[u] + rR[u,\pi],
		%		\end{equation}
		%		\[
		%		\begin{aligned}
		%			W_1[u] & = \frac{2t^2}{(t+r)^2}(\partial_t-\partial_r)u + \frac{rt}{t+r}(\partial_t -\partial_r)\left(\frac{x^\bma t}{r(t+r)}\pb_\bma u\right) \\
		%			& + \frac{x^\bma t}{t+r}\, \pb_\bma\left(\frac{t}{t+r}(\partial_t-\partial_r)u\right) + r\left(\frac{x^\bma t}{r(t+r)}\pb_\bma\right)^2u + \frac{r(\partial_t-\partial_r)u}{t+r}
		%		\end{aligned}
		%		\]
		%		and
		%		\[
		%		R[u, H]:= \underline{\pi}^{\bma 0}\pb_\bma\pb_0 u + \underline{\pi}^{0\bma}\pb_0\pb_\bma u + \underline{\pi}^{\bma\bmb}\pb_\bma\pb_\bmb u + \pi^{\bmalph\bmbeta}\partial_\bmalph\big(\Psi_\beta^{\beta'}\big)\partial_{\beta'}u.
		%		\]
	\end{lemma}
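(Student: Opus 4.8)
**

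The plan is to prove \eqref{eq:est_characteristics}--\eqref{M} by a standard integration-along-characteristics argument, adapting the computation of Lindblad--Rodnianski \cite{LR10} to the present $4$-dimensional curved setting. First I would write the reduced wave operator $\tilde\Box_g = g^{\bmmu\bmnu}\partial_\bmmu\partial_\bmnu$ relative to the null frame $\{L,\Lb,S^1,S^2\}$ on $\R^{1+3}$, separating the Minkowski part $m^{\bmmu\bmnu}\partial_\bmmu\partial_\bmnu$ from the perturbation $\pi^{\bmmu\bmnu}\partial_\bmmu\partial_\bmnu$. Recalling the classical identity $\Box_m u = -L\Lb u + \frac1r\Lb u + \slashed\Delta u$ (here $\slashed\Delta$ is the Laplacian on the round sphere, which on the full space $\R^{1+3}\times\S^1$ would also pick up the $\Delta_{\S^1}$ term appearing in \eqref{M}), and isolating the genuinely transversal component $\pi^{\bmmu\bmnu}\partial_\bmmu\partial_\bmnu = \tfrac14\pi_{LL}\,L\Lb + (\text{terms with at least one tangential derivative})$ — this is exactly the decomposition $|\pi^{\mu\nu}\partial_\mu\partial_\nu\psi|\lesssim|\pi_{LL}||\partial^2\psi| + |\pi||\partial\op\psi|$ used throughout the paper — I would rewrite the equation $\tilde\Box_g u = F$ as
\[
\Bigl(1 - \tfrac14\pi_{LL}\Bigr) L\Lb u = \tfrac1r \Lb u + \slashed\Delta u + r^{-1}\Delta_{\S^1}\text{-type error} + \mathcal{O}\bigl(|\pi|_{\q L\q T}|\op\partial u| + |\pi||\op^2 u| + |\pi||\partial u|\bigr) - F,
\]
where the coefficient $1-\tfrac14\pi_{LL}$ is invertible since $|\pi_{LL}|\le\ep$ on $\q D_s$ (and $|\pi^{L\Lb}|<1/4$ guarantees the characteristic vector field is genuinely timelike/the algebra is non-degenerate). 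Dividing through and absorbing the factor $\frac{1-\pi_{LL}/4}{1+\pi_{LL}/4}$ shows that the operator $L\Lb$ rewritten along the modified null direction $\partial_t + \frac{1-\pi_{LL}/4}{1+\pi_{LL}/4}\partial_r$ acting on $r\Lb u = \Lb(ru) - u$ (using $L(r\Lb u) = r L\Lb u + \Lb u$, which accounts for the appearance of $ru$) produces precisely the right-hand side of \eqref{eq:est_characteristics}.

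The key steps, in order: (i) derive the null-frame form of $\tilde\Box_g$ and the algebraic fact that the bad quadratic second-derivative term is governed solely by $\pi_{LL}$, with all remaining terms carrying a tangential derivative, collecting these into $M[u,\pi]$; (ii) observe that $\partial_t + \partial_r$ applied to $r\Lb u$ gives, up to the error terms, a total derivative along the integral curves of the \emph{modified} vector field $\partial_t + \frac{1-\pi_{LL}/4}{1+\pi_{LL}/4}\partial_r$ — the modification is exactly what is needed so that $\pi_{LL}L\Lb u$ gets absorbed into the principal part rather than remaining as a source; (iii) integrate this ODE along the characteristic $\tau\mapsto(\tau,\varphi(\tau;t,x))$ from its first intersection with $\partial_B\q D_s$ (or the initial slice $t=2$) up to $(t,x)$, using that the curve stays in $\q D_s$ by the hypotheses on $\pi^{L\Lb}$ and $\pi_{LL}$ (which keep the characteristic speed in $(1/2, 2)$ roughly, hence trapped between $r=t/2$ and $r=2t-3$); (iv) bound the boundary contribution by $\sup_{\partial_B\q D_s}|(\partial_t-\partial_r)(ru)|$ and the interior contributions by $\int_2^t|M[u,\pi]|\,d\tau + \int_2^t\tau|F|\,d\tau$, noting the extra factor of $\tau$ on $F$ comes from the $r\approx\tau$ weight in $r\Lb u$. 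Finally divide by $t$ (again $r\approx t$ on $\q D_s$) to pass from $r|\Lb u|$ to $t|\Lb u|$.

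The main obstacle I anticipate is bookkeeping of the coefficient $1-\tfrac14\pi_{LL}$ and showing that, after dividing by it and re-expressing $L\Lb$ along the modified direction, \emph{no} new uncontrolled terms are generated: one must verify that $L\bigl(\tfrac14\pi_{LL}\bigr)\cdot\Lb u$ and the commutator between $L$ and the modification $\frac{1-\pi_{LL}/4}{1+\pi_{LL}/4}\partial_r$ are themselves of the form already present in $M[u,\pi]$ (i.e. controlled by $|\pi|_{\q L\q T}|\partial u|$ or $|\pi||\op\partial u|$), which relies crucially on the wave-coordinate estimate $|\partial\pi|_{\q L\q T}\lesssim|\op\pi| + \dots$ from Lemma \ref{lem:wave_cond_nullframe}. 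A secondary technical point is confirming that the integral curves of the modified vector field through any $(t,x)\in\q D_s$ genuinely exit through $\partial_B\q D_s$ or $\{t=2\}$ and not through the hyperboloidal or $r=2t-3$ boundary in a way that escapes the region — this is a continuity/ODE-comparison argument using the two smallness hypotheses on $\pi$, and is where the precise definition of $\q D_s$ (in particular the condition $|x|\le\sqrt{t^2-s^2}$ inside $\dint$) is used. I would present steps (i)--(ii) in detail and refer to \cite{LR10} for the routine parts of (iii)--(iv), exactly as the lemma statement promises ("We only sketch the proof").
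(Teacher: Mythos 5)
Your overall route matches the paper's: write the equation in the null frame, isolate the dangerous transversal second derivative, repackage it so that the left‑hand side is an exact derivative of $r\Lb u$ along a $\pi$-modified outgoing direction, and integrate. However, there is a genuine algebraic error at the key step, and it is precisely the point the modified characteristic is designed to handle.

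You identify the dangerous perturbative term as $\tfrac14\pi_{LL}\,L\Lb\,u$ and then write the equation as $(1-\tfrac14\pi_{LL})L\Lb u = \ldots$. But $L\Lb$ already contains a tangential derivative ($L=\partial_t+\partial_r$), so a term proportional to $L\Lb u$ is \emph{not} the bad one — and absorbing a $\pi_{LL}L\Lb u$ term into $L\Lb u$ would only rescale the coefficient of $L\Lb$, not change the characteristic direction, so it cannot produce the vector field $\partial_t+\frac{1-\pi_{LL}/4}{1+\pi_{LL}/4}\partial_r$ at all. The genuinely transversal contribution of $\pi^{\mu\nu}\partial_\mu\partial_\nu$ is $\pi^{\Lb\Lb}\Lb\Lb u = \tfrac14\pi_{LL}\,\Lb\Lb\,u$: it is the $\Lb\Lb$ structure, combined with the flat $\tfrac1r L\Lb(ru)$ term, that gives
\[
\tfrac1r\bigl(L + \tfrac14\pi_{LL}\,\Lb\bigr)\Lb(ru),
\]
and since $(1+\tfrac14\pi_{LL})\bigl(\partial_t+\tfrac{1-\pi_{LL}/4}{1+\pi_{LL}/4}\partial_r\bigr)= L + \tfrac14\pi_{LL}\Lb$, this is exactly the modified transport operator acting on $\Lb(ru)$. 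Your version, built from $L\Lb$, does not produce this tilt.

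You also omit the normalisation the paper uses before this computation: dividing $g^{\alpha\beta}$ by $-2g^{L\Lb}$ (admissible because $|\pi^{L\Lb}|<1/4$) to get $\tilde g^{\alpha\beta}$ with $\theta_{L\Lb}=0$. This step kills the \emph{other} perturbative contribution $\pi^{L\Lb}L\Lb u$, which otherwise rescales $L\Lb(ru)$ multiplicatively. Your $(1-\tfrac14\pi_{LL})$ factor looks like an attempt to handle this rescaling, but with the wrong coefficient and conflated with the characteristic tilt; the two effects — the multiplicative $\pi^{L\Lb}$ factor and the $\pi^{\Lb\Lb}$ tilt — must be handled separately, as the paper does. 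Separately, your ``obstacle'' about $L(\pi_{LL})\Lb u$ is a non-issue: in the final form the coefficient $\theta^{\Lb\Lb}$ sits outside $\Lb\Lb(ru)$ and is never differentiated along the transport.
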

	\begin{proof}
		From the hypothesis on $\pi$, $-2g^{L\Lb} = 1-2\pi^{L\Lb}>1/2$ and $\tilde{g}^{\alpha\beta} := \frac{g^{\alpha\beta}}{-2g^{L\Lb}}$ is well-defined. 
		From $\tilde{\Box}_gu=F$ one has that
		\[
		\Box_x u + \theta^{\alpha\beta}\partial_\alpha\partial_\beta u = \frac{F}{-2g^{L\Lb}}
		\]
		where $\theta^{\alpha\beta}:= \tilde{g}^{\alpha\beta} - m^{\alpha\beta}$ satisfies the following
		\[
		\begin{gathered}
			\theta_{L\Lb}=0, \quad \theta_{LT}=(-2g^{L\Lb})^{-1}\pi_{LT}, \quad 
			\overline{\text{tr}}\theta = (-2g^{L\Lb})^{-1}(\overline{\text{tr}}\theta + \pi_{L\Lb}), \quad |\theta|\lesssim |\pi|\\
			\Big|\theta^{\alpha\beta}\partial_\alpha\partial_\beta u - \frac{1}{r}\theta^{\Lb\Lb}\Lb^2 (ru)\Big|\lesssim |\pi|_{\q L \q T}|\op\partial u| + |\pi||\op^2 u| + |\pi|r^{-1}|\partial u|.
		\end{gathered}
		\]
		We recall that the flat wave operator can be written as follows
		\[
		\Box_x u= \frac{1}{r}(\partial_t +\partial_r)(\partial_t - \partial_r)(r u) + \Delta_{\S^1}u.
		\]
		Therefore
		\[
		\begin{aligned}
			\Big|\Big(\partial_t + \frac{1-\pi_{LL}/4}{1+\pi_{LL}/4}\partial_r\Big)(\partial_t - \partial_r)(ru) \Big| \lesssim r |\Delta_{\S^1} u| + |\pi|_{\q L \q T}\big( r |\op \partial  u|+ |\partial u|\big)+ |\pi|r |\op^2 u|  + |rF|.
		\end{aligned}
		\]
		Due to the smallness assumption on $\pi_{LL}$, any integral curve $(\tau,\varphi(\tau; t, x))$ passing through a point $(t,x)\in \q D_s$ must intersect the boundary $\partial_B \q D_s$. 
		%In particular, if $(t,x)$ is such that $t^2-r^2=s^2$, for all times $\tau<t$ the curve $(\tau,\varphi(\tau; t, x))$ does not intersect the branch of hyperboloid $\hin_s$ in $\q D_s$. In fact, we can see that for $r(\tau;t,x) = \sqrt{\tau^2-s^2}$
		%\[
		%\begin{aligned}
		%\frac{d (|\varphi(\tau;t,x)| - r(\tau;t,x))}{d\tau} & = \frac{-\pi_{LL}/2}{1+\pi_{LL}/4} - \frac{\tau - r(\tau)}{r(\tau)}\le \ep - \frac{\tau - r(\tau)}{r(\tau)}<0
		%\end{aligned}
		%\]
		%whenever $\tau>(1+\ep)r(\tau)$. Moreover, if $(t,x)$ is such that $t/(1+\ep)<r<\sqrt{t^2-2t}$ then for any $\delta>0$ such that for $\tau \in [t-\delta, t]$ the points $(\tau, \varphi(\tau;t,x))$ and $(\tau, x(\tau;t,x))$ belong to the same region, necessarily one has $|\varphi(\tau)|<r(\tau)$. In fact, if we set $t^* = \min\{t-\delta\le \tau<t \ | \ |\varphi(\tau)|<r(\tau)\}$ and assume $t^*>t-\delta$ then $|\varphi(t^*)| = r(t^*)$ and $|\varphi(\tau)|>r(\tau)$ for all $t-\delta\le \tau<t^*$. From the hypothesis on $\pi_{LL}$ we then get that
		%\[
		%\frac{d (|\varphi(\tau;t,x)| - r(\tau;t,x))}{d\tau} \le \ep \frac{\tau-|\varphi(\tau)|}{\tau + |\varphi(\tau)|} - \frac{\tau-r(\tau)}{r(\tau)}<0
		%\]
		%if $\ep\ll 1$ is sufficiently small. Hence $|\varphi(\tau)|<r(\tau)$ for $t-\delta\le \tau<t^*$ which is a contradiction.
		The result of the lemma finally follows from integration along the characteristic curve, from which we get
		\[
		|(\partial_t - \partial_r)(ru)(t,x)| \lesssim  |(\partial_t - \partial_r)(ru)(t_0,x_0)| + \int_{t_0}^t |M[u, \pi](\tau)| + \tau |F|_{(\tau, \varphi(\tau;t,x))} d\tau
		\]
		where $(t_0, x_0)$ is the first point at which the intersection with the lateral boundary occurs.
	\end{proof}

	\begin{proof}[Proof of Proposition \ref{prp:good_coeffII}]
		Throughout this proof we will denote by $\eta$ any small positive constant that linearly depends on $\zeta_k, \gamma_k, \delta_k$.
		After the above observations, we only need to prove that  \eqref{good1} and \eqref{good2} are satisfied in the exterior of the cone $t=2r$. Such estimates are satisfied by the $(s/t)^2\partial_t$ and $\pb_\bma$ derivatives as a consequence of \eqref{KS1_hff}. Moreover, since
		\[
		\begin{aligned}
			\partial_t & = \frac{t-r}{t}\partial_t + \frac{x^\bma}{t+r}\pb_\bma + \frac{r}{t+r}(\partial_t-\partial_r), \qquad
			\partial_\bma & = \pb_\bma - \frac{x_\bmb}{t}\partial_t
		\end{aligned}
		\]
		we can reduce to proving them for the $\partial_t-\partial_r$ derivative, which we do by applying Lemma \ref{lem:characteristics}.
		We remark that for a point $(t,x)\in \q D_s\cap \dint$, the integral curve $(\tau, \varphi(\tau;t,x))$ may have a non-empty intersection with the exterior region, which explains why in the following we invoke some pointwise estimates obtained in section \ref{sec:exterior}.
		
		The integration of \eqref{h1TU_higher} along $\S^1$ shows that $Z^K\habf$ is solution to the following equation
		\begin{equation}\label{eq_ZKhTU}
			\Box_{x} Z^K h^{1, \flat}_{TU} + (H^{\bmmu\bmnu})^\flat \partial_\bmmu\partial_\bmnu   Z^K h^{1, \flat}_{TU} =  F^{K,\flat}_{TU} + F^{0,K}_{TU} - \big((H^{\mu\nu})^\natural\cdot \partial_\mu\partial_\nu  Z^K h^{1, \natural}_{TU}\big)^\flat
		\end{equation}
		where $F^{K, \flat}_{TU} = (F^K_{TU})^\flat$ and $F^K_{TU}$ is given by \eqref{FKTU}.
		We recall that tensor $H^{\mu\nu}$ decomposes as in \eqref{dec_H}. The hypothesis of Lemma \ref{lem:characteristics} are met thanks to the pointwise bounds \eqref{h0_estimate}, \eqref{est_hLT2} and \eqref{good_coefficient2}, therefore for all $s> s_0$ and all $(t,x)\in \q D_s$ 
		\begin{equation}\label{eq:zero_mode_good_coeff}
			\begin{aligned}
				|(\partial_t-\partial_r) Z^K h^{1, \flat}_{TU}(t,x)|   \le t^{-1}\sup_{\partial_B \q D_s} |(\partial_t -\partial_r)(r Z^K h^{1, \flat}_{TU})| + t^{-1}\int_2^t |M[Z^K\hff_{TU}, H^\flat](\tau)|d\tau \\
				+ t^{-1}\int_2^t \tau\,  (\text{RHS of } \eqref{eq_ZKhTU}) d\tau
			\end{aligned}
		\end{equation}
		where $M[\cdot, \cdot]$ is given by the formula in \eqref{M}. 
		
		From the interior pointwise bounds \eqref{KS1_hff}, \eqref{KS6_h} and the exterior pointwise bounds \eqref{KS1_ext}, \eqref{KS3_ext} we see that
		\[
		\sup_{\partial_B  \q D_s} |(\partial_t -\partial_r)(r Z^K h^{1, \flat}_{TU})| \lesssim C_1\ep t^{-\frac12+\eta}.
		\]
		
		As concerns the contribution coming from $M[Z^K\hff_{TU}, H^\flat]$, we see from formula \eqref{M} together with the fact that $\gd_\bmj = \frac{x^\bmi}{r^2}\Omega_{\bmi\bmj}$ and the exterior pointwise bounds \eqref{KS1_ext}-\eqref{KS3_ext} that for points $(t,x)\in \q D_s\cap \dext$
		\[
		|M[Z^K\hff_{TU}, H^\flat](t,x)| \lesssim C_0\ep t^{-1+2\sigma}\sqrt{l(t)} + C_0^2\ep^2 t^{-2+2\sigma}.
		\]	
		In $\q D_s\cap \dint$, we rewrite \eqref{M} using inequality \eqref{der_null_semihyp}
		\[
		\begin{aligned}
			|M[Z^K\hff_{TU}, H^\flat](t,x)| \lesssim |\pb Z^{\le 1}Z^K\hff_{TU}| + |H^\flat|_{\q L \q T}\Big(r\Big(\frac{s}{t}\Big)^2 |\partial^2 Z^K\hff_{TU}| + |\partial Z^{\le 1}Z^K\hff_{TU}|\Big) \\
			+ |H^\flat|  \Big(r\Big(\frac{s}{t}\Big)^4|\partial^2 Z^K\hff_{TU} | + \Big(\frac{s}{t}\Big)^2|\partial Z^{\le 1}Z^K\hff_{TU}| + |\pb Z^{\le 1}Z^K\hff_{TU}|\Big)
		\end{aligned}
		\]
		and hence deduce from pointwise bounds \eqref{h0_estimate}, \eqref{KS1_hff}, \eqref{KS1_hff}, \eqref{KS6_h}, \eqref{good_coefficient2} that 
		\[
		|M[Z^K\hff_{TU}, H^\flat](t,x)| \lesssim C_1\ep t^{-\frac32 + \eta}.
		\]
		Overall, $M[Z^K\hff_{TU}, H^\flat]|_{(\tau, \varphi(\tau;t,x))}$ is an integrable function of $\tau$.	 	
		
		We next show that the right hand side of \eqref{eq_ZKhTU} multiplied by $t$ is integrable in $\tau$ along the characteristic curve.
		Concerning the contributions to $F^{K,\flat}_{TU}$, see formula \eqref{FKTU}: the weak null terms do not appear in $F^{K,\flat}_{TU}$, hence from \eqref{null_ext_point}, \eqref{cub_ext_point} and \eqref{eq:est_null_point}, \eqref{eq:est_cub_point} we have that
		\[
		| Z^K F^\flat_{TU}(t,x)| \lesssim C_1^2\ep^2 t^{-\frac52+\eta} +C_0^2 \ep^2 t^{-2+2\sigma}\sqrt{l(t)}.
		\]
		
		The terms arising from the commutation of the null frame with the wave operator are estimated, on the one hand, using \eqref{h0_estimate},  pointwise interior bounds \eqref{KS1_hff}, \eqref{KS6_h} and the exterior bounds \eqref{KS2_ext}, \eqref{KS3_ext}
		\[
		\sum_{|K'|\le |K|}C^{\bmi\alpha\beta}_{TU,K'}\big| \gd_\bmi  Z^{K'} \habf\big| + \big|D^{\alpha\beta}_{TU,K'}  Z^{K'}\habf \big| \lesssim C_1\ep t^{-\frac52 +\eta} + C_0\ep t^{-2+\sigma}\sqrt{l(t)}.
		\]
		On the other hand, using additionally the a-priori bound \eqref{bootW_in} we see that
		\[
		\begin{aligned}
			\sum_{|K_1|+|K_2|\le |K|}  \big|E^{\bmi\alpha\beta}_{TU\mu\nu, K_1K_2} \big(Z^{K_1} H^{\mu\nu}\cdot \gd_\bmi  Z^{K_2}h^1_{\alpha\beta}\big)^\flat\big| + \big| F^{\alpha\beta}_{TU\mu\nu, K_1K_2}  \big(Z^{K_1}H^{\mu\nu} \cdot Z^{K_2}h^1_{\alpha\beta}\big)^\flat \big| \\
			\lesssim C_1C_2\ep^2 t^{-\frac52 + \eta} + C_0^2\ep^2 t^{-2+2\sigma}\sqrt{l(t)} .
		\end{aligned}
		\]
		From \eqref{comm_int_point_zeromode}, together with \eqref{KS1_ext} and \eqref{KS3_ext}, we have that
		\[
		\big| \big( (H^{1, \mu\nu})^\natural\cdot \partial_\mu\partial_\nu Z^K\hnn_{TU}\big)^\flat(t,x)\big| + \big| \big([Z^K, H^{1,\mu\nu}\partial_\mu\partial_\nu]\hnn_{TU}\big)^\flat(t,x)\big|\lesssim C_1^2\ep^2 t^{-\frac52 + \eta}.
		\]
		Thus all together
		\begin{multline*}
			t |F^{K, \flat}_{TU} - [Z^K, H^{0, \bmmu\bmnu}\partial_\bmmu\partial_\bmnu]\hff_{TU}| + t \big| \big( (H^{1, \mu\nu})^\natural\cdot \partial_\mu\partial_\nu Z^K\hnn_{TU}\big)^\flat(t,x)\big| \\
			\lesssim C_1\ep t^{-\frac32+\eta} + C_0\ep t^{-1+2\sigma}\sqrt{l(t)}.
		\end{multline*}
		
		Using the structure highlighted in Lemma \ref{lem:box_h0}, one can easily show that
		\[
		|F^{0, K}_{TU}(t,x)|\lesssim \ep t^{-3}.
		\]	
		
		Finally, if $K$ is such that $Z^K = \partial^{I}$ is a product of derivatives only, with $|I|\le N_1$, we derive from \eqref{h0_estimate}, \eqref{KS1_ext} and \eqref{KS1_hff} that
		\[
		t\big|[ \partial^I, (H^{0,\bmmu\bmnu}) \partial_\bmmu\partial_\bmnu]\hff_{TU}\big|\lesssim \sum_{\substack{|I_1|+|I_2|=|I|\\ |I_1|\ge 1}} t|\partial^{I_1}H^0|\, |\partial^2 \partial^{I_2}\hff_{TU}|\lesssim C_1\ep^2 t^{-2+\delta_{k+2}}.
		\]
		This is an integrable quantity, as all others above, therefore we obtain \eqref{good2}. 
		
		Bound \eqref{good1} follows then by induction on the number $k$ of vector fields in $Z^K$, since 
	\begin{align*}
	 |[Z^K, (H^{0,\bmmu\bmnu}) \partial_\bmmu\partial_\bmnu]\hff_{TU}|& \le \sum_{\substack{|I_1|+|K'_2|\le |K|\\|I_1|\ge 1,\, |K'_2|<|K|}}\hspace{-5pt} |\partial^{I_1 }H^0| |\partial^2 Z^{K'_2}\hff_{TU}| + \sum_{\substack{|K_1|+|K''_2|\le |K|\\ |K''_2|<|K|}}\hspace{-5pt} |Z^{K_1}H^0| |\partial^2 Z^{K''_2}\hff_{TU}| \\
	 &\le\sum_{\substack{|K'|<|K|}} \frac{1}{(1+t+r)^2}|\partial^2 Z^{K'}\hff_{TU}| + \sum_{\substack{|K''|<|K|}} \frac{1}{(1+t+r)}|\partial^2 Z^{K''}\hff_{TU}| 
		\end{align*}
		where $K'$ is a multi-index of type $(|K|-1, k)$ and $K''$ is of type $(|K|-1, k-1)$.
		
	\end{proof}
	
	\subsection{The weak null terms}\label{subsec:weak_null}
	
	In this section we prove estimates on the weak null terms. 
	\begin{lemma}\label{lem:weak_null_int}
		For any multi-index $K$ of type $(N,k)$ and $i=0,1$, we have that
		\begin{equation}
			 \left\|\partial^i Z^K P_{\alpha\beta}(\partial h, \partial h)\right\|_{L^2_{xy}(\hin_s)}\lesssim C_1^2\epsilon^2 s^{-\frac12 + \zeta_k},\label{weak_est1}  
		\end{equation}
		while for any multi-index $K$ of type $(N, k)$ with $k\le N_1$
		\begin{align}
			&\left\| Z^K P_{\alpha\beta}(\partial h, \partial h)\right\|_{L^2_{xy}(\hin_s)}\lesssim C_1^2\epsilon^2 s^{-1 + \delta_k} \label{weak_est3}.
		\end{align}
		Moreover, for $K$ a multi-index of type $(N, k)$ we have
		\begin{multline}\label{weak_est4}
			\left\| Z^K P^\flat_{\alpha\beta}(\partial h, \partial h)\right\|_{L^2_{xy}(\hin_s)}\\ 
			\lesssim C_1\epsilon s^{-1}\sum_{K'} \enint(s,Z^{K'}\hff)^{1/2}+ C_1\ep s^{-1+C\ep}\sum_{K''} \enint(s,Z^{K''}\hff)^{1/2}+C_1^2\ep^2s^{-\frac32+2\delta_N} \\
			+C_1^2\ep^2\delta_{k>N_1}\left(\sum_{i=1}^4 s^{-1+\gamma_i+\zeta_{k-i}} + s^{-1+\zeta_k}\right)
		\end{multline}	
		where $K'$ is of type $(|K|, k)$, $K''$ of type $(|K|-1, k-1)$, and where $\delta_{k>N_1}=1$ when $k> N_1$, 0 otherwise.
	\end{lemma}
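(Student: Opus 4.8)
The plan is to follow the strategy of the exterior estimate in Proposition~\ref{prop:weak_ext}, now working on the truncated hyperboloids $\hin_s$: the exterior energy bounds are replaced by their interior counterparts from Subsection~\ref{sub:KSbounds}, and the decay of $h^1_{TU}$ coming from Klainerman--Sobolev is replaced by the sharper bound of Proposition~\ref{prp:good_coeffII}. As a first step I would write $h=h^0+h^1$ and plug this into $P_{\alpha\beta}(\partial h,\partial h)$; every term carrying at least one $h^0$ factor falls into the category covered by Lemma~\ref{lem:prod_h0} (using \eqref{h0_estimate} together with the energy and Hardy bounds of Subsection~\ref{sub:KSbounds}) and contributes at most $C_1^2\ep^2 s^{-3/2+\eta}$, hence is absorbed into all three right-hand sides. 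So it suffices to bound $P_{\alpha\beta}(\partial h^1,\partial h^1)$ and, after commuting $Z^K$ through, the model quantities $P_{\mu\nu}(\partial Z^{K_1}h^1,\partial Z^{K_2}h^1)$ with $|K_1|+|K_2|\le|K|$. Applying Lemma~\ref{lem:weak_null_frame} with $\pi=\partial_\mu h^1$, $\theta=\partial_\nu h^1$ splits each such term into a ``$\q T\q U\times\q T\q U$'' piece $|\partial Z^{K_1}h^1|_{\q T\q U}|\partial Z^{K_2}h^1|_{\q T\q U}$, two ``good coefficient'' pieces $|\partial Z^{K_1}h^1|_{\q L\q L}|\partial Z^{K_2}h^1|$, and lower order remainders of the form $r^{-1}|Z^{K_1}h^1||\partial Z^{K_2}h^1|$ (produced when $\partial$ is pushed past the frame vectors), the last ones handled exactly as the weighted/cubic terms in Proposition~\ref{prpnull}.

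For the $\q L\q L$ pieces I would invoke Lemma~\ref{lem:wave_cond_h1}, i.e. \eqref{wave_cond_interior}, which bounds $|\partial Z^{K_1}h^1|_{\q L\q L}$ by $\sum_{|K'|\le|K_1|}\big((s/t)^2|\partial Z^{K'}h^1|+|\pb Z^{K'}h^1|+r^{-1}|Z^{K'}h^1|\big)$, plus a quadratic term and the compactly supported mass term $M\chi_0(t/2\le r\le 3t/4)(1+t+r)^{-2}$. Multiplying by $|\partial Z^{K_2}h^1|$, distributing derivatives so that the factor of length $\le\lfloor N/2\rfloor$ is placed in $L^\infty_{xy}$ via \eqref{KS1_h}, \eqref{KS4_h}, \eqref{KS6_h} while the other is placed in $L^2_{xy}$ via \eqref{bootin1} (or \eqref{bootin5} for \eqref{weak_est3}) and \eqref{bootin_hardy1}, and using the $(s/t)^2$ gain, the better decay of $\pb$-derivatives and of the undifferentiated coefficient, together with $t^{-1}=(s/t)s^{-1}$ and $t\lesssim s^2$ on $\hin_s$, yields the bounds $s^{-1/2+\zeta_k}$ (resp. $s^{-1+\delta_k}$); the mass term is supported where $t\lesssim s$ and integrates harmlessly. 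For the $\q T\q U\times\q T\q U$ piece I would put the lower-order factor in $L^\infty$ using \eqref{good1}, $\sup_{\S^1}|\partial_\alpha Z^{K_1}h^1_{TU}|\lesssim C_1\ep t^{-1+C\ep}$ (legitimate since $\lfloor N/2\rfloor\le N_1-1$ for $N$ large), and the top-order factor in $L^2$ via \eqref{bootin1}; the frame commutator again costs only $r^{-1}|Z^{\le N}h^1|$, controlled by \eqref{bootin_hardy1}. This gives \eqref{weak_est1} and \eqref{weak_est3}.

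For the zero-mode refinement \eqref{weak_est4} I would use the product rule \eqref{dec_product} to write $P^\flat_{\alpha\beta}(\partial h^1,\partial h^1)=P_{\alpha\beta}(\partial\hff,\partial\hff)+\big(P_{\alpha\beta}(\partial\hnn,\partial\hnn)\big)^\flat$. In the pure zero-mode term I again apply Lemma~\ref{lem:weak_null_frame}; the $\q L\q L$ contributions are now controlled through the wave-condition bounds for $(H^1_{LT})^\flat$ of Proposition~\ref{prop:HLL_int}, namely \eqref{good_coeff1_flat}, \eqref{good_coefficient2} and the $L^2$ version \eqref{good_coeff_L2}, while the $\q T\q U\times\q T\q U$ contributions use \eqref{good1}--\eqref{good2} after averaging in $y$. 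The crucial point, exactly as in the exterior case and in the derivation of \eqref{null_zeromode}, is \emph{not} to estimate the highest-order factor by the bootstrap constant but to keep it as $\enint(s,Z^{K'}\hff)^{1/2}$, with $K'$ of type $(|K|,k)$ when the good coefficient multiplying it is produced without spending a Klainerman field and $K''$ of type $(|K|-1,k-1)$ when the wave-gauge/commutator structure (cf. the end of the proof of Proposition~\ref{prp:good_coeffII}) forces one to be spent; this produces precisely $C_1\ep s^{-1}\enint(s,Z^{K'}\hff)^{1/2}+C_1\ep s^{-1+C\ep}\enint(s,Z^{K''}\hff)^{1/2}$, ready to be absorbed by Grönwall in \eqref{en_ineq_habf_high}, while the remaining purely low-order products give $C_1^2\ep^2 s^{-3/2+2\delta_N}$. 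For $\big(P_{\alpha\beta}(\partial\hnn,\partial\hnn)\big)^\flat$ the null structure is irrelevant: Cauchy--Schwarz in $L^1_yL^2_x$, Poincaré, and the bounds \eqref{bootKG_in}, \eqref{KS1_hnn}, \eqref{bootin3}, \eqref{bootin4} with the hierarchy \eqref{condition_parameters} reproduce the $\delta_{k>N_1}\big(\sum_{i=1}^4 s^{-1+\gamma_i+\zeta_{k-i}}+s^{-1+\zeta_k}\big)$ term and, when $k\le N_1$, something better than $s^{-3/2+2\delta_N}$.

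I expect the main obstacle to be the bookkeeping inside \eqref{weak_est4}: one has to isolate in $P_{\alpha\beta}(\partial\hff,\partial\hff)$ a single top-order derivative of $\hff$ and express everything multiplying it as a genuine $O(\ep t^{-1})$ (or $O(\ep t^{-1+C\ep})$) good coefficient, which requires \emph{simultaneously} the wave-gauge bound \eqref{wave_cond_interior} for the $LL$-components and the characteristic-integration bound of Proposition~\ref{prp:good_coeffII} for the $TU$-components, and then to verify that the resulting energy carries the correct multi-index type so that the Grönwall step in \eqref{en_ineq_habf_high} closes; keeping the small exponents $\zeta_\bullet,\gamma_\bullet,\delta_\bullet$ in the order dictated by \eqref{condition_parameters} (using that the low-order factor always carries a small index, and choosing $\ep_0$ small so that $C\ep$ is dominated) is the technical heart of the argument, whereas the $\q L\q L$ and $\q T\q U$ estimates of \eqref{weak_est1}--\eqref{weak_est3} are routine once Lemmas~\ref{lem:weak_null_frame} and~\ref{lem:wave_cond_h1} and Proposition~\ref{prp:good_coeffII} are in hand.
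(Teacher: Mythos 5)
Your proposal is correct and takes essentially the same route as the paper: the decomposition $h=h^0+h^1$ with Lemma~\ref{lem:prod_h0}, the structural input from Lemmas~\ref{lem:weak_null_frame} and~\ref{lem:wave_cond_h1}, the $L^\infty/L^2$ pairing that uses Proposition~\ref{prp:good_coeffII} together with the interior energy bounds for the $\q T\q U\times\q T\q U$ pieces, and — for \eqref{weak_est4} — the split \eqref{dec_product} with the top-order factor kept as an energy ready for Gr\"onwall are exactly the ingredients of the paper's proof. The only cosmetic difference is that the paper first treats $\dint\cap\{t>2r\}$, where $s\approx t$ and the weak-null structure is superfluous (one simply re-uses \eqref{null_int}, \eqref{null_lower_int}, \eqref{null_zeromode}), and then the complementary region via \eqref{weak_null_ineq}, whereas you handle both regions at once through $t\lesssim s^2$; either bookkeeping yields the stated bounds.
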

	\begin{proof}
		We start by decomposing each occurrence of $h$ into the sum $h^0+h^1$ and observe that the quadratic interactions involving at least one factor $h^0$ verify \eqref{est_quad_h0}. We hence focus on estimating the weak null terms only involving factors $h^1$ and distinguish between the region inside the cone $t=2r$ and its complement in $\dint$.
		
		In the interior of the cone $t = 2r$ (where $s\approx t$), the bounds of the statement do not depend on the weak null structure: consequently they are the same as the bounds for the null terms \eqref{null_int}, \eqref{null_lower_int} and \eqref{null_zeromode}. 

		The estimates in the region $\dint \cap \{t< 2r\}$ follow from the particular structure of the weak null terms and the wave condition. We have already seen that these two yield inequality \eqref{Z^Kweak} which, after \eqref{der_null_semihyp}, can be also written in the following form
		\begin{equation}\label{weak_null_ineq}
			\begin{aligned}
				\big|\partial^i Z^{K}P_{\alpha\beta}(\partial h^1, \partial h^1)\big|\lesssim \sum_{\substack{|K_1|+|K_2| \le |K|\\ |I_1|+|I_2|=i}} |\partial\partial^{I_1}Z^{K_1}h^1|_{\q T \q U} |\partial\partial^{I_2}Z^{K_2}h^1|_{\q T \q U} \\
				+ \sum_{\substack{|K_1|+|K_2| \le |K|\\ |I_1|+|I_2|=i}} \Big(\frac{s}{t}\Big)^2 |\partial\partial^{I_1}Z^{K_1}h^1||\partial\partial^{I_2}Z^{K_2}h^1| + |\pb\partial^{I_1}Z^{K_1}h^1||\partial\partial^{I_2}Z^{K_2}h^1| \\
				+ \sum_{\substack{|K_1|+|K_2| \le |K|\\ |I_1|+|I_2|=i}} \hspace{-20pt} r^{-1}|\partial^{I_1}Z^{K_1}h^1||\partial\partial^{I_2}Z^{K_2}h^1| + \hspace{-20pt} \sum_{\substack{|K_1|+|K_2|+|K_3|\le |K|\\ |I_1|+|I_2|+|I_3| = i}}\hspace{-20pt} |\partial^{I_1}Z^{K_1}h^1|  |\partial \partial^{I_2}Z^{K_2}h^1|  |\partial\partial^{I_3}Z^{K_3}h^1| \\
				+ \frac{M\chi_0(t/2\le r\le 3t/4)}{(1+t+r)^2}|\partial \partial^i Z^{\le K}h^1|.
			\end{aligned}
		\end{equation}
		We observe that the quadratic terms on the second line of \eqref{weak_null_ineq} are null and hence satisfy \eqref{null_int}, \eqref{null_lower_int} and \eqref{null_zeromode}; the ones on the third line are cubic (or cubic-like) and satisfy \eqref{cub_int}, \eqref{cub_lower_int}; moreover
		\[
		\left\|\frac{M\chi_0(t/2\le r\le 3t/4)}{(1+t+r)^2}\, \partial \partial^i Z^{\le K}h^1 \right\|_{L^2}\lesssim \ep s^{-2}\left\|(s/t)\partial \partial^i Z^{\le K}h^1\right\|_{L^2}\lesssim C_1\ep^2 s^{-\frac32 + \delta_k}.
		\]
		Finally, from the enhanced bounds \eqref{good1}, \eqref{good2} satisfied by the $h^1_{TU}$ coefficients we derive 
		\[
		\begin{aligned}
			\sum_{\substack{|K_1|+|K_2|\le |K|\\ |I_1|+|I_2|=i}}\left\|\partial \partial^{I_1}Z^{K_1}h^1_{TU}\cdot \partial \partial^{I_2}Z^{K_2}h^1_{TU}\right\|_{L^2}\lesssim \sum_{\substack{|I_1|+|K_2|\le |K|+i\\ |I_1|\le \floor{(|K|+i)/2}}}\|(t/s)\partial \partial^{I_1}h^1_{TU}\|_{L^\infty}\|(s/t)\partial Z^{K_2}h^1_{TU}\|_{L^2}\\
			+ \sum_{\substack{|I_1|+ |J_1|+|K_2|\le |K|+i\\ |I_1| + |J_1|\le \floor{(|K|+i)/2}\\ |J_1|\ge 1 }}\|(t/s)\partial \partial^{I_1}\Gamma^{J_1} h^1_{TU}\|_{L^\infty}\|(s/t)\partial Z^{K_2}h^1_{TU}\|_{L^2}\\
			\lesssim C_1 \ep s^{-1}\sum_{j=0}^i\|(s/t)\partial \partial^j Z^{\le K}h^1_{TU}\|_{L^2} + C_1\ep s^{-1+2C\ep}\sum_{j=0}^i\|(s/t)\partial \partial^j Z^{\le K-1}h^1_{TU}\|_{L^2}.
		\end{aligned}
		\]
		
%	\blue{	The conclusion of the proof of \eqref{weak_est1} (resp. of \eqref{weak_est3}) follows from the a-priori energy bound \eqref{bootin1} (resp. \eqref{bootin5}) and by choosing $\epsilon$ sufficiently small so that $C\ep<\zeta_k$ (resp. $<\delta_k$); that of \eqref{weak_est4} follows from the additional observation that the structure is irrelevant for the pure interaction of zero-average components of the solution, and one can use \eqref{null_lower_int} for $k\leq N_1$ and \eqref{null_zeromode} for $k>N_1$.}

	\end{proof}

	\subsection{The commutator terms} \label{sub:commutators_int}
	
	The goal of this section is to get suitable estimates for the trilinear terms involving commutators in the right hand side of energy inequalities \eqref{en_ineq_hab_high} and \eqref{en_ineq_habf_high}. More precisely, we will prove the following.
	
	\begin{proposition}\label{prp:commutators}
		There exist some small positive parameters $\delta_k$ with $\sigma\ll \delta_k\ll\delta_{k+1}\ll \kappa$ and $\delta_k\ll \kappa-\sigma$ for $k=0, \dots, N$ such that, under the assumptions of Proposition \ref{prop:bootstrap_int} we have the following inequalities for all $s\in [s_0, S_0)$: 
		
		\noindent for multi-indexes $K$ of type $(N+1, k)$ with $k\le N$ 
		\begin{equation}
			\iint_{\hin_{[s_0, s]}}\hspace{-10pt} \big|[Z^K, H^{1, \mu\nu}\partial_\mu \partial_\nu]\hab\big| |\partial_t Z^K\hab| dxdydt \lesssim (C_0^2+C_1^2)C_2^2\ep^3 {\Big(\sum_{i=1}^4 s^{1+\gamma_i+\zeta_{k-i}+\zeta_k} + s^{1+2\zeta_k}\Big)},\label{comm_high} 
		\end{equation}
		for multi-indexes $K$ of type $(N, k)$
		\begin{multline}
			\iint_{\hin_{[s_0, s]}} \big|\big([Z^K, H^{1, \mu\nu}\partial_\mu\partial_\nu]\hab\big)^\flat \big| |\partial_t Z^K\habf| dx dt \\ 
			+ \iint_{\hin_{[s_0, s]}} \big|\big((H^{1, \mu\nu})^\natural \cdot\partial_\mu\partial_\nu Z^K\habn\big)^\flat \big| |\partial_t Z^K\habf| dx dt \lesssim  (C_0^2+C_1^2)C_2^2\ep^3 {\Big(\sum_{i=1}^4 s^{\gamma_i+\zeta_{k-i}+\zeta_k} + s^{2\zeta_k}\Big)} \label{comm_zeromode}
		\end{multline}
		for multi-indexes $K$ of type $(N, k)$ with $k\le N_1$
		\begin{equation}
			\iint_{\hin_{[s_0, s]}} \big|[Z^K, H^{1, \mu\nu}\partial_\mu\partial_\nu]\hab\big| |\partial_t Z^K\hab| dxdydt \lesssim  (C_0^2+C_1^2)C_2^2\ep^3 s^{2\delta_k} \label{comm_low}
		\end{equation}
		and for multi-indexes $K$ of type $(N-1, k)$ with $k\le N_1$
		\begin{multline}
			\iint_{\hin_{[s_0, s]}} \big|\big([Z^K, H^{1, \mu\nu}\partial_\mu \partial_\nu]\hab\big)^\flat \big| |\partial_t Z^K\habf| dx dt \\
			+ \iint_{\hin_{[s_0, s]}} \big|\big((H^{1, \mu\nu})^\natural \cdot\partial_\mu\partial_\nu Z^K\habn\big)^\flat \big| |\partial_t Z^K\habf| dx dt\lesssim (C_0^2+C_1^2)C_2^2\ep^3. \label{comm_zeromode_low}
		\end{multline}
	\end{proposition}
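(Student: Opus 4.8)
\textbf{Proof plan for Proposition \ref{prp:commutators}.}
The plan is to treat all four estimates by the same mechanism: decompose $H^{1,\mu\nu}$ into its zero-mode part $(H^{1,\mu\nu})^\flat$ and its zero-average part $(H^{1,\mu\nu})^\natural$, expand the commutators in the null frame using \eqref{comm_int} (and its $\flat$-averaged analogue), and then bound each resulting trilinear integrand on $\hin_{[s_0,s]}$ by combining a Cauchy--Schwarz in the spacetime integral with the hyperboloidal Hardy inequality \`a la Lindblad--Rodnianski (the one stated in the introduction, with weights in $t-r$) to absorb the $r^{-1}$ factor produced by the undifferentiated metric coefficient. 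The good structure we rely on throughout is the one displayed after \eqref{comm_int}: each quadratic term in the expansion of $[Z^K,(H^{1,\mu\nu})^\flat\partial_\mu\partial_\nu]\phi$ either carries a \emph{good} coefficient $H^{1,\flat}_{LL}$ or $H^{1,\flat}_{4L}$ (controlled by the wave condition, i.e.\ Lemma \ref{lem:wave_cond_h1} together with the pointwise bounds of Proposition \ref{prop:HLL_int}), or carries one of the extra decaying factors $(|t^2-r^2|/t^2)$ or $(1+t+r)^{-1}$ relative to the bad term $|Z^{K_1}H^1|\,|\partial^2 Z^{K_2}\phi|$.

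First I would dispose of the $(H^{1,\mu\nu})^\natural$ contributions, which appear in \eqref{comm_zeromode} and \eqref{comm_zeromode_low} and inside the general commutator of \eqref{comm_high}, \eqref{comm_low}. For these we have the Poincar\'e inequality at our disposal: $\|(H^{1,\mu\nu})^\natural\|_{L^\infty_xL^2_y}$ gains an $s$-independent factor and the pointwise $L^\infty_xL^2_y$ Klein--Gordon bounds \eqref{bootKG_in}, \eqref{KS1_hnn} on $\partial^2 Z^K\habn$ (which decay like $t^{-3/2}$) make these terms strongly subcritical; splitting $|K_1|\le\lfloor|K|/2\rfloor$ vs.\ $|K_2|\le\lfloor|K|/2\rfloor$ and pairing with the energy $\enint(s,Z^K\habf)^{1/2}\lesssim C_1\ep s^{1/2+\zeta_k}$ (or $s^{\zeta_k}$) via Cauchy--Schwarz gives the claimed bounds with room to spare. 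The same splitting handles the $h^0$-interactions after \eqref{h0_estimate}, and the $H^{0,\mu\nu}$ part of the commutator (the ``Schwarzschild'' part) which is explicit, supported for $t/2\le r\le 3t/4$, and behaves like a short-range term.

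The core is therefore the $(H^{1,\mu\nu})^\flat$ commutator. Here I would expand via \eqref{comm_int} with $\pi=(H^1)^\flat$, $\phi=\hab$, and treat the three types of terms separately. For the ``good coefficient'' terms $Z^{K_1}H^{1,\flat}_{LL}\cdot\partial_t^2 Z^{K_2}\hab$ and $Z^{K_1}H^{1,\flat}_{4L}\cdot\partial_t\partial_y Z^{K_2}\hab$, use Proposition \ref{prop:HLL_int} (bounds \eqref{good_coeff1_flat}, \eqref{good_coefficient2}, \eqref{good_coeff_L2}) to put $Z^{K_1}H^{1,\flat}_{LL}$ in $L^\infty$ with a gain of $(s/t)^2$ when $|K_1|$ is small, or in $L^2$ with the weighted bound \eqref{good_coeff_L2} when $|K_1|=N$; in the latter case the second factor $\partial^2 Z^{K_2}\hab$ with $|K_2|$ small is controlled pointwise by Lemma \ref{lem:sup_second_derivatives_wave}-type second-derivative bounds and the interior $L^\infty$ estimates. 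For the terms with the extra factor $\frac{|t^2-r^2|}{t^2}|Z^{K_1}H^{1,\flat}|\,|\partial^2 Z^{K_2}\hab|$, write $\frac{|t^2-r^2|}{t^2}\partial^2=\frac{s^2}{t^2}\partial^2$ and recognise $\frac{s}{t}\partial^2$ as controlled by the energy after one more vector field; the remaining $\frac{s}{t}|Z^{K_1}H^{1,\flat}|$ is handled by the Hardy inequality on hyperboloids (the $\|r^{-1}U\|_{L^2(\hin_s)}\lesssim\|\pb U\|_{L^2(\hin_s)}+\|\partial U\|_{L^2(\Sext_{t_s})}$ one, whose exterior flux term is bounded by \eqref{est_boundary_term}/\eqref{boundary_term_int}) together with the wave-gauge $L^2$ bound \eqref{good_coeff_L2}. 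For the terms $\frac{|Z^{K_1}H^{1,\flat}|\,|\partial Z^{K_2}\hab|}{1+t+r}$ and $|Z^{K_1}H^{1,\flat}|\,|\partial\pb_x Z^{K_2}\hab|$ the extra $(1+t+r)^{-1}$, resp.\ the tangential derivative $\pb_x$ (which the energy controls with a gain), make these subcritical; again split by which index is $\le\lfloor|K|/2\rfloor$, use the pointwise $L^\infty_{xy}$ bounds \eqref{KS1_h}, \eqref{KS6_h}, \eqref{KS4_h}, and pair with the appropriate energy via Cauchy--Schwarz and Gronwall-ready $\int_{s_0}^s\tau^{-1}\enint(\tau,Z^K\hab)d\tau$ contributions.

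\textbf{Main obstacle.} The delicate point is the top-order case $|K|=N+1$ in \eqref{comm_high} (and the $\flat$-version \eqref{comm_zeromode} at $|K|=N$), where one cannot afford to put either factor fully in $L^\infty$: the metric coefficient at order $N$ is only in $L^2$ via \eqref{good_coeff_L2}, and the second-order derivative of $\hab$ at order $N$ is only in $L^2$ via \eqref{boot1_in} with \emph{no} extra $(s/t)$ gain for $\partial_t^2$. The resolution is exactly the hierarchy built into the bootstrap: this dangerous pairing only survives on the \emph{good} coefficients $H^{1,\flat}_{LL}, H^{1,\flat}_{4L}$, for which the wave condition \eqref{wave_cond_interior} converts $\partial^2 Z^{K_2}\hab$ hitting a good component into $\op\partial$-type and $(s/t)^2\partial^2$-type expressions (plus lower order), restoring the missing $(s/t)$ factor — this is the reason the right-hand sides of \eqref{comm_high}, \eqref{comm_zeromode} are allowed to grow like $s^{1+2\zeta_k}$, resp.\ $s^{2\zeta_k}$, exactly matching the energy assumptions \eqref{boot1_in}, \eqref{boot2_in}, and why the parameters must satisfy $\gamma_i+\delta_j\ll\delta_k$ for $j<k$. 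Getting the bookkeeping of $\zeta_k,\gamma_k,\delta_k$ to close consistently across \eqref{comm_high}--\eqref{comm_zeromode_low} is where most of the care goes; the individual integral estimates are, by contrast, routine once the right frame decomposition and Hardy inequality are in place.
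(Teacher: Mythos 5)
Your high-level decomposition — split $H^1$ and $\hab$ into zero-mode and zero-average parts, expand the $(H^{1,\flat})$ commutator in the null frame via \eqref{comm_int}, use Lindblad--Rodnianski-type Hardy weights in $t-r$ plus the wave condition, and dispose of the $(H^{1,\natural})$ pieces via Poincar\'e and the KG $L^\infty_xL^2_y$ bounds — is essentially the paper's organisation (Lemma \ref{lem:low-order-wave-commutator} and the subsequent unnamed lemmas). But there are two places where the proposal misidentifies the actual mechanism, and they are not cosmetic.

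\textbf{First gap: the second-order derivative of the zero mode is controlled by the equation, not the wave condition.} You write that the wave condition ``converts $\partial^2 Z^{K_2}\hab$ hitting a good component into $\op\partial$-type and $(s/t)^2\partial^2$-type expressions... restoring the missing $(s/t)$ factor.'' This conflates two distinct inputs. The wave condition (Lemma \ref{lem:wave_cond_h1}, then Proposition \ref{prop:HLL_int}) controls the \emph{coefficient} $Z^{K_1}H^{1,\flat}_{LL}$, giving the $(t/s)^{-2}$-type pointwise gain in \eqref{good_coefficient2} and the $L^2$ bound \eqref{good_coeff_L2}; it says nothing about $\partial_t^2 Z^{K_2}\hab$. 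The $(s/t)^2$ gain on $\partial_t^2 Z^{K_2}\hff$ is obtained separately, by plugging the equation into \eqref{eq_second_derivatives} and using the $S_1,S_2$ tangential terms (Lemma \ref{lem:sup_second_derivatives_wave} and \eqref{second_derivatives_L2}, \eqref{third_derivative_L2}). Without invoking these equation-based second-derivative estimates the flat$\times$flat piece of \eqref{comm_high}, \eqref{comm_zeromode} does not close. Moreover, at top order $|K_1|=|K|$, when $Z^{K_1}=\Gamma^{K_1}$ (no partial derivatives), neither $L^\infty$ bound on the coefficient is available; the paper then goes through the $\iint_{\hcal}\to\iint_{\q C_t}$ change of coordinates, the Hardy inequality with $t-r$ weights of Corollary \ref{cor:Hardy}, and a splitting into interior and \emph{exterior} contributions where the wave condition is used in the form \eqref{ineq_H1lt_higher} together with the exterior energy bounds \eqref{boot1_ex}, \eqref{boot2_ex}. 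Your outline acknowledges the Hardy inequality but does not indicate that an exterior-slice contribution must be isolated and bounded via the exterior bootstrap — this is where the constant $C_0^2$ in the statement actually comes from.

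\textbf{Second gap: the flat$\times$natural commutator is the dominant term and is \emph{not} saved by the good-coefficient structure.} The $s^{1+\gamma_i+\zeta_{k-i}}$ losses in \eqref{comm_high}, \eqref{comm_zeromode} come from $[Z^K,(H^{1,\mu\nu})^\flat\partial_\mu\partial_\nu]\habn$ (Lemma with \eqref{comm_WKGh}--\eqref{comm_WKGl}), not from the flat$\times$flat piece. For $\phi=\habn$ the expansion \eqref{comm_int} still produces the term $Z^{K_1}\pi_{44}\cdot\partial_y^2 Z^{K_2}\habn$, but now $\partial_y^2\habn$ is the mass term, $O(1)$ in size relative to $\habn$ itself, and $\pi_{44}=H^{1,\flat}_{44}$ is \emph{not} a good coefficient — the wave condition gives nothing for it. So your claim that ``this dangerous pairing only survives on the good coefficients $H^{1,\flat}_{LL},H^{1,\flat}_{4L}$'' fails exactly on the term that dominates. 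The paper's treatment of this piece abandons the null-frame decomposition entirely: it pairs the flat coefficient in $L^\infty$ via \eqref{bootW_in} (with the unavoidable $s^{\gamma_k}$ loss) against the KG second derivative in $L^2$, or, when the coefficient index is above $N_1$, pairs it in $L^2$ via \eqref{bootin_hardy3} against the KG $L^\infty_xL^2_y$ bound \eqref{bootKG_in}. These crude estimates, not the wave gauge, produce the right-hand side of \eqref{comm_high} and \eqref{comm_zeromode}.

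To fix the argument: keep your decomposition, but (i) treat the flat$\times$flat, flat$\times$natural, and natural$\times$(flat or natural) contributions as genuinely different objects with different mechanisms; (ii) for the flat$\times$flat case, derive the $(s/t)^2$ gain on $\partial_t^2\hff$ from the equation, and carry out the top-order case of the good-coefficient term via the $t-r$-weighted Hardy inequality with the exterior flux contribution; (iii) for the flat$\times$natural case, do not expect the frame decomposition to help, and instead match the $L^\infty$/$L^2$ pairings to the bootstrap hierarchy $\gamma_i,\zeta_j,\delta_k$ directly, which is where the losses $s^{\gamma_i+\zeta_{k-i}}$ appear.
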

	We postpone the proof of the above proposition and first observe that, because of \eqref{dec_product},
	we will need to estimate quadratic terms (in fact, commutators) that are either pure products of zero modes, or pure products of non-zero modes, or mixed products. We proceed to the analysis of those separately, in the lemmas that follow.

	\begin{lemma} \label{lem:low-order-wave-commutator}
		There exists $0< \eta\le 2\delta_N\ll 1$, linearly depending on $\zeta_k, \gamma_k, \delta_k$, such that for any multi-index $K$ of type $(N, k)$ we have
		\begin{equation}\label{comm_est_int1}
			\begin{aligned}
				\big\|[\partial^i Z^K, (H^{1, \bmmu\bmnu})^\flat\partial_\bmmu \partial_\bmnu] \hff_{\alpha\beta} -\hspace{-10pt} \sum_{\substack{|K_1|+|K_2|\le |K|\\ |K_2|\le \floor{|K|/2}}}\hspace{-10pt} Z^{K_1}H^{1, \flat}_{LL}\cdot\partial^2_t \partial^i Z^{K_2}\hff_{\alpha\beta} \big\|_{L^2_{x}(\mathcal{H}_s)} 	\\
				\lesssim C_1C_2\ep^2	
				\begin{cases}
					s^{-1+\eta}  &\text{if } i=1 \text{ and } |K|=N \\
					s^{-\frac32+\eta}   &\text{if } i=0
			\end{cases}				\end{aligned}
		\end{equation}
	\end{lemma}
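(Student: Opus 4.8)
The plan is to expand the commutator $[\partial^i Z^K,(H^{1,\bmmu\bmnu})^\flat\partial_\bmmu\partial_\bmnu]\hff_{\alpha\beta}$ using the semihyperbolic decomposition \eqref{eq:curved_part_semihyp} applied to $\pi=(H^{1,\bmmu\bmnu})^\flat$, exactly as in the second estimate of the commutator Lemma whose statement appears in \eqref{comm_int}. That decomposition shows that, up to the explicitly subtracted principal term $\sum Z^{K_1}H^{1,\flat}_{LL}\cdot\partial_t^2\partial^i Z^{K_2}\hff_{\alpha\beta}$, every remaining summand either carries a factor $|t^2-r^2|/t^2=(s/t)^2\cdot(\text{bounded})$ in front of $Z^{K_1}H^{1,\flat}\cdot\partial^2 \partial^iZ^{K_2}\hff$, or has one hyperbolic tangential derivative $\pb_x$ landing on the second factor (terms of the form $Z^{K_1}H^{1,\flat}\cdot\partial\pb_x\partial^iZ^{K_2}\hff$), or is lower order with a gain $(1+t+r)^{-1}$ (terms $Z^{K_1}H^{1,\flat}\cdot\partial Z^{K_2}\phi/(1+t+r)$). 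First I would split the sum according to which of $K_1,K_2$ has length at most $\lfloor|K|/2\rfloor$, so that the low-order factor can be placed in $L^\infty$ and the high-order factor in $L^2$.

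Next I would bound each piece. For the $(s/t)^2 Z^{K_1}H^{1,\flat}\cdot\partial^2\partial^i Z^{K_2}\hff$ terms: when $|K_1|$ is low use the pointwise bound \eqref{KS6_h} for $Z^{K_1}H^{1,\flat}$ (size $t^{-1/2}s^{\delta_{k+2}}$) together with the enhanced $L^2$ bound \eqref{second_derivatives_L2} (or \eqref{third_derivative_L2} if $i=1$) for $(s/t)^2\partial_t^2\partial Z^{K_2}\hff$ — note that $\partial^2$ involving a spatial derivative can be rewritten via $\partial_\bmi=\pb_\bmi-(x_\bmi/t)\partial_t$ so that the worst piece is again $(s/t)^2\partial_t^2$; when $|K_2|$ is low use \eqref{KS1_hff}/\eqref{second_derivatives2} for the second derivatives in $L^\infty$ and the Hardy-type $L^2$ bound on $H^{1,\flat}$ coming from \eqref{good_coeff_L2}. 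For the $\pb_x$-gain terms $Z^{K_1}H^{1,\flat}\cdot\partial\pb_x\partial^iZ^{K_2}\hff$, the tangential derivative gives an extra $t^{-1}$ via $\pb_x Z=t^{-1}\Gamma(\cdots)$-type relations, and one closes using \eqref{KS4_h} and \eqref{bootin2}. The $(1+t+r)^{-1}$-gain terms are the easiest and handled by \eqref{KS6_h}/\eqref{KS1_hff} and \eqref{bootin2}. Collecting, every contribution is $\lesssim C_1C_2\ep^2 s^{-3/2+\eta}$ when $i=0$, and $\lesssim C_1C_2\ep^2 s^{-1+\eta}$ when $i=1$ and $|K|=N$ (the loss of half a power of $s$ when $i=1$ comes precisely from replacing \eqref{second_derivatives_L2} by \eqref{third_derivative_L2}, and from using the $s^{1/2+\zeta_k}$ rather than $s^{\zeta_k}$ bound on the $L^2$ factor in \eqref{bootin1}).

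The one structural subtlety, and the main obstacle, is that the subtracted principal term in the statement is a \emph{full} product $Z^{K_1}H^{1,\flat}_{LL}\cdot\partial_t^2\partial^iZ^{K_2}\hff_{\alpha\beta}$ with the restriction $|K_2|\le\lfloor|K|/2\rfloor$ only — so I must be careful that the complementary range $|K_1|\le\lfloor|K|/2\rfloor$ of the $H^{1,\flat}_{LL}$-term is actually controllable and need not be isolated. This works because in that range $Z^{K_1}H^{1,\flat}_{LL}$ is low order and obeys the \emph{good-coefficient} bound $|t^{3/2}\partial Z^{K_1}(H^1_{LT})^\flat|\lesssim C_1\ep s^{\zeta_{k+2}}$ from \eqref{good_coeff1_flat}, hence $|Z^{K_1}H^{1,\flat}_{LL}|\lesssim C_1\ep t^{-1/2}(s/t)^2 s^{\zeta_{k+2}}$ by \eqref{good_coefficient2}; this is a genuine $(s/t)^2$ gain, so that piece is absorbed into the error just like the explicit $(s/t)^2$ terms above using \eqref{second_derivatives_L2}. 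Thus only the high-$|K_1|$, low-$|K_2|$ part of the $H^{1,\flat}_{LL}\cdot\partial_t^2\hff$ term genuinely resists an error estimate and must be left on the left-hand side — which is exactly what the statement does. I would also double-check the borderline case $i=1$, $|K|=N$, where $\partial^iZ^K=\partial Z^{K}$ has total order $N+1$: here one factor must have order $\ge \lceil (N+1)/2\rceil$, and since $N\ge 14$ the low-order factor still has order $\le N_1$, so all the pointwise bounds \eqref{KS4_h}, \eqref{KS6_h}, \eqref{good_coeff1_flat} remain available. Finally, the claim that $\eta$ depends linearly on $\zeta_k,\gamma_k,\delta_k$ and $0<\eta\le 2\delta_N$ follows by bookkeeping the exponents above, using \eqref{condition_parameters} to absorb sums like $\zeta_{k+2}+\delta_{k+2}$ into a single small $\eta\le 2\delta_N$.
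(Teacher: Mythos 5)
Your proposal matches the paper's argument: both expand the commutator via \eqref{comm_int} with $\pi=(H^{1,\bmmu\bmnu})^\flat$ and $\phi=\hff_{\alpha\beta}$, recognize that the subtracted high-$|K_1|$/low-$|K_2|$ part of $Z^{K_1}H^{1,\flat}_{LL}\cdot\partial_t^2\partial^iZ^{K_2}\hff$ is precisely what cannot be estimated, and close the complementary low-$|K_1|$/high-$|K_2|$ range by pairing the good-coefficient pointwise decay \eqref{good_coefficient2} (which supplies the $(s/t)^2$ gain) against the improved $L^2$ bounds \eqref{second_derivatives_L2}/\eqref{third_derivative_L2}, with the remaining $(s/t)^2$-, $\pb_x$-, and $(1+t+r)^{-1}$-weighted terms being error. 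One small point you leave implicit is the paper's ``second sum'': when $i=1$ the extra derivative can fall on the coefficient, producing $\partial Z^{K_1}H^{1,\flat}_{LL}\cdot\partial_t^2 Z^{K_2}\hff$ for all $|K_1|+|K_2|\le N$, which the paper bounds separately using \eqref{bootin1} with \eqref{good_coeff1_flat} (for $|K_1|\le N-2$) and \eqref{KS1_hff} with \eqref{good_coeff_L2} (otherwise) --- your splitting framework would catch this, but it deserves explicit mention since it is the source of the weaker $s^{-1+\eta}$ bound in the $i=1$ case.
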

	\begin{proof}
		The proof is based on inequality \eqref{comm_int} applied to $\pi = H^{1, \flat}$ and $\phi =\hff_{\alpha\beta}$.
		We will focus on discussing only the following terms, the remaining ones being simpler:
		\[
		\sum_{\substack{|K_1|+|K_2|\le |K| \\|K_1|\le \floor{|K|/2}, |K_2|<|K|}}\big\|  Z^{K_1}H^{1, \flat}_{LL}\cdot\partial^2_t \partial^i Z^{K_2}\habf\big\|_{L^2(\hin_s)} 
		\]		
		and	
		\[
		\sum_{\substack{|K_1|+|K_2|\le |K|}}\big\| \partial Z^{K_1}H^{1, \flat}_{LL}\cdot\partial^2_t Z^{K_2}\habf\big\|_{L^2(\hin_s)} .
		\]		
We remark that the latter sum only appears in the case $i=1$.
		
		Concerning the first sum with $i=0$,  pointwise bound \eqref{good_coefficient2} and $L^2$ bound \eqref{second_derivatives_L2} yield
		\[
		\begin{aligned}
			\sum_{\substack{|K_1|+|K_2|\le |K|\\ |K_1|\le \floor{|K|/2}, \, |K_2|<|K|}}\hspace{-10pt} \big\|Z^{K_1}H^{1, \flat}_{LL}\cdot\partial^2_t Z^{K_2}\hff_{\alpha\beta}\big\|_{L^2(\hin_s)}
			\lesssim C_1\ep s^{-\frac12 +\eta}\sum_{|K'|< |K|} \|(s/t)^2\partial^2_t Z^{K'}\habf\big\|_{L^2}\\ \lesssim C_1^2\ep^2 s^{-\frac32+\eta}.
		\end{aligned}
		\]
		
		When $i=1$, the $L^2$ bound \eqref{third_derivative_L2} gives
		\[
		\sum_{\substack{|K_1|+|K_2|\le |K|\\ |K_1|\le \floor{|K|/2}, \, |K_2|<|K|}}\hspace{-10pt} \big\| Z^{K_1}H^{1, \flat}_{LL}\cdot\partial^2_t\partial  Z^{K_2}\hff_{\alpha\beta}\big\|_{L^2(\hin_s)} \lesssim C_1C_2\ep^2 s^{-1 + \eta}.			
		\]
		
		The second sum is estimated using \eqref{bootin1} and \eqref{good_coeff1_flat} whenever $|K_1|\le N-2$, and using \eqref{KS1_hff} and \eqref{good_coeff_L2} otherwise
		\[
		\sum_{\substack{|K_1|+|K_2|\le |K|}}\hspace{-10pt} \big\|\partial Z^{K_1}H^{1, \flat}_{LL}\cdot\partial^2_t Z^{K_2}\hff_{\alpha\beta}\big\|_{L^2(\hin_s)}\lesssim C_1^2\ep s^{-1+\eta}	.
		\]	
	
	\end{proof}
	
	\begin{lemma}
		For any fixed fixed multi-index $K$ and any smooth function $\phi$, we have
		\begin{equation}\label{comm_est_int2}
			\begin{aligned}
				\sum_{\substack{|K_1|+|K_2|\le |K|\\ |K_2|\le \floor{|K|/2}}}\iint_{\hin_{[s_0, s]}} |Z^{K_1} H^{1, \flat}_{LL}| |\partial^2_t Z^{K_2}\hff_{\alpha\beta}| |\partial_t \phi|dxdt \lesssim (C_0^2+C_1^2)C_2^2\ep^3 s^{\kappa_\phi}
			\end{aligned}
		\end{equation}
		with
		\begin{equation} \label{kappa_phi}
			\kappa_\phi = 
			\begin{cases}
				1, \quad &\text{if } \phi = Z^K\hab \text{ and K is of type } (N+1, k),\ k\le N \\[2pt]
				0, \quad &\text{if } \phi = Z^K\habf \text{ and K is of type } (N, k)\\[2pt]
				0, \quad &\text{if } \phi = Z^K\hab \text{ and K is of type } (N, k), \ k\le N_1
			\end{cases}
		\end{equation}
	\end{lemma}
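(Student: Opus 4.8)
The plan is to estimate the trilinear integral
\[
\iint_{\hin_{[s_0, s]}} |Z^{K_1} H^{1, \flat}_{LL}| |\partial^2_t Z^{K_2}\hff_{\alpha\beta}| |\partial_t \phi|\, dxdt
\]
by exploiting the three good structural features available: the ``good'' coefficient $H^{1,\flat}_{LL}$ enjoys the enhanced pointwise bound \eqref{good_coefficient2}; the second-order derivative $\partial^2_t Z^{K_2}\hff$ of the \emph{zero mode} decays faster than a generic second derivative thanks to Lemma \ref{lem:sup_second_derivatives_wave}; and the factor $|\partial_t\phi|$ is controlled in $L^2$ on hyperboloids by the interior energies. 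Since $|K_2|\le\floor{|K|/2}\le N_1$ in all three cases of \eqref{kappa_phi}, the second factor always carries a low enough number of vector fields for \eqref{second_derivatives2} or \eqref{second_derivatives_L2} to apply.

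First I would insert $(s/t)^2$ appropriately: write
\[
|Z^{K_1}H^{1,\flat}_{LL}||\partial^2_t Z^{K_2}\hff_{\alpha\beta}|
= \Big(\frac{t}{s}\Big)^2 |Z^{K_1}H^{1,\flat}_{LL}|\cdot (s/t)^2|\partial^2_t Z^{K_2}\hff_{\alpha\beta}|
\]
so that the $(s/t)^2$ lands on the second derivative, matching the form in \eqref{second_derivatives2}. The bound \eqref{good_coefficient2} gives $|Z^{K_1}H^{1,\flat}_{LL}|\lesssim C_1\ep t^{-1/2}(s/t)^2 s^{\zeta_{k_1+2}}$, so the $(t/s)^2$ cancels and one is left with $|Z^{K_1}H^{1,\flat}_{LL}|(t/s)^2\lesssim C_1\ep t^{-1/2}s^{\zeta_{k_1+2}}$. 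Then the plan is to split according to which of $K_1,K_2$ is ``large'':
\begin{itemize}
	\item[$\bullet$] If $|K_1|\le\floor{|K|/2}$ is small, put $Z^{K_1}H^{1,\flat}_{LL}$ and $(s/t)^2\partial^2_t Z^{K_2}\hff$ in $L^\infty$ using \eqref{good_coefficient2} and \eqref{second_derivatives2}, and $\partial_t\phi$ in $L^2$ via the appropriate interior energy bound (\eqref{bootin1}, \eqref{bootin2} or \eqref{bootin5}). Since $\|(s/t)^2\partial^2_t Z^{K_2}\hff\|_{L^\infty}\lesssim C_2\ep t^{-3/2}s^{-1+\eta}$ and the $L^\infty$ bound on $Z^{K_1}H^{1,\flat}_{LL}$ combined with $t\sim s$ on $\hin_s$ near $t\sim r$ (and faster decay away from the cone), after integrating $dtdx$ over $\hin_{[s_0,s]}$ using $dxdt \sim t^{-1}(t/s)\, dv_{\hin_\tau}d\tau$ type Fubini reduction, the $t$-weights are comfortably integrable and one collects $C_1 C_2^2\ep^3 s^{\kappa_\phi + \eta}$ — up to the harmless loss $\eta$, which by \eqref{condition_parameters} is reabsorbed into the hierarchy.
	\item[$\bullet$] If instead $|K_2|\le\floor{|K|/2}$ is small but $|K_1|$ is large, put $(s/t)^2\partial^2_t Z^{K_2}\hff$ and $\partial_t\phi$ together — controlling the product $|\partial^2_t Z^{K_2}\hff||\partial_t\phi|$ by Cauchy–Schwarz with the energy, using \eqref{second_derivatives_L2} for the first factor — and bound $Z^{K_1}H^{1,\flat}_{LL}$ in $L^\infty$ by \eqref{good_coefficient2}. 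Since $|K_2|$ small forces $|K_1|$ large, one must also be careful: when $|K_1| > N-2$ the pointwise bound \eqref{good_coefficient2} may no longer be available at the needed order, so in that regime one instead uses the $L^2$ control \eqref{good_coeff_L2} on $\partial Z^{K_1}H^{1,\flat}_{LL}$ together with a Hardy/Poincaré step, or one simply notes $|K_2|\le\floor{|K|/2}$ already keeps $|K_1|\le N$ so \eqref{good_coefficient2} applies with $k_1\le N$ and the weight $s^{\zeta_{N}}$ is absorbed.
\end{itemize}
The time integration uses $t\sim s$ on the bulk of $\hin_s$ and $\int_{s_0}^s \tau^{-1}\, d\tau$-type bounds; all powers of $s$ that appear are of the form $s^{\kappa_\phi - \text{(positive)} + \eta}$, and choosing the parameter hierarchy \eqref{condition_parameters} so that $\eta$ (a fixed linear combination of $\zeta_k,\gamma_k,\delta_k$) is dominated by the gaps makes the final exponent $\le\kappa_\phi$, which is exactly \eqref{comm_est_int2}.

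The main obstacle I anticipate is the borderline $i=1$, $|K|=N+1$ case with $\phi = Z^K\hab$, where one only gains $\kappa_\phi = 1$ and the energy $\enint(s,\partial Z^{K}\hab)^{1/2}$ grows like $s^{1/2+\zeta_k}$: here the product of the decay $s^{-1+\eta}$ from \eqref{second_derivatives2}, the $t^{-1/2}$ from \eqref{good_coefficient2}, and the $s^{1/2+\zeta_k}$ from the energy, integrated against $dtdx\sim \tau\, d\tau\, (s/t)\cdots$ on the hyperbolic slab, is genuinely of size $s^{1+\eta}$, so there is no room to spare — one cannot afford to lose more than a tiny power. The delicate point is to make sure that when the high-order coefficient $Z^{K_1}H^{1,\flat}_{LL}$ carries $|K_1|$ close to $N$, the weight $s^{\zeta_{N+2}}$ it picks up from \eqref{good_coefficient2} (or the weaker $s^{1/2+2\zeta_N}$ from \eqref{good_coeff_L2}) combines with the $\zeta_k$ from the energy to give a total exponent still strictly below what the hierarchy \eqref{condition_parameters} permits; this forces the careful bookkeeping of which $\zeta$-index attaches to which factor, and is the reason the statement is phrased with the constraint $k\le N$ rather than $k\le N+1$. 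Everything else — the cubic remainder terms in \eqref{comm_int}, the $\frac{|t^2-r^2|}{t^2}$ and $(1+t+r)^{-1}$ gain factors, and the mixed zero/non-zero mode splitting — is routine once this borderline estimate is secured.
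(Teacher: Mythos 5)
Your plan closes only in the subcase $|K_1|\le N-2$, where \eqref{good_coefficient2} is available --- and even there the $L^\infty\times L^\infty\times L^2$ pairing you propose does not integrate (it leaves $\|\partial_t\phi\|_{L^1(\hin_\tau)}$, which the energy does not control); the correct pairing is one factor in $L^\infty$ and two in $L^2$, taking $(s/t)^2\partial_t^2 Z^{K_2}\hff_{\alpha\beta}$ in $L^2$ via \eqref{second_derivatives_L2} rather than pointwise. (Also note $t\sim s$ holds where $r\ll t$, not ``near $t\sim r$'' where in fact $s\ll t$.) The genuinely hard regime is $|K_1|>N-2$, which occurs in all three cases of \eqref{kappa_phi} since the hypothesis only constrains $|K_2|\le\floor{|K|/2}$. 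Here both of your fallbacks fail: the claim that ``$|K_1|\le N$ so \eqref{good_coefficient2} applies'' is incorrect --- Proposition \ref{prop:HLL_int} establishes \eqref{good_coefficient2} only for multi-indices of type $(N-2,k)$ --- and the vague ``Hardy/Poincar\'e step'' is not sufficient, because after matching $(s/t)$ weights with the other two factors the weight left on $Z^{K_1}H^{1,\flat}_{LL}$ is $(t/s)^3\sim t^{3/2}(1+t-r)^{-3/2}$, which blows up near the lateral boundary $\tilde{\hcal}$ and which the classical Hardy inequality (Lemma \ref{lem:classical_Hardy}, giving only an $r^{-1}\sim t^{-1}$ gain and nothing in $t-r$) cannot absorb.

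What actually closes the estimate in the paper is a Young-inequality splitting of the trilinear integral,
\[
\int_{s_0}^s \enint(\tau,\phi)^{\frac12}\big\|Z^{K_1}H^{1,\flat}_{LL}\cdot\partial_t^2 Z^{K_2}\hff_{\alpha\beta}\big\|_{L^2(\hin_\tau)}d\tau
\lesssim \ep\int_{s_0}^s \tau^{-1-\nu}\enint(\tau,\phi)\,d\tau + \frac{1}{\ep}\int_{s_0}^s \tau^{1+\nu}\big\|\cdots\big\|^2_{L^2(\hin_\tau)}d\tau,
\]
with $2\delta_k<\nu\ll\kappa$ small and fixed. The first integral closes directly by the a-priori energy bounds and already gives the stated $s^{\kappa_\phi}$. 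In the second, inserting the pointwise bound \eqref{second_derivatives2} and passing to constant-$t$ slices leaves a weight $(1+t-r)^{-3+\eta}(1+t+r)^{-1+\eta+\nu}$ on $|Z^{K_1}H^{1,\flat}_{LL}|^2$: this is exactly the form handled by the $(1+t-r)$-weighted Hardy inequality of Corollary \ref{cor:Hardy}, which trades $|Z^{K_1}H^{1,\flat}_{LL}|$ for $|\partial_r Z^{K_1}H^{1,\flat}_{LL}|$, and is then concluded with \eqref{good_coeff_L2} on the interior piece and the wave-gauge estimate \eqref{ineq_H1lt_higher} together with \eqref{boot1_ex}, \eqref{boot2_ex} on the exterior boundary piece. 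This Lindblad--Rodnianski-type Hardy inequality with weights in $t-r$, combined with the $\tau^{\pm(1+\nu)}$ Young split that makes it applicable, is the one non-standard ingredient your outline is missing. (You also omit the preliminary reduction to $Z^{K_1}=\Gamma^{K_1}$, the case $Z^{K_1}=\partial^{I_1}\Gamma^{J_1}$ with $|I_1|\ge 1$ being handled as in Lemma \ref{lem:low-order-wave-commutator}.)
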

	\begin{proof}
		We restrict our attention to the case where $Z^{K_1} = \Gamma^{K_1}$, as the ones where $Z^{K_1} = \partial^{I_1}\Gamma^{J_1}$ with $|I_1|\ge 1$ can be estimated as in the proof of lemma \ref{lem:low-order-wave-commutator}.
		
For any $K_1$, $K_2$ in the selected range of indexes and any fixed $\nu$ such that $2\delta_k<\nu\ll 1$, we write the following
		\[
		\begin{aligned}
			\iint_{\hin_{[s_0, s]}}\hspace{-10pt} |Z^{K_1} H^{1, \flat}_{LL}| |\partial^2_t Z^{K_2}\hff_{\alpha\beta}| |\partial_t \phi|dxdt \le \int_{s_0}^s \hspace{-5pt}\enint(\tau, \phi)^{\frac12}\big\|Z^{K_1} H^{1, \flat}_{LL} \cdot \partial^2_t Z^{K_2}\hff_{\alpha\beta}\big\|_{L^2(\hin_\tau)}d\tau\\
			\lesssim \int_{s_0}^s \ep \tau^{-1-\nu}\enint(\tau, \phi)d\tau + 
			\frac{1}{\ep}\int_{s_0}^s\int_{\hin_\tau}\tau^{1+\nu}|Z^{K_1} H^{1, \flat}_{LL}|^2 |\partial^2_t Z^{K_2}\hff_{\alpha\beta}|^2 dxd\tau \\
			\lesssim C_1^2\ep^3 s^{\kappa_\phi} + \frac{1}{\ep}\int_{s_0}^{t_s}\int_{\q C_t}|Z^{K_1} H^{1, \flat}_{LL}|^2 |\partial^2_t Z^{K_2}\hff_{\alpha\beta}|^2 t^{1+\nu} dxdt
		\end{aligned}
		\]
		where $\q C_t = \{x\in \R^3 : \sqrt{(t^2-s^2)^+}\le |x|\le \sqrt{(t-1)^2-1}\}$ and $t_s = s^2/2$. The latter inequality is obtained by injecting the energy assumptions \eqref{boot1_in}, \eqref{boot2_in}, \eqref{boot3_in} in the first integral on the second line and by performing a change of coordinates in the second one.
		
		We use \eqref{second_derivatives2} and the Hardy inequality of corollary \ref{cor:Hardy} with $\mu=1-\eta$ and $\alpha = 1-\eta-\nu$ to estimate the above integral. For any fixed $\mu'>0$, we get that
		\[
		\begin{aligned}
			\frac{1}{\ep}\int_{s_0}^{t_s}\int_{\q C_t}|Z^{K_1} H^{1, \flat}_{LL}|^2 |\partial^2_t Z^{K_2}\hff_{\alpha\beta}|^2 t^{1+\nu} dxdt \\
			\lesssim C_2^2 \ep \int_{s_0}^{t_s}\int_{\q C_t} \frac{|Z^{K_1}H^{1, \flat}_{LL}|^2}{(1+t-r)^{2+(1-\eta)}} \frac{dxdt}{(1+t+r)^{1-\eta-\nu}} \\
			\lesssim  C_2^2\ep\int_{s_0}^{t_s}\int_{\hin_\tau}\frac{|\partial_r Z^{K_1}H^{1, \flat}_{LL}|^2}{\tau^{2(1-\eta-\nu)}}dxd\tau + C_2^2\ep\int_{s_0}^{t_s}\int_{\Sext_t}|\partial_rZ^{K_1}H^{1, \flat}_{LL}|^2\frac{(1+|r-t|)^{1+\mu'}}{(1+t+r)^{1-\eta-\nu}}dxdt.
		\end{aligned}
		\]
		
		We estimate the first integral using \eqref{good_coeff_L2}:
		\[
		C_2^2\ep\int_{s_0}^{t_s}\int_{\hin_\tau}\frac{|\partial_r Z^{K_1}H^{1, \flat}_{LL}|^2}{\tau^{2(1-\eta-\nu)}}dxd\tau\lesssim \int_{s_0}^{t_s}C_1^2C_2^2\ep^3 \tau^{-2(1-\eta-\nu) +4\delta_{k_1}}d\tau\lesssim C_1^2C_2^2\ep^3								.
		\]
For the latter one, we pick $2\delta_k\le \nu\ll \kappa$ and $\mu':=2\kappa-\eta-2\nu$ so that $\mu'>0$. From inequality \eqref{ineq_H1lt_higher} we get \small{
			\[
			\begin{aligned}
				C_2^2\epsilon\int_{s_0}^{t_s}\int_{\Sext_t}|\partial_rZ^{K_1}H^{1, \flat}_{LL}|^2\frac{(1+|r-t|)^{1+\mu'}}{(1+t+r)^{1-\eta-\nu}}dxdt\\
				\lesssim C_2^2\ep\hspace{-10pt}\sum_{|K'|\le |K_1|} \int_{s_0}^{t_s}\hspace{-5pt}\int_{\Sext_t}t^{-\nu}(2+r-t)^{2\kappa}|\op Z^{K'} H^{1, \flat}|^2 dxdt  + \int_{s_0}^{t_s}\hspace{-5pt}\int_{\Sext_t}t^{-\nu} r^{-1}(2+r-t)^{2\kappa-1}|Z^{K'}H^{1, \flat}|^2 dxdt \\
				+ \sum_{|K'_1|+|K'_2|\le |K_1|}C_2^2\ep\int_{s_0}^{t_s}\int_{\Sext_t}t^{-\nu} (2+r-t)^{2\kappa} |\big(Z^{K'_1}H^1\cdot \partial Z^{K'_2}H^1\big)^\flat|^2 dxdt \\
				+ C_2^2\ep\int_{s_0}^{t_s}\int_{\Sext_t}(2+r-t)^{2\kappa}M^2\chi_0^2(1/2\le r/t\le 3/4)r^{-4-\nu}dxdt.
			\end{aligned}
			\]}
		\normalsize
		The above right hand side is bounded by $C_2^2C_0^2\ep^3$. For the first integral, this follows using \eqref{boot2_ex} and the fact that $\nu>2\delta_k>\sigma$; for the second one, it follows from the weighted Hardy inequality \eqref{hardy_ext} and \eqref{boot1_ex}; for the third one, from the pointwise bounds \eqref{KS1_ext}, \eqref{KS3_ext}, the weighed Hardy inequality \eqref{hardy_ext} and the energy bounds \eqref{boot1_ex}, \eqref{boot2_ex}; for the last one, from the fact that the domain of integration of the latter one is uniformly bounded in $(t,x)$.
	\end{proof}

	We now estimate the different contributions to the commutator $[Z^K, (H^{1, \mu\nu})^\flat \partial_\mu\partial_\nu]\habn$, which appear in the equation satisfied by $Z^K\hab$. 
	
	\begin{lemma}
		For any fixed multi-index $K$ of type $(N+1, k)$ with $k\le N$, one has
		\begin{equation}\label{comm_WKGh}
		\|[Z^K, (H^{1, \mu\nu})^\flat \partial_\mu\partial_\nu]\habn\|_{L^2(\hin_s)} \lesssim  C_1C_2\ep^2 \left(\sum_{i=1}^4 s^{-\frac12+\gamma_i+\zeta_{k-i}} + s^{-\frac12+\zeta_k}\right)
		\end{equation}
		and  for $K$ of type $(N, k)$ with $k\le N_1$ 
\begin{equation}\label{comm_WKGl}
\|[Z^K, (H^{1, \mu\nu})^\flat \partial_\mu\partial_\nu]\habn\|_{L^2(\hin_s)} \lesssim C_1C_2\ep^2s^{-1+\delta_k}.
\end{equation}		
	\end{lemma}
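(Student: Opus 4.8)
The plan is to apply the pointwise commutator identity \eqref{comm_int} with $\pi^{\mu\nu} = (H^{1,\mu\nu})^\flat$ and $\phi = \habn$, and then bound each resulting piece in $L^2(\hin_s)$. The hypothesis of that identity holds because $\floor{|K|/2}\le N_1$ (as $N\ge 14$), so $|Z^{K'}(H^{1,\flat})|\le C$ for $|K'|\le\floor{|K|/2}$ follows from the $L^\infty$ bounds of subsection~\ref{sub:KSbounds} together with \eqref{dec_H}, \eqref{h0_estimate} and the smallness of $\ep$. Note also that, $\habn$ being zero-average, so is $\partial^2\habn$, hence by \eqref{dec_product} the product $(H^{1,\mu\nu})^\flat\partial_\mu\partial_\nu\habn$ and the whole commutator are zero-average, consistently with the fact that this term only enters the equation for $Z^K\habn$.

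Identity \eqref{comm_int} expresses the commutator as a sum of \emph{good terms} $Z^{K_1}(H^1_{LL})^\flat\,\partial^2_t Z^{K_2}\habn$, $Z^{K_1}(H^1_{4L})^\flat\,\partial_t\partial_y Z^{K_2}\habn$, $Z^{K_1}(H^1_{44})^\flat\,\partial^2_y Z^{K_2}\habn$, plus a \emph{remainder} in which every quadratic term carries an extra factor $(s/t)^2$, a tangential derivative $\pb_x$, or a factor $(1+t+r)^{-1}$. The remainder is the easy part: the extra $(s/t)^2$ or $(1+t+r)^{-1}$, or the improved behaviour of the $\pb_x$-derivatives of the Klein--Gordon field $\habn$ (see \eqref{bootinL2y.5}, \eqref{bootinL2y.5bis} and \eqref{der_null_semihyp}), combined with the energy bounds \eqref{bootin3}, \eqref{bootin4} on $\habn$ and the $L^\infty$ smallness of $(H^{1,\flat})$ from \eqref{KS6_h}, make all these contributions strictly subcritical, bounded by $C_1C_2\ep^2 s^{-1+\eta}$ for a small $\eta$ depending linearly on $\zeta_k,\gamma_k,\delta_k$, which is stronger than required.

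The good terms are where the structure matters. The coefficients $(H^1_{LL})^\flat$ and $(H^1_{4L})^\flat$ are the wave-gauge-improved components, governed by Proposition~\ref{prop:HLL_int} (in particular \eqref{good_coeff1_flat}, \eqref{good_coefficient2}, \eqref{good_coeff_L2}, \eqref{good_coefficient11}), while $(H^1_{44})^\flat$ is only a generic small coefficient but is paired with $\partial^2_y\habn$, where both $\partial_y$'s are tangential to the light cone and $\habn$ obeys the Klein--Gordon decay of \eqref{bootKG_in}. For a fixed pairing $(K_1,K_2)$ with $|K_1|+|K_2|\le|K|$, $|K_2|<|K|$, one of the two indices is $\le\floor{|K|/2}$, and we split accordingly. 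When the Klein--Gordon factor carries few derivatives ($|K_2|\le N_1-1$) we use $|\partial^2 Z^{K_2}\habn|\lesssim C_2\ep\, t^{-3/2}s^{\gamma_{k_2}}$ from \eqref{KS1_hnn}, pull out a factor $r$, use $r\,t^{-3/2}\le t^{-1/2}\le s^{-1/2}$ on $\hin_s$ (since $t\ge s$ and $r\le t$ there), and bound $\|r^{-1}Z^{K_1}(H^1_{LL})^\flat\|_{L^2(\hin_s)}$ via the classical Hardy inequality \eqref{Hardy_classical} together with \eqref{bootin2}, \eqref{good_coeff_L2} and the exterior flux bound \eqref{enhanced_Boot1_ext} on $\Sext_{t_s}$; this produces exponents $s^{-\frac12+\gamma_{k_2}+\zeta_{k_1}}$, which, as the rotation counts add up to at most $k$, are among $\sum_{i=1}^4 s^{-\frac12+\gamma_i+\zeta_{k-i}}$, the cutoff $i\le 4$ being exactly the range in which the pointwise Klein--Gordon bounds \eqref{KS1_hnn}, \eqref{bootKG_in} remain available. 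When instead the coefficient carries few derivatives ($|K_1|\le\floor{|K|/2}$), we put $Z^{K_1}(H^1_{LL})^\flat$ in $L^\infty_{xy}$ via \eqref{good_coefficient2} — keeping both its $(s/t)^2$ gain and its $t^{-1/2}$ decay — bound $(s/t)^2\partial^2_t$ by $(s/t)|\partial^2|$, and control $(s/t)\partial^2 Z^{K_2}\habn$ in $L^2(\hin_s)$ by \eqref{bootin1}; the weight $t^{-1/2}$ from the coefficient together with the geometry of $\hin_s$ ($t\ge s$, $r\le t\lesssim s^2$) delivers the remaining term $s^{-\frac12+\zeta_k}$. The $4L$ and $44$ good terms are treated identically, the tangential $\partial_y$ giving extra room. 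For the lower-order statement ($K$ of type $(N,k)$, $k\le N_1$) the same scheme, using \eqref{bootin5}, \eqref{bootin4}, \eqref{bootin_hardy2}, \eqref{good_coefficient11} in place of the top-order bounds, yields $s^{-1+\delta_k}$; here $|K_2|$ is always small enough for the pointwise Klein--Gordon bound, so no $\gamma$-sum survives.

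The main obstacle is precisely the term $Z^{K_1}(H^1_{LL})^\flat\cdot\partial^2_t Z^{K_2}\habn$: $\partial^2_t$ is the ``bad'' direction for a Klein--Gordon field, enjoying no extra decay towards the light cone (only the flat $(1+t+r)^{-3/2}$), so the decisive $s^{-1/2}$ factor cannot come from $\habn$ alone — it must be extracted from the wave condition (for $LL$ and $4L$, through Proposition~\ref{prop:HLL_int}) or from tangentiality (for $44$). One then has to check that the crossover between the two regimes — the bulk of the derivatives sitting on the coefficient versus on the Klein--Gordon factor — never wastes that factor; carefully tracking the $(s/t)$ weights and the precise measure of the truncated hyperboloid $\hin_s$ is what makes this step delicate.
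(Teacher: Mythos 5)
Your proof takes a genuinely different route from the paper's, and the extra machinery you introduce creates a gap that the paper's simpler argument avoids.

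The paper does \emph{not} apply the null-frame commutator identity \eqref{comm_int} here, does not single out $(H^1_{LL})^\flat$, $(H^1_{4L})^\flat$, $(H^1_{44})^\flat$, and does not invoke the wave condition (Proposition~\ref{prop:HLL_int}) at all in this lemma. It treats each commutator term $Z^{K_1}\hff\cdot\partial^2 Z^{K_2}\habn$ as a whole: when the coefficient carries few rotations ($|K_1|\le N_1$), it places $\Gamma^{K_1}\hff$ in $L^\infty$ using the bootstrap bound \eqref{bootW_in}, whose flat $t^{-1}$ decay is then rewritten as $t^{-1}=s^{-1}(s/t)$ and paired with the $L^2$ energy bounds \eqref{bootin3}, \eqref{bootin4}, \eqref{bootin5}; when the coefficient carries many rotations ($N_1<|K_1|$), it instead places $\partial^2 Z^{K_2}\habn$ in $L^\infty_xL^2_y$ via \eqref{bootKG_in} and $\Gamma^{K_1}\hff$ in $L^2$ via the Hardy bound \eqref{bootin_hardy3}. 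Nowhere is the $LL$-structure of the coefficient used. This is natural: the Klein--Gordon factor $\partial^2\habn$ already supplies the critical decay ($t^{-3/2}$ pointwise, $t^{-1}s^{-1/2}$ in $L^2$), so the zero-mode coefficient only needs its \emph{generic} wave decay $t^{-1}$. The wave condition is genuinely needed only for the companion commutator on the wave part $\habf$ (Lemma~\ref{lem:low-order-wave-commutator}), whose second derivatives lack Klein--Gordon decay.

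The concrete gap is in your sub-case where the coefficient carries few derivatives. You use \eqref{good_coefficient2}, namely $|Z^{K_1}(H^1_{LL})^\flat|\lesssim C_1\ep(s/t)^2 t^{-1/2}s^{\zeta_{k_1+2}}$, and pair $(s/t)\partial^2 Z^{K_2}\habn$ with the energy. But only one power of $(s/t)$ can be fed into $\|(s/t)\partial^2 Z^{K_2}\habn\|_{L^2(\hin_s)}\lesssim C_1\ep\, s^{1/2+\zeta_{k_2}}$; what remains of the coefficient's weight is $(s/t)t^{-1/2}$, whose supremum over $\hin_s$ is exactly $s^{-1/2}$, attained at the tip $t\approx s$. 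The resulting bound is $C_1^2\ep^2\, s^{-1/2+\zeta_{k_1+2}}\cdot s^{1/2+\zeta_{k_2}}=C_1^2\ep^2\, s^{\zeta_{k_1+2}+\zeta_{k_2}}$, a full factor $s^{1/2}$ worse than the required $s^{-1/2+\zeta_k}$. The fix is exactly what the paper does: drop the $LL$ structure for this term and use instead $|\Gamma^{K_1}\hff|\lesssim C_2\ep\, t^{-1}s^{\gamma_{k_1}}$ from \eqref{bootW_in}. Writing $t^{-1}=s^{-1}(s/t)$ then gives $\|\Gamma^{K_1}\hff\,\partial^2 Z^{K_2}\habn\|_{L^2}\lesssim C_1C_2\ep^2\, s^{-1/2+\gamma_{k_1}+\zeta_{k_2}}$, which is the right answer. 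Your concluding claim that ``the decisive $s^{-1/2}$ factor \dots must be extracted from the wave condition'' is therefore incorrect for this lemma: the $s^{-1/2}$ comes from the a priori pointwise wave decay of the zero mode, not from the gauge.
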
	

	\begin{proof}
The terms in the commutator are of the form
		$$Z^{K_1} {H}^{1, \flat}\,   \partial^2 Z^{K_2}\habn$$
		with each $K_j$ of type $(|K_j|, k_j)$ and such that $|K_1|+|K_2|\leq |K|$, $|K_2|<|K|$. We focus on the case where $|K_1|=k_1$, that is where $Z^{K_1}=\Gamma^{K_1}$, the remaining ones being simpler, and also recall that $N=N_1+5$.
		
In the case where $|K_1|\le N_1$, we estimate $\Gamma^{K_1} {H}^{1, \flat} $ in $L^\infty$ with \eqref{bootW_in}. Together with the energy bounds \eqref{bootin3} and \eqref{bootin5}, and the relation \eqref{condition_parameters}, this yields
\[
\aligned
\|\Gamma^{K_1} {H}^{1, \flat}  \, \partial^2 Z^{K_2}\habn\|_{L^2(\hin_s)}
	\lesssim C_1C_2\ep^2
	\begin{cases}s^{-1+\delta_k} , \ &\text{if } |k_2|\le N_1 \\
 \sum_{i=0}^4 s^{-\frac12+\gamma_i+\zeta_{k-i}}, \ & \text{if } N_1<|k_2|\le N
	\end{cases}
	\endaligned
\]	

		In the case where $N_1 <|K_1|\leq N$, we estimate $\partial^2 Z^{K_2}\habn$ in $L^\infty$
		with \eqref{bootKG_in} and $\Gamma^{K_1}H^{1, \flat}$ in $L^2$ using \eqref{bootin_hardy3}
		$$
		\aligned
		\|\Gamma^{K_1} {H}^{1, \flat}\,  \partial^2 Z^{K_2}\habn\|_{L^2(\hin_s)}
			&\lesssim  \ep^2C_1C_2\left(s^{-\frac{1}{2}+\zeta_k} +\sum_{i=1}^4 s^{-\frac{1}{2} +\gamma_{i}+\zeta_{k-i}}\right)
			\endaligned
		$$
%		and in the latter		
%		$$\|\Gamma^{K_1} {H}^{1, \flat} \, \partial^2 Z^{K_2}\habn\|_{L^2(\hin_s)}\lesssim C_1C_2 \ep^2 s^{-\frac{3}{2}+\gamma_{|K_2|}} \|(s/t)\partial Z^{K'_1}  {H}^{1, \flat} \|_{L^2(\hin_s)} \lesssim C_1C_2 \ep^2 s^{-\frac32+\eta}$$ 
%for some positive small constant $\eta\le 2\delta_N$ linearly depending on $\gamma_k, \zeta_k$. 
%In the case where $|K_1|=N+1$, necessarily $\Gamma^{K_1}=\partial Z^K$, so from \eqref{bootKG_in} and \eqref{bootin1}
%\[
%|\partial Z^{N} {H}^{1, \flat} \, \partial^2 \habn\|_{L^2(\hin_s)}\lesssim C_1C_2\ep^2 s^{-1+\zeta_k}.
%\]

The conclusion of the proof follows from relation \eqref{condition_parameters} and the observation that, if $K$ is a multi-index of type $(N, k)$ with $k\le N_1$ and $|K_1|>N_1$, then $\Gamma^{K_1}$ contains at least one usual derivative.
\end{proof}

	We conclude this subsection with some estimates on commutator terms involving $H^{1,\natural}$.
	
	\begin{lemma}
We have that
		\begin{multline}
		\label{KG-KG in wave}
			\big\| Z^K\big( (H^{1, \mu\nu})^\natural\cdot\partial_\mu\partial_\nu\hnn_{\alpha\beta}\big)^\flat \big\|_{L^2(\hin_s)} \\
			\lesssim C_1C_2\ep^2
			\begin{cases} 
	\sum_{i=1}^4 s^{-1+\gamma_i+\zeta_{k-i}} + s^{-1+\zeta_k}\ &\text{if } K \text{ of type } (N, k) \\
				s^{-\frac32 +\delta_k} \ &\text{if } K \text{ of type } (N-1, k) \text{ with } k\le N_1
			\end{cases}
		\end{multline}
		and 
		\begin{multline}\label{KG-KG in general}
			\big\|\big[Z^K, (H^{1, \mu\nu})^\natural \partial_\mu \partial_\nu\big]\habn \big\|_{L^2_{xy}(\hin_s)} \\
	\qquad	
		\\	\lesssim C_1C_2 \ep^2 
			\begin{cases}
			\sum_{i=1}^4 s^{-1+\gamma_i+\zeta_{k-i}} + s^{-1+\zeta_k}\ &\text{if } K \text{ of type } (N+1, k) \text{ with } k\le N \\
				s^{-\frac32+\delta_k} \ &\text{if } K \text{ of type } (N, k) \text{ with } k\le N_1
			\end{cases}
		\end{multline}
	\end{lemma}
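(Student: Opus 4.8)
The plan is to reduce both bounds to $L^2$ estimates on products of the schematic form $Z^{K_1}(H^{1,\mu\nu})^\natural\cdot\partial_\mu\partial_\nu Z^{K_2}\hnn_{\alpha\beta}$ with $|K_1|+|K_2|\le|K|$ (and $|K_2|<|K|$ in the commutator case), which is what the Leibniz rule and the commutator formula \eqref{comm_int} produce. The decisive structural fact is that both factors have vanishing $y$-average: since $\partial_y\in\{Z\}$ commutes with $\fint_{\S^1}$, and $H^{1,\natural}$, $\hnn$ are zero-average by definition, so are $Z^{K_1}(H^{1,\mu\nu})^\natural$ and $\partial_\mu\partial_\nu Z^{K_2}\hnn_{\alpha\beta}$. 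Consequently the Poincar\'e inequality and the Sobolev embedding on $\S^1$ allow us to trade an $L^\infty_y$ norm for the cheaper $L^2_y$ norm at the cost of one $\partial_y$ derivative, which is always affordable. Concretely, for \eqref{KG-KG in wave} one uses Cauchy--Schwarz in $y$ and then H\"older in $x$,
\[
\|(fg)^\flat\|_{L^2_x(\hin_s)}\le\|f\|_{L^\infty_xL^2_y(\hin_s)}\,\|g\|_{L^2_{xy}(\hin_s)},
\]
while for \eqref{KG-KG in general} the Sobolev embedding on $\S^1$ gives $\|fg\|_{L^2_{xy}(\hin_s)}\lesssim\|\partial^{\le1}_y f\|_{L^\infty_xL^2_y(\hin_s)}\,\|g\|_{L^2_{xy}(\hin_s)}$, so that in both cases it suffices to place one factor in $L^\infty_xL^2_y$ and the other in $L^2_{xy}$.

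Next I would split according to the number of vectorfields on each factor, putting the one carrying at most $\lfloor|K|/2\rfloor$ of them in $L^\infty_xL^2_y$ and the other in $L^2_{xy}$. When the low-order factor is $Z^{K_1}(H^{1,\mu\nu})^\natural$, it is controlled via the Klein--Gordon pointwise bounds \eqref{bootKG_in}, \eqref{KS1_hnn} (valid also for $H^{1,\mu\nu}$ after \eqref{dec_H} and \eqref{h0_estimate}), giving an $L^\infty_xL^2_y$ bound of order $C_2\ep\, t^{-3/2}s^{\gamma_{k_1}}$; when it is $\partial_\mu\partial_\nu Z^{K_2}\hnn_{\alpha\beta}$, one again invokes \eqref{bootKG_in} with the two extra derivatives, the index budget being respected since $\lfloor|K|/2\rfloor+2\le N_1+1$ for $N\ge14$. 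The high-order factor goes into $L^2_{xy}$: $H^{1,\natural}$ through the energy assumptions and Poincar\'e (\eqref{bootin3}, \eqref{bootin4}, together with the Hardy-type bound \eqref{bootin_hardy3} when it carries more than $N_1$ fields), and $\partial_\mu\partial_\nu Z^{K_2}\hnn_{\alpha\beta}$ through the first energy assumption \eqref{bootin1} (or \eqref{bootin5} in the type-$(N,k)$, $k\le N_1$ range), carrying the $(s/t)$ weight. The left-over $t$-weights obtained by pairing the $t^{3/2}$-weighted decay of $H^{1,\natural}$ with the $(s/t)$-weighted energy of $\partial^2\hnn$ recombine, using $s\le t\le s^2/2$ on $\hin_s$, into a clean power of $s$; the term $s^{-1+\gamma_i+\zeta_{k-i}}$ is precisely the contribution in which the $H^{1,\natural}$ factor carries $i$ rotation fields (cost $s^{\gamma_i}$) while the remaining $k-i$ sit on $\hnn$ (cost $s^{\zeta_{k-i}}$), and $s^{-1+\zeta_k}$ is the case $i=0$; the relations \eqref{condition_parameters} absorb the mixed terms. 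In the lower-order ranges one derivative count is spared, producing the stated $s^{-3/2+\delta_k}$.

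The main obstacle will be the case in which the $H^{1,\natural}$ factor carries more than $N_1$ vectorfields while the $\hnn$ factor carries few: there is then no pointwise control of $Z^{K_1}(H^{1,\mu\nu})^\natural$, so one must put $\partial_\mu\partial_\nu Z^{K_2}\hnn_{\alpha\beta}$ in $L^\infty_xL^2_y$ via \eqref{bootKG_in}, and one has to check both that the derivative count on $\hnn$ still fits the range $|I|+|J|\le N_1+1$ and that the combination of the $(s/t)$-weighted energy bound for $H^{1,\natural}$ (through \eqref{bootin_hardy3}) with the $t^{1/2}s$ weight from \eqref{bootKG_in} closes with no surviving positive power of $t/s$ --- here one repeatedly uses $s\le t\le s^2/2$ on the truncated hyperboloid. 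Once this bookkeeping is in place, all remaining contributions are handled by the same Leibniz-plus-H\"older routine and present no difficulty.
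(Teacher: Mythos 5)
Your proposal is correct and follows essentially the same route the paper takes: decompose the product/commutator by Leibniz, put the low-order factor (at most $\floor{|K|/2}$ vectorfields, which is $\le N_1-2$ once $N\ge 14$) in $L^\infty_x L^2_y$ via \eqref{bootKG_in} and \eqref{KS1_hnn}, put the high-order factor in $L^2_{xy}$ via the energy assumptions \eqref{bootin1}, \eqref{bootin3}, \eqref{bootin4}, and redistribute the $(s/t)$ weight using $s\le t$ on $\hin_s$, just as the paper's terse citation list indicates. One small slip worth flagging: you invoke the Hardy-type bound \eqref{bootin_hardy3} to control the top-order $H^{1,\natural}$ factor when it carries more than $N_1$ vectorfields, but \eqref{bootin_hardy3} is an inequality for the zero mode $\habf$; for the zero-average $H^{1,\natural}$ no Hardy inequality is needed, since the undifferentiated $L^2$ norm $\|Z^{K_1}(H^{1,\mu\nu})^\natural\|_{L^2}$ is already recorded directly in \eqref{bootin3} and \eqref{bootin4} (it comes from Poincar\'e in $y$ applied to the $\partial_y$ piece of the energy), so this misattribution does not create a gap.
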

	\begin{proof}
		We observe that if $K$ is a multi-index of type $(N-1, k)$ then $\partial Z^K = Z^{K'}$ with $K'$ of type $(N, k)$.
		The first bound (resp. the second) in \eqref{KG-KG in wave} simply follows from \eqref{condition_parameters}, \eqref{bootKG_in}, \eqref{bootin3} (resp. \eqref{bootin4}) and the fact that $\floor{N/2}+2\le N_1$. Similarly, the first bound (resp. the second) in \eqref{KG-KG in general} follows from \eqref{bootin1} (resp. \eqref{bootin4}) and \eqref{KS1_hnn}. 
	\end{proof}
	
	\begin{lemma}
		We have that
		\begin{equation} \label{comm_KGw}
			\big\|\big[Z^K, (H^{1, \bmmu\bmnu})^\natural \partial_\bmmu \partial_\bmnu\big]\habf \big\|_{L^2_{xy}(\hin_s)} \lesssim C_1^2 \ep^2 
			\begin{cases}
				s^{-1+2\delta_N} \quad  &\text{if } K \text{ of type } (N+1, k) \text{ with } k\le N \\
				s^{-\frac32+2\delta_N} \quad  &\text{if } K \text{ of type } (N, k) \text{ with } k\le N_1.
			\end{cases}
		\end{equation}
	\end{lemma}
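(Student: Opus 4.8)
The plan is to expand the commutator via the interior commutator inequality \eqref{comm_int}, applied with $\pi=(H^{1,\bmmu\bmnu})^\natural$ and $\phi=\habf$. Since $\partial_y$ annihilates the zero mode $\habf$, the ``$\pi_{4L}$'' and ``$\pi_{44}$'' leading terms of \eqref{comm_int} drop out and one is left with
\begin{multline*}
\big|[Z^K,(H^{1,\bmmu\bmnu})^\natural\partial_\bmmu\partial_\bmnu]\habf\big|\lesssim \sum_{\substack{|K_1|+|K_2|\le|K|\\ |K_2|<|K|}} |Z^{K_1}(H^{1})^\natural_{LL}|\,|\partial_t^2 Z^{K_2}\habf| + \frac{s^2}{t^2}|Z^{K_1}(H^1)^\natural|\,|\partial^2 Z^{K_2}\habf| \\
+ |Z^{K_1}(H^1)^\natural|\,|\partial\pb_x Z^{K_2}\habf| + \sum_{|K_1|+|K_2|\le|K|}\frac{|Z^{K_1}(H^1)^\natural|\,|\partial Z^{K_2}\habf|}{1+t+r}.
\end{multline*}
The two structural facts I would exploit are: first, since $H^{0}$ (see \eqref{dec_H}) is independent of $y$, one has $(H^{1,\bmmu\bmnu})^\natural=H^{\bmmu\bmnu,\natural}$, so the factor $Z^{K_1}(H^1)^\natural$ is a genuine Klein--Gordon quantity, carrying the decay of \eqref{bootKG_in}/\eqref{KS1_hnn} in $L^\infty_xL^2_y$ when few vector fields act on it, or the $L^2_{xy}$ bounds \eqref{bootin3}/\eqref{bootin4} otherwise; second, the zero mode $\habf$ solves a wave equation on $\R^{1+3}$, so its second time derivatives obey the improved estimates \eqref{second_derivatives2}, \eqref{second_derivatives_L2}, \eqref{third_derivative_L2}, i.e. $(s/t)^2\partial_t^2 Z^K\habf$ is controlled in $L^2(\hin_s)$ and $L^\infty(\hin_s)$ with the natural energy exponents.

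The delicate term is the first family, $|Z^{K_1}(H^1)^\natural_{LL}|\,|\partial_t^2 Z^{K_2}\habf|$, where — unlike the $\tfrac{s^2}{t^2}$-weighted term — there is no $(s/t)^2$ gain available to turn $\partial_t^2$ into the controlled quantity $(s/t)^2\partial_t^2$. Here the plan is to pay the loss $(t/s)^2$ and recover it from the Klein--Gordon decay $t^{-3/2}$ of the other factor, using the elementary but crucial fact that on a truncated hyperboloid one has $t\le (s^2+1)/2$, hence $\sup_{\hin_s}\big(t^{1/2}s^{-2}\big)\lesssim s^{-1}$ and therefore $\|t^{-3/2}\partial_t^2 Z^{K_2}\habf\|_{L^2_x(\hin_s)}\lesssim s^{-1}\|(s/t)^2\partial_t^2 Z^{K_2}\habf\|_{L^2_x(\hin_s)}$. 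Note in particular that no wave-gauge improvement on $(H^1)^\natural_{LL}$ is needed: one simply bounds it by $|(H^1)^\natural|$. Combining $\|Z^{K_1}(H^1)^\natural(t,x,\cdot)\|_{L^2_y}\lesssim C_2\ep t^{-3/2}s^{\gamma_{k_1}}$ with the second-derivative bound then yields $\lesssim C_1C_2\ep^2 s^{-3/2+\eta}$, which is stronger than both claimed right-hand sides. The $\tfrac{s^2}{t^2}$-weighted term is easier, since $\tfrac{s^2}{t^2}|\partial^2 Z^{K_2}\habf|\lesssim |(s/t)^2\partial_t^2 Z^{K_2}\habf|+|\pb_x\partial Z^{K_2}\habf|+\cdots$ is directly of controlled type; the $|\partial\pb_x|$ term and the $\tfrac{1}{1+t+r}|\partial Z^{K_2}\habf|$ term are handled by the tangential/lower-order energy bounds \eqref{bootin1}--\eqref{bootin5} together with $t^{-3/2}\le s^{-3/2}$.

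In every term I would split according to whether $|K_1|\le N_1+1$ or $|K_1|>N_1+1$: in the former sub-case $Z^{K_1}(H^1)^\natural$ goes in $L^\infty_xL^2_y$ via \eqref{bootKG_in} and the $\habf$-factor in $L^2_x$; in the latter sub-case $|K_1|>N_1+1$ forces $|K_2|< N+1-(N_1+1)=5\le N_1$ (recall $N_1=N-5$), so $Z^{K_1}(H^1)^\natural$ goes in $L^2_{xy}$ via \eqref{bootin3}/\eqref{bootin4} and the $\habf$-factor in $L^\infty_x$ via \eqref{second_derivatives2} and \eqref{KS4_h}. This same index counting ensures $K_2$ always lands in the range of validity of the appropriate second-derivative lemma — type $(N,k)$ from \eqref{third_derivative_L2} when $|K|=N+1$, type $(N-1,k)$ from \eqref{second_derivatives_L2} when $|K|=N$, type $(N_1,k)$ from \eqref{second_derivatives2} in the $L^\infty$ sub-case. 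All resulting exponents are of the form $-\tfrac32+\eta$ or $-1+\eta$ with $\eta$ linear in $\zeta_k,\gamma_k,\delta_k$, so by \eqref{condition_parameters} one absorbs $\eta\le 2\delta_N$ to reach the stated bounds. The main obstacle is thus not any isolated estimate but the combination of (i) correctly identifying $(H^1)^\natural$, rather than $\habf$, as the Klein--Gordon factor carrying the $t^{-3/2}$ decay and balancing it against the $(t/s)^2$ loss using $t\lesssim s^2$ on $\hin_s$, and (ii) the bookkeeping of vector-field counts and the exponent hierarchy, exactly in the spirit of the companion Lemma proving \eqref{comm_WKGh}.
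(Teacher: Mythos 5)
Your overall approach is conceptually the same as the paper's: crude commutator expansion $|[Z^K,(H^{1,\natural})\partial^2]\habf|\lesssim\sum|Z^{K_1}H^{1,\natural}||\partial^2 Z^{K_2}\habf|$, with the essential observation that $H^{1,\natural}$ is a Klein--Gordon factor carrying $t^{-3/2}$ decay (so no null-frame refinement of the commutator and no wave-gauge improvement on $H^{1,\natural}_{LL}$ is needed), followed by a high/low split on $|K_1|$ vs.\ $|K_2|$ pairing $L^\infty_xL^2_y$ against $L^2_x$. The detour through \eqref{comm_int} is therefore redundant, as you yourself note, and the paper skips it entirely.

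Where your plan departs from the paper's — and where a genuine gap opens — is in the weight bookkeeping for the leading term. You pay $(t/s)^2$ and compensate with $t^{1/2}s^{-2}\lesssim s^{-1}$ (valid since $t\lesssim s^2$), which requires the improved bounds $\|(s/t)^2\partial_t^2 Z^{K_2}\habf\|_{L^2}\lesssim s^{-1/2+\eta}$ from \eqref{second_derivatives_L2}/\eqref{third_derivative_L2}. But these are stated only for $K_2$ of type $(N-1,k)$, or for $\partial_t^2\partial Z^{K}$ with $K$ of type $(N-1,k)$ — i.e., for $Z^{K_2}$ of length $\le N$ containing at least one ordinary derivative. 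In the type-$(N+1,k)$ case with $k=N$, the commutator sum contains the term $Z^{K_1}=\partial$, $Z^{K_2}=\Gamma^{K_2}$ with $|K_2|=N$ all Klainerman vector fields; there the improved bound is unavailable, and the crude fallback $\|(s/t)^2\partial_t^2\Gamma^{K_2}\habf\|_{L^2}\le\|(s/t)\partial_t^2\Gamma^{K_2}\habf\|_{L^2}\lesssim s^{1/2+\zeta}$ combined with your $s^{-1}$ compensation yields only $s^{-1/2+\eta}$, which misses the required $s^{-1+2\delta_N}$. So the claim that ``$K_2$ always lands in the range of validity'' is false at the top level. The paper avoids this by paying only a single factor of $(t/s)$ — using $t^{-3/2}\le s^{-3/2}(s/t)$, which needs only $t\ge s$ — and pairing with the raw energy bound $\|(s/t)\partial^2 Z^{K_2}\habf\|_{L^2}\lesssim s^{1/2+\zeta_k}$ from \eqref{bootin1}, valid for all $|K_2|\le N$; the product is then $s^{-3/2+\gamma}\cdot s^{1/2+\zeta}=s^{-1+\eta}$, as required. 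Your factorization works and even gains an extra $s^{-1/2}$ on the non-extremal terms, but to close the lemma you must treat the extremal term with the paper's single-$(t/s)$ pairing.
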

	\begin{proof}
		We begin by writing that
		\[
		\big|\big[Z^K, (H^{1, \bmmu\bmnu})^\natural \partial_\bmmu \partial_\bmnu\big]\habf \big|\lesssim \sum_{\substack{|K_1|+|K_2|\le |K|\\ |K_2|<|K|}} |Z^{K_1}H^{1,\natural}| |\partial^2 Z^{K_2}\habf|.
		\]
		For any multi-index $K$ of type $(N+1, k)$ with $k\le N$, energy bound \eqref{bootin2} and pointwise bound \eqref{bootinL2y.1} yield
		\begin{multline*}
			\sum_{\substack{|K_1|+|K_2|\le |K|\\ |K_1|\le\floor{|K|/2}, |K_2|<|K|}} \big\|Z^{K_1}H^{1,\natural}\cdot \partial^2 Z^{K_2}\habf\big\|_{L^2_{xy}} \\
			\lesssim \sum_{\substack{|K_1|+|K_2|\le |K|\\ |K_1|\le\floor{|K|/2}, |K_2|<|K|}} \big\|(t/s)Z^{K_1}H^{1, \natural}\big\|_{L^\infty_xL^2_y}\big\|(s/t)\partial^2 Z^{K_2}\habf\big\|_{L^2_x}\lesssim C_1^2\ep^2 s^{-\frac32+2\delta_N}
		\end{multline*}
		while energy bound \eqref{bootin1} and pointwise bound \eqref{KS4_h}  yield
		\begin{multline*}
			\sum_{\substack{|K_1|+|K_2|\le |K|\\ |K_2|\le\floor{|K|/2}}} \big\|Z^{K_1}H^{1,\natural}\cdot \partial^2 Z^{K_2}\habf\big\|_{L^2_{xy}}\\
			\lesssim \sum_{\substack{|K_1|+|K_2|\le |K|\\ |K_2|\le\floor{|K|/2}}} \big\|\partial^2 Z^{K_2}\habf\big\|_{L^\infty_x}\big\|Z^{K_1}H^{1, \natural}\big\|_{L^2_{xy}} \lesssim C_1^2\ep^2 s^{-1+2\delta_N}.
		\end{multline*}
		If $K$ is instead a multi-index of type $(N, k)$ with $k\le N_1$, the above estimate can be improved to $C_1^2\ep^2 s^{-\frac32+2\delta_N}$ using energy bound \eqref{bootin4}.
		
	\end{proof}

	\begin{proof}[Proof of Proposition \ref{prp:commutators}]
		We decompose $H^{1,\mu\nu}$ and $\hab$ appearing in the commutator terms into their zero mode and their zero-average component and express all commutators involving $(H^{1,\mu\nu})^\flat$ with respect to the null framework. 
		
		From \eqref{comm_est_int1}, \eqref{comm_est_int2} and the energy bounds \eqref{bootin1}, \eqref{bootin2} and \eqref{bootin5} we derive that
		\[
		\iint_{\hin_{[s_0, s]}} |[Z^K, (H^{1,\bmmu\bmnu})^\flat \partial_\bmmu\partial_\bmnu]\habf| |\partial_t \phi| dxdydt \lesssim (C_1^3 + C_0^2)\ep^3 s^{\kappa_\phi}
		\]
		with $\kappa_\phi$ given by \eqref{kappa_phi}, while from \eqref{KG-KG in wave} and \eqref{bootin2}
		\[
		\iint_{\hin_{[s_0, s]}} |Z^K\big( (H^{1, \mu\nu})^\natural\cdot\partial_\mu\partial_\nu\hnn_{\alpha\beta}\big)^\flat| |\partial_t Z^K\habf| dx dt \lesssim C_1^3\ep^2 \Big(\sum_{i=0}^4 s^{\mu(\gamma_i+\zeta_{k-i}+\zeta_k)} + s^{2\mu\zeta_k}\Big)
		\]
		where $\mu=0$ if $K$ is of type $(N-1, k)$ with $k\le N_1$, $\mu=1$ otherwise. These two estimates together imply \eqref{comm_zeromode} and \eqref{comm_zeromode_low}.
		
		Additionally, for $K$ of type $(N+1, k)$ with $k\le N$ we get from \eqref{comm_WKGh} and \eqref{bootin1} that 
		\[
		\iint_{\hin_{[s_0, s]}}  |[Z^K, (H^{1,\mu\nu})^\flat \partial_\mu\partial_\nu]\habn| |\partial_t Z^K\hab| dxdydt \lesssim (C_1^3+C_0^2)\ep^3 s\Big(\sum_{i=0}^4 s^{(\gamma_i+\zeta_{k-i}+\zeta_k)} + s^{2\zeta_k}\Big)
		\]
		while for multi-indexes of type $(N, k)$ with $k\le N_1$, estimates \eqref{comm_WKGl} and \eqref{bootin5} yield
		\[
		\iint_{\hin_{[s_0, s]}}  |[Z^K, (H^{1,\mu\nu})^\flat \partial_\mu\partial_\nu]\habn| |\partial_t Z^K\hab| dxdydt \lesssim (C_1^3+C_0^2)\ep^3 s^{2\delta_k}.
		\]
		Finally, from \eqref{comm_KGw} and the energy bounds \eqref{bootin1}, \eqref{bootin5} we get
		\[
		\iint_{[s_0, s]}|[Z^K, (H^{1,\bmmu\bmnu})^\natural \partial_\bmmu\partial_\bmnu]\habf| |\partial_t Z^K\hab| dxdydt \lesssim C_1^3\ep^3 s^{\mu}
		\]
		with $\mu=1$ if $K$ is of type $(N+1, k)$ with $k\le N$, $\mu=1$ if $K$ is of type $(N, k)$ with $k\le N_1$. This concludes the proof of \eqref{comm_high} and \eqref{comm_low}.
	\end{proof}

	\subsection{Propagation of the pointwise bound \eqref{bootKG_in}}

	This section is devoted to the propagation of the pointwise estimates \eqref{bootKG_in} on the zero-average component $\habn$ of the solution. The equation satisfied by $Z^K\habn$ is obtained from the subtraction of \eqref{eq:diff_hff} from \eqref{h1_eqt_higher1}: 
	\begin{equation}\label{eq:habn_diff}
		\begin{aligned}
			\Box_{xy}Z^K\habn + (H^{\mu\nu})^\flat\partial_{\mu}\partial_{\nu} Z^K\habn = & F^{K, \natural}_{\alpha\beta}- (H^{1, \mu\nu})^\natural\partial_{\mu}\partial_{\nu} Z^K\habf - \big((H^{\mu\nu})^\natural\partial_{\mu}\partial_{\nu} Z^K\habn\big)^\natural
		\end{aligned}
	\end{equation}
	where $F^{K,\natural}_{\alpha\beta} = F^K_{\alpha\beta} - F^{K,\flat}_{\alpha\beta}$ and $F^K_{\alpha\beta}$ is defined in \eqref{source_ext_high1}. We observe that, after \eqref{dec_product}, pure zero-mode interactions do not appear in the above right hand side.

	The proof relies on the following result, which is motivated by the early work of Klainerman \cite{klainerman:global_existence} and can be found in slightly different forms in the works of LeFloch-Ma \cite{LeFloch-Ma16}, Dong-Wyatt \cite{DW20} and Huneau-Stingo \cite{HS}.

	\begin{proposition}\label{prp:KG_uniform}
		Suppose that $\phi$ is a solution of the equation
		\begin{equation}\label{eq:decayKG_model_PDE}
			\Box_{xy} \phi + (H^{\mu\nu})^\flat
			\partial_\mu \partial_\nu \phi= F , \quad (t,x,y) \in\mathbb{R}^{1+3}\times\mathbb{S}^1
		\end{equation}
		such that $\int_{\mathbb{S}^1}\phi\, dy=0$. For each $(t,x)$ in the cone $\{t>r\}$, let $s=\sqrt{t^2-r^2}$ and $Y_{tx}, A_{tx}, B_{tx}$ be functions defined as follows
		\begin{align*}
			Y^2_{tx}(\lambda)&:=\int_{\mathbb{S}^1}\lambda\left|\frac32 \phi_\lambda + (\scal\phi)_\lambda\right|^2+\lambda^3|\partial_y\phi_\lambda|^2 dy
			\\
			A_{tx}(\lambda)&:=\sup_{\mathbb{S}^1}\left|\lambda^{-1}\left(\scal\left((t/s)^2 (H^{1,UV})^\flat c^{00}_{UV}\right)\right)_\lambda\right| + \sup_{\mathbb{S}^1}\left|\lambda^{-1}(\scal (H^{1,44})^\flat)_\lambda\right|
			\\& \quad + \sup_{\mathbb{S}^1}\left|\lambda^{-1}(\chi(t/r)\chi'(r)(t/s)^2 M )_\lambda\right|
			\\
			B^2_{tx}(\lambda)&:=\int_{\mathbb{S}^1}\lambda^{-1}|(R[\phi])_\lambda|^2dy,
		\end{align*}
		where $f_\lambda(t,x,y) = f(\tfrac{\lambda t}{s}, \tfrac{\lambda x}{s}, y)$ and $R[\phi]:= R_0[\phi] +\sum_{i=1}^2 R^0_i[\phi] + \sum_{i=1}^3 R^1_i[\phi] - s^2 F$
		with
		\[
		\begin{aligned}
			R_0[\phi]&:=s^2 \pb^\bma \pb_\bma \phi + x^\bma x^\bmb\pb_\bma \pb_\bmb \phi+\tfrac{3}{4} \phi + 3 x^\bma \pb_\bma \phi
			\\
			R^0_1[\phi]& := s^2 \chi(t/r)\chi(r)\frac{M}{r} \left( (x^\bma/t)\partial_t\pb_\bma\phi +(x^\bma/t)\pb_\bma\partial_t\phi- \pb^\bma\pb_\bma\phi+(3/t)\partial_t \phi\right) \\
			R^0_2[\phi]&:= \chi(t/r)\chi(r)\frac{M}{r}\frac{t^2+r^2}{s^2}\cdot Q[\phi]
			\\
			R^1_1[\phi ] &:= s^{2}(H^{1,UV})^\flat\left( c_{UV}^{\bma \beta}\pb_\bma\pb_\beta\phi+   c_{UV}^{\alpha \bmb}\pb_\alpha\pb_\bmb \phi + c_{UV}^{4\bma} \partial_y \pb_\bma \phi+ d_{UV}^{\mu}\pb_\mu\phi\right)
			\\
			R^1_2[\phi] &:= - (t/s)(H^{1,UV})^\flat c^{40}_{UV}\cdot s\left(  \tfrac32  \partial_y \phi + x^\bma\partial_y \pb_\bma\phi \right)
			\\
			R^1_3[\phi] &:= -(t/s)^2(H^{1,UV})^\flat c^{00}_{UV} \cdot Q[\phi]
		\end{aligned}
		\]
		and
		\[
		Q[\phi] = \left( \tfrac34 +s^2 \left( 2(x^\bma/t) \pb_\bma\partial_t + s^{-2} x^\bma x^\bmb \pb_\bma \pb_\bmb + (r^2/t^3)\partial_t + (3/t) \partial_t + 3s^{-2} x^\bma \pb_\bma\right) \right)\phi.
		\]
		Then, in the hyperbolic region $\q H_{[s_0, S_0)}$, the following inequality holds 
		\[
		s^{\frac32} \left( \| \phi\|_{L^2(\mathbb{S}^1)} + \|\partial_y \phi\|_{L^2(\mathbb{S}^1)}\right) + s^{\frac12} \|\scal \phi\|_{L^2(\mathbb{S}^1)}\lesssim
		\left( Y_{tx}(s_0) + \int_{s_0}^s B_{tx}(\lambda) d \lambda\right) \exp^{\int_{s_0}^s A_{tx}(\lambda) d \lambda}.
		\]
	\end{proposition}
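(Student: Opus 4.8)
The strategy is the classical Klainerman ODE argument along hyperbolic rays, adapted to the presence of the compact factor $\mathbb{S}^1$, in which the role of the Klein--Gordon mass is played by the operator $-\partial_y^2$. Fix $(t,x)$ with $t>r$, set $s=\sqrt{t^2-r^2}$, and let $w(\lambda,\cdot):=\lambda^{3/2}\phi_\lambda=\lambda^{3/2}\phi(\lambda t/s,\lambda x/s,\cdot)$ be the rescaled unknown along the radial hyperbolic ray through $(t,x)$, parametrised by the hyperbolic radius $\lambda\in[s_0,s]$. Using $\scal=\lambda\partial_\lambda$ on this ray one checks $\tfrac32\phi_\lambda+(\scal\phi)_\lambda=\lambda^{-1/2}\partial_\lambda w$ and $\lambda^{3/2}\partial_y\phi_\lambda=\partial_y w$, so that $Y_{tx}(\lambda)^2=\|\partial_\lambda w(\lambda)\|_{L^2(\mathbb{S}^1)}^2+\|\partial_y w(\lambda)\|_{L^2(\mathbb{S}^1)}^2$ is exactly the natural energy of a one-dimensional (in $\lambda$) evolution in the $y$-variable. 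Since $\int_{\mathbb{S}^1}\phi\,dy=0$ forces $\int_{\mathbb{S}^1}w\,dy=0$, the Poincar\'e inequality on $\mathbb{S}^1$ gives $\|w(\lambda)\|_{L^2(\mathbb{S}^1)}\lesssim\|\partial_y w(\lambda)\|_{L^2(\mathbb{S}^1)}\le Y_{tx}(\lambda)$ and $\|\scal w(\lambda)\|_{L^2(\mathbb{S}^1)}=\lambda\|\partial_\lambda w(\lambda)\|_{L^2(\mathbb{S}^1)}\le\lambda Y_{tx}(\lambda)$; unravelling these relations at $\lambda=s$ reduces the proposition to the differential inequality $\tfrac{d}{d\lambda}Y_{tx}(\lambda)\lesssim A_{tx}(\lambda)Y_{tx}(\lambda)+B_{tx}(\lambda)$ followed by Gr\"onwall's lemma.

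Next I would derive the ODE satisfied by $w$. Writing $(H^{\mu\nu})^\flat=H^{0,\mu\nu}+(H^{1,\mu\nu})^\flat$ as in \eqref{dec_H}, expressing $\Box_{xy}=\Box_{tx}+\partial_y^2$ via \eqref{flat_wave_hyp} and the curved part via \eqref{eq:curved_part_semihyp}, one collects the $\partial_t^2$ contributions: $(s/t)^2\partial_t^2\phi$ from the flat operator, $(t/s)^2(H^{1,UV})^\flat c_{UV}^{00}\cdot(s/t)^2\partial_t^2\phi$ from $(H^1)^\flat$, and the analogous Schwarzschild term from $H^0$, while the $\partial_y^2$ coefficient picks up $(H^{1,44})^\flat$ (note $H^{0,44}=0$). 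After multiplication by the appropriate power of $s$ and repeated use of $\scal=\lambda\partial_\lambda$, all remaining terms --- the lower order pieces of \eqref{flat_wave_hyp}, the commutators of $\scal$ with the second order operators, the tangential and $\partial_y\pb_\bma$ contributions of the curved part read off from \eqref{eq:curved_part_semihyp}--\eqref{c_coefficients}, the Schwarzschild terms away from its cut-off, and $-s^2F$ --- assemble precisely into $R_0[\phi]$, $R^0_1[\phi]$, $R^0_2[\phi]$, $R^1_1[\phi]$, $R^1_2[\phi]$, $R^1_3[\phi]$ and the auxiliary $Q[\phi]$ of the statement. One thus obtains along the ray an identity of the form $\bigl(1+a_{tx}(\lambda)\bigr)\partial_\lambda^2 w+b_{tx}(\lambda)(-\partial_y^2 w)=(\text{weighted})\,(R[\phi])_\lambda$, where $a_{tx}$ and $b_{tx}$ are $y$-independent and their $\scal$-derivatives, together with the $\scal$-derivative of the Schwarzschild cut-off factor $\chi(t/r)\chi'(r)(t/s)^2M$, are exactly the three terms defining $A_{tx}$.

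I would then run the energy estimate: multiply the ODE by $\partial_\lambda w$ and integrate over $\mathbb{S}^1$. The smallness hypotheses $|H|\lesssim\ep(1+t+r)^{-3/4}$ and $|H^1_{LL}|\lesssim\ep(1+t+r)^{-1-\delta}$, together with the bounds on the $c,d$ coefficients in \eqref{null-hyp-coeff}, guarantee that the principal coefficient $1+a_{tx}$ lies in $[\tfrac12,\tfrac32]$, so it may be divided out. Because $a_{tx},b_{tx}$ and all curved coefficients are $y$-independent, the $\partial_y^2$ term and the surviving $\partial_y\partial_\lambda w$ cross-terms integrate by parts in $y$ without error, leaving $\tfrac12\tfrac{d}{d\lambda}Y_{tx}^2\le\bigl(|\partial_\lambda a_{tx}|+|\partial_\lambda b_{tx}|+\cdots\bigr)Y_{tx}^2+\|(\text{RHS})\|_{L^2(\mathbb{S}^1)}Y_{tx}$; recognising $|\partial_\lambda(\cdot)|=\lambda^{-1}|(\scal\,\cdot\,)_\lambda|$ and matching the $\lambda$-weights identifies these coefficients with $A_{tx}(\lambda)$ and the source norm with $B_{tx}(\lambda)$. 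Dividing by $Y_{tx}$ gives the differential inequality above, and integrating along the ray from $\lambda=s_0$ --- where the data contributes $Y_{tx}(s_0)$, the value of the same energy at hyperbolic radius $s_0$, which for rays leaving the truncated hyperboloids through $\widetilde{\mathcal{H}}$ is in turn controlled by the already-established exterior bounds --- Gr\"onwall's lemma yields the claimed estimate, and the Poincar\'e reductions of the first paragraph conclude.

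The main obstacle is the bookkeeping of the second step: checking term by term that the rescaling of \eqref{eq:decayKG_model_PDE} produces \emph{exactly} the remainders $R_0,R^0_1,R^0_2,R^1_1,R^1_2,R^1_3$ with the stated $s$-weights, keeping each cross-term in a form that is either $y$-independent (hence harmless in the energy identity) or manifestly integrable in $\lambda$, and handling the $\chi(t/r)\chi(r)$ cut-offs of the Schwarzschild part near the light cone so that the extra $\chi'$ contribution lands in $A_{tx}$ rather than spoiling the principal coefficient. The energy estimate and the Gr\"onwall step are then routine.
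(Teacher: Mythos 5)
Your proposal is the same strategy as the paper: set $\omega_{txy}(\lambda)=\lambda^{3/2}\phi_\lambda$ along the hyperbolic ray, rewrite \eqref{eq:decayKG_model_PDE} as a second-order ODE in $\lambda$ with the $\partial_y^2$ term playing the role of the mass, multiply by $\partial_\lambda\omega$, integrate over $\mathbb{S}^1$, and close by Gr\"onwall and the Poincar\'e inequality. The reduction of the conclusion to the differential inequality for $Y_{tx}$, the observation $\tfrac32\phi_\lambda+(\scal\phi)_\lambda=\lambda^{-1/2}\partial_\lambda\omega$, and the treatment of the $\partial_y\partial_\lambda\omega$ cross-term by integration by parts all match the paper's proof.

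There is, however, a concrete gap in your bookkeeping of the principal coefficient. That coefficient is
\[
b \;=\; 1 \;-\; (t/s)^2 (H^{1,UV})^\flat c^{00}_{UV}\;+\;\chi(t/r)\chi(r)\tfrac{M}{r}(t/s)^2(1+r^2/t^2),
\]
and when you differentiate it along the ray, besides the $H^1$-contribution and the $\chi'$-contribution that you correctly identify with two of the three terms of $A_{tx}$, there is an additional contribution
\[
\chi(t/r)\chi(r)(t/s)^2(1+r^2/t^2)\,\partial_\lambda\!\left(\tfrac{M}{r_\lambda}\right)\;|\partial_\lambda\omega|^2,
\]
coming from $M/r_\lambda$ itself (along the ray $(t^2+r^2)/s^2$ and $t/r$ are scaling-invariant, so only $\chi(r_\lambda)$ and $M/r_\lambda$ vary). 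This term is \emph{not} one of the three terms of $A_{tx}$, so the claim that the $\scal$-derivatives of the coefficients ``are exactly the three terms defining $A_{tx}$'' is false as stated, and the energy argument does not close with the stated $A_{tx}$. The paper handles this by a monotonicity observation: $\partial_\lambda (M/r_\lambda)=-(M/(sr))_\lambda<0$, while $\chi\ge 0$, $M>0$, and $(t/s)^2(1+r^2/t^2)>0$, so this extra term is non-positive and can simply be discarded from the right-hand side of the energy identity. The paper flags this as ``crucial,'' and your proposal needs it to prove the proposition as stated.
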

	\begin{proof}
		The wave operator in question writes in terms of the $\partial_t$ and $\pb_\bma$ derivatives as follows
		\[
		-\Box_{xy}  = (s/t)^2\partial^2_t  +2 (x^\bma/t)\pb_\bma \partial_t - \pb^\bma\pb_\bma + (r^2/t^3)\partial_t  + (3/t)\partial_t -\Delta_y.
		\]
		For some $\lambda>0$ and fixed $(t,x,y)$, we define $\omega_{txy}(\lambda):=\lambda^{3/2} \phi(\tfrac{\lambda t}{s}, \tfrac{\lambda x}{s}, y)$ to be the evaluation of $\phi$ on the hyperboloid $\mathcal{H}_\lambda$ dilated by $\lambda^{3/2}$. 
		We compute
		\[
		\begin{aligned}
			\dot{\omega}_{txy} &= \lambda^{1/2}\left(\tfrac32 \phi + (\scal \phi) \right)_\lambda
			\\
			\ddot{\omega}_{txy} &= \lambda^{-1/2}(P \phi)_\lambda := \lambda^{-1/2} \left(\tfrac34 \phi + 3 (\scal \phi) + (t^2\partial_t^2 +2tx^\bma\partial_\bma\partial_t + x^\bma x^\bmb\partial_\bma\partial_\bmb)\phi \right)_\lambda.
		\end{aligned}
		\]
		A calculation shows that 
		\begin{equation}\label{Pphi}
			\begin{aligned}
				P\phi&=  s^2 \left( -\Box_{txy} + \pb^\bma\pb_\bma + \Delta_y\right)\phi + x^\bma x^\bmb \pb_\bma\pb_\bmb \phi+ 3x^\bma\pb_\bma \phi + \tfrac34 \phi .
			\end{aligned}
		\end{equation}
		Using equation \eqref{eq:decayKG_model_PDE} we derive that  $\omega_{txy}(\lambda)$ satisfies 
		\[
		\ddot{\omega}_{txy} - \Delta_y \omega_{txy}
		= \lambda^{-1/2} (s^2 (H^{\mu\nu})^\flat\partial_\mu\partial_\nu \phi)_\lambda +\lambda^{-1/2}R_0[\phi]_\lambda  - \lambda^{3/2} F_\lambda.
		\]
		
		For the curved part in the above expression, we expand $(H^{\mu\nu})^\flat = (H^{1, \mu\nu})^\flat + H^{0, \mu\nu}$ where $H^0$ is defined in \eqref{dec_H}. Starting with the $H^0$ piece, we compute
		\[
		H^{0,\mu\nu}\partial_\mu \partial_\nu \phi= H^{0,\bmmu\bmnu}\partial_\bmmu \partial_\bmnu \phi= -a(t,x) \left(1+(r/t)^2\right)\partial_t^2 \phi + s^{-2} R^0_1[\phi].
		\]
		where for simplicity we put $a(t,x) := \chi(r/t)\chi(r)\tfrac{M}{r}$. 
		Using the calculation for $P\phi$ given in \eqref{Pphi}, we find
		\[\aligned
		- \lambda^{-1/2} (s^2 a(t,x)\left(1+(r/t)^2\right)\partial_t^2 \phi)_\lambda
		&= - \lambda^{-1/2}\left(a(t,x)(t/s)^2(1+r^2/t^2)\cdot s^{2} (s/t)^2 \partial_t^2 \phi\right)_\lambda
		\\&= - \left(a(t,x)(t/s)^2(1+r^2/t^2)\right)_\lambda \ddot{\omega}_{txy} + \lambda^{-1/2} R^0_2[\phi]_\lambda.
		\endaligned
		\]
		For the $H^{1, \flat}$ part, we use \eqref{c_coefficients} and \eqref{eq:curved_part_semihyp}. We find
		\[
		(H^{1,{\mu\nu}})^\flat \partial_\mu\partial_\nu \phi= \big[(H^{1,UV})^\flat c_{UV}^{00}\partial_t^2 + (H^{1,UV})^\flat c_{UV}^{04} \partial_t\partial_y  + (H^{1,44})^\flat \partial_y^2 \big]\phi + s^{-2}R^1_1[\phi].
		\]
		Since we can write $\dot{\omega}_{txy} = \lambda^{1/2}\left( \tfrac32 \phi + (s^2/t)\partial_t\phi + x^\bma\pb_\bma\phi\right)_\lambda$, we find
		\[\aligned
		\lambda^{-1/2} (s^2(H^{1,UV})^\flat c^{04}_{UV} \partial_t\partial_y\phi )_\lambda 
		&= \left((t/s)(H^{1,UV})^\flat c^{04}_{UV} \partial_y ( \lambda^{1/2} \tfrac{s^2}{t}\partial_t \phi)\right)_\lambda
		\\&= \left((t/s)(H^{1,UV})^\flat c^{04}_{UV}\right)_\lambda \partial_y \dot{\omega}_{txy}+\lambda^{-1/2}R^1_2[\phi]_\lambda .
		\endaligned
		\]
		In a similar way, using also the calculation for $P\phi$ given in \eqref{Pphi}, we find
		\[\aligned
		\lambda^{-1/2} (s^2(H^{1,UV})^\flat c^{00}_{UV} \partial_t^2 \phi)_\lambda
		%		&= \lambda^{-1/2}\left( (t/s)^2(H^{1,UV})^\flat c^{00}_{UV}\cdot s^{2} (s/t)^2 \partial_t^2 \phi\right)_\lambda
		&= \left( (t/s)^2(H^{1,UV})^\flat c^{00}_{UV} \right)_\lambda\ddot{\omega}_{txy} + \lambda^{-1/2} R^1_3[\phi]_\lambda.
		\endaligned
		\]
		
		For simplicity, we henceforth write $\omega_{txy} = \omega(\lambda)$ and suppress also the $|_\lambda$ notation. Putting the above computations together, we have
		\begin{equation}\label{eq:decayKG_lambda_PDE}
			\begin{aligned}
				b(t,x) \ddot{\omega}&- (1+ (H^{1,44})^\flat)\Delta_y \omega- (t/s)(H^{1,UV})^\flat c^{04}_{UV} \partial_y \dot{\omega}= \lambda^{-1/2} R[\phi]
			\end{aligned}
		\end{equation}
		where 
		\begin{equation}\label{def_b}	
			b(t,x) := 1 -(t/s)^2(H^{1,UV})^\flat c^{00}_{UV}+\chi(t/r)\chi(r)\frac{M}{r}(t/s)^2(1+r^2/t^2).
		\end{equation}	
		We multiply  \eqref{eq:decayKG_lambda_PDE}  by $\partial_\lambda \omega$, integrate over $\mathbb{S}^1$ and integrate by parts to get:
		\[
		\begin{aligned}
			&\int_\Circle \partial_\lambda \omega \Big( b \partial_\lambda^2 \omega
			- (1+(H^{1,44})^\flat)\Delta_y \omega -  (t/s)(H^{1,UV})^\flat c^{04}_{UV}\partial_y\partial_\lambda \omega \Big) dy 
			\\&= 
			\frac{d}{d\lambda}\Big(\frac12 \int_\Circle b|\partial_\lambda \omega|^2 dy +
			(1+(H^{1,44})^\flat) |\partial_y\omega|^2\Big) - \frac12\int_{\S^1}(\partial_\lambda b)| \partial_\lambda \omega |^2  dy.
		\end{aligned}
		\]
		Recalling the definition of $b$ in \eqref{def_b}, we obtain
		\[
		\aligned
		&\frac{d}{d\lambda}\Big( \int_\Circle b|\partial_\lambda \omega|^2 + \left( 1 +(H^{44})^\flat \right)|\partial_y \omega|^2 dy \Big)
		\\&= - \int_\Circle \partial_\lambda \left((t/s)^2 (H^{1,UV})^\flat c^{00}_{UV}\right) |\partial_\lambda \omega|^2 dy 
		+ \int_\Circle \partial_\lambda \left(\chi(t/r)\chi(r)(t/s)^2(1+r^2/t^2)\right)\tfrac{M}{r} |\partial_\lambda \omega|^2 dy 
		\\&
		+ 	\int_\Circle \chi(t/r)\chi(r)(t/s)^2(1+r^2/t^2)\partial_\lambda \left(\tfrac{M}{r}\right) |\partial_\lambda \omega|^2 dy +  2 \int_\Circle \lambda^{-1/2}R[\phi]\partial_\lambda\omega dy .
		\endaligned
		\]
		We crucially can drop the third term on the RHS above using the fact that $\chi \geq 0$, $M>0$ and the identity $\partial_\lambda (\tfrac{M}{ r})_\lambda =- (\tfrac{M}{s r})_\lambda$. Note also that the cut-off function $\chi$ is supported for  $2r>t>2$ and so in the region $\{t\geq r+1\}$ we have $|\chi(t/r)\chi(r)\tfrac{M}{r}(t^2/s^2)|\lesssim \epsilon$.  
		
		By relation \eqref{H1-UV-cUV-exp} with $\pi = H^{1, \flat}$, the estimates \eqref{null-hyp-coeff}, \eqref{KS6_h},  and \eqref{good_coefficient2}, and the fact that $t/s^2\leq 1$ in the interior of the light cone,  we find
		\[
		\sup_{\mathbb{S}^1}|(t/s)^2(H^{1,UV})^\flat c^{00}_{UV}| + |(H^{1,44})^\flat|
		\lesssim \varepsilon.
		\]
		All together, we obtain
		\[
		\frac{d}{d\lambda}Y^2_{tx}(\lambda)\lesssim A_{tx}(\lambda)Y^2_{tx}(\lambda)+B_{tx}(\lambda)Y_{tx}(\lambda)
		\]
		with $A_{tx}, B_{tx}, Y_{tx}$ as in the statement. From Gr\"onwall's lemma, we have
		\begin{equation}\label{eq:decayKG_Gron}
			Y_{tx}(s)\lesssim \left( Y_{tx}(s_0) + \int_{s_0}^s B_{tx}(\lambda) d \lambda\right) \exp \left( \int_{s_0}^s A_{tx}(\lambda) d \lambda \right),
		\end{equation}
		and the conclusion follows from the Poincar\'e inequality. 
	\end{proof}
	
	\begin{proposition}\label{prp:KG_imp_decay}
		There exists a constant $C_2$ sufficiently large, a finite and increasing sequence of parameters $0< \gamma_k, \delta_k,\zeta_k\ll 1$ satisfying \eqref{condition_parameters}, and $0< \ep_0\ll 1$ sufficiently small such that, under the assumptions of Proposition \ref{prop:bootstrap_int}, we have \eqref{bootKG_in_enh}.
	\end{proposition}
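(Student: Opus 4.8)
The plan is to apply Proposition \ref{prp:KG_uniform} to each differentiated Klein--Gordon component $\phi = Z^K\habn$, where $K = (I,J)$ is a multi-index of type $(n,k)$ with either $|I| \le N_1, |J|=0$, or $|I|+|J| \le N_1+1, |J| = k \le N_1$; we recall that $Z^K\habn$ solves equation \eqref{eq:habn_diff}, so that after subtracting the zero-mode contributions it fits the template \eqref{eq:decayKG_model_PDE} with source $F = F^K_{\text{eff}}$ given by $F^{K,\natural}_{\alpha\beta} - (H^{1,\mu\nu})^\natural\partial_\mu\partial_\nu Z^K\habf - \big((H^{\mu\nu})^\natural\partial_\mu\partial_\nu Z^K\habn\big)^\natural$. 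First I would verify the two running hypotheses of Proposition \ref{prp:commutators}/\ref{prp:KG_uniform} are in force, namely the smallness $|H|\lesssim \ep(1+t+r)^{-3/4}$ and $|H^1_{LL}|\lesssim \ep(1+t+r)^{-1-\delta}$, which follow from \eqref{h0_estimate}, \eqref{KS6_h}, the wave-gauge pointwise bound \eqref{good_coefficient2} and \eqref{wave_cond_interior}. Then the output of Proposition \ref{prp:KG_uniform} reduces everything to bounding three quantities along the rescaling rays: the initial data term $Y_{tx}(s_0)$, the integrated amplification $\int_{s_0}^s A_{tx}(\lambda)\,d\lambda$, and the integrated source $\int_{s_0}^s B_{tx}(\lambda)\,d\lambda$.

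The term $Y_{tx}(s_0)$ is controlled by the data assumptions in Theorem \ref{thm:Main} together with the local existence theory, exactly as in Remark after Proposition \ref{prop:bootstrap_int}; this gives $Y_{tx}(s_0)\lesssim \ep$ (or $\lesssim \ep s_0^{\gamma_k}$, harmlessly). For the amplification factor $A_{tx}(\lambda)$, I would use that each of its three pieces carries a $\scal$ derivative divided by $\lambda$: the first two involve $\scal$ applied to $(t/s)^2(H^{1,UV})^\flat c^{00}_{UV}$ and to $(H^{1,44})^\flat$, which by \eqref{H1-UV-cUV-exp}, \eqref{null-hyp-coeff}, \eqref{KS6_h}, \eqref{good_coefficient2} and the analogous bound on $\scal$ of these coefficients (available because $\scal Z^K$ is again of the same type, using \eqref{KS6_h} and \eqref{good_coeff1_flat} and the fact that $\scal$ counts toward the interior energy hierarchy) are $\lesssim \ep \lambda^{\delta_{k+2}}$ pointwise, hence $\lambda^{-1}$ times this is integrable in $\lambda$ with integral $\lesssim \ep$; the third piece, $\lambda^{-1}(\chi(t/r)\chi'(r)(t/s)^2 M)_\lambda$, is supported in the bounded region $t/2 \le r \le 3t/4$ intersected with the relevant hyperboloids, contributing $O(\ep)$. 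Thus $\exp(\int_{s_0}^s A_{tx})\lesssim 1$ after choosing $\ep_0$ small.

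The genuine work is the source term $\int_{s_0}^s B_{tx}(\lambda)\,d\lambda$, i.e. controlling the $L^2(\S^1)$ norm of $R[\phi] = R_0[\phi] + \sum R^0_i[\phi] + \sum R^1_i[\phi] - s^2 F^K_{\text{eff}}$ rescaled to each hyperboloid $\mathcal{H}_\lambda$, and showing it decays like $\lambda^{-\frac54 + \gamma_k}$ (equivalently $s^{-\frac54}$-type decay after multiplying by $s^{\frac32}$ one recovers the claimed $s^{\gamma_k}$ or $s^0$). The geometric terms $R_0, R^0_1, R^0_2, R^1_1$ are quadratic-in-good-object terms: they all contain either two tangential $\pb_\bma$ derivatives (each gaining a factor $s/t$), or an explicit $M/r$ mass-cutoff factor localized to a bounded set, or a $(H^{1,UV})^\flat$ coefficient that is small by \eqref{good_coefficient2}/\eqref{KS6_h}; combined with the interior $L^\infty_x L^2_y$ bounds \eqref{bootinL2y.1}--\eqref{bootinL2y.6} and \eqref{KS4_h} these give the required decay, and this is a bookkeeping exercise parallel to the one already carried out in Lemma \ref{lem:sup_second_derivatives_wave}. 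The terms $R^1_2, R^1_3$ are handled the same way, gaining smallness from $(H^{1,UV})^\flat c^{00}_{UV}$ and $c^{40}_{UV}$. The term $s^2 F^K_{\text{eff}}$ is where all the earlier estimates are consumed: the semilinear null part $\mathbf{Q}^\natural_{\alpha\beta}$ and the weak-null part $P^\natural_{\alpha\beta}$ are bounded pointwise in $L^\infty_x L^2_y$ via \eqref{null_int}/\eqref{weak_est1}-type arguments refined to the $L^\infty_x L^2_y$ setting using \eqref{bootKG_in} and \eqref{KS4_h}; the cubic term via \eqref{cub_int}; the inhomogeneous $F^{0,K}$ via Lemma \ref{lem:box_h0}; the commutator $[Z^K,H^{\mu\nu}\partial_\mu\partial_\nu]\habn$-type terms and the forced terms $(H^{1,\mu\nu})^\natural\partial_\mu\partial_\nu Z^K\habf$ and $\big((H^{\mu\nu})^\natural\partial_\mu\partial_\nu Z^K\habn\big)^\natural$ via the Klein--Gordon commutator bounds \eqref{KG-KG in wave}, \eqref{KG-KG in general}, \eqref{comm_KGw} and the corresponding $L^\infty_x L^2_y$ versions, using the Poincar\'e inequality on $\S^1$ freely on all $\natural$ components. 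I expect the main obstacle to be a \emph{slow-growth closing issue}: the source contributions from the highest-order wave--Klein--Gordon commutators and from the mixed $\natural$ interactions only decay like $s^{-1+\gamma_{k-i}+\gamma_i}$ type rates when $|K|$ is near $N_1+1$, so one does not get outright integrability but only a logarithmic-type loss; this is precisely why the hierarchy \eqref{condition_parameters} of parameters $\zeta_k \ll \gamma_k \ll \delta_k$ was introduced, and the plan is to induct on $k$ (the number of Klainerman fields), at each stage feeding the already-established bounds for lower $k$ into $B_{tx}$ and choosing $\gamma_k$ slightly larger than the linear combination of lower parameters that appears, so that $\int_{s_0}^s B_{tx}(\lambda)\,d\lambda \lesssim C_2 \ep s^{\gamma_k}$ closes with the constant $C_2$ absorbing the implicit constants once $\ep_0$ is small. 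Finally, the $\partial_{tx}(\partial^I\Gamma^J\habn)$ part of \eqref{bootKG_in_enh} follows from the $\|\scal\phi\|$ and $\|\phi\|$ control in Proposition \ref{prp:KG_uniform} together with the algebraic identity $\partial_t = \tfrac{s^2}{t^2}\partial_t + \tfrac{x^\bma}{t^2}\scal$-type relations and $\pb_{\bmi}=t^{-1}\Omega_{0\bmi}$, converting the hyperboloidal $L^2(\S^1)$ bounds into the stated weighted $L^\infty_x L^2_y$ bounds, and the $\partial_y^{\le 1}$ part follows from the same output since $\partial_y$ commutes with everything.
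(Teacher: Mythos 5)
You follow essentially the paper's route: apply Proposition \ref{prp:KG_uniform} to $\phi = \partial^I\Gamma^J\habn$, then estimate the data term $Y_{tx}(s_0)$, the amplification $\int A_{tx}$, and the source $\int B_{tx}$ separately, closing via the hierarchy \eqref{condition_parameters} with an induction on the number $k$ of Klainerman vector fields. The decomposition of the source $F$ into pure $\natural$--$\natural$ interactions, mixed $\flat$--$\natural$ commutators and semilinear pieces is also the right one, as is the final conversion from $\scal\phi$, $\phi$ control to the stated $\partial_{tx}$ bounds.

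However, your treatment of $A_{tx}$ contains a genuine gap. You claim $\scal$ applied to $(t/s)^2(H^{1,UV})^\flat c^{00}_{UV}$ and to $(H^{1,44})^\flat$ is $\lesssim\ep\lambda^{\delta_{k+2}}$ pointwise and then conclude ``$\lambda^{-1}$ times this is integrable with integral $\lesssim\ep$.'' That inference is false: $\int_{s_0}^s\ep\lambda^{-1+\delta_{k+2}}\,d\lambda\approx\ep s^{\delta_{k+2}}/\delta_{k+2}$ is unbounded, so $\exp(\int A_{tx})$ would diverge and the Gr\"onwall step would collapse. What is actually needed, and what the paper proves, is a \emph{decaying} bound of the form $\sup_{\S^1}|\scal((t/s)^2(H^{1,UV})^\flat c^{00}_{UV})|\lesssim\ep\lambda^{-1/2+\gamma_1}$, and similarly with $\lambda^{-3/2+\delta_2}$ for $\scal(H^{1,44})^\flat$. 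Getting the first of these is delicate precisely in the contribution $(t/s)^2\scal(H^1_{LL})^\flat$: near the boundary of the light cone the factor $(t/s)^2$ can be of size $\lambda$, and it is compensated only because $H^1_{LL}$ is a good (wave-gauge) coefficient. One must combine the identity \eqref{rewritten-scal}, $\scal=(t-r)\partial_t+(r-t)\partial_r+(x^\bma/r)\Omega_{0\bma}$, with the good-coefficient bounds \eqref{good_coefficient11}, \eqref{good_coefficient2} to convert the extra factor $|t-r|\approx s^2/t$ into the required decay in $\lambda$; for a generic coefficient the term $(t/s)^2\scal H^1_{LL}$ would not decay at all. So the wave-coordinate condition enters exactly here, and your plan as written replaces this step with a bound that does not deliver integrability.
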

	
	\begin{proof}
		Throughout the proof, $0<\eta\le 2\delta_N\ll 1$ will denote a constant that linearly depends on $\zeta_k, \gamma_k, \delta_k$.
		We apply Proposition \ref{prp:KG_uniform} to the variable $\mathbf{W}=\partial^I\Gamma^J\habn$ with $|I|+|J|\le N_1+1 = N-4$ and $|J|\le N_1$, governed by the PDE \eqref{eq:habn_diff}.
		
		\textit{1. The $A_{tx}(\lambda)$ term:}
		this is the same for all values of $\mathbf{W}$. 
		Bound \eqref{KS3_hff} gives that 
		\[
		\sup_{\mathbb{S}^1} \lambda^{-1} |(\scal (H^{1,44})^\flat)_\lambda|
		\lesssim \epsilon\lambda^{-\frac32 + \delta_2}.
		\]
		For the other piece of $A_{tx}(\lambda)$ we use \eqref{H1-UV-cUV-exp} with $\pi = H^{1, \flat}$, the identities $\scal(s) = s$ and $\scal(t+r)=t+r$, the pointwise bounds \eqref{KS3_hff}, \eqref{good_coefficient11}, \eqref{good_coefficient2} and the fact that $t/s^2 \leq 1$ in the interior of the lightcone, to derive
		\[
		\begin{aligned}
			|\scal\left((t/s)^2(H^{1,UV})^\flat c^{00}_{UV}\right)|
			&\lesssim  |(s/t)^2 \scal H^1_{\Lb\Lb}| + |(t/s)^2 \scal (H^1_{LL})^\flat| + |\scal H^1_{L\Lb}|
			\lesssim \epsilon s^{-\frac12+\gamma_1}.
		\end{aligned}
		\]
		Note that the estimate of the second term in the above right hand side is obtained using also the following decomposition of the scaling vector field
		\begin{equation}\label{rewritten-scal}
			\scal= (t-r)\partial_t + (r-t)\partial_r + (x^\bma/r)\Omega_{0\bma}.
		\end{equation}
		Consequently
		\[
		\begin{aligned}
			\lambda^{-1} |\left(\scal\left((t/s)^2(H^{1,UV})^\flat c^{00}_{UV}\right)\right)_\lambda|
			&\lesssim \epsilon \lambda^{-\frac32+\gamma_1}.
		\end{aligned}
		\]
		Finally, we use that $\chi(r) =O(1)$ on its support, so that we can bound $|\chi'(r)|\lesssim r^{-2}$ and thus get
		\[
		\lambda^{-1} \left|(\chi(t/r)\chi'(r)(t/s)^2 M )_\lambda\right| \lesssim \epsilon\lambda^{-2}.
		\]
		Bringing all this together gives
		\begin{equation} \label{uniformKG-A}
			\int_{s_0}^s A_{tx}(\lambda)d \lambda \lesssim \epsilon.
		\end{equation}

		\textit{2. The $Y_{tx}(s_0)$ term:}  We evaluate all the expressions here on the hyperboloid $\hin_{s_0}$ for $s_0$ close to 2. We observe that $1+t+r=O(1)$ on $\hin_{s_0}$, so by \eqref{rewritten-scal} and lemma \ref{lm:ksobolev}
		\begin{equation}
			|Y_{tx}(s_0)|
			\lesssim \| \mathbf{W}_{s_0}\|_{L^2_y}
			+ \|(\scal\mathbf{W})_{s_0}\|_{L^2_y}+ \|(\partial_y\mathbf{W})_{s_0}\|_{L^2_y}
			\lesssim
			\Big(\frac{s}{t}\Big)^\frac32 \enint(s_0, Z^{\le 2}\Wf)^\frac12.
			\label{uniformKG-Y2}
		\end{equation}

		\textit{3.a. The $R_0$ term in $B_{tx}(\lambda)$:}
		By \eqref{bootinL2y.1}, \eqref{bootinL2y.5}, and \eqref{bootinL2y.6} we obtain
		\begin{equation}\label{KG-uniform:R1.1}
			\| \lambda^{-1/2}( R_0[\mathbf{W}])_\lambda \|_{L^2_y}\lesssim C_1\ep \lambda^{-\frac32+\zeta_{k+4}} \Big(\frac{s}{t}\Big)^\frac32.
		\end{equation}
		
		\textit{3.b. The $R^1_1$ term in $B_{tx}(\lambda)$:} 
		First note that in the region $\{r<t\}$ we have $|c^{\bma \beta}_{UV}|\lesssim 1$ using \eqref{null-hyp-coeff} and straightforward computations. Thus, by \eqref{bootW_in}, \eqref{bootinL2y.1} and \eqref{bootinL2y.5},
		\begin{equation}\label{KG-uniform:R2.1}
			\begin{aligned}
				\| \lambda^{-1/2}( R^1_1[\mathbf{W}])_\lambda \|_{L^2_y}
				& \lesssim
				\lambda^\frac32 \left| H^{1,\flat}\right| \Big( \left\|\partial \pb \Wf\right\|_{L^2_y} + t^{-1}\|\Wf\|_{L^2_y}\Big)\\
				&\lesssim C_1^2 \ep^2 \lambda^{-\frac32+\gamma_0+\zeta_{k+3}} \Big(\frac{s}{t}\Big)^\frac52.
			\end{aligned}
		\end{equation}
		
		\textit{3.c The $R^1_2$ term in $B_{tx}(\lambda)$:} using \eqref{bootW_in}, \eqref{bootinL2y.1}, \eqref{bootinL2y.5} we get
		\[
		\aligned
		\| \lambda^{-1/2}( R^1_2[\mathbf{W}])_\lambda \|_{L^2_y}& \lesssim \lambda^\frac12 \big|\big((t/s)((H^{1,UV})^\flat c^{40}_{UV}\big)_\lambda\big| \big(\left\|\partial_y\Wf\right\|_{L^2_y} + t \left\|\partial_y \pb\Wf\right\|_{L^2_y}\Big)\\
		& \lesssim C_1^2\ep^2 \lambda^{-\frac32+\gamma_0+\zeta_{k+3}}\Big(\frac{s}{t}\Big)^\frac32.
		\endaligned
		\]

		\textit{3.d.i The $R^1_3$ term in $B_{tx}(\lambda)$:}
		we observe that from \eqref{bootinL2y.1}, \eqref{bootinL2y.5}, \eqref{bootinL2y.6} 
		\[
		\aligned
		\|Q[\Wf]\|_{L^2_y} & \lesssim \big(\|\Wf\|_{L^2_y} + \left\|x^2\pb^2\Wf\right\|_{L^2_y} + \left\| x\pb \Wf\right\|_{L^2_y} \big) + \big(\left\|s^2 \pb\partial \Wf\right\|_{L^2_y} + \left\| (s^2/t)\partial \Wf\right\|_{L^2_y}\big) \\
		& \lesssim C_1\ep t^{-\frac32}s^{\frac12+\zeta_{k+4}} + C_1\ep t^{-\frac52}s^{\frac52+\zeta_{k+3}}
		\endaligned
		\]		
		which, coupled to \eqref{bootW_in} and the fact that $s^{-2}\lesssim t^{-1}$ yields again
		\[
		\aligned
		\| \lambda^{-1/2}( R^1_3[\mathbf{W}])_\lambda \|_{L^2_y} & \lesssim \big|\big(s^{-\frac12}(t/s)^2 (H^{1,UV})^\flat c^{00}_{UV}\big)_\lambda\big|	\|Q[\Wf]_\lambda\|_{L^2_y} \lesssim C_1^2\ep^2 \lambda^{-\frac32 + \gamma_0+\zeta_{k+3}}\Big(\frac{s}{t}\Big)^\frac32.
		\endaligned
		\]
		% 			All the terms in the square bracket have been estimated in steps \textit{3.a -- 3.c}. For the term in $L^\infty_{xy}$, we use \eqref{H1-UV-cUV-exp}, \eqref{KS6_h} and the enhanced estimate \eqref{good_coefficient2} to obtain	
		%			\[
		%			\begin{aligned}
		%				\big\|\big((t/s)^2(H^{1,UV})^\flat c^{00}_{UV}\big)_\lambda\big\|_{L^\infty_{xy}} \epsilon \Big(\frac{s}{t}\Big)^\frac32 \lambda^{-1+\delta_{k+3}} &= \epsilon \big\|\big(t^{-\frac32}s^{\frac12+\delta_{k+3}}(t/s)^2(H^{1,UV})^\flat c^{00}_{UV}\big)_\lambda\big\|_{L^\infty_{xy}} \\
		%				&\lesssim \ep^2 \Big(\frac{s}{t}\Big)^\frac32 \lambda^{-\frac32+2\delta_{k+3}} + \ep^2\Big(\frac{s}{t}\Big)^{\frac12 + \delta}\lambda^{-2+\delta_{k+3}}.
		%			\end{aligned}
		%			\]
		%			
		
		\textit{3.e The $R^0_1$ and $R^{0}_2$ terms in $B_{tx}(\lambda)$:} satisfy the same bounds as $R^1_1$ and $R^1_3$ respectively.
		%			
		%			
		%			 using that $\chi(t/r)\chi(r)$ is supported in $2r>t$ we find
		%			\[\aligned
		%			\| \lambda^{-1/2}( R^0_1[\mathbf{W}])_\lambda \|_{L^2(\S^1)}
		%			&\lesssim
		%			\lambda^{\frac12} M(s/t)\Big[ \|\pb \partial\mathbf{W}_\lambda\|_{L^2(\S^1)} + \|\pb^2\mathbf{W}_\lambda \|_{L^2(\S^1)}\Big]
		%			\\&\quad+\lambda^{-\frac12}M (s/t)^2 \|\partial_{tx}\mathbf{W}_\lambda \|_{L^2(\S^1)}
		%			\endaligned			\]
		%		This is bounded by $C_1^2 \epsilon^2 (s/t)^{5/2}\lambda^{-2+\delta_{k+4}}$ by an application of \eqref{bootinL2y.4}, \eqref{bootinL2y.11},  \eqref{bootinL2y.10}.
		%			
		%						\textit{3.f The $R^0_2$ term in $B_{tx}(\lambda)$:} again using that $\chi(t/r)\chi(r)$ is supported in $2r>t$, we find
		%			\[\aligned
		%			\| \lambda^{-1/2}( R^0_2[\mathbf{W}])_\lambda \|_{L^2(\S^1)}
		%			&\lesssim
		%			M \lambda^{\frac12}\Big(\frac{t}{s}\Big)^3\|\pb^2 \mathbf{W}_\lambda\|_{L^2(\S^1)}
		%			+ M\lambda^{\frac12}\Big(\frac{t}{s}\Big)\|\pb\partial_{tx}\mathbf{W}_\lambda \|_{L^2(\S^1)}
		%			\\&\quad + M\lambda^{-\frac12}\Big(\frac{t}{s}\Big)^2\|\pb\mathbf{W}_\lambda\|_{L^2(\S^1)}
		%				+ M\lambda^{-\frac32} \| \mathbf{W}_\lambda\|_{L^2(\S^1)}
		%				+ M\lambda^{-\frac52} \| \partial_{tx}\mathbf{W}_\lambda\|_{L^2(\S^1)}
		%			\\&\lesssim \ep^2 \left(\frac{s}{t}\right)^{\frac12} \lambda^{-2 +\delta_{k+3}}
		%			\endaligned\]
		%			by injecting the estimates \eqref{bootinL2y.3}, \eqref{bootinL2y.4},  \eqref{bootinL2y.9} and \eqref{bootinL2y.10}. 
		
		\medskip
		In summary, for $\mathbf{W}=\partial^I\Gamma^J\habn$ with $|I|+|J|\le N-4 = N_1+1$
		\begin{equation}\label{uniform-KG:allRs}
			\| \lambda^{-1/2}(R[\mathbf{W}]+s^2 F)_\lambda \|_{L^2(\S^1)}
			\lesssim C_1\ep \lambda^{-\frac32+\eta}\Big(\frac{s}{t}\Big)^\frac32.
		\end{equation}
		
		\textit{4.a The source term $F$:}
		we simply distribute derivatives and vector fields across the nonlinearities given in \eqref{eq:habn_diff}. We begin by analyzing the quadratic interactions of zero-average component with itself, which are of the form 
		\[
		\sum_{\substack{|I_1|+|I_2|=|I|\\ |J_1|+|J_2|\le |J|}} \Big(\partial (\partial^{I_1}\Gamma^{J_1}\hnn) \cdot  \partial (\partial^{I_2}\Gamma^{J_2}\hnn) \Big)^\natural +  \Big((\partial^{I_1}\Gamma^{J_1}\hnn) \cdot\partial^2  ( \partial^{I_2}\Gamma^{J_2}\hnn)\Big)^\natural
		\]
		and recall that there must exist an index $l=1,2$ such that $|I_l|+|J_l|\le \floor{N_1+1/2}$. 
		
		To estimate the first sum, we use \eqref{bootKG_in} and \eqref{KS1_hnn} to obtain that
		\[
		\aligned
		\sum_{\substack{|I_1|+|I_2|=|I|\\ |J_1|+|J_2|\le |J|}}\lambda^\frac32\left\| \Big(\partial (\partial^{I_1}\Gamma^{J_1}\hnn) \cdot  \partial (\partial^{I_2}\Gamma^{J_2}\hnn)\Big)^\natural\right\|_{L^2_y}& \lesssim \lambda^\frac32 \left\|\partial Z^{\le N_1-1}\hnn\right\|_{L^\infty_y}\left\| \partial Z^{\le N_1}\hnn\right\|_{L^2_y}
		\\
		& \lesssim C_1^2\ep^2 \lambda^{-\frac32+\eta}\Big(\frac{s}{t}\Big)^2.
		\endaligned
		\]
	All terms in the second sum are estimated in the same way, besides the one corresponding to $|I_2| + |J_2| = |I|+|J| = N_1+1$. For this one we use \eqref{bootinL2y.1} and \eqref{KS1_hnn}, together with the assumption \eqref{condition_parameters} on the parameters (i.e. $\zeta_i\ll \gamma_j$ for any $i, j$), and derive that
		\[
		\lambda^\frac32 \left\| \Big(\hnn \cdot\partial^2  ( \partial^{I}\Gamma^{J}\hnn)\Big)^\natural\right\|_{L^2_y}\lesssim \lambda^\frac32 \|\hnn\|_{L^\infty_y}\left\|\partial^2 ( \partial^I \Gamma^J\hnn) \right\|_{L^2_y}\lesssim C_1C_2\ep^2 \lambda^{-1+\gamma_k}\Big(\frac{s}{t}\Big)^2.
		\]
Let us note that this term is highly specific to the Kaluza--Klein problem and is absent in the Einstein-Klein Gordon equations. We also observe that, in the case where $|I|+|J|=N_1+1$ but $ |J| \leq N_1-2$ we can use \eqref{bootinL2y.1bis} instead of \eqref{KS1_hnn} to obtain
		$$	\lambda^\frac32 \left\| \Big(\hnn \cdot\partial^2  ( \partial^{I}\Gamma^{J}\hnn)\Big)^\natural\right\|_{L^2_y}\lesssim C_1C_2\ep^2 \lambda^{-\frac{3}{2}+\eta}\Big(\frac{s}{t}\Big)^2.$$
		
		\medskip			
		Next we turn to the mixed interactions between the zero mode and the zero-average component. We begin with the commutator terms $[\partial^I \Gamma^J, (H^{1, \mu\nu})^\flat \partial_\mu\partial_\nu]\habn$, which we rewrite using \eqref{comm_int} with $\pi = H^{1, \flat}$. We focus on treating the following products
		\begin{equation}\label{terms_comm}
			\aligned
		&	\partial^{I_1}\Gamma^{J_1}H^{1, \flat}_{LL} \cdot\partial^2_t (\partial^{I_2}\Gamma^{J_2}\habn),\quad  \partial^{I_1}\Gamma^{J_1}H^{1, \flat}_{4L} \cdot\partial_t \partial_y (\partial^{I_2}\Gamma^{J_2}\habn),\quad \partial^{I_1}\Gamma^{J_1}H^{1, \flat}_{44} \cdot \partial^2_y (\partial^{I_2}\Gamma^{J_2}\habn)\\
	&\text{for }		|I_1|+|I_2| = |I|, \quad |J_1|+|J_2|\le |J|, \quad |I_2|+|J_2|<|I|+|J|
			\endaligned
		\end{equation}
		 the remaining ones being simpler. The latter term can be rewritten using the equation, i.e.
		\[
		\aligned
		\partial_y^2(\partial^{I_2}\Gamma^{J_2}\habn) =  \Big((s/t)^2\partial^2_t  +2 (x^\bma/t)\pb_\bma \partial_t - \pb^\bma\pb_\bma + (r^2/t^3)\partial_t  + (3/t)\partial_t\Big)(\partial^{I_2}\Gamma^{J_2}\habn) \\ + \Box_{xy}(\partial^{I_2}\Gamma^{J_2}\habn).
		\endaligned
		\]
		If $|I_1|>0$, we estimate the first two terms in \eqref{terms_comm} using \eqref{bootKG_in} and \eqref{good_coeff1_flat} and the latter one using \eqref{KS1_hff} and \eqref{bootKG_in}, hence getting that they are all bounded by $C_1C_2\ep^2 \lambda^{-\frac32+\eta}(s/t)^2$.
		If $|I_1|=0$ and $|J_1|>0$, we use \eqref{bootW_in} and \eqref{bootKG_in} for all terms, together with the algebraic relation \eqref{condition_parameters}, obtaining
		\[
		\lambda^\frac32 \left|\Gamma^{J_1}\hff \right| \left\|\partial^2(\partial^I\Gamma^{J_2}\habn)\right\|_{L^2_y} \lesssim  C_2^2\ep^2 \Big(\frac{s}{t}\Big)^\frac{3}{2}
			\lambda^{-1+\gamma_{k}}.
		\]
		
Turning now to the semilinear interactions between the zero-mode and the zero-average component, we immediately obtain from \eqref{bootKG_in} and \eqref{KS1_hff} that
		\[
		\lambda^\frac32 |\partial (\partial^{I_1}\Gamma^{J_1}\hff)| \left\| \partial (\partial^{I_2}\Gamma^{J_2}\hnn)\right\|_{L^2_y}\lesssim C_1C_2\ep^2 \lambda^{-\frac32+\eta}\Big(\frac{s}{t}\Big)^2, \quad |I_2|+|J_2|\le N_1.
		\]
		When $|I_2|+|J_2| = |I|+|J| = N_1+1$, \eqref{bootKG_in} does not provide us with the right power of $(s/t)$, which we instead get using the structure of the semilinear terms. On the one hand, using \eqref{null_structure} together with \eqref{bootKG_in}, relation $\pb_\bma = (1/t)\Omega_{0\bma}$, \eqref{KS2_h} and \eqref{KS1_hff}, we easily derive that
		\[
		\lambda^\frac32 \left\| \mathbf{Q}(\partial \hff,\partial( \partial^{I}\Gamma^{J}\hnn) )\right\|_{L^2_y}\lesssim C_1C_2\ep^2 \lambda^{-\frac32+\eta}\Big(\frac{s}{t}\Big)^2.
		\]
		The cubic terms also satisfy the same estimate as above, we leave the details to the reader.
		
		On the other hand, using lemma \ref{lem:weak_null_frame} we see that
		\begin{multline*}
			\left\|P_{\alpha\beta}(\partial \hff, \partial(\partial^I\Gamma^J\hnn))\right\|_{L^2_y} \\
			\lesssim |\partial \hff_{TU}| \|\partial (\partial^I\Gamma^J\hnn)_{TU}\|_{L^2_y} + |\partial \hff_{LL}| \|\partial (\partial^I \Gamma^J \hnn)_{\Lb\Lb}\|_{L^2_y} + |\partial\hff_{\Lb\Lb}|\|\partial(\partial^I\Gamma^J \hnn)_{LL}\|_{L^2_y}.
		\end{multline*}
		From \eqref{bootKG_in} and \eqref{good2}
		\[
		\lambda^\frac32 |\partial \hff_{TU}| \|\partial (\partial^I\Gamma^J\hnn)_{TU}\|_{L^2_y}\lesssim C_1C_2 \lambda^{-1+\gamma_k}\Big(\frac{s}{t}\Big)^\frac32;
		\]
		from \eqref{bootKG_in} and \eqref{good_coeff1_flat}
		\[
		\lambda^\frac32|\partial \hff_{LL}| \|\partial (\partial^I \Gamma^J \hnn)_{\Lb\Lb}\|_{L^2_y}\lesssim C_1C_2\ep^2 \lambda^{-\frac32+\eta}\Big(\frac{s}{t}\Big)^2;
		\]
		finally, since
		\[
		|\partial(\partial^I\Gamma^J \hnn)_{LL}| \le |\partial(\partial^I\Gamma^J h^1)_{LL}| + |\partial(\partial^I\Gamma^J \hff)_{LL}|
		\]
		from lemma \ref{lem:wave_cond_h1} and pointwise estimates \eqref{bootW_in}, \eqref{bootKG_in}, \eqref{KS1_hff}, \eqref{good_coeff1_flat} we deduce that
		\[
		\|\partial(\partial^I\Gamma^J \hnn)_{LL}\|_{L^2_y}\lesssim C_2 \ep t^{-\frac32}s^{\gamma_{k}}
		\]
		and consequently that
		\[
		|\partial\hff_{\Lb\Lb}|\|\partial(\partial^I\Gamma^J \hnn)_{LL}\|_{L^2_y}\lesssim C_1C_2\ep^2 \lambda^{-\frac32+\eta}\Big(\frac{s}{t}\Big)^2.
		\]
		
		In summary,
		\[
		\lambda^\frac32 \|F_\lambda\|_{L^2_y}\lesssim C_2^2\ep^2 \Big(\frac{s}{t}\Big)^\frac32
		\begin{cases}
			\lambda^{-\frac32+\eta}, \ &\text{if  } |I|\le N_1,\ |J|=0\\
			\lambda^{-1+\gamma_k}, \ &\text{if  } |I|+|J|\le N_1+1,\ |J|=k\le N_1\\

		\end{cases}
		\]
		and
		\[
		\int_{s_0}^s B_{tx}(\lambda)d\lambda \lesssim (C_1\ep +  C_2^2\ep^2) \Big(\frac{s}{t}\Big)^\frac32 
		\begin{cases}
			1 \ &\text{if  } |I|\le N_1,\ |J|=0\\
			s^{\gamma_k}, \ &\text{if  } |I|+|J|\le N_1+1, |J|=k\le N_1\\
		\end{cases}
		\]
		
		Finally, from Proposition \ref{prp:KG_uniform} we obtain that there exists a constant $C>0$ such that
		\[
		\aligned
		s^\frac32\|\partial_y^{\le 1}(\partial^I\Gamma^J\hnn)\|_{L^2} + s^\frac12 \|\scal (\partial^I\Gamma^J\hnn)\|_{L^2} \le C\Big(\frac{s}{t}\Big)^\frac32 \enint(s_0, Z^{\le 2}(\partial^I\Gamma^J\hnn))^\frac12 \\
		+ C(C_1\ep +  C_2^2\ep^2) \Big(\frac{s}{t}\Big)^\frac32 
		\begin{cases}
			1 \ &\text{if  } |I|\le N_1,\ |J|=0\\
			s^{\gamma_k}, \ &\text{if  } |I|+|J|\le N_1+1, |J|=k\le N_1\\
		\end{cases}
		\endaligned
		\]
		The conclusion of the proof then follows from the following relation
		\[
		\partial_t = \frac{t}{s^2}\scal - \frac{tx^\bmj}{s^2}\pb_\bmj, \quad \partial_\bmj = \pb_\bmj -\frac{x_\bmj}{s^2}\scal + \frac{x_\bmj x^\bmk}{s^2}\pb_\bmk
		\]
		and by choosing $C_2$ sufficiently large so that $CC_1\ll C_2$ and  $C\enint(s_0, Z^{\le 2}(\partial^I\Gamma^J\hnn))^\frac12\ll (C_2\ep)$, together with $\ep_0$ sufficiently small so that $CC_2\ep\ll 1$.
	\end{proof}

\begin{remark}\label{remark_lossKG}
It will be useful in view of Proposition \ref{prpenhancedsup} to observe that the loss $s^{\gamma_k}$ in the estimate of $\partial^I\Gamma^J\habn$ when $|I|+|J|\le N_1$ and $|J|=k$ is only due to the following contributions, which arise from the commutator term between zero modes and zero-average components
\[
\Gamma^{J_1}\hff \cdot \partial^2 (\partial^I\Gamma^{J_2}\habn), \quad |J_1|>0.
\]
\end{remark}

	\subsection{Enhanced energy bounds for the zero modes}

	The goal of this subsection is to show that the lower order energies of the zero-modes enjoy enhanced energy estimates compared to \eqref{boot2_in}.

	\begin{proposition}\label{prop:low-order-zero-mode-energy}
		Under the assumptions of Proposition \ref{prop:bootstrap_int}, we have that for any fixed $s\in [s_0, S_0)$ and any multi-index $K$ of type $(N-1,N_1)$
		\begin{equation}\label{energy_wave_lower}
			\enint (s, Z^K \hff)^{1/2} \lesssim C_1 \epsilon s^{3\sigma}
		\end{equation}
		where $0<\sigma\ll \gamma_0$ is the rate of growth of the exterior energies.
	\end{proposition}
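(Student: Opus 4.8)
The plan is to run the zero-mode energy estimate \eqref{en_ineq_habf_high} with $W = Z^K\habf$ for $K$ of type $(N-1,N_1)$, exploiting the fact that at this sub-top order the equation \eqref{eq:diff_hff} for $Z^K\habf$ carries no term forcing more than a $C\ep$-power of growth. More precisely, I will show that every term on the right-hand side of \eqref{en_ineq_habf_high} is either (i) the flux through $\tilde{\hcal}_{s_0 s}$, which by \eqref{boundary_term_int} is $\lesssim C_0^2\ep^2 s^{2\sigma+C\ep}$; (ii) a source whose $L^2(\hin_\tau)$ norm, after pairing with $\enint(\tau, Z^K\habf)^{1/2}$ and inserting the a priori bound \eqref{boot2_in}, is $L^1_\tau$-integrable and contributes at most $\lesssim\ep^2$; or (iii) of the schematic form $\int_{s_0}^s \tfrac{C\ep}{\tau}\,\enint(\tau, Z^{K'}\hff)\,d\tau$ with $K'$ again of type $(N-1,N_1)$. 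Treating the finitely many such multi-indexes simultaneously, and inducting on the number $k\le N_1$ of Klainerman fields (so that multi-indexes of type $(N-1,k)$ are handled in increasing $k$) to absorb strictly lower-order contributions, Gr\"onwall's inequality then yields $\enint(s, Z^K\habf)\lesssim C_1^2\ep^2 s^{2\sigma+CC_1\ep}$, and choosing $\ep_0\ll1$ small enough (depending on $C_1,\sigma$) so that $CC_1\ep<\sigma$ gives \eqref{energy_wave_lower} in the slightly stronger form covering all $K$ of type $(N-1,k)$, $k\le N_1$.

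For the bookkeeping in (ii): the explicit inhomogeneity $(F^{0,K}_{\alpha\beta})^\flat$ obeys $\|Z^K\tilde{\Box}_g h^0\|_{L^2(\hin_s)}\lesssim \ep s^{-3/2}$ by \eqref{est_F0_int}; the differentiated null and cubic terms $Z^K\mathbf{Q}^\flat_{\alpha\beta}$ and $Z^K G^\flat_{\alpha\beta}$ are controlled by \eqref{null_lower_int} and \eqref{cub_lower_int}, i.e. by $(C_1\ep)^2 s^{-3/2+\eta}$ and $(C_1\ep)^3 s^{-2+\eta}$, both integrable against $\enint^{1/2}$ with room to spare; and the entire block of commutator and wave--Klein--Gordon contributions, namely
\[
\iint_{\hin_{[s_0,s]}}\big|\big([Z^K,H^{1,\mu\nu}\partial_\mu\partial_\nu]\hab\big)^\flat\big|\,|\partial_t Z^K\habf| + \iint_{\hin_{[s_0,s]}}\big|\big((H^{1,\mu\nu})^\natural\cdot\partial_\mu\partial_\nu Z^K\habn\big)^\flat\big|\,|\partial_t Z^K\habf|,
\]
is precisely the quantity bounded in \eqref{comm_zeromode_low} for multi-indexes of type $(N-1,k)$ with $k\le N_1$, hence $\lesssim (C_0^2+C_1^2)C_2^2\ep^3$; the corresponding $H^0$-commutator is short range and is handled straightforwardly using \eqref{h0_estimate}. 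The curved-background terms $\partial^\mu{H_\mu}^\sigma\cdot\partial_\sigma Z^K\habf\cdot\partial_t Z^K\habf$ and $\partial_t{H_\mu}^\sigma\cdot\partial_\sigma Z^K\habf\cdot\partial^\mu Z^K\habf$ fall into case (iii): using the null-frame identities \eqref{eq_null1}, \eqref{eq_null2}, the wave-condition bound \eqref{wave_cond_interior} for $\partial H^1_{LL}$, and the pointwise and $L^2$ bounds of subsection \ref{sub:KSbounds}, they are bounded by $\int_{s_0}^s \tfrac{C\ep}{\tau}\,\enint(\tau, Z^K\habf)\,d\tau$ plus a bounded remainder, exactly as in the treatment of \eqref{H_contribution_ext}.

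The delicate contribution is the weak-null term $\iint_{\hin_{[s_0,s]}}|Z^K P^\flat_{\alpha\beta}(\partial h, \partial h)|\,|\partial_t Z^K\habf|$, and here the point is estimate \eqref{weak_est4}: since $K$ is of type $(N,k)$ with $k=N_1$, the indicator $\delta_{k>N_1}$ vanishes, so
\[
\|Z^K P^\flat_{\alpha\beta}(\partial h, \partial h)\|_{L^2_{xy}(\hin_s)}\lesssim \frac{C_1\ep}{s}\sum_{K'}\enint(s, Z^{K'}\hff)^{1/2} + C_1\ep\, s^{-1+C\ep}\sum_{K''}\enint(s, Z^{K''}\hff)^{1/2} + C_1^2\ep^2 s^{-\frac32+2\delta_N},
\]
with $K'$ of type $(|K|,k)=(N-1,N_1)$ and $K''$ of type $(|K|-1,k-1)=(N-2,N_1-1)$. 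The last term is $L^1_\tau$-integrable against $\enint^{1/2}$ and contributes $\lesssim C_1^3\ep^3$; the $K''$-sum involves strictly fewer Klainerman fields (and is a fortiori of type $(N-1,N_1-1)$), hence is absorbed by the induction on $k$; and the $K'$-sum, after Cauchy--Schwarz, contributes exactly $\lesssim \int_{s_0}^s \tfrac{C_1\ep}{\tau}\big(\sum_{K'}\enint(\tau, Z^{K'}\hff) + \enint(\tau, Z^K\habf)\big)\,d\tau$, a term of type (iii). It is precisely the vanishing of the $\delta_{k>N_1}$ contribution — which at top order injects the non-integrable growth $s^{-1+\gamma_i+\zeta_{k-i}}$ — that makes this closure possible, and this is what restricts the statement to order $N-1$. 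Summing \eqref{en_ineq_habf_high} over all $K$ of type $(N-1,N_1)$, inserting $\enint(s_0, Z^K\habf)\lesssim\ep^2$, and applying Gr\"onwall's inequality to $s\mapsto\sum_K\enint(s, Z^K\habf)$ then closes the estimate. The main obstacle is organizational: verifying that no term on the right-hand side of \eqref{en_ineq_habf_high} escapes the template (i)--(iii) — in particular that the weak-null zero-mode bound never produces the bad $\delta_{k>N_1}$ term at this order — and arranging the Gr\"onwall/induction so that the finitely many multi-indexes of type $(N-1,N_1)$ are treated together.
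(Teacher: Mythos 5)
Your proposal follows essentially the same route as the paper: the zero-mode energy inequality \eqref{en_ineq_habf_high} for multi-indexes of type $(N-1,k)$ with $k\le N_1$, with the source terms controlled by \eqref{null_lower_int}, \eqref{cub_lower_int}, \eqref{comm_zeromode_low}, \eqref{weak_est4}, \eqref{est_F0_int} and \eqref{boundary_term_int}, the crucial observation that $\delta_{k>N_1}=0$ kills the non-integrable $s^{-1+\gamma_i+\zeta_{k-i}}$ growth in \eqref{weak_est4} at this order, and an induction on $k$ closed by Gr\"onwall. The one imprecision is calling the $H^0$-commutator ``short range'': since $H^0\sim M/r$ is a long-range tail, $[Z^K,H^{0,\bmmu\bmnu}\partial_\bmmu\partial_\bmnu]\hff$ is not an integrable source but contributes a Gr\"onwall-type term $\propto\ep\,\tau^{-1}\enint(\tau,Z^{K'}\hff)^{1/2}\enint(\tau,Z^K\hff)^{1/2}$, exactly as in the paper's treatment --- which, however, fits your category (iii) and does not affect the argument's validity.
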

	The proof of the above statement is based on energy inequality \eqref{en_ineq_habf_high}. We recall the estimates already obtained in the previous subsections on the quadratic null terms \eqref{null_lower_int}, on quadratic weak null terms \eqref{weak_est4} and on the cubic terms \eqref{cub_lower_int} appearing in $F^{K, \flat}_{\alpha\beta}$, as well as on the commutator terms \eqref{comm_zeromode_low} and on the contributions coming from $F^{0,K}_{\alpha\beta}$ \eqref{est_F0_int}. We complete the picture with the estimates of the remaining trilinear terms appearing in the right hand side of \eqref{en_ineq_habf_high}.

	\begin{lemma}
		For any multi-index $K$ of type $(N, k)$ we have that
		\begin{gather}\label{int_H_energy}
			\iint_{\hin_{[s_0, s]}}\hspace{-10pt} |{\partial^\mu H_\mu}^\sigma\cdot \partial_\sigma Z^K \hab| |\partial_t Z^K\hab| + \frac{1}{2}|{\partial_t H_\mu}^\sigma \cdot \partial_\sigma Z^K\hab\cdot \partial^\mu Z^K\hab| \, dtdx \lesssim C_1^3 \ep^3 s^{\frac12 + 3\delta_N} \\
			\label{int_H_energy_flat}
			\iint_{\hin_{[s_0, s]}}|{\partial^\mu H_\mu}^\sigma\cdot \partial_\sigma Z^K \habf| |\partial_t Z^K\habf| + \frac{1}{2}|{\partial_t H_\mu}^\sigma\cdot\partial_\sigma Z^K\habf\cdot \partial^\mu Z^K\habf| \, dtdx \lesssim C_1^3 \ep^3,
		\end{gather}
		and for multi-indexes $K$ of type $(N_1, k)$
		\begin{equation} \label{int_H_energy_lower}
			\iint_{\hin_{[s_0, s]}}|{\partial^\mu H_\mu}^\sigma \cdot \partial_\sigma Z^K \hab| |\partial_t Z^K\hab| + \frac{1}{2}|{\partial_t H_\mu}^\sigma\cdot\partial_\sigma Z^K\hab\cdot \partial^\mu Z^K\hab| \, dtdx \lesssim C_1^3 \ep^3 .
		\end{equation}
	\end{lemma}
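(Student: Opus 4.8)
The proof follows the scheme used for the exterior estimates \eqref{dH_contribution_ext}--\eqref{H_contribution_ext}, now combining the null-frame decomposition with the hyperboloidal foliation of $\dint$. The plan is to expand the two integrands with the identities \eqref{eq_null1}--\eqref{eq_null4}, which write $\partial^\mu {H_\mu}^\sigma\cdot\partial_\sigma\phi$ and $\partial_t {H_\mu}^\sigma\cdot\partial_\sigma\phi\cdot\partial^\mu\phi$ (with $\phi=Z^K\hab$, resp. $\phi=Z^K\habf$) as linear combinations of terms of exactly two types: those carrying a \emph{good coefficient}, i.e. a derivative $\partial H_{LT}$ or an undifferentiated $H_{LL}$, $H_{LT}$, and those carrying a \emph{good derivative} $\op\phi$ of the solution. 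Both features are subcritical: the good coefficients are controlled through the wave-gauge inequality \eqref{wave_cond_interior}, and the good derivatives through \eqref{der_null_semihyp}, $|\op\phi|\lesssim(s/t)^2|\partial\phi|+|\pb_x\phi|$, which produces either an extra $(s/t)$ relative to the energy norm or a factor that sits directly inside $\enint$. Note that in all these terms the tensor $H$ carries at most one derivative, so its (necessarily low-order) factor can always be placed in $L^\infty$.

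Concretely, I would split $H=H^0+H^1$ as in \eqref{dec_H}; the $H^0$ part is short-range and is handled with \eqref{h0_estimate} together with \eqref{div_H0}, which shows $\partial_\mu H^{0,\mu\sigma}$ is supported in $\{1/2<r/t<3/4\}$ with size $\q O((1+t+r)^{-2})$, thus compensating the non-boundedness in $\dint$ of the Schwarzschild cut-off. For the $H^1$ part, the good-coefficient pieces $\partial H^1_{LL}$, $\partial H^1_{LT}$, $H^1_{LL}$ are estimated via \eqref{wave_cond_interior}, which bounds $|\partial Z^{\le K}H^1|_{\q L\q T}$ by $(s/t)^2|\partial Z^{\le K}H^1|+|\pb Z^{\le K}H^1|+r^{-1}|Z^{\le K}H^1|$ plus quadratic and mass remainders, paired with the interior $L^2$ bounds \eqref{bootin1}, \eqref{bootin5}, the Hardy bounds \eqref{bootin_hardy1}, \eqref{bootin_hardy2}, and the pointwise bounds \eqref{KS4_h}, \eqref{KS6_h}, \eqref{good_coefficient1}, \eqref{good_coefficient2}. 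The good-derivative pieces $\partial H\cdot\op Z^K\hab\cdot\partial_t Z^K\hab$ are rewritten with \eqref{der_null_semihyp}; since $|H|\lesssim\ep$ pointwise (from \eqref{h0_estimate}, \eqref{KS6_h}), they too integrate to a subcritical quantity. Using the cubic bounds \eqref{cub_int}, \eqref{cub_lower_int} for the quadratic remainders coming from \eqref{wave_cond_interior}, applying Cauchy--Schwarz to the two solution factors, and inserting the bootstrap exponents \eqref{boot1_in}--\eqref{boot3_in} — $s^{\frac12+\zeta_k}$ for $\hab$, $s^{\zeta_k}$ for $\habf$, $s^{\delta_k}$ in the lower range — one arrives at \eqref{int_H_energy}, \eqref{int_H_energy_flat}, \eqref{int_H_energy_lower}. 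For the zero-mode bounds one works throughout with the $\flat$-components, using in addition the product rule \eqref{dec_product}, the sharper $\flat$-bounds \eqref{good_coeff1_flat}, \eqref{good_coefficient2}, and the $L^2$ wave-gauge bound \eqref{good_coeff_L2}.

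I expect the zero-mode estimate \eqref{int_H_energy_flat} (and its lower-order analogue) to be the main obstacle, since there the time-integrated quantity must be bounded by a constant $C_1^3\ep^3$ with \emph{no} power of $s$: the borderline contribution is the one in which $\partial H^{1,\flat}$ carries many vector fields and $Z^K\habf$ few, where the cruder pointwise decay of $\partial H^{1,\flat}_{LL}$ is insufficient and one must instead invoke the $L^2$ wave-gauge bound \eqref{good_coeff_L2} with the low-order factor $\partial_t Z^K\habf$ placed in $L^\infty$ via \eqref{KS1_hff}. One must also check that the mass term and the quadratic remainders in \eqref{wave_cond_interior} are genuinely subcritical in $s$. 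Everything else reduces to the same bookkeeping — keeping track of which combinations of the $\flat/\natural$ splitting, of \eqref{wave_cond_interior}, and of the $(s/t)$-weights saturate the critical rate — and otherwise follows the exterior computations of \eqref{dH_contribution_ext}--\eqref{H_contribution_ext} essentially verbatim.
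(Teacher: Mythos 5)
Your overall route is the paper's: expand the integrands with the null-frame identities \eqref{eq_null1}, \eqref{eq_null2}, single out the good coefficient $\partial H_{LL}$ (controlled by the wave gauge) and the good derivative $T\phi$, place the low-order $H$-factor in $L^\infty$, and pair the two solution factors via Cauchy--Schwarz against the interior energies. That part is correct. The supporting tools you list — \eqref{wave_cond_interior}, \eqref{der_null_semihyp}, \eqref{h0_estimate}, the change of variables $dtdx = (\tau/t)\,dxd\tau$, the bootstrap energies \eqref{bootin1}, \eqref{bootin2}, \eqref{bootin5}, and the pointwise bounds \eqref{good_coefficient1}, \eqref{good_coeff1_flat}, \eqref{KS1_hff}, \eqref{KS6_h} — are the right ones. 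Note however that \eqref{eq_null3}, \eqref{eq_null4} are not needed here: those handle the $w'(q)$-weighted boundary contributions of the exterior inequality \eqref{energy_ineq_ext}, which have no counterpart in the interior inequalities \eqref{en_ineq_hab_high}, \eqref{en_ineq_habf_high}.

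The genuine problem is structural. In the bulk terms of this lemma, the integrand is $\partial^\mu {H_\mu}^\sigma\cdot\partial_\sigma Z^K\phi\cdot\partial_t Z^K\phi$ (and the symmetric one): the \emph{entire} multi-index $K$ sits on both solution factors, while $H$ carries exactly one ordinary derivative and \emph{no} vector fields. There is therefore no configuration ``in which $\partial H^{1,\flat}$ carries many vector fields and $Z^K\habf$ few'' — that scenario simply does not arise here; it is the characteristic difficulty of the \emph{commutator} estimates (Proposition \ref{prp:commutators}), which is a different lemma. As a consequence, the tools you recruit to handle this nonexistent borderline case — the $L^2$ wave-gauge bound \eqref{good_coeff_L2}, the cubic estimates \eqref{cub_int}, \eqref{cub_lower_int}, the Hardy bounds \eqref{bootin_hardy1}, \eqref{bootin_hardy2} — are all superfluous. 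The undifferentiated $\partial H$ can always go in $L^\infty$: after the wave gauge and the Klein--Gordon decay of $H^{1,\natural}$ one gets $(t/\tau)|\partial H_{LL}| + (\tau/t)|\partial H|\lesssim C_1\epsilon\,\tau^{-3/2+\eta}$ uniformly on $\hin_\tau$, and the three bounds in the lemma then drop out from the $\tau^{-3/2}$-convergence of the time integral against $\enint(\tau,\cdot)\lesssim(C_1\epsilon)^2 \tau^{1+2\zeta}$, $\tau^{2\zeta}$, or $\tau^{2\delta}$ respectively. Your proposal would eventually produce the result, but only after discarding the spurious ``many-vector-field'' branch; you should first disentangle which derivatives and vector fields the bulk term actually distributes.
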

	\begin{proof} It is a straightforward consequence of \eqref{eq_null1}, \eqref{eq_null2} applied to $\phi = Z^K\hab$ and $\phi = Z^K\hff_{\alpha\beta}$ respectively, coupled with the pointwise bounds \eqref{h0_estimate}, \eqref{KS1_hff},  \eqref{good_coeff1_flat} and with the energy bounds \eqref{bootin1} to get \eqref{int_H_energy}, with \eqref{bootin2} to get \eqref{int_H_energy_flat} and with \eqref{bootin5} to get \eqref{int_H_energy_lower}.

	\end{proof}
	
	\begin{proof}[Proof of Proposition \ref{prop:low-order-zero-mode-energy}]
		This follows by plugging the estimates obtained so far in the energy inequality \eqref{en_ineq_habf_high}. In fact, from Lemma \ref{lem:prod_h0}, estimates \eqref{null_lower_int}, \eqref{cub_lower_int}, \eqref{weak_est4}, \eqref{comm_zeromode_low} and the a-priori energy bound \eqref{bootin2} there exists a constant $C>0$ such that for any fixed multi-index $K$ of type $(N_1-1, k)$  we have
		\[
		\begin{aligned}
			\iint_{\hin_{[s_0,s]}}\hspace{-10pt} | F^{K, \flat}_{\alpha\beta}(h)(\partial h, \partial h)||\partial_t Z^K\hff_{\alpha\beta}|dxdt +  \iint_{\hin_{[s_0, s]}} \big|(H^{1, \mu\nu})^\natural\cdot \partial_\mu\partial_\nu Z^K\habn\big)^\flat \big| |\partial_t Z^K\habf| dx dt\\
			\le \int_{s_0}^s CC_1\ep \tau^{-1}\sum_{K'}\enint(\tau, Z^{K'}\hff)^{1/2}\enint(\tau, Z^K\hff_{\alpha\beta})^{1/2}d\tau\\
			+ \int_{s_0}^s  CC_1\ep\tau^{-1+C\ep}\sum_{K''}\enint(\tau, Z^{K''}\hff)^{1/2}\enint(\tau, Z^K \hff_{\alpha\beta})^{1/2} d\tau + (C_0^2+C_1^2)C_2^2\ep^3
		\end{aligned}
		\]
		where $K'$ denote multi-indexes of type $(|K|, k)$ and $K''$ multi-indexes of type $(|K|-1, k-1)$.
		Furthermore, from \eqref{h0_estimate} 
		\begin{multline*}
			\iint_{\hin_{[s_0s]}} |[Z^K, H^{0, \bmmu\bmnu} 
			\partial_\bmmu\partial_\bmnu]\hff_{\alpha\beta}||\partial_t Z^K\hff_{\alpha\beta}|dxdt \\
			\le \int_{s_0}^s C\ep \tau^{-1}\sum_{K'}\enint(\tau, Z^{K'}\hff)^{1/2}\enint(\tau, Z^K \hff_{\alpha\beta})^{1/2}d\tau. 
		\end{multline*}
		Summing the above estimates up together with \eqref{boundary_term_int}, \eqref{est_F0_int} and \eqref{int_H_energy_flat} we get that there exists some positive constant $C>0$ such that
		\begin{multline*}
			\enint(s, Z^{\le K}\hff_{\alpha\beta})\le C\enint(s_0, Z^{\le K}\hff_{\alpha\beta}) + CC_0^2\ep^2s^{2\sigma+C\ep} + C(C_0^2+C_1^2)C_2^2\ep^3  \\
			+ \int_{s_0}^s CC_1\ep \tau^{-1}\sum_{K'}\enint(\tau, Z^{K'}\hff)^{1/2}\enint(\tau, Z^{\le K}\hff)^{1/2} d\tau\\
			+ \int_{s_0}^s  CC_1\ep\tau^{-1+C\ep}\sum_{K''}\enint(\tau, Z^{K''}\hff)^{1/2}\enint(\tau, Z^{\le K} \hff)^{1/2} d\tau.
		\end{multline*}
		Performing an induction argument on $k$, it then follows that there exist some positive constants $c_1<c_2<\dots<c_N$ such that
		\[
		\enint(s, Z^{\le K}\hff_{\alpha\beta})\le C\big(\enint(s_0, Z^{\le K}\hff_{\alpha\beta})+  CC_0^2\ep^2 + (C_0^2+C_1^2)C_2^2\ep^3\big)
		s^{2\sigma+c_k\ep} 
		\]
		so the end of the proof follows from the smallness assumptions on the data and after choosing $0<\ep_0\ll 1$ sufficiently small so that $c_N\ep_0\le \sigma$.
	\end{proof}
	
	The improved lower-order energy estimate \eqref{energy_wave_lower} leads to the following improved sup-norm estimates. These are obtained  following the proofs of their analogues with $s^{\delta_k}$ losses and using the energy bound \eqref{energy_wave_lower} in place of \eqref{bootin2}.
	For any multi-index $K$ of type $(N-3, N_1)$, bounds \eqref{KS1_hff} and \eqref{KS6_h} are enhanced to the following ones
	\begin{align}\label{KS-hff1}
		\big\|s t^\frac12 \partial Z^K \hff_{\alpha\beta}\big\|_{L^\infty(\hin_s)} + \big\|t^\frac32\pb Z^K \hff_{\alpha\beta}\big\|_{L^\infty(\hin_s)} + \big\|t^\frac12 Z^K \hff_{\alpha\beta}\big\|_{L^\infty(\hin_s)} &\lesssim C_1 \epsilon  s^{3\sigma} 
	\end{align}
	and bounds \eqref{good_coeff1_flat}, \eqref{good_coefficient2} are improved to
	\begin{gather}
		\|t^\frac{3}{2}\partial Z^K (H^1_{LT})^\flat\|_{L^\infty(\hin_s)} + \|t^\frac12 (t/s)^2 Z^K (H^1_{LT})^\flat\|_{L^\infty(\hin_s)} \lesssim C_1\ep s^{3\sigma} \label{KS-hff4}.
	\end{gather}
	%		What is more, \eqref{KG_imp_decay.2a} is enhanced to
	%		{\red
	%			\begin{align}
	%				\sup_{\q H_{[s_0, S_0)}} t^\frac32 \left( \| Z\hnn \|_{L^2(\S^1)} + \| \partial_y Z \hnn\|_{L^2(\S^1)} \right)
	%				+ \sup_{\q H_{[2, s_0]}} t^\frac32 s^{-1} \| \scal Z\hnn \|_{L^2(\S^1)}
	%				& \lesssim \epsilon s^{3\sigma}. \label{KG_imp_decay.3a}
	%		\end{align}}
	Furthermore, for multi-indexes $K$ of type $(N-4, k)$ we can also improve \eqref{second_derivatives2} to the following
	\begin{equation} \label{KS-hff6}
		\|t^\frac32 (s/t)^2\partial^2_t Z^K\hff_{\alpha\beta}\|_{L^\infty(\hin_s)}\lesssim C_1\ep s^{-1+6\sigma}.
	\end{equation}

	\subsection{Propagation of pointwise bound \eqref{bootW_in}} 
	
	This section is dedicated to the proof of the enhanced pointwise bound \eqref{bootW_in}, see proposition \ref{prp:Linf-Linf_wave}. We will make use of the following lemmas, which are due respectively to Alinhac \cite{Alinhac06},  Asakura \cite{asakura86} and Katayama-Yokoyama \cite{KY06}.
	
	\begin{lemma}\label{lem:Alinhac}
		Let $u = u(t,x)$ be the solution to the inhomogeneous wave equation $\Box_{tx} u =F$ on the flat space $\R^{1+3}$ with zero initial data. Suppose that $F$ is spatially compacted supported and that there exist some constants $C_0>0$, $\mu,\nu\ge 0$ such that the following pointwise bound is satisfied
		\[
		|F(t,x)|\le C_0 t^{-\nu}(t-r)^{-\mu}.
		\]
		Defining $\Phi_\mu(s)=1, \log s, s^{1-\mu}/(1-\mu)$ according to $\mu>1, =1, <1$ respectively, we then have 
		\begin{itemize}
			\item[(i)] If $\nu>2$, 
			\[
			|u(t,x)|\le C C_0 \Phi_\mu\big(\langle t-r\rangle \big) \frac{\langle t-r\rangle^{\nu -2}}{\nu -2}(1+t)^{-1}
			\]
			\item[(ii)] If $\nu=2$, 
			\[
			|u(t,x)|\le C C_0 \Phi_\mu\big(\langle t-r\rangle \big)(1+t)^{-1}\log(1+t)
			\]
			\item[(iii)] If $\nu<2$,
			\[
			|u(t,x)|\le C C_0 \Phi_\mu\big(\langle t-r\rangle \big)\frac{(1+t)^{1-\nu}}{2-\nu}.
			\]
		\end{itemize}
	\end{lemma}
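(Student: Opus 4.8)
$\bm$

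The plan is to use the explicit Kirchhoff representation for the flat wave equation in $\R^{1+3}$ with vanishing data, namely
\[
u(t,x) = \frac{1}{4\pi}\int_{|x-y|<t}\frac{F\big(t-|x-y|,\,y\big)}{|x-y|}\,dy ,
\]
take absolute values, and insert the hypothesis $|F(\tau,z)|\le C_0\tau^{-\nu}(\tau-|z|)^{-\mu}$. Since the right-hand side depends on $z$ only through $|z|$, the problem reduces to estimating an integral of a function of the two radial quantities $\rho=|y|$ (distance of the source point to the origin, with source time $\tau=t-\lambda$) and $\lambda=|x-y|$. Writing $r=|x|$ and passing to bipolar (Asakura-type) coordinates, $\int_{\R^3}g(|y|,|x-y|)\,dy=\tfrac{2\pi}{r}\int_0^\infty\!\int_{|\rho-r|}^{\rho+r}g(\rho,\lambda)\,\rho\lambda\,d\lambda\,d\rho$, we obtain
\[
|u(t,x)|\ \le\ \frac{C_0}{2r}\iint_{D_{t,x}}\rho\,(t-\lambda)^{-\nu}\,(t-\lambda-\rho)^{-\mu}\,d\lambda\,d\rho,
\]
with $D_{t,x}=\{(\rho,\lambda):0\le\lambda<t,\ \rho\ge0,\ |\rho-r|\le\lambda\le\rho+r\}$, intersected with $\{t-\lambda-\rho\ge0\}$ (and, where the pointwise bound on $F$ is not directly applicable, using the compact spatial support of $F$ to dispose of the remaining bounded piece).

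The second step is to estimate this double integral by passing to the ``characteristic'' variables $p=(t-\lambda)+\rho$ and $q=(t-\lambda)-\rho$, so that $\tau=(p+q)/2$, $\rho=(p-q)/2$, and $\rho\,d\rho\,d\lambda=\tfrac14(p-q)\,dp\,dq$. The region $D_{t,x}$ transforms into $\{p\in[t-r,t+r],\ q\le t-r,\ p\ge q\}$ together with a support-dependent lower bound on $q$. The key structural facts are: (i) the source-retarded variable $q$ ranges over an interval of length $\lesssim\langle t-r\rangle$, which is exactly where the factor $\Phi_\mu(\langle t-r\rangle)\sim\int^{\langle t-r\rangle}q^{-\mu}\,dq$ comes from; (ii) on $D_{t,x}$ one has $(p-q)(p+q)^{-\nu}\lesssim p^{1-\nu}$, and integrating $p^{1-\nu}$ over $p\in[t-r,t+r]$ produces precisely the trichotomy in $\nu$ — for $\nu>2$ the near-cone endpoint dominates, giving a factor $\sim\langle t-r\rangle^{2-\nu}/(\nu-2)$; for $\nu=2$ one picks up $\log(1+t)$; for $\nu<2$ the far endpoint dominates, giving $(1+t)^{2-\nu}/(2-\nu)$; (iii) the prefactor $r^{-1}$ is comparable to $(1+t)^{-1}$ on the relevant part of $D_{t,x}$. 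Combining (i)--(iii) yields the three asserted estimates; the degenerate regime $\langle t-r\rangle\lesssim1$ is handled separately and directly, again using the compact spatial support of $F$ to control the otherwise borderline factors $q^{-\mu}$ and $\tau^{-\nu}$ near the cone.

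The main obstacle is not conceptual but bookkeeping: one must track uniformly which endpoint of the $p$-integration dominates in each range of $\nu$, systematically replace $t-r$ by $\langle t-r\rangle$, and balance the $r^{-1}$ prefactor against $(1+t)^{-1}$ without losing constants, all while keeping the support condition $\tau\ge\rho$ visible. This is exactly the computation carried out by Alinhac \cite{Alinhac06} (compare also Asakura \cite{asakura86} and the standard expositions in the quasilinear wave literature), so in the paper I would either reproduce this elementary but careful argument or simply invoke \cite{Alinhac06}. Everything else — the representation formula and the change to bipolar and then characteristic coordinates — is routine.
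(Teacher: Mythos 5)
The paper does not prove Lemma~\ref{lem:Alinhac}: it is stated and attributed directly to Alinhac~\cite{Alinhac06} with no further argument supplied. Your sketch via the Kirchhoff representation, bipolar coordinates, and the $(p,q)$ change of variables is precisely the computation carried out in that reference (with the $q$-integral producing $\Phi_\mu$ and the $p$-integral producing the trichotomy in $\nu$), so the two approaches coincide.
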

	
	\begin{lemma}\label{lem:asakura}
		Let $\phi, \psi$ be smooth functions on $\R^3$ such that
		\[
		|\phi|\le C(1+|x|)^{-1-\kappa}, \qquad |\nabla \phi| + |\psi |\le C(1+|x|)^{-2-\kappa}
		\]
		for some constant $C>0$ and some fixed $0<\kappa<1$.
		Let $u$ be the solution to the homogeneous wave equation $\Box u = 0$ with initial $(u,\partial_t u)|_{t=0} = (\phi, \psi)$. There exists a constant $\tilde{C}>0$ such that $u$ satisfies the following inequality
		\[
		|u(t,x)|\le \frac{C\tilde{C}}{(1+t+|x|)(1+|t-r|)^\kappa}.
		\]
	\end{lemma}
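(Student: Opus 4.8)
The plan is to prove Lemma~\ref{lem:asakura} directly from Kirchhoff's representation formula for the homogeneous wave equation on $\R^{1+3}$. Writing $r=|x|$, $S(x,t)=\{y:|y-x|=t\}$, and denoting the spherical means by $(M_tf)(x)=\frac{1}{4\pi t^2}\int_{S(x,t)}f(y)\,dS(y)$, the solution is $u(t,x)=\partial_t\big(t\,M_t\phi(x)\big)+t\,M_t\psi(x)$; expanding the $t$-derivative,
\[
u(t,x)=(M_t\phi)(x)+\frac{1}{4\pi t}\int_{S(x,t)}\frac{y-x}{t}\cdot\nabla\phi(y)\,dS(y)+t\,(M_t\psi)(x).
\]
Since $|y-x|=t$ on $S(x,t)$, the modulus of $u$ is bounded by $|M_t\phi|$, by $t\,(M_t|\nabla\phi|)$, and by $t\,(M_t|\psi|)$. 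The hypotheses control $|\phi|,|\nabla\phi|,|\psi|$ pointwise by decreasing functions of $|y|$ alone, so everything reduces to estimating averages over $S(x,t)$ of radial weights.

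The first step is the elementary change of variables on $S(x,t)$: parametrising $y=x+t\omega$ and setting $\rho=|y|$ one has $\rho^2=r^2+t^2+2rt\cos\theta$, hence $\rho\,d\rho=rt\,|d\cos\theta|$ and $dS(y)=t^2\,d\sigma(\omega)=(t\rho/r)\,d\rho\,d\varphi$. Integrating out $\varphi$, for any decreasing $g$ with $|f|\le g(|\cdot|)$,
\[
(M_t|f|)(x)\;\le\;\frac{1}{2tr}\int_{|t-r|}^{t+r}\rho\,g(\rho)\,d\rho.
\]
Applying this with $g(\rho)=C(1+\rho)^{-1-\kappa}$ for $\phi$ and $g(\rho)=C(1+\rho)^{-2-\kappa}$ for $\nabla\phi$ and $\psi$, using $\rho(1+\rho)^{-1-\kappa}\le(1+\rho)^{-\kappa}$ and $\rho(1+\rho)^{-2-\kappa}\le(1+\rho)^{-1-\kappa}$, and bounding $1+|t-r|\le1+\rho$ on the interval of integration, the one-dimensional integrals are computed directly; the mean value theorem together with $t+r-|t-r|=2\min(t,r)$ gives
\[
|(M_t\phi)(x)|\lesssim\frac{C}{\max(t,r)(1+|t-r|)^{\kappa}},\qquad t\,(M_t|\nabla\phi|)(x)+t\,(M_t|\psi|)(x)\lesssim\frac{C}{r(1+|t-r|)^{\kappa}}.
\]

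Because $\max(t,r)\ge\tfrac12(t+r)$, the first estimate already yields the desired bound for the $\phi$-term whenever $t+r\gtrsim1$, and the region $t+r\lesssim1$ is trivial since all quantities there are $\lesssim C$. The $\nabla\phi$- and $\psi$-terms, by contrast, only decay like $r^{-1}$, which is insufficient in the interior of the light cone; this is the one point requiring care. There I would instead use the crude pointwise bounds $t\,(M_t|\nabla\phi|)(x)\le t\sup_{S(x,t)}|\nabla\phi|\le tC(1+|t-r|)^{-2-\kappa}$, and similarly for $\psi$, using that $|y|\ge|t-r|$ on $S(x,t)$ so the supremum of the data on the sphere is controlled by its decay at radius $|t-r|$. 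For $r\le t/2$ one has $1+|t-r|\approx1+t\approx1+t+r$, so $t(1+|t-r|)^{-2-\kappa}\lesssim(1+t+r)^{-1-\kappa}\lesssim(1+t+r)^{-1}(1+|t-r|)^{-\kappa}$; for $r\ge t/2$ one has $r\approx t+r$ and the averaging bound suffices. Combining the three terms over these two regimes yields $|u(t,x)|\lesssim C(1+t+r)^{-1}(1+|t-r|)^{-\kappa}$, as asserted. The argument is classical; see Asakura~\cite{asakura86}.
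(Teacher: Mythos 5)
The paper states this lemma without proof, citing Asakura \cite{asakura86}; your argument provides a correct, self-contained proof along the expected classical lines. You apply Kirchhoff's representation formula, convert spherical averages of radial weights into one-dimensional integrals over $[\,|t-r|,\,t+r\,]$ via the substitution $\rho^2=r^2+t^2+2rt\cos\theta$, and estimate. The one genuine subtlety --- the spherical averaging of the $\nabla\phi$- and $\psi$-data gains only a factor $r^{-1}$, useless near $r=0$ --- you resolve correctly by the case split at $r=t/2$: in the deep interior you replace the average by the crude sup-bound, exploiting that $|y|\ge t-r\gtrsim 1+t+r$ on $S(x,t)$ when $r\le t/2$, so $t(1+t-r)^{-2-\kappa}\lesssim(1+t+r)^{-1}(1+|t-r|)^{-\kappa}$; and for $r>t/2$ the factor $r^{-1}$ is comparable to $(t+r)^{-1}$. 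The region $t+r\lesssim 1$ is indeed trivial since $|M_t\phi|+tM_t|\nabla\phi|+tM_t|\psi|\le(1+2t)\sup\big(|\phi|+|\nabla\phi|+|\psi|\big)\le 3C$ there. Every step checks out.
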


	\begin{lemma}\label{lem:kata-yoko}
		Let $u$ be the solution to the inhomogeneous wave equation $\Box u = F$ with zero data and $F$ be a smooth function on $\R^{1+3}$. Let $\mu, \nu>1$ be fixed constants. Provided the following right hand side is finite, $u$ satisfies the following inequality
		\[
		\langle t+|x|\rangle\langle t-|x|\rangle^{\mu-1}|u(t,x)|\lesssim \sup_{\tau\in[0,t]}\sup_{|x-y|\leq |t-\tau|} |y|\langle \tau+|y|\rangle^\mu \langle \tau-|y|\rangle ^\nu |F(\tau,y)|.
		\]
	\end{lemma}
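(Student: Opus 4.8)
\emph{Proof proposal.} The plan is to start from the explicit representation of the solution. Since $u$ solves $\Box_{tx}u=F$ on $\R^{1+3}$ with vanishing Cauchy data at $t=0$, the retarded potential (Kirchhoff) formula gives
\[
u(t,x)=\frac{1}{4\pi}\int_{|x-y|\le t}\frac{F\big(t-|x-y|,y\big)}{|x-y|}\,dy,
\]
and because the forward fundamental solution of $\Box_{tx}$ in $1+3$ dimensions is a nonnegative measure, one immediately gets $|u(t,x)|\le\frac{1}{4\pi}\int_{|x-y|\le t}\frac{|F(t-|x-y|,y)|}{|x-y|}\,dy$. Denote by $A$ the (finite, by hypothesis) supremum on the right-hand side of the claimed inequality, so that the weighted bound $|F(\tau,y)|\le A\,|y|^{-1}\langle\tau+|y|\rangle^{-\mu}\langle\tau-|y|\rangle^{-\nu}$ holds for all $(\tau,y)$; since the domain of integration is compact and the only singularity, $|x-y|^{-1}|y|^{-1}$, is integrable in $\R^3$, it suffices to estimate the resulting integral.

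The key reduction is a change to bipolar coordinates. Writing $r=|x|$ and parametrising $y\in\R^3$ by $\lambda=|y|$, $\rho=|x-y|$, and the azimuthal angle about the axis through $0$ and $x$, one has $dy=\frac{\lambda\rho}{r}\,d\lambda\,d\rho\,d\theta$ on $\{\lambda,\rho\ge 0,\ |r-\lambda|\le\rho\le r+\lambda\}$. The Jacobian factor $\lambda\rho$ cancels both singular factors, and after integrating out $\theta$ we are left with
\[
|u(t,x)|\le \frac{A}{2r}\iint_{D}\langle t-\rho+\lambda\rangle^{-\mu}\,\langle t-\rho-\lambda\rangle^{-\nu}\,d\lambda\,d\rho,
\]
where $D=\{(\lambda,\rho):\lambda,\rho\ge 0,\ |r-\lambda|\le\rho\le r+\lambda,\ \rho\le t\}$. (Equivalently, one may first dominate $|u|$ by the radial solution $v$ of $\Box_{tx}v=A|y|^{-1}\langle\tau+|y|\rangle^{-\mu}\langle\tau-|y|\rangle^{-\nu}$ with zero data and apply d'Alembert's formula to $rv$; this yields the same integral.) I would then pass to the null variables $p=t-\rho+\lambda$, $q=t-\rho-\lambda$, with $d\lambda\,d\rho=\tfrac12\,dp\,dq$. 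A short inspection of the constraints defining $D$ shows that its image is contained in $\{q\le t-r\}\cap\{t-r\le p\le t+r\}$, the $p$-interval being shifted to $[r-t,\,r+t]$ when $r>t$; in all cases it has length $\lesssim\min(r,t)$, and the integral factorises into $\int\langle p\rangle^{-\mu}\,dp$ over that interval and $\int\langle q\rangle^{-\nu}\,dq$ over a half-line ending at $t-r$.

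The estimate is concluded by examining three regimes. Using $\nu>1$, the $q$-integral is $\lesssim 1$ when $r\le t$ and $\lesssim\langle t-r\rangle^{1-\nu}\le 1$ when $r>t$. Using $\mu>1$, the $p$-integral is $\lesssim\langle t-r\rangle^{1-\mu}$ once $|t-r|\gtrsim 1$, and is $\lesssim \min(r,t)\,\langle t\rangle^{-\mu}$ in the deep interior $r\ll t$ (where the interval is short relative to its location). Combining with the prefactor $\frac1{2r}$ and using that $r\sim\langle t+r\rangle$ whenever $r\gtrsim t$, while $\langle t-r\rangle\sim\langle t+r\rangle\sim\langle t\rangle$ when $r\ll t$, one recovers $|u(t,x)|\lesssim A\,\langle t+r\rangle^{-1}\langle t-r\rangle^{1-\mu}$ in every case, which is the claim. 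The main obstacle is precisely this last bookkeeping step: the crude bound $\int\langle p\rangle^{-\mu}\,dp\lesssim\langle t-r\rangle^{1-\mu}$ combined with $\frac1r$ is \emph{not} enough in the deep interior, where one must exploit that the $p$-interval has length $\sim r$ so that the $\frac1r$ prefactor is absorbed into the decay $\langle t+r\rangle^{-1}$; one must also treat separately the neighbourhood of the light cone $|t-r|\lesssim 1$ and the exterior $r>t$, where both the $p$- and $q$-intervals are shifted.
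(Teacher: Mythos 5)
The paper does not prove this lemma; it is stated as a cited result from Katayama--Yokoyama \cite{KY06}. Your argument --- Kirchhoff representation, domination of $|F|$ by the weighted supremum $A$ on the backward cone from $(t,x)$, bipolar coordinates $(\lambda,\rho)$ whose Jacobian $\lambda\rho/r$ cancels the two radial singularities, then the null variables $p=t-\rho+\lambda$, $q=t-\rho-\lambda$ --- is precisely the standard derivation of this family of pointwise bounds and is the approach of \cite{KY06} and of Asakura's lemma cited just above it. The constraint identification is correct: the image region is $\{q\le t-r,\ \max(t-r,-q)\le p\le t+r\}$, and replacing the $q$-dependent lower limit $\max(t-r,-q)$ by the smaller constant $t-r$ (interior) or by $r-t$ (exterior) only enlarges the domain, so the factorised bound you write is a legitimate upper bound. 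The bookkeeping also goes through, but note one small place to be careful beyond the deep-interior regime you flag: when $0<r\lesssim 1$ one has $r\not\sim\langle t+r\rangle$ even if $r\geq t$, and the crude bound $\int_{t-r}^{t+r}\langle p\rangle^{-\mu}\,dp\lesssim\langle t-r\rangle^{1-\mu}$ combined with $1/r$ fails there too; the same interval-length bound $\int_{t-r}^{t+r}\langle p\rangle^{-\mu}\,dp\le 2r$ that you already invoke closes this case. Finally, your phrase that the weighted bound on $F$ ``holds for all $(\tau,y)$'' is an overstatement --- the supremum $A$ depends on $(t,x)$ and bounds $F$ only on the backward domain of dependence of $(t,x)$ --- but since that is exactly the support of the Kirchhoff integral, this does not affect the argument.
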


	The idea of the proof of Proposition \ref{prp:Linf-Linf_wave} is to look at $u= \Gamma^J\hff_{\alpha\beta}$, for any fixed $J$ with $|J|=j\le N_1$, as the solution to a Cauchy problem of the following form
	\[
	\begin{cases}
		\Box_{tx}u = F\\
		(u, \partial_t u)|_{t=2} = (\phi, \psi)
	\end{cases}
	\]
	for some given smooth initial data $\phi, \psi$ and source term $F$, and to successively decompose $u$ as the sum of three waves $v_1, v_2, v_3$ such that
	\[
	\begin{cases}
		\Box_{tx}v_1 = \chi((r+1/2)/t)F \\
		(v_1, \partial_t v_1)|_{t=0} =(0,0)
	\end{cases} \quad\text{and}\qquad 
	\begin{cases}
		\Box_{tx}v_2 = (1-\chi((r+1/2)/t)) F \\
		(v_2, \partial_t v_2)|_{t=0} =(0,0)
	\end{cases}
	\]
	and $v_3$ is the solution to the homogeneous wave equation with data $(\phi,\psi)$.
	
	In the above systems, $\chi$ is a cut-off function equal to 1 on the ball $B_{1/2}(0)$ and supported in $\overline{B_1(0)}$, so that the source term in the equation of $v_1$ is supported in the interior of the cone $t= r+1/2$, while the source term in the equation of $v_2$ is supported in the portion of exterior region such that $t\le r+1/2$.
	
	The solutions $v_1, v_2, v_3$ are estimated using lemma \ref{lem:Alinhac},  \ref{lem:asakura} and \ref{lem:kata-yoko} respectively. The combination of such estimates will provide us with the desired estimate on $u=Z^K\hff_{\alpha\beta}$.
	
	%%%%%%%%%%%%%%%%%%%%%%%%%%%%%%%%%%%%%
	
	Let us denote by $D^J_{\alpha \beta}$ the nonlinearity in equation \eqref{eq:diff_hff} satisfied by $\Gamma^J\hff_{\alpha\beta}$, i.e.
	\[
	D^{J, \flat}_{\alpha \beta} := - (H^{\bmmu\bmnu})^\flat\cdot \partial_\bmmu\partial_\bmnu  \Gamma^J\habf  +   F^{J, \flat}_{\alpha\beta} + F^{0,J}_{\alpha\beta} - \big( (H^{\mu\nu})^\natural\cdot \partial_\mu\partial_\nu  \Gamma^J\habn\big)^\flat.
	\]
	We start by estimating the source term $D^{J, \flat}_{\alpha\beta}$ in the interior of the cone $t = r+1/2$. Since the intersection of this cone with the exterior region $\dext$ is non-empty, we will make use of some estimates obtained in Section \ref{sec:exterior}.
	
	\begin{lemma}\label{lem:Dab-sourcing}
		For any multi-index $J$ with $|J|=k\le N_1$, any $s\in [s_0, S_0)$ and any $(t,x)$ with $t^2-r^2=s^2$ and $t>r+1/2$ we have that
		\[\aligned
		|D^{J, \flat}_{\alpha\beta}(t,x)|
		&\lesssim (C_2\epsilon)^2 t^{-1} s^{-2+\gamma_k}
		\endaligned	\]
	\end{lemma}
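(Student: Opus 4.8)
The plan is to enumerate the various contributions to $D^{J,\flat}_{\alpha\beta}$ according to the decomposition
\[
D^{J, \flat}_{\alpha \beta} = - (H^{\bmmu\bmnu})^\flat\cdot \partial_\bmmu\partial_\bmnu  \Gamma^J\habf  +   F^{J, \flat}_{\alpha\beta} + F^{0,J}_{\alpha\beta} - \big( (H^{\mu\nu})^\natural\cdot \partial_\mu\partial_\nu  \Gamma^J\habn\big)^\flat,
\]
and to bound each one in $L^\infty$ on the portion of the hyperboloid $\hin_s$ lying inside the cone $\{t>r+1/2\}$, where $s \sim t^{1/2}$ whenever $r$ is comparable to $t$ but more importantly $(s/t)^2 \sim (t-r)/t \gtrsim t^{-1}$, and where the exterior estimates from Section~\ref{sec:exterior} are available near the boundary. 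First I would treat the explicit inhomogeneous term $F^{0,J}_{\alpha\beta} = \Gamma^J\tilde{\Box}_g h^0_{\alpha\beta}$: by Lemma~\ref{lem:box_h0} (more precisely the structure \eqref{dec_F0} together with \eqref{h0_estimate}) this is $\mathcal O(\ep t^{-3}) + \mathcal O(\ep^2 t^{-2+\delta})$, which is far better than the claimed bound. Next the semilinear source $F^{J,\flat}_{\alpha\beta}$, which is a sum of differentiated null terms $\Gamma^J\mathbf{Q}_{\alpha\beta}$, weak-null terms $\Gamma^J P_{\alpha\beta}$, cubic terms $\Gamma^J G_{\alpha\beta}$ and the commutator $[\Gamma^J, H^{\mu\nu}\partial_\mu\partial_\nu]h^1_{\alpha\beta}$. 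For the null and cubic pieces I would invoke the pointwise bounds \eqref{eq:est_null_point}, \eqref{eq:est_cub_point} together with \eqref{null_ext_point}, \eqref{cub_ext_point} near the boundary; for the weak-null part I would use \eqref{eq:Linf_quadratic} combined with the enhanced behaviour of the $h^1_{TU}$ coefficients from Proposition~\ref{prp:good_coeffII}, exactly as was done in the proof of Lemma~\ref{lem:sup_second_derivatives_wave}.

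The delicate contributions are the three remaining ones involving second derivatives of $\Gamma^J\habf$ and of $\Gamma^J\habn$. For $(H^{\bmmu\bmnu})^\flat \partial_\bmmu\partial_\bmnu \Gamma^J\habf$ the point is to use the good coefficient estimate: decompose $H^{\bmmu\bmnu}$ relative to the null frame via \eqref{eq:curved_part_semihyp}, so that the worst coefficient $c^{00}_{UV}$ pairs with $\partial_t^2$ but contracts only $H^1_{\Lb\Lb}$, $H^1_{LL}$ and $H^1_{L\Lb}$ through \eqref{H1-UV-cUV-exp}; the $H^1_{LL}$ piece is controlled pointwise by \eqref{good_coefficient2}, the cross and transversal pieces carry an extra $(s/t)^2$, and the remaining coefficients $c^{\bma\beta}_{UV}, d^\mu_{UV}$ bring decay by \eqref{null-hyp-coeff}. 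Combining with the improved second-derivative bound \eqref{KS-hff6} (or \eqref{second_derivatives2}) for $\partial_t^2 \Gamma^J\habf$ and \eqref{KS4_h} for $\pb$-type second derivatives gives a term of size $(C_1\ep)^2 t^{-1}s^{-2+\eta}$, which fits. For the last term $\big((H^{\mu\nu})^\natural\partial_\mu\partial_\nu \Gamma^J\habn\big)^\flat$ I would estimate $H^{1,\natural}$ in $L^\infty_x L^2_y$ via \eqref{KS1_hnn} and the second derivatives $\partial_\mu\partial_\nu \Gamma^J\habn$ in $L^\infty_x L^2_y$ via the Klein--Gordon pointwise bounds \eqref{bootKG_in}, using Sobolev on $\S^1$ to pass to $L^\infty_{xy}$; this yields $C_1C_2\ep^2 t^{-2}s^{-1+\gamma_k}$ after invoking that $\|(H^{1,\mu\nu})^\natural\|_{L^\infty_xL^2_y} \lesssim C_1\ep t^{-3/2}s^{\gamma_0}$ and $\|\partial^2 \Gamma^J\habn\|_{L^\infty_xL^2_y}\lesssim C_2\ep t^{-2}s^{-1+\gamma_k}$, so that in total $t|D^{J,\flat}_{\alpha\beta}| \lesssim (C_2\ep)^2 s^{-2+\gamma_k}$ as claimed (the factor $C_2^2$ rather than $C_1C_2$ comes from the worst of these mixed contributions).

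The commutator term $[\Gamma^J, H^{\mu\nu}\partial_\mu\partial_\nu]h^1_{\alpha\beta}$ inside $F^{J,\flat}_{\alpha\beta}$ requires a bit more care, since only lower-order vector fields act on each factor: one writes it schematically, using \eqref{comm_int}, as a sum of $\Gamma^{K_1}H^{1,\flat}_{LL}\cdot\partial^2 \Gamma^{K_2}h^1$, $\Gamma^{K_1}H^1_{4L}\cdot\partial_t\partial_y\Gamma^{K_2}h^1$, terms with an extra $(t^2-r^2)/t^2$ factor, and lower-order terms with a $(1+t+r)^{-1}$ gain. Here $|K_1|, |K_2| \le |J| \le N_1$, so one may use the good-coefficient bounds \eqref{good_coefficient1}, \eqref{good_coefficient11} for $H^1_{LL}$ and $H^1_{4L}$, the $L^\infty_{xy}$ bounds \eqref{KS4_h}, \eqref{KS6_h} for the coefficients, and \eqref{KS4_h} together with \eqref{KS-hff6} for the second derivatives of $h^1$; the resulting bound is again of order $(C_1\ep)^2 t^{-1}s^{-2+\eta}$ or better. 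I expect the bookkeeping of which index range allows which pointwise bound — in particular making sure that $|J| \le N_1$ leaves enough room ($\lfloor N/2\rfloor + $ something $\le N-3$, etc.) to apply \eqref{KS4_h}, \eqref{KS-hff6} and the good-coefficient estimates rather than the top-order energies — to be the main (though entirely routine) obstacle, exactly as in the proofs of Lemma~\ref{lem:sup_second_derivatives_wave} and Proposition~\ref{prp:good_coeffII} that precede it. One then collects the worst exponent, which is the $s^{-2+\gamma_k}$ coming from the Klein--Gordon/mixed term, to conclude.
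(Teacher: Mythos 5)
Your decomposition by nonlinearity type (null / weak-null / cubic / commutator) and your identification of the dominant $s^{-2+\gamma_k}$ contribution as coming from the Klein--Gordon (zero-average) factors both match the spirit of the paper's proof, but there is a quantitative gap that makes the argument fail as written. The bounds \eqref{eq:est_null_point}, \eqref{eq:est_cub_point} and \eqref{eq:Linf_quadratic} that you invoke for the semilinear pieces carry an exponent loss $\eta$ which, by construction, is of size $\lesssim 3\delta_N$. Since \eqref{condition_parameters} enforces $\gamma_j \ll \delta_k$ for all $j,k$, this is strictly worse than the exponent $\gamma_k$ you are trying to prove. Concretely, \eqref{eq:est_null_point} gives $|Z^K\mathbf{Q}_{\alpha\beta}|\lesssim C_2^2\ep^2 t^{-2}s^{-1+\eta}$ with $\eta\le 3\delta_N$; comparing with the target $t^{-1}s^{-2+\gamma_k}$ and evaluating near $r=0$ (where $s\sim t$) gives a ratio $\sim t^{\eta-\gamma_k}\to\infty$. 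The same applies to the weak-null contribution through \eqref{eq:Linf_quadratic}. You cannot close this with the a-priori estimates alone.

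The paper's proof avoids this by \emph{not} bounding each nonlinearity type separately: instead it splits the interactions by the $\flat/\natural$ type of each factor. The decisive input is that all the pure zero-mode products—null, weak-null, cubic, and commutator alike—are controlled using the \emph{improved} lower-order pointwise bounds \eqref{KS-hff1}, \eqref{KS-hff4}, \eqref{KS-hff6}, which carry only a $C\sigma$-loss (with $\sigma\ll\gamma_0$). These bounds are a corollary of the enhanced energy estimate for the zero modes in Proposition~\ref{prop:low-order-zero-mode-energy}, which was established precisely for this purpose; once they are available, a brute-force product estimate like $|\partial\Gamma^{J_1}\hff\cdot\partial\Gamma^{J_2}\hff|\lesssim C_1^2\ep^2 t^{-1}s^{-2+6\sigma}$ suffices without any further appeal to the weak-null structure or to Proposition~\ref{prp:good_coeffII} — that machinery is already baked into the proof of Proposition~\ref{prop:low-order-zero-mode-energy}. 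You do invoke \eqref{KS-hff6} for the second-derivative term, which is correct, but you need to use the same family of improved bounds for \emph{all} the pure zero-mode interactions, not the generic lemmas with $\delta$-type losses. The remaining contributions — the $\dext$ slab near $\{t=r+1/2\}$ via the exterior estimates, the mixed and pure zero-average products via \eqref{bootKG_in}, \eqref{KS1_hnn} and Cauchy--Schwartz in $y$, and the $h^0$-interactions via Lemma~\ref{lem:prod_h0} — are handled essentially as you describe (modulo some arithmetic: the correct bound from \eqref{bootKG_in} is $\|\partial^2\Gamma^J\habn\|_{L^\infty_xL^2_y}\lesssim C_2\ep\, t^{-1/2}s^{-1+\gamma_k}$, not $t^{-2}s^{-1+\gamma_k}$), and indeed only the Klein--Gordon pairs produce the $\gamma_k$-loss, exactly as recorded in Remark~\ref{remark_lossW}.
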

	
	\begin{proof}
		From the pointwise bounds \eqref{h0_estimate}, \eqref{KS1_ext} and \eqref{KS3_ext} we get that for any $(t,x,y)\in \dext$ such that $t>r+1/2$ and $t^2-r^2=s^2$
		\[
		\gathered
		\sum_{|J_1|+|J_2|\le k} |\partial \Gamma^{J_1}h \cdot \partial \Gamma^{J_2} h|\lesssim C_0^2\ep^2 t^{-2+2\sigma}(2+r-t)^{-2-2\kappa}\lesssim C_0^2\ep^2 t^{-1}s^{-2+4\sigma}\\
		\sum_{|J_1|+|J_2|\le k}|\Gamma^{J_1}H\cdot\partial^2 \Gamma^{J_2}h|\lesssim C_0^2 \ep^2 t^{-2+2\sigma}(2+r-t)^{-\frac32-2\kappa}\lesssim C_0^2\ep^2 t^{-1}s^{-2+4\sigma}.
		\endgathered
		\]
		
		In the interior region, 
		we recall the pointwise decay estimates already obtained in lemma \ref{lem:prod_h0} for the quadratic and cubic terms involving at least one $h^0$.
		
Turning next to the pure quadratic zero-mode interactions, we derive from \eqref{KS-hff1} that 
			\[
			\sum_{|J_1|+|J_2|\le k}  \big|\partial \Gamma^{J_1}\hff\cdot\partial \Gamma^{J_2}\hff\big| \lesssim C_1^2\ep^2 t^{-1}s^{-2+ 6 \sigma}
			\]
%		As $6\sigma\le \gamma_0$, we therefore deduce that
%		\[
%		|\Gamma^JP^\flat_{\alpha\beta}(\partial h^1, \partial h^1)|\lesssim C_1^2\ep^2 t^{-1}s^{-2+ \gamma_j}.
%		\]
%		For $N_1-3 \leq |K|\leq N_1$ we have instead
%		$$ \big|\partial \Gamma^{J_1}\hff\cdot\partial \Gamma^{J_2}\hff\big|\lesssim C_1^2\ep^2 t^{-1}s^{-2+ 6 \sigma} \lesssim C_1^2\ep^2 t^{-1}s^{-2+ 3 \sigma +\zeta_k},$$
%		which yield the same conclusion with the assumption \eqref{condition_parameters}.
	and from \eqref{comm_int} and bounds \eqref{KS-hff1}, \eqref{KS-hff6} that
		\[
		\begin{aligned}
			\big|\Gamma^J\big((H^{1, \bmmu\bmnu})^\flat\cdot \partial_\bmmu\partial_\bmnu \hff_{\alpha\beta} \big)\big| \lesssim \sum_{\substack{ |J_1|+|J_2|\le k }} | \Gamma^{J_1}H^{1,\flat}_{LL}|\, |\partial^2_t \Gamma^{J_2}\hff_{\alpha\beta}| + \Big(\frac{s}{t}\Big)^2 |\Gamma^{J_1}H^{1,\flat}| \, |\partial^2_t \Gamma^{J_2}\hff_{\alpha\beta}| \\
			+ \sum_{\substack{ |J_1|+|J_2|\le j}} (1+t+r)^{-1}|\Gamma^{J_1}H^{1,\flat}| \, |\partial \Gamma^{J_2}\hff_{\alpha\beta}|\lesssim C_1^2\ep^2 t^{-1}s^{-2+9\sigma}.
		\end{aligned}
		\]

As concerns instead the pure interaction of the zero-average components, we derive from \eqref{bootKG_in} and the algebraic relation \eqref{condition_parameters} that
\[
\sum_{|J_1|+|J_2|=k} |(\partial \Gamma^{J_1}\hnn \cdot \partial \Gamma^{J_2}\hnn)^\flat| + \big|\big( (H^{\mu\nu})^\natural\cdot \partial_\mu\partial_\nu  \Gamma^J\habn\big)^\flat \big| \lesssim C_2^2\ep^2 t^{-3}s^{\gamma_k}.
\]
The conclusion of the proof follows from the fact that $9\sigma\ll \gamma_0$.
	\end{proof}

\begin{remark}\label{remark_lossW}
It is important to observe, in view of Proposition \ref{prpenhancedsup}, that the loss $s^{\gamma_k}$ in the above estimate for $D^J_{\alpha\beta}$ is only caused by the pure interactions of the zero-average components of the metric perturbations. All other interactions cause a smaller loss $s^{9\sigma}$.
\end{remark}

	%%%%%%%%%%%%%%%%%%%%%%%%%%%%%%%%%%%%
	
	We are now able to propagate the a-priori pointwise bound \eqref{bootW_in}.
	\begin{proposition}\label{prp:Linf-Linf_wave}
		There exist two constants $1\ll C_1\ll C_2$ sufficiently large, a finite and increasing sequence of parameters $0<\zeta_k, \gamma_k, \delta_k\ll 1$ satisfying \eqref{condition_parameters} and $0< \ep_0\ll 1$ sufficiently small such that, under the assumptions of Proposition \ref{prop:bootstrap_int}, the enhanced estimate \eqref{bootW_in_enh} is satisfied.
	\end{proposition}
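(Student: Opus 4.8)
The plan is to view $u := \Gamma^J\habf$, for a fixed pair $(\alpha,\beta)$ and a fixed multi-index $J$ with $|J|=k\le N_1$, as the solution of the flat Cauchy problem $\Box_{tx}u = D^{J,\flat}_{\alpha\beta}$ on $\R^{1+3}$ with data $(u,\partial_t u)|_{t=2}$ inherited from \eqref{h1_eqt_higher1}--\eqref{eq:diff_hff}, and to decompose $u = v_1+v_2+v_3$: here $v_1$ solves $\Box_{tx}v_1 = \chi((r+\tfrac12)/t)\,D^{J,\flat}_{\alpha\beta}$ with vanishing data, so that its source is confined to the cone $\{t>r+\tfrac12\}$; $v_2$ solves $\Box_{tx}v_2 = (1-\chi((r+\tfrac12)/t))\,D^{J,\flat}_{\alpha\beta}$ with vanishing data, so that its source sits in $\{t\le r+\tfrac12\}$, i.e. in the exterior region $\dext$ where the estimates of Section~\ref{sec:exterior} are available; and $v_3$ is the free evolution of the data $(u,\partial_t u)|_{t=2}$. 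I would estimate $v_1$ by Alinhac's weighted $L^\infty$ inequality (Lemma~\ref{lem:Alinhac}), the free wave $v_3$ by Asakura's inequality (Lemma~\ref{lem:asakura}), and the exterior-source wave $v_2$ by Katayama--Yokoyama's inequality (Lemma~\ref{lem:kata-yoko}). The inputs are: the interior source estimate $|D^{J,\flat}_{\alpha\beta}(t,x)|\lesssim (C_2\ep)^2\,t^{-1}s^{-2+\gamma_k}$ on $\{t>r+\tfrac12\}$ furnished by Lemma~\ref{lem:Dab-sourcing}; the exterior pointwise control of the semilinear, commutator and curved-background parts of $D^{J,\flat}_{\alpha\beta}$ coming from \eqref{KS1_ext}--\eqref{KS3_ext}, \eqref{der_hLT_ext}, \eqref{est_hLT2} and the weak-null structure, yielding a source decaying like $\ep^2(1+t+r)^{-2+2\sigma}(2+r-t)^{-1-\kappa}$ (with faster decay in the genuinely spacelike region $\{r\ge 2t\}$); and the decay of the data on $\{t=2\}$, which lies almost entirely in $\dext$, namely $|u(2,\cdot)|\lesssim C_0\ep(1+r)^{-1-\kappa}$ with one extra power of $(1+r)^{-1}$ on $\partial_x u(2,\cdot)$ and on $\partial_t u(2,\cdot)$.

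For $v_1$ I would write $s^{-2}=(t-r)^{-1}(t+r)^{-1}$ and use $t\le t+r$ to recast the interior source bound as $|D^{J,\flat}_{\alpha\beta}|\lesssim (C_2\ep)^2\,t^{-\nu}(t-r)^{-\mu}$ with $\nu=2-\gamma_k/2\in(1,2)$ and $\mu=1-\gamma_k/2\in(0,1)$, so that case~(iii) of Lemma~\ref{lem:Alinhac} gives $|v_1(t,x)|\lesssim (C_2\ep)^2\,\gamma_k^{-2}\,\langle t-r\rangle^{\gamma_k/2}(1+t)^{-1+\gamma_k/2}$. On $\hin_s$ one has $t-r\ge 1$, hence $\langle t-r\rangle\sim t-r$ and $(t-r)\,t\le (t-r)(t+r)=s^2$, so $t\,|v_1|\lesssim (C_2\ep)^2\gamma_k^{-2}\,(\langle t-r\rangle\,t)^{\gamma_k/2}\lesssim (C_2\ep)^2\gamma_k^{-2}\,s^{\gamma_k}$. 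For $v_3$ I would apply Asakura's lemma with decay exponent $\kappa$ to the data at $t=2$, obtaining $|v_3(t,x)|\lesssim C_0\ep\,(1+t+r)^{-1}\langle t-r\rangle^{-\kappa}$, hence $t\,|v_3|\lesssim C_0\ep$ on $\hin_s$, strictly better than required and carrying no $s^{\gamma_k}$ loss. For $v_2$ I would insert the exterior source bound into Lemma~\ref{lem:kata-yoko}: since $t-r\ge1$ and $t\le t+r$ on $\hin_s$, the prefactor $\langle t+r\rangle^{-1}\langle t-r\rangle^{-(\mu-1)}$ already produces the $t^{-1}$ decay, and it remains to verify that the supremum over the backward light cone of $|y|\langle\tau+|y|\rangle^\mu\langle\tau-|y|\rangle^\nu|D^{J,\flat}_{\alpha\beta}|$ is finite and $O\big((C_0\ep)^2+(C_1\ep)^2\big)$; this is where one must exploit the extra $(2+r-t)$-weights carried by the weak-null terms and by the $h^1_{TU}$ coefficients, together with the fast decay of the source in $\{r\ge 2t\}$ and the $L^1$-integrability of the function $l$.

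Adding the three contributions gives, on $\hin_s$,
\[
t\,|\Gamma^J\habf|\;\lesssim\; C_0\ep \;+\;\big((C_0\ep)^2+(C_1\ep)^2+(C_2\ep)^2\gamma_k^{-2}\big)\,s^{\gamma_k},
\]
and choosing the hierarchy $C_0\ll C_1\ll C_2$ and $\sigma\ll\gamma_k$ of Remark~\ref{remark_on_constants} and Proposition~\ref{prop:bootstrap_int}, and then $\ep_0$ small enough that $C_2\ep_0/\gamma_k^2\ll1$, the right-hand side is $\le C_2\ep\,s^{\gamma_k}$, which is \eqref{bootW_in_enh}. Observe that the a priori bounds \eqref{bootW_in} (at level $k$) and \eqref{bootKG_in} (at levels $\le k+1$) enter only through Lemma~\ref{lem:Dab-sourcing}, so the scheme closes inside the global bootstrap of Proposition~\ref{prop:bootstrap_int} and needs no separate induction on $k$; by Remark~\ref{remark_lossW} the $s^{\gamma_k}$ loss is moreover generated solely by the pure self-interactions of the zero-average components. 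I expect the genuinely delicate point to be $v_1$: its source decays only like $t^{-1}s^{-2+\gamma_k}$, which is borderline for the weighted $L^\infty$--$L^\infty$ inequality, and converting the output $\langle t-r\rangle^{\gamma_k/2}t^{\gamma_k/2}$ back into the admissible growth $s^{\gamma_k}$ rests entirely on the restriction $t-r\ge1$ valid on $\hin_s$ --- without it, points near the light cone would produce uncontrolled powers of $t$. A secondary nuisance is the bookkeeping for $v_2$: matching the exponents in Lemma~\ref{lem:kata-yoko} in the near-cone portion of the exterior source requires the full strength of the weak-null and $h^1_{TU}$ decay of Section~\ref{sec:exterior}. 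Finally, one must take the usual care when Duhamel-ing from $\{t=2\}$ rather than $\{t=0\}$ --- extending the source harmlessly below the initial slice --- and, since $D^{J,\flat}_{\alpha\beta}$ is quadratic in $C_2$, close the argument only for $C_2\ep_0$ small.
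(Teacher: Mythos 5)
Your proposal follows essentially the same route as the paper: the same three-term decomposition $u=v_1+v_2+v_3$ with the cut-off $\chi((r+1/2)/t)$, the same invocation of Lemma~\ref{lem:Alinhac} (with $\nu=2-\gamma_k/2$, $\mu=1-\gamma_k/2$) for the interior source $v_1$, Lemma~\ref{lem:kata-yoko} for the exterior source $v_2$, and Lemma~\ref{lem:asakura} for the free wave $v_3$, all fed by Lemma~\ref{lem:Dab-sourcing}. The only divergences are cosmetic: you track the factor $\gamma_k^{-2}$ coming from $\Phi_\mu$ and from $(2-\nu)^{-1}$ explicitly (the paper absorbs it into the implicit constant, then beats it by taking $C_2\epsilon_0$ small), and your claimed exterior-source decay $(2+r-t)^{-1-\kappa}$ differs slightly from the paper's weaker-but-sufficient $(2+r-t)^{-3/2}$, neither of which affects the applicability of Lemma~\ref{lem:kata-yoko} since both exceed the threshold $\nu>1$.
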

	\begin{proof}
		We split $\Gamma^J\hff_{\alpha\beta}$ into the sum of three waves $v^J_{1, \alpha\beta}$, $v^J_{2, \alpha\beta}$ and $v^J_{3, \alpha\beta}$, where
		\[
		\begin{cases}
			\Box_{tx}v^J_{1,\alpha\beta} = \chi((r+1/2)/t) D^{J,\flat}_{\alpha\beta} \\
			(v^J_{1,\alpha\beta}, \partial_t v^J_{1,\alpha\beta})|_{t=2} =(0,0)
		\end{cases}
		\begin{cases}
			\Box_{tx}v^J_{2,\alpha\beta} =\big(1-\chi((r+1/2)/t)\big)D^{J, \flat}_{\alpha\beta} \\
			(v^J_{2,\alpha\beta}, \partial_t v^J_{2,\alpha\beta})|_{t=2} =(0,0)|_{t=2}
		\end{cases}
		\]			
		and
		\[
		\begin{cases}
			\Box_{tx}v^J_{3, \alpha\beta} = 0 \\
			(v^J_{3,\alpha\beta}, \partial_t v^J_{3,\alpha\beta})|_{t=2} =(\Gamma^J\hff_{\alpha\beta}, \partial_t \Gamma^J\hff_{\alpha\beta})|_{t=2}.
		\end{cases}
		\]
		
		From Lemma \ref{lem:Dab-sourcing}, we get that in the interior of the cone $t=r+1/2$ 
		\begin{equation}\label{Day}
			|D^{J, \flat}_{\alpha \beta}(t,x)|  \lesssim C_2^2 \epsilon^2 t^{-1}s^{-2+ \gamma_k}
			\lesssim C_2^2 \epsilon^2  t^{-2+\frac{\gamma_k}{2}} \langle t-r\rangle^{-1+\frac{\gamma_k}{2}}.
		\end{equation}
		Then, Lemma \ref{lem:Alinhac} with $\nu=2-\frac{\gamma_k}{2}$ and $\mu=1-\frac{ \gamma_k}{2}$ yields
		\begin{equation}\label{vay}
			|v^J_{\alpha y}(t,x)|\lesssim C_2^2\epsilon^2 (1+t)^{-1+\frac{ \gamma_k}{2}}\langle t-r\rangle^{\frac{{\gamma_k}}{2}}.
		\end{equation}
		%%%
		
		As concerns the region exterior to the cone $t= r+1/2$, we recall the estimates \eqref{null_ext_point} for the quadratic null terms, \eqref{cub_ext_point} for the cubic interactions, \eqref{comm_ext_point} for the commutator terms and \eqref{weak_point_ext} for the weak null interactions. For any $(t,x)$ with $t<r+1/2$, we at least have
		\[
		|D^{J, \flat}_{\alpha\beta}(t,x)|\lesssim C_0^2\ep^2 t^{-2+2\sigma}(2+r-t)^{-\frac32}
		\]
		so that
		\[
		\sup_{\tau\in[0,t]}\sup_{|x-z|\leq |t-\tau|} |z|\langle \tau+|z|\rangle^\mu \langle \tau-|z|\rangle ^\nu |D^{J, \flat}_{\alpha\beta}(z,\tau)| \lesssim C_0^2\epsilon^2 \langle t + |x|\rangle^{3\sigma}
		\]
		provided that $\mu,\nu>1$ are chosen so that $\mu =1+\sigma$ and $1<\nu<3/2$. 
		From Lemma \ref{lem:kata-yoko} we then have 
		\[
		|v^J_{2,\alpha\beta}(t,x)|\lesssim  C_0^2\epsilon^2 \langle t+r\rangle^{-1+3\sigma}\langle t-r\rangle^{-\sigma}	.
		\]	
		
		Finally, the initial data satisfy
		\[
		|\Gamma^J \habf(2,x)|\lesssim C_0\ep (1+|x|)^{-1-\kappa}, \qquad |\nabla_{tx}\Gamma^J \habf(2,x)|\lesssim C_0 (1+|x|)^{-2-\kappa}
		\]
		as a consequence of the assumptions on the initial data \eqref{data_maintheo} and the pointwise bounds \eqref{KS1_ext} and \eqref{KS3_ext},
		so that Lemma \ref{lem:asakura} implies
		\[
		|v^J_{3, \alpha\beta}(t,x)|\lesssim \frac{C_0\epsilon}{(1+t + |x|)(1+|t-r|)^\kappa}.
		\]
		Summing all up, we find that there exists a constant $C>0$ such that
		\[
		|\Gamma^Jh^\flat_{\alpha\beta}(t,x)|\le C(C_0\ep + C_0^2\epsilon^2) \langle t+r\rangle^{-1+3\sigma}\langle t-r\rangle^{-\sigma} + CC_2^2\epsilon^2 \langle t+r\rangle^{-1+\frac{\gamma_k}{2}}\langle t-r\rangle^{\frac{\gamma_k}{2}} 
		\]
		so the result of the statement follows from the fact that $\sigma\ll\gamma_0$ and by choosing $C_2\gg 1$ sufficiently large so that $C(C_0+C_0\ep)< (C_2\ep)/2$ and $0<\ep_0\ll 1$ sufficiently small so that $CC_2\ep_0<1/2$.
	\end{proof}
	
Following Remarks \ref{remark_lossKG} and \ref{remark_lossW},	we conclude this section with enhanced pointwise bounds for the metric perturbation.
	\begin{proposition} \label{prpenhancedsup}
	There exists a constant $c>9$ such that the metric perturbation satisfies the following
			\begin{equation}\label{W_sigma_improved}
			\|t\,  \Gamma^J\habf\|_{L^\infty_x(\hin_s)}\lesssim
			C_2\ep s^{c\sigma}\quad   \text{if } |J|=k\le N_1-1, 
		\end{equation}
				\begin{equation}\label{KG_sigma_improved}
			\aligned
			\| t^\frac12s\, \partial_{tx} (\partial^I \Gamma^J \habn)\|_{L^\infty_xL^2_y(\hin_s)}  & + \| t^\frac32 \partial^{\le 1}_y (\partial^I \Gamma^J \habn)\|_{L^\infty_xL^2_y(\hin_s)}\\
			& \le 
			\begin{cases}
				2C_2\ep, \ &\text{if } |I|\le N_1, |J|=0\\
				2C_2\ep s^{c\sigma}, \ &\text{if } |I|+|J|\le N_1, |J|=k\le N_1-1.
			\end{cases}	
			\endaligned
		\end{equation}
	\end{proposition}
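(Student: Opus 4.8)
The plan is to re-run, one Klainerman order lower, the continuity arguments behind Propositions \ref{prp:KG_imp_decay} and \ref{prp:Linf-Linf_wave}, now exploiting the improved low-order zero-mode energy bound \eqref{energy_wave_lower} and its pointwise consequences \eqref{KS-hff1}, \eqref{KS-hff4}, \eqref{KS-hff6} in place of the a-priori assumption \eqref{boot2_in} and of \eqref{KS1_hff}, \eqref{good_coeff1_flat}, \eqref{good_coefficient2}, \eqref{second_derivatives2}. I would organise this as a joint induction on the number $k$ of Klainerman vector fields: at level $k$ one runs a short bootstrap for \emph{both} bounds simultaneously — \eqref{W_sigma_improved} for $\Gamma^J\habf$ with $|J|=k$ and \eqref{KG_sigma_improved} for $\partial^I\Gamma^J\habn$ with $|I|+|J|\le N_1$, $|J|=k$ — assuming both have already been established for all multi-indices carrying strictly fewer than $k$ Klainerman fields. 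The base case $k=0$ carries no loss at all: the commutator term responsible for the loss in the $\habn$-estimate requires at least one Klainerman field, and at $|J|=0$ there is nothing to prove for $\habf$ beyond \eqref{bootW_in_enh}.

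The whole gain is read off from Remarks \ref{remark_lossW} and \ref{remark_lossKG}. For the $\habf$-bound, Remark \ref{remark_lossW} says that the $s^{\gamma_k}$ appearing in the pointwise estimate of the source $D^{J,\flat}_{\alpha\beta}$ of $\Gamma^J\habf$ comes \emph{only} from the pure zero-average interactions $(\partial\Gamma^{J_1}\hnn\cdot\partial\Gamma^{J_2}\hnn)^\flat$ and $\big((H^{\mu\nu})^\natural\partial_\mu\partial_\nu\Gamma^J\habn\big)^\flat$, all other terms already costing at most $s^{9\sigma}$; since here $|J_1|+|J_2|\le k\le N_1-1$, both zero-average factors lie in the range of the improved bound \eqref{KG_sigma_improved}, so — estimating them in $L^\infty_xL^2_y$ (after Sobolev on $\mathbb{S}^1$), one factor via the already-established lower-order improved bound and, when the other sits at Klainerman order $k$, via the current bootstrap assumption with its accompanying $C_2\ep$ gain — the analogue of Lemma \ref{lem:Dab-sourcing} holds with $s^{\gamma_k}$ replaced by $s^{c_k\sigma}$ for a suitable exponent $c_k$, after which the three-wave splitting and Lemmas \ref{lem:Alinhac}, \ref{lem:asakura}, \ref{lem:kata-yoko} are applied exactly as in the proof of Proposition \ref{prp:Linf-Linf_wave} to produce \eqref{W_sigma_improved}. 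For the $\habn$-bound, Remark \ref{remark_lossKG} isolates the single culprit $\Gamma^{J_1}\habf\cdot\partial^2(\partial^I\Gamma^{J_2}\habn)$ with $|J_1|\ge 1$: here $\Gamma^{J_1}\habf$ is bounded in $L^\infty_x$ by \eqref{W_sigma_improved} (losing $s^{c_{|J_1|}\sigma}$, or, if $|J_1|=k$, by the current bootstrap assumption plus a $C_2\ep$ gain), while $\partial^2(\partial^I\Gamma^{J_2}\habn)$ — of Klainerman order $|J_2|=k-|J_1|\le k-1$ and total order $\le N_1$ — is bounded in $L^\infty_xL^2_y$ by the induction hypothesis \eqref{KG_sigma_improved} (losing $s^{c_{|J_2|}\sigma}$). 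Running Proposition \ref{prp:KG_uniform} on $\mathbf{W}=\partial^I\Gamma^J\habn$ governed by \eqref{eq:habn_diff} with these replacements, every source and remainder term that previously forced a $\gamma_k$ now costs at most $s^{(c_{|J_1|}+c_{|J_2|})\sigma}$ (resp. $s^{c_k\sigma}$ in the boundary case $|J_1|=k$), all other terms keeping the $s^{9\sigma}$ they already had, which gives \eqref{KG_sigma_improved}.

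The remaining work is the bookkeeping of the exponents, in the spirit of \eqref{condition_parameters}: I would fix a finite increasing sequence $9\sigma<c_0<c_1<\dots<c_{N_1-1}$ subject to the constraints $c_{|J_1|}+c_{|J_2|}<c_k$ whenever $|J_1|+|J_2|\le k$ with $|J_1|\ge 1$ (together with the parallel constraints forced by the quadratic $\hnn$-$\hnn$ interactions in $D^{J,\flat}_{\alpha\beta}$), and then set $c:=c_{N_1-1}$; since $k$ ranges over the fixed finite set $\{0,\dots,N_1-1\}$ such a sequence exists — for instance a sufficiently fast geometric one — and $c>9$ depends only on $N_1$, on $\sigma$ and on the other fixed parameters. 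Each level of the nested bootstrap then closes under the usual smallness $C_2\ep\ll 1$, which absorbs the quadratic-in-$C_2\ep$ contributions of the bad terms, together with the fact that \eqref{W_sigma_improved}--\eqref{KG_sigma_improved} hold with strict inequality at $s=s_0$ by the assumptions on the data and local existence.

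\textbf{The main obstacle} I anticipate is precisely this hierarchy. The two ``bad'' terms pinpointed by the remarks are products of \emph{two} mildly lossy factors, and when one of them sits at Klainerman order $k$ the two level-$k$ bounds feed into each other, so a single exponent uniform in $k$ would not be propagated; one must let $c$ grow with the number of Klainerman fields and verify the associated algebraic inequalities, while keeping the circular use of the two level-$k$ bounds under control through the $C_2\ep$ gain of the nested bootstrap. Apart from this, one also has to revisit — but only routinely, since these pieces were already $\sigma$-controlled — the portions of the characteristic curves of Lemma \ref{lem:characteristics} and of the domains of influence of Lemma \ref{lem:kata-yoko} that enter the exterior region $\dext$, invoking the exterior pointwise bounds of Section \ref{sec:exterior}.
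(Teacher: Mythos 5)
Your proposal takes essentially the same approach as the paper's proof: an induction on the number of Klainerman vector fields, exploiting Remarks \ref{remark_lossKG} and \ref{remark_lossW} to isolate the problematic bilinear terms, feeding the improved inputs into Lemma \ref{lem:Dab-sourcing}, Proposition \ref{prp:KG_uniform} and Lemmas \ref{lem:Alinhac}--\ref{lem:kata-yoko} under a hierarchy of exponents $c_k$, and closing the level-$k$ circularity via the $\epsilon$-gain. Two small corrections are needed in your bookkeeping. First, at the base level $k=0$, the $\habf$ estimate is \emph{not} already given by \eqref{bootW_in_enh}, which only yields $s^{\gamma_0}$ with $\gamma_0\gg\sigma$ — you do still need to re-run Proposition \ref{prp:Linf-Linf_wave} with the loss-free zero-Klainerman-order $\habn$ input, which produces a $9\sigma$-type loss and hence $c_0\ge 9$; this is precisely why the statement asserts $c>9$. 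Second, the constraint $c_{|J_1|}+c_{|J_2|}<c_k$ for all $|J_1|\ge 1$ with $|J_1|+|J_2|\le k$, as you state it, is unsatisfiable at the boundary case $|J_1|=k$, $|J_2|=0$; the resolution — which your $C_2\epsilon$-gain remark correctly points at — is that the $\habn$ bound at zero Klainerman order carries \emph{no} loss, so the effective contribution there is $c_k\sigma$ alone, and the paper accordingly states its constraint in the form $c_i+c_j<c_k$ for $i,j<k$.
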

\begin{proof}	
The proof proceeds by induction. We assume that there exists a finite increasing sequence $c_k$ with $9\le c_k\ll c_{k+1}$ and $c_{i}+c_{j}<c_k$ whenever $i, j<k$, such that
\begin{equation*}
			\|t\,  \Gamma^J\habf\|_{L^\infty_x(\hin_s)}\lesssim
			2C_2\ep s^{c_k\sigma}  \quad \text{if } |J|=k\le N_1-1, 
		\end{equation*}
				\begin{equation*}
			\aligned
			\| t^\frac12s\, \partial_{tx} (\partial^I \Gamma^J \habn)\|_{L^\infty_xL^2_y(\hin_s)}  & + \| t^\frac32 \partial^{\le 1}_y (\partial^I \Gamma^J \habn)\|_{L^\infty_xL^2_y(\hin_s)}\\
			& \le 
			\begin{cases}
				2C_2\ep, \ &\text{if } |I|\le N_1, |J|=0\\
				2C_2\ep s^{c_k\sigma}, \ &\text{if } |I|+|J|\le N_1, |J|=k\le N_1-1
			\end{cases}	
			\endaligned
		\end{equation*}
We then have that, whenever $|I|+|J_1|+|J_2|\le N_1$ with $|J_1|=k_1>0$ and $|J_2|=k_2\le N_1-2$,
\[
|\Gamma^{J_1}\hff| \|\partial^2 \partial^I\Gamma^{J_2}\habn\|_{L^2_y}\lesssim C_2^2\ep^2 t^{-\frac32} s^{-1+(c_{k_1}+c_{k_2})\sigma}\lesssim C_2^2\ep^2 t^{-\frac32}s^{-1+c_k\sigma}.
\]	
The arguments in the proof of Proposition \ref{prp:KG_imp_decay} show that
\[
		\aligned
		s^\frac32\|\partial_y^{\le 1}(\partial^I\Gamma^J\hnn)\|_{L^2} + s^\frac12 \|\scal (\partial^I\Gamma^J\hnn)\|_{L^2} \le C\Big(\frac{s}{t}\Big)^\frac32 \enint(s_0, Z^{\le 2}(\partial^I\Gamma^J\hnn))^\frac12 \\
		+ C(C_1\ep +  C_2^2\ep^2) \Big(\frac{s}{t}\Big)^\frac32 {\magenta \begin{cases}
			1 \ &\text{if  } |I|\le N_1,\ |J|=0\\
			s^{c_k\sigma}, \ &\text{if  } |I|+|J|\le N_1+1, |J|=k\le N_1\\
		\end{cases}}
		\endaligned
		\]
and an appropriate choice of constants allows us to get \eqref{KG_sigma_improved}. 
Furthermore, whenever $|J_1|+|J_2|\le |J|\le N_1-1$
\[
\sum_{|J_1|+|J_2|=k} |(\partial \Gamma^{J_1}\hnn \cdot \partial \Gamma^{J_2}\hnn)^\flat| + \big|\big( (H^{\mu\nu})^\natural\cdot \partial_\mu\partial_\nu  \Gamma^J\habn\big)^\flat \big| \lesssim C_2^2\ep^2 t^{-3}s^{c_k\sigma}
\]
which implies, following the proof of Lemma \ref{lem:Dab-sourcing}, that
\[
|D^J_{\alpha\beta}|\lesssim (C_1^2+C_2^2)\ep^2 t^{-1}s^{-2+c_k\sigma}.
\]
Using lemma \ref{lem:Alinhac} with $\nu=2-\frac{c_k\sigma}{2}$ and $\mu=1-\frac{c_k\sigma}{2}$, we can then replace \eqref{vay} with
\[
|v^J_{\alpha y}(t,x)|\lesssim C_2^2\epsilon^2 (1+t)^{-1+\frac{{\magenta c_k}}{2}}\langle t-r\rangle^{\frac{{c_k}}{2}}
\]
and the arguments in the proof of Proposition \ref{prp:Linf-Linf_wave} yield \eqref{W_sigma_improved}.
\end{proof}

	\subsection{Propagation of the energy bounds.}
	In this section we propagate the a-priori energy bounds \eqref{boot1_in}-\eqref{boot3_in} and hence conclude the proof of Proposition \ref{prop:bootstrap_int}. 
	
	\begin{proof}[Proof of Proposition \ref{prop:bootstrap_int}]	We note first that, using bound \eqref{KG_sigma_improved} instead of \eqref{bootKG_in} and the fact that $\sigma\ll \zeta_0$ so that $c\sigma + \zeta_{j}<\zeta_k$ whenever $j<k$ in the proof of Lemma  \ref{lem:weak_null_int} and Proposition \ref{prp:commutators}, allows us to replace the loss $s^{\gamma_i + \zeta_{k-i}}$ for $i=\overline{1,4}$ in \eqref{weak_est4}, \eqref{comm_high} and \eqref{comm_zeromode} with $s^{\zeta_k}$, hence having
	\begin{multline}\label{weak_est4'}
			\left\| Z^K P^\flat_{\alpha\beta}(\partial h, \partial h)\right\|_{L^2_{xy}(\hin_s)}
			\lesssim C_1\epsilon s^{-1}\sum_{K'} \enint(s,Z^{K'}\hff)^{1/2} \\+ C_1\ep s^{-1+C\ep}\sum_{K''} \enint(s,Z^{K''}\hff)^{1/2}+C_1^2\ep^2s^{-\frac32+2\delta_N} 
	+C_1^2\ep^2\delta_{k>N_1} s^{-1+\zeta_k}
		\end{multline}		 
	for multi-index $K$ of type $(N, k)$,
	\begin{equation}\label{comm_high'}
	\iint_{\hin_{[s_0, s]}}\hspace{-10pt} \big|[Z^K, H^{1, \mu\nu}\partial_\mu \partial_\nu]\hab\big| |\partial_t Z^K\hab| dxdydt \lesssim (C_0^2+C_1^2)C_2^2\ep^3 s^{1+2\zeta_k}
	\end{equation}
	for multi-indexes $K$ of type $(N+1, k)$ with $k\le N$, and
			\begin{multline}
			\iint_{\hin_{[s_0, s]}} \big|\big([Z^K, H^{1, \mu\nu}\partial_\mu\partial_\nu]\hab\big)^\flat \big| |\partial_t Z^K\habf| dx dt \\ 
			+ \iint_{\hin_{[s_0, s]}} \big|\big((H^{1, \mu\nu})^\natural \cdot\partial_\mu\partial_\nu Z^K\habn\big)^\flat \big| |\partial_t Z^K\habf| dx dt \lesssim  (C_0^2+C_1^2)C_2^2\ep^3 s^{2\zeta_k} \label{comm_zeromode'}
		\end{multline}
		for multi-indexes $K$ of type $(N, k)$.
		
		For multi-indexes $K$ of type $(N+1,k)$, we substitute \eqref{boundary_term_int}, \eqref{null_int}, \eqref{cub_int}, \eqref{est_F0_int}, \eqref{weak_est1},   \eqref{int_H_energy}, \eqref{comm_high'}, together with the energy bound \eqref{bootin1}, into \eqref{en_ineq_hab_high} and hence deduce the existence of a constant $\tilde{C}>0$ such that for all $s\in [s_0, S_0)$
		\[
		\enint(s, Z^K h^1)\le\tilde{C}\big( \enint(s_0, Z^Kh^1) + \ep^2 s^{2\sigma + C\epsilon} + (C_0^2 + C_1^2)C_2^2\ep^3 s^{1+2\zeta_k}\Big).
		\]
		The enhanced energy estimate \eqref{boot1_in_enh} is obtained by picking $C_1\gg 1$ sufficiently large so that $3\tilde{C}\enint(s_0, Z^Kh^1)\le C_1^2\ep^2$ and $3\tilde{C}\le C_1$, and $0<\ep_0\ll 1$ sufficiently small so that $3( C_0^2+1) C_2^2\ep_0\le C_1$ and $2\sigma+C\epsilon<2\delta_k$. 
		
		Analogously, if $K$ is of type $(N, k)$ with $k\leq N_1$, we derive from \eqref{boundary_term_int}, \eqref{null_lower_int}, \eqref{cub_lower_int}, \eqref{est_F0_int}, \eqref{weak_est3}, \eqref{comm_low}, \eqref{int_H_energy_lower}, together with the energy bound \eqref{bootin4} that there exists another constant $\tilde{C}>0$ such that for all $s\in [s_0, S_0)$
		\[
		\enint(s, Z^K h^1)\le\tilde{C}\big( \enint(s_0, Z^Kh^1) + \ep^2 s^{2\sigma + C\epsilon} + (C_0^2 + C_1^2)C_2^2\ep^3 s^{2\delta_k}\Big).
		\]
		Choosing accordingly $C_1$ and $\ep_0$ yields \eqref{boot3_in_enh}. 
		
		Finally, for any multi-index $K$ of type $(N, k)$, we have the following estimate for the energy of $Z^K\habf$, which is obtained by plugging \eqref{boundary_term_int},  \eqref{cub_int}, \eqref{null_zeromode},  \eqref{est_F0_int},  \eqref{int_H_energy_flat}, \eqref{weak_est4'}, \eqref{comm_zeromode'} and the energy bound \eqref{bootin2} into \eqref{en_ineq_habf_high}
		\[
		\enint(s, Z^K \hff)\le\tilde{C}\big( \enint(s_0, Z^K\hff) + \ep^2 s^{2\sigma + C\epsilon} + (C_0^2 + C_1^2)C_2^2\ep^3 s^{2\zeta_k}\Big),
		\]
		for a new constant $\tilde{C}$. Again, the enhanced energy bound \eqref{boot2_in_enh} follows by choosing $C_1, \ep_0$ appropriately. 
		
	\end{proof}

	\appendix
	\section{Energy Inequalities} \label{sec:energy_appendix}
	
	In this section we group together different energy inequalities that are useful in the paper. 
	We denote by $\Wf$ the solution of the following linear inhomogeneous wave equation
	$$g^{\mu \nu}\partial_\mu \partial_\nu \Wf =\Ff$$
	which can be also written
	\begin{equation}\label{linearized_wave}
		(-\partial^2_t + \Delta_x +\partial^2_y)\Wf + H^{\mu\nu}\partial_\mu\partial_\nu \Wf = \Ff, \qquad (t,x,y)\in \R^{1+3}\times \S^1
	\end{equation}
	where the tensor components $H^{\mu\nu}$ are assumed to be sufficiently small functions and $\Ff$ is some source term.
	
	In this section, a particular attention will be given to the energy flux on hyperboloids. These are spacelike hypersurfaces in Minkowski spacetime, but have a degeneracy caused by the fact that they are asymptotically null. This is something which could be destroyed by perturbations of the metric. We take advantage of the Schwarzschild component of $H$, introduced in \eqref{dec_H}, to show that the hyperboloids remain spacelike everywhere.

	\begin{proposition}[Exterior energy inequality] \label{prop:exterior_energy}
		Let $\Wf$ be a solution of equation \eqref{linearized_wave} decaying sufficiently fast as $|x|\rightarrow \infty$ and assume that there exist $\epsilon>0$ small such that tensor $H$ satisfies the following bounds
		$$|H(t,x,y)|\lesssim \frac{\ep}{(1+t+r)^\frac{3}{4}}, \quad \Big|H_{LL}+\chi\left(\frac{r}{t}\right)\chi(r)\frac{2M}{r} \Big| \lesssim \frac{\ep}{(1+t+r)^{1+\delta}},$$
		where $\chi$ is a cut-off function such that $\chi(s) = 0$ for $s\le 1/2$, $\chi(s)=1$ for $s\ge 3/4$ and $\delta$ is any fixed positive constant.
		Let $w(q)$ be a smooth function that only depends on the distance $q = r-t$ from the light cone and such that $w(q), w'(q)\ge 0$. For any $2\le t_1< t_2$, let $\tilde{\hcal}_{t_1 t_2}$ denote the portion of $\tilde{\hcal}$ in the time interval $[t_1, t_2]$ and $d\mu_{\hcal}$ be its surface element. We have the following inequality
		\begin{equation}\label{energy_ineq_ext}
			\begin{aligned}
				& \int_{\Sext_{t_2}} w(q) |\nabla_{txy}\Wf|^2 dxdy + \int_{\tilde{\hcal}_{t_1t_2}}  w(q)\Big[\frac{1}{2( 1+r^2)} +\chi\left(\frac{r}{t}\right)\chi(r)\frac{M}{2r} \Big] |\partial_t \Wf|^2 + w(q)|\underline{\tilde{\nabla}}\Wf|^2 dxdy \\
				& + \iint_{\dext_{[t_1, t_2]}} w'(q)\big(|L\Wf|^2  + |\slashed \nabla \Wf|^2\big) dtdxdy\lesssim \int_{\Sext_{t_1}} w(q) |\nabla_{txy}\Wf|^2 dxdy \\
				& + \iint_{\dext_{[t_1, t_2]}} w(q)|(\Ff + \partial_\mu H^{\mu\nu} \, \partial_\nu\Wf)\partial_t \Wf| + w(q)|\partial_t H^{\mu\nu} \, \partial_\mu\Wf \, \partial_\nu\Wf |\, dtdxdy \\
				& + \iint_{\dext_{[t_1, t_2]}} w'(q) | H^{\rho\sigma}\partial_\rho \Wf \partial_\sigma\Wf|\ dtdxdy  + \iint_{\dext_{[t_1, t_2]}}\hspace{-10pt} w'(q) |(-H^{0\nu} + \omega_\bmj H^{\bmj\nu})\partial_\nu \Wf \partial_t\Wf| \ dtdxdy
			\end{aligned}
		\end{equation}
		where $\underline{\tilde{\nabla}} = (\tilde{\pb}_1, \dots, \tilde{\pb}_4)$ is the tangent gradient to $\tilde{\hcal}$, i.e. $\tilde{\pb}_{\bmi} = \partial_{\bmi} + \frac{x^{\bmi}}{t-1}\partial_t$ for $\bmi=1,2,3$ and $\tilde{\pb}_4 = \partial_y$, and $\omega_\bmj = x_\bmj/r$.
		
	\end{proposition}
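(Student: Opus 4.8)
The plan is to derive \eqref{energy_ineq_ext} as a weighted energy identity for the multiplier $w(q)\partial_t\Wf$ applied to equation \eqref{linearized_wave}, integrated over the truncated spacetime slab $\dext_{[t_1,t_2]}$ whose boundary consists of the two constant-time slices $\Sext_{t_1}$, $\Sext_{t_2}$ and the lateral hyperboloidal piece $\tilde{\hcal}_{t_1t_2}$. First I would write $g^{\mu\nu}\partial_\mu\partial_\nu\Wf = \Box_{xy}\Wf + H^{\mu\nu}\partial_\mu\partial_\nu\Wf = \Ff$ and multiply by $-2w(q)\partial_t\Wf$. For the flat part $\Box_{xy}$ this is the classical computation: $-2w(q)\partial_t\Wf\,\Box_{xy}\Wf$ is, up to a total divergence, equal to $\partial_t\big(w(q)|\nabla_{txy}\Wf|^2\big)$ plus a lateral flux plus the crucial \emph{good} bulk term coming from differentiating the weight, namely $w'(q)$ times $|L\Wf|^2 + |\slashed\nabla\Wf|^2$ (this is the standard mechanism by which a weight increasing in $q=r-t$ produces a coercive spacetime term controlling the ``good derivatives''). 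I would keep careful track of the signs so that $w,w'\ge 0$ makes all boundary and good-bulk terms have the right sign.

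Next I would handle the curved perturbation $H^{\mu\nu}\partial_\mu\partial_\nu\Wf$. Writing $-2w(q)\partial_t\Wf\, H^{\mu\nu}\partial_\mu\partial_\nu\Wf$, I would integrate by parts in the $\partial_\mu$ to move one derivative off $\Wf$, producing: (i) a divergence term whose boundary contributions on $\Sext_{t_1},\Sext_{t_2}$ and $\tilde{\hcal}_{t_1t_2}$ feed into the energy; (ii) the commutator/derivative-of-$H$ terms $w(q)\,\partial_\mu H^{\mu\nu}\partial_\nu\Wf\,\partial_t\Wf$ and $w(q)\,\partial_t H^{\mu\nu}\partial_\mu\Wf\,\partial_\nu\Wf$, which appear on the right-hand side of \eqref{energy_ineq_ext}; and (iii) terms where $\partial_\mu$ hits the weight $w(q)$, producing factors of $w'(q)$ contracted with $H$ — these are precisely the $w'(q)|H^{\rho\sigma}\partial_\rho\Wf\partial_\sigma\Wf|$ and $w'(q)|(-H^{0\nu}+\omega_\bmj H^{\bmj\nu})\partial_\nu\Wf\,\partial_t\Wf|$ terms on the RHS, using that $\partial_\mu q = (\delta_{\bmi\mu}\omega^\bmi - \delta_{0\mu})$ with $\omega_\bmj = x_\bmj/r$ so that the contraction of $w'(q)$ with $H^{\mu\nu}$ naturally produces the combination $-H^{0\nu}+\omega_\bmj H^{\bmj\nu}$. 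The lateral boundary integrand on $\tilde{\hcal}$ is then a quadratic form in $\partial\Wf$ whose flat part is $\frac{1}{2}$ times the area-element-weighted sum over the tangent derivatives $\underline{\tilde\nabla}\Wf$ plus a normal contribution, and the perturbed part involves $H$ contracted with the (asymptotically null) conormal of $\tilde{\hcal}$.

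The main obstacle — and the point the proposition's hypotheses are designed to overcome — is showing that this lateral boundary quadratic form on $\tilde{\hcal}_{t_1t_2}$ is \emph{nonnegative} (i.e., that $\tilde{\hcal}$ stays spacelike for the perturbed metric $g$), and moreover that it dominates from below the claimed quantity $w(q)\big[\tfrac{1}{2(1+r^2)}+\chi(r/t)\chi(r)\tfrac{M}{2r}\big]|\partial_t\Wf|^2 + w(q)|\underline{\tilde\nabla}\Wf|^2$. The conormal to $\tilde{\hcal}=\{(t-1)^2-r^2=1\}$ is null in the limit $t\to\infty$, so the flat boundary form degenerates: the coefficient of $|\partial_t\Wf|^2$ is of size $(t-1)^{-2}-\dots \sim (1+r^2)^{-1}$ near the cone, which is exactly the $\tfrac{1}{2(1+r^2)}$ factor. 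The danger is that the perturbation $H$, being only $O(\ep(1+t+r)^{-3/4})$, could be large compared to this degenerate $(1+r^2)^{-1}$ coefficient and flip the sign. The resolution is the refined hypothesis on $H_{LL}$: the only component of $H$ that pairs badly with the null conormal is $H_{LL}$, and the bound $|H_{LL}+\chi(r/t)\chi(r)\tfrac{2M}{r}|\lesssim \ep(1+t+r)^{-1-\delta}$ says that the dangerous part of $H_{LL}$ is exactly the (good-signed, since $M>0$) Schwarzschild tail $-\chi(r/t)\chi(r)\tfrac{2M}{r}$ up to a genuinely integrable-in-$q$ error. So I would expand the boundary form in the null frame, isolate the $H_{LL}(\Lb\Wf)^2$-type term, absorb the Schwarzschild piece into the positive $\chi(r/t)\chi(r)\tfrac{M}{2r}|\partial_t\Wf|^2$ term on the left (keeping a fixed fraction), and estimate the remaining $H$-contributions either by the spacelike-ness of the flat form (they are lower-order once $\ep$ is small, on the region where $r\ll t$) or — in the region $r\sim t$ near the cone — by the $(1+t+r)^{-1-\delta}$ decay together with the other good terms. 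Once the boundary form is shown coercive with the stated lower bound and all bulk error terms are placed on the RHS with the correct weights, integrating the resulting differential inequality in $t$ from $t_1$ to $t_2$ gives \eqref{energy_ineq_ext}; the remaining manipulations (Cauchy–Schwarz to bound $w(q)\,\partial_\mu H^{\mu\nu}\partial_\nu\Wf\,\partial_t\Wf$ etc. are left as written, i.e., they stay on the RHS).
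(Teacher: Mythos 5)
Your proposal reproduces the paper's proof: the weighted $\partial_t$-multiplier identity for $g^{\mu\nu}\partial_\mu\partial_\nu\Wf$, Stokes' theorem over $\dext_{[t_1,t_2]}$, the identification $\partial_\mu q$ as producing the combination $-H^{0\nu}+\omega_\bmj H^{\bmj\nu}$ in the $w'(q)$ bulk terms, and, crucially, the null-frame expansion of the lateral flux on $\tilde{\hcal}$ where the refined $H_{LL}$ hypothesis with the good-signed Schwarzschild tail $-\chi(r/t)\chi(r)\tfrac{2M}{r}$ is used to keep the boundary form coercive. The one imprecision is your regional split $r\ll t$ vs.\ $r\sim t$: on $\tilde{\hcal}$ one has $(t-1)^2=1+r^2$, so $r\sim t$ for essentially all $r\gtrsim 1$; the paper instead splits at $r\sim 1/\ep$, the scale at which $|H^1_{LL}|\lesssim \ep(1+t+r)^{-1-\delta}$ ceases to be dominated by the flat degenerate coefficient $\tfrac{1}{2(1+r^2)}$ and must instead be absorbed by the positive Schwarzschild term $\chi(r/t)\chi(r)\tfrac{M}{2r}$.
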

	\begin{proof}
		We start with the following computation
		\begin{equation}
			\begin{split}
				\partial_t \Wf\left(g^{\mu \nu} \partial_\mu \partial_\nu \Wf\right)=&\partial_\mu \left(g^{\mu\nu}\partial_\nu\Wf\partial_t\Wf\right)-\frac{1}{2}\partial_t\left(g^{\mu \nu}\partial_\mu \Wf\partial_\nu\Wf\right)\\
				&-(\partial_\mu H^{\mu\nu})\partial_\mu \Wf\partial_t \Wf +\frac{1}{2}(\partial_tH^{\mu \nu})\partial_\mu \Wf\partial_\nu\Wf.
			\end{split}
		\end{equation}
		Multiplying by $w(q)$ we obtain
		\begin{equation}\label{calcw}
			\begin{split}
				w(q)\partial_t \Wf\left(g^{\mu \nu} \partial_\mu \partial_\nu \Wf\right)=&\partial_\mu \left(w(q)g^{\mu\nu}\partial_\nu\Wf\partial_t\Wf\right)-\frac{1}{2}\partial_t\left(w(q)g^{\mu \nu}\partial_\mu \Wf\partial_\nu\Wf\right)\\
				&-w(q)(\partial_\mu H^{\mu\nu})\partial_\mu \Wf\partial_t \Wf +\frac{1}{2}w(q)(\partial_tH^{\mu \nu})\partial_\mu \Wf\partial_\nu\Wf\\
				&-(\partial_\mu w(q))\left(g^{\mu\nu}\partial_\nu\Wf\partial_t\Wf\right)+\frac12(\partial_t w(q))\left(g^{\mu \nu}\partial_\mu \Wf\partial_\nu\Wf\right)
			\end{split}
		\end{equation}
		
		We integrate \eqref{calcw} in the spacetime portion of the exterior region included between the two spacelike hypersurfaces $\Sext_{t_1}$ and $\Sext_{t_2}$, denoted by $\dext_{[t_1, t_2]}$.
		We treat the divergence term via Stokes' theorem, meaning
		$$\int_{\dext_{[t_1, t_2]}} (\partial_\mu X^\mu) dtdxdy = \int_{\Sigma^e_{t_2}} X^0 dxdy -\int_{\Sigma^e_{t_1}} X^0 dxdy
		+\int_{\tilde{\hcal}_{t_1 t_2}}\Big(X^0-\frac{x_\bmi}{t-1}X^\bmi\Big)dxdy.$$
		Applying this to \eqref{calcw} we obtain
		$$ 
		\begin{aligned}	
			& \int_{\Sigma^e_{t_2}} w(q)e^{curv}_{\Sigma} dxdy +\int_{\tilde{\hcal}_{t_1 t_2}} w(q)e^{curv}_{\tilde{\hcal}}dxdy+\iint_{\dext_{[t_1, t_2]}} w'(q) B\, dtdxdy
			\\ & = \int_{\Sigma^e_{t_1}}w(q) e^{curv}_{\Sigma} dx + \iint_{\dext_{[t_1, t_2]} }w(q) (C - \Ff)dxdt,
		\end{aligned}
		$$
		where the curved energy densities are defined by
		\begin{align}
			e^{curv}_{\Sigma} &= -g^{\mu 0}\partial_\mu \Wf\partial_t \Wf +\frac{1}{2}g^{\mu\nu}\partial_\mu \Wf\partial_\nu \Wf\\
			e^{curv}_{\tilde{\hcal}}&= -g^{\mu 0}\partial_\mu \Wf\partial_t \Wf +\frac{1}{2}g^{\mu\nu}\partial_\mu \Wf\partial_\nu \Wf+\frac{x_\bmi}{t-1}g^{\bmi\mu} \partial_\mu \Wf\partial_t \Wf \label{density_eH}
		\end{align}
		and the bulk terms are given by
		\begin{align}
			B&=-(\partial_\mu q)\left(g^{\mu\nu}\partial_\nu\Wf\partial_t\Wf\right)+\frac12(\partial_t q)\left(g^{\mu \nu}\partial_\mu \Wf\partial_\nu\Wf\right)\\
			C&=-(\partial_\mu H^{\mu\nu})\partial_\mu \Wf\partial_\nu \Wf +\frac{1}{2}(\partial_tH^{\mu \nu})\partial_\mu \Wf\partial_\nu\Wf. \label{def_C}
		\end{align}
		We have
		$$	e^{curv}_{\Sigma} =\frac{1}{2}\left((\partial_t \Wf)^2+|\nabla_x \Wf|^2+(\partial_y \Wf)^2\right)+ O\left(H |\nabla \Wf|^2\right),$$
		so with the hypothesis $|H|\leq \frac{1}{100}$ we easily obtain
		$$\frac{1}{4}\left((\partial_t \Wf)^2+|\nabla_x \Wf|^2+(\partial_y \Wf)^2\right)\leq e^{curv}_{\Sigma}\leq 4\left((\partial_t \Wf)^2+|\nabla \Wf_x|^2+(\partial_y \Wf)^2\right).$$
		We have to be a little more careful with $	e^{curv}_{\tilde{\hcal}}$.  We have
		\begin{align*}
			e^{curv}_{\tilde{\hcal}}=&\frac{1}{2}\left((\partial_t \Wf)^2+|\nabla_x \Wf|^2+(\partial_y \Wf)^2\right)+\frac{r}{t-1}\partial_r \Wf\partial_t \Wf
			-\frac{1}{4}H_{LL}(\partial_t \Wf)^2 \\
			&+O(H\cdot \partial \Wf\cdot \tilde{\pb}\Wf)+\left(1-\frac{r}{t-1}\right)O(H |\nabla \Wf|^2)+ O(H|\underline{\tilde{\nabla}}\Wf |^2)\\
			=&\frac{1}{2}\left(\frac{1}{(t-1)^2}(\partial_t \Wf)^2 +\sum_{\bmi=1}^3 \left(\partial_\bmi \Wf+\frac{x_\bmi}{t-1}\partial_t \Wf\right)^2+(\partial_y \Wf)^2\right)	-\frac{1}{4}H_{LL}(\partial_t \Wf)^2\\
			&+O(H\cdot\partial \Wf\cdot \tilde{\pb}\Wf)+\left(1-\frac{r}{t-1}\right)O(H |\nabla \Wf|^2)+ O(H|\underline{\tilde{\nabla}}\Wf |^2)\\
		\end{align*}
		We note that on $\hcal $ we have $(t-1)^2=1+r^2$.
		Using the decomposition \eqref{dec_H} of $H$ we write
			\[
			\left(\frac{1}{2(t-1)^2}-\frac{1}{4}H_{LL}\right)(\partial_t \Wf)^2  = 	\left(\frac{1}{2(1+r^2)}+\chi\left(\frac{r}{t}\right)\chi(r)\frac{M}{2r} -\frac{1}{4} H^1_{LL}\right) (\partial_t \Wf)^2
			\]
so that			
		\begin{align*}
			&\left(\frac{1}{2(t-1)^2}-\frac{1}{4}H_{LL}\right)(\partial_t \Wf)^2 
			+O(H \cdot \partial\Wf\cdot \pb\Wf)+ \left(1-\frac{r}{t-1}\right)O(H |\nabla \Wf|^2)\\
			&=\left(\frac{1}{2(1+r^2)}+\chi\left(\frac{r}{t}\right)\chi(r)\frac{M}{2r} +O(H^1_{LL})+O(\ep^{-\frac{1}{2}}|H|^2)\right) (\partial_t \Wf)^2 \\
			& + \left(1-\frac{r}{t-1}\right)O(H(\partial \Wf)^2)+O(\ep^\frac{1}{2} |\underline{\nabla}\Wf\nabla \Wf |^2).
		\end{align*}
Under the hypothesis 
			\[
			|H^1_{LL}|\lesssim \frac{\epsilon}{(1+t+r)^{1+\delta}}
			\]
			we obtain that for not too large values of $r$ (e.g. $r\ll 1/(2\ep)$), $H^1_{LL}$ is small in front of $\frac{1}{2(1+r^2)}$, while for
			$r\gtrsim 1/(2\ep)$ it is small compared to the then dominant term $\frac{M}{2r}$.
		
		Under the hypothesis on $H$ we obtain
		$$|h^1_{LL}+\ep^{-\frac{1}{2}}|H|^2| \lesssim \frac{\ep}{(1+r)^{-1-\delta}}.$$
		Consequently for not too large values of $r$ (e.g. $r\ll 1/(2\ep)$) we have that
		\[
		|h^1_{LL}+\ep^{-\frac{1}{2}}|H|^2| \le\frac{1}{100(1+r^2)};
		\]
		on the other hand, when $r\gtrsim 1/(2\ep)$ 
		the dominant term is $M/(2r)$ and for $\ep$ sufficiently small
		\[
		|h^1_{LL}+\ep^{-\frac{1}{2}}|H|^2| \le \frac{M}{100r}
		\]
		Consequently, we can bound
		\begin{multline*}
			\frac{1}{4}\left(\left(\frac{1}{2(1+r^2)} +\chi\left(\frac{r}{t}\right)\chi(r)\frac{M}{2r} \right) |\partial_t \Wf|^2 + |\underline{\tilde{\nabla}}\Wf|^2 \right) \\
			\leq e^{curv}_{\hcal} \leq 4\left(\left(\frac{1}{2(1+r^2)} +\chi\left(\frac{r}{t}\right)\chi(r)\frac{M}{2r} \right) |\partial_t \Wf|^2 + |\underline{\tilde{\nabla}}\Wf|^2 \right) .
		\end{multline*}
			Finally, a simple computation shows that
			\[
			B = |L\Wf|^2 + |\slashed \nabla \Wf|^2 + \frac12 H^{\mu\nu}\partial_\mu\Wf \cdot\partial_\nu\Wf + \Big(-H^{0\nu} + \frac{x^\bmi}{r}H^{\bmi\nu}\Big)\partial_\nu \Wf \partial_t\Wf
			\]

	\end{proof}
	
	\begin{proposition}[Energy inequality on hyperboloids] \label{prop:energy_ineq_hyperb}
		Let $\Wf$ be a solution of \eqref{linearized_wave} and $\enint(s, \Wf)$ be the energy functional defined in \eqref{interior_energy_functional}. We assume that $H$ satisfies the same hypothesis of proposition \ref{prop:exterior_energy}
		For any $2<s_1<s_2$, let $\tilde{\hcal}_{s_1s_2}$ denote the portion of $\tilde{\hcal}$ bounded by the hyperboloids $\hin_{s_i}$ with $i=1,2$. We have the following inequality
		\[
		\begin{aligned}
			\enint(s_2, \Wf) & \lesssim \enint(s_1, \Wf) + \int_{\tilde{\hcal}_{s_1s_2}}\Big[\frac{1}{2(1+r^2)}+\chi\left(\frac{r}{t}\right)\chi(r)\frac{M}{2r}\Big] |\partial_t \Wf|^2 + |\underline{\tilde{\nabla}}\Wf|^2 dxdy \\
			& + \iint_{\hin_{[s_1, s_2]}} |(\Ff + \partial_\mu H^{\mu\nu} \, \partial_\sigma \Wf)\partial_t \Wf| +  |\partial_t H^{\mu\nu} \, \partial_\mu\Wf \, \partial_\nu\Wf| \, dtdxdy.
		\end{aligned}
		\]
		The implicit constant in the above inequality is a universal constant.\\
		An analogue inequality holds true for solutions $\Wf$ to \eqref{linearized_wave} on $\R^{1+3}$.
	\end{proposition}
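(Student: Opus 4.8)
The plan is to mirror the proof of Proposition \ref{prop:exterior_energy}, but now integrating the multiplier identity over the region $\hin_{[s_1,s_2]}$ bounded by the two hyperboloids $\hin_{s_1}$, $\hin_{s_2}$ and a piece of the separating hypersurface $\tilde{\hcal}$. First I would start from the same pointwise computation
\[
\partial_t\Wf\,\big(g^{\mu\nu}\partial_\mu\partial_\nu\Wf\big) = \partial_\mu\big(g^{\mu\nu}\partial_\nu\Wf\,\partial_t\Wf\big) - \tfrac12\partial_t\big(g^{\mu\nu}\partial_\mu\Wf\,\partial_\nu\Wf\big) - (\partial_\mu H^{\mu\nu})\partial_\nu\Wf\,\partial_t\Wf + \tfrac12(\partial_t H^{\mu\nu})\partial_\mu\Wf\,\partial_\nu\Wf,
\]
and apply the divergence theorem (Stokes) on $\hin_{[s_1,s_2]}$. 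The boundary of this region now consists of three pieces: the truncated hyperboloid $\hin_{s_2}$ (outgoing flux), the truncated hyperboloid $\hin_{s_1}$ (incoming flux), and the lateral boundary $\tilde{\hcal}_{s_1 s_2}$. So the only genuinely new computation compared with Proposition \ref{prop:exterior_energy} is the sign and coercivity of the boundary density on the hyperboloids $\hin_s$, which have unit conormal proportional to $(t,-x)$ so that the flux integrand is $-g^{\mu 0}\partial_\mu\Wf\,\partial_t\Wf + \tfrac12 g^{\mu\nu}\partial_\mu\Wf\,\partial_\nu\Wf + \tfrac{x_{\bmi}}{t}g^{\bmi\mu}\partial_\mu\Wf\,\partial_t\Wf$ (weighting by the surface element).

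Next I would show that, up to the curved perturbation, this density equals $\tfrac12\big((s/t)^2|\partial_t\Wf|^2 + |\underline{\nabla}\Wf|^2\big)$, i.e. exactly the integrand of $\enint(s,\Wf)$ in \eqref{interior_energy_functional}; this is the standard flat hyperboloidal computation using $\pb_{\bmi}=\partial_{\bmi}+(x_{\bmi}/t)\partial_t$ and $s^2=t^2-r^2$. The curved corrections are of two types: terms $O(H|\nabla\Wf|^2)$ and $O(H|\underline{\nabla}\Wf|^2)$ which are absorbed using the smallness hypothesis $|H|\lesssim \ep(1+t+r)^{-3/4}$, and the dangerous term $-\tfrac14 H_{LL}(\partial_t\Wf)^2$ coming from the coefficient of $|\partial_t\Wf|^2$. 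For the latter I would use the decomposition \eqref{dec_H}: on $\hin_s$ one has (schematically) a coefficient $\tfrac12(s/t)^2 - \tfrac14 H_{LL} = \tfrac12(s/t)^2 + \chi(r/t)\chi(r)\tfrac{M}{2r} - \tfrac14 H^1_{LL}$ for the relevant component; since the hypotheses give $|H^1_{LL}|\lesssim \ep(1+t+r)^{-1-\delta}$, this remains strictly positive — small compared to $\tfrac12(s/t)^2$ where $r$ is not too large, and small compared to the dominant Schwarzschild term $\tfrac{M}{2r}$ where $r$ is large — exactly as in the exterior case. This gives two-sided equivalence of $e^{curv}_{\hin_s}$ with $(s/t)^2|\partial_t\Wf|^2 + |\underline{\nabla}\Wf|^2$.

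The remaining steps are bookkeeping. The bulk term $B$ produced by differentiating the (here trivial, no extra weight) multiplier consists only of a curved contribution $\tfrac12 H^{\mu\nu}\partial_\mu\Wf\,\partial_\nu\Wf + (-H^{0\nu}+\tfrac{x_\bmi}{r}H^{\bmi\nu})\partial_\nu\Wf\,\partial_t\Wf$, which is absorbed into the spacetime integral on the right-hand side after bounding by $|H||\partial\Wf|^2$ and using smallness; the genuine source contributes $|(\Ff + \partial_\mu H^{\mu\nu}\partial_\nu\Wf)\partial_t\Wf|$ and $|\partial_t H^{\mu\nu}\partial_\mu\Wf\,\partial_\nu\Wf|$ as stated. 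Collecting everything: outgoing flux on $\hin_{s_2}$ plus lateral flux on $\tilde{\hcal}_{s_1s_2}$ $\lesssim$ incoming flux on $\hin_{s_1}$ plus the spacetime integrals, which after using coercivity on each hyperboloid is precisely the claimed inequality. The $\R^{1+3}$ analogue is identical after dropping the $\partial_y$, $\pb_4$ directions and replacing $\Box_{xy}$ by $\Box_x$. The main obstacle — though it is a mild one, being essentially identical to the exterior case — is verifying the coercivity of $e^{curv}_{\hin_s}$, i.e. that the asymptotically-null degeneracy of the hyperboloids (encoded in the $(s/t)^2$ weight) is not destroyed by the metric perturbation; this is exactly where the positivity of the Schwarzschild part $H^{0,\mu\nu}$ and the improved decay $|H^1_{LL}|\lesssim\ep(1+t+r)^{-1-\delta}$ of the good component are used.
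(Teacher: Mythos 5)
Your approach is the same as the paper's: integrate the multiplier identity (with trivial weight $w\equiv 1$) over $\hin_{[s_1,s_2]}$, apply Stokes over the three boundary pieces $\hin_{s_1}$, $\hin_{s_2}$, $\tilde{\hcal}_{s_1 s_2}$, and verify two-sided coercivity of the hyperboloidal flux density against $(s/t)^2|\partial_t\Wf|^2 + |\underline{\nabla}\Wf|^2$ using the decomposition \eqref{dec_H}. That much is fine and is exactly what is done in the paper.

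Two small inaccuracies are worth flagging. First, you assert that the bulk term $B$ produced by differentiating the multiplier is nonzero and consists of the curved contributions $\tfrac12 H^{\mu\nu}\partial_\mu\Wf\,\partial_\nu\Wf + (-H^{0\nu}+\omega_\bmj H^{\bmj\nu})\partial_\nu\Wf\,\partial_t\Wf$, and that these are absorbed into the spacetime right-hand side by smallness of $H$. In the identity \eqref{calcw} those terms carry a factor $w'(q)$, and $w'\equiv 0$ here, so the bulk term is identically zero. This matters because the claimed absorption would not actually work: a term $\sim|H||\nabla\Wf|^2$ is not dominated by $|\partial H||\nabla\Wf|^2 + |\Ff||\partial_t\Wf|$, which is all the stated right-hand side offers; if the term were present you would have to carry it explicitly and control it by a Gr\"onwall argument. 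You are saved only because it is not there. Second, your coercivity discussion is transplanted a bit too literally from the exterior case when you say $M/(2r)$ is dominant for large $r$. In the interior, where $r\le t-1$, one has $\chi(r/t)\chi(r)\tfrac{M}{r}\lesssim \tfrac{(t-r)(t+r)}{t^2}=(s/t)^2$, so the Schwarzschild term is always subordinate to the flat $(s/t)^2$ weight and is simply absorbed into it, together with the $H^1_{LL}$ contribution; its positivity is convenient but not load-bearing, unlike in the exterior estimate. The conclusion you reach is correct, but the mechanism is the reverse of what you describe.
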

	\begin{proof}
		The proof is analogous to that of proposition \ref{prop:exterior_energy} except that we integrate \eqref{calcw} with $w\equiv 1$ in the portion of interior region bounded above by $\hin_{s_2}$, below by $\hin_{s_1}$ and laterally by $\tilde{\hcal}_{s_1s_2}$, which we denote by $\hin_{[s_1, s_2]}$.
		This yields
		$$
		\int_{\hin_{s_2}}  \mathbf{e}^{curv}_{\hin}dxdy= \int_{\hin_{s_1}} \mathbf{e}^{curv}_{\hin}dxdy+\int_{\hcal_{s_1s_2}} e^{curv}_{\hcal}dxdy +\iint_{\hin_{[s_1,s_2]}}(C-\Ff)dtdxdy,$$
		where $e^{curv}_{\hcal}$ and $C$ have been defined in \eqref{density_eH} and \eqref{def_C} respectively
		and
		$$\mathbf{e}^{curv}_{\hin} =-g^{\mu 0}\partial_\mu \Wf\partial_t \Wf +\frac{1}{2}g^{\mu\nu}\partial_\mu \Wf\partial_\nu \Wf+\frac{x_\bmi}{t}g^{\bmi\mu} \partial_\mu \Wf\partial_t \Wf .$$
		In the region $ \hin_{[s_1,s_2]}$ we have
		$$\frac{1}{4}\left(\frac{s^2}{t^2} |\partial_t \Wf|^2 + |\underline{\nabla}\Wf|^2 \right) \leq \mathbf{e}^{curv}_{\hcal} \leq 4\left(\frac{s^2}{t^2} |\partial_t \Wf|^2  + |\underline{\nabla}\Wf|^2 \right) .$$
		In fact, we have that $r\leq t-1$ and the mass term can be absorbed in the following way
		$$\chi\left(\frac{r}{t}\right) \chi(r)\frac{M}{r}\leq \frac{(t-r)(t+r)}{100(t+r)^2} =  \frac{s^2}{100t^2}.$$

		%
		%From the assumptions on $H$ we deduce the existence of two constants $c, d>0$ such that
		%\[
		%\begin{gathered}
		%c \int_{\hin_{s_i}}\tilde{\q Q}[\Wf](\mathbf{n}_{\hin_{s_i}}, \partial_t) d\mu_{\hin_s}\le \enint(s_i, \Wf)\le d \int_{\hin_{s_i}}\tilde{\q Q}[\Wf](\mathbf{n}_{\hin_{s_i}}, \partial_t) d\mu_{\hin_s} \\
		%c \int_{\tilde{\hcal}_{s_1s_2}}\hspace{-10pt}\tilde{\q Q}[\Wf](\mathbf{n}_{\tilde{\hcal}}, \partial_t) d\mu_{\tilde{\hcal}} \le \int_{\tilde{\hcal}_{s_1s_2}}\hspace{-10pt}\q Q[\Wf](\mathbf{n}_{\tilde{\hcal}}, \partial_t) - \frac{3}{2}H_{LL}(\partial_t\Wf)^2 d\mu_{\tilde{\hcal}}\le d \int_{\tilde{\hcal}_{s_1s_2}}\hspace{-10pt}\tilde{\q Q}[\Wf](\mathbf{n}_{\tilde{\hcal}}, \partial_t) d\mu_{\tilde{\hcal}}.
		%\end{gathered}
		%\]
		%The conclusion follows easily.

	\end{proof}
	
	\section{Sobolev and Hardy inequalities} \label{sec:weighted_Sobolev_Hardy}
	
	We start by listing some weighted inequalities that are used in section \ref{sec:exterior}. Their proofs can be found in Huneau-Stingo \cite{HS}. 
	
	\begin{lemma}[Weighted Sobolev inequalities]\label{lemma_wsi}
		Let $\beta\in\mathbb{R}$. For any sufficiently smooth function $u$ we have the following inequalities
		\begin{multline}\label{sobolev1_ext}
			\sup_{\Sext_t}\, (2+r-t)^{2\beta} r^2|u(t,x,y)|^2  \\
			\lesssim\iint_{\Sext_t} (2+r-t)^{1+2\beta}(\partial_r Z^{\le 2} u)^2
			+(2+r-t)^{2\beta-1}( Z^{\le 2} u)^2 \,	dxdy,
		\end{multline}
		\begin{equation}\label{sobolev2_ext}
			\sup_{\Sext_t}\, (2+r-t)^{2\beta} r^2|u(t,x,y)|^2 \lesssim  \iint_{\Sext_t} (2+r-t)^{2\beta}\Big((\partial_r Z^{\le 2}  u)^2
			+( Z^{\le 2} u)^2\Big)dxdy,
		\end{equation}
		\begin{equation}\label{sobolev3_ext}
			\begin{split}
				& \sup_{\Sext_t}\, (2+r-t)^{2\beta} r^2\|u(t,r)\|^2_{L^2(\m S^2 \times \m S^1)} \lesssim\iint_{\Sext_t} (2+r-t)^{1+2\beta}(\partial_r u)^2
				+(2+r-t)^{-1+2\beta} u^2 \,	dxdy.
			\end{split}
		\end{equation}
	\end{lemma}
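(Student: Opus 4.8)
\textbf{Proof strategy for Lemma \ref{lemma_wsi}.}

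The plan is to reduce all three inequalities to a one-dimensional weighted Sobolev estimate in the radial variable on each sphere $\{|x|=r\}\times\S^1$, combined with the standard Sobolev embedding on the compact two-sphere and on $\S^1$. First I would fix a point $(t,x,y)\in\Sext_t$ with $r=|x|$, and write, for any smooth $v=v(\rho)$ supported in $\rho\geq \sqrt{(t-1)^2-1}$,
\[
(2+r-t)^{\beta} r\,|v(r)| \lesssim \Big(\int_{r}^{\infty}\big|\partial_\rho\big((2+\rho-t)^{\beta}\rho\, v(\rho)\big)\big|\,d\rho\Big),
\]
and then expand the $\rho$-derivative using $|\partial_\rho\big((2+\rho-t)^\beta\rho\big)|\lesssim (2+\rho-t)^\beta + (2+\rho-t)^{\beta-1}\rho$ (on $\Sext_t$ one has $\rho\gtrsim 1$, so the first term is dominated by the second in the regime $\rho\gtrsim 2+\rho-t$, but both should be kept). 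Squaring, applying Cauchy--Schwarz with the weight $(2+\rho-t)^{\pm 1}$ to create the extra powers of $(2+\rho-t)$, and using that $\int_r^\infty (2+\rho-t)^{-1-2\beta+2\beta}\,d\rho$ type integrals converge appropriately, produces the radial $L^2$ bounds with the claimed weights. The difference between \eqref{sobolev1_ext} and \eqref{sobolev2_ext} is exactly which of the two ways one distributes the weight $(2+\rho-t)$ in the Cauchy--Schwarz step; \eqref{sobolev1_ext} keeps a $\partial_r$ with weight $(2+r-t)^{1+2\beta}$ and pays a $(2+r-t)^{2\beta-1}$ for the undifferentiated term, whereas \eqref{sobolev2_ext} uses the pointwise bound $r\gtrsim 2+r-t$ valid only where $r\geq 2(t-1)$... so actually \eqref{sobolev2_ext} requires the additional observation that $(2+r-t)\lesssim r$ throughout $\Sext_t$, which holds since on $\Sext_t$ one has $r\geq\sqrt{(t-1)^2-1}$ hence $2+r-t$ is comparable to or smaller than $r$.

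Next I would upgrade the radial estimate to the full statement. For \eqref{sobolev1_ext} and \eqref{sobolev2_ext} I apply the one-dimensional bound to $v(\rho)=u(t,\rho\omega,y)$ for each fixed $(\omega,y)\in\S^2\times\S^1$, then take the supremum over $(\omega,y)$ by invoking the Sobolev embeddings $H^2(\S^2)\hookrightarrow L^\infty(\S^2)$ and $H^1(\S^1)\hookrightarrow L^\infty(\S^1)$; the angular derivatives $\gd$ and $\partial_y$ that these embeddings cost are precisely absorbed into the $Z^{\le 2}$ on the right-hand side, since $r\gd_{\bmi}=\Omega_{\bmi\bmj}x^\bmj/r$ is a combination of admissible rotation vector fields and $\partial_y\in\{Z\}$. (Here one should be slightly careful: the angular Sobolev embedding on $\S^2$ naturally involves $r\gd$ rather than $\gd$, but the factor $r$ is harmless because it combines with the volume element $r^2\,dr\,d\sigma_{\S^2}\,dy$; this is the standard bookkeeping and I would do it explicitly but it is routine.) For \eqref{sobolev3_ext}, no angular Sobolev embedding is needed: one keeps the $L^2(\S^2\times\S^1)$ norm and simply applies the radial estimate after squaring and integrating in $(\omega,y)$, i.e. Minkowski's integral inequality lets one move the $L^2_{\omega,y}$ norm inside the radial integral; this explains why \eqref{sobolev3_ext} has no $Z^{\le 2}$ on the right.

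The main obstacle is not conceptual but is the careful tracking of the weights and the geometry of the domain $\Sext_t$: one must verify that all the auxiliary integrals $\int_r^\infty(2+\rho-t)^{a}\,d\rho$ appearing in the Cauchy--Schwarz steps are finite (which forces the particular exponents $1+2\beta$, $2\beta-1$, $2\beta$ that appear) and that the boundary terms at $\rho\to\infty$ vanish under the decay hypothesis on $u$, and at $\rho=\sqrt{(t-1)^2-1}$ do not contribute because we only need an upper bound obtained by integrating from $r$ outward. Since the proof is entirely standard and these computations appear in Huneau--Stingo \cite{HS}, I would state the reduction to the radial one-dimensional inequality, carry out that one-dimensional computation in detail for one representative case (say \eqref{sobolev1_ext}), and then indicate the modifications for the other two, referring to \cite{HS} for the remaining bookkeeping.
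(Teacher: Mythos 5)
Your overall architecture — a one-dimensional weighted radial estimate obtained by integration from $r$ outward, then angular Sobolev embedding on $\S^2\times\S^1$ absorbed into $Z^{\le 2}$, with \eqref{sobolev3_ext} as the ``no angular embedding'' case — is the right one, and matches the standard proof (the paper itself does not reprove this lemma but refers to \cite{HS}, which follows the same scheme). However, the specific one-dimensional step as you have written it contains a genuine gap.

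You propose to apply the fundamental theorem of calculus to $w(\rho):=(2+\rho-t)^\beta\rho\, v(\rho)$, obtaining $|w(r)|\le \int_r^\infty |w'(\rho)|\,d\rho$, and then to square and apply Cauchy--Schwarz ``with the weight $(2+\rho-t)^{\pm 1}$''. This cannot close: any Cauchy--Schwarz splitting of $\big(\int_r^\infty |w'|\,d\rho\big)^2$ against the measure $\rho^2\,d\rho$ and the weights $(2+\rho-t)^{1+2\beta}$ or $(2+\rho-t)^{2\beta}$ produces an auxiliary factor of the form $\int_r^\infty (2+\rho-t)^{-1}\,d\rho$, which diverges logarithmically (you write the exponent $-1-2\beta+2\beta=-1$ yourself). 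The correct maneuver is not to square after integrating, but to apply the fundamental theorem of calculus directly to the squared weighted quantity: write
\begin{equation*}
(2+r-t)^{2\beta}r^2 v(r)^2 = -\int_r^\infty \partial_\rho\!\left[(2+\rho-t)^{2\beta}\rho^2\, v(\rho)^2\right]\,d\rho,
\end{equation*}
and expand the $\rho$-derivative into three pieces. The piece $2\beta(2+\rho-t)^{2\beta-1}\rho^2 v^2$ is exactly the undifferentiated term in \eqref{sobolev1_ext} (and may be dropped by sign when $\beta\ge0$, as is needed for \eqref{sobolev2_ext} where the weight $2\beta-1$ does not appear); the piece $2\rho(2+\rho-t)^{2\beta}v^2$ is nonnegative and appears with a favourable sign so it may be dropped outright (you do not in fact need the bound $2+\rho-t\lesssim\rho$ here, though your observation that it holds throughout $\Sext_t$ for $t\ge2$ is correct); and the cross term $2(2+\rho-t)^{2\beta}\rho^2\,v\,v'$ is where Young's inequality with a factor $(2+\rho-t)^{\pm1}$ (for \eqref{sobolev1_ext}) or no factor (for \eqref{sobolev2_ext}) enters. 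No divergent auxiliary integral arises because every term is already a spacetime $L^2$ quantity. Integrating this identity in $(\omega,y)\in\S^2\times\S^1$ gives \eqref{sobolev3_ext}; applying it to $\Omega^\alpha\partial_y^\gamma u$ for $|\alpha|+\gamma\le 2$ and using $H^2(\S^2\times\S^1)\hookrightarrow L^\infty$ (with the rotations $\Omega_{\bmi\bmj}$ and $\partial_y$ being admissible vector fields, so the $r$ factors you worry about are automatic) gives \eqref{sobolev1_ext} and \eqref{sobolev2_ext}. With this replacement the proof is complete; the angular bookkeeping you relegate to ``routine'' is indeed routine once the radial step is corrected.
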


	\begin{lemma}[Weighted Hardy inequality]\label{lem_hardy}
		Let $\beta>-1$. For any sufficiently regular function $u$ for which the left-hand side of the following inequality is finite we have
		\begin{equation}\label{hardy_ext}
			\iint_{\Sext_t} (2+r-t )^\beta u^2 dxdy \lesssim \iint_{\Sext_t}(2+r-t)^{\beta+2}(\partial  u)^2 dxdy.
		\end{equation}
		
	\end{lemma}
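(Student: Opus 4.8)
\textbf{Proof proposal for the weighted Hardy inequality \eqref{hardy_ext}.}

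The plan is to reduce the statement, by a density argument, to the case of smooth functions $u$ that are compactly supported inside the exterior region $\Sext_t = \{|x|\ge \sqrt{(t-1)^2-1}\}\times\S^1$, so that all boundary contributions from integration by parts in the radial variable vanish at $r=\infty$, and the boundary contribution at the inner edge $r=r_t:=\sqrt{(t-1)^2-1}$ will turn out to have a favorable sign. For such $u$ I would first perform the $\S^1$-integral trivially (the inequality is an inequality between integrals over $\R^3\times\S^1$ and the weight $(2+r-t)^\beta$ does not depend on $y$), so it suffices to prove, for each fixed $y$, the purely radial statement
\[
\int_{r_t}^\infty (2+r-t)^\beta u^2\, r^2\, dr \lesssim \int_{r_t}^\infty (2+r-t)^{\beta+2}(\partial_r u)^2\, r^2\, dr .
\]
Here I have used that $|\partial_{xy}u|\ge |\partial_r u|$ pointwise, so replacing the full gradient $\partial u$ on the right-hand side by the radial derivative $\partial_r u$ only makes the right-hand side smaller; it is therefore enough to bound the left-hand side by the $\partial_r u$ integral.

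Next I would carry out the standard one-dimensional Hardy argument adapted to the weight and to the measure $r^2\,dr$. Introduce the antiderivative $W(r) := \tfrac{1}{\beta+1}(2+r-t)^{\beta+1}$, so that $W'(r) = (2+r-t)^{\beta}$ and, crucially, $W(r)\ge 0$ for all $r\ge r_t$ since $2+r-t\ge 1>0$ there (this is where $\beta>-1$ enters: it guarantees $(2+r-t)^{\beta+1}$ is an increasing nonnegative function with the right growth, and that the constant $\tfrac1{\beta+1}$ is finite and positive). Write $\int (2+r-t)^\beta u^2 r^2\,dr = \int W'(r)\, u^2 r^2\, dr$ and integrate by parts in $r$. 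The boundary term at $r=\infty$ vanishes by compact support; the boundary term at $r=r_t$ is $-W(r_t)\,u(r_t)^2 r_t^2 \le 0$ and is discarded. One is left with
\[
\int_{r_t}^\infty (2+r-t)^\beta u^2 r^2\, dr \le \int_{r_t}^\infty W(r)\,\big(2 u\,\partial_r u\, r^2 + 2 u^2 r\big)\, dr .
\]
Since $0\le W(r)=\tfrac1{\beta+1}(2+r-t)^{\beta+1}\le \tfrac1{\beta+1}(2+r-t)(2+r-t)^{\beta}$ and, on $\Sext_t$, $r\ge r_t\gtrsim 1$ forces $2+r-t \lesssim r$ in the regime $r\lesssim t$ while $2+r-t\le 2+r$ always, one checks $W(r)\,r \lesssim (2+r-t)^{\beta+1} r \lesssim (2+r-t)^{\beta} r^2$ after using $2+r-t\lesssim \langle r\rangle$; so the term $\int W\cdot 2u^2 r\,dr$ can be absorbed into the left-hand side for $\beta$ bounded away from $-1$ — and in any case, treating it as a lower-order perturbation, one moves a small multiple of $\int (2+r-t)^\beta u^2 r^2\,dr$ to the left. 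For the main term, apply Cauchy–Schwarz: $W(r)\,2|u|\,|\partial_r u|\, r^2 \le \tfrac{1}{2}(2+r-t)^{\beta} u^2 r^2 + C\, W(r)^2 (2+r-t)^{-\beta}(\partial_r u)^2 r^2$, and since $W(r)^2(2+r-t)^{-\beta} = \tfrac{1}{(\beta+1)^2}(2+r-t)^{\beta+2}$, this last quantity is exactly $\lesssim (2+r-t)^{\beta+2}(\partial_r u)^2 r^2$. Absorbing the $\tfrac12 \int (2+r-t)^\beta u^2 r^2\,dr$ into the left-hand side finishes the radial estimate, and integrating back over $y\in\S^1$ and over the angular variables gives \eqref{hardy_ext}.

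The main obstacle — and the only genuinely delicate point — is the treatment of the lower-order term $\int W(r)\,u^2\, r\,dr$ arising from the $r^2$-weight in the volume element: one must verify that the constant in $W(r)\,r\lesssim (2+r-t)^{\beta+2}$ (equivalently that $(2+r-t) \lesssim r$ uniformly on $\Sext_t$, which holds because $2+r-t\le 2+2r_t/r\cdot r$ is false in general and must instead be argued from $t\le 1+\sqrt{1+r^2}\le 2+r$ giving $2+r-t\ge 0$ and $2+r-t \le 2+r \lesssim r$ only once $r\gtrsim 1$) does not blow up, and that the coefficient it produces on the left-hand side stays strictly below $1$, i.e. that the absorption is legitimate — this is what pins down the dependence of the implicit constant on $\beta$ and forces the hypothesis $\beta>-1$. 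Everything else is the routine one-dimensional weighted Hardy computation, and the passage from compactly supported $u$ to general $u$ with finite left-hand side is a standard cutoff-and-monotone-convergence argument, using that $(2+r-t)^{\beta+2}(\partial(\chi_R u))^2 \lesssim (2+r-t)^{\beta+2}(\partial u)^2 + (2+r-t)^{\beta+2}R^{-2}u^2$ on the annulus $R\le r\le 2R$ and that the latter is dominated by $(2+r-t)^{\beta}u^2$ there.
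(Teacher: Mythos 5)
The overall strategy—reduce to the one-dimensional radial estimate with measure $r^2\,dr$, choose the antiderivative $W(r)=\tfrac{1}{\beta+1}(2+r-t)^{\beta+1}$, integrate by parts, and discard the boundary term at $r=r_t$ because it has a favorable sign—is the standard proof of this sort of weighted Hardy inequality and is the right approach. (The paper simply cites \cite{HS} for the proof of Lemma \ref{lem_hardy}, so there is no in-paper proof to compare against, but there is every reason to believe it runs this way.)

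There is however a sign error that creates a spurious obstacle and leads you into an absorption argument that does not actually close. Integrating by parts gives
\[
\int_{r_t}^\infty W' u^2 r^2\, dr
= \Big[\,W u^2 r^2\,\Big]_{r_t}^\infty - \int_{r_t}^\infty W\,\partial_r(u^2 r^2)\, dr
\le -\int_{r_t}^\infty W\big(2u\,\partial_r u\, r^2 + 2u^2 r\big)\, dr,
\]
not $\le +\int W(2u\,\partial_r u\, r^2 + 2u^2 r)\,dr$ as you wrote. The crucial point is that with the correct minus sign, the lower-order term is $-2\int W u^2 r\,dr \le 0$ (recall $W\ge 0$ and $r\ge 0$), so it can simply be dropped: one is immediately left with $\int W' u^2 r^2\,dr \le 2\int W|u||\partial_r u| r^2\,dr$, and then Young's inequality with the split you describe finishes the argument with the constant $\tfrac{C}{(\beta+1)^2}$. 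There is nothing to absorb. This is precisely the mechanism that makes the Hardy inequality work: a monotone nonnegative primitive of the weight guarantees the radial lower-order term has a good sign.

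With the sign you wrote, by contrast, the absorption does not go through for $\beta$ near $-1$: you would need $2W r\le \theta\,(2+r-t)^\beta r^2$ with $\theta$ small enough to combine with the $\tfrac12$ already consumed by Young's inequality, but the best you can extract from $(2+r-t)\le r$ is $2Wr\le \tfrac{2}{\beta+1}(2+r-t)^\beta r^2$, which forces $\beta$ large and would not cover the full range $\beta>-1$. So the final paragraph of your proposal—worrying about whether the absorption constant stays below 1 and claiming this is what forces $\beta>-1$—is chasing a problem that isn't there; the hypothesis $\beta>-1$ is used only to make $W$ well-defined, nonnegative, and monotone, with finite constant $\tfrac{1}{\beta+1}$.

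Two smaller points. First, the claim $2+r-t\ge 1$ on $\Sext_t$ is not correct (at $t=2$ the boundary has $r_t=0$ and $2+r_t-t=0$, and for $t$ near $2$ it is strictly between $0$ and $1$); what is true and all you need is $2+r-t\ge 0$, which follows from $r\ge r_t\ge t-2$ for $t\ge 2$. Second, the bound $2+r-t\lesssim r$ that you labour over actually holds with constant $1$ on $\Sext_t$, since $t\ge 2$ gives $2+r-t\le r$ directly; but once the sign is fixed this bound is not needed at all.
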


	\begin{corollary}
		Let $\beta >0$. For any sufficiently regular function $u$ we have the following inequalities
		\begin{gather}
			(2+r-t)^\beta r |u(t,x,y)|\lesssim \| (2+r-t)^{1/2 + \beta} \partial Z^{\le 2}u(t)\|_{L^2(\Sext_t)} \label{sob+hardy1_ext}\\
			(2+r-t)^\beta r \|u(t,r)\|_{L^2(\S^2\times \S^1)}\lesssim \| (2+r-t)^{1/2 + \beta} \partial u(t)\|_{L^2(\Sext_t)} \label{sob+hardy2_ext}
		\end{gather}
		
	\end{corollary}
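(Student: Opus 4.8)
The plan is to obtain both inequalities by feeding the weighted Hardy inequality \eqref{hardy_ext} into the weighted Sobolev inequalities \eqref{sobolev1_ext} and \eqref{sobolev3_ext} respectively. The mechanism is that the ``zeroth order'' term on the right-hand side of each Sobolev inequality carries the weight $(2+r-t)^{2\beta-1}$, and Hardy lets us trade it for a gradient term at the cost of two extra powers of $(2+r-t)$; the resulting weight $(2+r-t)^{2\beta+1}=(2+r-t)^{2(\frac12+\beta)}$ is exactly the one appearing in \eqref{sob+hardy1_ext}--\eqref{sob+hardy2_ext}.

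For \eqref{sob+hardy1_ext} I would start from \eqref{sobolev1_ext}. Its first right-hand term is harmless, since $|\partial_r v|\le|\partial v|$ pointwise gives
\[
\iint_{\Sext_t}(2+r-t)^{1+2\beta}(\partial_r Z^{\le 2}u)^2\,dxdy \ \lesssim\ \big\|(2+r-t)^{\frac12+\beta}\,\partial Z^{\le 2}u\big\|_{L^2(\Sext_t)}^2 .
\]
For the second term I would apply \eqref{hardy_ext} to each function $Z^I u$ with $|I|\le 2$, choosing the Hardy exponent to be $2\beta-1$. Here the hypothesis $\beta>0$ enters decisively: it is precisely equivalent to $2\beta-1>-1$, the admissible range of \eqref{hardy_ext}. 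This yields
\[
\iint_{\Sext_t}(2+r-t)^{2\beta-1}(Z^{\le 2}u)^2\,dxdy \ \lesssim\ \iint_{\Sext_t}(2+r-t)^{2\beta+1}(\partial Z^{\le 2}u)^2\,dxdy \ =\ \big\|(2+r-t)^{\frac12+\beta}\,\partial Z^{\le 2}u\big\|_{L^2(\Sext_t)}^2 .
\]
Substituting the two bounds into \eqref{sobolev1_ext} and taking square roots gives \eqref{sob+hardy1_ext}, using that $(2+r-t)^\beta r\,|u(t,x,y)|\le\big(\sup_{\Sext_t}(2+r-t)^{2\beta}r^2|u|^2\big)^{1/2}$.

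The proof of \eqref{sob+hardy2_ext} is identical but starts from \eqref{sobolev3_ext}: its left-hand side already controls the $L^2(\S^2\times\S^1)$ norm, its first right-hand term is again bounded by $\|(2+r-t)^{\frac12+\beta}\partial u\|_{L^2(\Sext_t)}^2$ via $|\partial_r u|\le|\partial u|$, and its second term, with weight $(2+r-t)^{2\beta-1}$, is handled by \eqref{hardy_ext} with exponent $2\beta-1>-1$ applied to $u$ directly (no vector fields, hence no commutations are needed). Every step is a direct invocation of results established earlier in this appendix, so there is no substantive difficulty; the only point requiring attention is the sharpness of the condition $\beta>0$, which is dictated exactly by the admissible range of the weight in the Hardy inequality \eqref{hardy_ext}.
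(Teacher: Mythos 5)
Your proof is correct and takes essentially the same route as the paper: feed the weighted Hardy inequality \eqref{hardy_ext} with exponent $2\beta-1>-1$ into the second term of each weighted Sobolev inequality \eqref{sobolev1_ext}, \eqref{sobolev3_ext}, and absorb the first term directly via $|\partial_r v|\le|\partial v|$. The observation that $\beta>0$ is exactly the condition $2\beta-1>-1$ needed for Hardy is indeed the only point of the argument that requires attention.
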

	
	\begin{proof}
		Inequality \eqref{sob+hardy1_ext} (resp. \eqref{sob+hardy2_ext}) is a straight consequence of the combination of \eqref{sobolev1_ext} (resp. \eqref{sobolev3_ext}) and \eqref{hardy_ext}.
	\end{proof}

	Below are some Sobolev and Hardy inequalities that are useful in section \ref{sec:interior}. Lemma \ref{lem:standards_Sob_hyp} is standard while lemma \ref{lm:ksobolev} is a simple adaptation of a result in \cite{FWY21}. The result of lemma \ref{lem:Hardy_LR} can be also obtained with small modifications from the one in \cite{LR10}.
	
	\begin{lemma}\label{lem:standards_Sob_hyp}
		For any sufficiently smooth function $u$ we have the following Sobolev inequality
		\[
		\| u\|_{L^p(\hin_s)}\lesssim \|\underline{\nabla}u\|^{\frac{3}{2}-\frac{3}{p}}_{L^2(\hin_s)}\|u\|^{\frac{3}{p}-\frac{1}{2}}_{L^2(\hin_s)} + s^{-\big(\frac{3}{2}-\frac{3}{p}\big)}\| u\|_{L^2(\hin_s)}, \quad 2\le p\le 6
		\]
		as well as the trace inequality
		\[
		\|u\|_{L^4(S_{s,r})}\lesssim \|\underline{\nabla}u\|_{L^2(\hin_s)} + s^{-1}\|u\|_{L^2(\hin_s)}.
		\]
	\end{lemma}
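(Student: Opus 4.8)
The final statement to prove is Lemma~\ref{lem:standards_Sob_hyp}, comprising a Gagliardo--Nirenberg--Sobolev inequality on the truncated hyperboloid $\hin_s$ and a companion trace inequality on the spheres $S_{s,r}$.

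\medskip

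\noindent\textbf{Plan of proof.} The plan is to transfer the classical Sobolev and trace inequalities on $\mathbb{R}^3\times\S^1$ to the hyperboloid by using the global graph parametrization $\hin_s=\{(\sqrt{s^2+|x|^2},x,y):|x|\le r_{\max}(s)\}$, under which the induced Riemannian metric is uniformly comparable to the flat one. First I would record the geometry: on $\hin_s$ one has $t=\sqrt{s^2+r^2}$, so the tangential derivatives $\pb_{\bmi}=\partial_{\bmi}+(x^{\bmi}/t)\partial_t$ together with $\pb_4=\partial_y$ span the tangent space, and the pullback metric is $\delta_{\bmi\bmj}+x_\bmi x_\bmj/s^2$ in the $x$-variables (plus the flat circle factor), whose determinant is $t^2/s^2$ and whose eigenvalues lie between $1$ and $t^2/s^2$. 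Crucially the \emph{gradient} in this metric involves the inverse metric, which contracts by a factor $(s/t)$ in the radial direction, and the combination with the volume element is what makes the estimate scale-invariant in the correct way; this is the standard reason the $s^{-(\frac32-\frac3p)}$ correction term appears. Concretely, I would introduce on the ``abstract'' hyperboloid the coordinate $\xi=x$ with flat metric and note that $u(\sqrt{s^2+|\xi|^2},\xi,y)$ as a function of $(\xi,y)\in\mathbb{R}^3\times\S^1$ has flat gradient controlled by $\|\underline{\nabla}u\|$ up to the metric comparison, and the flat volume is comparable to the hyperboloidal one. Then the ordinary Sobolev inequality $\|v\|_{L^p(\mathbb{R}^3\times\S^1)}\lesssim \|\nabla v\|_{L^2}^{3/2-3/p}\|v\|_{L^2}^{3/p-1/2}$ for $2\le p\le 6$ (interpolating the $3$D Sobolev embedding with $L^2$, tensored trivially with the compact $\S^1$) gives the result, with the $s^{-(3/2-3/p)}$ term arising either from the Poincaré-type term on $\S^1$ or, more honestly, from the fact that $\hin_s$ is not all of $\mathbb{R}^3$ but a bounded piece, so one must either extend $u$ or use a localized Sobolev inequality which produces exactly such a lower-order term. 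For the trace inequality I would similarly use the flat trace theorem $\|v\|_{L^4(\Sigma)}\lesssim\|v\|_{H^1}$ on a hypersurface $\Sigma$ of codimension one in $\mathbb{R}^3$ (here $\Sigma=S_{s,r}\times\S^1$, a $2$-sphere times a circle, sitting inside $\hin_s\cong\{|x|\le r_{\max}\}\times\S^1$), again after the metric comparison, yielding $\|u\|_{L^4(S_{s,r})}\lesssim\|\underline{\nabla}u\|_{L^2(\hin_s)}+s^{-1}\|u\|_{L^2(\hin_s)}$.

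\medskip

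\noindent\textbf{Key steps, in order.} (1) Set up the graph parametrization of $\hin_s$ over a ball in $\mathbb{R}^3$ times $\S^1$, compute the induced metric and its determinant, and establish the two-sided comparison $c\,\delta\le g|_{\hin_s}\le C(t/s)^2\delta$ together with $\sqrt{\det g|_{\hin_s}}=t/s$. (2) Relate the hyperboloidal gradient $\underline{\nabla}$ (the intrinsic gradient raised with $g|_{\hin_s}$) to the flat gradient of the pulled-back function, checking that the radial squeeze by $(s/t)$ is exactly compensated by the volume factor so that $\int_{\hin_s}|\underline{\nabla}u|^2\,d\mu$ is comparable to $\int_{\mathbb{R}^3\times\S^1}|\nabla_\xi \tilde u|^2\,d\xi dy$ where $\tilde u$ is the pullback. (3) Invoke the flat GNS inequality on $\mathbb{R}^3\times\S^1$, dealing with the boundedness of the domain by a localization/extension argument that is responsible for the $s^{-(3/2-3/p)}\|u\|_{L^2}$ term. (4) Repeat steps (1)--(3) with the flat trace theorem for the hypersurface $S_{s,r}\times\S^1\subset\hin_s$ to obtain the $L^4(S_{s,r})$ bound, noting the $s^{-1}$ coefficient comes from the curvature/size of the relevant sphere.

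\medskip

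\noindent\textbf{Main obstacle.} The main subtlety is step (3): the hyperboloid $\hin_s$ is a \emph{truncated} piece of a full hyperboloid, so one cannot directly apply the Sobolev inequality on a complete manifold. One must either extend $u$ across the boundary $\partial\hin_s=S_{s,r_0}\times\S^1$ with controlled $H^1$ norm, or use a version of the Sobolev inequality on bounded domains (which by scaling produces the lower-order term $s^{-(3/2-3/p)}\|u\|_{L^2(\hin_s)}$, precisely the term in the statement). Getting the $s$-powers exactly right in this localization --- and making sure the metric comparison constants do not secretly depend on $s$ --- is where the real care is needed; everything else is a routine transfer of classical inequalities. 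Since the paper attributes this to standard references (\cite{FWY21}) for the hyperboloidal version and only asks for a ``simple adaptation'', the cleanest route is to cite the flat inequalities and the metric comparison and let the $s$-scaling bookkeeping carry the rest.
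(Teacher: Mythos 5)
Your plan rests on two pillars --- a metric-comparison reduction from $\hin_s$ to flat space, and a flat Gagliardo--Nirenberg inequality on $\R^3\times\S^1$ valid up to $p=6$. The first is an unnecessary (and partly incorrect) detour, and the second is where the argument genuinely breaks.

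On steps (1)--(2): there is no metric comparison to be done. In the graph coordinates $(x,y)\mapsto(\sqrt{s^2+|x|^2},x,y)$ parametrizing $\hin_s$, the hyperbolic derivatives $\pb_{\bmi}=\partial_{\bmi}+(x^{\bmi}/t)\partial_t$ are \emph{identically} the coordinate vector fields $\partial_{x^{\bmi}}$ along the graph, and $\pb_4=\partial_y$; so $\underline{\nabla}u|_{\hin_s}$ is exactly the Euclidean gradient $\nabla_{x,y}\tilde u$ of the pullback $\tilde u(x,y)=u(\sqrt{s^2+|x|^2},x,y)$, with no index raising and no Jacobian. Moreover all hyperboloidal $L^p$ norms in this paper are taken against the flat measure $dx\,dy$ (see the definition of $\enint$), not the induced surface measure. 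So the lemma \emph{is}, verbatim, the Euclidean inequality on the ball $\{|x|\le (s^2-1)/2\}\times\S^1$; the whole apparatus of comparing $g|_{\hin_s}$ with $\delta$ is moot. (As an aside, the induced Riemannian metric on the spacelike hyperboloid is $\delta_{\bmi\bmj}-x_{\bmi}x_{\bmj}/t^2$ with $\sqrt{\det}=s/t$, not $\delta_{\bmi\bmj}+x_{\bmi}x_{\bmj}/s^2$ with $\sqrt{\det}=t/s$ as you wrote.)

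The genuine gap is step (3). The inequality $\|v\|_{L^p(\R^3\times\S^1)}\lesssim\|\nabla_{x,y}v\|_{L^2}^{3/2-3/p}\|v\|_{L^2}^{3/p-1/2}$ is \emph{false} for $p>4$: compactness of $\S^1$ does not let you ``tensor trivially'' and keep the three-dimensional critical exponent $6$, because $\R^3\times\S^1$ is locally four-dimensional and the embedding $H^1\hookrightarrow L^p$ saturates at $p=4$. Concretely, a concentrating bubble $v_\lambda(x,y)=\phi(\lambda x)\psi(\lambda y)$ (with $\phi,\psi$ fixed bumps, $\lambda\to\infty$) has $\|v_\lambda\|_{L^6}\sim\lambda^{-2/3}$, $\|\nabla_{x,y}v_\lambda\|_{L^2}\sim\lambda^{-1}$, $\|v_\lambda\|_{L^2}\sim\lambda^{-2}$, so the quotient of the two sides of your claimed inequality blows up like $\lambda^{1/3}$, and the $s^{-(3/2-3/p)}\|v\|_{L^2}$ correction cannot rescue it. The same scaling kills the $L^4$ trace onto a codimension-one hypersurface of a four-manifold (that trace only reaches $L^3$). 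Note that the interpolation exponents $\tfrac32-\tfrac3p$, $\tfrac3p-\tfrac12$ in the statement are the three-dimensional ones, that $S_{s,r}$ is called ``the two-sphere'' (two-dimensional), and that the companion Lemma~\ref{lm:ksobolev} is explicitly stated for $u=u(t,x)$: the intended reading is evidently that $u$ is $y$-independent, i.e.\ one is working with $\underline{\nabla}_x$ on a three-dimensional hyperboloid, in which case the lemma reduces to textbook three-dimensional GNS and trace on a ball of radius $\sim s^2$ and your remaining steps go through. Your proposal as written, which deliberately incorporates the $\S^1$ factor into the Sobolev machinery, cannot close for $p>4$.
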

	
	\begin{lemma}\label{lm:ksobolev}
		Let $B=\{\Omega_{0j}:j=1,2,3\}$.
		For any sufficiently smooth function $u=u(t,x)$ we have
		\[
		\sup_{\hin_s} |t^\frac{3}{2} u|\lesssim \|B^{\le 2}u\|_{L^2_x(\hin_s)}.
		\]
	\end{lemma}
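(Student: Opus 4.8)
The plan is to establish this as the standard Klainerman--Sobolev inequality on hyperboloids restricted to the boost fields, following and lightly adapting the argument of \cite{FWY21}. The whole estimate will be reduced, by a rescaling argument at the natural length scale $t$, to the flat Sobolev embedding $H^2(\R^3)\hookrightarrow L^\infty$.

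First I would parametrise the (truncated) hyperboloid $\hin_s$ by $x\in\R^3$, writing $t(x):=\sqrt{s^2+|x|^2}$ and $\phi(x):=u(t(x),x)$. The algebraic heart of the matter is the identity $\partial_{x_\bmj}\phi = t^{-1}\,(\Omega_{0\bmj}u)|_{\hin_s}$, i.e. differentiation along $x_\bmj$ on $\hin_s$ is exactly the restricted boost divided by $t$. Iterating this once more, and using $\partial_{x_\bmj}t = x_\bmj/t$ together with $|x|\le t$ on the hyperboloid, I would deduce $t^{|\alpha|}\,|\partial_x^\alpha\phi|\lesssim |B^{\le|\alpha|}u|$ for all $|\alpha|\le 2$; the lower-order terms generated by the product rule are harmless precisely because each carries a factor $x_\bmj/t$ of modulus $\le 1$.

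Next, fix $x_0$ in the spatial projection of $\hin_s$, set $t_0:=t(x_0)$, and rescale on the fixed-size ball $B(0,1/6)\subset\R^3$ via $y\mapsto x_0+t_0 y$. Since $\big|t(x)^2-t_0^2\big| = \big||x|^2-|x_0|^2\big| \le (|x|+|x_0|)\,|x-x_0|$ and $|x_0|\le t_0$, one checks that $t(x)\simeq t_0$ uniformly for $|x-x_0|\le t_0/6$ — this is the scale at which $\hin_s$ is effectively flat. Putting $\psi(y):=\phi(x_0+t_0 y)$, the chain rule turns $t_0$-scaled $\partial_x$-derivatives into $\partial_y$-derivatives of $\psi$, so the previous step gives $|\partial_y^\alpha\psi|\lesssim |B^{\le|\alpha|}u|$ pointwise on $B(0,1/6)$ for $|\alpha|\le 2$. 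Applying $H^2\hookrightarrow L^\infty$ on this fixed ball and changing variables back, $dx=t_0^3\,dy$, yields
\[
|u(t_0,x_0)| = |\psi(0)| \lesssim \sum_{|\alpha|\le 2}\|\partial_y^\alpha\psi\|_{L^2(B(0,1/6))} \lesssim t_0^{-3/2}\,\big\|B^{\le 2}u\big\|_{L^2_x(\hin_s)} ,
\]
and multiplying by $t_0^{3/2}$ and taking the supremum over $x_0$ finishes the proof.

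I do not expect a genuine obstacle here; the difficulties are purely bookkeeping. The two points requiring a little care are (i) fixing the radius $t_0/6$ so that $t(x)\simeq t_0$ holds uniformly on the rescaled ball with constants independent of $s$ and $x_0$, and (ii) the truncation boundary $t=r+1$: near it one has $t\simeq s$, so the ball $B(x_0,t_0/6)$ can simply be shrunk without losing the weight $t_0^{3/2}$, or — as is in fact the case in all the applications in the paper — one works with $u$ understood to be defined (or extended) in a full neighbourhood of $\hin_s$. Everything else is an instance of the identity $\partial_{x_\bmj}\phi = t^{-1}\Omega_{0\bmj}u|_{\hin_s}$ combined with parabolic-type rescaling.
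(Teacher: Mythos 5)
Your argument is essentially the standard rescaling proof, the same route as the result in \cite{FWY21} that the paper cites in place of a proof, so the core is sound: the identity $\partial_{x_\bmj}\bigl(u(t(x),x)\bigr)=\bigl(t^{-1}\Omega_{0\bmj}u\bigr)\big|_{\hin_s}$, the iteration to second order with the harmless $x_\bmj/t$ factors, and the rescaled $H^2\hookrightarrow L^\infty$ embedding all go through as described.

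Your treatment of the truncation boundary, however, is wrong on the facts and is not repaired by either workaround you offer. On $\partial\hin_s=\hin_s\cap\tilde{\hcal}$ one has $t=s^2/2$, not $t\simeq s$: substituting $(t-1)^2-r^2=1$ into $t^2-r^2=s^2$ gives $2t=s^2$. Shrinking the ball is therefore not free: changing variables by $x=x_0+\lambda t_0\,y$ with $\lambda<1$ produces the factor $(\lambda t_0)^{-3/2}$, and since $x_0$ may sit at distance $o(t_0)$ from the sphere $\{|x|=r_s\}$ (with $r_s=(s^2-1)/2\simeq t_0$ in this regime), the required $\lambda$ is not bounded below and the weight is lost. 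Extending $u$ to a neighbourhood does not help either: the right-hand norm of the lemma is taken over the truncated hyperboloid, so enlarging the domain of integration yields a strictly weaker statement than the one you need. The correct fix is to keep the radius $t_0/6$ and apply the Sobolev embedding on the Lipschitz domain $B(x_0,t_0/6)\cap\{|x|<r_s\}$. A short computation with $|p|=r_s$, $|p-x_0|=t_0/6$ shows that the dihedral angle $\alpha$ at the corner where $\partial B(x_0,t_0/6)$ meets $\{|x|=r_s\}$ satisfies
\[
\cos\alpha=\frac{r_s^2-|x_0|^2+(t_0/6)^2}{2\,r_s\,(t_0/6)}>0,
\]
so the interior angle of the lens exceeds $\pi/2$, the domain satisfies a uniform interior cone condition, and the $W^{2,2}\hookrightarrow L^\infty$ constant is independent of $s$, $t_0$ and $x_0$. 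With that change your proof is complete.
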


	\begin{lemma}\label{lem:classical_Hardy}
		Let $s>0$, $r_s:=\max\{r\,|\, S_{r}\subset \hin_s\}$ and $t_s = \sqrt{s^2+r_s}$. For any sufficiently smooth function $u=u(t,x)$ we have that
		\begin{equation}\label{Hardy_classical}
			\| r^{-1}u\|_{L^2(\hin_s)}\lesssim \|\pb u\|_{L^2(\hin_s)} + \|\partial u(t_s)\|_{L^2(\Sigma_{t_s})}
		\end{equation}
		\begin{proof}
			It is a straightforward consequence of the classical Hardy inequality applied to 
			\[
			v(x) = \begin{cases}
				u(\sqrt{s^2+r^2}, x), \quad & \text{if } |x|<r_s \\
				u(t_s, x), \quad & \text{if } |x|>r_s.
			\end{cases}
			\]
		\end{proof}

	\end{lemma}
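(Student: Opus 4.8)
The plan is to deduce \eqref{Hardy_classical} from the classical Hardy inequality $\| |x|^{-1}v\|_{L^2(\R^3)}\lesssim\|\nabla v\|_{L^2(\R^3)}$ on Euclidean three-space, applied to a single auxiliary function built by gluing the restriction of $u$ to the truncated hyperboloid $\hin_s$ to the restriction of $u$ to the constant-time slice $\Sigma_{t_s}$. Since $\hin_s$ is parametrised bijectively over the ball $\{|x|\le r_s\}$ by $x\mapsto(\sqrt{s^2+|x|^2},x)$, while $\Sigma_{t_s}$ supplies the complementary region $\{|x|\ge r_s\}$, I would set
\[
v(x):=u(\sqrt{s^2+|x|^2},x)\ \text{ for }|x|\le r_s,\qquad v(x):=u(t_s,x)\ \text{ for }|x|\ge r_s .
\]
By the definition of $r_s$ and $t_s$ the two formulas are defined on regions that overlap only along the sphere $\{|x|=r_s\}$, where they agree, so $v$ is a well-defined continuous (indeed Lipschitz) function on all of $\R^3$; combined with the standing assumption that $u$ decays fast enough for the right-hand side of \eqref{Hardy_classical} to be finite, this makes $v$ admissible in Hardy's inequality.

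The next step is to identify the two sides of Hardy's inequality with the two terms in \eqref{Hardy_classical}. On the inner ball the chain rule gives $\partial_{\bmi}\,[u(\sqrt{s^2+|x|^2},x)]=(\partial_{\bmi}u+(x_{\bmi}/t)\partial_t u)|_{t=\sqrt{s^2+|x|^2}}=(\pb_{\bmi}u)|_{\hin_s}$ for $\bmi=1,2,3$, so the weak gradient of $v$ on $\{|x|<r_s\}$ is precisely the spatial hyperbolic gradient $\underline{\nabla}_x u$ pulled back to the ball. Since the norms on $\hin_s$ in the statement are taken with respect to the flat measure $dx$ on $\{|x|\le r_s\}$, this gives $\| |x|^{-1}v\|_{L^2(\{|x|<r_s\})}=\|r^{-1}u\|_{L^2(\hin_s)}$ and $\|\nabla v\|_{L^2(\{|x|<r_s\})}=\|\pb_x u\|_{L^2(\hin_s)}\le\|\pb u\|_{L^2(\hin_s)}$. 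On the outer region $v(x)=u(t_s,x)$ is a genuine time slice, so $\|\nabla v\|_{L^2(\{|x|>r_s\})}=\|\partial_x u(t_s)\|_{L^2(\{|x|>r_s\})}\le\|\partial u(t_s)\|_{L^2(\Sigma_{t_s})}$. Discarding the nonnegative term $\int_{|x|>r_s}|x|^{-2}|v|^2$ on the left and feeding these identifications into Hardy's inequality produces \eqref{Hardy_classical}. The same argument applies verbatim on $\R^3\times\S^1$, applying the flat Hardy inequality in $x$ for each fixed $y$ and integrating in $y$, which is the form actually used, e.g., in \eqref{bootin_hardy1}.

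The one place that genuinely needs care is the behaviour at the seam $\{|x|=r_s\}$: one must verify that the projection of the truncated hyperboloid and the domain of $\Sigma_{t_s}$ abut exactly along this sphere and that $v$ is continuous across it, since otherwise the weak gradient of $v$ would pick up a singular surface contribution not controlled by the right-hand side. With the gluing above this is automatic, $t_s$ being defined precisely so that $(t_s,x)$ with $|x|=r_s$ lies on $\hin_s$, so no auxiliary cutoff is needed. Everything else — the chain rule, the change of variables identifying the hyperboloidal and Euclidean measures, and the approximation argument underpinning the classical Hardy inequality in $\R^3$ — is routine.
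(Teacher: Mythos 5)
Your proof is correct and is precisely the argument the paper sketches: you build the same glued function $v$, deduce the estimate from the classical Euclidean Hardy inequality, and only expand the details (chain rule identifying $\nabla v$ with $\underline{\nabla}_x u$ on the inner ball, continuity at the seam $\{|x|=r_s\}$, the $y$-integration) that the paper compresses into ``straightforward consequence.''
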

	
	\begin{lemma} \label{lem:Hardy_LR}
		Let $0\le\alpha\le 2$, $1+\mu>0$ and $\gamma>0$. For any function $u\in \q C^\infty_0([0,\infty))$, any arbitrary time $t> 0$ and $s>0$ there exists a constant $C$, depending on a lower bound for $\gamma$ and $1+\mu$, such that
		\begin{equation}\label{hardy_hyp_LR}
			\begin{aligned}
				& \int_{r(s,t)}^t\frac{u^2}{(1+t-r)^{2+\mu}}\, \frac{r^2dr}{(1+t+r)^\alpha}  + \int_{t}^\infty \frac{u^2}{(1+r-t)^{1-\gamma}}\frac{r^2 dr}{(1+t + r)^\alpha} \\
				& \le C \int_{r(s,t)}^t \frac{|\partial_r u|^2}{(1+t-r)^\mu} \frac{r^2 dr}{(1+t +r)^\alpha}\,  + C \int_t^\infty  |\partial_r u|^2 \frac{(1+r - t)^{1+\gamma}}{(1+t+r)^\alpha}\ r^2dr
			\end{aligned}
		\end{equation}
		where $r(s,t) = \sqrt{(t^2-s^2)^+}$.
	\end{lemma}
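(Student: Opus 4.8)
The plan is to reduce \eqref{hardy_hyp_LR} to two one-dimensional weighted Hardy estimates, one on the interior radial interval $[r(s,t),t]$ and one on the exterior interval $[t,\infty)$, each proved by integration by parts in $r$ followed by Cauchy--Schwarz and Young's inequality. I fix $t>0$, $s>0$, abbreviate $q=r(s,t)=\sqrt{(t^2-s^2)^+}$ and $\Phi(r):=r^2(1+t+r)^{-\alpha}$, and denote by $I_1,I_2$ the two integrals on the left of \eqref{hardy_hyp_LR} and by $J_1,J_2$ the two on the right. A preliminary observation I would record is that, for $0\le\alpha\le2$ and $r\ge0$, one has $\Phi\ge0$ and
\[
\Phi'(r)=\frac{r\big(2(1+t)+(2-\alpha)r\big)}{(1+t+r)^{\alpha+1}}\ge0 ;
\]
this monotonicity of the $(1+t+r)$-weight is what makes a family of error terms below have a favourable sign.

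For the interior integral I would write $(1+t-r)^{-(2+\mu)}=\tfrac1{1+\mu}\,\partial_r\big((1+t-r)^{-(1+\mu)}\big)$, set $F(r):=\tfrac1{1+\mu}(1+t-r)^{-(1+\mu)}>0$, and integrate $I_1=\int_q^t u^2\Phi\,\partial_rF\,dr$ by parts. The endpoint term at $r=q$ is $-u(q)^2\Phi(q)F(q)\le0$ (and equals $0$ if $q=0$, since $\Phi(0)=0$); the endpoint term at $r=t$ equals $\tfrac1{1+\mu}u(t)^2\Phi(t)$; the term $-\int_q^t F\,u^2\Phi'\,dr$ is $\le0$ by the sign of $\Phi'$; and the last term $-\tfrac2{1+\mu}\int_q^t F\,u\,\partial_ru\,\Phi\,dr$ is bounded, after the splitting $F|u||\partial_ru|\Phi=\big(|u|\sqrt\Phi\,(1+t-r)^{-1-\mu/2}\big)\big(|\partial_ru|\sqrt\Phi\,F(1+t-r)^{1+\mu/2}\big)$ and the identity $F^2(1+t-r)^{2+\mu}=(1+\mu)^{-2}(1+t-r)^{-\mu}$, by $C(\mu)\,I_1^{1/2}J_1^{1/2}$. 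For the exterior integral I would write $(1+r-t)^{\gamma-1}=\partial_rG$ with the \emph{shifted} antiderivative $G(r):=\tfrac1\gamma(1+r-t)^{\gamma}+c'$, where $c':=\max\!\big(0,\tfrac1{1+\mu}-\tfrac1\gamma\big)\ge0$, chosen precisely so that the combined $r=t$ endpoint contributions of $I_1$ and $I_2$ total $\big(\tfrac1{1+\mu}-\tfrac1\gamma-c'\big)u(t)^2\Phi(t)\le0$ and can be discarded, while the endpoint at $r=+\infty$ vanishes because $u\in\q C^\infty_0([0,\infty))$. Again $-\int_t^\infty G\,u^2\Phi'\,dr\le0$ since $G\ge0$ and $\Phi'\ge0$, and the cross term is controlled by Cauchy--Schwarz using $G\le C(\mu,\gamma)\big(1+(1+r-t)^{\gamma}\big)$, hence $G^2(1+r-t)^{1-\gamma}\le C(\mu,\gamma)(1+r-t)^{1+\gamma}$ because $1+r-t\ge1$, which gives a bound $C(\mu,\gamma)\,I_2^{1/2}J_2^{1/2}$.

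Putting the two pieces together, after discarding the nonpositive endpoint and $\Phi'$ terms I obtain $I_1+I_2\le C(\mu,\gamma)\big(I_1^{1/2}J_1^{1/2}+I_2^{1/2}J_2^{1/2}\big)$; Young's inequality then absorbs $\tfrac12(I_1+I_2)$ into the left-hand side and yields $I_1+I_2\le C(\mu,\gamma)(J_1+J_2)$ with a constant depending only on lower bounds for $1+\mu$ and for $\gamma$, which is exactly \eqref{hardy_hyp_LR}.

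I expect the only genuinely delicate point to be the bookkeeping of the boundary term at the light cone $r=t$: a naive choice of antiderivatives leaves a positive multiple of $u(t)^2\Phi(t)$ on the wrong side of the inequality, for which no pointwise bound is available, and the fix is exactly the constant shift $c'$ in $G$, which makes the interior and exterior endpoint contributions cancel (or over-cancel) while keeping $G$ comparable to the clean weight $(1+r-t)^{\gamma}$, so that the ensuing Cauchy--Schwarz step is unaffected. A secondary, routine point is the verification $\Phi'\ge0$ — this is where the hypothesis $\alpha\le2$ enters — together with the finiteness of all the integrals, which follows from $u\in\q C^\infty_0$.
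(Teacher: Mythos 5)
Your proof is correct, and it follows the expected route (integration by parts in $r$ against suitably chosen antiderivatives of the singular weights, with the monotonicity of $\Phi$ for $0\le\alpha\le2$ absorbing the sign-favorable bulk terms, followed by Cauchy--Schwarz and Young to absorb the left side). The paper does not spell out an argument for this lemma — it merely points to Lindblad--Rodnianski \cite{LR10}, whose Hardy lemma is proved by exactly this integration-by-parts mechanism — so there is no internal proof against which to compare. The one genuinely nontrivial device in your writeup, the constant shift $c'=\max(0,\tfrac1{1+\mu}-\tfrac1\gamma)$ in the exterior antiderivative $G$ so that the two $r=t$ endpoint contributions over-cancel while $G$ remains comparable to $(1+r-t)^\gamma$, is precisely the adaptation needed to handle the interior/exterior split across the light cone, and you have carried it out correctly.
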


	\begin{corollary}\label{cor:Hardy}
		Under the same assumptions of Lemma \ref{lem:Hardy_LR}, we have that 
		\begin{multline*}
			\int_{s_0}^{t_s}\hspace{-3pt}\int_{\q C_t} \frac{|u|^2}{(1+t-r)^{2+\mu}} \frac{dxdt}{(1+t+r)^{\alpha}}
			\\
			\lesssim \int_{s_0}^s\hspace{-3pt} \int_{\hin_\tau} \frac{|\partial_r u|^2}{(1+t(\tau)-r)^\mu(1+t(\tau)+r)^\alpha} dxd\tau + \int_{s_0}^{t_s}\hspace{-3pt}\int_{\Sext_t}|\partial_r u|^2 \frac{(1+|r-t|)^{1+\gamma}}{(1+t+r)^\alpha}dxdt
		\end{multline*}
	\end{corollary}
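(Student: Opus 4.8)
`The plan is to deduce Corollary~\ref{cor:Hardy} from Lemma~\ref{lem:Hardy_LR} by a slicing argument followed by a change of variables. The key point is that the region over which we integrate on the left-hand side, namely $\bigcup_{t\in[s_0,t_s]}\{t\}\times\q C_t$, where $\q C_t=\{x\in\R^3:\sqrt{(t^2-s^2)^+}\le|x|\le\sqrt{(t-1)^2-1}\}$, can be foliated \emph{either} by the constant-time slices $\Sext_t$ (this is how it is presented) \emph{or} by the truncated hyperboloids $\hin_\tau$ together with a remaining exterior piece. The right-hand side of Lemma~\ref{lem:Hardy_LR} is already written in a form ($\partial_r$-weighted integrals split into an interior hyperbolic-type region $r(s,t)\le r\le t$ and an exterior region $r\ge t$) that matches this second foliation, so the whole strategy is: apply Lemma~\ref{lem:Hardy_LR} pointwise in $t$, integrate the resulting one-dimensional radial inequality against $dt$ with the appropriate measure, and then re-foliate.

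First I would fix $t\in[s_0,t_s]$ (where $t_s=s^2/2$) and apply Lemma~\ref{lem:Hardy_LR} with this value of $t$, with the same $\alpha,\mu,\gamma$, and with $s$ chosen so that $r(s,t)=\sqrt{(t^2-s^2)^+}$ is the inner radius of $\q C_t$; note that for $t\le t_s=s^2/2$ one has $t\le(s^2/2)$, hence $t^2-s^2$ can be negative and $r(s,t)=0$, which is consistent with $\q C_t$ starting at the origin in that range, while for larger $t$ the inner radius is genuinely $\sqrt{t^2-s^2}$. Dropping the second (positive) term on the left of \eqref{hardy_hyp_LR} and restricting the first integral to $r\le\sqrt{(t-1)^2-1}$ (the outer radius of $\q C_t$), I get
\[
\int_{\q C_t}\frac{|u|^2}{(1+t-r)^{2+\mu}}\frac{dx}{(1+t+r)^\alpha}\lesssim \int_{r(s,t)}^{t}\frac{|\partial_r u|^2}{(1+t-r)^\mu}\frac{r^2dr}{(1+t+r)^\alpha}+\int_t^\infty|\partial_r u|^2\frac{(1+r-t)^{1+\gamma}}{(1+t+r)^\alpha}r^2dr,
\]
after converting $\int_{\q C_t}\cdot\,dx = 4\pi\int\cdot\,r^2dr$ (using that $u$ and the weights depend only on $r$ after averaging over angles, or more precisely replacing $|u|^2$ by $\|u\|^2_{L^2(\mathbb S^2)}$ throughout and absorbing constants). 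Then I integrate both sides in $t$ over $[s_0,t_s]$.

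The final step is the re-foliation of the right-hand side. The first term, $\int_{s_0}^{t_s}\int_{r(s,t)}^t (1+t-r)^{-\mu}(1+t+r)^{-\alpha}|\partial_r u|^2\,r^2dr\,dt$, is an integral over $\{(t,r): s_0\le t\le t_s,\ r(s,t)\le r\le t\}$, i.e. over the hyperbolic slab between $\hin_{s_0}$ and $\hin_{s}$ intersected with $\{r\le t\}$; I change variables from $(t,r)$ to $(\tau,r)$ where $\tau=\sqrt{t^2-r^2}$, so $t=t(\tau)=\sqrt{\tau^2+r^2}$ and $dt=\frac{\tau}{\sqrt{\tau^2+r^2}}d\tau$. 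On each hyperboloid $1+t-r\approx (1+t+r)^{-1}(\text{const}+ \tau^2)\approx$ appropriate, but more to the point $r^2dr\,dt$ becomes (up to the bounded Jacobian factor $\tau/t\le 1$) the surface measure $dx$ on $\hin_\tau$ times $d\tau$, yielding $\int_{s_0}^s\int_{\hin_\tau}(1+t(\tau)-r)^{-\mu}(1+t(\tau)+r)^{-\alpha}|\partial_r u|^2\,dx\,d\tau$ as desired; strictly one should keep $\mu\ge0$ so the weight $(1+t-r)^{-\mu}$ is handled without sign issues, which is part of the hypothesis. The second term, $\int_{s_0}^{t_s}\int_t^\infty (1+r-t)^{1+\gamma}(1+t+r)^{-\alpha}|\partial_r u|^2\,r^2dr\,dt$, is an integral over $\{r\ge t\}$, which is precisely (a superset of) the exterior region slab $\bigcup_{t\in[s_0,t_s]}\Sext_t$, and here $(1+r-t)=(1+|r-t|)$ since $r\ge t$, so this term is exactly $\int_{s_0}^{t_s}\int_{\Sext_t}|\partial_r u|^2(1+|r-t|)^{1+\gamma}(1+t+r)^{-\alpha}\,dx\,dt$ after reinstating $\int\cdot\,dx=4\pi\int\cdot\,r^2dr$. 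Assembling these two gives exactly the claimed inequality.

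The main obstacle — really the only non-bookkeeping point — is checking that the change of variables $(t,r)\mapsto(\tau,r)$ on the interior piece has a Jacobian that is uniformly bounded above and below on the relevant slab (it is $\partial t/\partial\tau=\tau/t$, which lies in $(0,1]$ and is bounded below as long as we stay away from the light cone $\tau=0$, which we do since the hyperboloids are truncated at $\hin_{s_0}$ with $s_0\ge 21/10$), and that the domain of integration in the $(t,r)$ variables is correctly identified with the union of the hyperboloids over $\tau\in[s_0,s]$ — in particular that the upper limit $t=t_s$ in the original integral corresponds under $\tau=\sqrt{t^2-r^2}$ to $\tau\le s$ given $r\le\sqrt{(t-1)^2-1}\le\sqrt{t^2-s^2}$ on $\hin_s\cap\Sext$, i.e. the matching of the hyperbolic time $s$ with the constant-time slice $t_s=s^2/2$ on the boundary hypersurface $\tilde{\hcal}$, exactly as used in \eqref{boundary_term_int}. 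Once the geometry of these nested foliations is pinned down, the estimate follows by the slicing procedure above with no further analytic input beyond Lemma~\ref{lem:Hardy_LR} itself.`
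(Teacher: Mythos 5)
Your approach is essentially the one the paper intends: the paper's own proof of Corollary~\ref{cor:Hardy} is the single sentence ``a simple application of inequality \eqref{hardy_hyp_LR} and of a change of coordinates,'' and your write-up is a reasonable unpacking of that — apply Lemma~\ref{lem:Hardy_LR} on each fixed-time slice, drop the exterior half of its left-hand side, restrict the remaining radial integral to $\q C_t$, integrate in $t$, and then change variables $(t,r)\mapsto(\tau,r)$ on the interior piece while identifying the exterior piece with the $\Sext_t$ slab. The Jacobian computation $dt=(\tau/t)\,d\tau$ with $\tau/t\le 1$ and the identification $r^2\,dr\,d\omega=dx$ are correct, and the key observation that the integrands on the two sides match after the change of variables (since the corollary's right-hand side is written with $t(\tau)=\sqrt{\tau^2+r^2}$) is right.

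Two small points are worth flagging. First, a slip of wording: you write that $\{r\ge t\}$ is ``a superset of'' $\Sext_t$. In fact $\Sext_t=\{r\ge\sqrt{(t-1)^2-1}\}$ and $\sqrt{(t-1)^2-1}<t$, so $\{r\ge t\}$ is a \emph{subset} of $\Sext_t$; fortunately this is precisely the direction you need (the lemma's second right-hand-side integral is over a smaller region than the corollary's, with the same integrand, so the lemma's term is dominated by the corollary's), so your conclusion stands despite the mislabelled inclusion. Second, your aside that the Jacobian $\tau/t$ must be ``bounded below'' is unnecessary: the inequality you need only uses $\tau/t\le 1$, and a small Jacobian would only help, not hurt.

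There is also one genuine domain-matching subtlety that you (and, to be fair, the paper) leave unaddressed. You correctly check that the upper boundary $t=t_s$ matches $\tau=s$ under the change of variables, but not the lower one. The region $\bigcup_{t\in[s_0,t_s]}\q C_t$ is bounded below by the flat slice $\{t=s_0\}$, whereas the hyperbolic slab $\hin_{[s_0,s]}$ is bounded below by the hyperboloid $\hin_{s_0}$; since $(t-1)^2-r^2\ge 1$ only forces $\tau^2=t^2-r^2\ge 2t\ge 2s_0$, the wedge $\{s_0\le t,\ \sqrt{2s_0}<\tau<s_0\}\cap\dint$ is non-empty whenever $s_0>2$ and is covered by the integrated lemma right-hand side but not by $\int_{s_0}^s\int_{\hin_\tau}$ nor by the $\Sext_t$ term. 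This wedge is a bounded set on which all the weights $(1+t\pm r)$ and $r$ are comparable to constants, so the discrepancy is harmless for the applications and can be repaired by starting the $\tau$-integration at $2$ or at $\sqrt{2s_0}$ rather than at $s_0$; but as literally stated the re-foliation step does not close without a word about this corner.
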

	\begin{proof}
		The proof is a simple application of inequality \eqref{hardy_hyp_LR} and of a change of coordinates.
	\end{proof}

	\bibliographystyle{abbrv}
	\bibliography{Biblio_Anna}{}	
\end{document}